\newcites{pub}{Publications and scripts}
\definecolor{fillvertices}{RGB}{250,240,230}
\definecolor{cornflowerblue}{RGB}{100,149,237}
\definecolor{colormls1}{RGB}{15,100,255}
\definecolor{colormls2}{RGB}{20,150,150}
\definecolor{colormls3}{RGB}{106,90,205}
\definecolor[named]{lipicsYellow}{rgb}{0.99,0.78,0.07}
\colorlet{colorvertices}{cornflowerblue!40}
\tikzset{%
	vertices/.style={circle,minimum size=15pt,thick,draw,fill=colorvertices, inner sep = 0pt},
	edges/.style={draw=darkgray,very thick}
}
\newcommand{\defeq}{\vcentcolon=}
\newcommand{\ket}[1]{|{#1} \rangle}
\newcommand{\bra}[1]{\left \langle#1\right |}
\newcommand{\dloc}{\delta_\text{loc}}
\renewcommand{\iff}{\Leftrightarrow}
\newcommand{\se}{\subseteq}
\newcommand{\ls}{\leqslant}
\newcommand{\gs}{\geqslant}
\newcommand{\sm}{\setminus}
\newcommand{\ham}[1]{w(#1)}
\newcommand{\supp}{\textup{\textsf{supp}}}
\newcommand{\Zp}[1]{\mathbb{Z}/#1\mathbb{Z}}
\newcommand{\cutrk}{\textup{cutrk}}
\theoremstyle{plain}
\newtheorem{theorem}{Theorem}
\newtheorem{corollary}{Corollary}
\newtheorem{lemma}{Lemma}
\newtheorem{proposition}{Proposition}
\newtheorem*{théorème*}{Théorème}
\newtheorem*{proposition*}{Proposition}
\theoremstyle{remark}
\newtheorem{remark}{Remark}
\theoremstyle{definition}
\newtheorem{definition}{Definition}
\newtheorem{example}{Example}
\crefname{proposition}{Proposition}{Propositions}
\Crefname{proposition}{Proposition}{Propositions}
\begin{document}

\nocite{claudet2024covering, claudet2024local, claudet2025deciding, Cautres2024, claudet2023smallkpairablestates, codelulc19}

%%%%%%%%%%%%%%%%%%%%%%%%%%%%%%%%%%%%%%%%%%%%%%%%%%%%%%%%%%%%%%%%%%%%%%%%%%%%%%%%%%%%%%%%%%%%%%%%%

%%% Make titlepage
\thispagestyle{empty}

\hfill \textbf{École doctorale IAEM Lorraine}

\noindent\rule{\textwidth}{0.5pt}

    \vskip10mm plus2fil
\begin{center}
    \textbf{\Huge{Local Equivalences of Graph States}}
    {\vskip10mm plus2fil}
    \textbf{\LARGE{THÈSE}}
    {\vskip5mm plus2fil}
    présentée et soutenue publiquement le 17 novembre 2025 \\ pour l'obtention du
    {\vskip3mm plus1fil}
    \textbf{\Large{Doctorat de l'Université de Lorraine}} \\ (mention informatique)
    {\vskip5mm plus1fil}
    
    {\vskip3mm plus1fil}
    par
    {\vskip3mm plus1fil}
    \Large{Nathan Claudet}
\end{center}

%{\vskip5mm plus2fil} {\hskip1cm} \textbf{Composition du jury}
{\vskip5mm plus2fil} \textbf{Composition du jury}

\begin{center}
    \begin{tabular}{llll}%
        & \textit{Président :} & Xavier Goaoc & Université de Lorraine \\ 
        \rule{0pt}{4ex}   

        & \textit{Rapporteurs :} & Mamadou Moustapha Kanté & Université Clermont Auvergne \\ 
        && Robert Raussendorf & Leibniz Universität Hannover \\ 
        \rule{0pt}{4ex}   

        & \textit{Examinateurs  :} 
        & Otfried Gühne & Universität Siegen \\  
        && Elham Kashefi & CNRS \\           
        \rule{0pt}{4ex} 

        & \textit{Directeurs :} & Mathilde Bouvel & CNRS \\
        && Simon Perdrix & Inria   
    \end{tabular}
\end{center}
    \vskip5mm

\noindent\rule{\textwidth}{0.5pt}

\begin{center}
    \textbf{\small{Laboratoire Lorrain de Recherche en Informatique et ses Applications — UMR 7503}}
\end{center}

%%% Abstract
\pdfbookmark[0]{Abstract}{abstract}
\thispagestyle{empty}
%\chapter*{Abstract}

\begin{center}
    \textbf{\large{Résumé}}
\end{center}

{\small
Les états graphes forment une vaste famille d'états quantiques qui correspondent de manière bijective à des graphes mathématiques. Les états graphes sont utilisés dans de nombreuses applications, telles que le calcul quantique basé sur la mesure, en tant que ressources intriquées multipartites. Il est donc essentiel de comprendre quand deux états graphes ont la même intrication, c'est-à-dire quand ils peuvent être transformés l'un en l'autre en utilisant uniquement des opérations locales. Dans ce cas, on dit que les états graphes sont LU-équivalents (unitaire locale). Si les opérations locales sont restreintes au groupe de Clifford, on dit alors que les états graphes sont LC-équivalents (Clifford locale). Il est intéressant de noter qu'une règle graphique simple appelée complémentation locale capture exactement la LC-équivalence, dans le sens où deux états graphes sont LC-équivalents si et seulement si les graphes sous-jacents sont liés par une séquence de complémentations locales. Alors qu'il était autrefois conjecturé que deux états graphes LU-équivalents sont toujours LC-équivalents, il existe des contre-exemples et la complémentation locale ne parvient pas à capturer entièrement l'intrication des états graphe. 

Dans ce manuscript, nous introduisons une généralisation de la complémentation locale qui capture exactement la LU-équivalence. À l'aide de cette caractérisation, nous prouvons l'existence d'une hiérarchie infinie stricte d'équivalences locales entre la LC-équivalence et la LU-équivalence. Cela conduit également à la conception d'un algorithme quasi-polynomial permettant de déterminer si deux états graphes sont LU-équivalents, et à la preuve que deux états graphes LU-équivalents sont LC-équivalents s'ils sont définis sur au plus 19 qubits. 

De plus, nous étudions les états graphes qui sont universels dans le sens où tout état graphe plus petit, défini sur un ensemble suffisamment réduit de qubits, peut être induit en utilisant uniquement des opérations locales. Nous donnons des bornes et une construction probabiliste optimale.

\vspace{0.5em}

\noindent\textbf{Mots-clés:~} Informatique quantique, Intrication, Théorie des graphes

}

\vspace{0.7em}

\begin{center}
    \textbf{\large{Abstract}}
\end{center}

{\small

Graph states form a large family of quantum states that are in one-to-one correspondence with mathematical graphs. Graph states are used in many applications, such as measurement-based quantum computation, as multipartite entangled resources. It is thus crucial to understand when two such states have the same entanglement, i.e. when they can be transformed into each other using only local operations. In this case, we say that the graph states are LU-equivalent (local unitary). If the local operations are restricted to the so-called Clifford group, we say that the graph states are LC-equivalent (local Clifford). Interestingly, a simple graph rule called local complementation fully captures LC-equivalence, in the sense that two graph states are LC-equivalent if and only if the underlying graphs are related by a sequence of local complementations. While it was once conjectured that two LU-equivalent graph states are always LC-equivalent, counterexamples do exist and local complementation fails to fully capture the entanglement of graph states. 

We introduce in this thesis a generalization of local complementation that does fully capture LU-equivalence. Using this characterization, we prove the existence of an infinite strict hierarchy of local equivalences between LC- and LU-equivalence. This also leads to the design of a quasi-polynomial algorithm for deciding whether two graph states are LU-equivalent, and to a proof that two LU-equivalent graph states are LC-equivalent if they are defined on at most 19 qubits. 

Furthermore, we study graph states that are universal in the sense that any smaller graph state, defined on any small enough set of qubits, can be induced using only local operations. We provide bounds and an optimal, probabilistic construction.

\vspace{0.5em}

\noindent\textbf{Keywords:~} Quantum computing, Entanglement, Graph theory
}

%%% Acknowledgments
\chapter*{Remerciements}
\pdfbookmark[0]{Remerciements}{remerciements}

Un merci tout particulier à mes parents pour leur soutien sans faille, et sans qui les études supérieures auraient été bien moins amusantes. Par ailleurs, cette thèse n'existerait pas sans vous, car je n'existerais pas moi-même, ce qui est clairement un désavantage pour faire des maths. 
Merci à tout le reste de la famille pour leur enthousiasme.

\vspace{1em}

Merci à Noé pour la Insane Coloc Max HD et pour avoir été un super camarade de thèse. Merci à lui et à Martin d'avoir fait le quantique avec moi depuis quelques années déjà.

\vspace{1em}

Bien sûr un immense merci à Simon, qui m'a proposé un super sujet de recherche, et qui a donné beaucoup de son temps pour m'aider à faire des jolis résultats et des jolis papiers. Merci à Mathilde pour nos discussions.

\vspace{1em}

Merci à ceux qui m'ont fait aimer les maths et l'informatique (et aussi un peu la physique), en prépa et à Télécom, et en particulier à Karine Brunel qui m'a appris l'existence de l'informatique quantique.

\vspace{1em}

Merci à Alessio, Anuyen, Benjamin, Colin, Joannès, Kathleen et Quentin pour les repas au labo, et merci à Isabelle pour la constante bonne humeur à midi. Merci aussi à Sophie pour son travail avant et après chaque voyage.

\vspace{1em}

Pour toutes les fiestas, les festivals et les voyages, qui ont grandement contribués à ma joie pendant ces trois dernières années, merci aux potes du Sud, de Nancy et de Paris. Je sais qu'il y en a certains qui aiment voir leur nom dans les remerciements, donc merci à: Agathe, Alexandre, Anas, Antoine, Aurore, Baptiste, Chiara, Christie, Cst, Ésa, Florence, Geoffrey, Hélène, Jodie, Joe, Josselin, Julien, Julien, Laura, Lise, Lucie, Lucille, Lucho, Mathieu, Margaux, Margot, Nathan, Mélo, Océane, Robin, Théo, et Tuyo. Merci spécial à Léoche pour tous les bons moments, tu es super.

\vspace{1em}

Enfin, merci aux petits chats sur le chemin du labo pour leur divertissement gratuit. 

%Enfin, merci à Tuyo. Même si tu es maintenant probablement un tas de métal dans une décharge, tu restes un tas de métal dans mon cœur.
\chapter*{Résumé (fr)}

\subsubsection{Informatique quantique}

Si la physique Newtonienne décrit avec précision l'univers à des échelles intermédiaires, la physique dite "classique" ne parvient pas à décrire les propriétés des objets à très petite échelle. La physique quantique est un formalisme qui tente de décrire avec précision l'évolution des objets très petits. Dire que les lois de la physique quantique diffèrent de celles de la physique classique est un euphémisme: de nombreuses propriétés fondamentales de la physique classique, telles que le déterminisme, ne s'appliquent plus.

L'idée d'utiliser les lois de la mécanique quantique pour effectuer des opérations de traitement de l'information est souvent attribuée à Richard Feymann, au début des années 80 \cite{Feynman1982}. L'une des avancées les plus importantes dans le domaine de l'informatique quantique est la conception de l'algorithme de Shor à la fin des années 90 \cite{Shor97}, qui factorise un nombre composé en temps polynomial à l'aide d'un ordinateur quantique, ce que l'on ne sait pas (pour l'instant?) faire avec un ordinateur classique. Malgré des applications prometteuses, la construction d'ordinateurs quantiques utiles reste un défi de taille en raison de la nature très délicate de l'information quantique.

Lorsque l'on tente d'évaluer l'origine de l'accélération potentielle des ordinateurs quantiques, l'intrication, c'est-à-dire ce qui ne peut être expliqué par la corrélation classique, est souvent le suspect principal. L'une des raisons pour lesquelles les états intriqués sont difficiles à exploiter dans des configurations classiques est que l'espace des états quantiques intriqués croît très rapidement avec le nombre de composants de base, les qubits, l'équivalent quantique des bits d'information classiques. Les \textbf{états graphes} sont un candidat clé pour expliquer et utiliser l'intrication multipartite. En effet, les états graphes sont des systèmes quantiques intéressants, dans le sens où ils présentent une forme complexe d'intrication, tout en étant assez faciles à exprimer et à manipuler grâce à leur correspondance concise avec les graphes (un graphe étant un ensemble de sommets reliés ou non par des arêtes).

\subsubsection{Les états graphes et leurs applications}

L'application la plus importante des états graphes est probablement comme ressource universelle pour le calcul quantique basé sur la mesure, ou MBQC ("Measurement-Based Quantum Computation") en abrégé \cite{raussendorf2001one, raussendorf2003measurement, briegel2009measurement}, introduit par Hans Briegel et Robert Raussendorf au début des années 2000. Les états graphes sont apparus comme généralisation des états clusters \cite{Briegel2001}, les resources originelles du MBQC \cite{Hein04}. Contrairement au modèle basé sur les circuits, où de l'intrication est ajoutée à l'état quantique tout au long du calcul, l'idée principale derrière le MBQC est de partir d'une ressource déjà intriquée - un état graphe - et d'effectuer un calcul consistant uniquement en des mesures  d'un seul qubit à la fois, et de décisions basées sur de l'information classique (le résultat des mesures). Le MBQC est un modèle de calcul très intéressant d'un point de vue théorique, car il permet de séparer les deux principales sources de puissance du calcul quantique: l'intrication et les opérations non-Clifford. En effet, la préparation de l'état graphe n'utilise que des opérations dites de Clifford et peut être considérée comme un problème complètement différent de celui de l'exécution du calcul. Il est donc naturel d'étudier les états graphes en eux-mêmes. Par exemple, la question de savoir comment préparer efficacement un état graphe est très importante tant pour les théoriciens que pour les expérimentateurs \cite{Perdrix06, Cabello2011, russo2018photonic, Li2022, Ghanbari2024}.

Les états graphes apparaissent aussi naturellement dans le contexte des codes stabilisateurs \cite{gottesman1997stabilizercodesquantumerror, Grassl2002}. Les codes stabilisateurs constituent une très vaste famille de codes quantiques, comprenant par exemple les codes CSS \cite{Calderbank1996, steane1996multiple}, et sont définis comme le point fixe commun d'opérateurs de Pauli qui commutent entre eux. Le formalisme des stabilisateurs a conduit au théorème de Gottesman-Knill \cite{gottesman1998heisenberg, Aaronson2004}, qui stipule que toute opération de Clifford peut être simulée efficacement sur un ordinateur classique. Il a été démontré que les codes graphe \cite{schlingemann2001quantum} - des codes quantiques basés sur les états graphes - sont des codes stabilisateurs, mais aussi que tout code stabilisateur est équivalent à un code graphe \cite{schlingemann2001stabilizer}, ce qui permet une visualisation plus explicite. En effet, les états graphes offrent un cadre plus intuitif que les tableaux de stabilisateurs pour concevoir des codes stabilisateurs utiles \cite{khesin2025quantum}.

Les états graphes sont également utiles pour créer des réseaux de communication quantiques, dans lesquels des individus distants détiennent chacun un qubit de l'état graphe. Les individus peuvent alors exécuter des protocoles cryptographiques, par exemple du partage de secret quantique \cite{markham2008graph, Keet2010, Javelle2013, gravier2013quantum, Bell2014secret}. Un autre exemple de construction est l'état graphe de répétition \cite{azuma2015all, azuma2023quantum}, un composant permettant de distribuer l'intrication dans un réseau de communication quantique. Il est courant, dans la conception de protocoles de routage de réseaux quantiques, d'utiliser les descriptions graphiques des opérations locales sur les états graphes \cite{hahn2019quantum, bravyi2024generating, meignant2019distributing, fischer2021distributing, Mannalath2023}, en tirant parti, par exemple, du riche formalisme des "vertex-minors" \cite{OumSurvey, dahlberg2020transforming}.

\subsubsection{L'intrication des états graphes}

Dans les applications décrites ci-dessus, les états graphes sont utilisés comme ressource d'intrication. Il est donc essentiel de classer les états graphes selon leur intrication. En général, on dit que deux états quantiques ont la même intrication si on peut passer de l'un à l'autre par opérations locales stochastiques et communication classique, ou SLOCC ("Stochastic Local Operations and Classical Communication") en abrégé. Pour les états graphes, cela se traduit par la notion de \textbf{LU-équivalence} ("local unitary"), ce qui signifie que l'on peut passer d'un état graphe à un autre (et vice-versa) par l'applications de portes quantiques unitaires sur un seul qubit à la fois. Si nous limitons les portes unitaires au groupe dit de Clifford, cela définit une notion plus forte d'équivalence: les états graphes sont dits \textbf{LC-équivalents} ("local Clifford"). La LC-équivalence des états graphes est particulièrement facile à caractériser, car une opération graphique simple et bien étudiée, appelée la \textbf{complémentation locale}, capture exactement la LC-équivalence d'états graphes. Cela implique notamment l'existence d'un algorithme efficace pour reconnaître les états graphes LC-équivalents. La LC-équivalence implique évidemment la LU-équivalence, et bien que l'inverse était un jour une conjecture \cite{5pb} (qui était connue sous le nom de conjecture LU=LC), il existe des paires d'états graphes qui sont LU-équivalents mais pas LC-équivalents \cite{Ji07}, laissant un écart entre ces deux notions d'équivalences locales. Dans ce qui peut-être considéré comme le papier le plus influent sur les états graphes \cite{Hein06}, écrit avant que la conjecture LU=LC ne soit réfutée, on peut lire:

"Notez qu'une coïncidence générale entre les notions de SLOCC-, LU- et LC-équivalence des états graphes serait particulièrement avantageuse pour les deux raisons suivantes: comme les trois notions d'équivalence locale correspondraient à la notion de LC-équivalence, 
\begin{enumerate}
    \item La vérification de l'équivalence locale entre états graphes donnés pourrait alors être effectuée efficacement;
    \item Les trois notions d'équivalence locale seraient entièrement décrites par la règle de complémentation locale, ce qui permettrait d'obtenir une description de l'équivalence locale des états graphes en termes simples et purement graphiques."
\end{enumerate} 
Bien qu'il soit désormais établi que ces équivalences locales ne coïncident pas et que les résultats concernant la LC-équivalence ne peuvent pas être directement transposés à la LU-équivalence, nous abordons dans ce manuscript ces deux points:
\begin{enumerate}
    \item Nous introduisons une généralisation de la complémentation locale qui décrit la LU-équivalence (et donc la SLOCC-équivalence) des états graphes en termes purement graphiques;
    \item Nous utilisons cette description graphique pour concevoir un algorithme quasi-polyno\-mial pour la LU-équivalence, ce qui signifie que la complexité en temps est de $O(n^{\log{n}})$ où $n$ est le nombre de qubits.
\end{enumerate}

Cette règle graphique nous aide également à mieux comprendre l'écart entre les deux principales équivalences locales des états graphes: nous identifions une hiérarchie infinie stricte d'équivalences locales des états graphes, entre la LC-équivalence et la LU-équivalence, ce qui nous permet de créer une image beaucoup plus précise de la situation.

\subsubsection{Contributions de la thèse}

L'objectif général de cette thèse est de contribuer à établir un lien entre la théorie des graphes et la théorie quantique, en exprimant des propriétés quantiques des états graphes en termes purement graphiques. La plupart des résultats présentés dans ce manuscrit traduisent soit une propriété quantique des états graphes en termes graphiques, soit utilisent cette traduction graphique pour étudier les propriétés quantiques des états graphes.

Nous nous intéressons principalement au problème de la LU-équivalence entre deux états graphes à $n$ qubits chacun, correspondants à deux graphes qui contiennent le même nombre $n$ de sommets. 
Notre principal outil graphique est la notion d'ensemble local minimal. Un ensemble de sommets d'un graphe est appelé un ensemble local quand il est l'union d'un ensemble non-vide $D$ de sommets - appelé générateur - et de l'ensemble $Odd(D)$ des sommets adjacents à un nombre impairs de sommets de $D$. Un ensemble local est dit minimal quand il ne contient aucun autre ensemble local. Les ensembles locaux minimaux sont utiles car ils sont invariants par LU-équivalence, et surtout, il nous donnent des informations sur l'opération quantique qui lie les deux états graphes. Plus précisément, un ensemble local minimal donne de l'information sur les sommets qui le constituent. Il est donc crucial d'avoir, pour n'importe quel graphe, une couverture des sommets par ensemble locaux minimaux, c'est-à-dire que n'importe quel sommet doit être contenu dans au moins un ensemble local minimal. Nous prouvons que c'est toujours possible.

\begin{théorème*}
    Tout graphe est couvert par ses ensembles locaux minimaux.
\end{théorème*}

La preuve de ce résultat tire parti de l'expression des ensembles locaux minimaux en termes de rang de coupe, une fonction bien étudiée, qui associe des ensembles de sommets à des entiers. Ce résultat est généralisable à des objects sur lesquels on peut définir une fonction qui partage les propriétés phares du rang de coupe. Ainsi, le théorème est valable sur les matroïdes, ou encore les multigraphes.

Nous nous intéressons également dans ce manuscript à des questions de complexité, c'est-à-dire le temps nécessaire, asymptotiquement, pour effectuer un calcul. Nous donnons notamment une procédure efficace pour calculer une couverture par ensemble locaux minimaux.

\begin{théorème*}
    Il existe un algorithme efficace pour calculer une famille d'ensembles locaux minimaux qui couvre un graphe.
\end{théorème*}

Nous prouvons aussi des résultats divers sur la taille et le nombre d'ensembles locaux minimaux dans un graphe:
\begin{itemize}
    \item Les ensembles locaux minimaux contiennent au plus la moitié des sommets du graphe;
    \item Il existe des graphes avec seulement des petits ensembles locaux minimaux (précisément, de taille 2);
    \item Il existe des graphes avec seulement des gros ensembles locaux minimaux (de taille linéaire en le nombre de sommets du graphe);
    \item Il existe des graphes avec un nombre exponentiels d'ensembles minimaux locaux.
\end{itemize}

Pour attaquer l'étude de la LU-équivalence des états graphes, nous définissons d'abord des notions d'équivalences locales intermédiaires, entre la LC-équivalence et la LU-équivalence, que l'on appelle LC$_r$-équivalence, où le niveau $r > 0$ est un entier. Le niveau 1, la LC$_1$-équivalence, coïncide avec la notion de LC-équivalence.

Nous généralisons aussi la notion de complémentation locale, et définissons la $r$-complé\-mentation locale, où $r > 0$ est un entier. La 1-complémentation locale correspond plus ou moins à la complémentation locale. La complémentation locale capture la LC-équivalence des états graphes, dans le sens où deux états graphes sont LC-équivalents si et seulement si on peut passer de l'un à l'autre par une séquence de complémentations locales. Nous prouvons que la $r$-complémentation locale généralise bien cette propriété importante:

\begin{théorème*}
    Deux états graphes sont LC$_r$-équivalents si et seulement si on peut passer de l'un à l'autre par une séquence de $r$-complémentations locales.
\end{théorème*}

De plus, la $r$-complémentation locale capture la LU-équivalence des états graphes (cela constitue le résultat central de ce manuscript), prouvant au passage que deux états graphes LU-équivalents sont LC$_r$-équivalent.

\begin{théorème*}
    Deux états graphes sont LU-équivalents si et seulement s'il existe un entier $r$ tel que l'on peut passer d'un état graphe à l'autre par des $r$-complémentations locales.
\end{théorème*}

On sait que la complémentation locale usuelle ne capture pas la LU-équivalence des états états graphes. Cependant, une question importante est de savoir s'il existe un niveau $r$ tel que la $r$-complémentation locale capture la LU-équivalence de tous les états graphes. Nous répondons par la négative.

\begin{théorème*}
    Pour tout entier $r > 1$, il existe des états graphes qui sont LC$_r$-équivalents mais pas LC$_{r-1}$-équivalents. Ainsi, pour tout entier $r > 0$, il existe des états graphes qui sont LU-équivalents mais pas LC$_r$-équivalents.
\end{théorème*}

Cela prouve que la notion de LC$_r$-équivalence constitue une hiérarchie infinie stricte d'équivalence locales entre les états graphes.

Nous utilisons notre caractérisation graphique de la LU-équivalence pour étudier la complexité du problème de décider si deux états graphes sont LU-équivalents. Comme précédemment, nous commençons par étudier le problème de la LC$_r$-équivalence, et créons un algorithme efficace (quand $r$ est fixé).

\begin{théorème*}
    Il existe un algorithme efficace pour décider si deux états graphes sont LC$_r$-équivalents. Sa complexité en temps est de l'ordre de $n^r$, où $n$ est le nombre de sommets.
\end{théorème*}

Une des étapes de cet algorithme est une généralisation de l'algorithme de Bouchet qui décide si deux états graphes sont LC-équivalents. Nous autorisons l'addition de contraintes supplémentaires, tout en conservant l'efficacité de l'algorithme, moyennant de faibles conditions sur les graphes.

\begin{proposition*}
    Il existe un algorithme efficace pour décider si deux états graphes (qui contiennent au moins un sommet de degré pair) sont LC-équivalents, avec des contraintes supplémentaires (faisant partie de l'entrée de l'algorithme) pouvant être exprimées comme des équations linéaires.
\end{proposition*}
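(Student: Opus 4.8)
The plan is to reduce constrained LC-equivalence to a single linear-algebra problem over $\mathbb{F}_2$, exactly as in the unconstrained case, and then to adapt the known polynomial-time procedure for that problem so that it respects the extra constraints. Recall the binary-symplectic description of LC-equivalence of graph states (Bouchet; Van den Nest, Dehaene and De Moor): if $\Gamma,\Gamma'$ denote the adjacency matrices over $\mathbb{F}_2$ of the two graphs, then $\ket{G}$ and $\ket{G'}$ are LC-equivalent if and only if there exist diagonal matrices $A,B,C,D$ over $\mathbb{F}_2$, indexed by the common vertex set, with
\[
A\Gamma+B=\Gamma'(C\Gamma+D)\qquad\text{and}\qquad a_v d_v+b_v c_v=1 \text{ for all } v ,
\]
where the ``nondegeneracy'' equations on the right, as in the unconstrained case, force $C\Gamma+D$ to be invertible, so no separate check is needed. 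The matrix identity is $\mathbb{F}_2$-linear in the $4|V|$ unknowns $(a_v,b_v,c_v,d_v)_v$, and by assumption the additional input constraints are $\mathbb{F}_2$-linear in the very same unknowns; hence one Gaussian elimination produces, in polynomial time, the affine subspace $\mathcal{A}\se\mathbb{F}_2^{4|V|}$ cut out by the matrix identity together with all the extra constraints. If $\mathcal{A}=\varnothing$ we reject; otherwise the task becomes the combinatorial one: \emph{does $\mathcal{A}$ contain a point at which all $|V|$ quadratic equations $a_v d_v+b_v c_v=1$ hold?}

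This last step is the crux, since deciding a conjunction of many quadratic equations over $\mathbb{F}_2$ is NP-hard in general; one must use the special shape of $\mathcal{A}$. Here I would take the polynomial-time algorithm that solves the unconstrained instance of this problem (Bouchet's, or its reformulation by Van den Nest et al.), observe that it proceeds only by forming and solving auxiliary linear systems over $\mathbb{F}_2$ while processing the vertices one at a time, and simply carry the extra constraints along inside those linear systems from the start. The point to verify is that the correctness argument of that algorithm survives: concretely, that on the successively reduced solution spaces each nondegeneracy condition remains either manifestly unsatisfiable (a case one detects by testing whether the quadratic $a_v d_v+b_v c_v$ vanishes identically on the current affine space) or reducible to an affine condition that one can intersect in. Granting this, the algorithm is: build $\mathcal{A}$, then run the vertex-by-vertex elimination, accepting precisely when the final affine space is nonempty; its running time is a small power of $|V|$.

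The main obstacle I anticipate is exactly the re-verification of that correctness lemma once the extra constraints are present, and this is the single place where the hypothesis ``$G$ and $G'$ have a vertex of even degree'' is genuinely used. When every vertex has odd degree the all-ones vector is an eigenvector of $\Gamma$ over $\mathbb{F}_2$ and the graph state carries the extra symmetry $\bigotimes_v Y_v$; this degenerate configuration is what can break the inductive step, so I would begin the elimination at a vertex of even degree, showing that handling it first injects enough linear relations among the $(a_v,b_v,c_v,d_v)$ that the nondegeneracy conditions at the remaining vertices behave as required even after the additional equations are imposed. Once this lemma is in place, everything else --- the Gaussian eliminations, the bookkeeping of reduced systems, the passage from a nondegenerate solution to an explicit local Clifford circuit realizing the equivalence, and the polynomial bound on the running time --- is routine.
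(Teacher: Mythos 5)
Your setup is right: the constrained problem reduces to finding a point of an affine subspace $\mathcal A\subseteq\mathbb F_2^{4|V|}$ (cut out by Bouchet's linear condition $(i)$ together with the extra linear constraints) on which all the quadratic nondegeneracy equations $a_vd_v+b_vc_v=1$ hold, and you correctly locate both the crux (handling the quadratics on the restricted space) and the role of the even-degree hypothesis (excluding Bouchet's exceptional ``Class $\alpha$'' graphs). But the proposal stops exactly at the crux: you write ``the point to verify is that the correctness argument of that algorithm survives'' and then ``granting this\dots'', so the key lemma is asserted rather than proved. Worse, the mechanism you sketch for it --- a vertex-by-vertex elimination in which each quadratic $a_vd_v+b_vc_v=1$ either fails identically or ``is reducible to an affine condition that one can intersect in'' on the current affine space --- is not how Bouchet's algorithm works and is not obviously true: the zero set of a single quadratic restricted to an affine subspace of $\mathbb F_2^{4|V|}$ is in general not an affine subspace, so there is no reason the successive intersections remain affine, and no argument is given that they do.

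The actual resolution is a global dimension count, not a per-vertex reduction. Bouchet's theorem says that (for graphs outside Class $\alpha$, e.g.\ with an even-degree vertex) the full solution set $\mathcal C$ of $(i)$ and $(ii)$ is, when nonempty, an affine subspace of codimension at most $2$ in the solution space $\mathcal S$ of $(i)$ alone. Writing $\mathcal C=a+\mathcal N$ and intersecting with the subspace $\mathcal S_L$ carved out by the extra linear constraints gives $\dim(\mathcal C\cap\mathcal S_L)\ge\dim(\mathcal S_L)-2$ whenever the intersection is nonempty; hence either $\dim\mathcal S_L\le 4$ (enumerate the at most $16$ points) or a solution, if one exists, appears among the $O(n^2)$ pairwise sums of elements of a basis of $\mathcal S_L$, exactly as in the unconstrained algorithm. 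That two-line intersection argument is the whole content of the generalization; without it (or some substitute for your unproven ``quadratics become affine'' claim), the proof is incomplete.
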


Comme l'algorithme pour décider la LC$_r$-équivalence a une complexité de l'ordre de $n^r$, il n'implique pas directement un algorithme efficace pour décider la LU-équivalence (car $r$ n'est \emph{a priori} pas suffisamment borné). Pour y remédier, nous prouvons une borne sur le niveau des $r$-complémentations locales qui lient deux états graphes LU-équivalents.

\begin{théorème*}
    Si deux états graphes sont LU-équivalents, on peut passer de l'un à l'autre par des $r$-complémentations locales avec $r$ au plus de l'ordre de $\log(n)$, où $n$ est le nombre de sommets.
\end{théorème*}

Ainsi, notre algorithme pour décider la LC$_r$-équivalence, décide aussi la LU-équivalence des états graphes en temps quasi-polynomial. Auparavant, seules des méthodes en complexité exponentielle existaient.

\begin{théorème*}
    Il existe un algorithme quasi-polynomial pour décider si deux états graphes sont $LU$-équivalents. Sa complexité en temps est de l'ordre de $n^{\log_2(n)}$, où $n$ est le nombre de sommets.
\end{théorème*}

Bien qu'il existe des paires d'états graphes qui sont LU-équivalents mais pas LC-équivalents, il existe des états graphes dont l'orbite par LC-équivalence est la même que l'orbite par LU-équivalence. On dit que ces états graphes vérifient LU=LC. LU=LC est une propriété désirable pour un état graphe, par example il est possible d'explorer son orbite par LU-équivalence, avec seulement des complémentations locales. Nous créons un protocole pour prouver que certaines familles d'états graphes vérifient LU=LC, basé sur la $r$-local complémentation. Nous utilisons cette procédure pour prouver que certains états graphes de répétition, utiles pour les réseaux de communication quantique, vérifient LU=LC.

\begin{proposition*}
    Certaines instances d'états graphes de répétition vérifient LU=LC.
\end{proposition*}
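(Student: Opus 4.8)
The plan is to invoke the protocol for certifying LU=LC based on $r$-local complementation, and to supply its two inputs in the case of repetition graph states. Recall that a repetition (repeater) graph state \cite{azuma2015all, azuma2023quantum} is built from a densely connected ``core'' on $2m$ vertices (a complete graph, or a complete graph with a perfect matching removed) together with pendant ``leaf'' vertices attached to the core, one or more per core vertex, $m$ being the number of links; I would if necessary restrict to the precise variant, or to the values of $m$, for which the argument closes. The first input is an explicit description of the minimal local sets of such a graph $G$. A short case analysis on a nonempty generator $D$ does the job: if $D$ contains a core vertex $c$ then one of its leaves $\ell$ lies in $Odd(D)$ (a leaf's only neighbour is its core vertex); if $D$ consists only of leaves then either $c \in Odd(D)$ for $c$ the core vertex of some $\ell \in D$, or $D$ already contains a pair of twin leaves; in every case $D \cup Odd(D)$ contains a size-$2$ local set attached to a leaf (a pendant pair $\{c,\ell\}$ or a twin-leaf pair), and a quick check shows it equals such a set only when $D$ is a single leaf. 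Hence the minimal local sets of $G$ are exactly these size-$2$ sets, and they cover $V(G)$ -- a direct instance of the covering theorem. Crucially, since minimal local sets are LU-invariant, every graph state LU-equivalent to $\ket{G}$ has the very same size-$2$ minimal local sets.

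The second input, and the substance of the protocol, is to show that the LU-orbit of $\ket{G}$ does not exceed its LC-orbit; since LC-equivalence always implies LU-equivalence, this gives LU=LC. Let $\ket{H}$ be LU-equivalent to $\ket{G}$. By the characterization of LU-equivalence there are an integer $r$ and a sequence of $r$-local complementations $G = G_0, G_1, \ldots, G_k = H$; I would prove by induction on $j$ that each $G_j$ lies in the LC-orbit of $G$. For the inductive step, $G_j$ is LU-equivalent to $G$, hence all of its minimal local sets have size $2$. Now the level $r$ of an $r$-local complementation is governed by the structure -- in particular the size -- of the local set it is performed along; with every minimal local set of $G_j$ of size $2$, only level-$1$ moves are genuinely available, and a level-$1$ $r$-local complementation has the same effect as an ordinary local complementation up to LC-equivalence (recall LC$_{1}$-equivalence is LC-equivalence). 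Therefore $G_{j+1} \simeq_{\mathrm{LC}} G_j \simeq_{\mathrm{LC}} G$, and chaining the steps yields $H \simeq_{\mathrm{LC}} G$, as desired.

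The step I expect to be the real obstacle is the claim that size-$2$ minimal local sets preclude higher-level $r$-local complementations. The difficulty is that, although the \emph{minimal} local sets of a repetition graph state are tiny, its dense core carries many large \emph{non-minimal} local sets, and one of these could a priori license an $r$-local complementation with $r \ge 2$, i.e.\ an honestly non-Clifford local operation. Ruling this out means showing that such an operation, once it is forced to be simultaneously compatible with all the size-$2$ minimal local sets that rigidify the leaf qubits and their neighbours, must collapse to an ordinary local complementation; making this rigidity interact correctly with the symmetry of the core is the heart of the protocol, and the place where the argument could break for ``bad'' instances. A secondary point is precisely that the statement claims only \emph{certain} instances: I would expect a mild hypothesis on $m$ or on the chosen core to be needed, with the finitely many residual small cases either excluded or subsumed by the theorem that two LU-equivalent graph states on at most $19$ qubits are LC-equivalent.
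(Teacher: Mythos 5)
Your overall strategy -- pin down the minimal local sets of the repeater graph, use their LU-invariance to constrain the local unitaries, and reduce the residual freedom to ordinary local complementation -- is the right one, and your computation of the minimal local sets (the pendant pairs $\{c,\ell\}$, plus twin-leaf pairs if several leaves hang off one core vertex) matches the paper's. But the step you yourself flag as ``the real obstacle'' is a genuine gap, and the inference you offer to bridge it is not valid as stated: the level $r$ of an $r$-local complementation is \emph{not} governed by the sizes of the minimal local sets of the graph it acts on. An $r$-local complementation is performed over an $r$-incident independent multiset $S$, and the constraints on $r$ in this thesis come from the size of $\supp(S)$ and of $V\setminus\supp(S)$ (\cref{lemma:exp_support}, \cref{lemma:sizeZ}), not from ``the local set it is performed along.'' So ``all minimal local sets have size $2$, hence only level-$1$ moves are available'' is an unproven claim, and your induction along a chain of $r$-local complementations rests entirely on it.

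The paper closes this gap with the standard-form machinery rather than with the sizes of the minimal local sets. Since every minimal local set of a repeater graph is generated by a single leaf, the leaves are of type X and the core vertices of type Z, so the graph is already in standard form (for a suitable vertex ordering). \cref{lemma:standardform_LCr_lc} / \cref{prop:lu-lc} then say that the only generalized local complementation one ever needs to consider is a \emph{single} $r$-local complementation over the multiset supported on the type-X vertices, i.e.\ on the leaves. This is exactly what defuses your worry about the dense core licensing an $r\ge 2$ move: the move is forced to be supported on the independent set of leaves, and since each leaf has degree $1$, no two distinct vertices $u,v$ share a leaf as a common neighbour, so $S\bullet\Lambda_G^{u,v}=0$ and the $r$-local complementation toggles no edge whatsoever. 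It is therefore (trivially) implementable by local complementations, and LU$=$LC follows for every graph in which each vertex is a leaf or adjacent to one -- in particular for all complete-graph-based and biclique-graph-based repeater graph states, with no restriction on the number of links and no appeal to the $19$-qubit result. If you want to complete your argument without the type formalism, you must actually prove that any local unitary relating two such graph states restricts, on each leaf, to an X-rotation (up to Pauli) -- this is what the size-$2$ minimal local sets give you via \cref{lemma:LUconstraints} -- and then run the hypergraph-state cancellation argument of \cref{lemma:cond_angles} to see that these X-rotations on degree-$1$ vertices create no hyperedges at all; merely invoking the sizes of the minimal local sets will not do it.
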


Il était su depuis longtemps que tous les états graphes avec au plus 8 qubits, vérifient LU=LC. Ce résultat avait été obtenu par une exploration exhaustive assistée par ordinateur. Le formalisme de la $r$-complémentation locale permet de drastiquement améliorer ce résultat.

\begin{proposition*}
    Tous les états graphes avec au plus 19 sommets vérifient LU=LC.
\end{proposition*}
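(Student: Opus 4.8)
The plan is to leverage the quasi-polynomial bound on the level $r$ together with the algorithmic and structural machinery developed above, rather than attempting a brute-force computer search (which is exactly what the $8$-qubit bound relied upon, and which becomes infeasible around $n=19$). Concretely, suppose $\ket{G}$ and $\ket{H}$ are LU-equivalent graph states on $n \ls 19$ qubits. By the characterization theorem, there is an integer $r$ such that $\ket{G}$ and $\ket{H}$ are LC$_r$-equivalent, and by the logarithmic bound on the level we may take $r$ to be at most of order $\log n$. For $n \ls 19$ this forces $r$ to be a very small constant — the arithmetic of the $\log n$ bound should give $r \ls 2$ (or whatever small value the explicit constant yields), so it suffices to show that LC$_2$-equivalence (and a fortiori LC$_1 = $ LC-equivalence) of graph states on at most $19$ vertices collapses to LC-equivalence.

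The key steps, in order, are: (i) pin down the \emph{exact} value of the bound from the logarithmic-level theorem when $n \ls 19$, extracting the smallest $r_0$ such that LU-equivalence on $\ls 19$ qubits implies LC$_{r_0}$-equivalence; (ii) analyze the structure of an $r_0$-local complementation on a graph with so few vertices, using the fact (stated above) that minimal local sets have size at most $n/2 \ls 9$, and more generally that the generators and odd-neighbourhoods involved are tightly constrained; (iii) show that on such small graphs any single $r_0$-local complementation can be re-expressed as — or absorbed into — a sequence of ordinary local complementations, i.e. that the "new" moves at level $r_0$ do not actually enlarge the orbit below $n = 20$. Here I expect to invoke the strictness theorem in reverse: the smallest graphs witnessing LC$_r \neq$ LC$_{r-1}$ for $r > 1$ must be large, and quantifying \emph{how} large — specifically, showing the smallest witness has more than $19$ vertices — is what closes the gap. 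An alternative, more computational route for step (iii): use the efficient $O(n^{r})$ decision algorithm for LC$_r$-equivalence and the constrained-Bouchet proposition to verify, by a feasible finite computation over all graphs on $\ls 19$ vertices (or over a suitable reduced family of canonical representatives), that LC$_2$-orbits coincide with LC-orbits.

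The main obstacle, I expect, is step (iii): ruling out that a genuine level-$2$ (or level-$r_0$) move does something new on some $19$-vertex graph. The strictness theorem only tells us such separating examples exist for every $r > 1$; it does not, on its face, tell us the minimum number of vertices needed, and the explicit separating constructions used to prove strictness are presumably much larger than $19$ vertices. So the real work is a \emph{lower bound on the size of the smallest LC$_2$-vs-LC separating pair}. I would try to obtain this by combining the cut-rank description of minimal local sets with the fact that an $r$-local complementation that is not reducible to local complementations must interact with at least two minimal local sets in an "entangled" way that requires enough vertices to realize — essentially a counting/dimension argument on the $\mathbb{F}_2$-linear data (generators $D$, odd sets $Odd(D)$, and the Clifford phases permitted at level $r$) attached to the vertices. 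If that structural lower bound turns out to be delicate, the fallback is to make step (iii) purely computational: reduce to canonical forms under LC-equivalence (Bouchet-style), enumerate, and check finitely many cases with the $O(n^{r_0})$ algorithm, which for $n \ls 19$ and $r_0$ a small constant is within reach.
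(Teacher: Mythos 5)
You have the right skeleton, and it matches the paper's strategy at the top level: since $2^{r+3}-1=31\gs 19$, LU-equivalence on at most $19$ (indeed $31$) vertices implies LC$_2$-equivalence, so everything reduces to showing that every valid $2$-local complementation on a graph of order at most $19$ can be re-expressed as ordinary local complementations. Your step (i) is exactly right.

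The genuine gap is in step (iii), and neither of your two proposed routes closes it as stated. The structural ``counting/dimension'' route is realized in the paper by the bounds on \emph{genuine} $2$-incident multisets: the support must have at least $2^{r+2}-r-3=11$ vertices and the co-support at least $r+3=5$, giving $n\gs 2^{r+2}=16$ for any $2$-local complementation not reducible to level $1$. This only proves LU=LC up to order $15$; it does not reach $19$. Your computational fallback --- enumerate graphs on at most $19$ vertices up to LC-equivalence and check --- is infeasible: the number of LC-orbits of labelled graphs on $n$ vertices is at least $2^{(n^2-5n)/2-1}$, which at $n=19$ is about $2^{132}$ (the paper makes exactly this point about order $26$, and it applies equally at $19$). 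What actually closes the gap $16$--$19$ is a change of variables you did not anticipate: one does not enumerate the graphs themselves, but the possible \emph{co-supports}. A reduction lemma shows that a counterexample can be normalized to a bipartite graph $(S', V'\sm S')$ with $S'$ $2$-incident, twin-free and with no vertices of degree $\ls 1$; such configurations are determined by subsets of $\{0,1\}^k$ with $k=|V'\sm S'|$, so there are only $2^{\binom{k}{4}+\cdots+\binom{k}{k}}$ of them --- about $4\times 10^6$ for $k=6$, and trivially few for $k\ls 5$. An exhaustive check over these (plus a separate induction using twins to push the support bound from $11$ to $12$, and the observation that $k=6$ forces $|\supp(S)|\ls 13\ls 20$ when $n\ls 19$) covers all cases. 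Without this reparametrization of the search space, the finite check you propose is not within reach, and the purely structural bound stalls at $15$.
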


Enfin, nous étudions la propriété de vertex-minor universalité. Un graphe est $k$-vertex-minor universel, s'il est possible de créer n'importe quel graphe sur n'importe quel ensemble de $k$ sommets, avec des complémentations locales, et des destructions de sommets. Pour l'état graphe correspondant, cela signifie que l'on peut créer n'importe quel état graphe sur n'importe quel ensemble de $k$ qubits, avec seulement des opérations locales. Nous montrons que pour tout entier $k$, il existe des graphes $k$-vertex-minor universels de taille raisonnable.

\begin{théorème*}
    Pour toute constante $\alpha > 2$, et pour $k$ assez grand, il existe un graphe $k$-vertex-minor universel avec moins de $\alpha k^2$ sommets. 
\end{théorème*}

Nous montrons que cette construction est asymptotiquement optimale, c'est à dire qu'un graphe $k$-vertex-minor universel possède toujours au moins un nombre quadratique (en $k$) de sommets.

%%% Table of contents and lists of stuff (delete unused lists, i.e., if no tables or figures) %%%%%
\tableofcontents
%\listoffigures
%\listoftables

%%% Chapters of thesis  %%%%%%%%%%%%%%%%%%%%%%%%%%%%%%%%%%%%%%%%%%%%%%%%%%%%%%%%%%%%%%%%%%%%%%%%%%%

%% If you want to use "double spacing", you should start here...

\chapter{Introduction}

\subsubsection{Quantum computing}

While Newtonian physics accurately describe the universe at intermediate scales, the so-called \textit{classical} physics fail to describe the properties of objects at a very small scale. Quantum physics is a formalism that tries to accurately describe how very small objects evolve. It is an understatement to say that the laws of quantum physics differ from the laws of classical physics: many fundamental properties of classical physics, such as determinism, no longer apply.

The idea of utilizing the laws of quantum mechanics to perform information processing operations is often attributed to Richard Feynman, in the early 80s \cite{Feynman1982}. One of the most important breakthroughs in quantum computing is the design of Shor's algorithm in the late 90s \cite{Shor97}, which factors a composite number in polynomial-time using a quantum computer, which is not known to be possible using a classical computer. Despite promising applications, building useful quantum computers is still a huge challenge because of the very delicate nature of quantum information.

When assessing where a potential speedup of quantum computers comes from, entanglement, i.e. what can not be explained by classical correlation, is the usual suspect. Part of the reason why entangled states are difficult to harness with classical settings, is that the space of entangled quantum states grows very rapidly with the number of basic components, the qubits, the quantum equivalent of the classical bits of information. \textbf{Graph states} are a key candidate to explain and design around multipartite entanglement. Indeed, graph states are intricate quantum systems, in the sense that they exhibit a complex form of entanglement, while being fairly easy to express and manipulate thanks to their concise one-to-one correspondence with mathematical graphs.

\subsubsection{Graph states and their applications}
%With corrections suggested by Robert Raussendorf

The most prevalent use of graph states is as universal resources in measurement-based quantum computation, MBQC for short \cite{raussendorf2001one, raussendorf2003measurement, briegel2009measurement}, introduced by Hans Briegel and Robert Raussendorf in the early 2000s. Graph states first appeared as a generalization of the cluster states \cite{Briegel2001}, the original resources for MBQC \cite{Hein04}. In opposition to the circuit model, where entanglement is added to the quantum state during the computation, the main idea behind MBQC is to start with an already entangled resource - a graph state - and perform a computation consisting only of single-qubit measurements along with classical feedback and control. MBQC is a valuable model of computation from a theoretical perspective, as it provides a separation between two main sources of quantum computational power: entanglement, and non-Clifford operations. Indeed, preparing the resource graph state uses only the so-called Clifford operations and can be thought of as a completely different problem than performing the computation. It is thus natural to study graph states on their own. For example, how to efficiently prepare a graph state is a very important question for both theorists and experimentalists \cite{Perdrix06, Cabello2011, russo2018photonic, Li2022, Ghanbari2024}.

Graph states also arise naturally in the context of stabilizer codes \cite{gottesman1997stabilizercodesquantumerror, Grassl2002}. Stabilizer codes comprise a very large family of quantum codes, including for example the CSS codes \cite{Calderbank1996, steane1996multiple}, and are defined as the common fixpoint of commuting  Pauli operators. The stabilizer formalism has led to the Gottesman-Knill theorem \cite{gottesman1998heisenberg, Aaronson2004}, which states that any Clifford operation can be efficiently simulated on a classical computer. It has been shown that graph codes \cite{schlingemann2001quantum} - quantum codes based on graph states - are stabilizer codes, but also that any stabilizer code is equivalent to a graph code \cite{schlingemann2001stabilizer}, leading to a more intuitive visualization. Indeed, graph states offer a more intuitive framework than stabilizer tableaux to design useful stabilizer codes \cite{khesin2025quantum}.

Graph states are useful to create quantum networks, where distant parties hold one qubit of the graph state each. The parties can then perform cryptography protocols, for example quantum secret sharing \cite{markham2008graph, Keet2010, Javelle2013, gravier2013quantum, Bell2014secret}. Another example of a powerful construction is the repeater graph state \cite{azuma2015all, azuma2023quantum}, a component allowing entanglement to be distributed in a quantum network. It is common in the design of quantum network routing protocols to use the graphical descriptions of the local operations over graph states \cite{hahn2019quantum, bravyi2024generating, meignant2019distributing, fischer2021distributing, Mannalath2023}, taking advantage for example of the rich formalism of vertex-minors \cite{OumSurvey, dahlberg2020transforming}.

\subsubsection{Entanglement of graph states}

In the applications described above, graph states are used as a resource of entanglement. It is thus an essential question to classify graph states according to their entanglement. For general quantum states, having the same entanglement is often formalized at being related by SLOCC (stochastic local operations and classical communication). For graph states in particular, this is the same as being local unitary equivalent, or \textbf{LU-equivalent} for short, meaning that the graph states are related by single-qubit unitary operators. If we restrict the single-qubit unitaries to be in the so-called Clifford group, this defines a stronger notion of equivalence: the graph states are said local Clifford equivalent, or \textbf{LC-equivalent} for short. LC-equivalence of graph states is particularly easy to characterize, as a simple and well-studied graphical operation, called \textbf{local complementation}, exactly captures LC-equivalent graph states. This implies for example an efficient algorithm for recognizing LC-equivalent graph states. LC-equivalence obviously implies LU-equivalence, and while the converse was once conjectured to be true \cite{5pb} (this was known as the LU=LC conjecture), there exist pairs of graph states that are LU-equivalent but not LC-equivalent \cite{Ji07}, leaving a gap between these two local equivalences. In what is arguably the most influential survey on graph states \cite{Hein06}, written before the LU=LC conjecture was disproven, we can read: "Note that a general coincidence of SLOCC-, LU- and LC-equivalence for graph states would be particularly advantageous for the following two reasons: since in this case all 3 local equivalences would correspond to LC-equivalence, 
\begin{enumerate}
    \item Checking whether two given graph states are locally equivalent could then be done efficiently;
    \item All 3 locally equivalences would entirely be described by the local complementation rule, yielding a description of local equivalence of graph states in simple, purely graph theoretic terms. "
\end{enumerate} 
While it is now established that these local equivalences do not coincide, and the results about LC-equivalence can not be directly lifted to LU-equivalence, we address these two points:
\begin{enumerate}
    \item We introduce a generalization of local complementation that does describe LU-equiva\-lence (and thus SLOCC-equivalence) of graph states in purely graph theoretic terms;
    \item We use this graphical description to design a quasi-polynomial algorithm for LU-equivalence, meaning the time-complexity is $O(n^{\log_2{n}})$ where $n$ is the number of qubits.
\end{enumerate}

This graphical rule also greatly helps our understanding of the gap between the two main local equivalences of graph states: we pinpoint an infinite strict hierarchy of local equivalences of graph states, between LC- and LU-equivalence, leading to a much finer picture of the situation.

\subsubsection{Structure of the thesis}

The overall goal of this thesis is to help connect graph theory and quantum theory, by harnessing quantum properties of graph states  in purely graph-theoretical terms. Most results presented in this thesis either translate a quantum property of graph states in graph terms, or use this graphical translation to study quantum properties of graph states.\\

\cref{chap:background} is an introduction to the mathematical concepts covered by this thesis. It contains discussions on LC- and LU-equivalent graph states, and the relation to local complementation and the cut-rank function of a graph. We also review Bouchet's algorithm for recognizing LC-equivalent graphs.\\

\cref{chap:mls} is based on a WG 2024 paper \citepub{claudet2024covering} with Simon Perdrix: "Covering a Graph with Minimal Local Sets". We lay the mathematical foundations of the study of LU-equivalent graph states. We study minimal local sets, peculiar sets of vertices of a graph that are related to the stabilizers of the corresponding graph states. Importantly, minimal local sets are invariant by LU-equivalence, and they give strong constraints of the unitaries applied to each of the qubits corresponding to their vertices. It is thus important to have, for any graph, a family of minimal local sets such that each vertex is contained in at least one. We show that such a family always exists and show how to construct one in polynomial time.\\

\cref{chap:glc} is based on a STACS 2025 paper \citepub{claudet2024local} with Simon Perdrix: "Local Equivalence of Stabilizer States: a Graphical Characterisation". 
We introduce a graph rule that exactly captures LU-equivalence. This graph rule is a generalization of the local complementation that is parametrized by an integer $r$, leading to the notion of $r$-local complementation (the 1-local complementation corresponds to the usual local complementation). We use this graphical characterization of LU-equivalence to prove the existence of an infinite strict hierarchy of equivalences between graph states, that we call LC$_r$-equivalences, bridging the gap between LC- and LU-equivalence.\\

\cref{chap:algo} is based on an ICALP 2025 paper \citepub{claudet2025deciding} with Simon Perdrix: "Deciding Local Unitary Equivalence of Graph States in Quasi-Polynomial Time". 
We devise a (classical) algorithm that decides whether two graph states are LU-equivalent, based on the generalized local complementation introduced in the aforementioned paper \citepub{claudet2024local}. The algorithm runtime is shown quasi-polynomial thanks to new results on properties of the generalized local complementation.\\

\cref{chap:conditionsLULC} contains results from both \citepub{claudet2024local} and \citepub{claudet2025deciding}. In general, for graph states, LC- and LU-equivalence do not coincide. However, for some restricted classes of graph states, LU=LC i.e. LC- and LU-equivalence do coincide. We provide a general protocol to find such classes of graph states and use it to prove that LU=LC holds for repeater graph states. 
We also prove that LU=LC holds for graph states defined on at most 19 qubit, which was only previously known up to 8 qubits.\\

\cref{chap:vmu} is based on an ICALP 2024 paper with Maxime Cautrès, Mehdi Mhalla, Simon Perdrix, Valentin Savin and Stéphan Thomassé \citepub{Cautres2024}: "Vertex-Minor Universal Graphs for Generating
Entangled Quantum Subsystems". Some results are from an earlier, unpublished version of the paper \citepub{claudet2023smallkpairablestates}. We prove the existence of $n$-qubit graph states for which it is possible to induce any stabilizer state on any $\Theta(\sqrt{n})$ qubits by using only local Clifford operations and classical communication. The construction is probabilistic and is based on random bipartite graphs. We rely on of the vertex-minor formalism.\\

\subsubsection{Reading guide}

\cref{chap:glc} relies on results from \cref{chap:mls}, but both chapters can be understood independently. \cref{chap:algo} and \cref{chap:conditionsLULC} relies heavily on the results and the formalism introduced in \cref{chap:glc}. \cref{chap:conditionsLULC} also rely on some results from \cref{chap:algo}. \cref{chap:vmu} can be understood independently of the other chapters.% .tex extension is presumed
\chapter{Background}

\label{chap:background}

\section{Graphs}

A \textbf{graph} $G$ is composed two sets, a set $V$ of vertices, and a set $E$ of edges between the vertices. We write $G = (V,E)$. We consider finite graphs, so $V$ is a finite set. Graphically, the vertices are points, and the edges are links connecting points. Mathematically, if $n$ is the number of vertices, $V$ can be associated with the set $[1,n] = \{1,2, \cdots, n\}$ of integers between 1 and $n$, while $E$ is composed of couples $(u,v)$, where $u, v \in V$. Through this thesis, the objects we call graphs are actually simple and undirected graphs. A graph is \textbf{simple} if each edge appears once in $E$, i.e. $E$ is a subset of $V^2$, and for any vertex $u \in V$, $(u,u) \notin E$, i.e. there is no edge connecting a vertex to itself. A graph is \textbf{undirected} if $(u,v) \in E \iff (v,u) \in E$, i.e. edges have no direction.

Let us emphasize that the graphs considered in this thesis are labelled, meaning that each vertex is assigned a distinct identifier. For instance the graph composed of 3 vertices, 1,2 and 3, with edges $(1,2)$ and $(2,3)$, is distinct from the graph composed of 3 vertices, 1,2 and 3, with edges $(1,2)$ and $(1,3)$. Additionally, we consider that each vertex set $V$ is equipped with a canonical total order $<$.

\subsubsection{Notations}  Consider the graph $G = (V,E)$. The \textbf{order} of the graph is the number of vertices $|V|$, and will be denoted $n$. We use the notation $u \sim_G v$ when $(u,v) \in E$, and we say $u$ and $v$ are \textbf{adjacent} (or sometimes neighbors). Else we note $u \not\sim_G v$. Notice that $u \sim_G v \iff v \sim_G u$ as $G$ is undirected. Given two sets of vertices $D_1, D_2 \se V$, the symmetric difference of $D_1$ and $D_2$ is $D_1 \Delta D_2 = (D_1 \cup D_2) \sm (D_1 \cap D_2)$. Given a vertex $u \in V$, $N_G(u) = \{v\in V ~|~ u \sim_G v\}$ is the \textbf{neighborhood} of $u$, i.e. the set of vertices that are adjacent to $u$. Notice that $u \notin N_G(u)$ as $G$ is simple. The \textbf{degree} $\delta_G(u)$ of a vertex $u$ is the number of vertices adjacent to $u$, i.e. $\delta_G(u) = |N_G(u)|$. The odd and common neighborhoods are two natural generalizations of the notion  of neighborhood to sets of vertices. For a set of vertices $D \se V$, the \textbf{odd neighborhood} of $D$ is $Odd_G(D) =  \Delta_{u\in D} N_G(u) = \{v \in V ~|~|N_G(v) \cap D| = 1 \text{ mod } 2\}$. Informally, $Odd_G(D)$ is the set of vertices that are adjacent to an odd number of vertices in $D$. The \textbf{common neighborhood} of $D$ is $\Lambda_G^D=\bigcap_{u\in D}N_G(u)=\{v\in V~|~\forall u \in D, v\in N_G(u)\}$. Informally, $\Lambda_G^D$ is the set of vertices that are adjacent to every single vertex in $D$. Notice that $Odd_G(D)$ may contain vertices of $D$, but $\Lambda_G^D$ may not, as $G$ is simple. A set of vertices $D \se V$ is said \textbf{independent} if no two vertices of $D$ are adjacent, i.e. for each $u \in D$, $N_G(u) \cap D = \emptyset$. Given a set $D \se V$ of vertices, the \textbf{subgraph} of $G$ \textbf{induced} by $D$ is the graph $G[D]$, defined with $D$ as the set of vertices, % and whose edges correspond exactly to the edges in $G$. 
and where any two vertices are adjacent in $G[D]$ if and only if they are adjacent in $G$. For example, if $u$ is a vertex, the graph $G \sm u = G[V\sm\{u\}]$ is the graph obtained from $G$ after removing $u$ and all the edges adjacent to $u$. A graph is said \textbf{connected} if there is a path between any two vertices. A graph that is not connected is the union of so-called \textbf{connected components}, maximal induced subgraphs that are connected. 
A graph is said \textbf{bipartite} (or 2-colorable) is there exists a bipartition $V = L \cup R$ of the vertices such that $L$ and $R$ are independent sets, i.e. both $G[L]$ and $G[R]$ are empty graphs. Two unadjacent vertices $u$ and $v$ are said \textbf{twins} if they are adjacent to the same vertices i.e. $N_G(u) = N_G(v)$.

For each of the notations above, "$G$" may occasionally be omitted when the graph is clear form the context.

\section{Quantum computing}

\subsection{Quantum states}

In the formalism of quantum mechanics, the \textbf{qubit}, the quantum equivalent of a bit, is represented as a unit vector in $\mathbb C^2$. That is, a vector $\left(\begin{smallmatrix} a \\ b \end{smallmatrix}\right)$, where $a$ and $b$ are complex numbers that satisfy the condition $|a|^2 + |b|^2 = 1$. In quantum mechanics, a more common notation is the Dirac notation, where a qubit is written $a \ket 0 + b \ket 1$, with $\ket 0 = \left(\begin{smallmatrix} 1 \\ 0 \end{smallmatrix}\right)$ and $\ket 1 = \left(\begin{smallmatrix} 0 \\ 1 \end{smallmatrix}\right)$. Contrary to the classical bit, the qubit is not exactly either $\ket 0$ or $\ket 1$ but instead is in a weighted superposition of the basis vectors $\ket 0$ and $\ket 1$. The composition of the qubit $a \ket 0 + b \ket 1$ and the qubit $c \ket 0 + d \ket 1$ is the so-called product state $(a \ket 0 + b \ket 1) \otimes (c \ket 0 + d \ket 1) = ac \ket{00} + ad \ket{01} + bc \ket{10} + bd \ket{11}$. $\otimes$ here represents the tensor product. In general, the composition of $n$ qubits is a unit vector\footnote{A vector $v$ in $\mathbb C^{k}$ is a unit vector if $|v_1|^2+|v_1|^2+\cdots+|v_{k}|^2 =1$.} in $\mathbb C^{2^n}$. In fact, the family of unit vectors in $\mathbb C^{2^n}$ is the family of $n$-qubit \textbf{quantum states}. It happens that not any quantum state can be written as a composition of qubits. 
We say in this case that the quantum state is \textbf{entangled}, meaning  that it cannot be described as the independent sum of its parties.

The reversible operations that map quantum states to other quantum states correspond, in the formalism of quantum mechanics, to \textbf{unitary matrices}. A unitary matrix $U$ is such that $U U^\dagger = U^\dagger U = I$, where $I$ is the identity matrix and $U^\dagger$ is the conjugate transpose of $U$. A unitary matrix acting on $n$ qubits has dimension $2^n$, i.e. it is a $2^n \times 2^n$ matrix. An example of a single-qubit unitary gate is the Hadamard gate:
$$ H\defeq 
\frac{1}{\sqrt 2}
\begin{pmatrix}
    1 & 1\\
    1 & -1
\end{pmatrix} 
$$
The Hadamard gate is useful for creating states in superposition from the basis states, as it maps $\ket 0$ to $\ket + \defeq \frac{1}{\sqrt{2}}\left(\ket{0}+\ket{1}\right)$ and $\ket 1$ to $\ket - \defeq \frac{1}{\sqrt{2}}\left(\ket{0}-\ket{1}\right)$. We note $H \ket 0 = \ket +$ and $H \ket 1 = \ket -$. As for quantum states, unitary gates can be composed by the tensor product: for example $H \otimes H$ denotes the 2-qubit unitary gate that perform the Hadamard gate on both the first and second qubits. The tensor product of single-qubit unitary gates cannot create entangled states from product states (and vice versa). However, some multiple-qubit unitaries cannot be written as tensor product of single-qubit unitaries and can actually create entangled states from non-entangled product states. An example of such an entangling gate is the $CZ$ gate:
$$CZ =  \begin{pmatrix}
1 & 0 & 0 & 0\\
0 & 1 & 0 & 0\\
0 & 0 & 1 & 0\\
0 & 0 & 0 & -1\\
\end{pmatrix}$$ 
The $CZ$ gate flips the sign before the basis vector $\ket{11}$ and leave the other basis vectors invariant. As an example of creation of entanglement, applying $CZ$ to the product state $\ket + \ket + = \left(\frac{1}{\sqrt{2}}\left(\ket{0}+\ket{1}\right)\right)\otimes \left(\frac{1}{\sqrt{2}}\left(\ket{0}+\ket{1}\right)\right) = \frac{1}{2}\left(\ket{00}+\ket{01}+\ket{10}+\ket{11} \right)$ results in the state $CZ \ket + \ket + = \frac{1}{2}\left(\ket{00}+\ket{01}+\ket{10}-\ket{11} \right)$. This resulting state is entangled, i.e. it cannot be written as a product state.

In general, an $n$-qubit quantum state $\ket \psi$ is said \textbf{fully entangled} if for no bipartition A, B of its qubits, $\ket \psi$ can be written as a composition $\ket{\psi_A} \otimes \ket{\psi_B}$. Such fully entangled states include the $n$-qubit W state
$$ \ket W = \frac{1}{\sqrt{n}}\left(\ket{00\cdots01} + \ket{00\cdots10} + \cdots + \ket{10\cdots00} \right)$$
or the $n$-qubit GHZ (Greenberger-Horne-Zeilinger) state:
$$ \ket{\text{GHZ}} = \frac{1}{\sqrt{2}}\left(\ket{00\cdots00} + \ket{11\cdots11}\right)$$
To classify multiparty entanglement, it is standard to say that two quantum states are equivalent, or have the same entanglement, if they can be mapped one to the other with reversible local (i.e. qubit per qubit) operations. For example the 2-qubit GHZ state is equivalent to $CZ \ket + \ket +$, as $\left(H \otimes I \right) \frac{1}{\sqrt{2}} \left( \ket{00} + \ket{11}\right) = \frac{1}{2}\left(\ket{00}+\ket{01}+\ket{10}-\ket{11} \right)$. Conversely, for $n \gs 3$, the $n$-qubit GHZ state and the $n$-qubit W state are not equivalent.

\subsubsection{Notations}

In this thesis, $n$ always denotes the number of qubits of the quantum state. $V$ denotes the sets of qubits, and thus $n = |V|$.  
If a specified single-qubit unitary gate $U$ is applied to the qubit $u$, we write $U_u$. If the specified unitary gate $U$ is applied to a set $D$ of qubits, we write $U_D$. In some cases, we apply a tensor product of (possibly distinct) \emph{unspecified} single-qubit unitary gates to a set of qubits $D$; if we note $U_u$ the unitary gate applied on the qubit $u$, then we write $\bigotimes_{u \in D} U_u$ instead of $\prod_{u \in D} (U_u)_u$ for readability. For two-qubit unitaries such as $CZ$, we write $CZ_{uv}$ when $CZ$ is applied to the qubits $u$ and $v$. We use the Dirac notation: while $\ket \psi$ represents a vector, $\bra \psi$ represents the adjoint of $\ket \psi$. Thus, if $\ket \psi$ is a quantum state, i.e. represents a unit vector, then $\bra \psi \ket \psi = 1$. 
Given a set $B \se V$ of qubits, the \textbf{partial trace} over $B$ (also called the reduced density matrix) of a $2^n \times 2^n$ matrix $M$ is defined as $Tr_B(M) \defeq \sum_{x \in \{0,1\}^{|B|}} \bra{x}_B M \ket{x}_B $ (where $\ket x_B$ is short for $\ket x_B \otimes I_{V\sm B}$ here). The transpose of a matrix $M$ is noted $M^T$, and the conjugate transpose is noted $M^\dagger$. A global phase before a quantum state or unitary operation is often irrelevant and will be denoted $e^{i\phi}$.

\subsection{Single-qubit unitary gates}

In this thesis, we mostly care about local unitary gates, where "local" means qubit per qubit. 
These local unitary gates are represented by tensor products of single-qubit unitary matrices. Thus, we review, in this section, some useful single-qubit unitary gates.

Arguably the most prevalent single-qubit unitaries in quantum mechanics are the Pauli gates:

$$ I\defeq 
\begin{pmatrix}
    1 & 0\\
    0 & 1
\end{pmatrix} 
\qquad 
X\defeq 
\begin{pmatrix}
    0 & 1\\
    1 & 0
\end{pmatrix} 
\qquad 
Y\defeq 
\begin{pmatrix}
    0 & -i\\
    i & 0
\end{pmatrix} 
\qquad 
Z\defeq 
\begin{pmatrix}
    1 & 0\\
    0 & -1
\end{pmatrix} $$

Below are some basic properties of the Pauli gates.

\begin{proposition}
    X, Y and Z have eigenvalues +1 and -1: 
    \begin{itemize}        
        \item $X \ket + = \ket +$; $X \ket - = - \ket -$;
        \item $Z \ket 0 = \ket 0$; $Z \ket 1 = - \ket 1$;
        \item $Y \frac{1}{\sqrt{2}}\left(\ket{0}+i\ket{1}\right) = \frac{1}{\sqrt{2}}\left(\ket{0}+i\ket{1}\right)$; $Y \frac{1}{\sqrt{2}}\left(\ket{0}-i\ket{1}\right) = -\frac{1}{\sqrt{2}}\left(\ket{0}-i\ket{1}\right)$.        
    \end{itemize}
\end{proposition}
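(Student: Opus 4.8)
The plan is to verify the six eigenvector identities by direct coordinate computation, and then note that these exhaust the spectrum. As a preliminary I would record two facts, each obtained from a single $2\times 2$ product: every Pauli matrix $P\in\{X,Y,Z\}$ is Hermitian ($P^\dagger=P$) and satisfies $P^2=I$. From $P^2=I$ it follows that any eigenvalue $\lambda$ of $P$ obeys $\lambda^2=1$, hence $\lambda\in\{+1,-1\}$; Hermiticity additionally guarantees the eigenvalues are real, which is in any case evident once one sees that they turn out to be $\pm 1$.

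Next I would carry out the computations using $\ket 0=\left(\begin{smallmatrix}1\\0\end{smallmatrix}\right)$ and $\ket 1=\left(\begin{smallmatrix}0\\1\end{smallmatrix}\right)$. For $Z=\operatorname{diag}(1,-1)$ the identities $Z\ket 0=\ket 0$ and $Z\ket 1=-\ket 1$ are immediate. For $X=\left(\begin{smallmatrix}0&1\\1&0\end{smallmatrix}\right)$, which exchanges the two coordinates, $X\ket{+}=\tfrac1{\sqrt2}X\left(\begin{smallmatrix}1\\1\end{smallmatrix}\right)=\tfrac1{\sqrt2}\left(\begin{smallmatrix}1\\1\end{smallmatrix}\right)=\ket{+}$ and $X\ket{-}=\tfrac1{\sqrt2}X\left(\begin{smallmatrix}1\\-1\end{smallmatrix}\right)=\tfrac1{\sqrt2}\left(\begin{smallmatrix}-1\\1\end{smallmatrix}\right)=-\ket{-}$. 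For $Y=\left(\begin{smallmatrix}0&-i\\i&0\end{smallmatrix}\right)$, applying it to $\tfrac1{\sqrt2}\left(\begin{smallmatrix}1\\ \pm i\end{smallmatrix}\right)$ gives $\tfrac1{\sqrt2}\left(\begin{smallmatrix}(-i)(\pm i)\\ i\end{smallmatrix}\right)=\tfrac1{\sqrt2}\left(\begin{smallmatrix}\pm 1\\ i\end{smallmatrix}\right)=\pm\tfrac1{\sqrt2}\left(\begin{smallmatrix}1\\ \pm i\end{smallmatrix}\right)$ (using $i^2=-1$); hence $Y\,\tfrac1{\sqrt2}(\ket 0+i\ket 1)=\tfrac1{\sqrt2}(\ket 0+i\ket 1)$ while $Y\,\tfrac1{\sqrt2}(\ket 0-i\ket 1)=-\tfrac1{\sqrt2}(\ket 0-i\ket 1)$, so the latter vector carries eigenvalue $-1$.

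Finally, in each of the three cases the two exhibited vectors are linearly independent (their second coordinates differ), so they form a basis of $\mathbb C^2$, and consequently $+1$ and $-1$ are the only eigenvalues of $X$, $Y$ and $Z$, which completes the proof. I expect no genuine obstacle here; the only step requiring any attention is the sign bookkeeping for $Y$, where one must track $i^2=-1$ carefully to land on the correct eigenvalue — in particular the minus sign for the $\ket 0-i\ket 1$ eigenvector, matching the eigenvalue $-1$ announced at the head of the proposition.
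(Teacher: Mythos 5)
Your proof is correct and complete; the paper itself states this proposition without proof, so there is no argument to compare against — a direct coordinate verification like yours is exactly what is implicit. One genuinely useful byproduct of your computation deserves emphasis: your sign bookkeeping for $Y$ shows that the statement as printed contains a typo. With $Y=\left(\begin{smallmatrix}0&-i\\i&0\end{smallmatrix}\right)$ one gets $Y\,\tfrac1{\sqrt2}\bigl(\ket 0-i\ket 1\bigr)=-\tfrac1{\sqrt2}\bigl(\ket 0-i\ket 1\bigr)$, so the last displayed identity in the proposition is missing a minus sign on the right-hand side; your version is the correct one, consistent with the claim that $Y$ has eigenvalues $+1$ and $-1$. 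Your closing remark that the two exhibited eigenvectors in each case span $\mathbb C^2$, so that $\pm 1$ exhaust the spectrum, is a small but worthwhile addition that the bare statement does not make explicit.
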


\begin{proposition}
    The Pauli gates form a basis for $2\times 2$ matrices with coefficient in $\mathbb C$, that is, for any $2\times 2$ matrix M, there exist unique complex numbers $a,b,c,s$ such that $M = a I + b X + c Y + d Z$.
\end{proposition}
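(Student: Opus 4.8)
The plan is to use dimension counting together with a direct verification of linear independence. The space $M_2(\mathbb C)$ of $2\times 2$ complex matrices is a $4$-dimensional vector space over $\mathbb C$ (a basis being the elementary matrices $E_{11}, E_{12}, E_{21}, E_{22}$), and we are presenting exactly four candidate vectors $I, X, Y, Z$. Hence it suffices to show that $\{I, X, Y, Z\}$ is linearly independent: a linearly independent family of size $4$ in a $4$-dimensional space is automatically a basis, and a basis gives existence of the coefficients, while linear independence gives their uniqueness. So the entire statement reduces to the single claim that $aI + bX + cY + dZ = 0$ forces $a = b = c = d = 0$.

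To prove that claim, I would simply compute the entries of $aI+bX+cY+dZ$. Using the explicit matrices given above, this matrix equals
$$\begin{pmatrix} a + d & b - ic \\ b + ic & a - d \end{pmatrix}.$$
Setting it equal to the zero matrix yields the linear system $a+d = 0$, $a-d = 0$, $b-ic = 0$, $b+ic = 0$. Adding and subtracting the first two equations gives $a = d = 0$; adding and subtracting the last two gives $b = ic$ and $b = -ic$, hence $b = c = 0$. This establishes linear independence, and therefore the existence-and-uniqueness statement.

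As an alternative (and a way to make the uniqueness fully explicit), I would note that the Pauli matrices are orthogonal with respect to the trace inner product $\langle P, Q\rangle = Tr(P^\dagger Q)$: one checks $Tr(I) = 2$, $Tr(X) = Tr(Y) = Tr(Z) = 0$, and $X^2 = Y^2 = Z^2 = I$ together with $XY = iZ$, $YZ = iX$, $ZX = iY$, so $Tr(P^\dagger Q) = 2\,\delta_{P,Q}$ for $P, Q \in \{I,X,Y,Z\}$. Orthogonality immediately implies linear independence, and pairing $M = aI+bX+cY+dZ$ against each Pauli matrix recovers the coefficients as $a = \tfrac12 Tr(M)$, $b = \tfrac12 Tr(XM)$, $c = \tfrac12 Tr(YM)$, $d = \tfrac12 Tr(ZM)$. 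There is no real obstacle here; the only thing to be careful about is keeping the entrywise computation (or, in the alternative route, the Pauli multiplication table) correct, since everything else is a formal consequence of dimension counting.
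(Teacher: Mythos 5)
Your proof is correct and complete. The paper states this proposition without proof (it is treated as a standard fact), so there is nothing to compare against; both of your routes — direct entrywise verification of linear independence plus dimension counting, and trace orthogonality with the explicit coefficient formulas $a = \tfrac12 Tr(M)$, etc. — are standard and fully justify the claim.
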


The Pauli gates, augmented with a constant factor that is a multiple of $-i$, form a multiplicative group, noted $\mathcal P = \{\pm 1, \pm i\} \times \{I,X,Y,Z\}$. Pauli gates either commute or anticommute.

\begin{proposition}
    ~
    \begin{itemize}
        \item $X^2 = Z^2 = Y^2 = I$;
        \item $ XY = -YX = i Z$;
        \item $ XZ = -ZX = -i Y$;
        \item $ YZ = -ZY = iX$.
    \end{itemize}
\end{proposition}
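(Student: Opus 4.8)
The plan is to verify all the stated identities by direct matrix multiplication, starting from the explicit $2\times 2$ matrices just given for $I$, $X$, $Y$, $Z$. There is no clever idea needed here: the entire content of the proposition is elementary computation, so the proof amounts to laying out a handful of products cleanly and keeping careful track of the factors of $i$.

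First I would dispatch the three involution identities $X^2 = Y^2 = Z^2 = I$. Each is a single $2\times 2$ product: $X^2 = \left(\begin{smallmatrix} 0 & 1 \\ 1 & 0\end{smallmatrix}\right)^2 = \left(\begin{smallmatrix} 1 & 0 \\ 0 & 1\end{smallmatrix}\right)$, likewise $Z^2 = I$, and $Y^2 = \left(\begin{smallmatrix} 0 & -i \\ i & 0\end{smallmatrix}\right)^2 = \left(\begin{smallmatrix} 1 & 0 \\ 0 & 1\end{smallmatrix}\right)$ using $(-i)(i)=1$. Next I would compute the three ``forward'' products directly, finding $XY = \left(\begin{smallmatrix} i & 0 \\ 0 & -i\end{smallmatrix}\right) = iZ$, then $XZ = \left(\begin{smallmatrix} 0 & -1 \\ 1 & 0\end{smallmatrix}\right) = -iY$, and $YZ = \left(\begin{smallmatrix} 0 & i \\ i & 0\end{smallmatrix}\right) = iX$, which are exactly the asserted values.

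It remains to obtain the reversed products $YX$, $ZX$, $ZY$ together with the anticommutation relations. Rather than recomputing these three products from scratch, I would observe that $X$, $Y$, $Z$ are Hermitian, so for any two of them $P,Q$ one has $(PQ)^\dagger = Q^\dagger P^\dagger = QP$. Since $(iZ)^\dagger = -iZ$, $(-iY)^\dagger = iY$, and $(iX)^\dagger = -iX$, taking adjoints of the three forward identities immediately gives $YX = -iZ$, $ZX = iY$, $ZY = -iX$, hence $PQ = -QP$ in each case. (Equivalently, all six products can simply be written out; either route is only a few lines, and one could also cross-check via the eigenvalue and basis properties stated earlier, though that is more roundabout.) The main ``obstacle'' is not a mathematical one at all — it is merely the sign bookkeeping with the $i$'s, which is why the cleanest write-up records each product explicitly and then reads off the relations.
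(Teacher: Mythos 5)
Your proof is correct. The paper states this proposition without any proof, treating it as an elementary fact about the Pauli matrices, and your direct matrix computations (together with the Hermitian-adjoint shortcut $(PQ)^\dagger = QP$ to read off the reversed products and the anticommutation relations) are exactly the verification the paper implicitly leaves to the reader.
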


While each possible quantum state of a single qubit is a point on the so-called Bloch sphere, every single-qubit unitary gate can be seen as a rotation in the Bloch sphere, see Figure \ref{fig:bloch}. 

\begin{figure}[H]
\centering
\scalebox{1.3}{
\begin{tikzpicture}

  % Define radius
  \def\r{3}

  % Bloch vector
  \draw[color = blue] (0, 0) node[circle, fill, inner sep=1, color=black] (orig) {} -- (-\r/3, \r/2) node[circle, fill, inner sep=0.7, label=above:$\ket \psi$] (a) {};

  % Sphere
  \draw (orig) circle (\r);
  \draw[dashed] (orig) ellipse (\r{} and \r/3);

  % Axes
  \draw[-latex, thick] (orig) -- node[left] {X} ++(-\r/5, -\r/3) node[below] (x1) {$\ket +$};
  \draw[-latex, color=black!40] (orig) -- ++(\r/5, \r/3) node[above right] (x1) {$\ket -$};
  \draw[-latex, thick] (orig) -- node[below] {Y} ++(\r, 0) node[right] (x2) {};
  \draw[-latex, color=black!40] (orig) -- ++(-\r, 0) node[right] (x2) {};
  \draw[-latex, thick] (orig) -- node[right] {Z} ++(0, \r) node[above] (x3) {$\ket 0$};
  \draw[-latex, color=black!40] (orig) -- ++(0, -\r) node[below] (x3) {$\ket 1$};

\end{tikzpicture}
}
\caption{The Bloch sphere. A single-qubit quantum state $\ket \psi$ corresponds to a point on the surface of the sphere. More precisely, a point corresponds to quantum states up to an irrelevant global phase, that is, $\ket \psi$ and $e^{i\phi}\ket \psi$ share the same point on the sphere. The axis corresponds to the Pauli gates X, Y and Z, and the endpoints corresponds to their respective eigenvectors.}
\label{fig:bloch}
\end{figure}
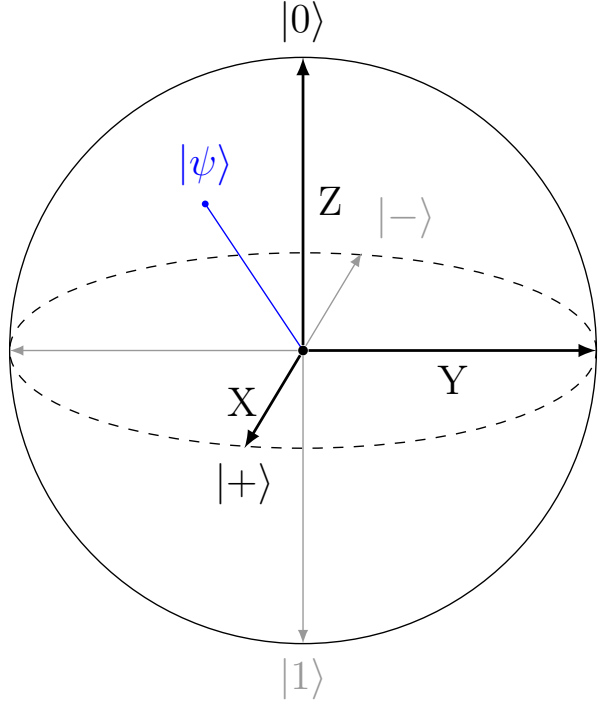

The X (resp. Y, Z) gate corresponds to a rotation of angle $\pi$ around the X (resp. Y, Z) axis of the Bloch sphere. This can be generalized by allowing rotations of arbitrary angles (in this thesis we will not need Y-rotations so they will not be introduced here). We decide to write X-rotations (resp. Z-rotations) as $X(\theta)$  (resp. $Z(\theta)$) for readability, in the literature the rotation can also be written $R_X(\theta)$ or $X^\theta$ (resp. $R_Z(\theta)$ or $Z^\theta$). The Z-rotation is sometimes called the phase shift gate. As their names imply, an X-rotation and Z-rotation of angle $\theta$ implement a rotation of angle $\theta$ around the X-axis and Z-axis, respectively, in the Bloch sphere. Below are the formal definitions of the X- and Z-rotations (for completeness we include multiple equivalent definitions).

$$ Z(\theta) \defeq 
\begin{pmatrix}
    1 & 0\\
    0 & e^{i\theta}
\end{pmatrix} 
= e^{i\frac \theta 2}\left(\cos\left(\frac \theta 2\right)I-i\sin\left(\frac \theta 2\right)Z\right) = 
\frac{1}{2} \left((1+e^{i\theta})I+(1-e^{i\theta})Z\right)
$$

\begin{equation*}
\begin{split}
X(\theta) %\defeq HZ(\alpha)H
& = e^{i\frac \theta 2} \begin{pmatrix}
    \cos\left(\frac \theta 2\right) & -i\sin\left(\frac \theta 2\right)\\
    -i\sin\left(\frac \theta 2\right) & \cos\left(\frac \theta 2\right)
\end{pmatrix} 
= e^{i\frac \theta 2}\left(\cos\left(\frac \theta 2\right)I-i\sin\left(\frac \theta 2\right)X\right)\\
& = \frac{1}{2} \begin{pmatrix}
    1 + e^{i\theta} & 1 - e^{i\theta}\\
    1 - e^{i\theta} & 1 + e^{i\theta}
\end{pmatrix}
= \frac{1}{2} \left((1+e^{i\theta})I+(1-e^{i\theta})X\right)\\
\end{split}
\end{equation*}

In particular, $X(0) = Z(0) = I$, $X(\pi)=X$ and $Z(\pi)=Z$. $Z(\pi/2)$ is often referred as $S$ in the literature. $Z(\pi/4)$ is often referred as $T$ in the literature. Some useful inequalities about rotations are the following: $Z X(\theta) = e^{i\theta} X(-\theta) Z$ and $X Z(\theta) = e^{i\theta} Z(-\theta) X$. The X-rotations and Z-rotations are universal for single-qubit unitary gates, in the sense that any single-qubit unitary gate can be written as a (finite) product of X- and Z-rotation:

\begin{proposition}[Euler's decomposition] \label{prop:euler}
    If $U$ is a single-qubit unitary gate, there exist 4 real number $\alpha, \beta, \gamma, \delta$ such that $U = e^{i\alpha} Z(\beta) X(\gamma) Z(\delta)$.
\end{proposition}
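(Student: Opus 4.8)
The plan is to prove \cref{prop:euler} by a direct, explicit matching of matrix entries, using only the definitions of $X(\cdot)$ and $Z(\cdot)$ recalled above. First I would multiply out the three $2\times 2$ matrices to obtain a closed form for the target operator:
$$
e^{i\alpha}\, Z(\beta)\, X(\gamma)\, Z(\delta) \;=\; e^{i(\alpha+\gamma/2)}
\begin{pmatrix}
\cos(\gamma/2) & -i\,e^{i\delta}\sin(\gamma/2)\\
-i\,e^{i\beta}\sin(\gamma/2) & e^{i(\beta+\delta)}\cos(\gamma/2)
\end{pmatrix}.
$$
Writing $U = \left(\begin{smallmatrix} u_{11} & u_{12}\\ u_{21} & u_{22}\end{smallmatrix}\right)$, the statement reduces to: for any unitary $U$, choose real numbers $\alpha,\beta,\gamma,\delta$ making all four entries of the right-hand side equal to those of $U$.

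Since $U$ is unitary, its rows and columns have unit norm, so $|u_{11}|\ls 1$ and moreover $|u_{12}| = |u_{21}| = \sqrt{1-|u_{11}|^2}$ and $|u_{22}| = |u_{11}|$. I would then pick $\gamma \in [0,\pi]$ with $\cos(\gamma/2) = |u_{11}|$, so that automatically $\sin(\gamma/2) = |u_{12}| = |u_{21}|$. Assuming first that $\gamma \notin \{0,\pi\}$ (so both $\cos(\gamma/2)$ and $\sin(\gamma/2)$ are nonzero): comparing the $(1,1)$ entries forces $e^{i(\alpha+\gamma/2)} = u_{11}/|u_{11}|$, which fixes $\alpha$ modulo $2\pi$; comparing the $(2,1)$ entries then forces $e^{i\beta}$ to equal an explicit quotient of modulus $|u_{21}|/\sin(\gamma/2) = 1$, which fixes $\beta$; and symmetrically the $(1,2)$ entry fixes $\delta$.

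It then remains to check that the $(2,2)$ entry comes for free. Let $W \defeq e^{i\alpha}Z(\beta)X(\gamma)Z(\delta)$ with the parameters just chosen; $W$ and $U$ agree in the three entries $(1,1),(1,2),(2,1)$ and are both unitary, so writing out the orthogonality of their two rows ($u_{11}\overline{u_{21}} + u_{12}\overline{u_{22}} = 0$, and the same identity for $W$) and subtracting gives $u_{12}\big(\overline{u_{22}} - \overline{w_{22}}\big) = 0$; since $u_{12}\neq 0$ this yields $w_{22} = u_{22}$, hence $W = U$. The degenerate cases are handled separately and are easy: if $\gamma = 0$ then $U$ is diagonal and $U = e^{i\alpha}Z(\beta)$ works (with $\gamma=\delta=0$), and if $\gamma = \pi$ then $U$ is anti-diagonal and one matches its two nonzero entries against those of $e^{i\alpha}Z(\beta)X(\pi)Z(\delta)$, solving for $\alpha+\delta$ and $\alpha+\beta$.

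The whole argument is elementary; the step I expect to need the most care is the verification that the fourth matrix entry is determined by the other three together with unitarity (plus the bookkeeping of which phases are pinned down only modulo $2\pi$), together with remembering to treat the boundary angles $\gamma \in \{0,\pi\}$ separately, where $u_{12}$ or $u_{11}$ vanishes. A more conceptual alternative, matching the Bloch-sphere picture above, is to use that single-qubit unitaries modulo a global phase are exactly the rotations of the Bloch sphere, i.e. $SO(3)$, and that every rotation admits a $z$-$x$-$z$ Euler-angle factorisation; lifting such a factorisation back to $2\times 2$ matrices reintroduces the scalar $e^{i\alpha}$ and gives the claimed decomposition.
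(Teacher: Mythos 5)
Your proof is correct and complete: the explicit product $e^{i\alpha}Z(\beta)X(\gamma)Z(\delta)$ is computed correctly from the paper's definitions of $X(\theta)$ and $Z(\theta)$, the choice of $\gamma$ via $\cos(\gamma/2)=|u_{11}|$ is consistent with unitarity, and the row-orthogonality argument for the $(2,2)$ entry together with the separate treatment of $\gamma\in\{0,\pi\}$ closes all cases. The paper states this proposition as standard background and gives no proof, so there is nothing to compare against; your elementary entry-matching argument is a valid way to supply one.
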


The Hadamard gate $H$ maps X-rotations to Z-rotations (and vice versa) by conjugation: $H X(\theta) H = Z(\theta)$ and $H Z(\theta) H = X(\theta)$. Following \cref{prop:euler}, $H$ can be decomposed into X- and Z-rotations of angle $\pi/2$.

\begin{proposition}
    \begin{equation*}
    \begin{split}
        H & = e^{-i\pi/4} X(\pi/2) Z(\pi/2) X(\pi/2) = e^{-i\pi/4} Z(\pi/2) X(\pi/2) Z(\pi/2)\\
        & = e^{i\pi/4} X(-\pi/2) Z(-\pi/2) X(-\pi/2) = e^{i\pi/4} Z(-\pi/2) X(-\pi/2) Z(-\pi/2)
    \end{split}
    \end{equation*}    
\end{proposition}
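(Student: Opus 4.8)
The plan is to verify the first identity by a direct $2\times 2$ matrix computation, and then obtain the remaining three essentially for free by two symmetry arguments.

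First I would write out the two matrices involved from the definitions given above: $Z(\pi/2) = \left(\begin{smallmatrix}1 & 0\\ 0 & i\end{smallmatrix}\right)$ and $X(\pi/2) = \frac12\left(\begin{smallmatrix}1+i & 1-i\\ 1-i & 1+i\end{smallmatrix}\right)$ (take $\theta = \pi/2$, so $e^{i\theta} = i$). Then I would compute the product $X(\pi/2)\,Z(\pi/2)\,X(\pi/2)$ in two multiplications. The only arithmetic facts needed are $(1+i)^2 = 2i$, $(1-i)^2 = -2i$, $(1+i)(1-i) = 2$; after collecting terms the product collapses to $\frac{1+i}{2}\left(\begin{smallmatrix}1 & 1\\ 1 & -1\end{smallmatrix}\right)$. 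Recognizing $\frac{1+i}{2} = \frac{1}{\sqrt2}\,e^{i\pi/4}$, this is exactly $e^{i\pi/4} H$, which rearranges to the first claimed identity $H = e^{-i\pi/4}\,X(\pi/2)\,Z(\pi/2)\,X(\pi/2)$.

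Next I would derive $H = e^{-i\pi/4}\,Z(\pi/2)\,X(\pi/2)\,Z(\pi/2)$ by conjugating with $H$. Using the relation $H X(\theta) H = Z(\theta)$ recalled just before the proposition together with $H^2 = I$, one gets $Z(\pi/2)X(\pi/2)Z(\pi/2) = H\,\big(X(\pi/2)Z(\pi/2)X(\pi/2)\big)\,H = H\,(e^{i\pi/4}H)\,H = e^{i\pi/4}H$, which is the second identity. Finally, the two identities with angle $-\pi/2$ follow by taking conjugate transposes of the first two: $H^\dagger = H$, while directly from the definitions $X(\theta)^\dagger = X(-\theta)$ and $Z(\theta)^\dagger = Z(-\theta)$ (both $I,X,Z$ being Hermitian), so applying $(\cdot)^\dagger$ with $(ABC)^\dagger = C^\dagger B^\dagger A^\dagger$ and $\overline{e^{-i\pi/4}} = e^{i\pi/4}$ yields $H = e^{i\pi/4}X(-\pi/2)Z(-\pi/2)X(-\pi/2) = e^{i\pi/4}Z(-\pi/2)X(-\pi/2)Z(-\pi/2)$.

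There is no genuine conceptual obstacle: the statement is a routine verification. The only point demanding a little care is bookkeeping of the global phase — in particular seeing that the scalar $\frac{1+i}{2}$ falling out of the matrix product is precisely $\frac{1}{\sqrt2}e^{i\pi/4}$, so that the prefactor is $e^{-i\pi/4}$ and not $e^{i\pi/4}$ or $\pm i$. (One could alternatively invoke Euler's decomposition, \cref{prop:euler}, to know in advance that $H$ admits \emph{some} decomposition $e^{i\alpha}Z(\beta)X(\gamma)Z(\delta)$ and then only pin down the four parameters; but since the specific values $\pm\pi/2$ are asserted, the direct computation above is the cleanest route.)
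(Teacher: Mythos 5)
Your proof is correct. The paper states this proposition without proof (it is a routine verification), so there is no authorial argument to compare against; your direct computation of $X(\pi/2)Z(\pi/2)X(\pi/2)=\tfrac{1+i}{2}\left(\begin{smallmatrix}1&1\\1&-1\end{smallmatrix}\right)=e^{i\pi/4}H$ checks out, and the derivations of the other three identities by $H$-conjugation and by taking adjoints are both valid.
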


\subsection{Single-qubit Clifford gates}

\label{subsec:clifford}

The single-qubit Clifford group $\mathcal C$ is the group of unitaries that are the product of a sequence of Hadamard gates $H$ and Z-rotations $Z(\pi/2)$: we write $\mathcal C = \langle H, Z(\pi/2) \rangle$. $\mathcal C$ is often also defined as the group of unitaries that normalize the single-qubit Pauli group $\mathcal P$, i.e. $C \in \mathcal C$ if $\forall P \in \mathcal P$, $C P C^\dagger \in \mathcal P$. These two definitions are not exactly equivalent: under the first definition, $\mathcal C$  is finite (it contains exactly 192 elements), under the second definition, $\mathcal C$  is infinite. The two definitions are however equivalent up to global phase. In the literature, the Clifford group is sometimes defined up to global phase.

Up to global phase, there are exactly 24 single-qubit Clifford gates. Up to global phase and Pauli, there are exactly 6 single-qubit Clifford gates. We provide an exhaustive look-up table for single-qubit Clifford gates in Figure \ref{fig:clifford}. 

\begin{figure}[h!]
\bgroup
\setlength{\tabcolsep}{30pt}
\def\arraystretch{1.5}
\begin{center}
\begin{tabular}{|c|c|c|c|}
    \hline
    Decomposition of $C$    & $C ~ X ~ C^\dagger$ & $C ~ Y ~ C^\dagger$  & $C ~ Z ~ C^\dagger$    \\
    \hline
    $I$  & $X$  & $Y$  & $Z$  \\
    $X$  & $X$  & $-Y$  & $-Z$ \\
    $Y$ & $-X$  & $Y$  & $-Z$\\
    $Z$ & $-X$  & $-Y$  & $Z$ \\
    \hline
    $X(\pi/2)$  &  $X$ & $Z$  &  $-Y$ \\
    $X(-\pi/2)$  &  $X$ & $-Z$  &  $Y$ \\
    $Z X(\pi/2)$  &  $-X$ & $Z$  &  $Y$ \\
    $Z X(-\pi/2)$  &  $-X$ & $-Z$  &  $-Y$ \\
    \hline
    $Z(\pi/2)$  &  $Y$ & $-X$  &  $Z$ \\
    $Z(-\pi/2)$  &  $-Y$ & $X$  &  $Z$ \\
    $X Z(\pi/2)$  &  $-Y$ & $-X$  &  $-Z$ \\
    $X Z(-\pi/2)$  &  $Y$ & $X$  &  $-Z$ \\ 
    \hline
    $H$  &  $Z$ & $-Y$  &  $X$ \\
    $X H$  &  $-Z$ & $Y$  &  $X$ \\
    $Y H$  &  $-Z$ & $-Y$  &  $-X$ \\
    $Z H$  &  $Z$ & $Y$  &  $-X$ \\   
    \hline
    $H X(\pi/2)$  &  $Z$ & $X$  &  $Y$ \\ 
    $H X(- \pi/2)$  &  $Z$ & $-X$  &  $-Y$ \\
    $X H X(\pi/2)$  &  $-Z$ & $X$  &  $-Y$ \\
    $Y H X(\pi/2)$  &  $-Z$ & $-X$  &  $Y$ \\       
    \hline
    $H Z(\pi/2)$  &  $-Y$ & $-Z$  &  $X$ \\
    $H Z(-\pi/2)$  &  $Y$ & $Z$  &  $X$ \\
    $Y H Z(\pi/2)$  &  $-Y$ & $Z$  &  $-X$ \\
    $Z H Z(\pi/2)$  &  $Y$ & $-Z$  &  $-X$ \\ 
    \hline
\end{tabular}
\end{center}
\egroup
\caption{Each of the 24 single-qubit Clifford gates (up to global phase) and the action of their conjugation over the Pauli gates. 
}
\label{fig:clifford}
\end{figure}

\newpage

\section{Graph states}

Graph states are nothing but a family of quantum states. As their name suggests, graph states are in one-to-one correspondence with graphs. That is, a graph represents a graph state, and a graph state is represented by exactly one graph. Informally, vertices represent qubits and edges represent entanglement. In particular, in this thesis we will be using the notation $V$ for both the set of vertices of a graph, 
and for the set of qubits belonging to a quantum state, e.g. a graph state. Below we define graph states. 
Note that a great reference for learning about graph states is \cite{Hein06}, which contains the idea of many proofs presented in the following sections.

\subsection{Definition}

Starting from a graph $G=(V,E)$, there exists an easy procedure to construct the corresponding graph state. First, for each vertex $u \in V$, create a qubit in the state $\ket +$. Recall that a qubit can be prepared to be in the state $\ket +$, for instance, by preparing it in the basis state $\ket 0$ then applying a Hadamard $H$ gate on it, as $H \ket 0 = \ket +$. Then, for each edge i.e. each pair $u,v \in V$ of vertices such that $u \sim_G v$, apply a $CZ$ gate on the corresponding qubits. The order over which the $CZ$ gates are applied does not matter, as the $CZ$ gates commute and are symmetric (i.e. $CZ_{uv}=CZ_{vu}$). Below is a formal definition.

\begin{definition}\label{def:graph_state}
    Let $G = (V,E)$ be a graph of order $n$. The corresponding graph state $\ket G$ is the $n$-qubit state: $$\ket G = \left(\prod_{(u,v) \in E, u < v} CZ_{uv}\right) \ket{+}_V$$
\end{definition}

The condition $u < v$ in the product ensures that a $CZ$ gate is applied at most once per pair of vertices. The construction of a graph state is illustrated in Figure \ref{fig:graph_state_construction}. 

\begin{figure}[h!]
\centering
\scalebox{1}{
\begin{tikzpicture}[scale = 0.45]
\begin{scope}[every node/.style={circle,minimum size=15pt,thick,draw, fill=colorvertices}]

    \node (U1) at (0,4) {1};
    \node (U2) at (3,8) {2};
    \node (U3) at (6,4) {3};
    
\end{scope}

\begin{scope}[every node/.style={circle,minimum size=15pt,thick,draw, fill=colorvertices}]

    \node (U1) at (13,4) {1};
    \node (U2) at (16,8) {2};
    \node (U3) at (19,4) {3};
    
\end{scope}
\begin{scope}[every edge/.style={draw=darkgray,very thick}]
              
    \path [-] (U1) edge node {} (U2);

\end{scope}

\begin{scope}[every node/.style={circle,minimum size=15pt,thick,draw, fill=colorvertices}]

    \node (U1) at (26,4) {1};
    \node (U2) at (29,8) {2};
    \node (U3) at (32,4) {3};
    
\end{scope}
\begin{scope}[every edge/.style={draw=darkgray,very thick}]
              
    \path [-] (U1) edge node {} (U2);
    \path [-] (U2) edge node {} (U3);

\end{scope}

\draw [-stealth, very thick](8.5,6) -- (10.5,6);
\draw [-stealth, very thick](21.5,6) -- (23.5,6);

\end{tikzpicture}
}
\caption{Construction of a 3-qubit graph state. (Left) We start with a tensor product of 3 qubits in the state $\ket +$. That is, $\ket {G_1} = \ket{+++} = \frac{1}{2\sqrt 2}\left(\ket{000}+\ket{001}+\ket{010}+\ket{100}+\ket{011}+\ket{101}+\ket{110}+\ket{111}\right)$. (Middle) We create an edge between vertices 1 and 2 by applying a $CZ$ gate on the qubits 1 and 2, leading to the graph state $\ket {G_2} = CZ_{12} \ket{+++} = \frac{1}{2\sqrt 2}\left(\ket{000}+\ket{001}+\ket{010}+\ket{100}+\ket{011}+\ket{101}-\ket{110}-\ket{111}\right)$. (Right) We create an edge between vertices 2 and 3 by applying a $CZ$ gate on the qubits 2 and 3, leading to the graph state $\ket {G_3} = CZ_{23} CZ_{12} \ket{+++} = \frac{1}{2\sqrt 2}\left(\ket{000}+\ket{001}+\ket{010}+\ket{100}-\ket{011}+\ket{101}-\ket{110}+\ket{111}\right)$.}
\label{fig:graph_state_construction}
\end{figure}
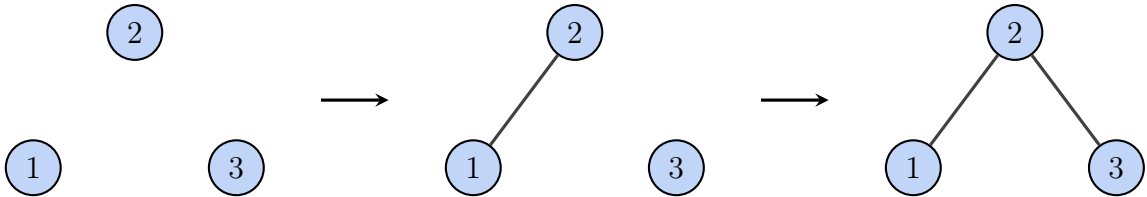

\begin{proposition} \label{prop:boolean_definition}
    Let $G=(V,E)$ be a graph of order $n$. The corresponding graph state $\ket G$ is the $n$-qubit state: $$\ket G = \frac 1{\sqrt {2^n}}\sum_{x\in \{0,1\}^n}(-1)^{|G[x]|}\ket x$$ where $|G[x]|$ denotes the number of edges in the subgraph of $G$ induced by $\{u_i~| ~x_i=1\}\subseteq V$.
\end{proposition}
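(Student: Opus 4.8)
The plan is to start from \cref{def:graph_state} and simply track the action of each $CZ$ gate on computational basis states. First I would expand $\ket{+}_V = \frac{1}{\sqrt{2^n}}\sum_{x \in \{0,1\}^n} \ket x$, which just applies $\ket + = \frac{1}{\sqrt 2}(\ket 0 + \ket 1)$ on each of the $n$ qubits and distributes the tensor product over the sum.

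Next, the key observation is that $CZ$ is diagonal in the computational basis: from the matrix given for $CZ$ in the preliminaries, $CZ_{uv}\ket x = (-1)^{x_u x_v}\ket x$ for every $x \in \{0,1\}^n$, where $x_u, x_v$ denote the coordinates of $x$ indexed by $u$ and $v$ (the sign is $-1$ exactly when $x_u = x_v = 1$, matching the unique $-1$ entry of $CZ$ in front of $\ket{11}$). Since all the $CZ$ gates act diagonally they commute, which also re-confirms that the product in \cref{def:graph_state} is well-defined regardless of the order. Applying the whole product to a fixed $\ket x$ therefore multiplies it by $(-1)^{\sum_{(u,v)\in E,\, u<v} x_u x_v}$.

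Then I would identify the exponent with $|G[x]|$: a term $x_u x_v$ in the sum equals $1$ precisely when both $u$ and $v$ lie in the set $S_x \defeq \{u_i \mid x_i = 1\}\se V$, and the restriction $u<v$ ensures each edge of $E$ is counted exactly once; hence $\sum_{(u,v)\in E,\, u<v} x_u x_v$ equals the number of edges of $G$ with both endpoints in $S_x$, which is exactly $|G[x]|$ by the definition of the induced subgraph $G[x]$. Combining this with the expansion of $\ket{+}_V$ and using linearity of the product of $CZ$ gates yields $\ket G = \frac{1}{\sqrt{2^n}}\sum_{x\in\{0,1\}^n}(-1)^{|G[x]|}\ket x$, as claimed.

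I do not anticipate a genuine obstacle: the statement is essentially a direct computation. The only points requiring a little care are confirming the diagonal action of $CZ_{uv}$ on an $n$-qubit basis state (rather than just reading off the $2\times 2$ block) and checking that the $u<v$ bookkeeping in the product matches exactly the edge count of $G[x]$. An alternative route would be induction on $|E|$, adjoining one edge at a time and applying $CZ_{uv}\ket x = (-1)^{x_u x_v}\ket x$ at each step together with \cref{prop:boolean_definition} for the graph with that edge removed, but the direct expansion is shorter and more transparent.
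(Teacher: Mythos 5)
Your proposal is correct and follows exactly the paper's argument: expand $\ket{+}_V$ over the computational basis, use that $CZ_{uv}$ acts diagonally as $(-1)^{x_u x_v}$, and identify the accumulated sign with the edge count of the induced subgraph. The paper gives this as a one-line remark after the proposition; your version just fills in the same steps in more detail.
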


The proof follows directly from \cref{def:graph_state}, as $\ket{+}_V = \frac 1{\sqrt {2^n}}\sum_{x\in \{0,1\}^n}\ket x$, and $CZ_{uv}$ toggles the sign before each basis state $\ket x$ such that $x_u = x_v = 1$.

\subsection{Graph states are stabilizer states}

An $n$-qubit stabilizer state is the unique fixpoint (i.e. the unique +1 eigenvector), up to global phase, of $n$ independent commuting Pauli operators of the form $\pm \bigotimes_{u\in V} P_u$ where $P_u \in \{I,X,Y,Z\}$. Put differently, an $n$-qubit stabilizer states is the unique fixpoint of $2^n$ distinct commuting Pauli operators, up to global phase. Note that the stabilizers of a stabilizer states (i.e. the Pauli operators which the stabilizer state is a fixpoint of) cannot contain $-I$, because it has no +1 eigenvectors. We say that a Pauli operator $P$ \emph{stabilizes} a state $\ket \psi$ when $P \ket \psi = \ket \psi$, i.e. $\ket \psi$ is a fixpoint of $P$. Graph states are a subfamily of stabilizer states, as they are stabilized by some local Pauli operators:

\begin{proposition}[\cite{Hein06}] \label{prop:stabilizer_of_graph_state}
    A graph state $\ket G$ is the unique quantum state (up to global phase) stabilized by the $n$ Pauli operators $X_u Z_{N_G(u)}$ for every vertex $u \in V$.
\end{proposition}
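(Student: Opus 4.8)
The plan is to verify directly that the $n$ Pauli operators $K_u \defeq X_u Z_{N_G(u)}$ stabilize $\ket G$, to check that they pairwise commute and are independent, and then to invoke the characterisation of stabilizer states recalled above to get uniqueness up to global phase.

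First I would check that $K_u \ket G = \ket G$ for every $u \in V$. Write $U = \prod_{(v,w)\in E,\, v<w} CZ_{vw}$, so that $\ket G = U\ket{+}_V$ by \cref{def:graph_state}. Since $X\ket{+} = \ket{+}$, each $X_v$ stabilizes $\ket{+}_V$; in particular $X_u\ket{+}_V = \ket{+}_V$, so it suffices to show $U^\dagger K_u U = X_u$. Because $CZ = CZ^\dagger$ commutes with every $Z$ gate, conjugation by $U$ fixes $Z_{N_G(u)}$. For the $X_u$ part, $CZ_{uw}\,X_u\,CZ_{uw} = X_u Z_w$ while any $CZ$ gate not acting on $u$ commutes with $X_u$; applying this once for each edge incident to $u$ gives $U^\dagger X_u U = X_u Z_{N_G(u)}$. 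Hence $U^\dagger K_u U = X_u Z_{N_G(u)} Z_{N_G(u)} = X_u$, so $K_u\ket G = U X_u\ket{+}_V = U\ket{+}_V = \ket G$. (Alternatively, one can compute $K_u\ket G$ in the computational basis via \cref{prop:boolean_definition}: $Z_{N_G(u)}$ contributes a sign $(-1)^{|N_G(u)\cap x|}$ to $\ket x$, $X_u$ flips the $u$-th bit, and splitting according to whether $x_u=0$ or $1$ — using that the edges incident to $u$ within the induced subgraph are exactly counted by $|N_G(u)\cap x|$ — shows all amplitudes are unchanged.)

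Next I would verify that the $K_u$ pairwise commute. For $u\neq v$, the operators $K_u$ and $K_v$ differ from $I$ only on $\{u\}\cup N_G(u)$ and $\{v\}\cup N_G(v)$; at any common position $w\notin\{u,v\}$ both act as $Z$ or $I$, so they commute there. An anticommutation can occur only at positions $u$ and $v$: if $u\sim_G v$ then $K_u$ has $X$ at $u$ and $Z$ at $v$ while $K_v$ has $Z$ at $u$ and $X$ at $v$, giving two anticommuting positions; if $u\not\sim_G v$ then each of $K_u,K_v$ acts as $I$ at the other's special vertex, giving none. In both cases the number of anticommuting positions is even, so $K_uK_v = K_vK_u$. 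Independence is then immediate: for any nonempty $D\se V$, the product $\prod_{u\in D}K_u$ has $X$-support exactly $D$, so at each $u\in D$ its $u$-th tensor factor is $X$ or $Y$ and in particular not $\pm I$; hence no nontrivial product of the $K_u$ equals $\pm I$, and the $n$ operators are independent. Since $\ket G$ is a common $+1$-eigenvector of the $K_u$ and of all their products, none of these products is $-I$, so by the characterisation of stabilizer states recalled above, $\ket G$ is the unique state, up to global phase, stabilized by all the $K_u$.

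I expect the only mildly delicate point to be the bookkeeping in the conjugation argument — tracking how each $CZ$ gate transforms $X_u$ and noting it commutes with the accumulated $Z$ factors — but this reduces to the single identity $CZ_{uw}\,X_u\,CZ_{uw} = X_u Z_w$ together with $[CZ,\, Z\otimes Z]=0$. Everything else is elementary Pauli-commutation counting, and the direct computational-basis computation is an equally short fallback if one prefers to avoid conjugation.
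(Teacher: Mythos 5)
Your proof is correct, and the existence half is essentially identical to the paper's: both conjugate $X_u$ through the $CZ$ circuit using $CZ_{uw}X_uCZ_{uw}=X_uZ_w$ and the fact that $CZ$ commutes with $Z$, reducing the claim to $X_u\ket{+}_V=\ket{+}_V$. Where you diverge is the uniqueness step. The paper's sketch exhibits an explicit orthonormal eigenbasis of $(\mathbb C^2)^V$, namely the states $Z_D\ket G$ for $D\se V$, each being a $\pm1$ joint eigenvector of the operators $X_uZ_{N_G(u)}$ with a distinct sign pattern, so the $+1$ joint eigenspace is one-dimensional. You instead verify that the $n$ operators pairwise commute and are independent (via the $X$-support argument), and then invoke the general characterisation of stabilizer states recalled in the background section. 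Both routes are sound; the paper's is slightly more self-contained and also hands you the eigenbasis for later use (it is exactly what underlies \cref{prop:stabilizer_definition} and the Pauli-invariance proposition), while yours is the more standard "count independent commuting generators" argument and requires no extra computation beyond the commutation and support checks you carried out correctly.
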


\begin{proof}[Proof Sketch]
    As $X_u CZ_{uv} = CZ_{uv} X_u Z_v$ and $Z_u CZ_{uv} = CZ_{uv} Z_u$, for any vertex $a \in V$, 

    \begin{align*}
        X_a \ket G  & = X_a \left(\prod_{(u,v) \in E, u < v} CZ_{uv}\right) \ket{+}_V = \left(\prod_{(u,v) \in E, u < v} CZ_{uv}\right) X_a Z_{N_G(a)} \ket{+}_V\\
        & = Z_{N_G(a)} \left(\prod_{(u,v) \in E, u < v} CZ_{uv}\right) \ket{+}_V = Z_{N_G(a)} \ket G
    \end{align*}

    The unicity follows from the fact that the set of common eigenstates of $X_u Z_{N_G(u)}$ is a basis for $(\mathbb C^2)^V$ (see \cite{Hein06}). More precisely, for a set of vertices $D \se V$, the quantum state $Z_D \ket G$ is the +1 eigenstate of every $X_u Z_{N_G(u)}$ when $u \notin D$, and the -1 eigenstate of every $X_u Z_{N_G(u)}$ when $u \in D$. 
\end{proof}

Below we characterize the complete set of stabilizers of a graph state.

\begin{proposition} \label{prop:stabilizer_definition}
    A graph state $\ket G$ is the unique quantum state (up to global phase) stabilized by the $2^n$ Pauli operators $(-1)^{|G[D]|}X_D Z_{Odd(D)}$ for any $D\subseteq V$, where $|G[D]|$ denotes the number of edges of the subgraph of $G$ induced by $D$.
\end{proposition}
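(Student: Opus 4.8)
The plan is to identify the full stabilizer group of $\ket G$ with the group generated by the $n$ operators from \cref{prop:stabilizer_of_graph_state}, and then compute each group element explicitly. By \cref{prop:stabilizer_of_graph_state}, $\ket G$ is stabilized by $S_u \defeq X_u Z_{N_G(u)}$ for every $u \in V$. These $n$ Pauli operators commute (they share the eigenvector $\ket G$, so anticommuting would force $\ket G = -\ket G$) and are independent (a nonempty product $\prod_{u\in D} S_u$ has $X$-part $X_D \ne \pm I$, so it cannot equal $\pm I$), hence they generate a group of exactly $2^n$ distinct Pauli operators indexed by the subsets $D \se V$ via $D \mapsto \prod_{u \in D} S_u$. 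So it suffices to show $\prod_{u \in D} S_u = (-1)^{|G[D]|} X_D Z_{Odd_G(D)}$ for every $D$.

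First I would dispose of the $X$- and $Z$-parts: writing $D = \{u_1 < \cdots < u_k\}$ and expanding $\prod_i X_{u_i} Z_{N_G(u_i)}$, the $X$'s sit on distinct qubits and multiply to $X_D$, while the $Z$'s collapse (using $Z^2 = I$) to $Z$ on the symmetric difference $\Delta_{u\in D} N_G(u) = Odd_G(D)$. Second — and this is the only delicate point — I would track the sign picked up when all the $X$'s are commuted to the front. Moving $X_{u_j}$ leftward past a block $Z_{N_G(u_i)}$ with $i<j$ costs a factor $-1$ exactly when $Z_{u_j}$ occurs in that block, i.e. when $u_j \in N_G(u_i)$, i.e. when $u_i \sim_G u_j$; all other commutations (past $X$'s on other qubits, past $Z$'s on other qubits) are free. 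Summing over pairs, the total sign is $(-1)^{|\{(i,j)\,:\,i<j,\ u_i \sim_G u_j\}|} = (-1)^{|G[D]|}$, the claimed prefactor. This is also where one checks the orientation of the convention is right: for $D = \{u\}$ a singleton the formula returns $S_u$ itself (since $|G[\{u\}]| = 0$ and $Odd_G(\{u\}) = N_G(u)$), and small cases like $D=\{1,2\}$ versus $D=\{1,3\}$ on a path confirm the $(-1)^{|G[D]|}$ sign.

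Finally, for uniqueness: the $2^n$ listed operators are pairwise distinct (because $X_D$ determines $D$) and, by the computation above, are precisely the elements of the stabilizer group of $\ket G$. Any state stabilized by all of them is in particular stabilized by the $n$ singleton operators $S_u$, hence equals $\ket G$ up to global phase by \cref{prop:stabilizer_of_graph_state}; conversely $\ket G$, being fixed by each generator, is fixed by every product. The main obstacle is entirely bookkeeping — getting the sign from the non-commuting reordering correct — rather than anything conceptual.
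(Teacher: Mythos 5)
Your proof is correct and is essentially the paper's argument in a different packaging: the paper peels one vertex off $D$ at a time by induction, picking up the sign $(-1)^{|N_G(u)\cap D\setminus\{u\}|}$ from the $X$–$Z$ anticommutation, which is exactly your pairwise edge-count done incrementally. Your version is slightly more explicit about the group structure (commutativity, independence, the $2^n$ distinct elements) and about uniqueness, but the key computation — the prefactor $(-1)^{|G[D]|}$ arising from commuting the $X$'s past the $Z$-blocks — is the same.
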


\begin{proof}
    We show the proposition by induction on the size of $D$. When $D=\emptyset$ the property is trivially true. Since X and Z anticommute, we have for any sets $A$ and $B$, $X_AZ_B = (-1)^{|A\cap B|}Z_BX_A$. As a consequence, for any non-empty $D$ and for some $u\in D$, 
    \begin{equation*}
        \begin{split}
            (-1)^{|G[D]|}X_D Z_{Odd(D)}\ket G & = (-1)^{|G[D\setminus \{u\}]|}(-1)^{|N_G(u)\cap D\setminus \{u\}|}X_u X_{D\setminus \{u\}}Z_{N_G(u)}Z_{Odd(D\setminus \{u\})}\ket G\\
            & =(-1)^{|G[D\setminus \{u\}]|}X_uZ_{N_G(u)}X_{D\setminus \{u\}}Z_{Odd(D\setminus \{u\})}\ket G = X_uZ_{N_G(u)}\ket G \!=\! \ket G \qedhere                 
        \end{split}
    \end{equation*}    
\end{proof}

Any stabilizer state is related to at least one graph state by a tensor product of single-qubit Clifford operators. Thus, studying the entanglement of graph states is enough to study the entanglement of stabilizer states.

\begin{proposition}[\cite{VandenNest04}] \label{prop:stabilizer_LC}
    Any stabilizer state is related to at least one graph state by local Clifford operations.
\end{proposition}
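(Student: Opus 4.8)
The plan is to reduce the statement to the well-known fact that the stabilizer formalism is preserved under Clifford conjugation, and that any stabilizer group can be brought into a ``graph normal form'' by local Clifford operations. Concretely, let $\ket\psi$ be an $n$-qubit stabilizer state with stabilizer group $\mathcal{S}$, an abelian subgroup of the Pauli group $\mathcal{P}^{\otimes n}$ of order $2^n$ not containing $-I$. First I would pick a set of $n$ independent generators $g_1,\dots,g_n$ of $\mathcal{S}$ and encode them as a binary ``check matrix'' $[\,A \mid B\,]$ over $\mathbb{F}_2$, where the $i$-th row records the $X$-part and $Z$-part of $g_i$; commutativity of $\mathcal{S}$ is exactly the condition $AB^T + BA^T = 0$, and independence of the generators is exactly that $[\,A \mid B\,]$ has rank $n$.

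The core step is a sequence of row operations (which only change the choice of generators, not $\mathcal{S}$) together with column operations realized by \emph{local} Clifford gates. Recall from \cref{subsec:clifford} that conjugation by single-qubit Cliffords permutes $\{X,Y,Z\}$ on a given qubit; in check-matrix language, applying $H$ on qubit $u$ swaps column $u$ of $A$ with column $u$ of $B$, and applying $Z(\pi/2)$ on qubit $u$ adds column $u$ of $B$ to column $u$ of $A$ (up to sign bookkeeping). Using Hadamards I would first make the $X$-part $A$ invertible: if $A$ is singular, some qubit $u$ has the all-zero column in $A$, and since no generator is $\pm I$ on qubit $u$ there is a generator with a $Z$ there, so an $H$ on a suitable qubit increases $\operatorname{rank}(A)$; iterate. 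Once $A$ is invertible, left-multiply by $A^{-1}$ (a change of generators) to reach check matrix $[\,I \mid B\,]$. Commutativity now forces $B = B^T$, i.e. $B$ is symmetric. Finally, use $Z(\pi/2)$ gates to kill the diagonal of $B$: if $B_{uu} = 1$, applying $Z(\pi/2)$ on qubit $u$ flips that diagonal entry, so after at most $n$ such gates we reach $[\,I \mid B'\,]$ with $B'$ symmetric and zero diagonal — exactly the adjacency matrix of a graph $G$ on $V$. The generators are then $X_u Z_{N_G(u)}$ (possibly with a leftover global sign on each, which can be fixed by a Pauli $X$ or $Z$ on that qubit, again a local Clifford), so by \cref{prop:stabilizer_of_graph_state} the Clifford-transformed state is, up to global phase, the graph state $\ket G$. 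Since the composition of all these single-qubit operations is a local Clifford $C = \bigotimes_u C_u$ with $C\ket\psi = e^{i\phi}\ket G$, we are done.

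The main obstacle is bookkeeping the signs: row operations over $\mathbb{F}_2$ track the Pauli \emph{types} but not the $\pm 1$ prefactors (and multiplying two generators with overlapping $Y$'s can produce an extra $i$, which must square away correctly because $\mathcal{S}$ is a genuine commuting group with no $-I$). I would handle this by carrying the sign as an extra bit on each generator throughout, observing that at the very end each generator differs from $X_u Z_{N_G(u)}$ only by an overall sign $\pm 1$ on qubit $u$ alone, which a single-qubit Pauli corrects; the absence of $-I$ in $\mathcal{S}$ guarantees these corrections are consistent. A cleaner alternative, if one wants to avoid sign chasing entirely, is to prove it abstractly: the symplectic group over $\mathbb{F}_2$ acts transitively on maximal isotropic subspaces, the local Clifford group surjects onto the relevant local symplectic transformations, and graph states realize a full set of representatives of the local-Clifford orbits of isotropic subspaces — but the constructive proof above is more self-contained given what has been set up in the excerpt.
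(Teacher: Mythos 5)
Your proposal takes the same route as the paper, which only sketches this result and defers to the constructive argument of Van den Nest, Dehaene and De Moor: encode the stabilizer generators in a binary check matrix $[\,A\mid B\,]$, use Hadamards to make the $X$-block invertible, row-reduce to $[\,I\mid B\,]$, observe that commutativity forces $B$ symmetric, clear the diagonal with $Z(\pi/2)$ gates, and absorb the signs with Paulis. That outline is sound and is exactly what the cited reference does.

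One step, however, is misjustified as written: the claim that if $A$ is singular then some qubit has an all-zero column in $A$. This is false — for the Bell state with generators $X_1X_2$ and $Z_1Z_2$ one has $A = \left(\begin{smallmatrix}1&1\\0&0\end{smallmatrix}\right)$, singular with no zero column. The correct argument is: if $\operatorname{rank}(A)=r<n$, row-reduce so that $n-r$ generators have zero $X$-part; each such generator is $\pm Z_T$ with $T\neq\emptyset$ by independence, and one should apply $H$ on a qubit $u\in T$. Even then the rank increase is not automatic and needs a justification (Hadamarding a qubit outside all such supports can even \emph{decrease} the rank: $H_1$ applied to the stabilizer $\{X_1,Z_2\}$ of $\ket{+0}$ sends $A$ to the zero matrix); it holds because commutativity makes the row space of $A$ orthogonal to the span of the sets $T$, so $e_u$ cannot already lie in the row space of $A$ when $u\in T$. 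A second, cosmetic slip: you describe $Z(\pi/2)$ as adding column $u$ of $B$ to column $u$ of $A$, which is the action of $X(\pi/2)$; $S$-conjugation sends $X\mapsto Y$, hence adds the $X$-column to the $Z$-column, which is what your later (correct) sentence about flipping $B_{uu}$ actually uses. Neither issue changes the strategy, but the rank-increasing step needs the orthogonality argument to be a complete proof.
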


\begin{proof}[Proof Sketch]
    A stabilizer state $\ket S$ is the fixpoint of $n$ independent commuting Pauli operators. Applying a single-qubit Clifford $C$ on $\ket S$ at qubit $a$ changes the stabilizers: if $(\pm \bigotimes_{u \in V} P_u) \ket S = \ket S$ then $\pm C_a P_a {C_a}^\dagger\bigotimes_{u \in V\sm\{a\}} P_u (C_a \ket S) = (C_a \ket S)$. As seen in the look-up table in Figure \ref{fig:clifford}, any permutation of the Pauli gates can be performed by the conjugation of a Clifford gate. Thus, with a proper choice of single-qubit Clifford operators, the stabilizers can be transformed to be those of a graph state. The proof in \cite{VandenNest04} is constructive and shows that the graph state can be computed in polynomial-time.
\end{proof}

\subsection{Basic properties of graph states}

Graph states are in one-to-one correspondence with (simple, undirected) graphs, hence we can represent a graph state in a non-ambiguous way by a graph, and vice versa.

\begin{proposition}
    $\ket{G_1} = \ket{G_2}$ if and only if $G_1 = G_2$.
\end{proposition}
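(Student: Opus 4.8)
The plan is to prove both directions of the equivalence $\ket{G_1} = \ket{G_2} \iff G_1 = G_2$, with the forward direction being the only one requiring any work. The backward direction is immediate: if $G_1 = G_2$ then they have the same vertex set and the same edge set, so \cref{def:graph_state} produces literally the same expression, hence $\ket{G_1} = \ket{G_2}$.

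For the forward direction, the cleanest route is to use the explicit amplitude formula from \cref{prop:boolean_definition}, namely $\ket{G} = \frac{1}{\sqrt{2^n}}\sum_{x \in \{0,1\}^n} (-1)^{|G[x]|} \ket x$. Suppose $\ket{G_1} = \ket{G_2}$; first note that both sides are nonzero vectors in $\mathbb{C}^{2^{n_1}}$ and $\mathbb{C}^{2^{n_2}}$ respectively, so matching dimensions forces $n_1 = n_2 =: n$, and (since the vertex sets are canonically identified with $[1,n]$) the two graphs have the same vertex set $V$. Comparing coefficients of $\ket x$ for each $x \in \{0,1\}^n$ gives $(-1)^{|G_1[x]|} = (-1)^{|G_2[x]|}$, i.e. $|G_1[x]| \equiv |G_2[x]| \pmod 2$ for all $x$. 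It remains to recover the edge set from this parity information.

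The key step is to read off each edge by evaluating at the right $x$. For a pair $u < v$, take $x = e_u + e_v$ (the indicator of $\{u,v\}$); then $G_i[x]$ is the subgraph induced on $\{u,v\}$, which has exactly one edge if $u \sim_{G_i} v$ and zero edges otherwise. So $|G_1[x]| \equiv |G_2[x]| \pmod 2$ says $u \sim_{G_1} v \iff u \sim_{G_2} v$. Since this holds for every pair, $E(G_1) = E(G_2)$, hence $G_1 = G_2$.

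I don't expect any real obstacle here; the only mild subtlety is making sure the argument for $n_1 = n_2$ is stated (two vectors living in different-dimensional spaces can only be equal if those dimensions agree), and being careful that the canonical labelling of vertices by $[1,n]$ is what lets us identify the two vertex sets — without the labelling convention the statement would only be true up to isomorphism. An alternative proof would invoke \cref{prop:stabilizer_of_graph_state}: the stabilizer generators $X_u Z_{N_G(u)}$ determine, and are determined by, the neighborhoods $N_G(u)$, and the stabilizer group of a state is an invariant of the state; but the amplitude-comparison argument above is more self-contained and I would present that one.
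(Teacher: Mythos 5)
Your proposal is correct and follows exactly the route the paper intends: the paper gives no written proof beyond the remark that the statement "follows directly from \cref{prop:boolean_definition}", and your argument — comparing the amplitudes $(-1)^{|G[x]|}$ and reading off each edge by evaluating at the indicator vector of a pair $\{u,v\}$ — is precisely the natural way to make that one-line remark rigorous. Nothing is missing.
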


The proof follows directly from \cref{prop:boolean_definition}.

\begin{proposition}\label{prop:meas}
    For any graph $G=(V,E)$ and any vertex $a \in V$,  $\bra 0_a \ket G = \frac1{\sqrt 2} \ket{G\setminus a}$. Moreover,  if $a$ is an isolated vertex (i.e. $N_G(a)=\emptyset$), then $\bra +_a\ket G = \ket{G\setminus a}$.
\end{proposition}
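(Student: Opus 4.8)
The plan is to work directly from the Boolean expansion of $\ket G$ given in \cref{prop:boolean_definition}, namely $\ket G = \frac{1}{\sqrt{2^n}}\sum_{x\in\{0,1\}^n}(-1)^{|G[x]|}\ket x$, and simply compute the two projections.

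For the first identity, I would split each bitstring $x\in\{0,1\}^n$ according to its coordinate $x_a$. Applying $\bra 0_a$ annihilates every term with $x_a=1$ and leaves every term with $x_a=0$ unchanged, so only bitstrings supported on $V\sm\{a\}$ contribute. The key observation is that for such an $x$, the vertex $a$ does not belong to the support, hence the subgraph of $G$ induced by that support is literally the same graph as the subgraph of $G\sm a$ induced by it; in particular $|G[x]| = |(G\sm a)[x]|$. Reindexing the surviving sum over $x'\in\{0,1\}^{n-1}$ and factoring $\frac{1}{\sqrt{2^n}} = \frac{1}{\sqrt 2}\cdot\frac{1}{\sqrt{2^{n-1}}}$ out of the normalisation yields exactly $\frac{1}{\sqrt 2}\ket{G\sm a}$, again by \cref{prop:boolean_definition} applied to $G\sm a$.

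For the second identity, I would first record that when $a$ is isolated the state factorises as $\ket G = \ket +_a\otimes\ket{G\sm a}$. This can be seen either from \cref{def:graph_state} — no $CZ$ gate involves $a$ since $a$ has no incident edges, so the product of $CZ$ gates acts only on the qubits of $V\sm\{a\}$ while qubit $a$ remains in $\ket +$ — or, equivalently, from the Boolean formula, since $|G[x]|$ does not depend on $x_a$ when $N_G(a)=\emptyset$. It then follows that $\bra +_a\ket G = (\bra +\ket +)\,\ket{G\sm a} = \ket{G\sm a}$.

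There is no genuine obstacle here: the only things to be careful about are the bookkeeping of the normalisation factors and the small but essential remark that deleting a vertex lying outside a given vertex set does not change the induced subgraph on that set.
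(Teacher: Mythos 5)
Your proof is correct, but it takes a different route from the one in the paper. The paper works directly with the circuit definition of $\ket G$ (\cref{def:graph_state}): it uses the operator identity ${\bra 0}_a CZ_{av} = {\bra 0}_a I_v$ to commute the projector past every $CZ$ gate incident to $a$, which deletes those gates, and then evaluates ${\bra 0}_a\ket{+}_V = \frac{1}{\sqrt 2}\ket{+}_{V\sm\{a\}}$; what remains is precisely the preparation circuit for $\ket{G\sm a}$. You instead work with the computational-basis expansion of \cref{prop:boolean_definition}, where the key step is the combinatorial observation that $|G[x]| = |(G\sm a)[x]|$ whenever $x_a = 0$. Both arguments are one-liners and both are complete; the paper's operator-level manipulation has the small advantage of generalizing immediately to the remark that follows the proposition (handling the outcome-$1$ branch via the stabilizer $X_a Z_{N_G(a)}$), whereas your amplitude-level computation makes the normalization bookkeeping and the induced-subgraph invariance fully explicit. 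Your treatment of the isolated-vertex case via the tensor factorization $\ket G = \ket{+}_a\otimes\ket{G\sm a}$ matches what the paper leaves implicit.
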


\begin{proof}
    ${\bra 0}_u CZ_{uv} = {\bra 0}_u I_v$. Thus, 
    \begin{align*}
        {\bra 0}_a \ket G & = {\bra 0}_a \left(\prod_{(u,v) \in E,~ u < v} CZ_{uv}\right) \ket{+}_V\\
        & = \left(\prod_{(u,v) \in E,~ u < v,~u,v \neq a} CZ_{uv}\right) {\bra 0}_a  \ket{+}_V\\
        & = \frac1{\sqrt 2} \left(\prod_{(u,v) \in E,~ u < v,~ u,v \neq a} CZ_{uv}\right) \ket{+}_{V \sm \{a\}} = \frac1{\sqrt 2} \ket{G\setminus a} \qedhere  
    \end{align*}
\end{proof}
    
\begin{remark}
A standard basis measurement consists in applying either $\bra 0$ or $\bra 1$ which correspond the classical outcomes $0$ and $1$ respectively. While the action in the $0$ case is directly described in \cref{prop:meas}, the action in the $1$ case can be recovered thanks to \cref{prop:stabilizer_definition}: $\bra 1_u\ket G = \bra 0_u X_u\ket G = \bra 0_u Z_{N_G(u)}\ket G =  \frac1{\sqrt 2}Z_{N_G(u)}\ket{G\setminus u}$. Thus, it also corresponds to a vertex deletion up to some Z corrections on the neighborhood of the measured qubit. 
\end{remark}

Global phase operators and Pauli gates are not sufficient to map one graph state to a different graph state.

\begin{proposition}
    If two graph states $\ket{G_1}$ and $\ket{G_2}$ are related by Paulis and a global phase, i.e. $\ket{G_2} = e^{i\phi}\bigotimes_{u\in V}P_u \ket{G_1}$ where $P_u \in \{I,X,Y,Z\}$, then $\ket{G_1} = \ket{G_2}$.
\end{proposition}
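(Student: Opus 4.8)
The plan is to first reduce an arbitrary local Pauli, up to global phase, to a \emph{diagonal} Pauli $Z_C$ acting on $\ket{G_1}$, and then compare amplitudes in the computational basis using \cref{prop:boolean_definition}, evaluated at three very simple basis strings.

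First I would write each $P_u$ as $i^{c_u}X^{a_u}Z^{b_u}$ with $a_u,b_u\in\{0,1\}$ (using $Y=iXZ$), so that $\bigotimes_{u\in V}P_u=e^{i\psi}\,X_A Z_B$ where $A=\{u\in V: P_u\in\{X,Y\}\}$ and $B=\{u\in V: P_u\in\{Y,Z\}\}$. Then \cref{prop:stabilizer_definition} applied to $D=A$ gives $X_A\ket{G_1}=(-1)^{|G_1[A]|}Z_{Odd_{G_1}(A)}\ket{G_1}$, and combining this with the commutation rule $X_AZ_B=(-1)^{|A\cap B|}Z_BX_A$ and $Z_BZ_{B'}=Z_{B\Delta B'}$ yields $\ket{G_2}=e^{i\phi'}Z_C\ket{G_1}$ for $C=B\,\Delta\,Odd_{G_1}(A)$ and some global phase $e^{i\phi'}$. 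The precise value of $e^{i\phi'}$ does not matter; it gets pinned down at the end.

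Next I would expand both sides in the computational basis. By \cref{prop:boolean_definition}, $\ket{G_j}=\frac1{\sqrt{2^n}}\sum_{x\in\{0,1\}^n}(-1)^{|G_j[x]|}\ket x$, and since $Z_C$ multiplies $\ket x$ by $(-1)^{|\{i\in C:\, x_i=1\}|}$, the identity $\ket{G_2}=e^{i\phi'}Z_C\ket{G_1}$ becomes, amplitude by amplitude, $(-1)^{|G_2[x]|}=e^{i\phi'}(-1)^{|G_1[x]|+|\{i\in C:\, x_i=1\}|}$ for every $x\in\{0,1\}^n$. Evaluating at $x=0$ gives $e^{i\phi'}=1$ (both induced subgraphs are empty and there is no contribution from $C$). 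Evaluating at $x=e_i$, the string with a single $1$ in position $i$, gives $1=(-1)^{[i\in C]}$, hence $i\notin C$; as $i$ is arbitrary, $C=\emptyset$. Finally, evaluating at $x=e_i+e_j$ for $i\neq j$ gives $(-1)^{[i\sim_{G_2}j]}=(-1)^{[i\sim_{G_1}j]}$, i.e. $i\sim_{G_1}j\iff i\sim_{G_2}j$ for all pairs; since $G_1$ and $G_2$ share the vertex set $V$, this means $G_1=G_2$ and therefore $\ket{G_1}=\ket{G_2}$.

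I do not expect a genuine obstacle: the only delicate point is bookkeeping of the $\pm1$ factors from anticommutation of Paulis in the reduction step, and these turn out to be immaterial since the conclusion is read off purely from the amplitudes at the three strings $0$, $e_i$, and $e_i+e_j$. (Alternatively, a stabilizer argument works: conjugation by $P$ sends each generator $X_uZ_{N_{G_1}(u)}$ of $\ket{G_1}$ to $\pm X_uZ_{N_{G_1}(u)}$, so $\ket{G_2}$ is stabilized by these operators; but by \cref{prop:stabilizer_definition} the unique stabilizer of $\ket{G_2}$ whose $X$-support is exactly $\{u\}$ is $X_uZ_{N_{G_2}(u)}$, and since $-I$ is never a stabilizer the signs must be $+$, forcing $N_{G_1}(u)=N_{G_2}(u)$ for every $u$, hence $G_1=G_2$.)
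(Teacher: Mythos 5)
Your proof is correct and follows essentially the same route as the paper's: reduce the local Pauli to a diagonal operator $Z_C$ (up to phase) using the stabilizer $X_A Z_{Odd_{G_1}(A)}$, then pin down the phase and show $C=\emptyset$ by inspecting the amplitudes at $\ket{0\cdots0}$ and at the weight-one basis states, which is exactly the paper's use of $\bra{0}_V$ and $\bra{0}_{V\setminus\{u\}}$. (Your third evaluation at $x=e_i+e_j$ is redundant, since $C=\emptyset$ and $e^{i\phi'}=1$ already give $\ket{G_2}=\ket{G_1}$.)
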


\begin{proof}
    Let $D\se V$ be the set of vertices over which X or Y is applied, that is, $D = \{u \in V ~|~ P_u \in \{X,Y\}\}$. According to \cref{prop:stabilizer_definition}, $\ket{G_1} = (-1)^{|G_1[D]|}X_D Z_{Odd_{G_1}(D)} \ket{G_1}$. Thus, $\ket{G_2} = e^{i\phi'}\bigotimes_{u\in V}P'_u \ket{G_1}$ where $P'_u \in \{I,Z\}$. In other words $\ket{G_2} = e^{i\phi'}Z_K \ket{G_1}$ for some set $K \se V$ of vertices. Notice that $\bra{0}_V \ket{G_2} = 1/\sqrt{2^n}$ and $\bra{0}_V e^{i\phi'}Z_K \ket{G_1} = e^{i\phi'}\bra{0}_V \ket{G_1}=  e^{i\phi'}/\sqrt{2^n}$. Thus, $e^{i\phi'} = 1$. Let us prove that $K = \emptyset$. By contradiction, suppose $K$ non-empty and let $u \in K$. According to \cref{prop:meas}, $\bra{0}_{V \sm \{u\}} \ket{G_2} = \ket +$ and $\bra{0}_{V \sm \{u\}} Z_K \ket{G_1} = Z \ket + = \ket -$.
\end{proof}

\begin{proposition}[\cite{Hein06}] \label{prop:marginal}
    The partial trace of a graph state is proportional to a sum of stabilizers of the graph state. More precisely, if $A$ is a set of vertices:

    \begin{align*}
        {Tr}_{V \sm A} \left(\ket G \bra G\right) & = \frac{1}{2^{|A|}}\sum_{D \se V ~|~ D \cup Odd_G(D) \se A} (-1)^{|G[D]|}X_D Z_{Odd_G(D)}%\\
    \end{align*}
\end{proposition}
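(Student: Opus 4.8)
The plan is to express $\ket G\bra G$ as the uniform sum over the stabilizer group of $\ket G$ and then push the partial trace through that sum term by term; this is much cleaner than an entrywise computation from \cref{prop:boolean_definition}. First I would record the standard identity $\ket G\bra G = \frac1{2^n}\sum_{S\in\mathcal S}S$, where $\mathcal S = \{(-1)^{|G[D]|}X_D Z_{Odd_G(D)} : D\se V\}$ is the full stabilizer group given by \cref{prop:stabilizer_definition}. This holds because $P\defeq\frac1{2^n}\sum_{S\in\mathcal S}S$ is a Hermitian idempotent (Hermitian since every $S\in\mathcal S$ is Hermitian; and $P^2=P$ because $\mathcal S$ is a group of order $2^n$, so $SP=P$ for all $S$ and thus $P^2=\frac1{2^n}\sum_S SP=P$), has trace $1$ (the only element of $\mathcal S$ with nonzero trace is $I$, all others being nonidentity tensor products of Paulis up to a phase, hence traceless), and fixes $\ket G$; so it is the orthogonal projector onto $\mathbb C\ket G$, i.e. $\ket G\bra G$.

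Next I would compute $\mathrm{Tr}_{V\sm A}(S)$ for a single stabilizer $S=(-1)^{|G[D]|}X_D Z_{Odd_G(D)}$. Write $S = (-1)^{|G[D]|}\bigotimes_{u\in V}Q_u$ with $Q_u\in\{I,X,Z,-iY\}$, where $Q_u=I$ exactly when $u\notin D\cup Odd_G(D)$. Since the partial trace of the paper's definition is multiplicative across tensor factors (for a single qubit $\sum_{x\in\{0,1\}}\bra x Q\ket x=\mathrm{Tr}(Q)$), and $\mathrm{Tr}(I)=2$ while $X,Y,Z$ are traceless, we get $\mathrm{Tr}_{V\sm A}(S) = (-1)^{|G[D]|}\big(\prod_{u\in V\sm A}\mathrm{Tr}(Q_u)\big)\bigotimes_{u\in A}Q_u$. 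This vanishes unless $Q_u=I$ for every $u\in V\sm A$, i.e. unless $D\cup Odd_G(D)\se A$; and in that case it equals $2^{\,n-|A|}(-1)^{|G[D]|}X_D Z_{Odd_G(D)}$, which now makes sense as an operator on $A$ since $D\cup Odd_G(D)\se A$.

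Combining the two steps and using linearity of the partial trace,
\begin{equation*}
\mathrm{Tr}_{V\sm A}\!\big(\ket G\bra G\big) = \frac1{2^n}\sum_{D\se V}(-1)^{|G[D]|}\,\mathrm{Tr}_{V\sm A}\!\big(X_D Z_{Odd_G(D)}\big) = \frac1{2^{|A|}}\!\!\sum_{\substack{D\se V\\ D\cup Odd_G(D)\se A}}\!\!(-1)^{|G[D]|}X_D Z_{Odd_G(D)},
\end{equation*}
where the last equality uses $\frac{2^{\,n-|A|}}{2^n}=\frac1{2^{|A|}}$ together with the vanishing established above; this is exactly the claimed formula.

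I expect the only genuinely non-routine step to be the first one, the stabilizer-projector identity $\ket G\bra G=\frac1{2^n}\sum_{S\in\mathcal S}S$; everything afterwards is bookkeeping. Two minor points to watch: the spurious factor $-i$ in $Q_u=-iY$ (which occurs when $u\in D\cap Odd_G(D)$) is harmless because $X_D Z_{Odd_G(D)}$ is written with the same convention on both sides of the equation, so no phase is dropped; and one should check that ``$Q_u=I$ for all $u\in V\sm A$'' is indeed equivalent to $(D\cup Odd_G(D))\cap(V\sm A)=\emptyset$, i.e. to $D\cup Odd_G(D)\se A$, which is immediate from the description of $Q_u$.
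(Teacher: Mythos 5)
Your proof is correct and follows essentially the same route as the paper: both express $\ket G \bra G$ as $\frac{1}{2^n}\sum_{D\se V}(-1)^{|G[D]|}X_D Z_{Odd_G(D)}$ and then kill, under the partial trace, every term with a non-identity Pauli outside $A$. The only divergence is how that identity is obtained — the paper conjugates $\ket{+}_V\bra{+}_V=\frac{1}{2^n}\sum_{D}X_D$ by the $CZ$ gates directly, whereas you invoke the general fact that the normalized sum over a stabilizer group is the orthogonal projector onto the stabilized state; both are valid, and your version is the more standard and more general argument (your implicit claims that each $(-1)^{|G[D]|}X_DZ_{Odd_G(D)}$ is Hermitian and that the $2^n$ operators form a group do hold, e.g.\ Hermiticity follows since $|D\cap Odd_G(D)|$ is the number of odd-degree vertices of $G[D]$, which is even).
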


\begin{proof}
    First, $\ket +_V \bra +_V = \frac{1}{2^n}\sum_{D\se V} X_D $, as $\ket 0 \bra 0 + \ket 1 \bra 1 = I$ and $\ket 0 \bra 1 + \ket 1 \bra 0 = X$. Then, as $X_u CZ_{uv} = CZ_{uv} X_u Z_v$ and $Z_u CZ_{uv} = CZ_{uv} Z_u$,
    \begin{align*}
        \ket G \bra G & = \left(\prod_{(u,v) \in E,~ u < v} CZ_{uv}\right)\ket +_V \bra +_V \left(\prod_{(u,v) \in E,~ u < v} CZ_{uv}\right)\\
        & = \frac{1}{2^n}\sum_{D\se V} \left(\prod_{(u,v) \in E,~ u < v} CZ_{uv}\right) X_D \left(\prod_{(u,v) \in E,~ u < v} CZ_{uv}\right) \\
        & = \frac{1}{2^n}\sum_{D\se V} (-1)^{|G[D]|}X_D Z_{Odd_G(D)}
    \end{align*}
    The proposition then follows from the fact that the trace of X, Y and Z is 0, while the trace of the identity is 2.
\end{proof}

\section{Different notions of local equivalence}

We care about graph states that have the same entanglement. A standard way of defining "having the same entanglement" is through SLOCC-equivalence. In the particular case of graph states, the notion of SLOCC-equivalence coincides with an arguably simpler notion, LU-equivalence.

\subsection{SLOCC-equivalence}

Two quantum states are said to be SLOCC-equivalent if they can be transformed into each other with stochastic local operations and classical communication (SLOCC), with some non-zero probability of success. "Local" here means applied on each qubit independently.

\begin{definition}
    Two quantum states $\ket{\psi_1}$ and $\ket{\psi_2}$ are SLOCC-equivalent if there exist determinant 1 gates $Q_u$ such that $\ket{\psi_2} = \bigotimes_{u\in V}Q_u \ket{\psi_1}$. 
\end{definition}

\subsection{LU-equivalence}

Two quantum states are said to be LU-equivalent if they can be transformed into each other with local unitary transformation (with a probability 1 of success).

\begin{definition}
    Two quantum states $\ket{\psi_1}$ and $\ket{\psi_2}$ are LU-equivalent if there exist single-qubit unitary operators $U_u$ such that $\ket{\psi_2} = \bigotimes_{u\in V}U_u \ket{\psi_1}$.
\end{definition}

For convenience, we lift the definition of LU-equivalence to graphs: two graphs are said LU-equivalent if the two corresponding graph states are LU-equivalent. We use the notation $G_1 =_{LU} G_2$.

\begin{proposition}[\cite{Verstraete2003}]
    If the local density operators of two $n$-qubit quantum states $\ket {\psi_1},\ket {\psi_2} $ are all proportional to the identity (i.e. $\forall u \in V, {Tr}_{V\sm\{u\}}\left(\ket {\psi_i} \bra {\psi_i}\right) \propto I$), then $\ket {\psi_1}$ and $\ket {\psi_2}$ are SLOCC-equivalent if and only if they are LU-equivalent.
\end{proposition}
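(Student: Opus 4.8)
The direction LU $\Rightarrow$ SLOCC is immediate once states differing by a global phase are identified: a single-qubit unitary $U$ has $|\det U|=1$, so $e^{-i\alpha/2}U$ has determinant $1$ when $e^{i\alpha}=\det U$, and applying these rescaled unitaries changes $\bigotimes_{u\in V}U_u\ket{\psi_1}$ only by a global phase. So the content is the converse, and I assume $\ket{\psi_2}=\bigotimes_{u\in V}Q_u\ket{\psi_1}$ with $\det Q_u=1$. The plan is to peel off, via polar decomposition, the non-unitary part of each $Q_u$ and show it acts trivially. Write $Q_u=U_uP_u$ with $U_u$ unitary and $P_u$ positive definite; from $|\det Q_u|=1$ we get $\det P_u=1$, so $P_u$ has eigenvalues $e^{a_u},e^{-a_u}$. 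Replacing $\ket{\psi_1}$ by its local-unitary image that simultaneously diagonalises all the $P_u$ (which preserves both the equivalence question and the hypothesis, since conjugating a state by local unitaries conjugates its single-qubit marginals and keeps them $\propto I$) and absorbing the diagonalising unitaries into the $U_u$, I may assume $Q_u=\mathrm{diag}(e^{a_u},e^{-a_u})=e^{a_uZ_u}$. It then suffices to prove $\bigotimes_u e^{a_uZ_u}\ket{\psi_1}=\ket{\psi_1}$, for this yields $\ket{\psi_2}=\bigotimes_u U_u\ket{\psi_1}$.

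The key object is $g\colon\mathbb R^n\to\mathbb R$, $g(\vec t)=\log\bra{\psi_1}\bigotimes_u e^{2t_uZ_u}\ket{\psi_1}=\log\sum_{x\in\{0,1\}^n}|c_x|^2\exp\!\big(2\sum_u(-1)^{x_u}t_u\big)$, where $\ket{\psi_1}=\sum_x c_x\ket x$; note $e^{g(\vec t)}=\big\|\bigotimes_u e^{t_uZ_u}\ket{\psi_1}\big\|^2$. As the logarithm of a nonnegative combination of exponentials of linear forms, $g$ is convex. Two facts locate its minimum. First, $g(\vec 0)=\log\bra{\psi_1}\ket{\psi_1}=0$. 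Second, $\partial_{t_u}g(\vec 0)=2\bra{\psi_1}Z_u\ket{\psi_1}=2\,\mathrm{tr}\!\big(Z\cdot Tr_{V\sm\{u\}}(\ket{\psi_1}\bra{\psi_1})\big)=0$, precisely because the $u$-th marginal of $\ket{\psi_1}$ is $\propto I$ and hence equals $I/2$. So $\vec 0$ is a global minimiser of $g$. On the other hand, writing $\vec a=(a_1,\dots,a_n)$, we have $e^{g(\vec a)}=\big\|\bigotimes_u e^{a_uZ_u}\ket{\psi_1}\big\|^2=\|\psi_2\|^2=1$, so $g(\vec a)=0$ as well and $\vec a$ is also a global minimiser.

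The remaining step is to pass from ``$\vec 0$ and $\vec a$ both minimise the convex function $g$'' to ``$\bigotimes_u e^{a_uZ_u}$ fixes $\ket{\psi_1}$''; this is the crux, since convexity alone only makes the minimisers a convex set. I would restrict $g$ to the segment $[\vec 0,\vec a]$: the convex function $h(s):=g(s\vec a)$ attains its minimum at both endpoints of $[0,1]$, hence is constant there, so $h''\equiv 0$. Writing $\beta_x:=\sum_u(-1)^{x_u}a_u$, a direct computation of $h''$ gives $h''(s)=4\,\mathrm{Var}_{q_s}(\beta_x)$ for the probability weights $q_s(x)\propto|c_x|^2e^{2s\beta_x}$; vanishing variance at $s=0$ forces $\beta_x$ to take one common value $C$ over all $x$ with $c_x\neq 0$, and $h'(0)=0$ (again from $\vec 0$ being a minimiser) forces $\mathbb E_{q_0}[\beta_x]=C=0$. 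Therefore $\bigotimes_u e^{a_uZ_u}\ket x=e^{\beta_x}\ket x=\ket x$ for every $x$ in the support of $\ket{\psi_1}$, so $\bigotimes_u e^{a_uZ_u}\ket{\psi_1}=\ket{\psi_1}$ and $\ket{\psi_1}$ and $\ket{\psi_2}$ are LU-equivalent. (Incidentally, only the hypothesis on $\ket{\psi_1}$ together with $\|\psi_2\|=1$ is used; the hypothesis on $\ket{\psi_2}$ is then automatic.) The main obstacle is exactly this last argument: squeezing a rigidity statement out of the two-point coincidence of a convex function, which genuinely requires the second-order variance identity along the segment in tandem with the first-order vanishing-gradient condition coming from the marginals.
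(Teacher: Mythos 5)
The paper does not actually prove this proposition --- it is stated as an imported result with a citation to Verstraete et al., so there is no in-paper argument to compare against. Your proof is correct and self-contained, and it is essentially the Kempf--Ness/critical-point argument specialised to the diagonal torus: polar-decompose each $Q_u$, diagonalise the positive parts by a local unitary change of frame (which preserves the maximally-mixed-marginal hypothesis), and then show the residual $\bigotimes_u e^{a_u Z_u}$ acts trivially because the log-norm function $g$ is convex with vanishing gradient at the origin (this is exactly where $\rho_u\propto I$ enters, via $\bra{\psi_1}Z_u\ket{\psi_1}=0$) and takes the same minimal value at $\vec a$, forcing $h\equiv 0$ along the segment and hence, through the variance identity $h''=4\operatorname{Var}_{q_s}(\beta)$ plus $h'(0)=0$, forcing $\beta_x=0$ on the support. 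Each step checks out; I particularly like your observation that only the hypothesis on $\ket{\psi_1}$ is actually needed. Two cosmetic remarks: (i) with the paper's definitions of SLOCC- and LU-equivalence as exact vector equalities (no global phase), the direction LU~$\Rightarrow$~SLOCC does require the identification of states up to phase that you flag, since rescaling a unitary to determinant~$1$ generically leaves a residual phase that cannot be absorbed back into $SL(2,\mathbb C)$; (ii) your conclusion $h''\equiv 0$ should be read on the open interval and extended to $s=0$ by continuity of $h''$ (which holds since $h$ is real-analytic), but this is immediate.
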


Following \cite{Hein06}, if for some vertex $u$ of a graph $G$, ${Tr}_{V\sm\{u\}}\left(\ket G \bra G\right) \not\propto I$, then, according to \cref{prop:marginal}, $u$ is an isolated vertex. Thus, if $G_1$ and $G_2$ are connected graphs of order at least 2, $\ket {G_1}$ and $\ket{G_2}$ are SLOCC-equivalent if and only if they are LU-equivalent. This property can be extended easily to any pair of graphs by considering each connected component, and noticing that there is only one possible graph of order 1.

\begin{corollary}
    Two graph states are SLOCC-equivalent if and only if they are LU-equivalent.
\end{corollary}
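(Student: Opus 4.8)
The plan is to prove the two implications separately. The forward implication (LU-equivalence implies SLOCC-equivalence) is the easy one, while the converse will follow by reducing to the situation covered by the proposition of \cite{Verstraete2003} quoted above, using the explicit form of the single-qubit marginals provided by \cref{prop:marginal}.

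For the easy direction, suppose $\ket{G_1}$ and $\ket{G_2}$ are LU-equivalent, say $\ket{G_2} = \bigotimes_{u\in V} U_u\ket{G_1}$ with each $U_u$ unitary. Since $|\det U_u| = 1$, one can write $U_u = e^{i\theta_u}Q_u$ with $\det Q_u = 1$, so $\ket{G_2} = e^{i\sum_u\theta_u}\bigotimes_{u\in V}Q_u\ket{G_1}$; as the global phase $e^{i\sum_u\theta_u}$ is irrelevant, this exhibits $\ket{G_1}$ and $\ket{G_2}$ as SLOCC-equivalent.

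For the converse, I would first read off from \cref{prop:marginal} the shape of the one-qubit marginals of a graph state: for a vertex $u$, the only sets $D$ with $D\cup Odd_G(D)\subseteq\{u\}$ are $D=\emptyset$ and, provided $u$ is isolated, $D=\{u\}$; hence $Tr_{V\setminus\{u\}}(\ket G\bra G)$ equals $\tfrac12 I$ when $u$ is non-isolated and $\ket{+}\bra{+}$ when $u$ is isolated. In particular a qubit of a graph state is unentangled from the rest exactly when its vertex is isolated. Now assume $\ket{G_1}$ and $\ket{G_2}$ are SLOCC-equivalent. An invertible local operator maps a state that is a product across a given cut to a state that is a product across that same cut, so $\ket{G_1}$ and $\ket{G_2}$ have the same set $I\subseteq V$ of qubits unentangled from the rest, i.e. $G_1$ and $G_2$ have the same set $I$ of isolated vertices. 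Writing $\ket{G_j} = \ket{H_j}\otimes\ket{+}_I$ with $H_j = G_j[V\setminus I]$, which has no isolated vertex, uniqueness of the tensor factorisation of a pure state forces the given SLOCC map to restrict, up to a scalar that can be absorbed, to an SLOCC equivalence $\ket{H_2} = \bigotimes_{u\in V\setminus I}Q_u\ket{H_1}$ on $V\setminus I$ (together with a map fixing $\ket{+}_I$ on $I$). Every vertex of $H_1$ and of $H_2$ is non-isolated, so all their one-qubit marginals equal $\tfrac12 I\propto I$; the proposition of \cite{Verstraete2003} then applies and yields that $\ket{H_1}$ and $\ket{H_2}$ are LU-equivalent. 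Tensoring with the identity on $I$, which fixes $\ket{+}_I$, finally shows $\ket{G_1}$ and $\ket{G_2}$ are LU-equivalent, as desired.

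The step that needs the most care is the bookkeeping around isolated vertices: making rigorous that the SLOCC equivalence respects the splitting $V = (V\setminus I)\sqcup I$. This uses uniqueness of tensor decompositions of a pure state, and the fact that the restriction of a determinant-$1$ local map to the block $I$ carrying $\ket{+}_I$ can be compensated by a scalar so that the restricted map on $V\setminus I$ is still a legitimate local equivalence. Everything else is either the cited black box \cite{Verstraete2003} or a direct evaluation of \cref{prop:marginal}; in particular, no property specific to graphs is needed beyond that marginal computation.
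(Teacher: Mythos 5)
Your proposal is correct and follows essentially the same route as the paper: both directions hinge on the proposition of \cite{Verstraete2003} combined with the observation, via \cref{prop:marginal}, that the only single-qubit marginals of a graph state not proportional to the identity come from isolated vertices. The only cosmetic difference is that you dispose of the degenerate case by splitting off the isolated vertices directly, whereas the paper phrases the same reduction in terms of connected components; the substance is identical.
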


Thus, LU-equivalence is the most natural framework to study graph states having the same entanglement. Actually, the local unitaries relating two LU-equivalent graph states are always Z-rotations up to Clifford.

\begin{proposition}[\cite{gross2007lu, zeng2011transversality}]\label{prop:czc}
    If $G_1 =_{LU} G'_2$ such that $\ket{G_2} = \bigotimes_{u\in V}U_u \ket{G_1}$, then there exist single-qubit Clifford operators  $C_u, C'_u$ and angles $\theta_u$ such that for every vertex $u \in V$, $U_u = C_u Z\left(\theta_u\right) C'_u$ up to global phase.    
\end{proposition}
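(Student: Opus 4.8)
The plan is to argue qubit by qubit, showing that each $U_u$ becomes diagonal after conjugation by suitable single-qubit Cliffords. First I would reduce to the case that $G_1$, hence $G_2$, is connected of order at least $2$: a local unitary preserves the factorisation of a state into indecomposable (fully entangled) pieces, and a graph state factors over its connected components with each $\ge 2$-vertex component fully entangled and each isolated vertex contributing a $\ket{+}$, so $G_1$ and $G_2$ must share their $\ge 2$-vertex components and their isolated vertices, and the restrictions of $\bigotimes_u U_u$ give LU-equivalences of the corresponding connected graph states. For an isolated vertex $u$, $\ket{G_i}=\ket{+}_u\otimes(\cdots)$ forces $U_u\ket{+}=e^{i\phi}\ket{+}$, so up to global phase $U_u$ is a rotation about the $X$-axis; since $HZ(\theta)H=X(\theta)$, such a $U_u$ already has the claimed form with $C_u=C'_u=H$. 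Fix from now on a vertex $u$ with a neighbour $v$ in a connected $G_1$ of order $\ge 2$.

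Next I would record the local stabiliser structure at $u$. By \cref{prop:stabilizer_definition}, $\ket{G_1}$ is stabilised by $X_u\otimes A$ with $A=Z_{N_{G_1}(u)}$ and, as $u\in N_{G_1}(v)$, by a Pauli whose $u$-component is $Z$, say $Z_u\otimes B$; their product is a $Y$-type stabiliser $Y_u\otimes C$, and commutativity forces $A,B,C$ to be pairwise anticommuting signed Paulis on $V\sm\{u\}$. Using \cref{prop:meas} and \cref{prop:stabilizer_of_graph_state}, $\ket{G_1}=\tfrac1{\sqrt2}\big(\ket{0}_u\otimes\ket{G_1\sm u}+\ket{1}_u\otimes Z_{N_{G_1}(u)}\ket{G_1\sm u}\big)$, and a one-line computation gives $\langle G_1\sm u\,|\,Z_{N_{G_1}(u)}\,|\,G_1\sm u\rangle=0$ since $N_{G_1}(u)\ne\emptyset$; hence this is a genuine Schmidt decomposition, the reduced state on $V\sm\{u\}$ is $\tfrac12$ times the projector onto a $2$-dimensional subspace $W$, and $A,C,B$ act on $W$ exactly as the Pauli matrices. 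Writing $\ket{G_1}\bra{G_1}$ out in the single-qubit Pauli basis at $u$, conjugating by $U=U_u\otimes U'$ with $U':=\bigotimes_{w\ne u}U_w$, and comparing with the analogous expansion of $\ket{G_2}\bra{G_2}$ built from its own local Pauli stabilisers, the identity parts match (so $U'$ sends $W$ onto the analogous subspace $W_2$ for $G_2$) and the Pauli parts only constrain the Bloch-sphere rotation of $U_u$ to lie in $SO(3)$; that is, this comparison alone gives nothing past the trivial $U_u\in SU(2)$.

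The hard part will be upgrading this to the statement that $U_u$ is semi-Clifford, i.e.\ that $U_uPU_u^{\dagger}$ is a signed Pauli for some $P\in\{X,Y,Z\}$. This is genuinely global: the $U_w$ with $w\ne u$ are unconstrained and do not preserve the Pauli group, and a tensor product of single-qubit Hermitian involutions can stabilise $\ket{G_2}$ without being made of Paulis, so semi-Cliffordness of $U_u$ is not visible from qubit $u$ alone. The route I would try: for a well-chosen set $A\ni u$, use that $\mathrm{Tr}_{V\sm A}(\ket{G_i}\bra{G_i})$ is a sum of Paulis on $A$ by \cref{prop:marginal}, and that $\mathrm{Tr}_{V\sm A}(\ket{G_2}\bra{G_2})=U_A\,\mathrm{Tr}_{V\sm A}(\ket{G_1}\bra{G_1})\,U_A^{\dagger}$ with $U_A=\bigotimes_{w\in A}U_w$ — the remaining local unitaries factor out under the partial trace. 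If $A$ can be chosen so that $X_uZ_{N_{G_1}(u)}$ is the only nontrivial stabiliser of $\ket{G_1}$ supported inside $A$ — cleanest when $u$ is pendant, where $A=\{u,v\}$ and the marginal is $\tfrac14(I+X_u\otimes Z_v)$ — then $U_A$ conjugates a single tensor of Paulis to a single tensor of Paulis, and comparing tensor factors together with Hermiticity forces $U_uXU_u^{\dagger}$ (and $U_wZU_w^{\dagger}$ for $w\in A$) to be signed Paulis. The real obstacle is that not every connected graph state has such a clean local picture at a given vertex (for instance the $5$-cycle is not LC-equivalent to a graph with a pendant), so to handle all cases one would need either a structured small-support stabiliser at every vertex after passing to LC-equivalent representatives — the attendant Cliffords being absorbed into the $C_u,C'_u$ — or the GF$(2)$/quadratic-form description of graph states of \cite{gross2007lu,zeng2011transversality}.

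Finally, once $U_uPU_u^{\dagger}=\pm P'$ with $P'$ a Pauli, Figure \ref{fig:clifford} supplies a single-qubit Clifford $C_u$ with $C_uP'C_u^{\dagger}=Z$; composing if necessary with the Pauli $X$ (which conjugates $Z$ to $-Z$) we may assume $C_uU_u$ commutes with $Z$, hence is diagonal, hence equals $e^{i\phi_u}Z(\theta_u)$ for some $\phi_u,\theta_u$. Then $U_u=e^{i\phi_u}C_u^{\dagger}Z(\theta_u)$, which is of the asserted form $C_uZ(\theta_u)C'_u$ up to global phase (take $C'_u=I$ and rename $C_u^{\dagger}$ as $C_u$). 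Carrying this out at every vertex, together with the isolated-vertex case and the reduction over connected components, yields the proposition.
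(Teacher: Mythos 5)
There is a genuine gap, and you have named it yourself: the entire content of the proposition is the "semi-Clifford" claim that each $U_u$ conjugates some Pauli to a signed Pauli (your reduction from that claim to the form $C_uZ(\theta_u)C'_u$, and your handling of isolated vertices, are both fine), and your proposal for that step works only when $u$ sits inside a set $A$ on which $\ket{G_1}$ has exactly one nontrivial stabilizer supported — a situation you can only guarantee for pendant-like vertices. Writing "the route I would try" and then conceding that the general case needs "the GF$(2)$/quadratic-form description of \cite{gross2007lu,zeng2011transversality}" is an honest admission that the proof is not complete; as it stands you have proved the statement only for a restricted class of graphs. (For what it is worth, the paper itself offers no proof either — it states the proposition with citations — so there is no internal argument to compare against.)

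The missing ingredient is, however, available in this very thesis, and your partial-trace idea is the right one once it is pointed at the correct sets. By \cref{thm:MLS_cover} every vertex lies in a minimal local set $L$, and by \cref{prop:MLS2cases} such an $L$ has dimension $1$ or $2$. If $\dim L=1$, then by \cref{prop:marginal} the marginal on $L$ is $\tfrac1{2^{|L|}}(I\pm X_DZ_{Odd(D)})$ — exactly the "single nontrivial stabilizer supported inside $A$" situation you wanted — and your tensor-factor comparison (this is \cref{prop:comU}) forces $U_uPU_u^\dagger=\pm P'$ at every $u\in L$. If $\dim L=2$, the marginal is $I$ plus three anticommuting stabilizers and the comparison no longer gives a Pauli-to-Pauli statement directly; there one needs the separate argument of \cref{prop:mls_clifford} (the H\"older-inequality bound) to conclude that $U_u$ is outright Clifford. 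Your worry about the $5$-cycle is unfounded: every vertex of $C_5$ lies in a dimension-$1$ minimal local set such as $\{1,2,5\}$, so no passage to a pendant vertex is needed. None of \cref{thm:MLS_cover}, \cref{prop:comU} or \cref{prop:mls_clifford} depends on the present proposition, so invoking them is not circular; supplying this case analysis would close your gap completely.
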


\subsection{LC-equivalence}

Two quantum states are said to be LC-equivalent if they can be transformed into each other with local Clifford operations.

\begin{definition}
    Two quantum states $\ket{\psi_1}$ and $\ket{\psi_2}$ are LC-equivalent if there exist single-qubit Clifford operators $C_u$ such that $\ket{\psi_2} = e^{i\phi} \bigotimes_{u\in V}C_u \ket{\psi_1}$.
\end{definition}

For convenience, we lift the definition of LC-equivalence to graphs: two graphs are said LC-equivalent if the two corresponding graph states are LC-equivalent.
Note that the global phase $e^{i\phi}$ in the definition makes it so LC-equivalence is equivalently defined no matter which definition of the single-qubit Clifford group we choose (see \cref{subsec:clifford}). 

Obviously, two LC-equivalent graphs are LU-equivalent. Conversely, while it was once conjectured that two LU-equivalent graphs are always LC-equivalent \cite{5pb}, a counter-example of order 27 was discovered \cite{Ji07}. It is to this day still an open question whether this counter-example is minimal in order, or if there exists a counter-example of order 26 or less. 

The original 27-vertex counterexample consists of a pair of graphs that differ by only one edge (see \cite{Ji07}). Since, an equivalent counter-example has been found, with a much nicer form \cite{Tsimakuridze17}, which we provide in Figure \ref{fig:ce}. A formal explanation and a generalization of this counter-example can be found in \cref{chap:glc}. 

\begin{figure}[h!]
\centering
\scalebox{0.77}{
\begin{tikzpicture}
\begin{scope}[every node/.style={circle, fill=black, inner sep=3pt}]

    \node (u1) at (0,0) {};
    \node (u2) at (1,-0.5) {};
    \node (u3) at (2,-0.75) {};
    \node (u4) at (3,-0.75) {};
    \node (u5) at (4,-0.5) {};
    \node (u6) at (5,0) {};

    \node (v12345) at (-2,5) {};
    \node (v12346) at (-1.25,5) {};
    \node (v12356) at (-0.5,5) {};
    \node (v12456) at (0.25,5) {};
    \node (v13456) at (1,5) {};
    \node (v23456) at (1.75,5) {};

    \node (v1234) at (4,5) {};
    \node (v1235) at (4.75,5) {};
    \node (v2456) at (6.25,5) {};
    \node (v3456) at (7,5) {};    
    
\end{scope}
\begin{scope}[every edge/.style={draw=darkgray,very thick}]
              
    \path [-] (v12345) edge node {} (u1);
    \path [-] (v12345) edge node {} (u2);
    \path [-] (v12345) edge node {} (u3);
    \path [-] (v12345) edge node {} (u4);
    \path [-] (v12345) edge node {} (u5);

    \path [-] (v12346) edge node {} (u1);
    \path [-] (v12346) edge node {} (u2);
    \path [-] (v12346) edge node {} (u3);
    \path [-] (v12346) edge node {} (u4);
    \path [-] (v12346) edge node {} (u6);

    \path [-] (v12356) edge node {} (u1);
    \path [-] (v12356) edge node {} (u2);
    \path [-] (v12356) edge node {} (u3);
    \path [-] (v12356) edge node {} (u5);
    \path [-] (v12356) edge node {} (u6);

    \path [-] (v12456) edge node {} (u1);
    \path [-] (v12456) edge node {} (u2);
    \path [-] (v12456) edge node {} (u4);
    \path [-] (v12456) edge node {} (u5);
    \path [-] (v12456) edge node {} (u6);

    \path [-] (v13456) edge node {} (u1);
    \path [-] (v13456) edge node {} (u3);
    \path [-] (v13456) edge node {} (u4);
    \path [-] (v13456) edge node {} (u5);
    \path [-] (v13456) edge node {} (u6);

    \path [-] (v23456) edge node {} (u2);
    \path [-] (v23456) edge node {} (u3);
    \path [-] (v23456) edge node {} (u4);
    \path [-] (v23456) edge node {} (u5);
    \path [-] (v23456) edge node {} (u6);

    \path [-] (v1234) edge node {} (u1);
    \path [-] (v1234) edge node {} (u2);
    \path [-] (v1234) edge node {} (u3);
    \path [-] (v1234) edge node {} (u4);

    \path [-] (v1235) edge node {} (u1);
    \path [-] (v1235) edge node {} (u2);
    \path [-] (v1235) edge node {} (u3);
    \path [-] (v1235) edge node {} (u5);

    \path [-] (v2456) edge node {} (u2);
    \path [-] (v2456) edge node {} (u4);
    \path [-] (v2456) edge node {} (u5);
    \path [-] (v2456) edge node {} (u6);

    \path [-] (v3456) edge node {} (u3);
    \path [-] (v3456) edge node {} (u4);
    \path [-] (v3456) edge node {} (u5);
    \path [-] (v3456) edge node {} (u6);

\end{scope}
\draw (5.5,5) node(){\large\dots};
\draw [stealth-stealth,draw=black,very thick ](8,2.5) -- (9,2.5);
\begin{scope}[shift={(12,0)}]
\begin{scope}[every node/.style={circle, fill=black, inner sep=3pt}]

    \node (u1) at (0,0) {};
    \node (u2) at (1,-0.5) {};
    \node (u3) at (2,-0.75) {};
    \node (u4) at (3,-0.75) {};
    \node (u5) at (4,-0.5) {};
    \node (u6) at (5,0) {};

    \node (v12345) at (-2,5) {};
    \node (v12346) at (-1.25,5) {};
    \node (v12356) at (-0.5,5) {};
    \node (v12456) at (0.25,5) {};
    \node (v13456) at (1,5) {};
    \node (v23456) at (1.75,5) {};

    \node (v1234) at (4,5) {};
    \node (v1235) at (4.75,5) {};
    \node (v2456) at (6.25,5) {};
    \node (v3456) at (7,5) {};    
    
\end{scope}
\begin{scope}[every edge/.style={draw=darkgray,very thick}]
              
    \path [-] (v12345) edge node {} (u1);
    \path [-] (v12345) edge node {} (u2);
    \path [-] (v12345) edge node {} (u3);
    \path [-] (v12345) edge node {} (u4);
    \path [-] (v12345) edge node {} (u5);

    \path [-] (v12346) edge node {} (u1);
    \path [-] (v12346) edge node {} (u2);
    \path [-] (v12346) edge node {} (u3);
    \path [-] (v12346) edge node {} (u4);
    \path [-] (v12346) edge node {} (u6);

    \path [-] (v12356) edge node {} (u1);
    \path [-] (v12356) edge node {} (u2);
    \path [-] (v12356) edge node {} (u3);
    \path [-] (v12356) edge node {} (u5);
    \path [-] (v12356) edge node {} (u6);

    \path [-] (v12456) edge node {} (u1);
    \path [-] (v12456) edge node {} (u2);
    \path [-] (v12456) edge node {} (u4);
    \path [-] (v12456) edge node {} (u5);
    \path [-] (v12456) edge node {} (u6);

    \path [-] (v13456) edge node {} (u1);
    \path [-] (v13456) edge node {} (u3);
    \path [-] (v13456) edge node {} (u4);
    \path [-] (v13456) edge node {} (u5);
    \path [-] (v13456) edge node {} (u6);

    \path [-] (v23456) edge node {} (u2);
    \path [-] (v23456) edge node {} (u3);
    \path [-] (v23456) edge node {} (u4);
    \path [-] (v23456) edge node {} (u5);
    \path [-] (v23456) edge node {} (u6);

    \path [-] (v1234) edge node {} (u1);
    \path [-] (v1234) edge node {} (u2);
    \path [-] (v1234) edge node {} (u3);
    \path [-] (v1234) edge node {} (u4);

    \path [-] (v1235) edge node {} (u1);
    \path [-] (v1235) edge node {} (u2);
    \path [-] (v1235) edge node {} (u3);
    \path [-] (v1235) edge node {} (u5);

    \path [-] (v2456) edge node {} (u2);
    \path [-] (v2456) edge node {} (u4);
    \path [-] (v2456) edge node {} (u5);
    \path [-] (v2456) edge node {} (u6);

    \path [-] (v3456) edge node {} (u3);
    \path [-] (v3456) edge node {} (u4);
    \path [-] (v3456) edge node {} (u5);
    \path [-] (v3456) edge node {} (u6);

    \path [-] (u1) edge node {} (u2);
    \path [-] (u1) edge node {} (u3);
    \path [-] (u1) edge node {} (u4);
    \path [-] (u1) edge node {} (u5);
    \path [-] (u1) edge node {} (u6);
    \path [-] (u2) edge node {} (u3);
    \path [-] (u2) edge node {} (u4);
    \path [-] (u2) edge node {} (u5);
    \path [-] (u2) edge node {} (u6);
    \path [-] (u3) edge node {} (u4);
    \path [-] (u3) edge node {} (u5);
    \path [-] (u3) edge node {} (u6);
    \path [-] (u4) edge node {} (u5);
    \path [-] (u4) edge node {} (u6);
    \path [-] (u5) edge node {} (u6);

\end{scope}
\draw (5.5,5) node(){\large\dots};
\end{scope}

\end{tikzpicture}
}
\caption{A 27-vertex counter-example to the LU=LC conjecture, that is, a pair of graphs that are LU-equivalent but not LC-equivalent. The graphs have 6 bottom vertices. There is one upper vertex per set of 5 bottom vertices, and one upper vertex per set of 4 bottom vertices; leading to $\binom{6}{5} + \binom{6}{4} = 21$ upper vertices. In the leftmost graph, the bottom vertices form an independent set, while in the rightmost graph, the bottom vertices are fully connected. Applying $X(\pi/4)$ on the upper vertices and $Z(\pi/4)$ on the bottom vertices maps one graph state to the other. Proving that those two graphs are not LC-equivalent is more involved, a proof can be found in \cite{Tsimakuridze17}.}
\label{fig:ce}
\end{figure}
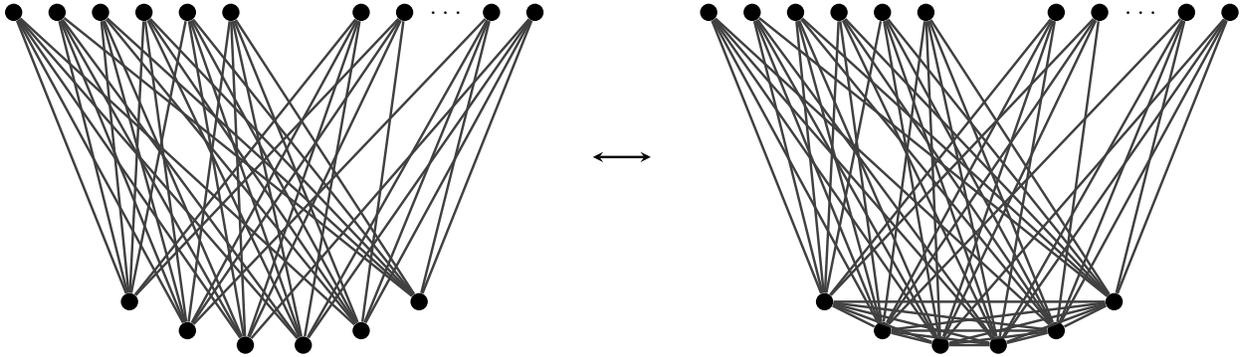

\section{Local complementation}

Local complementation is a graphical operation that exactly captures the LC-equivalence of graph states.

\subsection{Definition and properties}

Local complementation on a vertex $u$ of a graph consists in complementing the subgraph induced by the neighborhood of $u$. Formally, a local complementation on $u$ maps the graph $G$ to the graph $G\star u= G\Delta K_{N_G(u)}$ where $\Delta$ denotes the symmetric difference on edges and $K_A$ is the complete graph on the vertices of $A \se V$. Local complementation is illustrated in Figure \ref{fig:local_complementation}.

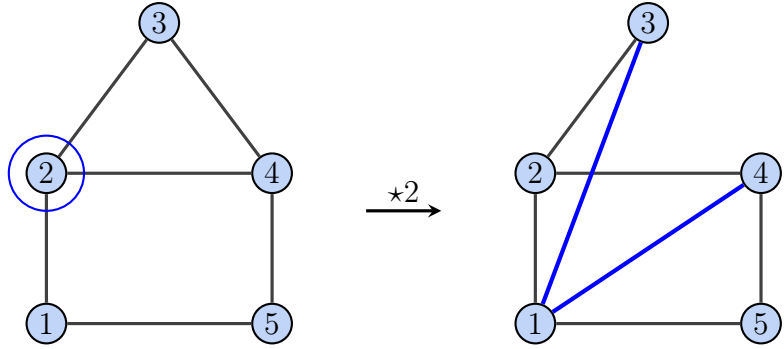
\begin{figure}[h!]
\centering
\scalebox{1}{
\begin{tikzpicture}[scale = 0.5]
\begin{scope}[every node/.style=vertices]

    \node (U1) at (0,0) {1};
    \node (U2) at (0,4) {2};
    \node (U3) at (3,8) {3};
    \node (U4) at (6,4) {4};
    \node (U5) at (6,0) {5};
    
\end{scope}
\begin{scope}[every edge/.style=edges]
              
    \path [-] (U1) edge node {} (U5);
    \path [-] (U1) edge node {} (U2);
    \path [-] (U2) edge node {} (U3);
    \path [-] (U2) edge node {} (U4);
    \path [-] (U3) edge node {} (U4);
    \path [-] (U4) edge node {} (U5);

\end{scope}

\draw[blue, thick] (0,4) circle (1);

\draw [-stealth, label=above:a, very thick](8.5,3) -- (10.5,3);
\draw (9.5,3.5) node(){$\star 2$};

\begin{scope}[every node/.style=vertices]

    \node (U1) at (13,0) {1};
    \node (U2) at (13,4) {2};
    \node (U3) at (16,8) {3};
    \node (U4) at (19,4) {4};
    \node (U5) at (19,0) {5};
    
\end{scope}
\begin{scope}[every edge/.style=edges]
              
    \path [-] (U1) edge node {} (U5);
    \path [-] (U1) edge node {} (U2);
    \path [-] (U2) edge node {} (U3);
    \path [-] (U2) edge node {} (U4);
    \path [-] (U4) edge node {} (U5);    

\end{scope}

\begin{scope}[every edge/.style={draw=blue,ultra thick}]

    \path [-] (U1) edge node {} (U3);
    \path [-] (U1) edge node {} (U4);

\end{scope}

\end{tikzpicture}
}
\caption{Example of a local complementation. Vertex 2 is related to vertices 1,3 and 4. Thus, the local complementation on vertex 2 toggles each edge between vertices 1,3 and 4. That is, the edge between vertices 3 and 4 is removed; while vertices 1 and 3 (resp. 1 and 4) share no edge, thus an edge is created. Formally, the graph $G$ is mapped to $G \star 2 = G\Delta K_{N_G(2)} = G\Delta K_{1,3,4}$.}
\label{fig:local_complementation}
\end{figure}

Local complementation was formalized and thoroughly studied by André Bouchet in the 80s \cite{Bouchet1987,Bouchet1987digraph,Bouchet1987reducing,Bouchet1988,Bouchet1988TransformingTB,Bouchet1989,Bouchet1991,Bouchet1993,Bouchet1994,Bouchet2001}. According to Bouchet, local complementation was prior introduced in the 60s by Anton Kotzig \cite{kotzig1968eulerian,kotzig1977quelques}.

Below are three basic properties of local complementation that follow from the definition.

\begin{proposition}
    Local complementation is self-inverse i.e. $G \star u \star u = G$.
\end{proposition}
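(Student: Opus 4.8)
The statement to prove is that local complementation is self-inverse: $G \star u \star u = G$.

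The plan is to work directly from the definition $G \star u = G \Delta K_{N_G(u)}$. The first thing I would check is that applying a local complementation at $u$ does not change the neighborhood of $u$: the edges toggled by $\star u$ are precisely those between pairs of distinct vertices of $N_G(u)$, and since $u \notin N_G(u)$ (the graph is simple), no edge incident to $u$ is ever toggled. Hence $N_{G \star u}(u) = N_G(u)$.

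Once that observation is in place, the computation is immediate. Since $N_{G\star u}(u) = N_G(u)$, we have $K_{N_{G\star u}(u)} = K_{N_G(u)}$, so
\begin{equation*}
    G \star u \star u = (G \Delta K_{N_G(u)}) \Delta K_{N_{G\star u}(u)} = (G \Delta K_{N_G(u)}) \Delta K_{N_G(u)} = G \Delta (K_{N_G(u)} \Delta K_{N_G(u)}) = G,
\end{equation*}
using associativity of the symmetric difference on edge sets and the fact that $A \Delta A = \emptyset$ for any set $A$.

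The only subtle point — and thus the main thing to get right — is the invariance $N_{G\star u}(u) = N_G(u)$; everything else is the routine algebra of symmetric differences. I would make sure to state explicitly that this relies on $G$ being simple (no loop at $u$), since that is exactly what guarantees the complete graph $K_{N_G(u)}$ involves no edge touching $u$.
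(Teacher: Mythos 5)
Your proof is correct and follows exactly the route the paper intends: the paper states this property as following directly from the definition $G\star u = G\Delta K_{N_G(u)}$, and your key observation that $N_{G\star u}(u)=N_G(u)$ is precisely the content of the paper's \cref{prop:lc_neighborhood}, after which the symmetric-difference cancellation is immediate. Nothing is missing.
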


\begin{proposition}\label{prop:lc_neighborhood}
    Local complementation on a vertex $u$ preserves the neighborhood of $u$ i.e. $N_{G\star u}(u) = N_{G}(u)$. More generally, if $u \not\sim_{G} v$, then $N_{G\star v}(u) = N_{G}(v)$. If $u \sim_{G} v$, $N_{G\star u}(v) = N_G(u) \Delta N_G(v) \Delta \{v\}$.
\end{proposition}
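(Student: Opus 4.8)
The plan is to argue directly from the definition $G \star p = G \Delta K_{N_G(p)}$, where $p$ denotes the pivot vertex. Unwinding it, two distinct vertices $a,b$ satisfy $a\sim_{G\star p} b$ if and only if exactly one of the conditions $a\sim_G b$ and $\{a,b\}\se N_G(p)$ holds. Consequently, for a fixed vertex $w$, the vertices $x$ whose adjacency to $w$ changes under $\star p$ are exactly those in $N_G(p)\sm\{w\}$ when $w\in N_G(p)$, and there are none when $w\notin N_G(p)$. All three assertions are instances of this for suitable $(p,w)$, so I would split according to whether $w\in N_G(p)$.

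If $w\notin N_G(p)$, no adjacency of $w$ changes, hence $N_{G\star p}(w)=N_G(w)$. Taking $p=w=u$ (valid since $G$ simple gives $u\notin N_G(u)$) yields $N_{G\star u}(u)=N_G(u)$, and taking $p=v$, $w=u$ with $u\not\sim_G v$ yields the ``more generally'' clause $N_{G\star v}(u)=N_G(u)$.

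If $w\in N_G(p)$ — the instance $p=u$, $w=v$, where $v\in N_G(u)$ since $u\sim_G v$ — then $N_{G\star p}(w)$ is obtained from $N_G(w)$ by toggling membership of every vertex in $N_G(p)\sm\{w\}$, i.e.\ $N_{G\star p}(w)=N_G(w)\,\Delta\,\bigl(N_G(p)\sm\{w\}\bigr)$. Since $w\in N_G(p)$, we may rewrite $N_G(p)\sm\{w\}=N_G(p)\,\Delta\,\{w\}$, so $N_{G\star p}(w)=N_G(p)\,\Delta\,N_G(w)\,\Delta\,\{w\}$, which is the last assertion after substituting back $p=u$, $w=v$. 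I do not expect a genuine obstacle here: this is a definitional unwinding, and the only care needed is the bookkeeping in this last case — it is cleanest to reason throughout in terms of toggling set membership rather than checking adjacency vertex by vertex, using that $G$ simple gives $p\notin N_G(p)$ and $w\notin N_G(w)$.
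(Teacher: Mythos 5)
Your proof is correct, and it is exactly the definitional unwinding the paper intends: the paper gives no explicit proof of this proposition, stating only that these properties "follow from the definition" $G\star u = G\Delta K_{N_G(u)}$, which is precisely the case analysis on whether the fixed vertex lies in $N_G(p)$ that you carry out. One remark: in the second clause you (correctly) derive $N_{G\star v}(u)=N_G(u)$, whereas the statement as printed reads $N_{G\star v}(u)=N_G(v)$ — that right-hand side is a typo in the paper (it is false already for the path $1\text{--}2\text{--}3\text{--}4$ with $u=1$, $v=4$), so you have proved the intended, corrected assertion.
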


\begin{proposition}
    Two local complementations on two unadjacent vertices commute i.e. if $u \not\sim_{G} v$ then $G \star u \star v = G \star v \star u$.
\end{proposition}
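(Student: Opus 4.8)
The plan is to argue directly from the symmetric-difference description $G \star u = G \Delta K_{N_G(u)}$, keeping careful track of how neighborhoods change under a local complementation. The only fact I really need is the following bookkeeping observation, which I would establish first: since $u \not\sim_G v$ we have $v \notin N_G(u)$, and every edge toggled by the local complementation on $u$ lies inside $N_G(u)$, hence no such edge is incident to $v$. Therefore $N_{G \star u}(v) = N_G(v)$, and by the symmetric argument $N_{G \star v}(u) = N_G(u)$. (This is exactly the unadjacent case of \cref{prop:lc_neighborhood}, so it can simply be cited; if one prefers a self-contained proof, it follows from the definition by noting that $K_{N_G(u)}$ contains no edge incident to $v$.)

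With this in hand the computation is immediate. Expanding the left-hand side, $G \star u \star v = (G \Delta K_{N_G(u)}) \Delta K_{N_{G\star u}(v)} = G \Delta K_{N_G(u)} \Delta K_{N_G(v)}$, where the last equality uses $N_{G\star u}(v) = N_G(v)$. Expanding the right-hand side in the same way gives $G \star v \star u = G \Delta K_{N_G(v)} \Delta K_{N_G(u)}$. Since the symmetric difference of edge sets is commutative (and associative), the two expressions agree, so $G \star u \star v = G \star v \star u$.

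I do not expect a genuine obstacle here: the entire content of the proposition is the observation that a local complementation on $u$ leaves the neighborhood of every non-neighbor of $u$ untouched, after which commutativity of $\Delta$ closes the argument. The only thing to be mildly careful about is making sure the intermediate graph used in each order has the right neighborhoods before applying the second complementation — which is precisely what the first step records.
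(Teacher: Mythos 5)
Your proof is correct and is exactly the argument the paper intends: the paper states this proposition without proof (as one of three properties that ``follow from the definition''), and your observation that a local complementation at $u$ toggles only edges inside $N_G(u)$, hence leaves $N_G(v)$ untouched when $u \not\sim_G v$, followed by commutativity of $\Delta$, is the standard derivation. One minor aside: the clause of \cref{prop:lc_neighborhood} you cite appears to contain a typo in the paper ($N_{G\star v}(u) = N_G(v)$ where $N_G(u)$ is clearly meant), but your self-contained justification of the needed fact is correct regardless.
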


\subsection{Pivoting}

\begin{definition}
    A pivoting on an edge $(u,v)$ maps a graph $G$ to $G\wedge uv \defeq G \star u \star v \star u = G \star v \star u \star v$.
\end{definition}

An illustration of the action of pivoting is provided in Figure \ref{fig:pivoting}.

\begin{figure}[h!]
\centering
\scalebox{1}{
\begin{tikzpicture}
\begin{scope}[every node/.style={circle,minimum size=10pt,thick,draw, fill=colorvertices}]

    \node (u1) at (1,2) {$u$};
    \node (v1) at (3,2) {$v$};

    \node (u2) at (9,2) {$v$};
    \node (v2) at (11,2) {$u$};
    
\end{scope}
\begin{scope}[every node/.style={circle,minimum size=25pt,thick,draw, fill=colorvertices}]
    \node (A1) at (0,0) {$A$};
    \node (B1) at (2,0) {$B$};
    \node (C1) at (4,0) {$C$};

    \node (A2) at (8,0) {$A$};
    \node (B2) at (10,0) {$B$};
    \node (C2) at (12,0) {$C$};
\end{scope}
\begin{scope}[every edge/.style={draw=darkgray,very thick}]

    \path [-] (u1) edge node {} (v1);
    \path [-] (u2) edge node {} (v2);
              
    \path [-] (u1) edge node {} (A1);
    \path [-] (u1) edge node {} (B1);
    \path [-] (v1) edge node {} (B1);
    \path [-] (v1) edge node {} (C1);   

    \path [-] (u2) edge node {} (A2);
    \path [-] (u2) edge node {} (B2);
    \path [-] (v2) edge node {} (B2);
    \path [-] (v2) edge node {} (C2);  

    \path [-] (A2) edge node {} (B2);
    \path [-] (B2) edge node {} (C2);     
    \path [-] (A2) edge[bend right=30] node {} (C2);

\end{scope}
\begin{scope}[style={draw=black,very thick}]

    \draw [-stealth](5.5,1) -- (6.5,1);
    \draw (6,1.3) node(){$\wedge_{uv}$};

\end{scope}
\end{tikzpicture}
}
\caption{Illustration of a pivoting on an edge $(u,v)$ in a graph $G$. $A$ denotes $N_G(u)\setminus \left( \{v\} \cup N_G(v) \right)$, $B$ denotes $N_G(u)\cap N_G(v)$, and $C$ denotes $N_G(v)\setminus  \left( \{u\} \cup N_G(u) \right)$. The vertices adjacent neither to $u$ nor $v$ are not represented, as the pivoting does not modify their neighborhood. To simplify the drawing, we assume that the vertices of the sets $A$, $B$ and $C$ are not adjacent. In general, the pivoting performs a symmetric difference with the complete bipartite graph whose bipartition is $(A,B)$ (resp. $(A,C)$, $(B,C)$). Also, the pivoting swaps the neighborhoods of $u$ and $v$.}
\label{fig:pivoting}
\end{figure}
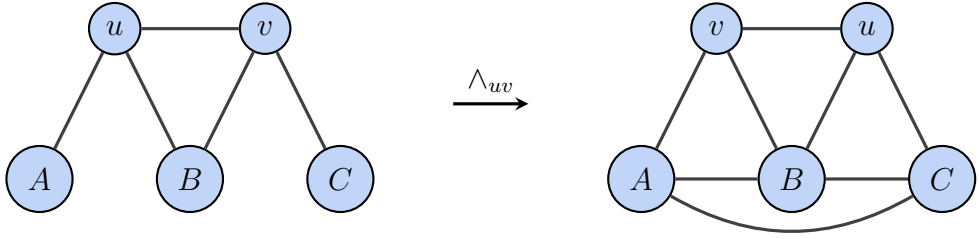

\begin{proposition} \label{prop:pivoting_preserves_bipartite}
    If $G$ is bipartite with respect to a bipartition $V = L \cup R$ of the vertices, then, given $u \in L$ and $v \in R$ such that $u \sim_G v$, $G \wedge uv$ is bipartite with respect to the bipartition $L \Delta \{u,v\}$, $R \Delta \{u,v\}$. More precisely, $G \wedge uv$ is obtained from $G$ by toggling all edges between $N_G(u)\sm \{v\}$ and $N_G(v)\sm \{u\}$, followed by swapping $u$ and $v$.
\end{proposition}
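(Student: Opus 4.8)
The plan is to unfold the pivot $G\wedge uv = G\star u\star v\star u$ into its three constituent local complementations and track each one, using only the defining rule (a local complementation at a vertex $x$ toggles exactly the edges between pairs of vertices both lying in the \emph{current} neighborhood of $x$) together with \cref{prop:lc_neighborhood} to keep track of the relevant neighborhoods. First I would record the consequences of bipartiteness. Since $u\in L$ and $v\in R$, we have $N_G(u)\se R$ and $N_G(v)\se L$; hence $N_G(u)\cap N_G(v)=\emptyset$ (the set called $B$ in Figure~\ref{fig:pivoting} is empty here), the sets $A\defeq N_G(u)\sm\{v\}$ and $C\defeq N_G(v)\sm\{u\}$ are disjoint, each of $A$ and $C$ is an independent set of $G$, and no vertex outside $A\cup C\cup\{u,v\}$ is adjacent to $u$ or to $v$.

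Next I would compute $G_1=G\star u$, then $G_2=G_1\star v$, then $G_3=G_2\star u$. The step $\star u$ turns $A$ into a clique and joins $v$ to all of $A$, with $N_{G_1}(u)=N_G(u)$ and $N_{G_1}(v)=A\cup C\cup\{u\}$. The step $\star v$ (whose governing neighborhood is now $A\cup C\cup\{u\}$) makes $A$ independent again, makes $C$ a clique, complements the bipartite graph of $A$-$C$ edges, and swaps $u$'s attachment from $A$ to $C$, leaving $N_{G_2}(u)=C\cup\{v\}$. The final $\star u$ (whose governing neighborhood is now $C\cup\{v\}$) makes $C$ independent again and disconnects $v$ from $C$. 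Systematic bookkeeping over the vertex classes ($A$-$A$, $C$-$C$, $A$-$C$, $u$-$A$, $u$-$C$, $v$-$A$, $v$-$C$, the edge $uv$, and everything incident to a vertex outside $A\cup C\cup\{u,v\}$) shows that the net effect is exactly: complement the bipartite graph of edges between $A$ and $C$, leave all other edges untouched, and replace the neighborhoods of $u$ and $v$ by $C\cup\{v\}$ and $A\cup\{u\}$ respectively. Equivalently, this is toggling the $A$-$C$ edges followed by swapping the labels $u$ and $v$, which is the ``more precisely'' assertion.

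Finally I would deduce bipartiteness from this explicit description. Toggling edges only between $A\se R$ and $C\se L$ keeps the graph bipartite with respect to $(L,R)$; the subsequent swap of $u$ and $v$ then forces $u$ (now adjacent only into $C\cup\{v\}\se L$) onto the $R$ side and $v$ (now adjacent only into $A\cup\{u\}\se R$) onto the $L$ side, while every other vertex stays on its original side, which is precisely the bipartition $L\Delta\{u,v\}$, $R\Delta\{u,v\}$. I expect the only real difficulty to be organizational: there is no standalone proposition for the generic action of a pivot (only Figure~\ref{fig:pivoting}), so the class-by-class bookkeeping has to be carried out in full; the single point that needs care is that the neighborhood governing the third local complementation is $C\cup\{v\}$, computed \emph{after} the first two steps, and not the original $N_G(u)$.
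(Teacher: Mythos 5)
Your proposal is correct, and it actually supplies more than the paper does: the paper states this proposition without proof, treating it as the specialization to bipartite graphs of the general description of pivoting given (also without proof) in the caption of Figure~\ref{fig:pivoting}, where the middle class $B=N_G(u)\cap N_G(v)$ becomes empty. Your step-by-step unfolding of $G\wedge uv=G\star u\star v\star u$ is the standard way to establish that general description, and your bookkeeping checks out: with $A=N_G(u)\sm\{v\}\se R$ and $C=N_G(v)\sm\{u\}\se L$ disjoint and independent, the $A$-$A$ edges go independent $\to$ clique $\to$ independent, the $C$-$C$ edges go independent $\to$ clique $\to$ independent, the $A$-$C$ edges are complemented exactly once (at $\star v$), the edge $uv$ survives all three steps, no edge incident to a vertex outside $A\cup C\cup\{u,v\}$ is ever touched, and the neighborhoods end up as $N(u)=C\cup\{v\}$ and $N(v)=A\cup\{u\}$, which is precisely ``toggle $A$-$C$, then swap $u$ and $v$.'' You correctly identify the one point requiring care, namely that the neighborhood governing the third local complementation is $C\cup\{v\}$ as computed in $G_2$ via \cref{prop:lc_neighborhood}, not the original $N_G(u)$. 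The concluding bipartiteness argument (toggled edges stay between $A\se R$ and $C\se L$; only $u$ and $v$ change sides) is also correct. No gaps.
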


An illustration of the action of pivoting in bipartite graphs is provided in Figure \ref{fig:pivoting_bipartite}.

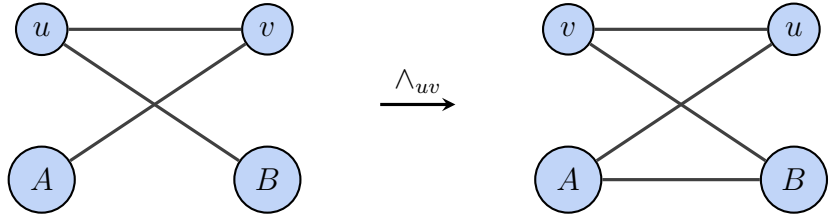
\begin{figure}[h!]
\centering
\scalebox{1}{
\begin{tikzpicture}
\begin{scope}[every node/.style={circle,minimum size=10pt,thick,draw, fill=colorvertices}]

    \node (u1) at (1,2) {$u$};
    \node (v1) at (4,2) {$v$};

    \node (u2) at (8,2) {$v$};
    \node (v2) at (11,2) {$u$};
    
\end{scope}
\begin{scope}[every node/.style={circle,minimum size=25pt,thick,draw, fill=colorvertices}]
    \node (A1) at (1,0) {$A$};
    \node (B1) at (4,0) {$B$};

    \node (A2) at (8,0) {$A$};
    \node (B2) at (11,0) {$B$};
\end{scope}
\begin{scope}[every edge/.style={draw=darkgray,very thick}]
              
    \path [-] (u1) edge node {} (v1);
    \path [-] (u2) edge node {} (v2);

    \path [-] (u1) edge node {} (B1);
    \path [-] (v1) edge node {} (A1); 

    \path [-] (u2) edge node {} (B2);
    \path [-] (v2) edge node {} (A2);
    \path [-] (A2) edge node {} (B2);
\end{scope}
\begin{scope}[style={draw=black,very thick}]

    \draw [-stealth](5.5,1) -- (6.5,1);
    \draw (6,1.3) node(){$\wedge_{uv}$};

\end{scope}
\end{tikzpicture}
}
\caption{Illustration of a pivoting on an edge $(u,v)$ in a bipartite graph $G$. $A$ denotes $N_G(u)\sm\{v\}$ and  $B$ denotes $N_G(v)\sm\{u\}$. The vertices adjacent neither to $u$ nor $v$ are not represented, as the pivoting does not modify their neighborhood. To simplify the drawing, we assume that the vertices of the sets $A$ and $B$ are not adjacent. In general, the pivoting performs a symmetric difference with the complete bipartite graph whose bipartition is $(A,B)$. Also, the pivoting swaps the neighborhoods of $u$ and $v$.}
\label{fig:pivoting_bipartite}
\end{figure}

\subsection{Bouchet's algorithm}\label{subsec:bouchet}

Equivalence by local complementation can be efficiently decided thanks to an algorithm found by André Bouchet.

\begin{theorem}[\cite{Bouchet1991}]
    There exists an algorithm that decides if two graphs are related by local complementations with runtime $O(n^{4})$, where $n$ is the order of the graphs.
\end{theorem}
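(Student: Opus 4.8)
The plan is to translate the question into linear algebra over $\mathbb{F}_2$, isolate the small part of it that is genuinely nonlinear, and then dispatch that part by a greedy argument of the kind Bouchet developed for isotropic systems. I would first use the correspondence recalled above between local complementation and LC-equivalence — $G_1$ and $G_2$ are related by local complementations iff the graph states $\ket{G_1}$ and $\ket{G_2}$ are LC-equivalent — so that it suffices to decide the latter. By \cref{prop:stabilizer_of_graph_state}, $\ket{G}$ is stabilized by $X_u Z_{N_G(u)}$ for $u \in V$; recording the $X$- and $Z$-supports of these generators gives the check matrix $(I \mid \Gamma_G)$ over $\mathbb{F}_2$, where $\Gamma_G$ is the adjacency matrix, and (signs being irrelevant here, since Paulis fix graph states) the stabilizer group of $\ket{G}$ is the row space of this matrix. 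By \cref{prop:stabilizer_LC} and the look-up table of \cref{fig:clifford}, conjugation by a single-qubit Clifford acts on $\mathbb{F}_2^2$ as an element of $\mathrm{GL}_2(\mathbb{F}_2)$ — exactly $6$ such maps, matching the $6$ classes of single-qubit Cliffords modulo Pauli — so a local Clifford acts, modulo Paulis, as a block-diagonal symplectic matrix whose $u$-th block is $\left(\begin{smallmatrix} a_u & b_u \\ c_u & d_u \end{smallmatrix}\right)\in\mathrm{GL}_2(\mathbb{F}_2)$. Collecting these entries into diagonal matrices $A,B,C,D$, one obtains: $G_1$ and $G_2$ are LC-equivalent iff there exist such $A,B,C,D$ (i.e. $a_ud_u+b_uc_u=1$ for all $u$) with $C+D\Gamma_{G_1}$ invertible and $\Gamma_{G_2} = (A+B\Gamma_{G_1})(C+D\Gamma_{G_1})^{-1}$.

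Clearing the denominator turns this into $\Gamma_{G_2}(C+D\Gamma_{G_1}) = A+B\Gamma_{G_1}$, which, for fixed $\Gamma_{G_1},\Gamma_{G_2}$, is an $\mathbb{F}_2$-\emph{linear} system of $n^2$ equations in the $4n$ scalar unknowns $\{a_u,b_u,c_u,d_u\}$; its solution space can be computed by Gaussian elimination in $O(n^4)$ time. What is not linear are the $n$ block-determinant conditions $a_ud_u+b_uc_u=1$ together with the global invertibility of $C+D\Gamma_{G_1}$, and the naive approach of searching the affine solution space of the linear part for a point meeting these would be exponential. Overcoming this is the heart of the matter and the hard part of the proof.

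To do so I would follow Bouchet and pass to isotropic systems: a labelled graph presents an isotropic system, LC-equivalence of $G_1$ and $G_2$ amounts to an isomorphism of the two associated isotropic systems that fixes $V$ pointwise, and one constructs such an isomorphism — equivalently, a feasible tuple of per-vertex blocks — \emph{greedily}, vertex by vertex. Processing the vertices in a fixed order, at each step the already-fixed blocks together with the linear relations above, restricted to the processed part, cut the $6$ candidate blocks at the current vertex down to a set one can enumerate in constant time; pick any feasible candidate, propagate the consequences by re-solving an $n\times n$ system over $\mathbb{F}_2$ in $O(n^3)$ time, and report non-equivalence if at some vertex no candidate survives. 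The substantive lemma — where all the real work lies — is that this procedure never needs to backtrack: any partial assignment consistent with the processed constraints extends to a full solution whenever one exists. I would prove this by an exchange/augmentation argument in the isotropic-system framework, analogous to basis exchange in matroids (compatible ``Eulerian vectors'' can be modified one coordinate at a time). Granting the lemma, the algorithm performs $O(n)$ propagation steps of cost $O(n^3)$ each, for total runtime $O(n^4)$, as claimed.
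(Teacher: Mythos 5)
Your first paragraph reproduces the standard linearization that the paper also uses: modulo Paulis a local Clifford acts as a tuple of blocks $\left(\begin{smallmatrix} a_u & b_u \\ c_u & d_u \end{smallmatrix}\right)\in\mathrm{GL}_2(\mathbb{F}_2)$, the cleared equation $\Gamma_{G_2}(C+D\Gamma_{G_1})=A+B\Gamma_{G_1}$ is the linear condition $(i)$, and the per-vertex determinant conditions $a_ud_u+b_uc_u=1$ are the nonlinear condition $(ii)$. Up to that point you and the paper agree, and you correctly identify that the whole difficulty is searching the affine solution space $\mathcal S$ of $(i)$ for a point satisfying $(ii)$.

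Where the proposal breaks down is the greedy vertex-by-vertex resolution of that search. Your ``substantive lemma'' --- that a partial block assignment consistent with $(i)$ and with the determinant conditions at the processed vertices always extends to a full solution when one exists --- is not a routine exchange argument; as stated it is false in general. The set $\mathcal C\subseteq\mathcal S$ of solutions to both $(i)$ and $(ii)$ is cut out by $n$ quadratic conditions and can a priori be a small, unstructured subset of $\mathcal S$ (e.g.\ a single point inside a $4$-dimensional $\mathcal S$); in that situation several block choices at the first vertex are locally feasible (they appear in elements of $\mathcal S$ satisfying the determinant condition at that vertex), but only one of them belongs to the unique element of $\mathcal C$, and your propagation step, which only re-solves the \emph{linear} system, cannot see that the remaining quadratic conditions have become unsatisfiable. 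So the algorithm either backtracks (losing the $O(n^4)$ bound) or needs exactly the structural fact you are trying to avoid. Bouchet's actual argument --- and the one the paper gives --- is a dichotomy: either $\dim(\mathcal S)\leqslant 4$, in which case one enumerates the at most $16$ elements of $\mathcal S$ directly, or $\mathcal C$ is empty or an affine subspace of $\mathcal S$ of codimension at most $2$, in which case some sum $b+b'$ of two basis vectors of $\mathcal S$ lies in $\mathcal C$ and one enumerates the $O(n^2)$ such sums. That lemma is the ``tour de force'' of the proof and cannot be replaced by the no-backtracking claim without proof. (Note also that the genuinely greedy construction in Bouchet's framework appears only \emph{after} a solution $(A,B,C,D)$ is in hand, to convert it into an explicit sequence of local complementations; it does not help you find the solution.)
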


In this section we explain the algorithm. 
First, Bouchet proved that the existence of a sequence of local complementations relating two graphs defined on the same vertex set $V$, reduces to the existence of subsets of vertices satisfying some equations.

\begin{proposition}[\cite{Bouchet1991}] \label{prop:Bouchet}
Two graphs $G=(V,E)$, $G'=(V,E')$ are related by local complementations if and only if there exist $A,B,C,D\subseteq V$ such that
\begin{itemize}
\item[(i)] $\forall u,v \in V$,\\
$|B\cap N_G(u)\cap N_{G'}(v)| + |A\cap N_G(u)\cap \{v\}| +  |D\cap \{u\}\cap N_{G'}(v)| + |C\cap \{u\}\cap \{v\}| = 0\bmod 2$
\item[(ii)]
$(A\cap D)~\Delta~ (B\cap C) = V$ 
\end{itemize}

\end{proposition}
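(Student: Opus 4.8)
The plan is to translate both conditions into statements about the symplectic $\mathbb{F}_2$-picture of graph states. Identify $G$ (resp.\ $G'$) with its adjacency matrix $\Gamma$ (resp.\ $\Gamma'$), a symmetric zero-diagonal matrix over $\mathbb{F}_2$. By \cref{prop:stabilizer_of_graph_state}, the stabilizer group of $\ket G$, taken modulo signs, is the Lagrangian subspace $\mathcal L_G \defeq \{(s,\Gamma s):s\in\mathbb{F}_2^V\}\se\mathbb{F}_2^V\oplus\mathbb{F}_2^V$, the two blocks recording the $X$- and $Z$-parts of the generators; the proposition stating that two graph states related by Paulis and a global phase coincide shows that $\mathcal L_G$ determines $G$. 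A single-qubit Clifford acts on the pair $(x_u,z_u)$ attached to qubit $u$ by an element of $GL(2,\mathbb{F}_2)=Sp(2,\mathbb{F}_2)$ — this is exactly the list of conjugation actions on $\{X,Y,Z\}$ in \cref{fig:clifford}, and every family $(M_u)_{u\in V}$ of such actions lifts to a local Clifford. Hence, using the (recalled) fact that two graphs are related by local complementations iff the corresponding graph states are LC-equivalent, $G$ and $G'$ are related by local complementations iff there is a matrix $M=\left(\begin{smallmatrix}D&B\\C&A\end{smallmatrix}\right)$, written in the $(x_1,\dots,x_n\mid z_1,\dots,z_n)$ block decomposition with $A,B,C,D$ the $V\times V$ diagonal matrices recording the per-qubit actions, such that $M\mathcal L_G=\mathcal L_{G'}$.

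The matrix $M$ is symplectic exactly when every $2\times2$ vertex block $\left(\begin{smallmatrix}D_{uu}&B_{uu}\\C_{uu}&A_{uu}\end{smallmatrix}\right)$ lies in $GL(2,\mathbb{F}_2)$, i.e.\ has determinant $D_{uu}A_{uu}+B_{uu}C_{uu}=1$ for every $u$. Collecting these scalar identities into one identity of diagonal matrices yields $AD+BC=I$; passing to supports and identifying a diagonal $0/1$ matrix with its support, this reads $(A\cap D)\,\Delta\,(B\cap C)=V$, which is condition (ii). In particular (ii) already forces every vertex block, hence $M$ itself, to be invertible.

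It remains to match condition (i). Compute $M\mathcal L_G=\{\,((D+B\Gamma)s,\,(C+A\Gamma)s):s\in\mathbb{F}_2^V\,\}$; since $M$ is invertible this is an $n$-dimensional subspace, so it equals the $n$-dimensional $\mathcal L_{G'}=\{(t,\Gamma't):t\}$ iff it is contained in it, iff $(C+A\Gamma)s=\Gamma'(D+B\Gamma)s$ for all $s$, iff the matrix identity $\Gamma'(D+B\Gamma)=C+A\Gamma$ holds over $\mathbb{F}_2$. Transposing and using that all of $\Gamma,\Gamma',A,B,C,D$ are symmetric, this is equivalent to $D\Gamma'+\Gamma B\Gamma'=C+\Gamma A$; reading off its $(u,v)$ entry gives, mod $2$, that $|D\cap\{u\}\cap N_{G'}(v)|+|B\cap N_G(u)\cap N_{G'}(v)|$ equals $|C\cap\{u\}\cap\{v\}|+|A\cap N_G(u)\cap\{v\}|$ for all $u,v$, which is exactly condition (i). Combining the two translations proves the proposition in both directions.

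The delicate part is not any individual computation but pinning down the dictionary: that ``$G,G'$ related by local complementations'' is equivalent to ``$\mathcal L_G$ and $\mathcal L_{G'}$ differ by a local symplectic'' (this rests on the Pauli-invariance of graph states, on the single-qubit Clifford/$GL(2,\mathbb{F}_2)$ correspondence of \cref{fig:clifford}, and on the recalled characterisation of LC-equivalence by local complementation), and then keeping the conventions consistent so that $A,B,C,D$ land on the correct sides of the matrix identity and condition (i) comes out with precisely the four stated terms — the transpose step in particular is easy to overlook. Once the dictionary is fixed, both implications reduce to the one-line subspace/dimension argument above. One could alternatively follow Bouchet's original purely combinatorial route through isotropic systems, but the symplectic translation is shorter given the tools already available in this chapter.
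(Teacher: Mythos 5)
Your translation of conditions $(i)$ and $(ii)$ into the symplectic picture is accurate --- the identification of $(ii)$ with $AD+BC=I$ and of $(i)$ with $\Gamma B\Gamma'+\Gamma A+D\Gamma'+C=0$ is exactly the correspondence the paper records after the proof (and in \cref{prop:clifford_ABCD}), and your block computation of $M\mathcal L_G=\mathcal L_{G'}$ is correct. The ``only if'' direction of your argument is also sound, since it only needs the easy implication that local complementations can be implemented by local Cliffords.

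The genuine gap is in the ``if'' direction, where your chain is: solution to $(i),(ii)$ $\Rightarrow$ local symplectic $M$ with $M\mathcal L_G=\mathcal L_{G'}$ $\Rightarrow$ $\ket G$ and $\ket{G'}$ LC-equivalent $\Rightarrow$ $G$ and $G'$ related by local complementations. The last arrow is the hard direction of \cref{prop:lclc}, which you invoke as a ``recalled fact'' but which the paper proves \emph{as a consequence of} \cref{prop:Bouchet}: the section following Bouchet's algorithm explicitly derives ``LC-equivalent $\Rightarrow$ related by local complementations'' from the correspondence between solutions of $(i),(ii)$ and local Clifford operators. Within this paper's development your argument is therefore circular; to break the circle you would have to import Van den Nest et al.'s theorem as an external black box, whose proof carries essentially the same combinatorial content as the proposition you are trying to establish. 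The paper's proof avoids this entirely: it is a self-contained, purely graph-theoretic induction for the ``only if'' direction, and an explicit algorithm for the ``if'' direction that tracks how $A,B,C,D$ transform under local complementation and pivoting, and outputs a concrete sequence of at most $\lfloor 3n/2\rfloor$ local complementations --- a bound that the subsequent Remark and \cref{prop:lclc} rely on and that your non-constructive argument does not supply. A secondary, fixable point: equality of the Lagrangians $M\mathcal L_G=\mathcal L_{G'}$ only determines the stabilizer groups up to signs, so passing from it to an actual local Clifford mapping $\ket G$ to $\ket{G'}$ requires a Pauli sign-fixing step that you do not mention.
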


The original proof involves isotropic systems~\cite{Bouchet1987}. Below is an alternative, self-contained proof, we have introduced in \citepub{claudet2025deciding}.

\begin{proof}
We begin by proving by induction the "only if" part of the statement. First, notice that equations $(i)$ and $(ii)$ are satisfied when $G = G'$ with $A = D = V$ and $B = C =  \emptyset$. Indeed, let $u,v \in V$:
\begin{align*}
    & |B\cap N_G(u)\cap N_{G}(v)| + |A\cap N_G(u)\cap \{v\}| +  |D\cap \{u\}\cap N_{G}(v)| + |C\cap \{u\}\cap \{v\}|\\
    &= |N_G(u)\cap \{v\}| +  |\{u\}\cap N_{G}(v)|\\
    &= 0 \bmod 2
\end{align*}
Furthermore, $A\cap D~\Delta~ B\cap C = V$.

Now, suppose that $G$ and $G'$ are related by local complementations and there exist $A,B,C,D\subseteq V$  satisfying $(i)$ and $(ii)$. Applying a local complementation on some vertex $a$ in $G$ results in the graph $G \star a$, which is also related by local complementations to $G'$. Define 
\begin{itemize}
    \item $A' = A \Delta ( \{a\}\cap C)$
    \item $B' = B \Delta ( \{a\}\cap D)$
    \item $C' = C \Delta ( N_{G}(a)\cap A)$
    \item $D' = D \Delta ( N_{G}(a)\cap B)$
\end{itemize}
Let us show that $A',B',C',D'$  satisfy $(i)$ and $(ii)$ for the graphs $G \star a$ and $G'$.

\subparagraph{Proof that $A',B',C',D'$  satisfy $(i)$.}
Let $u,v \in V$.
\begin{align*}
    & |B'\cap N_{G \star a}(u)\cap N_{G'}(v)| + |A'\cap N_{G \star a}(u)\cap \{v\}| +  |D'\cap \{u\}\cap N_{G'}(v)| + |C'\cap \{u\}\cap \{v\}|\\
    &=|(B \Delta ( \{a\}\cap D))\cap N_{G \star a}(u)\cap N_{G'}(v)| + |(A \Delta ( \{a\}\cap C))\cap N_{G \star a}(u)\cap \{v\}|\\
    &~~+ |(D \Delta ( N_{G}(a)\cap B))\cap \{u\}\cap N_{G'}(v)| + |(C \Delta ( N_{G}(a)\cap A))\cap \{u\}\cap \{v\}|
\end{align*}
If $u\not\sim_{G} a$, then $N_{G \star a}(u) = N_{G}(u)$ and $N_G(a)\cap \{u\} = N_G(u)\cap \{a\} = \emptyset$:
\begin{align*}
    &= |B\cap N_G(u)\cap N_{G'}(v)| + |A\cap N_G(u)\cap \{v\}| +  |D\cap \{u\}\cap N_{G'}(v)| + |C\cap \{u\}\cap \{v\}|\\
    &~~+ | \{a\}\cap D\cap N_G(u)\cap N_{G'}(v)| + |\{a\}\cap C\cap N_G(u)\cap \{v\}|\\
    &~~+ |N_{G}(a)\cap B\cap \{u\}\cap N_{G'}(v)| + |N_{G}(a)\cap A \cap \{u\}\cap \{v\}| \bmod 2\\
    &= 0 \bmod 2
\end{align*}
If $u\sim_{G} a$, then $N_{G \star a}(u) = N_{G}(u) \Delta N_{G}(a) \Delta \{u\}$, $N_G(a)\cap \{u\} = \{u\}$ and $N_G(u)\cap \{a\} = \{a\}$, thus:
\begin{align*}
    &=|(B \Delta ( \{a\}\cap D))\cap (N_{G}(u) \Delta N_{G}(a) \Delta \{u\})\cap N_{G'}(v)|\\
    &~~+ |(A \Delta ( \{a\}\cap C))\cap (N_{G}(u) \Delta N_{G}(a) \Delta \{u\})\cap \{v\}|\\
    &~~+ |(D \Delta ( N_{G}(a)\cap B))\cap \{u\}\cap N_{G'}(v)| + |(C \Delta ( N_{G}(a)\cap A))\cap \{u\}\cap \{v\}|\\
    &= |B\cap N_G(u)\cap N_{G'}(v)| + |A\cap N_G(u)\cap \{v\}| +  |D\cap \{u\}\cap N_{G'}(v)| + |C\cap \{u\}\cap \{v\}|\\
    &~~+|B\cap N_G(a)\cap N_{G'}(v)| + |A\cap N_G(a)\cap \{v\}| +  |D\cap \{a\}\cap N_{G'}(v)| + |C\cap \{a\}\cap \{v\}|\\
    &~~+|B\cap\{u\}\cap N_{G'}(v)|+|A\cap\{u\}\cap\{v\}|+|B\cap\{u\}\cap N_{G'}(v)|+|A\cap\{u\}\cap \{v\}|\\
    &~~+ |\{a\} \cap D \cap N_G(a) \cap N_{G'}(v)| + |\{a\} \cap C \cap N_G(a) \cap \{v\}|\\
    &~~+ |\{a\} \cap D \cap \{u\} \cap N_{G'}(v)| + |\{a\} \cap C \cap \{u\}\cap \{v\}|\\
    &= 0 \bmod 2
\end{align*}

\subparagraph{Proof that $A',B',C',D'$  satisfy $(ii)$.}
\begin{align*}
    & (A'\cap D')\Delta(B'\cap C')\\
    &= \left((A \Delta ( \{a\}\cap C))\cap (D \Delta ( N_{G}(a)\cap B))\right)\Delta\left((B \Delta ( \{a\}\cap D))\cap (C \Delta ( N_{G}(a)\cap A))\right)\\
    &= (A\cap D)\Delta(A\cap N_{G}(a)\cap B)\Delta(D\cap\! \{a\}\!\cap C)\Delta(B\cap C)\Delta(B \cap  N_{G}(a)\cap A)\Delta(C \!\cap\{a\}\!\cap D)\\
    &=(A\cap D)\Delta(B\cap C) = V
\end{align*}

Now we prove the "if" part of the statement. Let $G$ and $G'$ be two graphs defined on the same vertex set $V$ along with $A, B, C, D$ satisfying $(i)$ and $(ii)$. Condition $(ii)$ implies that for some vertex $u \in V$, 6 cases can occur:
\begin{enumerate}
    \item $u \in A \cap \overline B \cap \overline C \cap D$ %\Ncom{$I$}
    \item $u \in A \cap B \cap \overline C \cap D$ %\Ncom{$X(\pi/2)$}
    \item $u \in A \cap \overline B \cap C \cap D$ %\Ncom{$Z(\pi/2)$}
    \item $u \in \overline A \cap B \cap C \cap \overline D$ %\Ncom{$H$}
    \item $u \in A \cap B \cap C \cap \overline D$ %\Ncom{$X(\pi/2) H = H Z(\pi/2) = Z(\pi/2)X(\pi/2)$}
    \item $u \in \overline A \cap B \cap C \cap D$ %\Ncom{$Z(\pi/2) H = H X(\pi/2) = X(\pi/2)Z(\pi/2)$}
\end{enumerate}
We call $V_1$ (resp. $V_2$, $V_3$, $V_4$, $V_5$, $V_6$) the set of vertices in case 1 (resp. 2, 3, 4, 5, 6). Notice that $V = V_1$ implies $G = G'$, indeed condition $(ii)$ implies that for any $u,v \in V$, $|N_G(u)\cap \{v\}| +  |\{u\}\cap N_{G'}(v)| = 0 \bmod 2$ i.e.~$u \sim_G v \Leftrightarrow u \sim_{G'} v$. Furthermore, applying a local complementation on a vertex $a$ of $G$ changes the sets $A,B,C,D$, thus it changes in which case a vertex is. The changes are given in the following table (the case in which unwritten vertices remain unchanged). 

\begin{center}
    \begin{tabular}{|c|c|}
    \hline
    \multicolumn{2}{|c|} {Case of $a$ in}\\
    $~~G~~$& $G\star a$\\
    \hline
    1&2\\
    \hline
    2&1\\
    \hline
    3&6\\
    \hline
    4&5\\
    \hline
    5&4\\
    \hline
    6&3\\
    \hline
    \end{tabular}\qquad\begin{tabular}{|c|c|}
    \hline
    \multicolumn{2}{|c|}{Case of $u {\in}N_G(a)$ in}\\
    $~~~G~~~$& $G\star a$\\
    \hline
    1&3\\
    \hline
    2&5\\
    \hline
    3&1\\
    \hline
    4&6\\
    \hline
    5&2\\
    \hline
    6&4\\
    \hline
    \end{tabular}
    \qquad \begin{tabular}{|c|c|}
    \hline
    \multicolumn{2}{|c|}{Case of $a$ (or $b$) in}\\
    $~~G~~$&$G\wedge a b$\\
    \hline
    1&4\\
    \hline
    2&6\\
    \hline
    3&5\\
    \hline
    4&1\\
    \hline
    5&3\\
    \hline
    6&2\\
    \hline
    \end{tabular}
\end{center}

The table indicates that if $G$ and $G'$ are related by local complementations and $A, B, C, D$ satisfy $(i)$ and $(ii)$, then, for $G \wedge a b$ and $G'$, $A', B', C', D'$ satisfy $(i)$ and $(ii)$ where:
\begin{itemize}
    \item $A' = (A \sm \{a,b\}) \cup (C \cap \{a,b\})$
    \item $B' = (B \sm \{a,b\}) \cup (D \cap \{a,b\})$
    \item $C' = (C \sm \{a,b\}) \cup (A \cap \{a,b\})$
    \item $D' = (D \sm \{a,b\}) \cup (B \cap \{a,b\})$
\end{itemize}

Let us design an algorithm that produces a sequence of (possibly repeating) vertices $s = (a_1, \cdots, a_m)$ such that $G' = G \star a_1 \star \cdots \star a_m$. Initialize $G_0 = G$, $s_0 = [~]$ an empty sequence of vertices and $A_0 = A$, $B_0 = B$, $C_0 = C$, $D_0 = D$.

\begin{enumerate}
    \item If there is a vertex $u$ in case 2 or 6: let $s_0 \leftarrow s_0 + [u]$, $G_0 \leftarrow G_0 \star u$, $A_0 \leftarrow A_0 \Delta ( \{u\}\cap C_0)$, $B_0 \leftarrow B_0 \Delta ( \{u\}\cap D_0)$, $C_0 \leftarrow C_0 \Delta ( N_{G_0}(u)\cap A_0)$, $D_0 \leftarrow D_0 \Delta ( N_{G_0}(u)\cap B_0)$. Repeat until there is no vertex in case 2 or 6 left. 
    \item If there is a vertex $u$ in case 4 or 5: let $v \in N_{G_0}(u)$ such that $v$ is also in case 4 or 5. 
    Let $s_0 \leftarrow s_0 + [u, v, u]$, $G_0 \leftarrow G_0 \wedge u v$, $A_0 \leftarrow (A_0 \sm \{u,v\}) \cup (C_0 \cap \{u,v\})$, $B_0 \leftarrow (B_0 \sm \{u,v\}) \cup (D_0 \cap \{u,v\})$, $C_0 \leftarrow (C_0 \sm \{u,v\}) \cup (A_0 \cap \{u,v\})$, $D_0 \leftarrow (D_0 \sm \{u,v\}) \cup (B_0 \cap \{u,v\})$. Then go to step 1.
\end{enumerate}

\subparagraph{Correctness.} The evolution of $A_0$, $B_0$, $C_0$ and $D_0$ at each iteration of the algorithm ensures that $(i)$ and $(ii)$ are satisfied for $G \star s_0$~\footnote{If $w$ is a list of vertices, say $w = w_0, w_1, \cdots, w_k$, we define $G \star w = G \star w_0 \star w_1 \star \cdots \star w_k$.} and $G'$. In step 2, if there is a vertex $u$ in case 4 or 5, let us show that there exists $v \in N_{G_0}(u)$ such that $v$ is also in case 4 or 5. Notice that in step 2, no vertex is in case 2 or 6. Suppose by contradiction that every vertex in $N_{G_0}(u)$ is in case 1 or 3, i.e.~for every $v \in N_{G_0}(u)$, $v \in A \cap \overline B  \cap D$. Then $|B_0\cap N_{G_0}(u)\cap N_{G'}(u)| + |A_0\cap N_{G_0}(u)\cap \{u\}| +  |D_0\cap \{u\}\cap N_{G'}(u)| + |C_0\cap \{u\}\cap \{u\}| = |C_0 \cap \{u\}| = 1 \bmod 2$, contradicting $(ii)$. At the end of the algorithm, every vertex is in case 1 or 3. Actually, every vertex is in case 1. Suppose by contradiction there is a vertex $u$ in case 3. Then $|B_0\cap N_{G_0}(u)\cap N_{G'}(u)| + |A_0\cap N_{G_0}(u)\cap \{u\}| +  |D_0\cap \{u\}\cap N_{G'}(u)| + |C_0\cap \{u\}\cap \{u\}|  = |C_0 \cap \{u\}| = 1 \bmod 2$, contradicting $(ii)$. Thus, at the end of the algorithm, $G' = G \star s_0$.

\subparagraph{Termination.} The number of vertices in case 1 or 3 strictly increases at each iteration of the algorithm.
\end{proof}

\begin{remark}
    The algorithm in the proof of \cref{prop:Bouchet} always outputs a sequence of m local complementations where $m \ls \left\lfloor 3n/2\right\rfloor$. Indeed, once a vertex is in case 1 or 3, no local complementation is applied on it again. Furthermore, it takes either one local complementation to change the case of one vertex to 1 or 3 (step 1) or 3 local complementations to change the case of 2 vertices to 1 and 3 (step 2). 
\end{remark}

Notice that Equation $(i)$ is  actually a linear equation: the set 
$\mathcal S\subseteq (2^V)^4$
of solutions to  $(i)$ is a vector space, indeed given two solutions $S=(A,B,C,D)$ and $S'=(A',B',C',D')$ to $(i)$, so is $S+S'=(A\Delta A', B\Delta B', C\Delta C', D\Delta D')$. The linearity of equation $(i)$ can be emphasized using the following encoding. A set $A\se V$ can be represented by $n$ binary variables $a_1, \ldots, a_n\in \mathbb F_2$ such that $a_v=1 \Leftrightarrow v\in A$, moreover, with a slight abuse of notations, we identify any set $A\subseteq V$ with the corresponding diagonal $\mathbb F_2$ matrix of dimension $n\times n$ in which diagonal elements are the $(a_v)_{v\in V}$. Following \cite{Hein06} and \cite{VdnEfficientLC}, equation $(i)$ is equivalent to 
\begin{equation*}
\Gamma  B \Gamma' + \Gamma  A+  D\Gamma' +  C = 0
\end{equation*}
and equation $(ii)$ to 
\begin{equation*}
AD+BC=I
\end{equation*}
where $\Gamma$ and $\Gamma'$ are the adjacency matrices of $G$ and $G'$ respectively. Summing up, $G$ and $G'$ are related by local complementations if and only if $(i)$ and $(ii)$ have a solution. Solving only for $(i)$ is easy as $(i)$ is a linear system of equation. More precisely, a basis $\mathcal B$ of $\mathcal S$ of solutions to $(i)$ can be calculated efficiently in $O(n^4)$ operations by Gaussian elimination. Then, one can search among $\mathcal S$ solutions to $(ii)$. However, in general one cannot exclude that $\mathcal S$ is of dimension $O(n)$, thus this approach is not guaranteed to be polynomial. The tour de force of Bouchet's algorithm is to point out that fundamental properties of the solutions to both equations $(i)$ and $(ii)$ imply that a set of solutions $\mathcal C\subseteq \mathcal S$ that satisfies both $(i)$ and $(ii)$, is either small or it contains an affine subspace of $\mathcal S$ of small co-dimension.

\begin{lemma}[\cite{Bouchet1991}]
    Either the set $\mathcal S$ of solutions to $(i)$ is of dimension at most $4$, or the set  $\mathcal C\subseteq \mathcal S$ that additionally satisfies $(ii)$ is either empty or an affine subspace of $\mathcal S_L$ of codimension at most $2$.   
\end{lemma}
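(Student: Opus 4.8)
The plan is to analyse the quadratic system $(ii)$ relative to the linear space $\mathcal S$ of solutions to $(i)$, turning the dichotomy into a statement about a subgroup of $\prod_{v}\mathrm{SL}_2(\mathbb F_2)$. If $\mathcal C=\emptyset$ there is nothing to prove, so fix a base point $S_0=(A_0,B_0,C_0,D_0)\in\mathcal C$. For a vertex $v$ put $q_v(A,B,C,D)=a_vd_v+b_vc_v$; this is a quadratic form on $(2^V)^4$ whose polarisation is the symplectic form $\beta_v(S,S')=a_vd_v'+a_v'd_v+b_vc_v'+b_v'c_v$, and $(ii)$ says exactly ``$q_v\equiv1$ for every $v$''. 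Expanding $q_v(S_0+S)=q_v(S_0)+\beta_v(S_0,S)+q_v(S)$ gives, for $S\in\mathcal S$, that $S_0+S\in\mathcal C$ if and only if $q_v(S)=\ell_v(S)$ for all $v$, where $\ell_v:=\beta_v(S_0,\cdot)$ is linear. Hence $\mathcal C-S_0$ is precisely the common zero set inside $\mathcal S$ of the $n$ affine-quadratic functions $g_v:=q_v+\ell_v$, and the whole issue is to bound how much codimension these can cut out once $\dim\mathcal S>4$.

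I would then reformulate this more conceptually. Identifying the diagonal quadruple $(A,B,C,D)$ with the block matrix $\bigl(\begin{smallmatrix}A&B\\ C&D\end{smallmatrix}\bigr)$ acting on $\mathbb F_2^{2n}$, equation $(i)$ is the condition that this matrix maps the stabiliser Lagrangian $L_G$ of $\ket G$ into that of $\ket{G'}$, while $(ii)$ is the condition that it is symplectic --- equivalently, that each $2\times2$ block lies in $\mathrm{SL}_2(\mathbb F_2)$, since $q_v$ is precisely the determinant of the $v$-th block. Thus, when non-empty, $\mathcal C$ is a left coset $M_0H$ of the subgroup $H=\{N\in\textstyle\prod_v\mathrm{SL}_2(\mathbb F_2):NL_G=L_G\}$. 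Since left multiplication by the invertible matrix $M_0$ is $\mathbb F_2$-linear, ``$\mathcal C$ is an affine subspace of $\mathcal S$ of codimension at most $2$'' is equivalent to ``$\{N+I:N\in H\}$ is an $\mathbb F_2$-subspace of the space of block-diagonal matrices and $|H|\geq|\mathcal S|/4$''. The many linear equations packed into $(i)$ (together with the fact that $\Gamma,\Gamma'$ have zero diagonal) are exactly what keep $\mathcal S$, hence $H$, small enough for this to stand a chance.

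The last step is where the genuine work lies, and it is Bouchet's contribution in \cite{Bouchet1991}. The obstruction to $\{N+I:N\in H\}$ being a subspace is carried entirely by the order-$3$ elements of $\mathrm{SL}_2(\mathbb F_2)$: if $N\in H$ has order $3$, then on each coordinate it moves one has $I+N+N^2=0$, and a short computation shows $M_0+M_0N+M_0N^2$ is a block-diagonal matrix with a zero block, hence not symplectic, so it cannot lie in $\mathcal C=M_0H$ --- contradicting $\mathcal C$ being a coset of a subspace. So one first has to show that an order-$3$ element can only occur when $\dim\mathcal S\leq4$ (the escape clause, where one just enumerates $\mathcal S$); for the remaining case, $H$ is a $2$-group, and one must still prove that $\{N+I:N\in H\}$ is closed under addition and that $[\mathcal S:\mathcal C]\leq4$, the latter via the elementary count that a non-constant affine quadric over $\mathbb F_2$ has zero set neither empty nor more than three quarters of the space. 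The real difficulty --- and the reason a naive union bound over $v$ fails --- is to carry all this out \emph{uniformly in $n$}, controlling the $n$ quadrics $g_v$ jointly; this is exactly what the isotropic-system formalism provides, through Witt's extension theorem for the relevant orthogonal group, equivalently Bouchet's structural description of the Eulerian (graphic) vectors of an isotropic system, which pins the stabiliser of a Lagrangian inside $\prod_v\mathrm{SL}_2(\mathbb F_2)$ to be either of bounded size or an affine subspace of codimension at most $2$. I would therefore organise the write-up as: (1) the translation of $\mathcal C$ into the coset $M_0H$; (2) the reduction of the target to a statement about $H\leq\prod_v\mathrm{SL}_2(\mathbb F_2)$; (3) the appeal to Bouchet's isotropic-system lemmas, with $\dim\mathcal S\leq4$ handled by brute force.
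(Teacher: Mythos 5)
The paper does not actually prove this lemma: it is imported from \cite{Bouchet1991} as a black box, so there is no in-paper argument to measure yours against. Your scaffolding is sound as far as it goes: identifying $(ii)$ with the condition that each $2\times 2$ block $\bigl(\begin{smallmatrix}a_v&b_v\\c_v&d_v\end{smallmatrix}\bigr)$ has determinant $1$, recognising $q_v$ as a quadratic form whose polarisation is the symplectic form, writing $\mathcal C$ as the coset $M_0H$ of the stabiliser of the Lagrangian $L_G$ inside $\prod_v\mathrm{SL}_2(\mathbb F_2)$, and the observation that an order-$3$ coordinate of an element of $H$ obstructs $\mathcal C$ from being affine (via $I+N_v+N_v^2=0$) are all correct and genuinely clarifying. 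The difficulty is that the two claims carrying the entire quantitative content of the lemma --- that an order-$3$ coordinate can only occur when $\dim\mathcal S\leq 4$, and that otherwise $[\mathcal S:\mathcal C]\leq 4$ --- are precisely the statements to be proved, and you assert them by appeal to ``Bouchet's isotropic-system lemmas''. What you have is therefore a well-organised reduction of the lemma to Bouchet's structural results rather than a proof of it; in that respect it sits on the same footing as the paper, which simply cites \cite{Bouchet1991}.

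One caution on the first deferred claim: Bouchet's actual dichotomy is finer than the one you assert. As the paper's later remark on Bouchet's ``Class $\alpha$'' records, $\mathcal C$ can consist of an affine subspace of small codimension \emph{plus} up to two exceptional solutions, and those exceptional solutions are exactly where the order-$3$ behaviour lives; it is not immediate that their presence forces $\dim\mathcal S\leq 4$, and establishing this (or the alternative bookkeeping of the two exceptional solutions) is the part of \cite{Bouchet1991} you would have to reproduce to make the write-up self-contained. The codimension-$2$ bound likewise cannot come from a union bound over the $n$ affine quadrics $g_v$, as you rightly note, so it too must be extracted from the isotropic-system analysis rather than from the framework you set up.
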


\begin{corollary} \label{cor:Bouchet}
    If $dim(\mathcal S) > 4$, the set  $\mathcal C\subseteq \mathcal S$ that additionally satisfies  $(ii)$ has a solution if and only if the set $\{b + b' ~|~ b,b' \in \mathcal B\}$ contains a vector satisfying $(ii)$.
\end{corollary}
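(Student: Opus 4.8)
The plan is to derive the corollary from the preceding lemma by a short piece of $\mathbb F_2$-linear algebra carried out in the quotient of $\mathcal S$ by the direction space of $\mathcal C$.

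The easy implication comes first: any $b+b'$ with $b,b'\in\mathcal B$ lies in the span $\mathcal S$ of $\mathcal B$, hence satisfies the linear system $(i)$; if it also satisfies $(ii)$ it lies in $\mathcal C$, so $\mathcal C\neq\emptyset$. For the converse, assume $\mathcal C\neq\emptyset$ and fix $c_0\in\mathcal C$. Since $\dim\mathcal S>4$, the lemma applies and tells us $\mathcal C$ is an affine subspace of $\mathcal S$ of codimension at most $2$; write $\mathcal C=c_0+W$ with $W\le\mathcal S$. Note that the zero tuple $(A,B,C,D)=(0,0,0,0)$ turns equation $(ii)$, i.e. $AD+BC=I$, into $0=I$, which is false, so $0\notin\mathcal C$; hence $c_0\notin W$ and in particular $W\neq\mathcal S$. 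Thus $d:=\dim(\mathcal S/W)\in\{1,2\}$, the projection $q\colon\mathcal S\to\mathcal S/W$ has $q(c_0)\neq 0$, and $\{q(b):b\in\mathcal B\}$ spans $\mathcal S/W$.

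It then remains to split on $d$. If $d=1$ then $\mathcal S/W=\{0,q(c_0)\}$, and since the images $q(b)$ span, some $b\in\mathcal B$ satisfies $q(b)=q(c_0)$, i.e. $b\in c_0+W=\mathcal C$. If $d=2$, consider the three nonzero classes of $\mathcal S/W$: either some $q(b)$ already equals $q(c_0)$ and we conclude as before, or every $q(b)$ avoids $q(c_0)$, in which case the spanning condition forces two basis vectors $b_1\neq b_2$ whose classes are precisely the two nonzero classes other than $q(c_0)$, so that $q(b_1+b_2)=q(c_0)$ and $b_1+b_2\in\mathcal C$. Either way $\mathcal C$ contains a sum of at most two elements of $\mathcal B$ satisfying $(ii)$, as required. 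I expect the only genuinely delicate point to be the bookkeeping of which short combinations must be checked: the argument does use the single basis vectors in addition to the pairwise sums (equivalently, one lets a formal $0$ belong to $\mathcal B$ so that $b=b+0$ counts as an admissible sum $b+b'$), and it is exactly the ``codimension at most $2$'' conclusion of the lemma that keeps the candidate list of size $O(n^2)$ — with $\dim\mathcal S>4$ being precisely the hypothesis that puts us in that branch of the lemma.
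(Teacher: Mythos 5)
Your proof is correct and takes the route the paper intends: the corollary is meant to be read off from the preceding lemma, and your case analysis on the codimension $d\in\{1,2\}$ of $\mathcal C$ in $\mathcal S$, carried out in the quotient $\mathcal S/W$, is exactly the right way to do that (the paper gives no argument of its own, deferring to Bouchet's Lemma 4.4). The one point where you diverge from the literal statement is substantive, and you are right to flag it: with the candidate set read strictly as $\{b+b'\mid b,b'\in\mathcal B\}$ (which contains $0$ but no single basis vector), the corollary is in fact false as stated. If $\mathcal C=c_0+W$ with $W$ a hyperplane of $\mathcal S$ and the basis $\mathcal B$ happens to lie entirely in the coset $\mathcal C$ (such bases exist for any such coset, e.g.\ $c_0, c_0+w_1,\ldots,c_0+w_{k-1}$), then every pairwise sum lands in $W$, which is disjoint from $\mathcal C$; an analogous bad configuration exists in codimension $2$, namely when every $q(b)$ lies in $\{\alpha, q(c_0)\}$ for a fixed nonzero $\alpha\neq q(c_0)$. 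So the candidate list must be the sums of \emph{at most} two basis vectors, i.e.\ $\mathcal B\cup\{b+b'\mid b,b'\in\mathcal B\}$ — still $O(n^2)$ vectors, so the complexity analysis that follows is unaffected. Your argument proves precisely this corrected statement, and the bookkeeping ($d=1$ yields a single basis vector; $d=2$ yields either a single basis vector or a sum of two distinct ones) is complete.
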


\cref{cor:Bouchet} show that if a solution to $(i)$ and $(ii)$ does exist, this solution can be found by enumerating at most the 16 elements of $\mathcal S$ if $dim(\mathcal S) \ls 4$, or the $O(n^2)$ elements of $\{b + b' ~|~ b,b' \in \mathcal B\}$ if $dim(\mathcal S) > 4$, and checking if $(ii)$ is satisfied. The overall complexity of the algorithm is $O(n^4)$.

\subsection{Local complementation = local Clifford}

As we shall see in this section, the reason why local complementation is useful for the study of graph states is that local complementation corresponds to the actions of local Clifford operators on graph states. First, local complementation can be implemented with local Clifford operators on a graph state.

\begin{proposition}[\cite{VandenNest04}]For any graph $G=(V,E)$ and any $u\in V$,  
    $$ X\left(\frac \pi 2\right)_u Z\left(-\frac \pi 2\right)_{N_G(u)} \ket {G} = \ket{G\star u}$$     
    \label{prop:implementation_lc}
\end{proposition}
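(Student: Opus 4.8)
The plan is to argue via the stabilizer description of \cref{prop:stabilizer_of_graph_state}. Write $U \defeq X(\pi/2)_u Z(-\pi/2)_{N_G(u)}$ and $N \defeq N_G(u)$. Since $\ket{G\star u}$ is, up to global phase, the unique common $+1$-eigenvector of the $n$ operators $X_v Z_{N_{G\star u}(v)}$ for $v\in V$, it suffices to (a) check that $U\ket G$ is stabilized by each of these operators, which forces $U\ket G = e^{i\phi}\ket{G\star u}$ for some phase, and then (b) pin down $e^{i\phi}=1$ by comparing a single amplitude. For (a), since $\ket G$ is stabilized by each generator $S_v \defeq X_v Z_{N_G(v)}$, it is enough to show that every conjugate $U S_v U^\dagger$ lies in the stabilizer group of $\ket{G\star u}$, and that these $n$ conjugates generate it (so that $U$ maps the stabilizer group of $\ket G$ onto that of $\ket{G\star u}$).

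To compute $U S_v U^\dagger$ I would read the conjugation rules off \cref{fig:clifford}: $X(\pi/2)$ fixes $X$ and sends $Z\mapsto -Y$, while $Z(-\pi/2)$ fixes $Z$ and sends $X\mapsto -Y$. As $u\notin N$, the two factors of $U$ act on disjoint qubits, so one can conjugate qubit by qubit, splitting into three cases according to the position of $v$:
\begin{itemize}
    \item $v=u$: nothing moves, so $U S_u U^\dagger = X_u Z_N$, which is exactly the $u$-generator of $\ket{G\star u}$ since $N_{G\star u}(u)=N$.
    \item $v\notin N\cup\{u\}$: again $U S_v U^\dagger = X_v Z_{N_G(v)}$, and since complementing at $u$ only toggles edges within $N$, the neighborhood of $v$ is unchanged, so this is the $v$-generator of $\ket{G\star u}$.
    \item $v\in N$: here $u\in N_G(v)$, and conjugation sends $X_v\mapsto -Y_v$, $Z_u\mapsto -Y_u$, and fixes $Z_w$ for $w\in N_G(v)\sm\{u\}$, giving $U S_v U^\dagger = Y_u Y_v Z_{N_G(v)\sm\{u\}}$. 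A short Pauli multiplication identifies this with the product $\bigl(X_u Z_N\bigr)\bigl(X_v Z_{N_{G\star u}(v)}\bigr)$ of the $u$- and $v$-generators of $\ket{G\star u}$ (see the obstacle paragraph).
\end{itemize}
In each case $U S_v U^\dagger$ lies in the stabilizer group of $\ket{G\star u}$, and from the $u$-generator and the products of the third case one recovers every $v$-generator, so these conjugates generate the whole group. Hence $U\ket G$ is stabilized by every generator of $\ket{G\star u}$, and \cref{prop:stabilizer_of_graph_state} yields $U\ket G = e^{i\phi}\ket{G\star u}$.

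For (b) I would evaluate $\bra{0}_V U\ket G$. Since $\bra 0 Z(-\pi/2)=\bra 0$ and $\bra 0 X(\pi/2)=\tfrac{e^{i\pi/4}}{\sqrt2}\bigl(\bra 0 - i\bra 1\bigr)$, this equals $\tfrac{e^{i\pi/4}}{\sqrt2}\bigl(\bra{0}_V\ket G - i\,\bra{0}_{V\sm\{u\}}\bra{1}_u\ket G\bigr)$; by \cref{prop:boolean_definition} both inner products equal $2^{-n/2}$ (the subgraphs induced by $\emptyset$ and by $\{u\}$ are edgeless), so $\bra{0}_V U\ket G = \tfrac{e^{i\pi/4}}{\sqrt2}(1-i)\,2^{-n/2} = 2^{-n/2} = \bra{0}_V\ket{G\star u}$, which forces $e^{i\phi}=1$ and gives the claimed exact equality.

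The routine core is the qubit-by-qubit conjugation; the one delicate point I expect is the neighbor case, where one must carefully multiply $X_u Z_N$ by $X_v Z_{N_{G\star u}(v)}$, tracking the phases $XZ=-iY$ at qubit $u$ and $ZX=iY$ at qubit $v$ (which cancel to $+1$), and simplify the symmetric difference $N\,\Delta\,N_{G\star u}(v)$, which by \cref{prop:lc_neighborhood} equals $N_G(v)\,\Delta\,\{v\}$ and hence reduces to $N_G(v)\sm\{u\}$ once $u$ and $v$ are removed, so that the product matches $U S_v U^\dagger$. Everything else is bookkeeping with symmetric differences and the Clifford conjugation table.
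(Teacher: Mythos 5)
Your proposal is correct and follows essentially the same route as the paper: both verify, via the Clifford conjugation table and the case split on whether $v$ is $u$, a neighbor of $u$, or neither, that $U\ket G$ is stabilized by the generators $X_vZ_{N_{G\star u}(v)}$ of $\ket{G\star u}$, and then fix the residual global phase by evaluating the amplitude of $\ket{0\cdots0}$. The only cosmetic difference is direction of the computation (you conjugate the old generators forward as $US_vU^\dagger$ and match them to products of the new generators, while the paper pushes the new generators backward through $U$ and absorbs stabilizers of $\ket G$); your Pauli bookkeeping in the neighbor case, including the cancelling phases $-i$ and $i$ and the identity $N\,\Delta\,N_{G\star u}(v)=N_G(v)\,\Delta\,\{v\}$, checks out.
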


\begin{remark}
    As local complementation is self-inverse, the inverse local Clifford operator also implements a local complementation, i.e. 
    $X\left(-\frac\pi 2\right)_uZ\left(\frac \pi 2\right)_{N_G(u)}\ket {G} = \ket{G\star u}$.
\end{remark}

\begin{proof}

    According to \cref{prop:stabilizer_of_graph_state}, a graph state $\ket G$ is the unique quantum state, up to global phase, stabilized by $X_u Z_{N_G(u)}$ for every vertex $u \in V$.
    
    Let us prove that $X\left(\frac\pi 2\right)_a Z\left(-\frac \pi 2\right)_{N_G(a)} \ket {G}$ is stabilized by $X_u Z_{N_{G\star a}(u)}$ for any vertex $u \in V$, implying $X\left(\frac\pi 2\right)_a Z\left(-\frac \pi 2\right)_{N_G(a)} \ket {G} = e^{i\phi}\ket{G\star u}$. One can then check that $e^{i\phi}=1$ for example by keeping track of the phase before the basis state $\ket{00\cdots 0}$. 
    
    According to \cref{prop:lc_neighborhood}, for some vertex $u \in V$, $N_{G\star a}(u) = N_G(u) \Delta N_G(a) \Delta \{u\}$ if $u \sim_G a$, and $N_{G\star a}(u) = N_G(u)$ else. Fix a vertex $u \in V$. If $u \not\sim_G a$, then 
    \begin{align*}
        & \left(X_u Z_{N_{G\star a}(u)}\right)\left( X\left(\frac\pi 2\right)_a Z\left(-\frac \pi 2\right)_{N_G(a)}\right) \ket {G} = \left(X_u Z_{N_{G}(u)}\right)\left( X\left(\frac\pi 2\right)_a Z\left(-\frac \pi 2\right)_{N_G(a)}\right) \ket {G}\\
        & = \left( X\left(\frac\pi 2\right)_a Z\left(-\frac \pi 2\right)_{N_G(a)}\right) \left(X_u Z_{N_{G}(u)}\right) \ket{G} = \left( X\left(\frac\pi 2\right)_a Z\left(-\frac \pi 2\right)_{N_G(a)}\right) \ket{G}
    \end{align*}

    If $u \sim_G a$, then
    \begin{align*}
        & \left(X_u Z_{N_{G\star a}(u)}\right)\left( X\left(\frac\pi 2\right)_a Z\left(-\frac \pi 2\right)_{N_G(a)}\right) \ket {G}\\
        & = \left(X_u Z_{N_{G}(u)}Z_{N_{G}(a)\sm \{u\}}\right)\left( X\left(\frac\pi 2\right)_a Z\left(-\frac \pi 2\right)_{N_G(a)}\right) \left(X_a Z_{N_{G}(a)}\right)\ket {G}\\
        & = \left(X_u Z_{N_{G}(u)}\right)\left( X\left(-\frac\pi 2\right)_a  Z\left(\frac\pi 2\right)_uZ\left(-\frac \pi 2\right)_{N_G(a)\sm \{u\}}\right)\ket {G}\\
        & = \left( X\left(\frac\pi 2\right)_a Z\left(-\frac \pi 2\right)_{N_G(a)}\right) \left(X_u Z_{N_{G}(u)}\right)\ket {G} = \left( X\left(\frac\pi 2\right)_a Z\left(-\frac \pi 2\right)_{N_G(a)}\right) \ket{G} \qedhere  
    \end{align*}

\end{proof}

Likewise, pivoting can be implemented by means of Hadamard transformations (and some Pauli corrections):
\begin{proposition}[\cite{mhalla2012graph,van2005edge}]For any graph $G=(V,E)$ and any $(u,v)\in E$, 
$\ket{G\wedge uv} = H_uH_vZ_{N_G(u)\cap N_G(v)}\ket {G}$. 
\end{proposition}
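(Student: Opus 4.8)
The plan is to mimic the proof of \cref{prop:implementation_lc}. By \cref{prop:stabilizer_of_graph_state}, $\ket{G\wedge uv}$ is the unique state, up to global phase, stabilized by the $n$ operators $X_wZ_{N_{G\wedge uv}(w)}$ for $w\in V$; so it suffices to check that $H_uH_vZ_B\ket{G}$, with $B\defeq N_G(u)\cap N_G(v)$, is fixed by each of them, and then to pin down the resulting global phase. Equivalently, writing $M=H_uH_vZ_B$, for every $w$ one must show that $M^\dagger\,X_wZ_{N_{G\wedge uv}(w)}\,M$ is a product of canonical stabilizers $X_xZ_{N_G(x)}$ of $\ket G$, hence fixes $\ket G$. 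Conjugation by $M$ is elementary: $H$ interchanges $X$ and $Z$ on qubits $u$ and $v$ (see \cref{fig:clifford}), and conjugation by $Z_B$ multiplies any $X_w$ or $Y_w$ by $(-1)^{[w\in B]}$ while fixing all $Z$'s.

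To keep the case analysis short I would split $V\setminus\{u,v\}$, following \cref{fig:pivoting}, into $A=N_G(u)\setminus(\{v\}\cup N_G(v))$, $C=N_G(v)\setminus(\{u\}\cup N_G(u))$, the common neighborhood $B$, and the set $R$ of vertices adjacent to neither $u$ nor $v$, and read off $N_{G\wedge uv}(w)$ on each block from the description of pivoting (swap of the neighborhoods of $u$ and $v$, symmetric difference with the complete bipartite graphs on $(A,B)$, $(A,C)$ and $(B,C)$). Then: for $w=u$ the conjugate is $X_vZ_{N_G(v)}$ and for $w=v$ it is $X_uZ_{N_G(u)}$; for $w\in A$ it is $(X_wZ_{N_G(w)})(X_vZ_{N_G(v)})$ and for $w\in C$ it is $(X_wZ_{N_G(w)})(X_uZ_{N_G(u)})$; for $w\in R$ it is $X_wZ_{N_G(w)}$, unchanged. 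In each of these cases the conjugate is visibly a product of canonical stabilizers. The delicate case is $w\in B$: there the conjugate comes out as $-\,X_wX_uX_vZ_{\Gamma}$ for a suitable set $\Gamma$, which has to be matched with the triple product $(X_wZ_{N_G(w)})(X_uZ_{N_G(u)})(X_vZ_{N_G(v)})$; reordering that product into the form $\pm\,X_wX_uX_v Z_{N_G(w)\Delta N_G(u)\Delta N_G(v)}$ produces exactly three anticommutation signs (coming from $v\in N_G(u)$, $u\in N_G(w)$ and $v\in N_G(w)$, all valid because $w\in B$ and $u\sim_G v$), and the resulting $-1$ matches the sign that $Z_B$-conjugation puts on $X_w$ since $w\in B$, so the two expressions coincide. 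I expect this sign matching, together with the set identity $\Gamma=N_G(w)\Delta N_G(u)\Delta N_G(v)$ in the case $w\in B$, to be the only genuine obstacle; everything else is routine.

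For the global phase it is enough to compare the amplitude of $\ket{0\cdots0}$ on the two sides. Since $Z\ket0=\ket0$ the factor $Z_B$ is harmless, and from $\bra0 H=\tfrac1{\sqrt2}(\bra0+\bra1)$ we get $\bra{0\cdots0}H_uH_vZ_B\ket G=\tfrac12\sum_{a,b\in\{0,1\}}\bra a_u\bra b_v\bra 0_{V\setminus\{u,v\}}\ket G$; by \cref{prop:boolean_definition} each summand equals $\pm1/\sqrt{2^n}$ with sign $-1$ exactly when $a=b=1$ (because $u\sim_G v$), so the total is $\tfrac12(1+1+1-1)/\sqrt{2^n}=1/\sqrt{2^n}=\bra{0\cdots0}\ket{G\wedge uv}$, and the phase is trivial. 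A more computational alternative is to use $G\wedge uv=G\star u\star v\star u$ and apply \cref{prop:implementation_lc} three times: the single-qubit gates accumulated on $u$ and $v$ collapse, up to phase, to $H$ times a Pauli, and one then checks that the leftover Pauli corrections together with the $Z$-corrections that pile up on $A\cup C$ combine with $H_uH_vZ_B$ into a product of canonical stabilizers of $\ket G$, yielding the same identity.
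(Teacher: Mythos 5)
Your proof is correct, and its main line of attack is genuinely different from the one in the thesis. The thesis proves this identity computationally: it writes $G\wedge uv = G\star u\star v\star u$, applies \cref{prop:implementation_lc} three times, uses the Euler decompositions of $H$ to collapse the accumulated single-qubit rotations on $u$ and $v$ into $(XHX)_u(ZHZ)_v$, observes that the three layers of $Z(-\pi/2)$ on overlapping neighborhoods combine into $Z_{N_G(u)\cup N_G(v)\sm\{u,v\}}$, and finally absorbs the residual Paulis using the stabilizers $X_uZ_{N_G(u)}$ and $X_uZ_{N_{G\wedge uv}(u)}$ --- this is precisely the ``more computational alternative'' you sketch in your closing sentence. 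Your primary argument instead transports each canonical generator $X_wZ_{N_{G\wedge uv}(w)}$ of $\ket{G\wedge uv}$ through $M=H_uH_vZ_{N_G(u)\cap N_G(v)}$ and identifies the conjugate with a product of canonical stabilizers of $\ket G$, block by block over the partition $\{u\},\{v\},A,B,C,R$. I checked all five cases: the neighborhoods after pivoting are as you read them off, the only sign-bearing case is indeed $w\in B$, where the $-1$ coming from $Z_B$-conjugation of $X_w$ matches the $(-1)^3$ produced by reordering $(X_wZ_{N_G(w)})(X_uZ_{N_G(u)})(X_vZ_{N_G(v)})$ (one anticommutation from $v\in N_G(u)$, two from $u,v\in N_G(w)$), and the set identity $\Gamma=N_G(w)\Delta N_G(u)\Delta N_G(v)=N_G(w)\Delta A\Delta C\Delta\{u,v\}$ holds; the global-phase computation via the amplitude of $\ket{0\cdots0}$ is also correct, the $a=b=1$ term carrying the sole minus sign because $u\sim_G v$. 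What your route buys is independence from \cref{prop:implementation_lc} and from the decomposition of $H$ into rotations: it needs only \cref{prop:stabilizer_of_graph_state} and the combinatorial description of the pivot, and it makes transparent why the correction set must be $N_G(u)\cap N_G(v)$ (it is forced by the sign in the $w\in B$ case). What the thesis's route buys is brevity and the fact that the pivot unitary is \emph{derived} from the local-complementation unitary rather than verified after the fact, which is the natural move once \cref{prop:implementation_lc} is in hand.
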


\begin{proof}
    Recall $H = e^{i\pi/4} X X\left(\frac \pi 2\right) Z\left(- \frac \pi 2\right) X\left(\frac \pi 2\right) X = e^{-i\pi/4}  Z Z\left(- \frac \pi 2\right) X\left(\frac \pi 2\right) Z\left(- \frac \pi 2\right) Z$ and $N_{G \star v}(u) = N_G(u) \Delta N_G(v) \Delta \{u\}$  if $u \sim_G v$. Thus, according to \cref{prop:implementation_lc},
    \begin{align*}
        & \ket{G\wedge uv} = \ket{G \star u \star v \star u}\\
        & = (XHX)_u (ZHZ)_v ~ Z\left(-\frac \pi 2\right)_{N_{G\star u \star v}(u) \sm \{v\}}Z\left(-\frac \pi 2\right)_{N_{G\star u}(v) \sm \{u\}}Z\left(-\frac \pi 2\right)_{N_{G}(u) \sm \{v\}} \ket{G}\\
        & = (XHX)_u (ZHZ)_v ~ Z\left(-\frac \pi 2\right)_{N_G(v) \sm \{u\}}Z\left(-\frac \pi 2\right)_{N_G(v) \Delta N_G(u) \sm \{u,v\}}Z\left(-\frac \pi 2\right)_{N_{G}(u) \sm \{v\}} \ket{G}\\
        & = (XHX)_u (ZHZ)_v ~ Z_{N_G(u) \cup N_G(v) \sm \{u,v\}}\ket{G}
    \end{align*}
    According to \cref{prop:stabilizer_of_graph_state}, $X_u Z_{N_G(u)} \ket G = \ket G$ and $ X_u Z_{N_{G \star u \star v \star u}(u)} \ket{G \star u \star v \star u} \\= X_u Z_{N_G(v) \Delta \{u,v\}} \ket{G \star u \star v \star u} =\ket{G \star u \star v \star u}$. Thus, $(XHX)_u (ZHZ)_v \ket{G}\\ = H_u H_v Z_{N_G(u) \cap N_G(v)}\ket{G}$.
\end{proof}

As local complementation can be implemented by local Clifford operations on a graph state, it is obvious that if two graphs are related by a sequence of local complementations, the two corresponding graph states are LC-equivalent. Very conveniently, the converse is also true: if two graph states are LC-equivalent, the two corresponding graphs are related by a sequence of local complementations. The proof is a bit more involved, and was presented originally in \cite{VandenNest04}. Below we show that this fact can be directly derived from the one-to-one correspondence between the solutions to  equations $(i)$ and $(ii)$ and the local Clifford operators (up to Pauli operators) that maps $\ket G$ to $\ket {G'}$.

\begin{proposition}[\cite{Hein06, VdnEfficientLC}] \label{prop:clifford_ABCD}
     $A,B,C,D$ satisfy equations $(i)$ and $(ii)$ if and only if $\ket{G'} \!=\! e^{i\phi}\bigotimes_{u\in V}C_u \ket{G}$ where for any $u \in V$, $C_u$ is equal,  up to a Pauli operator, to:\\

     \centerline{\begin{tabular}{clccl}
    $I$ & if  $u\in A\cap\overline{B}\cap \overline{C}\cap D$&$\qquad\qquad$& $H$ & if $u\in \overline{A}\cap B \cap C \cap \overline{D}$\\
    $X(\pi/2)$ & if $u\in A\cap B \cap \overline{C} \cap D$&& $H X(\pi/2)$ & if $u\in \overline{A}\cap B \cap C \cap {D}$\\
    $Z(\pi/2)$ &if $u\in A\cap\overline{B}\cap C \cap D$&&   $HZ(\pi/2)$& if $u\in {A}\cap B \cap C \cap \overline{D}$
\end{tabular}}
\end{proposition}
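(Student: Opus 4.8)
The plan is to work in the symplectic (Heisenberg--Weyl) picture of stabilizer states. By \cref{prop:stabilizer_definition} the graph state $\ket G$ is determined up to global phase by its stabilizer group, which as a set of Paulis \emph{up to sign} is $\{X_D Z_{Odd_G(D)}:D\se V\}$; encoding a Pauli $X_R Z_S$ by the pair $(\mathbf 1_R,\mathbf 1_S)\in\mathbb{F}_2^V\times\mathbb{F}_2^V$ and using $\mathbf 1_{Odd_G(D)}=\Gamma\mathbf 1_D$, this is the $n$-dimensional subspace $\mathcal L_G=\{(x,\Gamma x):x\in\mathbb{F}_2^V\}$, where $\Gamma$ is the adjacency matrix of $G$. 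Two stabilizer states are equal up to global phase iff their \emph{signed} stabilizer groups coincide, and they differ by a Pauli operator iff the subspaces $\mathcal L$ coincide. Since in the statement $\ket{G'}=e^{i\phi}\bigotimes_u C_u\ket G$ is asserted only up to global phase and with each $C_u$ fixed only up to a left Pauli factor, and since any overall Pauli $\bigotimes_u P_u$ is itself local (hence absorbable into the $C_u$'s), this identity is equivalent to the statement that conjugation by $\bigotimes_u C_u$ maps $\mathcal L_G$ onto $\mathcal L_{G'}$, with each $C_u$ in the prescribed Pauli-coset.

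Next I would set up the dictionary between the subsets and the local Cliffords. Up to Pauli there are exactly six single-qubit Cliffords (cf.\ \cref{subsec:clifford}), and conjugation by them acts on the single-qubit symplectic space $\mathbb{F}_2^2$ (coordinates: $X$-exponent, $Z$-exponent) through the six elements of $\mathrm{GL}_2(\mathbb{F}_2)$; moreover this correspondence is a bijection, as a Clifford acting trivially on $X,Z$ up to sign is a Pauli up to phase. Reading off Figure~\ref{fig:clifford}, the Clifford whose conjugation has matrix $\left(\begin{smallmatrix}\alpha&\beta\\\gamma&\delta\end{smallmatrix}\right)$ (i.e.\ $X\mapsto(\alpha,\gamma)$ and $Z\mapsto(\beta,\delta)$ up to sign) is, up to Pauli, respectively $I$, $X(\pi/2)$, $Z(\pi/2)$, $H$, $HX(\pi/2)$, $HZ(\pi/2)$, and a direct check matches these with the six membership conditions in the statement under the reading $\alpha_u=[u\in A]$, $\beta_u=[u\in B]$, $\gamma_u=[u\in C]$, $\delta_u=[u\in D]$. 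Thus $(A,B,C,D)$ assigns to each vertex $u$ the matrix $M_u=\left(\begin{smallmatrix}[u\in A]&[u\in B]\\ [u\in C]&[u\in D]\end{smallmatrix}\right)$, and a (necessarily unique, up to Pauli) single-qubit Clifford realising $M_u$ — then given by the case table — exists precisely when $\det M_u=[u\in A\cap D]+[u\in B\cap C]=1$, i.e.\ $u\in (A\cap D)\,\Delta\,(B\cap C)$; hence such Cliffords exist for all $u$ iff $(ii)$ holds, and in that case $\bigotimes_u C_u$ acts on Paulis (up to sign) by the block matrix $M=\left(\begin{smallmatrix}A&B\\ C&D\end{smallmatrix}\right)$, with $A,B,C,D$ now read as the associated diagonal $\mathbb{F}_2$-matrices (the encoding introduced after \cref{prop:Bouchet}).

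It then remains to identify, assuming $(ii)$, the condition ``$M\mathcal L_G=\mathcal L_{G'}$'' with equation $(i)$. Since $M$ is invertible, $M\mathcal L_G$ is $n$-dimensional, so $M\mathcal L_G=\mathcal L_{G'}$ iff $M\mathcal L_G\se\mathcal L_{G'}$; and $M(x,\Gamma x)=\big((A+B\Gamma)x,\,(C+D\Gamma)x\big)$ lies in $\mathcal L_{G'}=\{(y,\Gamma'y)\}$ for all $x$ iff $C+D\Gamma=\Gamma'(A+B\Gamma)$, which is precisely the matrix form of equation $(i)$ recorded after \cref{prop:Bouchet} following \cite{Hein06,VdnEfficientLC} (up to the immaterial exchange of the roles of $G$ and $G'$). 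Assembling the three steps: $(A,B,C,D)$ satisfies $(ii)$ iff the prescribed Cliffords $C_u$ exist, and then it additionally satisfies $(i)$ iff $\bigotimes_u C_u$ carries the unsigned stabilizer group of $\ket G$ onto that of $\ket{G'}$, iff $\bigotimes_u C_u\ket G=e^{i\phi}P\ket{G'}$ for a local Pauli $P=\bigotimes_u P_u$, iff (absorbing each $P_u^\dagger$ into $C_u$, staying in its Pauli-coset) $\ket{G'}=e^{i\phi'}\bigotimes_u C'_u\ket G$ with the $C'_u$ in the prescribed cosets; the converse direction is the same chain read backwards.

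The only step demanding genuine care is the sign/phase bookkeeping bracketing the argument: one must check that ``same unsigned stabilizer group'' can be upgraded to ``equal up to global phase'' using exactly the left-Pauli freedom already granted to each $C_u$, with no leftover sign obstruction — this is exactly the content of the fact that two stabilizer states with the same stabilizer Paulis up to sign differ by a (local, hence absorbable) Pauli. A minor secondary nuisance is aligning conventions in the third step — which adjacency matrix sits on which side of $B$, and whether one reads $CPC^\dagger$ or $C^\dagger P C$ — which is harmless since one may always swap $G$ and $G'$ (equivalently, replace each $C_u$ by $C_u^\dagger$).
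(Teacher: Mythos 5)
Your proof is correct and follows essentially the same route as the paper's (which is only a sketch deferring to \cite{Hein06, VdnEfficientLC}): the binary-symplectic encoding of stabilizers, the identification of $(ii)$ with invertibility of the per-qubit matrices $M_u$ and of $(i)$ with the image of the Lagrangian $\mathcal L_G$ landing in $\mathcal L_{G'}$, and the matching of the six $\mathrm{GL}_2(\mathbb F_2)$ elements with the six Clifford cosets via Figure~\ref{fig:clifford}. Your version is in fact more complete than the paper's sketch, notably in justifying that agreement of unsigned stabilizer groups upgrades to equality up to a local Pauli and a global phase, and in flagging the $A\leftrightarrow D$ / $G\leftrightarrow G'$ convention mismatch, which you resolve correctly.
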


\begin{proof}[Proof Sketch]
    Following \cite{Hein06} and \cite{VdnEfficientLC}, the stabilizer formalism has an equivalent formulation in terms of binary linear algebra. A graph state can be represented by a $2n \times n$ matrix, called the generator matrix, with coefficients in $\mathbb F_2$: if $\Gamma$ is its adjacency matrix, the graph state is represented by the matrix $(I|\Gamma)$. The generator matrix represents the stabilizers of the graph. More precisely, the left part represents the Pauli X and the right part represents the Pauli Z. Two graphs with adjacency matrices $\Gamma, \Gamma'$ are LC-equivalent if and only if there exists a $2n \times 2n$ matrix $Q$ such that $(I|\Gamma')PQ(I|\Gamma)^T=0$, where
    $$ P\defeq 
    \begin{pmatrix}
        0 & I\\
        I & 0
    \end{pmatrix} 
    \qquad \qquad 
    Q\defeq 
    \begin{pmatrix}
        A & B\\
        C & D
    \end{pmatrix} $$
    with $A,B,C,D$ being diagonal matrices such that $AD+BC=I$. This translates into the condition $\Gamma  B \Gamma' + \Gamma  A+  D\Gamma' +  C = 0$. These two conditions correspond respectively to equations $(ii)$ and $(i)$, where a vertex is in a set $A,B,C,D$ if and only if its diagonal value in the corresponding matrix is 1. To find the correspondence between $A,B,C,D$ and the local Clifford operators, we compare the action on the Pauli gates.
\end{proof}

In other words, there is a one-to-one mapping between the sets $A,B,C,D$ and the local Clifford unitaries mapping one graph state to the other, up to Pauli operators and global phase. Following the formalism of the proof of \cref{prop:Bouchet}, case 1 corresponds to $I$, case 2 to $X(\pi/2)$, case 3 to $Z(\pi/2)$, case 4 to $H$, case 5 to $X(\pi/2)H$, and case 6 to $Z(\pi/2)H$. There is also a correspondence between the action of local complementation in both formalisms. Recall that a local complementation can be implemented by X- and Z-rotations: $\ket{G\star u} = X(\frac \pi 2)_u Z(-\frac \pi 2)_{N_G(u)} \ket {G}$. If $G$ and $G'$ are LC-equivalent such that $\ket{G'} = e^{i\phi}\bigotimes_{u\in V}C_u \ket{G}$, then $\ket{G'} = e^{i\phi}\left(\bigotimes_{u\in V}C_u\right)\left( X\left(\frac \pi 2\right)_a Z\left(-\frac \pi 2\right)_{N_G(a)}\right) \ket{G \star a}$. Basic calculations show that the local Clifford operators, up to Pauli, are updated as the cases in the proof of \cref{prop:Bouchet}. Thus, \cref{prop:Bouchet} can be reformulated as such:

\begin{proposition}[\cite{VandenNest04}] \label{prop:lclc}
    If $G$ and $G'$ are LC-equivalent such that $\ket{G'} = C \ket{G}$, then there exists a sequence of (possibly repeating) vertices $a_1, \cdots, a_m$ such that $G' = G\star a_1 \star a_2 \star \cdots \star a_{m}$ with $m \ls \left\lfloor 3n/2\right\rfloor$, and the local Clifford operator that implements these local complementations is $C$ up to Paulis.
\end{proposition}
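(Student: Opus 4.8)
The plan is to feed a well-chosen quadruple into the constructive algorithm inside the proof of \cref{prop:Bouchet}, and to run, in parallel with it, a bookkeeping of local Clifford operators supplied by \cref{prop:clifford_ABCD}. Since $G$ and $G'$ are LC-equivalent with $\ket{G'} = C\ket G$, \cref{prop:clifford_ABCD} provides $A,B,C,D\se V$ satisfying equations $(i)$ and $(ii)$ for $G$ and $G'$ such that the single-qubit factor of $C$ at each vertex $u$ is, up to a Pauli, the canonical Clifford attached to the case (among the six cases of the proof of \cref{prop:Bouchet}) in which $u$ lies. Starting the algorithm of \cref{prop:Bouchet} from this quadruple produces a sequence $s_0 = (a_1,\dots,a_m)$ of vertices with $G\star a_1\star\cdots\star a_m = G'$ and $m\ls\lfloor 3n/2\rfloor$ (by the Remark following \cref{prop:Bouchet}), while maintaining at every stage a running quadruple $(A_0,B_0,C_0,D_0)$ satisfying $(i)$ and $(ii)$ for the current graph $G_k$ and $G'$.

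Now introduce, for $1\ls i\ls m$, the operator $R_i = X(\pi/2)_{a_i}Z(-\pi/2)_{N_{G_{i-1}}(a_i)}$ that implements the $i$-th local complementation through \cref{prop:implementation_lc}, so that $\ket{G_i} = R_i\ket{G_{i-1}}$ and therefore $\ket{G'} = R_m\cdots R_1\ket G$. Set $M_k := C\,R_1^{-1}\cdots R_k^{-1}$. The key claim, proved by induction on $k$, is that $M_k = e^{i\phi_k}\bigotimes_{u\in V}C_u^{(k)}$ where each $C_u^{(k)}$ is, up to a Pauli, the canonical Clifford attached to the case of $u$ relative to $(A_0,B_0,C_0,D_0)$; equivalently $M_k\ket{G_k}=\ket{G'}$. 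The base case $k=0$ is exactly the content of \cref{prop:clifford_ABCD}. For the inductive step, $\ket{G_k} = R_{k+1}^{-1}\ket{G_{k+1}}$ gives $M_{k+1} = M_k R_{k+1}^{-1}$, which amounts to right-multiplying the factor at $a_{k+1}$ by $X(-\pi/2)$ and the factors on $N_{G_k}(a_{k+1})$ by $Z(\pi/2)$; using the single-qubit Clifford look-up table of Figure \ref{fig:clifford} together with identifications such as $X(-\pi/2)=X\cdot X(\pi/2)$ and $Z(-\pi/2)=Z\cdot Z(\pi/2)$ up to phase, one checks case by case that these right-multiplications move the canonical representatives exactly according to the transition tables in the proof of \cref{prop:Bouchet} (a pivot step is handled by tracking its three constituent local complementations, recovering the pivot table). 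When the algorithm halts every vertex is in case $1$, so every $C_u^{(m)}$ is a Pauli and $M_m$ is a Pauli times a global phase; unwinding $M_m = C\,R_1^{-1}\cdots R_m^{-1}$ yields $C = (\text{Pauli}\times\text{phase})\cdot R_m\cdots R_1$, i.e. $C$ equals the local Clifford $R_m\cdots R_1$ that implements the produced sequence of local complementations, up to Paulis.

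The step I expect to be the main obstacle is precisely that case-by-case verification that right-multiplication by $X(-\pi/2)$ and $Z(\pi/2)$ permutes the six canonical single-qubit Cliffords as prescribed by the transition tables of \cref{prop:Bouchet}: it is a finite, mechanical check against Figure \ref{fig:clifford}, but it is easy to slip on the ``up to Pauli'' normalisations and on which neighbourhood the $Z(\pm\pi/2)$ corrections act on (here $N_{G_k}(a_{k+1})$ and $N_{G_{k+1}}(a_{k+1})$ coincide by \cref{prop:lc_neighborhood}, but inside a pivot the relevant neighbourhoods get swapped and enlarged). A cleaner route, should this become unwieldy, is to carry everything in the binary symplectic picture sketched after \cref{prop:clifford_ABCD}: a local complementation on $a$ corresponds to right-multiplying the matrix $\left(\begin{smallmatrix}A&B\\ C&D\end{smallmatrix}\right)$ by an explicit elementary symplectic matrix supported on $a$ and its neighbourhood, the algorithm of \cref{prop:Bouchet} is the reduction of this matrix to the identity by such multiplications, and the accumulated product reconstructs $C$ up to Pauli.
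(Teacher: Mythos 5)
Your proposal is correct and follows essentially the same route as the paper: take the quadruple $(A,B,C,D)$ attached to $C$ by \cref{prop:clifford_ABCD}, run the constructive algorithm from the proof of \cref{prop:Bouchet} (which gives the $\lfloor 3n/2\rfloor$ bound), and track the residual Clifford alongside the running quadruple until every vertex lands in case~1, leaving only a Pauli. The paper compresses your inductive bookkeeping of $M_k$ and the case-by-case verification into the sentence ``basic calculations show that the local Clifford operators, up to Pauli, are updated as the cases in the proof of \cref{prop:Bouchet}'', so your write-up just makes explicit what the paper leaves implicit.
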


We provided a self-contained proof in \cite{claudet2024local}. In particular, two graphs are related by local complementations if and only if the two corresponding graph states are LC-equivalent. This motivates the denomination "LC-equivalence" for graphs and graph states as, conveniently, "LC" stands for both "local Clifford" and "local complementation". Thus, from now on, we use LC-equivalence for both graph and graph states, defined equivalently as "related by local Clifford operations" and "related by a sequence of local complementations". We use the notation $G_1 =_{LC} G_2$.

\section{The cut-rank function} \label{sec:cutrank}

\subsection{Definition}

Given a graph $G$ and a set $A \se V$ of vertices, one can define the map $\lambda_A: 2^{A}\to 2^{V\setminus A} = D\mapsto Odd_G(D)\setminus A$. Intuitively, $\lambda_A$ maps $D$ to its odd neighborhood on the other side of the cut (i.e. vertex bipartition) $A, V\sm A$. $\lambda_A$ is linear with respect to the symmetric difference: $\forall D,D'\subseteq A, \lambda_G(D\Delta D') = \lambda_A(D)\Delta\lambda_A(D')$. Notice that $D$ is in the kernel of $\lambda_A$, i.e. $\lambda_A(D)=\emptyset$, if and only if $Odd_G(D)\subseteq A$. 
The rank of $\lambda_A$ is nothing but the so-called cut-rank of $A$. The cut-rank function, introduced by Bouchet under the name "connectivity function" \cite{Bouchet1993,Bouchet1987}, and coined "cut-rank" by Oum and Seymour \cite{OUM2006},  is usually defined as follows:

\begin{definition}[Cut-rank function]
    For $A \se V$, let the cut-matrix $\Gamma_A = ((\Gamma_A)_{ab}: a \in A\text{, } b \in V\sm A)$ be the matrix with coefficients in $\mathbb{F}_2$ such that $\Gamma_{ab} = 1$ if and only if $(a, b) \in E$. The cut-rank function of $G$ is defined as
    \begin{align*}
        \cutrk\colon 2^V & \longrightarrow \mathbb{N}\\
        A &\longmapsto \textbf{rank}_{\mathbb{F}_2}(\Gamma_A)
    \end{align*}
\end{definition}

\subsection{Basic properties}

The proposition below lists essential properties of the cut-rank function.

\begin{proposition}[\cite{OUM2006}]
    \label{prop:cutrank_prop}
    The cut-rank function satisfies the following properties:
    \begin{itemize}
        \item \textbf{symmetry: } $\forall A \se V,~\cutrk(V\sm A) =  \cutrk(A)$;
        \item \textbf{linear boundedness: } $\forall A \se V,~\cutrk(A)  \ls |A|$;
        \item \textbf{submodularity: } $\forall A,B \se V,~\cutrk(A \cup B) + \cutrk(A \cap B)  \ls \cutrk(A) + \cutrk(B)$.
    \end{itemize}   
\end{proposition}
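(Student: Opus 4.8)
The plan is to handle the three items in turn; symmetry and linear boundedness are immediate from the definition, and submodularity is the only part with content. For \textbf{symmetry}: since $G$ is undirected, $(\Gamma_{V\sm A})_{ba}=1\iff(b,a)\in E\iff(a,b)\in E\iff(\Gamma_A)_{ab}=1$, so $\Gamma_{V\sm A}=(\Gamma_A)^T$, and a matrix and its transpose have the same $\mathbb{F}_2$-rank, giving $\cutrk(V\sm A)=\cutrk(A)$. For \textbf{linear boundedness}: $\Gamma_A$ has exactly $|A|$ rows, so $\textbf{rank}_{\mathbb{F}_2}(\Gamma_A)\ls|A|$ (and, combined with symmetry, also $\ls|V\sm A|$).

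For \textbf{submodularity}, the idea is to exhibit $\cutrk$ uniformly as a projection rank of one fixed subspace. Let $\Gamma$ be the adjacency matrix of $G$, and in $\mathbb{F}_2^V\times\mathbb{F}_2^V$ (two disjoint copies of $V$ as coordinate sets) set $W=\{(x,\Gamma x):x\in\mathbb{F}_2^V\}$, an $n$-dimensional subspace. For $S\se V$, write $\pi_S$ for the restriction to the coordinates indexed by $S$ in \emph{both} copies. I would first check, by rank--nullity, that $W\cap\ker\pi_{V\sm A}$ consists of the pairs $(x,\Gamma x)$ with $\supp(x)\se A$ and $Odd_G(\supp(x))\se A$; under the isomorphism $x\mapsto(x,\Gamma x)$ this corresponds to the kernel of the map $\mathbb{F}_2^A\to\mathbb{F}_2^{V\sm A}$, $x\mapsto(\Gamma x)|_{V\sm A}$, whose rank is $\textbf{rank}_{\mathbb{F}_2}(\Gamma_A)=\cutrk(A)$. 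Hence $\dim(W\cap\ker\pi_{V\sm A})=|A|-\cutrk(A)$ and
$$\dim\pi_{V\sm A}(W)=n-(|A|-\cutrk(A))=|V\sm A|+\cutrk(A).$$

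It then remains to prove the elementary fact that for any subspace $W$ of a finite-dimensional coordinate space and any coordinate subsets $S,T$, one has $\dim\pi_{S\cup T}(W)+\dim\pi_{S\cap T}(W)\ls\dim\pi_S(W)+\dim\pi_T(W)$. For this I would pass to kernels, using $\dim\pi_S(W)=\dim W-\dim(W\cap K_S)$ where $K_S$ is the coordinate subspace vanishing on $S$, together with the identities $K_{S\cup T}=K_S\cap K_T$ and $K_{S\cap T}=K_S+K_T$; then $W\cap K_{S\cap T}\supseteq(W\cap K_S)+(W\cap K_T)$ while $(W\cap K_S)\cap(W\cap K_T)=W\cap K_{S\cup T}$, and the Grassmann dimension formula delivers the inequality. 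Applying this to the $W$ above with $S=V\sm A$ and $T=V\sm B$ (so $S\cup T=V\sm(A\cap B)$ and $S\cap T=V\sm(A\cup B)$), and using $|S\cup T|+|S\cap T|=|S|+|T|$, the cardinality terms cancel and one is left with $\cutrk(A\cup B)+\cutrk(A\cap B)\ls\cutrk(A)+\cutrk(B)$. The only non-routine step is the choice of ambient space: encoding $G$ as the graph of the linear map $x\mapsto\Gamma x$ is precisely what makes every $\cutrk(A)$ readable off a single fixed $W$, after which the argument collapses to the generic dimension count. (One could instead invoke the known submodularity of the connectivity function of an arbitrary matrix over a field, but the above is self-contained.)
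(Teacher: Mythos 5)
Your proof is correct. Note that the paper does not actually prove this proposition: it is stated with a citation to Oum--Seymour, so there is no in-paper argument to compare against. Your treatment of symmetry (transpose) and linear boundedness (row count) is the standard one, and your submodularity argument is sound: realizing $\cutrk$ via the fixed subspace $W=\{(x,\Gamma x)\}$ correctly gives $\dim\pi_{V\sm A}(W)=|V\sm A|+\cutrk(A)$, since $W\cap\ker\pi_{V\sm A}$ is exactly the kernel of $\lambda_A$ (dimension $|A|-\cutrk(A)$, matching the kernel description the paper itself uses in \cref{sec:cutrank} and \cref{prop:corres_mls_cutrankalt}); the identities $K_{S\cup T}=K_S\cap K_T$ and $K_{S\cap T}=K_S+K_T$, the inclusion $(W\cap K_S)+(W\cap K_T)\se W\cap K_{S\cap T}$, and the Grassmann formula then yield submodularity of $S\mapsto\dim\pi_S(W)$, and the cardinality terms cancel exactly as you claim. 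This is essentially the "connectivity function of a represented object" argument that underlies the original reference, made self-contained; it also makes transparent why the paper's later extensions (matroid connectivity, $q$-multigraphs) satisfy the same three axioms.
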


A set $A$ is said \emph{full cut-rank} when $\cutrk(A)=|A|$. The conjunction of symmetry, linear boundedness and submodularity, together with the fact that $\cutrk$ has values in $\mathbb{N}$, implies the following useful properties:

\begin{proposition} \label{prop:cutrank_prop2}
    For any graph $G$ of order $n$, 
    \begin{itemize}
        \item[$(i)$] $\cutrk(\emptyset)= \cutrk(V) = 0$;
        \item[$(ii)$]  $\forall A \se V,\cutrk(A) \ls \lfloor n/2 \rfloor$, so a full-cut-rank set is of size at most $\lfloor n/2 \rfloor$;
        \item[$(iii)$]  If $\cutrk(K)=|K|$ then for any $A\se K$, $\cutrk(A)=|A|$, i.e. any subset of a full-cut-rank set is full cut-rank.    
    \end{itemize} 
\end{proposition}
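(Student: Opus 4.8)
The plan is to derive all three items from the abstract properties of $\cutrk$ recalled just above---symmetry, linear boundedness, and submodularity---together with the fact that $\cutrk$ takes values in $\mathbb N$; the concrete definition through the cut-matrix will not be needed.

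For $(i)$, I would apply linear boundedness to $A=\emptyset$, which gives $\cutrk(\emptyset)\ls|\emptyset|=0$; since $\cutrk(\emptyset)\in\mathbb N$ this forces $\cutrk(\emptyset)=0$, and then symmetry applied to $A=\emptyset$ yields $\cutrk(V)=\cutrk(V\sm\emptyset)=\cutrk(\emptyset)=0$. For $(ii)$, the idea is to bound $\cutrk(A)$ from two sides at once: linear boundedness gives $\cutrk(A)\ls|A|$, while symmetry followed by linear boundedness applied to $V\sm A$ gives $\cutrk(A)=\cutrk(V\sm A)\ls|V\sm A|=n-|A|$. Summing the two inequalities yields $2\cutrk(A)\ls n$, hence $\cutrk(A)\ls n/2$, and integrality of $\cutrk$ upgrades this to $\cutrk(A)\ls\lfloor n/2\rfloor$. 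The claim about full-cut-rank sets is then immediate, since $\cutrk(A)=|A|$ gives $|A|=\cutrk(A)\ls\lfloor n/2\rfloor$.

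For $(iii)$, let $A\se K$ with $\cutrk(K)=|K|$. The key move is to apply submodularity to the disjoint sets $A$ and $K\sm A$, whose union is $K$ and whose intersection is empty, giving $\cutrk(K)+\cutrk(\emptyset)\ls\cutrk(A)+\cutrk(K\sm A)$. Invoking $(i)$ and then linear boundedness on each term of the right-hand side turns this into
\[
 |A|+|K\sm A|=|K|=\cutrk(K)\ls\cutrk(A)+\cutrk(K\sm A)\ls|A|+|K\sm A|,
\]
so every inequality is in fact an equality; in particular $\cutrk(A)=|A|$ (and, as a byproduct, $\cutrk(K\sm A)=|K\sm A|$).

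I do not expect any genuine obstacle: each item is a one- or two-line consequence of the listed axioms. The only points that require a little care are remembering to use integrality of $\cutrk$ to pass from $n/2$ to $\lfloor n/2\rfloor$ in $(ii)$, and, in $(iii)$, choosing to compare $\cutrk(A)$ with $\cutrk(K)$ through the disjoint decomposition $K=A\cup(K\sm A)$ and the vanishing of $\cutrk(\emptyset)$, rather than attempting a more direct argument relating the two.
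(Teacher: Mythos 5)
Your proof is correct and follows essentially the same route as the paper: $(i)$ from linear boundedness plus integrality and symmetry, $(ii)$ from the two bounds $\cutrk(A)\ls|A|$ and $\cutrk(A)\ls n-|A|$, and $(iii)$ from submodularity applied to the disjoint decomposition $K=A\cup(K\sm A)$ together with linear boundedness. The only cosmetic difference is that in $(ii)$ you sum the two inequalities where the paper takes their minimum; both yield $\lfloor n/2\rfloor$.
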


\begin{proof}
    $(i)$ Using {linear boundedness}, $\cutrk(\emptyset) \ls |\emptyset| = 0$ so $\cutrk(\emptyset) = 0$ as $\cutrk$ has values in $\mathbb{N}$. Then, using {symmetry}, $\cutrk(V) = \cutrk(\emptyset) = 0$. $(ii)$
  Let $A \se V$. Using {linear boundedness}, $\cutrk(A) \ls |A|$. Also, using {symmetry}, $\cutrk(A) = \cutrk(V\sm A) \ls |V\sm A| = n - |A|$. Thus, $\cutrk(A) \ls \min(|A|, n - |A|) \ls \lfloor n/2 \rfloor$.
    $(iii)$ Let $A\se K$. Using {submodularity}, $\cutrk(A)+\cutrk(K\sm A)\gs \cutrk(K)+\cutrk(\emptyset)=|K|+0$. Moreover, using {linear boundedness}, $\cutrk(A)\ls |A|$ and $\cutrk(K\sm A)\ls |K|-|A|$, so  $\cutrk(A)= |A|$.
\end{proof}

\subsection{Interpretation for graph states}

The cut-rank function of a graph gives information on the entanglement properties of the corresponding graph state. Actually, the cut-rank function coincides with some well-known entanglement measures.

\begin{proposition}[\cite{Hein06, Hein04}]
    Given a graph state $\ket G$ and a bipartition $(A , V \sm A)$ of vertices, all these quantities coincide:
    \begin{itemize}
        \item The cut-rank $cutrk(A)$;
        \item the Schmidt rank of the bipartition, i.e. the logarithm in base 2 of the minimal number $R$ of terms in the decomposition \footnote{In the literature, the Schmidt rank is sometimes defined without the logarithm, that is, the Schmidt rank is simply defined as the minimal number $R$ in the decomposition $\ket G = \sum_{i=1}^R \mu_i \ket{\psi_i}_A \otimes \ket{\psi'_i}_{V\sm A}$. Also, the notion of Schmidt rank coincides with the notion of Schmidt measure with respect to a bipartition.} $\ket G = \sum_{i=1}^R \mu_i \ket{\psi_i}_A \otimes \ket{\psi'_i}_{V\sm A}$;
        \item the entanglement entropy of the bipartition (or von Neumann entropy) i.e. the quantity\\ $$-Tr\left(Tr_{V\sm A} \ket{G}\bra{G} \log_2 (Tr_{V\sm A} \ket{G}\bra{G})\right);$$
        \item the purity of the reduced density matrix $Tr_{V\sm A} \ket{G}\bra{G}$ i.e. the quantity $$Tr\left(Tr_{V\sm A} \left(\ket{G}\bra{G}\right)^2\right).$$
    \end{itemize}
\end{proposition}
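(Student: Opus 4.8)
The plan is to put the reduced density matrix $\rho_A \defeq Tr_{V\sm A}(\ket{G}\bra{G})$ into a transparent normal form: I will show it equals $2^{-\cutrk(A)}$ times an orthogonal projector of rank $2^{\cutrk(A)}$, after which all four quantities are immediate. The starting point is \cref{prop:marginal}, which gives $\rho_A = \frac{1}{2^{|A|}}\sum_{D}(-1)^{|G[D]|}X_D Z_{Odd_G(D)}$, where $D$ ranges over $\{D\se V : D\cup Odd_G(D)\se A\}$. This index set is exactly $\ker\lambda_A$ for the linear map $\lambda_A\colon 2^A\to 2^{V\sm A}$, $D\mapsto Odd_G(D)\sm A$ introduced in \cref{sec:cutrank}: indeed $D\cup Odd_G(D)\se A$ iff $D\se A$ and $\lambda_A(D)=\emptyset$. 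Since $\operatorname{rank}(\lambda_A)=\cutrk(A)$, rank--nullity over $\mathbb F_2$ gives $|\ker\lambda_A| = 2^{|A|-\cutrk(A)}$.

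Next I would establish that $D\mapsto P_D\defeq(-1)^{|G[D]|}X_D Z_{Odd_G(D)}$ is a group homomorphism from $(\ker\lambda_A,\Delta)$ into the Pauli group: using $X_A Z_B = (-1)^{|A\cap B|}Z_B X_A$ and the linearity $Odd_G(D)\Delta Odd_G(D') = Odd_G(D\Delta D')$, one gets $P_D P_{D'} = \pm P_{D\Delta D'}$, and the sign is forced to be $+1$ because, by \cref{prop:stabilizer_definition}, each $P_D$ stabilizes $\ket G$, hence so does $P_D P_{D'}$, while no nonzero operator and its negative can both stabilize $\ket G$. In particular all the $P_D$ commute. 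Therefore $\Pi\defeq\frac{1}{|\ker\lambda_A|}\sum_{D\in\ker\lambda_A}P_D$ satisfies $\Pi^\dagger=\Pi$ and $\Pi^2=\Pi$, so it is an orthogonal projector, and $\operatorname{rank}\Pi = Tr(\Pi) = \frac{1}{|\ker\lambda_A|}Tr(I) = 2^{|A|}/2^{|A|-\cutrk(A)} = 2^{\cutrk(A)}$, since $Tr(P_D)=0$ unless $D=\emptyset$. Hence $\rho_A = \frac{|\ker\lambda_A|}{2^{|A|}}\Pi = 2^{-\cutrk(A)}\Pi$, so $\rho_A$ has exactly $2^{\cutrk(A)}$ nonzero eigenvalues, all equal to $2^{-\cutrk(A)}$.

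Given this normal form, the four items follow by inspection and no further work is needed: the Schmidt rank of $(A,V\sm A)$ is $\log_2\operatorname{rank}(\rho_A)=\cutrk(A)$ (the number of nonzero weights in a Schmidt decomposition of $\ket G$ equals the rank of $\rho_A$); the von Neumann entropy is $-Tr(\rho_A\log_2\rho_A) = -2^{\cutrk(A)}\cdot 2^{-\cutrk(A)}\log_2(2^{-\cutrk(A)}) = \cutrk(A)$; and the purity is $Tr(\rho_A^2) = 2^{\cutrk(A)}(2^{-\cutrk(A)})^2 = 2^{-\cutrk(A)}$, carrying the same integer $\cutrk(A)$ as its negative base-$2$ logarithm. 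The only genuinely delicate step is the sign argument that makes $D\mapsto P_D$ multiplicative --- equivalently, that the $P_D$ generate an abelian subgroup of the stabilizer group of $\ket G$ --- so that $\Pi$ is a projector; everything else is bookkeeping built on \cref{prop:marginal} and the elementary properties of $\lambda_A$ recalled in \cref{sec:cutrank}.
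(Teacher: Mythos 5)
The paper states this proposition without proof, simply citing \cite{Hein06, Hein04}, so your self-contained derivation goes beyond what the text provides, and it is correct. Your normal form $\rho_A = 2^{-\cutrk(A)}\Pi$ with $\Pi$ a projector of rank $2^{\cutrk(A)}$ is exactly the right way to make all four items fall out at once, and the two load-bearing steps check out: the identification of the index set of \cref{prop:marginal} with $\ker\lambda_A$ (hence of cardinality $2^{|A|-\cutrk(A)}$ by rank--nullity) is immediate from the definition of $\lambda_A$ in \cref{sec:cutrank}, and your sign argument for $P_D P_{D'} = P_{D\Delta D'}$ --- both sides stabilize $\ket G$, and a nonzero vector cannot be fixed by an operator and its negative --- is clean and avoids any explicit parity computation with $|G[D]|$. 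Two small remarks. First, you assert $\Pi^\dagger = \Pi$ without comment; this does follow, but the one-line reason is that the multiplicativity you just proved gives $P_D^2 = P_\emptyset = I$, and a unitary involution is Hermitian (the naive check would require $|D\cap Odd_G(D)|$ to be even, which is not obvious a priori). Second, you are right to flag that the purity is literally $2^{-\cutrk(A)}$ rather than $\cutrk(A)$; the proposition's claim that the purity "coincides" with the cut-rank only holds after taking $-\log_2$, so the imprecision is in the paper's statement, not in your proof.
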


\subsection{Relationship with LU-equivalence}

As the Schmidt rank is invariant under local unitaries \cite{nielsenchuang}, so is the cut-rank.

\begin{proposition} \label{prop:cutrank_lu}
    Two LU-equivalent graphs have the same cut-rank function.
\end{proposition}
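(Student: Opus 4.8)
The plan is to leverage the characterization, established just above, of the cut-rank of $A$ as the Schmidt rank of $\ket{G}$ with respect to the bipartition $(A, V\sm A)$, together with the standard fact that the Schmidt rank is invariant under local unitaries \cite{nielsenchuang}. First I would fix a subset $A \se V$ and unfold LU-equivalence: there are single-qubit unitaries $U_u$ with $\ket{G_2} = \bigotimes_{u\in V} U_u\,\ket{G_1}$. The key observation is that, grouping the single-qubit factors according to the bipartition $(A, V\sm A)$, this tensor product factors as $U_A \otimes U_{V\sm A}$ with $U_A \defeq \bigotimes_{u\in A} U_u$ and $U_{V\sm A} \defeq \bigotimes_{u \in V\sm A} U_u$, and both $U_A$ and $U_{V\sm A}$ are themselves unitary. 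Thus $\ket{G_1}$ and $\ket{G_2}$ are related by a unitary that is local with respect to this particular cut.

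Next I would invoke (or quickly re-derive) invariance of the Schmidt rank. Writing a minimal Schmidt decomposition $\ket{G_1} = \sum_{i=1}^R \mu_i\, \ket{\psi_i}_A \otimes \ket{\psi'_i}_{V\sm A}$ with $R = 2^{\cutrk_{G_1}(A)}$, applying $U_A\otimes U_{V\sm A}$ yields $\ket{G_2} = \sum_{i=1}^R \mu_i\,(U_A\ket{\psi_i}_A)\otimes(U_{V\sm A}\ket{\psi'_i}_{V\sm A})$. Since $U_A$ and $U_{V\sm A}$ are unitary, the families $(U_A\ket{\psi_i}_A)_i$ and $(U_{V\sm A}\ket{\psi'_i}_{V\sm A})_i$ are still orthonormal, so this is a valid Schmidt decomposition of $\ket{G_2}$ with $R$ terms; hence the Schmidt rank of $\ket{G_2}$ across $(A, V\sm A)$ is at most $R$. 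Applying the same argument to $\ket{G_1} = (U_A^\dagger \otimes U_{V\sm A}^\dagger) \ket{G_2}$ gives the reverse inequality, so the two Schmidt ranks are equal. Reading this back through the cut-rank/Schmidt-rank dictionary gives $\cutrk_{G_1}(A) = \cutrk_{G_2}(A)$, and since $A \se V$ was arbitrary, the two cut-rank functions coincide.

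I do not expect a genuine obstacle; the only point deserving a line of care is that a tensor product of single-qubit gates is automatically ``bipartite-local'' for \emph{every} bipartition, which is precisely what makes the Schmidt-rank invariance applicable for all $A$ simultaneously. As an alternative route avoiding the Schmidt decomposition, one could argue directly on reduced density matrices: by the discussion above (or by \cref{prop:marginal}) the rank of $Tr_{V\sm A}(\ket{G}\bra{G})$ equals $2^{\cutrk_G(A)}$, and $Tr_{V\sm A}((U_A\otimes U_{V\sm A})\ket{G_1}\bra{G_1}(U_A\otimes U_{V\sm A})^\dagger) = U_A\, Tr_{V\sm A}(\ket{G_1}\bra{G_1})\, U_A^\dagger$ has the same rank as $Tr_{V\sm A}(\ket{G_1}\bra{G_1})$ because $U_A$ is invertible; this again forces $\cutrk_{G_1}(A) = \cutrk_{G_2}(A)$ for every $A \se V$.
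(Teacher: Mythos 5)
Your proposal is correct and follows essentially the same route as the paper, which simply notes that the cut-rank coincides with the Schmidt rank and that the Schmidt rank is invariant under local unitaries (citing \cite{nielsenchuang}); you merely spell out the standard invariance argument that the paper leaves as a citation. No issues.
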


Also, as local complementation can be implemented using local Clifford operation, we recover the well-established fact that local complementation preserves the cut-rank \cite{OumCliqueWidth}.

\begin{corollary}
    Local complementation does not change the cut-rank function i.e. two LC-equivalent graph state have the same cut-rank function. 
\end{corollary}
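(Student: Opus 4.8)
The plan is to obtain this as an immediate consequence of \cref{prop:cutrank_lu}, using the fact—already established in the section on local complementation—that every local complementation is realized by local Clifford operators on the corresponding graph state. Concretely, by \cref{prop:implementation_lc} we have, for any graph $G$ and any vertex $u \in V$, the identity $\ket{G\star u} = X(\pi/2)_u Z(-\pi/2)_{N_G(u)}\ket G$, so $\ket G$ and $\ket{G\star u}$ are LC-equivalent and hence in particular LU-equivalent. Iterating along a sequence of local complementations, any two graphs related by local complementations give rise to LU-equivalent graph states.

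With this in hand, \cref{prop:cutrank_lu} directly yields that the two graphs have the same cut-rank function, which is exactly the first formulation of the corollary. For the second formulation, I would simply invoke the equivalence recalled above (\cref{prop:lclc}): LC-equivalence of graph states coincides with being related by a sequence of local complementations, so ``two LC-equivalent graph states have the same cut-rank function'' is the same statement.

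There is essentially no obstacle here; the only care needed is bookkeeping—making sure the implication ``LC-equivalent $\Rightarrow$ LU-equivalent'' and the equivalence ``related by local complementations $\Leftrightarrow$ LC-equivalent for graph states'' are both cited correctly, both of which are available from the preceding text. If one preferred a self-contained argument avoiding the quantum detour, one could instead check directly that complementing the subgraph induced by $N_G(u)$ modifies the cut-matrix $\Gamma_A$, for a fixed bipartition $(A, V\setminus A)$, only by adding a term built from the indicator vector of $N_G(u)$ (diagonal entries being irrelevant for simple graphs), and that, treating the cases $u \in A$ and $u \notin A$ separately, this transformation is a combination of elementary row/column operations over $\mathbb F_2$, hence rank-preserving; but the route through \cref{prop:cutrank_lu} is shorter given what has been developed, so that is the one I would present.
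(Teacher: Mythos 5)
Your proposal is correct and matches the paper's own reasoning exactly: the corollary is stated as an immediate consequence of \cref{prop:cutrank_lu} together with the fact that local complementation is implemented by local Clifford operators (\cref{prop:implementation_lc}), which is precisely the sentence preceding the corollary in the text. The optional direct cut-matrix argument you sketch is a fine alternative, but the short route you chose is the one the paper uses.
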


While two LU-equivalent graphs have the same cut-rank function, there exist pairs of graphs that have the same cut-rank function, but that are not LU-equivalent: a counterexample involves two isomorphic Petersen graphs \cite{Fon-Der-Flaass1996,Hein04}, see Figure \ref{fig:petersen}. This is however not the smallest counter-example: the smallest counter-example is a pair of graphs on 9 vertices that was found by computer search \cite{OumSurvey}.

\begin{figure}[H]
\centering
\begin{tikzpicture}    
\begin{scope}[every node/.style=vertices]
    \node (U0) at (-1.427,0.464) {2};
    \node (U1) at (0,1.5) {3};
    \node (U2) at (1.427,0.464) {4};
    \node (U3) at (0.881,-1.124) {5};
    \node (U4) at (-0.881,-1.124) {1};
    \node (U5) at (-1.427*2,0.464*2) {7}; 
    \node (U6) at (0,1.5*2) {8};  
    \node (U7) at (1.427*2,0.464*2) {9};
    \node (U8) at (0.881*2,-1.124*2) {10};
    \node (U9) at (-0.881*2,-1.124*2) {6};  
\end{scope}
\begin{scope}[every node/.style={},
                every edge/.style=edges]              
    \path [-] (U0) edge node {} (U3);
    \path [-] (U3) edge node {} (U1);
    \path [-] (U1) edge node {} (U4);
    \path [-] (U4) edge node {} (U2);
    \path [-] (U2) edge node {} (U0);

    \path [-] (U5) edge node {} (U6);
    \path [-] (U6) edge node {} (U7);
    \path [-] (U7) edge node {} (U8);
    \path [-] (U8) edge node {} (U9);
    \path [-] (U9) edge node {} (U5);

    \path [-] (U7) edge node {} (U8);
    \path [-] (U8) edge node {} (U9);
    \path [-] (U0) edge node {} (U5);
    \path [-] (U1) edge node {} (U6);
    \path [-] (U2) edge node {} (U7);
    \path [-] (U3) edge node {} (U8);
    \path [-] (U4) edge node {} (U9);

\end{scope}
\end{tikzpicture}\caption{The Petersen graph. This graph is not LU-equivalent to the graph obtained from it by permuting the vertices 1,2,3,4,5 with respectively 6,7,8,9,10. However, the two graphs have the same cut-rank function.}
\label{fig:petersen}
\end{figure}
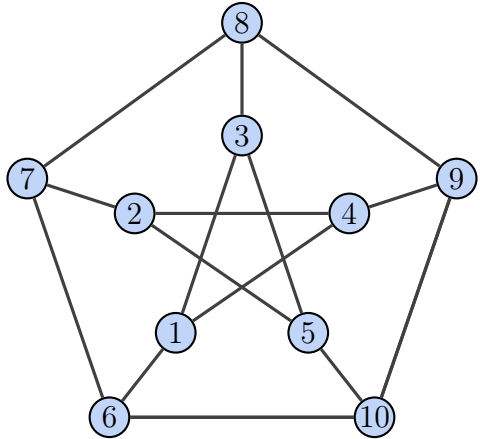

\chapter{Minimal local sets}

Local sets, and most importantly, minimal local sets, are our main graphical tool to study the LU-equivalence of graphs. The main reason behind their usefulness is that (minimal) local sets are LU-invariant, i.e. two LU-equivalent graphs have the same (minimal) local sets. Also, minimal local sets give constraints on the local unitaries that relate two LU-equivalent graph states (see \cref{subsec:constraints_type}). Thus, it is crucial to be able to construct a MLS cover: a family of minimal local sets that cover each vertex of a graph. This is actually always possible (and efficient), as we show in \cref{sec:mls_cover}.

\label{chap:mls}

\section{Definitions}

\subsection{Graphical definition}

\begin{definition}[Local set]Given $G=(V,E)$, a \emph{local set} $L$ is a non-empty subset of $V$ of the form $L = D \cup Odd_G(D)$ for some $D \se V$ called a \emph{generator}.
\end{definition}

\begin{definition}[Minimal local set]
    A minimal local set is a local set that is minimal by inclusion, i.e. it does not contain any smaller local set. 
\end{definition}

Local sets and minimal local sets are illustrated in Figure \ref{fig:MLS}.

\begin{figure}
    \centering
    
    \scalebox{1}{
    \begin{tikzpicture}[scale = 0.7]    
        \begin{scope}[every node/.style=vertices]
            \node (U0) at (0,0) {1};
            \node (U1) at (0,4) {2};
            \node (U2) at (4,4) {3};
            \node (U3) at (4,0) {4};
        \end{scope}
        \begin{scope}[every node/.style={},
                        every edge/.style=edges]              
            \path [-] (U0) edge node {} (U1);
            \path [-] (U1) edge node {} (U2);
            \path [-] (U2) edge node {} (U3);
            \path [-] (U3) edge node {} (U0);
        \end{scope}
        \draw[cornflowerblue, ultra thick] (4,-0.75) -- (0,-0.75) arc(-90:-180:0.75)-- (-0.75,0) -- (-0.75,4) arc(180:45:0.75);
        \draw[cornflowerblue, ultra thick] (4,-0.75)  arc(-90:45:0.75) -- (0.52,4.54);
    \end{tikzpicture}\qquad\qquad\qquad\qquad\raisebox{0cm}{
        \begin{tikzpicture}[scale = 0.7]
        
        \begin{scope}[every node/.style=vertices]
            \node (U0) at (0,0) {1};
            \node (U1) at (0,4) {2};
            \node (U2) at (4,4) {3};
            \node (U3) at (4,0) {4}; 
        \end{scope}
        
        \begin{scope}[every edge/.style=edges]              
            \path [-] (U0) edge node {} (U1);
            \path [-] (U1) edge node {} (U2);
            \path [-] (U2) edge node {} (U3);
            \path [-] (U3) edge node {} (U0);        
        \end{scope}
        \begin{scope}[shift={(3.4,-0.42)},rotate=45]
            \draw[cornflowerblue, ultra thick] (0,0) -- (0,5.5) arc(180:0:0.75) -- (1.5,0) arc(0:-180:0.75);
        \end{scope}
    \end{tikzpicture}}
    }
    
    \caption{(Left) A local set generated by $D = \{1\}$: $Odd(D) = \{2,4\}$. This is not a minimal local set. (Right) A local set generated by $D = \{2,4\}$: $Odd(D) = \emptyset$. In particular, this is a minimal local set, as neither $\{2\}$, $\{4\}$ nor $\emptyset$ is a local set.}
    \label{fig:MLS}
    \end{figure}
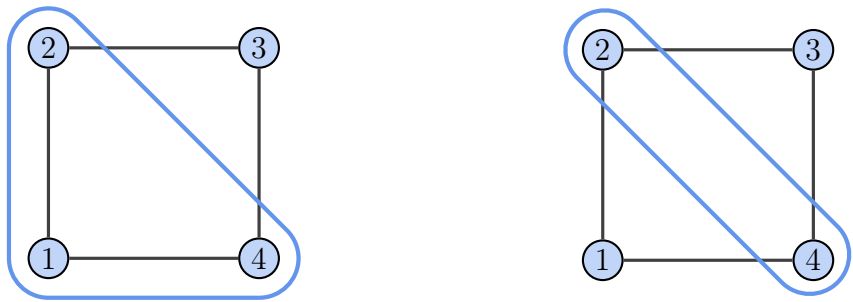

\subsection{A stabilizer characterization}

Local sets are strongly related to the stabilizers of the graph state, precisely to their support. A support of a Pauli operator is the set of positions where the Pauli is not the identity, that is, $\supp(S) = \{u \in S ~|~ S_u \not\propto I\}$.

\begin{proposition} \label{prop:mls_stabilizers}
    The local sets of a graph $G$ are precisely the supports of the stabilizers $(-1)^{|G[D]|}X_D Z_{Odd_G(D)}$ of $\ket G$. Minimal local sets correspond to the stabilizers with minimal support.
\end{proposition}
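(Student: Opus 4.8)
The plan is to read the support of each stabilizer of $\ket G$ directly off the formula of \cref{prop:stabilizer_definition}, and then to observe that the resulting family of supports is, essentially by definition, the family of local sets of $G$; the minimality statement will then follow formally.

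First I would recall that, by \cref{prop:stabilizer_definition}, the stabilizer group of $\ket G$ (which has exactly $2^n$ elements) consists of the Pauli operators $S_D \defeq (-1)^{|G[D]|}X_D Z_{Odd_G(D)}$ for $D \se V$; here $X_D$ (resp.\ $Z_{Odd_G(D)}$) puts an $X$ (resp.\ a $Z$) on the vertices of $D$ (resp.\ of $Odd_G(D)$) and an identity elsewhere. Fixing $D$ and a vertex $u$, the Pauli factor of $S_D$ at $u$ is $I$ if $u \notin D \cup Odd_G(D)$, it is $X$ if $u \in D \sm Odd_G(D)$, it is $Z$ if $u \in Odd_G(D) \sm D$, and it is $XZ = -iY$ (hence $Y$ up to a phase) if $u \in D \cap Odd_G(D)$. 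Since the overall sign $(-1)^{|G[D]|}$ plays no role in which factors are identities, this gives $\supp(S_D) = D \cup Odd_G(D)$.

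Next I would match this with the definition of local sets. For $D = \emptyset$ one gets $S_\emptyset = I$ with empty support, which is not a local set. For $D \neq \emptyset$ one has $\emptyset \neq D \se D \cup Odd_G(D) = \supp(S_D)$, so $\supp(S_D)$ is precisely the local set generated by $D$; conversely, by definition every local set has the form $D \cup Odd_G(D)$ for some non-empty $D$, hence equals $\supp(S_D)$. Thus the local sets of $G$ are exactly the non-empty supports of the stabilizers of $\ket G$, i.e.\ the supports of the non-identity $S_D$. That different generators $D$ can produce the same local set is harmless, since the claim concerns only the image of the map $D \mapsto \supp(S_D)$.

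Finally, for the minimality statement: the map $\supp$, restricted to the non-identity stabilizers, has image exactly the set of local sets of $G$ ordered by inclusion, so a local set is minimal by inclusion (i.e.\ a minimal local set) if and only if it is the support of a non-identity stabilizer whose support is minimal by inclusion. I do not expect a real obstacle here beyond the bookkeeping around the empty set: local sets are non-empty by convention, which forces us to discard the single stabilizer $S_\emptyset = I$, and this degenerate case $D = \emptyset$ is the only point where the argument needs care.
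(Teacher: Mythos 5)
Your proof is correct and follows the same route as the paper, which simply states that the proposition "follows directly from" \cref{prop:stabilizer_definition}; you have merely made explicit the support computation $\supp(S_D) = D \cup Odd_G(D)$ and the bookkeeping around $D = \emptyset$ that the paper leaves implicit.
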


\begin{proof}
    Follows directly from \cref{prop:stabilizer_definition}.
\end{proof}

\subsection{A cut-rank characterization}

Arguably the most fundamental way of defining minimal local sets is through the cut-rank function.

\begin{proposition}
    \label{prop:characMLS}Given a graph $G=(V,E)$ and $A\se V$: 
    \begin{itemize}
        \item  If $A$ is a local set, then\footnote{This is not an equivalence. For example, in the complete graph of order 4 $K_4$, any set of vertices of size 3 is not a local set, however, such a set has a cut-rank of 1, and every set of size 1 or 2 also has a cut-rank of 1.} $\forall a \in A, \cutrk(A) \ls \cutrk(A \setminus \{a\})$;
        \item $A$ is a minimal local set if and only if $A$ is not full cut-rank, but each of its proper subset is, i.e.~$\forall a \in A, \cutrk(A) \ls \cutrk(A \setminus \{a\}) = |A|-1$.  
\end{itemize}
\end{proposition}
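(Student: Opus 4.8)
The plan is to run everything through the $\mathbb F_2$-linear map $\lambda_A\colon 2^A\to 2^{V\setminus A}$, $D\mapsto Odd_G(D)\setminus A$, whose rank is $\cutrk(A)$ and whose kernel is $\ker\lambda_A=\{D\subseteq A : Odd_G(D)\subseteq A\}$. By rank--nullity this gives $\cutrk(A)=|A|-\dim\ker\lambda_A$, so ``$A$ full cut-rank'' is exactly $\ker\lambda_A=\{\emptyset\}$. Since the empty generator yields the empty set, any generator of a local set (which is nonempty by definition) is nonempty; thus ``$A$ is a local set'' means there is a nonempty $D\in\ker\lambda_A$ with $D\cup Odd_G(D)=A$, while ``$A$ strictly contains a local set'' means $\ker\lambda_{A\setminus\{a\}}\neq\{\emptyset\}$ for some $a\in A$. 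The single computation I rely on is the identification $\ker\lambda_{A\setminus\{a\}}=\{D'\in\ker\lambda_A : a\notin D'\cup Odd_G(D')\}$, a subspace of $\ker\lambda_A$, which is immediate from the definition of $\lambda$.

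For the first bullet I would take $A$ to be a local set with generator $D$, so $D$ is a nonempty element of $\ker\lambda_A$, and fix $a\in A$. Since $a\in A=D\cup Odd_G(D)$, the identification above shows $D\notin\ker\lambda_{A\setminus\{a\}}$, so $\ker\lambda_{A\setminus\{a\}}$ is a \emph{proper} subspace of $\ker\lambda_A$, hence $\dim\ker\lambda_{A\setminus\{a\}}\leq\dim\ker\lambda_A-1$. Applying rank--nullity on each side yields $\cutrk(A\setminus\{a\})=(|A|-1)-\dim\ker\lambda_{A\setminus\{a\}}\geq(|A|-1)-(\dim\ker\lambda_A-1)=\cutrk(A)$.

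For the second bullet I would argue both implications. ($\Rightarrow$) If $A$ is a minimal local set it is in particular a local set, so $\ker\lambda_A\neq\{\emptyset\}$ and $A$ is not full cut-rank. If some $A\setminus\{a\}$ were not full cut-rank, a nonempty $D'\in\ker\lambda_{A\setminus\{a\}}$ would yield a local set $D'\cup Odd_G(D')\subseteq A\setminus\{a\}\subsetneq A$, contradicting minimality; hence $\cutrk(A\setminus\{a\})=|A|-1$ for all $a\in A$, which by \cref{prop:cutrank_prop2}$(iii)$ is the same as every proper subset of $A$ being full cut-rank, and the inequality $\cutrk(A)\leq\cutrk(A\setminus\{a\})$ merely records that $A$ is not full cut-rank. ($\Leftarrow$) Conversely, assume $A$ is not full cut-rank but every $A\setminus\{a\}$ is. I would pick a nonempty $D\in\ker\lambda_A$ and set $L\defeq D\cup Odd_G(D)\subseteq A$, a local set; if $L\neq A$, picking $a\in A\setminus L$ makes $D$ a nonempty element of $\ker\lambda_{A\setminus\{a\}}$, contradicting that $A\setminus\{a\}$ is full cut-rank, so $L=A$ and $A$ is a local set. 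The identical argument rules out any local set $L'\subsetneq A$, since its generator would be a nonempty element of $\ker\lambda_{A\setminus\{a\}}$ for any $a\in A\setminus L'$; hence $A$ is minimal.

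The argument will be short once this kernel dictionary is in place; the only delicate point is the strictness of $\ker\lambda_{A\setminus\{a\}}\subsetneq\ker\lambda_A$ in the first bullet, which relies on the removed vertex $a$ lying in $D\cup Odd_G(D)$ rather than merely in $A$. This is precisely why the inequality $\cutrk(A)\leq\cutrk(A\setminus\{a\})$ can fail for sets that merely resemble local sets (the $K_4$ footnote) but always holds for genuine ones, and why the second bullet must be phrased as ``not full cut-rank, with every proper subset full cut-rank'' rather than through a naive monotonicity of $\cutrk$.
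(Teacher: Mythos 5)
Your proof is correct and follows essentially the same route as the paper: both rest on the identity $\dim\ker\lambda_A=|A|-\cutrk(A)$ (i.e.\ counting generators of local sets contained in $A$) and on the observation that removing a vertex of $D\cup Odd_G(D)$ strictly shrinks this kernel. Your write-up is somewhat more explicit than the paper's — in particular you spell out the forward direction of the second bullet and the strictness of the kernel inclusion, which the paper leaves largely implicit — but the underlying argument is identical.
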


\begin{proof} Notice that $|A| - \cutrk(A)$, the dimension of the kernel of $\lambda_A: 2^{A}\to 2^{V\setminus A} = D\mapsto Odd_G(D)\setminus A$, counts the number of sets that generate a local set in $A$: $2^{|A|-\cutrk(A)} = |\{D \se A ~|~ D \cup Odd_G(D) \se A\}|$. As a consequence, $A \se V$ being a local set implies that for any $a \in A$, there is more local sets (counted with their generators) in $A$ than in $A \setminus \{a\}$. Then $|A\sm \{a\}| - \cutrk(A\sm \{a\}) < |A| - \cutrk(A)$. This translates to $\cutrk(A) \ls \cutrk(A \setminus \{a\})$.

Conversely, if $\forall a \in A, \cutrk(A) \ls \cutrk(A \setminus \{a\}) = |A|-1$, then any proper subset of $A$ contains no local set, and $A$ is a local set, thus $A$ is a minimal local set.
\end{proof}

\section{Basic properties}

\subsection{Invariance}

Two graphs that have the same cut-rank function have the same local sets, with the same number of generators, and this actually is an equivalence: 

\begin{proposition}
    \label{prop:corres_mls_cutrankalt}
    Two graphs have the same cut-rank function if and only if they have the same local sets with the same number of generators.   
\end{proposition}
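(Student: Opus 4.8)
The plan is to prove both directions by relating the count of generators of local sets contained in a set $A$ directly to the cut-rank, using the observation already made in the proof of \cref{prop:characMLS}. Recall that the number of subsets $D \se A$ with $D \cup Odd_G(D) \se A$ equals $2^{|A| - \cutrk(A)}$, since these are exactly the elements of the kernel of $\lambda_A$. Write $g_G(A)$ for the number of generators of local sets contained in $A$, i.e. $g_G(A) = |\{D \se A : D \cup Odd_G(D) \se A\}| = 2^{|A| - \cutrk_G(A)}$.

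First I would prove the forward direction. Suppose $G$ and $G'$ have the same cut-rank function. Then for every $A \se V$, $g_G(A) = 2^{|A| - \cutrk_G(A)} = 2^{|A| - \cutrk_{G'}(A)} = g_{G'}(A)$. In particular, a set $L$ is a local set of $G$ (with a given generator $D$) exactly when $D \cup Odd_G(D) = L$; but whether a fixed $D$ generates a local set contained in $L$ depends only on the value $\cutrk_G(L)$ versus $\cutrk_G(L')$ for subsets $L' \subsetneq L$. More carefully: $L$ is a local set of $G$ iff $g_G(L) > g_G(L')$ for every $L' \subsetneq L$ with $|L'| = |L| - 1$ — equivalently, iff there is some $D \se L$ with $D \cup Odd_G(D) = L$ and not merely $\subsetneq L$. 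Since $g_G = g_{G'}$ as functions on all subsets of $V$, the local sets of $G$ and of $G'$ coincide, and the number of generators of each is $g_G(L) - (\text{number of generators of proper local subsets, counted appropriately})$, which is again determined by $g_G$ restricted to subsets of $L$; hence it agrees for $G$ and $G'$. I would phrase this last count via inclusion–exclusion / Möbius inversion over the subset lattice of $L$, but the clean statement is: the number of $D \se V$ with $D \cup Odd_G(D) = L$ (exactly) is an integer linear combination of the values $g_G(L')$ for $L' \se L$, so equality of cut-rank functions forces equality of these generator counts.

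For the converse, suppose $G$ and $G'$ have the same local sets with the same number of generators. Then for every $A \se V$, summing the number of generators over all local sets $L \se A$ (plus one for the empty generator, which always generates $\emptyset$) recovers $g_G(A) = 2^{|A| - \cutrk_G(A)}$. Since this sum is the same for $G'$, we get $2^{|A| - \cutrk_G(A)} = 2^{|A| - \cutrk_{G'}(A)}$, hence $\cutrk_G(A) = \cutrk_{G'}(A)$ for all $A \se V$, i.e. $G$ and $G'$ have the same cut-rank function.

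The main obstacle is bookkeeping rather than conceptual: making precise the Möbius-inversion step that passes between "number of $D$ with $D \cup Odd_G(D) \se A$" and "number of $D$ with $D \cup Odd_G(D) = L$ for each $L \se A$", and checking that this inversion is purely combinatorial (independent of $G$) so that equality of one family of counts is equivalent to equality of the other. Once that correspondence is set up cleanly, both implications follow from the single identity $g_G(A) = 2^{|A| - \cutrk_G(A)}$.
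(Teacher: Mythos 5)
Your proposal is correct and follows essentially the same route as the paper: both rest on the identity that the number of generators of local sets contained in $A$ equals $2^{|A|-\cutrk(A)}$, with the exact generator counts recovered by the recursive subtraction over proper subsets (your M\"obius-inversion step is precisely the paper's recursion). No gaps.
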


\begin{proof}
    The cut-rank function can be computed from the number of generators of each local set: for every $A \se V$, $$\cutrk(A) = |A| - log_{2}\left( 1 + \sum_{\text{$L$ local set in $A$}} |\{D \se L ~|~ D \cup Odd_G(D) = L\}|\right)$$
    The "1" corresponds to the empty set $\emptyset$, which is part of the kernel of $\lambda_A: 2^{A}\to 2^{V\setminus A} = D\mapsto Odd_G(D)\setminus A$. Conversely, the number of generators of each local set  can be recursively computed from the values of the cut-rank function: 
    \begin{equation*}
        |\{D \se A | D \cup Odd_G(D) = A\}| = 2^{|A| - \cutrk(A)} - \sum_{B \varsubsetneq A} |\{D \se B | D \cup Odd_G(D) = B\}| \qedhere
    \end{equation*}
\end{proof}

Recall that local unitaries and local complementation preserve the cut-rank (see \cref{prop:cutrank_lu}). Then local sets are invariant by LC- and LU-equivalence. Historically, local sets were called local because of their invariance by local complementation \cite{Perdrix06}.

\begin{corollary} \label{cor:localset_invariant}
    Two LU-equivalent (or LC-equivalent) graphs have the same local sets with the same number of generators. In particular, local sets are invariant by local complementation.
\end{corollary}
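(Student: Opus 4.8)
The plan is to derive this corollary directly from the two facts established immediately above: that local equivalence preserves the cut-rank function, and that the cut-rank function carries the same information as the family of local sets together with their generator counts.

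First I would invoke \cref{prop:cutrank_lu}: if two graphs are LU-equivalent, they have the same cut-rank function. Since any local Clifford operator is in particular a local unitary, LC-equivalence is a special case of LU-equivalence, so LC-equivalent graphs also have the same cut-rank function (this is exactly the corollary stated just after \cref{prop:cutrank_lu}). Next I would apply the "only if" direction of \cref{prop:corres_mls_cutrankalt}, which says that two graphs with the same cut-rank function have the same local sets with the same number of generators. Chaining these two steps gives the first assertion. It follows as well that LU- or LC-equivalent graphs have the same \emph{minimal} local sets, since minimality is defined purely by inclusion among the local sets, which is data recoverable from the list of local sets alone.

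For the ``in particular'' clause, note that a single local complementation $G \mapsto G\star u$ is implemented on the corresponding graph states by local Clifford operators (\cref{prop:implementation_lc}), so $G$ and $G\star u$ are LC-equivalent and the previous argument applies. I expect no genuine obstacle: the substance lies in \cref{prop:cutrank_lu} and \cref{prop:corres_mls_cutrankalt}, and this corollary is their routine assembly, relying only on the elementary observations that local Clifford operators are local unitaries and that local complementation falls within LC-equivalence.
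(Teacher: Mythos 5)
Your derivation is correct and matches the paper's own reasoning: the corollary is obtained exactly by chaining \cref{prop:cutrank_lu} (LU-equivalence preserves the cut-rank function) with \cref{prop:corres_mls_cutrankalt} (same cut-rank function iff same local sets with the same number of generators), and the ``in particular'' clause follows since local complementation is implemented by local Cliffords (\cref{prop:implementation_lc}). Nothing to add.
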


A discussion on how the structure of a local set changes under local complementation can be found in \cite{Perdrix06}. For example, one can use local complementation to remove vertices from a generator of a minimal local set, leading to the following proposition.

\begin{proposition}[\cite{Perdrix06}]
    For any minimal local set $L$ of a graph $G$, for any $u\in L$, there exists $G' =_{LC}G$ such that $L = \{u\} \cup N_{G'}(u)$. 
\end{proposition}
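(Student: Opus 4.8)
The plan is to show that for a minimal local set $L$ and any chosen vertex $u \in L$, we can apply a sequence of local complementations to transform $G$ into a graph $G'$ in which $L$ is generated by the singleton $\{u\}$, i.e. $L = \{u\} \cup N_{G'}(u)$. The starting point is that $L$ is a local set, so there exists some generator $D \subseteq V$ with $L = D \cup Odd_G(D)$. By \cref{cor:localset_invariant}, $L$ remains a minimal local set of any graph LC-equivalent to $G$, so it suffices to manipulate the generator. The key structural fact I would use is minimality: by \cref{prop:characMLS}, $L$ is not full cut-rank but every proper subset is. This means the kernel of $\lambda_L$ is exactly $2$-dimensional (spanned by $\emptyset$ and a unique nonempty set), so the generator of $L$ that is contained in $L$ is \emph{unique} — call it $D_0 \subseteq L$. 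Moreover $Odd_G(D_0) \subseteq L$, and since $L$ is minimal and $D_0 \cup Odd_G(D_0)$ is a local set inside $L$, we must have $D_0 \cup Odd_G(D_0) = L$.

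The core of the argument is to show that, using local complementations, we can shrink $D_0$ down to the single vertex $\{u\}$. First I would observe that $u \in D_0$: if $u \notin D_0$ then $u \in Odd_G(D_0)$, but then consider pivoting or local complementation to move $u$ into the generator — concretely, one uses the discussion from \cite{Perdrix06} (the reference cited just before the statement) on how generators of minimal local sets transform under local complementation. The relevant mechanism is: if $D_0$ is the (unique, inside $L$) generator of the minimal local set $L$ and $v \in D_0$ with $v \neq u$, then applying a local complementation at some appropriately chosen vertex removes $v$ from the generator while keeping $L$ fixed as a set. Iterating, we reduce $|D_0|$ by one each time until only $u$ remains. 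Since $L = D_0 \cup Odd(D_0)$ is preserved throughout (it is LC-invariant) and in the final graph $G'$ the generator is $\{u\}$, we get $L = \{u\} \cup Odd_{G'}(\{u\}) = \{u\} \cup N_{G'}(u)$, as desired.

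**Main obstacle.**

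The delicate point is making precise the claim that one can always peel off a vertex from the generator via a single local complementation while keeping $L$ invariant and keeping the new generator inside $L$. The uniqueness of the in-$L$ generator (from minimality) is what makes this work: after any local complementation, $L$ is still minimal, so it still has a unique generator inside itself, and one only needs to check that a suitable local complementation strictly decreases that generator's size. I expect this to require a short case analysis on whether the vertex being removed lies in $D_0 \cap Odd_G(D_0)$ or in $D_0 \setminus Odd_G(D_0)$, and on adjacency relations within $L$; this is exactly the kind of bookkeeping carried out in \cite{Perdrix06}, so I would either cite it directly or reproduce the one-step argument. An alternative, cleaner route avoiding the iterative peeling: use \cref{prop:Bouchet}/\cref{prop:lclc} — since $L = \{u\} \cup N_{G_L}(u)$ for the graph $G_L$ obtained by restricting attention suitably, and one can realize the transformation $D_0 \mapsto \{u\}$ as a change of generator that is itself implemented by a sequence of local complementations — but the direct peeling argument is more transparent, so I would present that.
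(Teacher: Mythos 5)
Your outline points at the right target (shrink a generator of $L$ to $\{u\}$ by local complementations, using the LC-invariance of minimal local sets from \cref{cor:localset_invariant}), but it rests on a false structural claim and leaves the actual mechanism unproved. The ``key structural fact'' you invoke --- that minimality forces $L$ to have a \emph{unique} nonempty generator inside $L$ --- contradicts \cref{prop:MLS2cases}: a minimal local set has either one or three generators, and in the dimension-2 case the kernel of $\lambda_L$ has four elements. (Your phrasing is also internally inconsistent: a kernel consisting of $\emptyset$ and a single nonempty set is $1$-dimensional, since $\emptyset$ is the zero vector.) Because you explicitly say this uniqueness ``is what makes this work,'' the justification for the peeling step collapses in the three-generator case and the argument would have to be restructured, e.g.\ by fixing one generator to track throughout.

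More seriously, the one-step peeling --- ``applying a local complementation at some appropriately chosen vertex removes $v$ from the generator while keeping $L$ fixed'' --- is the entire content of the proposition, and you neither prove it nor say which vertex to complement on; you defer it to \cite{Perdrix06} and to ``a short case analysis'' that you do not carry out. That case analysis is where the work is: computing $Odd_{G\star v}$ shows that if $v \in D \cap Odd_G(D)$ then $D\setminus\{v\}$ generates a local set contained in $L$ in $G\star v$ (equal to $L$ by minimality), whereas if $v \in D \setminus Odd_G(D)$ a local complementation at $v$ leaves $D$ itself as the generator and removes nothing, so a different move is needed. Likewise, when the chosen $u$ lies in $Odd(D)\setminus D$ for every available generator, you must first transform the graph so that $u$ enters a generator; this is already nontrivial for $P_3$ with $L=\{1,2\}$ and $u=2$, where no single local complementation suffices and one needs $P_3\star 2\star 1$. ``Consider pivoting or local complementation to move $u$ into the generator'' does not establish that this is always possible. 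As written, the proposal is a plan plus an appeal to the cited reference rather than a proof.
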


Thus, the size of the smallest minimal local set  
can be related to the notion of minimum degree up to local complementation (sometimes called the local minimum degree \cite{Javelle12,CattaneoP15}).

\begin{definition} \label{def:dloc}
    The minimum degree of a graph up to local complementation is $\dloc(G) \defeq min_{\{u \in G' ~|~ G' =_{LC} G\}} \delta_{G'}(u)$ where the degree $\delta$ is the number of neighbors of a vertex.
\end{definition}

\begin{proposition} \label{prop:dloc_localset}
    The minimum degree up to local complementation coincides with the size of the smallest minimal local set $ - 1$, i.e.  
    $\dloc(G) = min_{D \se V, D \neq \emptyset}(|D \cup Odd_G(D)|) -1$.
\end{proposition}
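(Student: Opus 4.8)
The plan is to prove the equality by two matching inequalities, relying only on two facts already available: that local sets are invariant under LC-equivalence (\cref{cor:localset_invariant}), and the preceding proposition of \cite{Perdrix06} stating that for any minimal local set $L$ of $G$ and any $u \in L$ there is a graph $G' =_{LC} G$ with $L = \{u\} \cup N_{G'}(u)$. Throughout, write $m(G) \defeq \min_{D \se V,\, D \neq \emptyset} |D \cup Odd_G(D)|$ for the size of a smallest local set of $G$; note that a local set of minimum size is automatically minimal by inclusion, so $m(G)$ is also the size of a smallest minimal local set.

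First I would prove $\dloc(G) \gs m(G) - 1$. Pick $G' =_{LC} G$ and a vertex $u$ realizing $\delta_{G'}(u) = \dloc(G)$. Since $Odd_{G'}(\{u\}) = N_{G'}(u)$, the set $L = \{u\} \cup N_{G'}(u)$ is a local set of $G'$ (generated by $\{u\}$) of size $\dloc(G) + 1$. By LC-invariance of local sets, $L$ is also a local set of $G$, so $m(G) \ls |L| = \dloc(G) + 1$.

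For the reverse inequality $\dloc(G) \ls m(G) - 1$, let $L$ be a local set of $G$ of minimum size $m(G)$; as noted, $L$ is a minimal local set. Fix any $u \in L$. By the proposition of \cite{Perdrix06} there exists $G' =_{LC} G$ with $L = \{u\} \cup N_{G'}(u)$, hence $\delta_{G'}(u) = |L| - 1 = m(G) - 1$, and therefore $\dloc(G) \ls \delta_{G'}(u) = m(G) - 1$. Combining the two inequalities yields $\dloc(G) = m(G) - 1$, which is the claim.

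There is no genuine obstacle: each direction is essentially a one-line consequence of the two cited results. The only point requiring a word of care is the identification of the smallest local set with the smallest minimal local set — any strictly smaller local set contained in a minimum-size local set would contradict the minimality of its size — and the trivial observation that $\{u\} \cup N_{G}(u) = \{u\} \cup Odd_G(\{u\})$ is always a local set, so that the quantity $\delta_{G'}(u) + 1$ appearing on the right is indeed of the form $|D \cup Odd_{G'}(D)|$ with $D = \{u\}$ nonempty.
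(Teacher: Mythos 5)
Your proof is correct and follows exactly the route the paper intends: the proposition is stated as an immediate consequence of the preceding result from \cite{Perdrix06} together with the LC-invariance of local sets (\cref{cor:localset_invariant}), and your two inequalities, including the observation that a minimum-size local set is automatically a minimal local set, fill in precisely those details.
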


\subsection{Minimal local sets have either 1 or 3 generators}

Minimal local sets have either 1 or 3 generators, and in the latter case, there are of even size. This was proved in \cite{VandenNest05} using the stabilizer formalism. Here we provide a graph-theoretical proof. Recall that $\Delta$ denotes the symmetric difference on vertices. $\Delta$ is commutative and associative, satisfies $A \Delta \emptyset = A$, $A \Delta A = \emptyset$ and $Odd(D_0\Delta D_1) =  Odd(D_0) \Delta Odd(D_1)$.

\begin{proposition}
    Given a minimal local set $L$, only two cases can occur:
    \begin{itemize}
        \item $L$ has exactly one generator;
        \item $L$ has exactly three (distinct) generators, of the form $D_0$, $D_1$, and $D_0 \Delta D_1$. This can only occur if $|L|$ is even.
    \end{itemize}
    In other words, $|L|-\cutrk(L)=$ 1 or 2, and the latter can only occur if $|L|$ is even.
\label{prop:MLS2cases}
\end{proposition}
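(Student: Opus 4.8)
The plan is to work entirely with the $\mathbb F_2$-vector space $K := \ker(\lambda_L)$ of the linear map $\lambda_L\colon 2^L \to 2^{V\sm L}$, $D \mapsto Odd_G(D)\sm L$, recalling that $\dim_{\mathbb F_2} K = |L| - \cutrk(L)$ and that, $L$ being a local set, $\dim K \gs 1$. The first observation I would make is that, by minimality of $L$, \emph{every} nonzero element of $K$ is a generator of $L$: if $\emptyset \ne D \in K$ then $D \cup Odd_G(D)$ is a local set contained in $L$ (the containment is exactly the statement $\lambda_L(D) = \emptyset$), hence equals $L$; conversely every generator lies in $K$. So the number of generators is precisely $2^{\dim K} - 1$, and it remains to show $\dim K \ls 2$, and that $\dim K = 2$ forces $|L|$ even. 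Note also that for $D \in K$ one has $Odd_G(D) \se L$, so $Odd_G(D)$ agrees with $Odd_M(D)$ computed in the induced subgraph $M := G[L]$; I work in $M$ from here on.

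For the bound $\dim K \ls 2$, I would fix a vertex $v \in L$ and consider the two $\mathbb F_2$-linear functionals on $K$ defined by $\phi_v(D) = [v \in D]$ and $\psi_v(D) = [v \in Odd_M(D)]$ (linearity of $\psi_v$ is linearity of $Odd$). Every nonzero $D \in K$ is a generator, so $D \cup Odd_M(D) = L \ni v$, i.e.\ $\phi_v(D) = 1$ or $\psi_v(D) = 1$; equivalently $\ker \phi_v \cap \ker \psi_v = \{\emptyset\}$ inside $K$. Since each of $\ker \phi_v$, $\ker \psi_v$ has codimension at most $1$ in $K$, their intersection has dimension at least $\dim K - 2$, which forces $\dim K \ls 2$. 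This is the step where minimality is really used, and I expect it to be the crux of the argument; the rest is bookkeeping.

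For the parity statement, suppose $\dim K = 2$, so $K = \{\emptyset, D_0, D_1, D_0\Delta D_1\}$ and the three nonzero elements are exactly the generators. For any generator $D$ we have $L = D \cup Odd_M(D)$, hence $|L| = |D| + |Odd_M(D)| - |D \cap Odd_M(D)|$. Here $|D \cap Odd_M(D)|$ is the number of odd-degree vertices of $M[D]$, which is even by the handshake lemma; and $|Odd_M(D)| \equiv \sum_{w \in L} |N_M(w)\cap D| \equiv \sum_{u \in D} \delta_M(u) \pmod 2$ after swapping the order of summation. Therefore $|L| \equiv \sum_{u\in D}(1 + \delta_M(u)) \pmod 2$ for each of $D = D_0, D_1, D_0\Delta D_1$. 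Since summing an $\mathbb F_2$-valued function over $D_0\Delta D_1$ equals the sum over $D_0$ plus the sum over $D_1$ modulo $2$, the three congruences combine to give $|L| \equiv |L| + |L| \equiv 0 \pmod 2$, which is the claim.
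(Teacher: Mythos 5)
Your proof is correct, and its core — the bound $\dim\ker(\lambda_L)\ls 2$ — takes a genuinely different route from the paper's. The paper argues element-by-element: given two distinct generators $D_0,D_1$ it first shows that $D_0\Delta D_1$ is again a generator and that $L=D_0\cup D_1$, and then excludes a fourth generator $D_2$ by chasing where a hypothetical vertex of $D_2\cap(D_0\cap D_1)$ would have to live. You instead package all generators at once as the nonzero vectors of $K=\ker(\lambda_L)$ (the observation already implicit in the proof of \cref{prop:characMLS}, and your justification that minimality makes every nonzero kernel element a generator is exactly right), and then kill the dimension with the two linear functionals $\phi_v,\psi_v$: minimality forces their common kernel inside $K$ to be trivial, and two hyperplane conditions cannot cut down a space of dimension $\gs 3$ to zero. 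This is shorter, makes the ``$1$ or $3$ generators'' dichotomy transparent as $2^{\dim K}-1$, and replaces the paper's case analysis by one line of linear algebra; what it does not give for free is the explicit structural fact $L=D_0\cup D_1$, which the paper's proof establishes along the way (though it is not needed for the statement). For the parity claim both arguments ultimately rest on the handshake lemma; the paper sums degrees separately over $D_0\sm D_1$, $D_1\sm D_0$ and $D_0\cap D_1$, whereas you express $|L|\bmod 2$ as the value of the single linear functional $D\mapsto\sum_{u\in D}(1+\delta_{G[L]}(u))$ at each generator and exploit its additivity over $D_0\Delta D_1$ — a cleaner repackaging of essentially the same computation. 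All intermediate identities you use (evenness of $|D\cap Odd(D)|$, the double count giving $|Odd(D)|\equiv\sum_{u\in D}\delta(u)\bmod 2$, and the reduction to the induced subgraph $G[L]$) check out.
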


\begin{proof}
First, let us prove that $L$ has either 1 or 3 generators.

Suppose there exist $D_0$ and $D_1$ ($D_0 \neq D_1$) such that $D_0\cup Odd(D_0) = L$ and $D_1\cup Odd(D_1) = L$. We know that $D_0 \Delta D_1 \neq \emptyset$. Besides, $(D_0 \Delta D_1)\cup Odd(D_0 \Delta D_1) = (D_0 \Delta D_1)\cup (Odd(D_0) \Delta Odd(D_1)) \se D_0 \cup D_1\cup Odd(D_0) \cup Odd(D_1) \se L$. And, as $L$ is a minimal local set, $D_0 \Delta D_1$ is a generator of $L$.

Let us show that $L = D_0 \cup D_1$. 
Suppose that there exists a vertex $u \in L \sm (D_0 \cup D_1)$.
Then, $u \in Odd(D_0)$ and $u \in Odd(D_1)$ so $u \in Odd(D_0) \cap Odd(D_1)$.
So $u \notin Odd(D_0) \Delta Odd(D_1) = Odd(D_0 \Delta D_1)$.
So, as $(D_0 \Delta D_1)\cup Odd(D_0 \Delta D_1) = L$, $u \in D_0 \Delta D_1 \se D_0 \cup D_1$, leading to a contradiction.
    
Now, we will prove that the three subsets, $D_0$, $D_1$, and $D_0 \Delta D_1$, are the only generators of $L$.
Suppose that there exists $D_2 \se L$ such that $D_0 \neq D_2$, $D_1 \neq D_2$, and $D_2 \cup Odd(D_2) = L$. Same as before, we can prove that $(D_0 \Delta D_2)\cup Odd(D_0 \Delta D_2) = L$ and $(D_1 \Delta D_2)\cup Odd(D_1 \Delta D_2) = L$. We also have that $L = D_0 \cup D_2$ and $L = D_1 \cup D_2$. Then $D_0 \cup D_1 = D_0 \cup D_2 = D_1 \cup D_2$, implying $D_1 \sm D_0 \se D_2$ and $D_0 \sm D_1 \se D_2$.
So $D_0 \Delta D_1 = (D_0 \sm D_1) \cup (D_1 \sm D_0) \se D_2$.
Suppose there exists $u \in D_2$ such that $u \in L \sm (D_0 \Delta D_1) = D_0 \cap D_1$. Then $u \in D_0 \cap D_2$, so $u \notin D_0 \Delta D_2$, and $u \in Odd(D_0 \Delta D_2)$.
For the same reason, $u \in Odd(D_1 \Delta D_2)$.
So $u \notin Odd(D_0 \Delta D_2) \Delta Odd(D_1 \Delta D_2) = Odd(D_0 \Delta D_1)$ and thus $u \in D_0 \Delta D_1$, leading to a contradiction.
Therefore,  $D_2 = D_0 \Delta D_1$.\\

Second, let us prove that the case where $L$ has 3 generators can only occur when $|L|$ is even.
All calculations are done modulo 2: "$\equiv$" means "is of same parity as".
Suppose that $L$ has three generators  $D_0$, $D_1$ and $D_0 \Delta D_1$. Then $L = D_0 \cup D_1 = \left(D_0 \sm D_1\right) \cup \left(D_1 \sm D_0\right) \cup \left(D_0 \cap D_1\right)$.
Given a subset $K$ of $V(G)$ and a vertex $u \in V$, let $\delta_{K}(u)$ denote the degree of $u$ inside $K$, i.e. $\delta_{K}(u) = |N(u)\cap K|$. Observe that for any $K \se V$, $\sum_{u\in K}\delta_{K}(u) = 2 \times \text{number of edges in $K$}\equiv 0$.
    
Given $u \in D_0 \sm D_1$, we know that $u \in Odd(D_1)$. Then $\delta_{L}(u) \equiv \delta_{D_1}(u) + \delta_{D_0 \sm D_1}(u) \equiv 1 + \delta_{D_0 \sm D_1}(u)$.
So $\sum_{u\in D_0 \sm D_1}\delta_{L}(u) \equiv |D_0 \sm D_1| + \sum_{u\in D_0 \sm D_1}\delta_{D_0 \sm D_1}(u) \equiv |D_0 \sm D_1|$.
Similarly, $\sum_{u\in D_1 \sm D_0}\delta_{L}(u) \equiv |D_1 \sm D_0|$ and $\sum_{u\in D_0 \cap D_1}\delta_{L}(u) \equiv |D_0 \cap D_1|$.   
Summing all three equations, we have: $0 \equiv \sum_{u\in L}\delta_{L}(u) \equiv \sum_{u\in D_0 \sm D_1}\delta_{L}(u) + \sum_{u\in D_1 \sm D_0}\delta_{L}(u) + \sum_{u\in D_0 \cap D_1}\delta_{L}(u) \equiv |D_0 \sm D_1| + |D_1 \sm D_0| + |D_0 \cap D_1| \equiv |L|$.
So $|L|$ is even.
\end{proof}

From now on, we say that a minimal local set has \textbf{dimension 1} if it has 1 generator, and \textbf{dimension 2} if it has 3 generators. The term "dimension" refers to the dimension of the kernel of the map $\lambda_L: 2^{A}\to 2^{V\setminus A} = D\mapsto Odd(D)\setminus A$, which is equal to $|A|-\cutrk(A)$ (see \cref{sec:cutrank}).

\subsection{Size of minimal local sets}

According to \cref{prop:cutrank_prop2} and \cref{prop:characMLS}, minimal local sets are of size at most half the order of the graph. We slightly refine this bound and show it is tight:

\begin{proposition}
    For any minimal local set $L$ in a graph of order $n$,
     \[|L|\ls \begin{cases}   \hspace{0.15cm}n/2&\text{if $n= 0 \bmod 4$}\\   \lfloor n/2 \rfloor + 1 &\text{otherwise}\end{cases}\]
    This bound is tight in the sense that for every $n>0$ there exists a graph of order $n$ that contains a minimal local set of this particular size.
    \label{prop:MLSbounds}
\end{proposition}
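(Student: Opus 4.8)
The plan is to treat the upper bound and the tightness statement separately.

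For the upper bound, I would argue as follows. Let $L$ be a minimal local set. By \cref{prop:characMLS}, every proper subset $L\sm\{a\}$ is full cut-rank, and by \cref{prop:cutrank_prop2} a full-cut-rank set has size at most $\lfloor n/2\rfloor$; hence $|L|-1\ls\lfloor n/2\rfloor$, which already gives $|L|\ls\lfloor n/2\rfloor+1$ and settles the "otherwise" case. To improve this to $|L|\ls n/2$ when $n\equiv 0\bmod 4$, suppose for contradiction that $|L|=\lfloor n/2\rfloor+1=n/2+1$. Then $|L|$ is odd, so by \cref{prop:MLS2cases} the minimal local set $L$ has dimension $1$, i.e.\ $\cutrk(L)=|L|-1=n/2$. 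By symmetry of the cut-rank, $\cutrk(V\sm L)=\cutrk(L)=n/2$, while $|V\sm L|=n/2-1$; this contradicts linear boundedness. Hence $|L|\ls n/2$ in this case.

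For tightness, I would give an explicit graph for each residue of $n$ modulo $4$, always checking minimality through the cut-rank characterisation of \cref{prop:characMLS}: it is enough to verify that $L$ is not full cut-rank (so that $L$ is a local set) and that $L\sm\{a\}$ is full cut-rank for every $a\in L$. The three constructions are: (i) if $n$ is odd, take $G$ to be the path $v_1v_2\cdots v_n$ and $L=\{v_1,v_3,\dots,v_n\}$ the set of odd-indexed vertices, of size $(n+1)/2=\lfloor n/2\rfloor+1$; here $Odd_G(L)=\emptyset$ by a telescoping computation, so $L$ generates itself, and removing any vertex leaves a full-cut-rank set; (ii) if $n\equiv 0\bmod 4$, take $G=C_n$ and $L$ a colour class of the (even) bipartition, of size $n/2$, whose cut-matrix is the incidence matrix of a cycle on $n/2$ vertices and hence has rank $n/2-1$, with every one-vertex deletion turning it into the incidence matrix of a path and thus into a full-cut-rank set; (iii) if $n\equiv 2\bmod 4$, take $G$ to be the path $v_1v_2\cdots v_{n-1}$ together with a new vertex $v_n$ made adjacent to every odd-indexed $v_i$, and $L=\{v_1,v_3,\dots,v_{n-1},v_n\}$, of size $n/2+1$; the generator $D=\{v_1,v_3,\dots,v_{n-1}\}$ satisfies $Odd_G(D)=\{v_n\}$ (since $n/2$ is odd), so $L$ is a local set.

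The main obstacle is the verification of minimality in case (iii): deleting a vertex $a$ from $L$ turns the $(n/2+1)\times(n/2-1)$ cut-matrix of $L$ into an $(n/2)\times(n/2)$ square matrix (one row is removed and the column indexed by $a$, recording the new edges to $v_n$, is added), and one must show this square matrix is invertible in the three cases $a=v_n$, $a\in\{v_1,v_{n-1}\}$ an endpoint, and $a$ an internal odd vertex. The computation goes through because the "path rows", once the $v_n$-coordinate is ignored, form a telescoping triangular system whose only linear relation is the all-ones relation; since $n/2$ is odd for $n\equiv 2\bmod 4$, that relation does not survive re-adding the constant $v_n$-term, so the rows stay linearly independent and the square matrix has full rank. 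Cases (i) and (ii) are the standard, routine computations that the odd-indexed vertices of a path, respectively a colour class of an even cycle, form a minimal local set.
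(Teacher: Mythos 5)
Your proof is correct. The upper-bound half is essentially the paper's own argument (proper subsets are full cut-rank, hence $|L|\ls\lfloor n/2\rfloor+1$; and when $n=0\bmod 4$ a set of size $n/2+1$ would be odd, hence of dimension $1$ by \cref{prop:MLS2cases}, forcing $\cutrk(L)=n/2$, which contradicts $\cutrk(L)=\cutrk(V\sm L)\ls n/2-1$ — you run the symmetry/parity clash in the opposite order from the paper, but it is the same contradiction). Where you genuinely diverge is in the tightness constructions. The paper builds three ad hoc ``spider''-like graphs (a center $x$ joined to $v_1,\dots,v_m$ each carrying a pendant $u_i$, with variants adding a second center $y$), and verifies minimality by directly ruling out alternative generators. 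You instead use the odd class of a path for $n$ odd, a colour class of the cycle $C_n$ for $n=0\bmod 4$, and a path with an apex over its odd class for $n=2\bmod 4$, verifying minimality through the cut-rank/incidence-matrix characterisation. Your examples are arguably more canonical (the path example even appears in the paper's own remark about $P_n$), and the linear-algebra verification is uniform across the cases; the paper's constructions have the advantage that minimality falls out of a two-line combinatorial argument about which vertices can enter a generator. One small point worth tightening in your case (ii): when you pass from $L$ to $L\sm\{a\}$ the cut-matrix not only loses the row of $a$ but also gains a column indexed by $a$; since $L$ is an independent set that column is identically zero and does not affect the rank, but you should say so. Similarly, your case (iii) minimality check is only sketched; it does go through (and can be done more simply by observing that the parity constraints at the even path vertices force any generator contained in $L$ to be $\{v_n\}$, $D$, or $D\Delta\{v_n\}$, each of which generates all of $L$ — consistent with $L$ having dimension $2$ and even size $n/2+1$), but as written it is the one place where a referee would ask for the computation to be completed.
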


\begin{proof}

    \newcommand{\valuescale}{0.7}

    First note that any minimal local set has size at most $\lfloor n/2 \rfloor + 1$. Indeed, by \cref{prop:characMLS}, a necessary condition for a set $L \se V$ to be a minimal local set is that any proper subset of $L$ is full cut-rank. And, by \cref{prop:cutrank_prop2}, any full-cut-rank set has size at most $ \lfloor n/2 \rfloor$. This proves the result for $n \neq 0 \bmod 4$. Let us now suppose that $n = 0 \bmod 4$, and suppose by contradiction that there exists a minimal local set $L$ of size $n/2 + 1$. Using {symmetry} and {linear boundedness}, $\cutrk(L) = \cutrk(V \sm L) \ls |V \sm L| = n/2 - 1$. Then $|L| - \cutrk(L) \gs 2$. As $L$ is a minimal local set, $|L| - \cutrk(L) =$ 1 or 2, and the latter can only occur when $|L|$ is even (see \cref{prop:MLS2cases}).
    This leads to a contradiction, as $|L| = n/2 + 1$ is odd.\\
    We show below that this bound is tight, in the sense that for any $n$ there exists a graph of order $n$ that has a minimal local set whose size matches the bound. For this purpose, we explicitly construct graphs of arbitrary order that contain a minimal local set whose size matches the bound.
    
    \subparagraph{Case $n$ odd.}
    Let $n = 2 m + 1$ be an odd number. We provide in Figure \ref{fig:bounds1} a construction of a graph containing a minimal local set of size $m+1$.
    
    \begin{figure}[h!]
    \centering
    \scalebox{\valuescale}{
    \begin{tikzpicture}[scale = 0.7]
    \begin{scope}[every node/.style={circle,minimum size=30pt,thick,draw, fill=colorvertices}]
        \node (A) at (0,4) {$x$};
        \node (B) at (7,8) {$v_1$};
        \node (C) at (7,6) {$v_2$};
        \node (D) at (7,0) {\scalebox{1}{$v_{m}$}};
        \node (E) at (14,8) {$u_1$};
        \node (F) at (14,6) {$u_2$};
        \node (G) at (14,0) {\scalebox{1}{$u_{m}$}};
    \end{scope}
    \begin{scope}[every node/.style={},
                    every edge/.style=edges]
        \path [-] (A) edge node {} (B);
        \path [-] (A) edge node {} (C);
        \path [-] (A) edge node {} (D);
        \path [-] (B) edge node {} (E);
        \path [-] (C) edge node {} (F);
        \path [-] (D) edge node {} (G);
    \end{scope}
    \begin{scope}[style=edges]
        \draw (7.1,3.2) node[rotate = 90](){\scalebox{2}{$\boldsymbol{\ldots}$}};
        \draw (14.1,3.2) node[rotate = 90](){\scalebox{2}{$\boldsymbol{\ldots}$}};
    \end{scope}
    \begin{scope}[style={draw=cornflowerblue, ultra thick}]
        \draw[rounded corners=15mm] (-2,4)--(8,9.8)--(8,-1.8)--cycle;
        \draw (8.35,1.5) node[text=cornflowerblue](){$\boldsymbol{L}$};
    \end{scope}
    \end{tikzpicture}
    }
    \caption{Illustration of a minimal local set of size $\lceil n/2 \rceil$ when $n$ is odd.}
    \label{fig:bounds1}
    \end{figure}
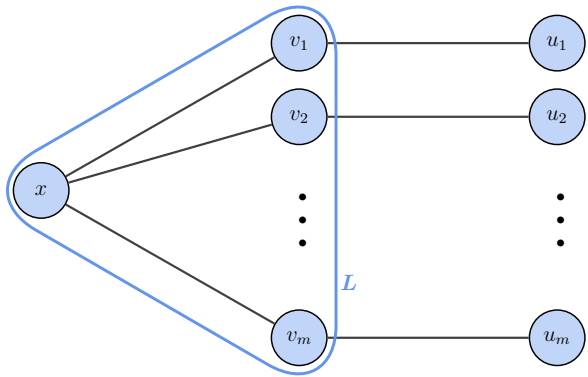
    
    $L$ is a local set, as $L = D \cup Odd(D)$ with $D = \{x\}$.
    In particular, $L$ is a minimal local set. Indeed, assume that there exists a subset $D'$ such that $D' \cup Odd(D') \varsubsetneq L$. $D'$ cannot contain any of the $v_i$'s as we would then also have $u_i \in Odd(D')$. So $D' = \emptyset$ or $\{x\}$, leading to a contradiction.
    
    \subparagraph{Case $n$ even and $n/2$ odd.}
    Let $n = 2 m$ be an even number such that $m$ is odd. We provide in Figure \ref{fig:bounds2} a construction  of a graph containing a minimal local set of size $m+1$.
    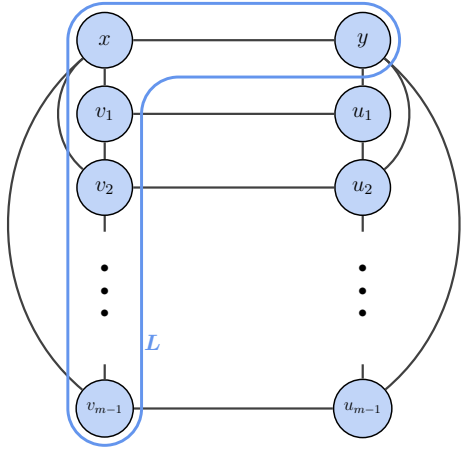
\begin{figure}[H]
    \centering
    \scalebox{\valuescale}{
    \begin{tikzpicture}[scale = 0.7]
    \begin{scope}[every node/.style={circle,minimum size=30pt,thick,draw,fill=colorvertices}]
        \node (X) at (0,10) {$x$};
        \node (U) at (7,10) {$y$};
    
        \node (B) at (0,8) {$v_1$};
        \node (C) at (0,6) {$v_2$};
        \node (D) at (0,0) {\scalebox{0.8}{$v_{m-1}$}};
    
        \node (E) at (7,8) {$u_1$};
        \node (F) at (7,6) {$u_2$};
        \node (G) at (7,0) {\scalebox{0.8}{$u_{m-1}$}};
    \end{scope}
    \begin{scope}[every node/.style={},
                    every edge/.style=edges]
        \path [-] (X) edge node {} (U);
        \path [-] (B) edge node {} (E);
        \path [-] (C) edge node {} (F);
        \path [-] (D) edge node {} (G);
    
        \path [-] (X) edge node {} (B);
        \path [-] (U) edge node {} (E);
        \path [-] (B) edge node {} (C);
        \path [-] (E) edge node {} (F);
    
        \path [-] (X) edge[bend right=50] node {} (C);
        \path [-] (X) edge[bend right=50] node {} (D);
    
        \path [-] (U) edge[bend left=50] node {} (F);
        \path [-] (U) edge[bend left=50] node {} (G);
    
        \path (C) edge node {} (0,4.8);
        \path (F) edge node {} (7,4.8);
        \path (D) edge node {} (0,1.2);
        \path (G) edge node {} (7,1.2);
    
    \end{scope}
    \begin{scope}[style={draw=blue, ultra thick}]
        \draw (0,3.2) node[rotate = 90](){\scalebox{2}{$\boldsymbol{\ldots}$}};
        \draw (7,3.2) node[rotate = 90](){\scalebox{2}{$\boldsymbol{\ldots}$}};
    \end{scope}
    \begin{scope}[style={draw=cornflowerblue, ultra thick}]
        \draw (7,9) arc(-90:90:1) -- (-0,11) arc(90:180:1) -- (-1,0) arc(-180:-90:1) -- (0,-1) arc(-90:0:1) -- (1,8) arc(180:90:1) -- (7,9);
        \draw (1.3,1.8) node[text=cornflowerblue](){$\boldsymbol{L}$};
    \end{scope}
    \end{tikzpicture}
    }
    \caption{Illustration of a minimal local set of size $ n/2+1$ when $n$ is even and $n/2$ is odd.}
    \label{fig:bounds2}
    \end{figure}
    
    $L$ is a local set, as $L = D \cup Odd(D)$ with $D = \{x\}$. 
    In particular, $L$ is a minimal local set. Indeed, assume that there exists a subset $D'$ such that $D' \cup Odd(D') \varsubsetneq L$. If $D'$ contains $y$, it also needs to contain every $v_i$, so that none of the $u_i$'s enter $Odd(D')$. At this point, either $x \in D'$, or $x \notin D'$: in both cases, $D' \cup Odd(D') = L$ (because $m$ is odd), leading to a contradiction. If $D'$ does not contain $y$, it cannot contain any of the $v_i$'s, so that none of the $u_i$'s enter $Odd(D')$. So $D' = \{x\} = D$, leading to a contradiction.
    
    \subparagraph{Case $n$ even and $n/2$ even.}
    Let $n = 2 m$ be an even number such that $m$ is even. We provide in Figure \ref{fig:bounds3} a construction of a graph containing a minimal local set of size $m$.
    
    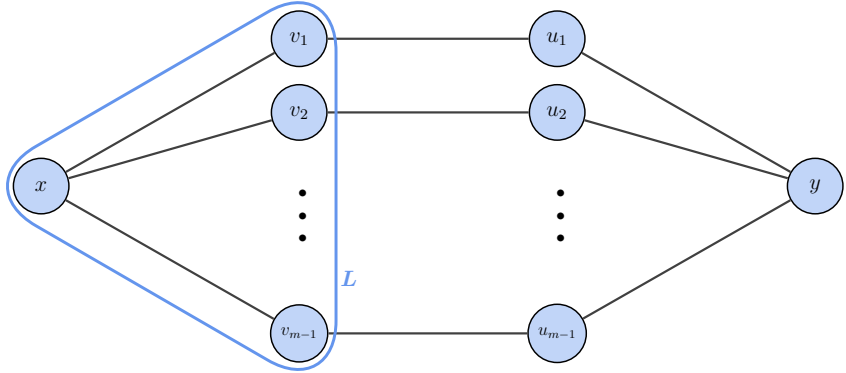
\begin{figure}[H]
    \centering
    \scalebox{\valuescale}{
    \begin{tikzpicture}[scale = 0.7]
    \begin{scope}[every node/.style={circle,minimum size=30pt,thick,draw,fill=colorvertices}]
        \node (A) at (0,4) {$x$};
        \node (B) at (7,8) {$v_1$};
        \node (C) at (7,6) {$v_2$};
        \node (D) at (7,0) {\scalebox{0.8}{$v_{m-1}$}};
        \node (E) at (14,8) {$u_1$};
        \node (F) at (14,6) {$u_2$};
        \node (G) at (14,0) {\scalebox{0.8}{$u_{m-1}$}};
        \node (H) at (21,4) {$y$};
    \end{scope}
    \begin{scope}[every node/.style={},
                    every edge/.style=edges]
        \path [-] (A) edge node {} (B);
        \path [-] (A) edge node {} (C);
        \path [-] (A) edge node {} (D);
        \path [-] (B) edge node {} (E);
        \path [-] (C) edge node {} (F);
        \path [-] (D) edge node {} (G);
        \path [-] (E) edge node {} (H);
        \path [-] (F) edge node {} (H);
        \path [-] (G) edge node {} (H);
    \end{scope}
    \begin{scope}[style={draw=blue, ultra thick}]
        \draw (7.1,3.2) node[rotate = 90](){\scalebox{2}{$\boldsymbol{\ldots}$}};
        \draw (14.1,3.2) node[rotate = 90](){\scalebox{2}{$\boldsymbol{\ldots}$}};
    \end{scope}
    \begin{scope}[style={draw=cornflowerblue, ultra thick}]
        \draw[rounded corners=15mm] (-2,4)--(8,9.8)--(8,-1.8)--cycle;
        \draw (8.35,1.5) node[text=cornflowerblue](){$\boldsymbol{L}$};
    \end{scope}
    \end{tikzpicture}
    }
    \caption{Illustration of a minimal local set of size $n/2$ when both $n$ and $n/2$ are even.}
    \label{fig:bounds3}
    \end{figure}
    
    $L$ is a local set, as $L = D \cup Odd(D)$ with $D = \{x\}$.
    In particular, $L$ is a minimal local set. Indeed, assume that there exists a subset $D'$ such that $D' \cup Odd(D') \varsubsetneq L$. It cannot contain any of the $v_i$'s as we would then also have $u_i \in Odd(D')$ (since $y \notin D'$). So $D' = \emptyset$ or $\{x\}$, leading to a contradiction.
\end{proof}

Small and large minimal local sets can coexist in a graph.

\begin{proposition}
    A path graph $P_n$ of order $n>2$ has minimal local sets of every size ranging from $2$ to $\lceil n/2 \rceil$.
\end{proposition}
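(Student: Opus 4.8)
The plan is to exhibit, for each target size $s$ with $2 \le s \le \lceil n/2 \rceil$, an explicit minimal local set of $P_n$ of size exactly $s$. Write the vertices of $P_n$ as $1, 2, \dots, n$ with $i \sim i+1$, and for $1 \le k \le \lceil n/2 \rceil - 1$ set
\[
L_k = \{1,3,5,\dots,2k-1\} \cup \{2k\},
\]
a set of size $k+1$. As $k$ runs from $1$ to $\lceil n/2 \rceil - 1$, the value $k+1$ runs over every integer from $2$ to $\lceil n/2 \rceil$, so it suffices to prove that each $L_k$ is a minimal local set. The one structural fact I would use repeatedly is that the constraint $k \le \lceil n/2 \rceil - 1$ forces $2k \le n-1$ (check both parities of $n$): hence vertex $2k$ is interior, with neighbours $2k-1$ and $2k+1$, and vertex $2k+1$ is a genuine vertex of the path.

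First I would check that $L_k$ is a local set, with generator $D = \{1,3,\dots,2k-1\}$. This is a short parity computation of $Odd_G(D)$: every odd-indexed vertex has only even-indexed neighbours, so none lands in $Odd_G(D)$; every even vertex $2j$ with $1 \le j \le k-1$ has both neighbours $2j-1, 2j+1$ in $D$ (count $2$); vertex $2k$ has the single neighbour $2k-1$ in $D$ (count $1$); and every vertex past $2k$ has no neighbour in $D$. Hence $Odd_G(D) = \{2k\}$ and $D \cup Odd_G(D) = L_k$.

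The substantive step is minimality. Suppose $D' \subseteq L_k$ is nonempty with $D' \cup Odd_G(D') \subseteq L_k$; I would show that $D'$ is forced to equal the generator $D$ above, whence $D' \cup Odd_G(D') = L_k$. The argument is a propagation of parity constraints coming from the vertices \emph{not} in $L_k$. First, $2k+1 \notin L_k$, so $2k+1 \notin Odd_G(D')$; its neighbours are $2k$ and, if it exists, $2k+2$, and $2k+2 \notin L_k$ so $2k+2 \notin D'$, leaving $|N_G(2k+1) \cap D'| \equiv |\{2k\} \cap D'| \pmod 2$; hence $2k \notin D'$, so $D' \subseteq \{1,3,\dots,2k-1\}$. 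Next, for each $j$ with $1 \le j \le k-1$, the even vertex $2j \notin L_k$ is not in $Odd_G(D')$, and both its neighbours $2j-1, 2j+1$ lie in $\{1,3,\dots,2k-1\}$, so $|\{2j-1,2j+1\} \cap D'|$ is even, i.e. $2j-1 \in D' \iff 2j+1 \in D'$. Chaining these equivalences forces $1, 3, \dots, 2k-1$ to be simultaneously in or out of $D'$; since $D' \ne \emptyset$ they are all in, so $D' = D$. Thus $L_k$ has no proper local subset, i.e. it is minimal. The degenerate case $k=1$, where $L_1 = \{1,2\}$, is covered by the same argument (vertex $2k+1 = 3$ exists since $n > 2$) and simply has no chaining step.

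The only place I expect to need care is bookkeeping at the ends of the path and for small parameters: verifying $2k \le n-1$ from $k \le \lceil n/2 \rceil - 1$ in both parities, that the endpoint vertices $1$ and $n$ create no extra local subsets, and that $n=3$ and $k=1$ are not exceptions. An alternative to the "no proper local subset" computation would be to prove minimality via the cut-rank characterisation of \cref{prop:characMLS} ($L$ is minimal iff $L$ is not full cut-rank but every proper subset is), computing $\cutrk$ of the relevant sub-path-shaped sets directly; but for a path the direct generator argument seems cleaner. Finally, this statement only asserts \emph{existence} of minimal local sets of each size in $[2,\lceil n/2\rceil]$, which is consistent with the general upper bound of \cref{prop:MLSbounds} (equal to $n/2 = \lceil n/2\rceil$ when $n \equiv 0 \bmod 4$, and at least $\lceil n/2\rceil$ otherwise).
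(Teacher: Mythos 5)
Your proposal is correct and is essentially the paper's own proof up to an index shift: the paper takes $D_k=\{1,3,\dots,2k+1\}$ generating $L_k=D_k\cup\{2k+2\}$ of size $k+2$ and rules out proper local subsets by exactly the same two parity observations (the vertex just past the set forces the even endpoint out of any generator, and the interior even vertices chain the odd vertices together). No gaps; the endpoint/parity bookkeeping you flag checks out.
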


An illustration is provided in Figure \ref{fig:path}.

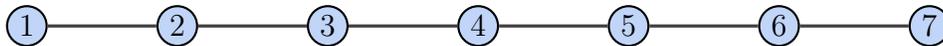
\begin{figure}[H]
    \centering
    
    \scalebox{1}{    
        \begin{tikzpicture}[scale = 1]
        
        \begin{scope}[every node/.style=vertices]
            \node (U1) at (0,0) {$1$};
            \node (U2) at (2,0) {$2$};
            \node (U3) at (4,0) {$3$};
            \node (U4) at (6,0) {$4$};
            \node (U5) at (8,0) {$5$};
            \node (U6) at (10,0) {$6$}; 
            \node (U7) at (12,0) {$7$};  
        \end{scope}
        
        \begin{scope}[every node/.style={},
                        every edge/.style=edges]              
            \path [-] (U1) edge node {} (U2);
            \path [-] (U2) edge node {} (U3);
            \path [-] (U3) edge node {} (U4);   
            \path [-] (U4) edge node {} (U5); 
            \path [-] (U5) edge node {} (U6); 
            \path [-] (U6) edge node {} (U7);      
        \end{scope}
    \end{tikzpicture}}    
    
    \caption{The path $P_7$ of order 7. $P_7$ has minimal local sets from every size ranging from 2 to 4, for example $\{1,2\}$, $\{1,3,4\}$, and $\{1,3,5,6\}$, generated respectively by $\{1\}$, $\{1,3\}$, and $\{1,3,5\}$.}
    \label{fig:path}
\end{figure}

\begin{proof}
    Let $P_n$ be a path of order $n>2$ on vertices $1, \ldots, n$ such that, for any $i$, there is an edge between $i$ and $i+1$. For any $k\in [0, \lceil n/2\rceil-2]$, let $D_k:=\{1,3,5,\ldots, 2k+1\}$. The local set $L_k$ generated by $D_k$ is $L_k:=D_k\cup Odd(D_k) = D_k\cup \{2k+2\}$. We show that $L_k$ is minimal by inclusion: by contradiction, assume there exists a non-empty $D\subseteq L_k$, $D\neq D_k$ and $D\cup Odd(D) \varsubsetneq L_k$. First notice that $2k+2\notin D$, otherwise $2k+3\in Odd(D)$ (notice that $2k+3\ls n$, so $2k+3$ is actually a vertex of the path). So $D\varsubsetneq D_k$, as a consequence there exists  $r<k$  s.t. $2r+2$ has exactly one neighbor in $D$, implying $2r+2\in Odd(D)$. This is a contradiction since $2r+2\notin L$. 

    As a consequence, for any $k\in [0, \lceil n/2\rceil-2]$, $L_k$ is a minimal local set of size $k+2$. 
\end{proof}

Notice that we have exhibited minimal local sets of any size from $2$ to $\lceil n/2\rceil$, but this is not a complete description of all minimal local sets of the path, for instance $\{2,3,4\}$ (when $n>4$) and $\{1, 3, 5, \ldots, n\}$ (when $n=1\bmod 2$) are also minimal local sets.

Some graphs only have small minimal local sets.

\begin{proposition}
    Complete graphs and bipartite complete graphs (that contain no isolated vertex) only have minimal local sets of size 2.
\end{proposition}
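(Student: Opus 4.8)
The plan is to work entirely with the cut-rank characterisation of minimal local sets from \cref{prop:characMLS}: a set $A$ is a minimal local set precisely when $A$ is not full cut-rank while every proper subset of $A$ is full cut-rank. A full-cut-rank set already has size at most $\lfloor n/2\rfloor$, but here I want the much sharper bound $|A|\le 2$. First I would dispose of the small cases: since the graphs under consideration have no isolated vertex, every singleton $\{u\}$ has cut-matrix a non-zero row vector and is thus full cut-rank, so no minimal local set has size $1$; and the empty set is never a local set. It therefore remains to show that no set of size at least $3$ is a minimal local set, and for that, by \cref{prop:characMLS}, it is enough to exhibit inside any such set a two-element subset that is \emph{not} full cut-rank.

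Next I would compute cut-ranks of two-element sets. For the complete graph $K_n$ (with $n\ge 2$, so that there is no isolated vertex) this is immediate: for any $A$ with $\emptyset\neq A\neq V$ every vertex outside $A$ is adjacent to every vertex of $A$, hence $\Gamma_A$ is an all-ones matrix and $\cutrk(A)=1$; in particular every two-element set has cut-rank $1<2$. For a complete bipartite graph with parts $L$ and $R$, both non-empty, consider a two-element set $\{u,v\}$ both of whose vertices lie in the same part, say $L$: the two rows of $\Gamma_{\{u,v\}}$ are each equal to the indicator of $R$ on $V\setminus\{u,v\}$, so they coincide and $\cutrk(\{u,v\})\le 1<2$. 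Thus in both families a ``monochromatic'' two-element set is never full cut-rank.

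Finally I would assemble the argument. Let $A$ have size at least $3$. In $K_n$, any two vertices of $A$ form a non-full-cut-rank pair, so $A$ is not a minimal local set. In a complete bipartite graph, $A$ has at least two vertices in one of the two parts by pigeonhole, and that pair is not full cut-rank, so again $A$ is not a minimal local set. Hence every minimal local set of these graphs has size at most $2$, and combined with the exclusion of sizes $0$ and $1$ it has size exactly $2$. That such sets do exist is clear: in $K_n$ any pair is not full cut-rank while the singletons and $\emptyset$ it contains are, hence it is a minimal local set; in the complete bipartite case a pair lying in a part of size $\ge 2$ works, and when both parts are singletons the graph is just $K_2$ and its whole vertex set, of cut-rank $0$, is a minimal local set. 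The computations are routine; the only point to keep an eye on is that in a complete bipartite graph a pair with one vertex in each part may be full cut-rank — and hence not even a local set — when both parts have at least two vertices, but this causes no trouble, since for sets of size $\ge 3$ the pigeonhole always supplies a monochromatic pair, which is the one the proof uses.
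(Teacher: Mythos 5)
Your proof is correct and follows essentially the same route as the paper's: identify a two-element subset of any candidate set that already "contains a local set" (in your language, is not full cut-rank; in the paper's, has odd neighborhood contained in itself), use pigeonhole in the bipartite case to find such a monochromatic pair inside any set of size at least $3$, and rule out sizes $0$ and $1$ via the no-isolated-vertex hypothesis. The one place you genuinely diverge is the star $K_{1,m}$: the paper special-cases it by invoking LC-equivalence to the complete graph and LC-invariance of minimal local sets, whereas your cut-rank computation handles all part sizes uniformly (the pair $\{c,r_i\}$ of a center and a leaf is itself a minimal local set of size $2$ there, so nothing breaks), which is a mild simplification.
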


An illustration is provided in Figure \ref{fig:complete_and_complete_bipartite}.%??. \Ncom{TODO: figure}

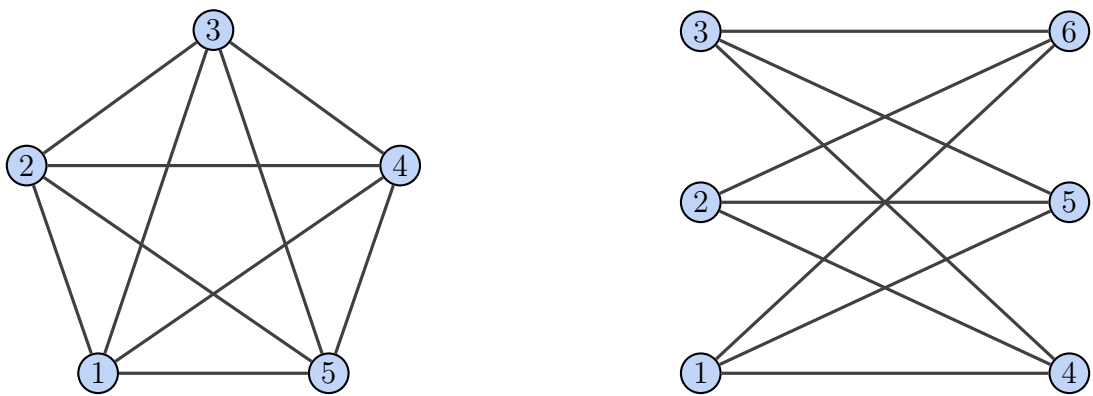
\begin{figure}[h!]
    \centering
    
    \scalebox{1}{
    \begin{tikzpicture}[scale = 0.87]    
        \begin{scope}[every node/.style=vertices]
            \node (U1) at (-0.881*2,-1.124*2) {1};
            \node (U2) at (-1.427*2,0.464*2) {2}; 
            \node (U3) at (0,1.5*2) {3};  
            \node (U4) at (1.427*2,0.464*2) {4};
            \node (U5) at (0.881*2,-1.124*2) {5};              
        \end{scope}
        \begin{scope}[every node/.style={},
                        every edge/.style=edges]              
            \path [-] (U1) edge node {} (U2);
            \path [-] (U1) edge node {} (U3);
            \path [-] (U1) edge node {} (U4);
            \path [-] (U1) edge node {} (U5);
            \path [-] (U2) edge node {} (U3);
            \path [-] (U2) edge node {} (U4);
            \path [-] (U2) edge node {} (U5);
            \path [-] (U3) edge node {} (U4);
            \path [-] (U3) edge node {} (U5);
            \path [-] (U4) edge node {} (U5);
        \end{scope}
    \end{tikzpicture}\qquad\qquad\qquad\qquad\raisebox{0cm}{
        \begin{tikzpicture}[xscale = 0.7, yscale = 0.65]
        
        \begin{scope}[every node/.style=vertices]
            \node (U1) at (0,-3) {1};
            \node (U2) at (0,0.5) {2};
            \node (U3) at (0,4) {3};
            \node (U4) at (7,-3) {4}; 
            \node (U5) at (7,0.5) {5}; 
            \node (U6) at (7,4) {6}; 
        \end{scope}
        
        \begin{scope}[every edge/.style=edges]              
            \path [-] (U1) edge node {} (U4);
            \path [-] (U1) edge node {} (U5);
            \path [-] (U1) edge node {} (U6);
            \path [-] (U2) edge node {} (U4);
            \path [-] (U2) edge node {} (U5);
            \path [-] (U2) edge node {} (U6);
            \path [-] (U3) edge node {} (U4);
            \path [-] (U3) edge node {} (U5);
            \path [-] (U3) edge node {} (U6);
        \end{scope}
    \end{tikzpicture}}
    }
    
    \caption{(Left) The complete graph $K_5$ of order 5. The minimal local sets are exactly the pairs of vertices (for example $\{1,2\}$, $\{3,5\}$). (Right) The bipartite complete graph $K_{3,3}$ of order 6. The minimal local sets are exactly the pairs of two leftmost vertices ($\{1,2\}$, $\{1,3\}$, and $\{2,3\}$), and the pairs of two rightmost vertices ($\{4,5\}$, $\{4,6\}$, and $\{5,6\}$).}
    \label{fig:complete_and_complete_bipartite}
\end{figure}

\begin{proof}
    First notice that a graph contains minimal local sets of size 1 if and only if it contains isolated vertices.

    Let $K_n$ be a complete graph of order $n \gs 2$. Let $u,v$ be two vertices of $K_n$. $Odd_{K_n}\!(\{u,v\})\! =\{u,v\}$ thus $\{u,v\}$ is a minimal local set. Any set of vertices of size 3 or more cannot be a minimal local set as it contains a pair of vertices forming a minimal local set.

    Let $K_{n_L,n_R}$ be a bipartite complete graph, i.e. a graph $K_{n_L,n_R}=(V,E)$ such that $V$ is partitioned into $V = L \cup R$ where $|L|=n_L\gs 2$ and $|R|=n_R \gs 2$, $L$ and $R$ are independent, and any two vertices $l,r$ where $l \in L$ and $r \in R$ share an edge. Let $l_1, l_2 \in L$. $Odd_{K_{n_L,n_R}}(\{l_1, l_2\}) =  \emptyset$ thus $\{l_1, l_2\}$ is a minimal local set. The same goes for any set $r_1, r_2 \in R$. Any set $l,r$ where $l \in L$ and $r \in R$, is not a minimal local set. Any set of vertices of size 3 or more cannot be a minimal local set as it contains a pair of vertices forming a minimal local set. The case $n_L = 1$ (or $n_R = 1$, equivalently) follows from the fact that the star graph is LC-equivalent to the complete graph, and minimal local sets are invariant under LC-equivalence according to \cref{cor:localset_invariant}.
\end{proof}

There exist also graphs with only "large" minimal local sets. Indeed, there exist graphs with linear minimum degree up to local complementation.

\begin{proposition}[\cite{Javelle12}] \label{prop:big_dloc}
    There exists $n_0 \in \mathbb N$ such that for all $n > n_0$ there exists a graph of order n whose minimum degree up to local complementation is greater than $0.189n$.
\end{proposition}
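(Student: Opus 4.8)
The plan is to prove the statement probabilistically, by showing that the random graph $G = G(n,1/2)$ (each of the $\binom{n}{2}$ possible edges present independently with probability $1/2$) satisfies $\dloc(G) > 0.189\,n$ with positive probability once $n$ is large enough. By \cref{prop:dloc_localset}, $\dloc(G) = \min_{\emptyset\neq D\se V}|D\cup Odd_G(D)| - 1$; so, setting $k := \lfloor 0.189\,n\rfloor + 1$ (note $k > 0.189\,n$), it suffices to show that $\Pr\big[\exists\,\emptyset\neq D\se V:\ |D\cup Odd_G(D)|\ls k\big] < 1$ for all $n$ beyond some threshold.

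First I would determine the distribution of $|D\cup Odd_G(D)|$ for a fixed nonempty set $D$ with $|D| = d$. Since $D\cup Odd_G(D)$ is the disjoint union of $D$ with $\{v\in V\sm D:\ |N_G(v)\cap D|\ \text{odd}\}$, and since for each $v\notin D$ the parity of $|N_G(v)\cap D|$ is an unbiased coin flip — these being independent over $v\notin D$, as the corresponding sets of potential edges are pairwise disjoint — we get $|D\cup Odd_G(D)| = d + Y$ with $Y\sim\mathrm{Bin}(n-d,\tfrac12)$. Hence $\Pr[\,|D\cup Odd_G(D)|\ls k\,] = 2^{-(n-d)}\sum_{j=0}^{k-d}\binom{n-d}{j}$, and only sets $D$ with $d\ls k$ can matter.

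Next, a union bound over all nonempty $D$, the binomial identity $\binom{n}{d}\binom{n-d}{j} = \binom{n}{d+j}\binom{d+j}{d}$, the change of index $\ell = d+j$, and $\sum_{d=0}^{\ell}\binom{\ell}{d}2^{d} = 3^{\ell}$ collapse the estimate to
\[
\Pr[\dloc(G)<k]\ \ls\ 2^{-n}\sum_{\ell=1}^{k}\binom{n}{\ell}3^{\ell}\ \ls\ k\cdot 2^{-n}\binom{n}{k}3^{k}\ \ls\ k\cdot 2^{\,n\,g(k/n)},
\]
where $g(\kappa) := H(\kappa) + \kappa\log_2 3 - 1$ and $H$ is the binary entropy (in bits); the middle inequality uses that $\ell\mapsto\binom{n}{\ell}3^{\ell}$ is increasing on $\ell\ls(3n-1)/4$, and the last uses $\binom{n}{k}\ls 2^{nH(k/n)}$. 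The function $g$ is continuous and strictly increasing on $(0,3/4)$ (its derivative is $\log_2\!\big(3(1-\kappa)/\kappa\big)$), and a direct numerical check gives $g(0.189) < 0$. Choosing $\delta > 0$ with $g(0.189+\delta) < 0$, we have $k/n\ls 0.189 + 1/n \ls 0.189 + \delta$ for $n > 1/\delta$, hence $g(k/n)\ls g(0.189+\delta) =: -c < 0$ and $\Pr[\dloc(G)<k]\ls k\cdot 2^{-cn}\to 0$. So for all $n$ past some $n_0$ the event $\dloc(G)\gs k$ has positive probability, which yields a graph of order $n$ with $\dloc > 0.189\,n$.

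The main obstacle is the tightness of the summation bound: this is exactly where the particular constant $0.189$ comes from. A crude treatment of the terms $\binom{n}{d}\binom{n-d}{j}$ throws away the factor $3^{\ell}$ and fails to produce a negative exponent; one really needs the exact collapse to $3^{\ell}$ — equivalently the optimization $\max_{\beta\in[0,1]}\big(H(\beta)+\beta\big) = \log_2 3$, attained at $\beta = 2/3$ — to bring the exponent below zero. The resulting threshold is the root of $H(\kappa)+\kappa\log_2 3 = 1$, which lies just above $0.189$, so $0.189$ is essentially what this method delivers.
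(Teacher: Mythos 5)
Your argument is correct: the first-moment computation, the collapse of the double sum to $2^{-n}\sum_{\ell}\binom{n}{\ell}3^{\ell}$, and the identification of $0.189$ as (just below) the root of $H(\kappa)+\kappa\log_2 3=1$ are all sound, and the independence of the parities $|N_G(v)\cap D|\bmod 2$ over $v\notin D$ is justified since they involve disjoint edge slots. The thesis itself gives no proof — it imports the statement from \cite{Javelle12} — and your probabilistic union bound over small local sets is essentially the counting argument of that reference, so there is nothing to flag.
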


\begin{corollary}
    There exists $n_0 \in \mathbb N$ such that for all $n > n_0$ there exists a graph of order n whose minimal local sets all contain at least $0.189n$ vertices.
\end{corollary}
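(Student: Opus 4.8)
The plan is to obtain the corollary directly from \cref{prop:big_dloc} together with \cref{prop:dloc_localset}, which together form a dictionary between the minimum degree up to local complementation and the size of the smallest minimal local set. First I would fix $n_0$ to be the constant provided by \cref{prop:big_dloc}, and for any $n > n_0$ let $G$ be the graph of order $n$ furnished by that proposition, so that $\dloc(G) > 0.189 n$.

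Next I would invoke \cref{prop:dloc_localset}, which states that $\dloc(G) = \min_{D \se V,\, D \neq \emptyset} |D \cup Odd_G(D)| - 1$; equivalently, the smallest local set of $G$ has size $\dloc(G) + 1$. Since the smallest local set is necessarily minimal by inclusion, and since every minimal local set has size at least that of the smallest local set, every minimal local set of $G$ has size at least $\dloc(G) + 1 > 0.189 n + 1 > 0.189 n$. This is exactly the asserted bound, with the same threshold $n_0$.

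There is essentially no obstacle: the corollary is a routine repackaging of \cref{prop:big_dloc} through the equivalence of \cref{prop:dloc_localset}. The only point worth a moment's care is the direction of the translation — that a large value of $\dloc$ forces all minimal local sets (not merely the smallest one) to be large — but this is immediate once one observes that the smallest local set is a lower bound on the size of every local set, and a fortiori every minimal one. Note also that no rounding subtlety intervenes, since $\dloc(G) + 1 > 0.189 n$ already holds without using the extra $+1$.
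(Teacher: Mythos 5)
Your proposal is correct and is exactly the paper's argument: the corollary is stated there as following directly from the conjunction of \cref{prop:dloc_localset} and \cref{prop:big_dloc}, which is precisely the translation you carry out. Your extra remark that the minimum over all non-empty generators lower-bounds the size of every minimal local set is the right (and only) point needing care, and you handle it correctly.
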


\begin{proof}
    Follows directly from the conjunction of \cref{prop:dloc_localset} and \cref{prop:big_dloc}.
\end{proof}

\subsection{Number of minimal local sets}

The size of the minimal local sets is, to some extent, related to the number of minimal local sets in a graph. For instance, complete graphs and bipartite complete graphs have a number of minimal local sets quadratic in their order. More generally, if the size of the minimal local sets is upper bounded by $k$ then there are obviously $O(n^k)$ minimal local sets in a graph of order $n$. Maybe more surprisingly, a lower bound on the size of the minimal local sets implies a lower bound on their number: 

\begin{proposition}  
Given a graph $G$ of order $n$, if all minimal local sets of $G$ are of size at least $m$, then the number of minimal local sets in $G$ is at least \[\frac{1-2r}{3\sqrt n}2^{n\left(1-(1-r)H_2\big(\frac1{2(1-r)}\big)\right)}\]
where $r=\frac m n$ and $H_2(x) = -x\log_2(x) - (1-x)\log_2(1-x)$ is the binary entropy. 
\label{prop:exp_number_MLS}
\end{proposition}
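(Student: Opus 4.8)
The plan is to lower bound the number $N$ of minimal local sets of $G$ by a double-counting argument that matches minimal local sets against the subsets of $V$ of one carefully chosen size. The starting point is a consequence of the cut-rank characterisation already used in the proof of \cref{prop:characMLS}: for any $A \se V$, the number of generators $D \se A$ of local sets contained in $A$ is exactly $2^{|A|-\cutrk(A)}$, so any set $A$ that is not full cut-rank ($\cutrk(A)<|A|$) contains a non-empty $D$ with $D\cup Odd_G(D)\se A$, hence contains a local set and therefore a minimal local set. By \cref{prop:cutrank_prop2}$(ii)$ no subset of size larger than $\lfloor n/2\rfloor$ is full cut-rank, so \emph{every} subset of $V$ of size $s:=\lfloor n/2\rfloor+1$ contains at least one minimal local set.

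Next I would count the incidences $(L,A)$ with $L$ a minimal local set, $|A|=s$ and $L\se A$. Since each of the $\binom ns$ sets $A$ contributes at least one incidence, there are at least $\binom ns$ of them. On the other hand, a fixed minimal local set $L$ of size $\ell\gs m$ lies in $\binom{n-\ell}{s-\ell}$ sets of size $s$, and since the consecutive ratio $\binom{n-\ell-1}{s-\ell-1}/\binom{n-\ell}{s-\ell}=\frac{s-\ell}{n-\ell}$ is $\ls 1$, this quantity is non-increasing in $\ell$, hence $\ls\binom{n-m}{s-m}$. Comparing the two counts gives $\binom ns\ls N\binom{n-m}{s-m}$, i.e.
\[N\ \gs\ \frac{\binom ns}{\binom{n-m}{s-m}}\ =\ \frac{\binom nm}{\binom sm}.\]
The choice $s=\lfloor n/2\rfloor+1$ is optimal for this argument: it is the smallest size that forces every subset to contain a minimal local set, hence the one minimising $\binom sm$.

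It then remains to estimate the two binomial coefficients. For the numerator I would use the elementary bound $\binom{n}{\lfloor n/2\rfloor}\gs 2^n/(2\sqrt n)$, together with $\binom{n}{n/2+1}=\tfrac{n/2}{n/2+1}\binom{n}{n/2}$ in the even case, to obtain $\binom ns\gs 2^n/(3\sqrt n)$. For the denominator I would apply the entropy bound $\binom Nk\ls 2^{N H_2(k/N)}$ to get $\binom{n-m}{s-m}\ls 2^{(n-m)H_2((s-m)/(n-m))}$, and then observe that $\frac{s-m}{n-m}$ differs by at most $\frac1{n-m}$ from $\frac{n/2-m}{n-m}=1-\frac1{2(1-r)}$; since $H_2$ is increasing and concave on the relevant interval, with $|H_2'|\ls\log_2\frac1{1-2r}$ there, this yields $(n-m)H_2\!\big(\tfrac{s-m}{n-m}\big)\ls(n-m)H_2\!\big(\tfrac1{2(1-r)}\big)+\log_2\tfrac1{1-2r}$, i.e. $\binom{n-m}{s-m}\ls\frac1{1-2r}\,2^{(n-m)H_2(1/(2(1-r)))}$. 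Substituting both estimates and using $n-m=n(1-r)$ turns the lower bound for $N$ into exactly $\frac{1-2r}{3\sqrt n}\,2^{\,n(1-(1-r)H_2(1/(2(1-r))))}$.

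I expect the last estimation step to be the main obstacle: one must track the rounding coming from $\lfloor n/2\rfloor$ precisely enough to reach the stated constant $\frac{1-2r}{3\sqrt n}$ rather than a slightly weaker one, and then dispose of the degenerate ranges of $m$, where the argument does not quite apply — for $m\ls 1$ the right-hand side is below $1$, and for $r\gs\tfrac12$ it is non-positive, so in those cases the inequality holds trivially because $G$ always has at least one minimal local set.
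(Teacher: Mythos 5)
Your proposal is correct and follows exactly the route the paper indicates: the key fact that every subset of size $\lfloor n/2\rfloor+1$ fails to be full cut-rank and hence contains a minimal local set (via \cref{prop:cutrank_prop2} and \cref{prop:characMLS}), combined with a double-counting argument over pairs $(L,A)$ and entropy estimates on the resulting ratio of binomial coefficients. The paper only sketches this and defers the technical calculations to the cited reference, but your incidence count $N\gs\binom ns/\binom{n-m}{s-m}$ and the subsequent estimates (including the treatment of the degenerate ranges of $m$) reconstruct the intended argument faithfully.
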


The proof uses the fact that any set of vertices of size $\lfloor n/2 \rfloor +1$ contains at least one minimal local set (by the conjunction of \cref{prop:cutrank_prop2} and \cref{prop:characMLS}), as a counting argument. The calculations are rather technical and are provided in \cite{claudet2024covering}. 
\cref{prop:exp_number_MLS} implies that a graph of order $n$, in which the size of the minimal local sets is lower bounded by $cn$ for some constant $c>0$,  has an exponential number of minimal local sets. In particular, graphs of order $n$ and minimum degree up to local complementation being at least $0.189n$, have at least $1.165^n$ minimal local sets. Notice that there also exist explicit graphs with exponentially many minimal local sets.

\begin{proposition}
    The graph $K_{k,k} \Delta M_k$ (of order $n = 2k$), defined as the symmetric difference of a complete bipartite graph and a matching, has more than $2^{k-1}=\frac12\sqrt {2^n}$ minimal local sets.
\end{proposition}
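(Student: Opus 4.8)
The plan is to exhibit strictly more than $2^{k-1}$ distinct minimal local sets explicitly. Write $V=L\cup R$ with $L=\{l_1,\dots,l_k\}$ and $R=\{r_1,\dots,r_k\}$, and take the matching to be $M_k=\{(l_i,r_i):i\in[1,k]\}$, so that in $G:=K_{k,k}\Delta M_k$ the sets $L,R$ are independent and $l_i\sim_G r_j$ holds precisely when $i\neq j$. For $S\se L$ write $\overline S:=\{r_i:l_i\in S\}\se R$. The first step is a one-line computation of odd neighbourhoods: for $D=D_L\cup D_R$ with $D_L\se L$ and $D_R\se R$, a vertex $r_j$ lies in $Odd_G(D)$ if and only if $|D_L\sm\{l_j\}|=|D_L|-[l_j\in D_L]$ is odd, and symmetrically for $l_i$ and $D_R$. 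In particular, if $S\se L$ has odd size then $Odd_G(S)=R\sm\overline S$, so $A_S:=S\cup(R\sm\overline S)$ is a local set generated by $S$; moreover $S\mapsto A_S$ is injective since $A_S\cap L=S$. As $[1,k]$ has $2^{k-1}$ subsets of odd size, this gives $2^{k-1}$ pairwise distinct local sets, and it remains to prove they are minimal and to produce one further minimal local set.

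The heart of the proof is the minimality of $A_S$ for $|S|$ odd. Suppose $\emptyset\neq D'\se A_S$ generates a local set with $D'\cup Odd_G(D')\se A_S$; write $D'=D'_L\cup D'_R$ with $D'_L\se S$ and $D'_R\se R\sm\overline S$, using $A_S\cap L=S$ and $A_S\cap R=R\sm\overline S$. The containment $Odd_G(D')\cap L\se S$ forces $|D'_R\sm\{r_i\}|$ to be even for every $l_i\notin S$, and $Odd_G(D')\cap R\se R\sm\overline S$ forces $|D'_L\sm\{l_j\}|$ to be even for every $l_j\in S$; case-splitting on the parities of $|D'_L|$ and $|D'_R|$ then forces $D'_L\in\{\emptyset,S\}$ and $D'_R\in\{\emptyset,R\sm\overline S\}$, where the latter alternative can only occur when $k$ is even. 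A direct check of the at most four surviving candidates shows that each gives back $D'\cup Odd_G(D')=A_S$, so $A_S$ contains no proper local set, i.e.\ is a minimal local set. This parity bookkeeping --- tracking which indices $i,j$ have $l_i,r_j$ in which of $S,\overline S,D'_L,D'_R$ --- is the step requiring the most care, though it is mechanical once the odd-neighbourhood formula is available.

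For the extra minimal local set, assume $k\ge 3$ and take $B:=\{l_1,l_2,r_1,r_2\}$, which is generated by $\{l_1,l_2\}$ because $Odd_G(\{l_1,l_2\})=\{r_1,r_2\}$. The same parity analysis --- now using $L\sm\{l_1,l_2\}\neq\emptyset$ and $R\sm\{r_1,r_2\}\neq\emptyset$, which hold since $k\ge 3$ --- restricts any generator $D'\se B$ of a subset to $D'_L\in\{\emptyset,\{l_1,l_2\}\}$ and $D'_R\in\{\emptyset,\{r_1,r_2\}\}$, and each of the four cases again yields $D'\cup Odd_G(D')=B$; hence $B$ is a minimal local set. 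Finally $B\neq A_S$ for every odd $S$, since $B\cap L=\{l_1,l_2\}$ has even cardinality whereas $A_S\cap L=S$ has odd cardinality. Thus $G$ has at least $2^{k-1}+1$ minimal local sets; as $n=2k$ gives $2^{k-1}=\frac12\cdot 2^{k}=\frac12\sqrt{2^n}$, this is the asserted bound. (The restriction $k\ge 3$ costs nothing for the intended asymptotic statement; the count equals exactly $2^{k-1}$ only in the degenerate case $k=2$, where $G\cong 2K_2$.)
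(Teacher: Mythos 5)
Your proof is correct and follows essentially the same route as the paper's: the $2^{k-1}$ minimal local sets generated by the odd-cardinality subsets of $L$, with minimality established by the same parity case analysis on $|D'\cap L|$ and $|D'\cap R|$ (yielding $D'\in\{\emptyset, S, R\setminus\overline S, A_S\}$). You in fact go one step further than the paper, whose proof only concludes ``at least $2^{k-1}$'' even though the statement claims strictly more: your additional minimal local set $\{l_1,l_2,r_1,r_2\}$ for $k\ge 3$ actually delivers the strict inequality, and your remark that equality holds in the degenerate case $k=2$ (where the graph is $2K_2$ with exactly two minimal local sets) is accurate.
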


An illustration is provided in Figure \ref{fig:k_delta_m}.

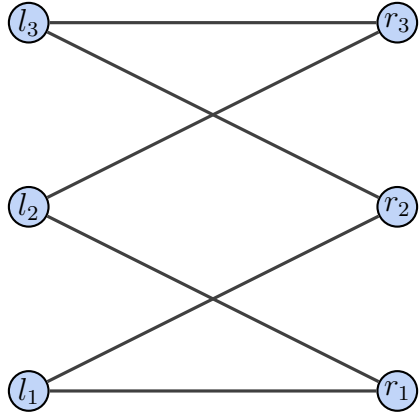
\begin{figure}[H]
    \centering
    
    \scalebox{1}{    
        \begin{tikzpicture}[scale = 0.7]
        
        \begin{scope}[every node/.style=vertices]
            \node (U1) at (0,-3) {$l_1$};
            \node (U2) at (0,0.5) {$l_2$};
            \node (U3) at (0,4) {$l_3$};
            \node (V1) at (7,-3) {$r_1$}; 
            \node (V2) at (7,0.5) {$r_2$}; 
            \node (V3) at (7,4) {$r_3$}; 
        \end{scope}
        
        \begin{scope}[every edge/.style=edges]              
            \path [-] (U1) edge node {} (V1);
            \path [-] (U1) edge node {} (V2);
            \path [-] (U2) edge node {} (V3);
            \path [-] (U2) edge node {} (V1);
            \path [-] (U3) edge node {} (V2);
            \path [-] (U3) edge node {} (V3);
        \end{scope}
    \end{tikzpicture}    
    }
    
    \caption{The graph $K_{3,3} \Delta M_3$ of order 6. Every set containing an odd number of leftmost vertices generates a minimal local set: $\{l_1\}$, $\{l_2\}$, $\{l_3\}$, and $\{l_1, l_2, l_3\}$ generate respectively the minimal local sets $\{l_1, r_1, r_2\}$, $\{l_2, r_1, r_3\}$, $\{l_3, r_2, r_3\}$, and $\{l_1, l_2, l_3\}$.}
    \label{fig:k_delta_m}
\end{figure}

\begin{proof}

    We write $K_{k,k} \Delta M_k = (L\cup R, E)$ with $L=\{l_1, \ldots, l_{k}\}$ and $R=\{r_1, \ldots, r_{k}\}$, such that for any $i$ and $j$, $(l_i,l_j)\notin E$, $(r_i,r_j)\notin E$ and $(l_i,r_j)\in E$ if $i \neq j$. Let $\omega \se [1,k]$ be a subset of the integers between 1 and $k$ such that $|\omega| = 1 \bmod 2$. The local set $L_\omega$ generated by $D_\omega = \bigcup_{i\in \omega}\{l_i\}$ is $L_\omega:=D_\omega\cup Odd(D_\omega)=D_\omega\cup \bigcup_{i\in [1,k] \sm \omega}\{r_i\}$. We show that $L_\omega$ is minimal by inclusion: let $D \se L_\omega$ such that $D\cup Odd(D) \se L_\omega$, we prove that $D \cup Odd(D)$ is either $\emptyset$ or $L_\omega$.

    \begin{itemize}
        \item Assume $|D \cap L| = 0 \bmod 2$. If $D \cap L \neq \emptyset$, i.e. there exists $l_i \in D \cap L$, then $r_i \in Odd(D)$, contradicting $r_i\notin L_\omega$. Thus, $D \cap L = \emptyset$.
        \item Assume $|D \cap L| = 1 \bmod 2$. If $D \cap L \neq D_\omega \cap L$, i.e. there exists $l_i \in D_\omega \sm D$, then $r_i \in Odd(D)$. Thus, $D \cap L = D_\omega \cap L$. 
        \item Assume $|D \cap R| = 0 \bmod 2$. If $D \cap R \neq \emptyset$, i.e. there exists $r_i \in D \cap R$, then $l_i \in Odd(D)$. Thus, $D \cap R = \emptyset$.
        \item Assume $|D \cap R| = 1 \bmod 2$. If $D \cap R \neq D_\omega \cap R$, i.e. there exists $r_i \in Odd(D_\omega) \sm D$, then $l_i \in Odd(D)$. Thus, $D \cap R = D_\omega \cap R$. 
    \end{itemize}

    To sum up, $D$ is either $\emptyset$, $D_\omega$, $Odd(D_\omega)$, or $L_\omega$. In any case, $D \cup Odd(D)$ is either $\emptyset$ or $L_\omega$. As a consequence, for any $\omega \se [1,k]$ such that $|\omega| = 1 \bmod 2$, $L_\omega$ is a minimal local set of size $k$. These minimal local sets are distinct, as $L_\omega \cap L = \omega$. Thus, $G_k$ contains at least $2^{k-1}$ minimal local sets.
\end{proof}

\section{An MLS cover always exists}

\label{sec:mls_cover}

We are interested in families of minimal local sets such that each vertex is contained in at least one minimal local set. We call such a family an \emph{MLS cover}.

\begin{definition}[MLS cover]
    Given a graph $G$, $\mathcal M\se 2^V$ is an MLS cover if:
    \begin{itemize}
        \item $\forall L\in \mathcal M$, $L$ is a minimal local set of $G$;
        \item $\forall u\in V$, $\exists L \in \mathcal M$ such that $u\in L$.
    \end{itemize}
\end{definition}

MLS covers are illustrated in Figure \ref{fig:MLScover}.

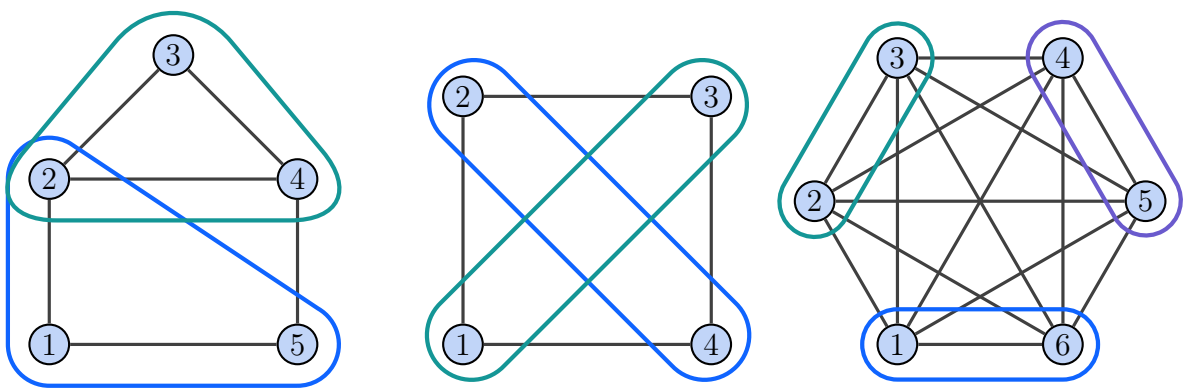
\begin{figure}[H]

\centering
\scalebox{1}{

%Graph 1
\begin{tikzpicture}[scale = 0.55]
\begin{scope}[every node/.style=vertices]
    \node (U1) at (0,0) {$1$};
    \node (U2) at (0,4) {$2$};
    \node (U3) at (3,7) {$3$};
    \node (U4) at (6,4) {$4$};
    \node (U5) at (6,0) {$5$};    
\end{scope}
\begin{scope}[every edge/.style=edges]              
    \path [-] (U1) edge node {} (U5);
    \path [-] (U1) edge node {} (U2);
    \path [-] (U2) edge node {} (U3);
    \path [-] (U2) edge node {} (U4);
    \path [-] (U3) edge node {} (U4);
    \path [-] (U4) edge node {} (U5);
\end{scope}

\draw[colormls1, ultra thick] (6,-1) -- (0,-1) arc(-90:-180:1)-- (-1,0) -- (-1,4) arc(180:60:1) -- (6.5,0.866) arc(60:-90:1);
\draw[colormls2, ultra thick, rounded corners=12mm] (-2,3)--(8,3)--(3,9)--cycle;

%Graph 2
\begin{scope}[shift={(10,0)},every node/.style=vertices]
    \node (U1) at (0,0) {$1$};
    \node (U2) at (0,6) {$2$};
    \node (U3) at (6,6) {$3$};
    \node (U4) at (6,0) {$4$};   
\end{scope}
\begin{scope}[every edge/.style=edges]              
    \path [-] (U1) edge node {} (U2);
    \path [-] (U2) edge node {} (U3);
    \path [-] (U3) edge node {} (U4);
    \path [-] (U1) edge node {} (U4);
\end{scope}

\begin{scope}[shift={(10,0)}]
\begin{scope}[rotate=0, scale = 1.08, shift={(4.7,-0.5)},rotate=45]
\draw[colormls1, ultra thick] (0,0) -- (0,7.3) arc(180:0:1) -- (2,0) arc(0:-180:1);
\end{scope}
\begin{scope}[xscale = -1, rotate=0, scale = 1.08, shift={(-0.9,-0.5)},rotate=45]
\draw[colormls2, ultra thick] (0,0) -- (0,7.3) arc(180:0:1) -- (2,0) arc(0:-180:1);
\end{scope}
\end{scope}

%Graph 3
\begin{scope}[shift={(20.5,0)},every node/.style=vertices]
    \node (U1) at (0,0) {1};
    \node (U2) at (4,0) {6};
    \node (U3) at (-2,3.464) {2};
    \node (U4) at (6,3.464) {5};
    \node (U5) at (0,6.928) {3}; 
    \node (U6) at (4,6.928) {4};  
\end{scope}
\begin{scope}[every edge/.style=edges]              
    \path [-] (U1) edge node {} (U2);
    \path [-] (U1) edge node {} (U3);
    \path [-] (U1) edge node {} (U4);
    \path [-] (U1) edge node {} (U5);
    \path [-] (U1) edge node {} (U6);
    \path [-] (U2) edge node {} (U3);
    \path [-] (U2) edge node {} (U4);
    \path [-] (U2) edge node {} (U5);
    \path [-] (U2) edge node {} (U6);
    \path [-] (U3) edge node {} (U4);
    \path [-] (U3) edge node {} (U5);
    \path [-] (U3) edge node {} (U6);
    \path [-] (U4) edge node {} (U5);
    \path [-] (U4) edge node {} (U6);
    \path [-] (U5) edge node {} (U6);

\end{scope}

\begin{scope}[shift={(20.5,0)}]

\draw[colormls1, ultra thick] (0,0.85) -- (4,0.85) arc(90:-90:0.85) -- (0,-0.85) arc(-90:-270:0.85);

\begin{scope}[shift={(-2,3.45)},rotate=60]
\draw[colormls2, ultra thick] (0,0.85) -- (4,0.85) arc(90:-90:0.85) -- (0,-0.85) arc(-90:-270:0.85);
\end{scope}

\begin{scope}[shift={(4,6.95)},rotate=-60]
\draw[colormls3, ultra thick] (0,0.85) -- (4,0.85) arc(90:-90:0.85) -- (0,-0.85) arc(-90:-270:0.85);
\end{scope}

\end{scope}

\end{tikzpicture}
}   
    
\caption{MLS covers of 3 graphs.
}
\label{fig:MLScover}
\end{figure}

To cover a vertex $u$ of a graph $G$ with a minimal local set, one can consider the local set generated by $u$, i.e. its closed neighborhood $\{u\}\cup N(u)$. However, such a  local set is not always minimal, worse yet, it does not necessarily contain a minimal local set that includes $u$. Indeed, in the cycle $C_4$ of order 4, for every vertex $u$, $\{u\}\cup N(u)$ contains a single minimal local set that does not include $u$ (see Figure \ref{fig:MLS}). 

In this particular $C_4$ example however, one can notice that, roughly speaking, minimizing local sets of the form $\{u\}\cup N(u)$ is enough to get an MLS cover (see Figure \ref{fig:MLScover}, middle graph): considering the minimal local sets included in closed neighborhoods is enough to produce an MLS cover.

This strategy does not work in general. There exist graphs (see Figure \ref{fig:MLSCover_is_hard}) for which an MLS cover requires minimal local sets that are not subsets of any closed neighborhood. Thus, one can wonder what is the best strategy to produce an MLS cover and even whether it exists for every graph. We show in this section that every graph admits an MLS cover (\cref{thm:MLS_cover}, proved in  \cref{sec:MLS_cover_proof}) and then introduce an efficient algorithm for finding an MLS cover (\cref{sec:algo_mls}).

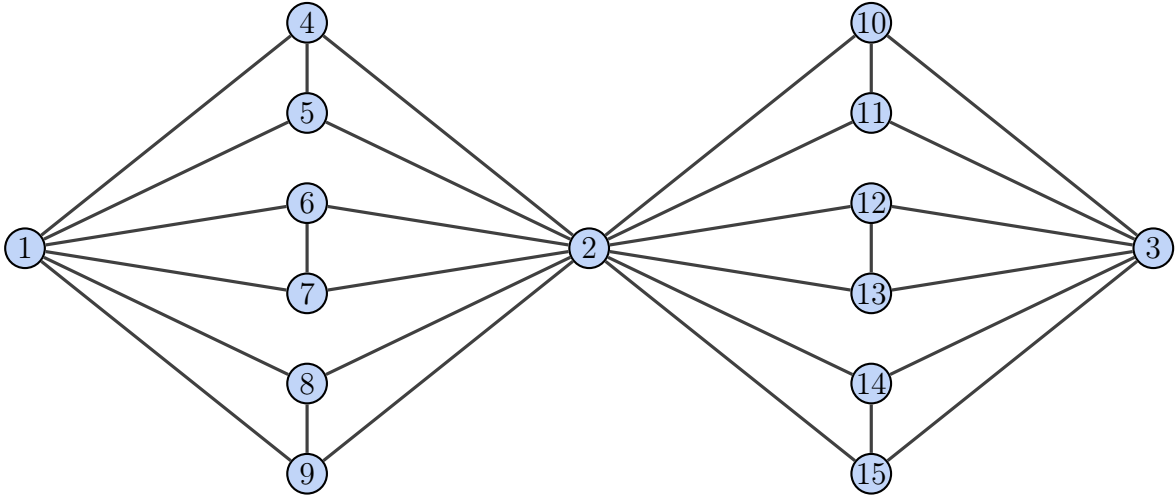
\begin{figure}[h]
    \centering
    \scalebox{1}{
    \begin{tikzpicture}[xscale = 0.75, yscale =0.8]    
        \begin{scope}[every node/.style=vertices]
            \node (U1) at (0,0) {2};
            \node (U0) at (-10,0) {1};
            \node (U2) at (10,0) {3};
            \node (U3) at (-5,3.75) {4};
            \node (U4) at (-5,2.25) {5};
            \node (U5) at (-5,0.75) {6};
            \node (U6) at (-5,-0.75) {7};
            \node (U7) at (-5,-2.25) {8};
            \node (U8) at (-5,-3.75) {9};
            \node (U9) at (5,3.75) {10};
            \node (U10) at (5,2.25) {11};
            \node (U11) at (5,0.75) {12};
            \node (U12) at (5,-0.75) {13};
            \node (U13) at (5,-2.25) {14};
            \node (U14) at (5,-3.75) {15};
        \end{scope}
        \begin{scope}[every node/.style={},
                        every edge/.style=edges]              
            \path [-] (U0) edge node {} (U3);
            \path [-] (U0) edge node {} (U4);
            \path [-] (U0) edge node {} (U5);
            \path [-] (U0) edge node {} (U6);
            \path [-] (U0) edge node {} (U7);
            \path [-] (U0) edge node {} (U8);
            \path [-] (U1) edge node {} (U3);
            \path [-] (U1) edge node {} (U4);
            \path [-] (U1) edge node {} (U5);
            \path [-] (U1) edge node {} (U6);
            \path [-] (U1) edge node {} (U7);
            \path [-] (U1) edge node {} (U8);
            \path [-] (U1) edge node {} (U9);
            \path [-] (U1) edge node {} (U10);
            \path [-] (U1) edge node {} (U11);
            \path [-] (U1) edge node {} (U12);
            \path [-] (U1) edge node {} (U13);
            \path [-] (U1) edge node {} (U14);
            \path [-] (U2) edge node {} (U9);
            \path [-] (U2) edge node {} (U10);
            \path [-] (U2) edge node {} (U11);
            \path [-] (U2) edge node {} (U12);
            \path [-] (U2) edge node {} (U13);
            \path [-] (U2) edge node {} (U14);
            \path [-] (U3) edge node {} (U4);
            \path [-] (U5) edge node {} (U6);
            \path [-] (U7) edge node {} (U8);
            \path [-] (U9) edge node {} (U10);
            \path [-] (U11) edge node {} (U12);
            \path [-] (U13) edge node {} (U14);

        \end{scope}
    \end{tikzpicture}}
    
    \caption{Example of a graph of order 15 where a naive approach for finding an MLS cover, based on the closed neighborhood of every vertex, cannot work. The vertices 1, 2 and 3 are contained in several minimal local sets, the most obvious one being $\{1,2,3\}$. However, no local set of the form $\{u\}\cup N(u)$ contains a minimal local set that contains either 1, 2 or 3. Indeed, the minimal local sets in $\{1\}\cup N(1)$ are $\{4,5\}$, $\{6,7\}$ and $\{8,9\}$. Symmetrically, the minimal local sets in $\{3\}\cup N(3)$ are $\{10,11\}$, $\{12,13\}$ and $\{14,15\}$. The minimal local sets in $\{2\}\cup N(2)$ are $\{4,5\}$, $\{6,7\}$, $\{8,9\}$, $\{10,11\}$, $\{12,13\}$ and $\{14,15\}$. The only minimal local set in $\{4\}\cup N(4)$ is $\{4,5\}$, and the same goes for 5, 6, 7, 8, 9, 10, 11, 12, 13 and 14, 15.}
    \label{fig:MLSCover_is_hard}
\end{figure}

\begin{theorem}
    \label{thm:MLS_cover}
    Any graph has an MLS cover.   
\end{theorem}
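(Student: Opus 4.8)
It suffices to show that every vertex of a graph lies in some minimal local set: taking, for each vertex $u$, one minimal local set containing $u$, the resulting finite family is an MLS cover. So the plan is to fix a vertex $u$ and produce a minimal local set $L$ with $u\in L$. Since local sets live inside a single connected component and $\cutrk$ is additive over components, I may assume $G$ is connected; and if $u$ is isolated then $\{u\}$ is itself a minimal local set, so I assume $u$ has a neighbour. I will work entirely through the cut-rank characterisation (\cref{prop:characMLS}): a set $L$ with $u\in L$ is a minimal local set if and only if $\cutrk(L)<|L|$ while every proper subset of $L$ is full cut-rank; and, by \cref{prop:cutrank_prop2}$(iii)$, the latter condition is equivalent to $\cutrk(L\setminus\{a\})=|L|-1$ for every $a\in L$.

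\textbf{A first candidate and the obstruction.} Let $L$ be inclusion-minimal among all sets $S$ with $u\in S$ and $\cutrk(S)<|S|$; the family is non-empty since $\cutrk(V)=0<n$. For every $a\in L\setminus\{u\}$ the set $L\setminus\{a\}$ still contains $u$, so minimality forces $\cutrk(L\setminus\{a\})=|L|-1$. Hence $L$ fails to be a minimal local set only if $\cutrk(L\setminus\{u\})<|L|-1$. Assume this; pick $M\subseteq L\setminus\{u\}$ inclusion-minimal with $\cutrk(M)<|M|$, so $M$ is a minimal local set and $u\notin M$. Subadditivity of the cut-rank gives $\cutrk(M\cup\{u\})\ls\cutrk(M)+\cutrk(\{u\})\ls(|M|-1)+1=|M|$, so $M\cup\{u\}$ is not full cut-rank; being contained in $L$ and containing $u$, inclusion-minimality of $L$ forces $L=M\cup\{u\}$. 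Thus the \emph{only} obstruction is the degenerate configuration $L=M\cup\{u\}$ with $M$ a minimal local set avoiding $u$ and with $M'\cup\{u\}$ full cut-rank for every $M'\subsetneq M$.

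\textbf{Main step: breaking the degenerate configuration.} This is the part I expect to be hard, and it is where the two properties of $\cutrk$ not yet used must come in: its \emph{symmetry} $\cutrk(A)=\cutrk(V\setminus A)$ and the fact (\cref{prop:MLS2cases}) that a minimal local set $M$ satisfies $|M|-\cutrk(M)\in\{1,2\}$. Applying submodularity to the pairs $\{M,\ M\setminus\{a\}\cup\{u\}\}$ pins $\cutrk(M\cup\{u\})$ between $\cutrk(M)$ and $\cutrk(M)+1$, so $\ker\lambda_{M\cup\{u\}}$ has dimension either $\dim\ker\lambda_M$ or $\dim\ker\lambda_M+1$. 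If it strictly exceeds $\dim\ker\lambda_M$, there is a generator $D$ with $u\in D\cup Odd_G(D)\subseteq M\cup\{u\}$; the local set it generates contains $u$ and sits in $L$, so by minimality of $L$ it equals $L$. One then splits a generator of $L$ against a generator of $M$ exactly as in the proof of \cref{prop:MLS2cases}: the symmetric difference generates a local set inside $L$ that still contains $u$ (because $u$ is in the support of the first generator and not of the second), and one shows it is strictly smaller, contradicting minimality of $L$. If instead $\ker\lambda_{M\cup\{u\}}=\ker\lambda_M$, then $L$'s only local subsets are those of $M$, none through $u$; here minimality of $L$ forces every $(\le|L|-1)$-set containing $u$ to be full cut-rank, and a counting argument in the spirit of \cref{prop:exp_number_MLS} — using that every set of size $\lfloor n/2\rfloor+1$ contains a minimal local set — yields a minimal local set through $u$ of size $|L|$ that the crude inclusion-minimal choice simply missed; equivalently, one argues that among all inclusion-minimal sets containing $u$ that are not full cut-rank at least one is a minimal local set.

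\textbf{Expected obstacle and scope.} The genuine difficulty is that the support map $D\mapsto D\cup Odd_G(D)$ is not linear — its $Odd_G$ part is, but the union is not — so the one-line coding-theoretic argument (take $D$ of minimal support among those with $u$ in its support and note $D\Delta D'$ beats it whenever $D'$ has strictly smaller support) does not close on its own, and one must genuinely invoke the symmetry (self-duality) of $\cutrk$ to force the minimal support to shrink. Once this is handled, the entire argument uses only symmetry, submodularity, linear boundedness and integrality of the cut-rank function, so it transfers verbatim to any ground set equipped with such a function, in particular to matroids and multigraphs.
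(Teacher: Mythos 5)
Your setup is sound and your ``first candidate'' analysis correctly isolates the crux: an inclusion-minimal non-full-cut-rank set $L\ni u$ fails to be a minimal local set exactly when $L\sm\{u\}$ is not full cut-rank, i.e.\ exactly when $L=M\cup\{u\}$ for a minimal local set $M$ avoiding $u$. This is essentially the content of the paper's \cref{lemma:characinMLS}. But the ``Main step'', which is where the theorem actually lives, does not go through as sketched.

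In your Case~A ($\dim\ker\lambda_L>\dim\ker\lambda_M$, so $L$ is itself a local set through $u$), the claimed contradiction is false. Take $G=C_4$ with vertices $1,2,3,4$ and $u=2$: the set $L=\{1,2,3\}=\{2\}\cup Odd(\{2\})$ is a local set, it is inclusion-minimal among non-full-cut-rank sets containing $2$ (one checks $\{2\},\{1,2\},\{2,3\}$ are all full cut-rank), and it contains the minimal local set $M=\{1,3\}$. Here the symmetric difference of the generator $\{2\}$ of $L$ with the generator $\{1,3\}$ of $M$ is $\{1,2,3\}$, which generates $L$ again --- not a strictly smaller local set. So the splitting argument from \cref{prop:MLS2cases} produces nothing new, Case~A genuinely occurs, and no contradiction is available. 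The minimal local set containing $2$ is $\{2,4\}$, which is not even a subset of $L$, so no manipulation internal to $L$ can find it. Your Case~B is worse off: the premise ``minimality of $L$ forces every set of size $\ls|L|-1$ containing $u$ to be full cut-rank'' follows from cardinality-minimality, not inclusion-minimality, and even granting it, the ``counting argument in the spirit of \cref{prop:exp_number_MLS}'' is not actually supplied; your closing sentence (``among all inclusion-minimal sets containing $u$ ... at least one is a minimal local set'') is a restatement of the theorem, not a proof of it.

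What is missing is a \emph{global} argument, and this is precisely where the paper uses the symmetry of $\cutrk$ that you correctly flag as necessary but never deploy. The paper's proof: let $r$ be the maximum cardinality of a full-cut-rank set and $A$ a full-cut-rank set of size $r$. If $u\notin A$, maximality makes $A\cup\{u\}$ non-full-cut-rank, and shrinking $A$ to an inclusion-minimal $C\se A$ with $C\cup\{u\}$ non-full-cut-rank yields a minimal local set $C\cup\{u\}$ (this is \cref{lemma:characinMLS}, and it works because \emph{every} subset of $A$ is full cut-rank by \cref{prop:cutrank_prop2}, so the $L\sm\{u\}$ obstruction cannot arise). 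If $u\in A$, symmetry gives $\cutrk(V\sm A)=r$, and the nested-subset construction (\cref{lemma:exists_full}, via \cref{lemma:localdec}) extracts a full-cut-rank set $B\se V\sm A$ of size $r$ disjoint from $A$; maximality again makes $B\cup\{u\}$ non-full-cut-rank and the same shrinking applies. In short: instead of growing a bad set around $u$ and trying to repair it, grow a maximum \emph{good} (full-cut-rank) set away from $u$ and add $u$ to it. I would rework the proof around that idea.
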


\subsection{Proof}
\label{sec:MLS_cover_proof}

In this section, we prove \cref{thm:MLS_cover}. First, we introduce a cut-rank-based characterization of the existence of a minimal local set covering a given vertex:

\begin{lemma}
    \label{lemma:characinMLS} Given a graph $G=(V,E)$, 
    $a\in V$ is contained in a minimal local set if and only if there exists $A\se V$ such that $A$ is full cut-rank but $A\cup \{a\}$ is not.
\end{lemma}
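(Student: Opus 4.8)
The plan is to reduce the statement to the cut-rank characterization of minimal local sets in \cref{prop:characMLS}: a set $L$ is a minimal local set precisely when $L$ is not full cut-rank but every proper subset of $L$ is full cut-rank. Once this is in hand, both directions of the equivalence become short arguments.

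For the ``only if'' direction I would take a minimal local set $L$ with $a \in L$ and simply set $A := L \setminus \{a\}$. By \cref{prop:characMLS}, $A$ is a proper subset of $L$ and hence full cut-rank, $a \notin A$, and $A \cup \{a\} = L$ is not full cut-rank, which is exactly what is required. The degenerate case $L = \{a\}$ is not special: there $A = \emptyset$, which is full cut-rank since $\cutrk(\emptyset) = 0$ by \cref{prop:cutrank_prop2}.

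For the ``if'' direction, suppose $A$ is full cut-rank and $A \cup \{a\}$ is not (so necessarily $a \notin A$). I would shrink $A \cup \{a\}$ to a smallest witness: among all sets $B$ with $a \in B \subseteq A \cup \{a\}$ that are not full cut-rank --- a nonempty family, since $A \cup \{a\}$ itself belongs to it --- choose one, $L$, of minimum cardinality. I then claim $L$ is a minimal local set. It is not full cut-rank by construction, so by \cref{prop:characMLS} it suffices to show every proper subset $C \subsetneq L$ is full cut-rank, and here I would split into two cases. If $a \notin C$, then $C \subseteq L \setminus \{a\} \subseteq A$, and $C$ is full cut-rank because every subset of a full-cut-rank set is full cut-rank (part $(iii)$ of \cref{prop:cutrank_prop2}). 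If $a \in C$, then $C$ is a strictly smaller member of the family above, so minimality of $L$ forces $C$ to be full cut-rank. Hence $L$ is a minimal local set containing $a$.

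I do not expect a genuine obstacle: everything leans on \cref{prop:characMLS} and the hereditary property of full-cut-rank sets, both already established. The only points that need a little care are choosing the right quantity to minimize (it must be ``not full cut-rank'' sets containing $a$ inside $A \cup \{a\}$, not something coarser), making sure the case $a \notin C$ is handled by the subset-closure property rather than by the minimality of $L$, and noting that the edge cases $A = \emptyset$ (equivalently, $a$ isolated) and $L = \{a\}$ fall under the same argument without modification.
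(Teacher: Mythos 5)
Your proof is correct and follows essentially the same route as the paper: the forward direction via $A = L \setminus \{a\}$, and the converse by shrinking to a minimal non-full-cut-rank witness containing $a$ and checking the two cases of proper subsets (using heredity of full cut-rank for those avoiding $a$, and minimality for those containing $a$). The only cosmetic difference is that the paper minimizes over $B \subseteq A$ with $B \cup \{a\}$ not full cut-rank rather than over sets containing $a$, which is the same argument in different clothing.
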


\begin{proof}
Let $a \in V$. Suppose $a$ is contained in a minimal local set $L \se V$. Then $L \sm \{a\}$ is full cut-rank but $L$ is not.
Conversely, suppose that there exists $A\se V$ such that $A$ is full cut-rank but $A\cup \{a\}$ is not (obviously, $a \notin A$). According to \cref{prop:cutrank_prop2}, any set $B \se A$ is full cut-rank. Among all sets $B \se A$ such that $B \cup \{a\}$ is not full cut-rank (there is at least one: $A$), take one of them that is minimal by inclusion. Call it $C$. Let us show that $C \cup \{a\}$ is a minimal local set. By hypothesis, $C \cup \{a\}$ is not full cut-rank. Every subset of $C$ is full cut-rank, because such a set is also a subset of $A$. Also, every subset of $C \cup \{a\}$ containing $a$ is full cut-rank, by minimality of $C$. So, according to \cref{prop:characMLS},  $C \cup \{a\}$ is a minimal local set. So $a$ is contained in some minimal local set.
\end{proof}

Notice that a set is full cut-rank when the cut-rank function  is locally strictly increasing in the following sense:

\begin{lemma}
    \label{lemma:localdec}Given a graph $G=(V,E)$, 
   $C \se V$ is full cut-rank if and only if $\forall a\in C$, $\cutrk(C\sm \{a\}) < \cutrk(C)$.
\end{lemma}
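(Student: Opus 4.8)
The plan is to establish the two implications separately: the forward direction is an immediate consequence of an already-proven property of $\cutrk$, while the converse carries the real content and I would handle it by contraposition through the kernel of $\lambda_C$.

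For the direction ``$C$ full cut-rank $\Rightarrow$ locally strictly increasing'', I would simply invoke \cref{prop:cutrank_prop2}$(iii)$: if $\cutrk(C)=|C|$, then every subset of $C$, in particular each $C\sm\{a\}$ for $a\in C$, is again full cut-rank, so $\cutrk(C\sm\{a\})=|C\sm\{a\}|=|C|-1<|C|=\cutrk(C)$. Nothing more is needed here.

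For the converse, I would prove the contrapositive: assuming $C$ is not full cut-rank, I would produce a single $a\in C$ with $\cutrk(C\sm\{a\})\gs\cutrk(C)$. The tool is the linear map $\lambda_C:2^C\to 2^{V\sm C}$, $D\mapsto Odd_G(D)\sm C$ from \cref{sec:cutrank}, whose rank is $\cutrk(C)$; by the rank-nullity relation $|C|=\cutrk(C)+\dim_{\mathbb F_2}\ker\lambda_C$, and $\ker\lambda_C=\{D\se C\mid Odd_G(D)\se C\}$ is exactly the space of generators of local sets contained in $C$. Since $C$ is not full cut-rank this kernel is non-trivial, so there is a non-empty $D_0\se C$ with $Odd_G(D_0)\se C$; I then pick any vertex $a$ in the non-empty set $L_0:=D_0\cup Odd_G(D_0)$, which lies in $C$. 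The key step is to compare $\ker\lambda_{C\sm\{a\}}$ with $\ker\lambda_C$: viewing $2^{C\sm\{a\}}$ inside $2^C$ as the subsets avoiding $a$, one checks that $\ker\lambda_{C\sm\{a\}}$ is an $\mathbb F_2$-subspace of $\ker\lambda_C$ (any $D\se C\sm\{a\}$ with $Odd_G(D)\se C\sm\{a\}$ a fortiori satisfies $Odd_G(D)\se C$), and that this inclusion is \emph{strict}, witnessed by $D_0$, since $a\in D_0\cup Odd_G(D_0)$ means $D_0$ fails the defining condition for $C\sm\{a\}$. Hence $\dim\ker\lambda_{C\sm\{a\}}\ls\dim\ker\lambda_C-1$, and applying rank-nullity to $C\sm\{a\}$ gives $\cutrk(C\sm\{a\})=(|C|-1)-\dim\ker\lambda_{C\sm\{a\}}\gs|C|-\dim\ker\lambda_C=\cutrk(C)$, contradicting the hypothesis for this $a$. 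The case $C=\emptyset$ is trivial, both sides holding vacuously.

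The only point requiring care — and what I would call the main, though mild, obstacle — is the bookkeeping around the two kernels living in different ambient cubes $2^{C\sm\{a\}}\subset 2^C$: one must state the subspace inclusion cleanly and verify that it is strict. Beyond that the argument is purely formal once the rank-nullity dictionary of \cref{sec:cutrank} is available; in particular no minimal-local-set machinery (\cref{prop:characMLS}, \cref{lemma:characinMLS}) is needed, even though this lemma is meant to be used alongside them.
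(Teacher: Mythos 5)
Your proof is correct, but the converse direction is argued quite differently from the paper. The paper proves ``$\forall a,\ \cutrk(C\sm\{a\})<\cutrk(C)$ implies $C$ full cut-rank'' by induction on $|C|$, applying submodularity to the pair $C\sm\{a\}$, $C\sm\{b\}$ to push the local-increase hypothesis down to $C\sm\{b\}$ and then invoking the induction hypothesis; the only properties of $\cutrk$ used are symmetry, linear boundedness and submodularity. You instead argue the contrapositive through the $\mathbb F_2$-linear structure: a non-full-cut-rank $C$ has a non-trivial kernel element $D_0$ of $\lambda_C$, and deleting any $a\in D_0\cup Odd_G(D_0)\se C$ strictly drops the kernel dimension (your bookkeeping of the embedding $2^{C\sm\{a\}}\hookrightarrow 2^C$ and the strictness witness $D_0$ is sound), so rank–nullity gives $\cutrk(C\sm\{a\})\gs\cutrk(C)$. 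Your route is more concrete — it exhibits an explicit offending vertex and ties the lemma directly to local sets — whereas the paper's route is deliberately abstract: \cref{lemma:localdec} feeds into \cref{thm:MLS_cover}, whose proof the paper later reuses verbatim for any symmetric, linearly bounded, submodular function (matroid connectivity, $q$-multigraph cut-rank). Your argument would still work over $\mathbb F_q$ but would not transfer to the general matroid setting of the Extensions section, so it proves the lemma as stated but trades away that generality.
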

\begin{proof}
$(\Rightarrow)$ follows from \cref{prop:cutrank_prop2}.
The proof of $(\Leftarrow)$ is by induction on the size of $C$. The property is obviously true when $|C|\ls 1$. 
Assume $|C|\gs 2$, for any $b\in C$ and any $a\in C\sm \{b\}$, using {submodularity},
\begin{equation*}
    \begin{split}
        \cutrk((C\sm \{a\}) \cup (C\sm \{b\}))\\ +~\cutrk((C\sm \{a\}) \cap (C\sm \{b\})) & \ls \cutrk(C\sm \{a\}) + \cutrk(C\sm \{b\}) \\
        \cutrk(C) + \cutrk(C\sm \{a,b\}) & <  \cutrk(C) + \cutrk(C\sm \{b\})\\
        \cutrk((C\sm \{b\})\sm\{a\}) & <  \cutrk(C\sm \{b\})  
    \end{split}
\end{equation*}
 By induction hypothesis, $C\sm \{b\}$ is full cut-rank, so $C$ is also full cut-rank as $\cutrk(C)>\cutrk(C\sm \{b\})=|C\sm \{b\}| = |C|-1$.
\end{proof}

When a set is not full cut-rank, one can find a sequence of nested subsets with a larger cut-rank, leading to a full-cut-rank subset:

\begin{lemma} \label{lemma:exists_full}
    Given a graph $G=(V,E)$,  for any $A\se V$, there exist $B_0\varsubsetneq \ldots \varsubsetneq B_{|A|-\cutrk(A)-1} \!\varsubsetneq B_{|A|-\cutrk(A)} = A$ such that $\forall i$, $|B_i| = \cutrk(A)+i$ and $\cutrk(B_i)\gs \cutrk(A)$. In particular, $B_0$ is full cut-rank.
\end{lemma}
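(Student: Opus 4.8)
The plan is to construct the chain from the top down, peeling off one carefully chosen vertex at a time while never letting the cut-rank drop below $\cutrk(A)$. Set $B_{|A|-\cutrk(A)} := A$, which trivially satisfies $|B_{|A|-\cutrk(A)}| = \cutrk(A) + (|A|-\cutrk(A))$ and $\cutrk(B_{|A|-\cutrk(A)}) = \cutrk(A) \gs \cutrk(A)$. The inductive step is: given $B_{i+1}$ with $|B_{i+1}| = \cutrk(A)+i+1$ and $\cutrk(B_{i+1}) \gs \cutrk(A)$, produce a vertex $a \in B_{i+1}$ with $\cutrk(B_{i+1} \sm \{a\}) \gs \cutrk(A)$, and set $B_i := B_{i+1}\sm\{a\}$; then $|B_i| = \cutrk(A)+i$, the cut-rank invariant is preserved, and $B_i \varsubsetneq B_{i+1}$ since the two differ by a single element. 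Iterating down to $i=0$ yields the full chain (when $A$ is already full cut-rank, $|A|-\cutrk(A)=0$ and there is nothing to do).

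The heart of the argument is finding the vertex $a$ at each step, and I would do this by contradiction. Suppose $\cutrk(B_{i+1}\sm\{a\}) < \cutrk(A) \ls \cutrk(B_{i+1})$ for every $a \in B_{i+1}$. Then in particular $\cutrk(B_{i+1}\sm\{a\}) < \cutrk(B_{i+1})$ for all $a \in B_{i+1}$, so by \cref{lemma:localdec} the set $B_{i+1}$ is full cut-rank, i.e.\ $\cutrk(B_{i+1}) = |B_{i+1}| = \cutrk(A)+i+1$. But then, picking any $a \in B_{i+1}$, \cref{prop:cutrank_prop2}$(iii)$ (every subset of a full-cut-rank set is full cut-rank) forces $\cutrk(B_{i+1}\sm\{a\}) = |B_{i+1}\sm\{a\}| = \cutrk(A)+i \gs \cutrk(A)$, contradicting the assumption. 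Hence the required $a$ exists.

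Finally, $B_0$ satisfies $|B_0| = \cutrk(A)$ and $\cutrk(B_0) \gs \cutrk(A)$; combined with linear boundedness $\cutrk(B_0) \ls |B_0| = \cutrk(A)$ from \cref{prop:cutrank_prop}, this gives $\cutrk(B_0) = |B_0|$, so $B_0$ is full cut-rank, as claimed.

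The only subtle point I anticipate is the case where the running set $B_{i+1}$ happens to be full cut-rank: there, removing \emph{any} vertex necessarily lowers the cut-rank, so one must be sure the drop is by exactly one — which is precisely what \cref{prop:cutrank_prop2}$(iii)$ supplies — keeping us at level $\cutrk(A)$ or above. Everything else is just bookkeeping on set sizes.
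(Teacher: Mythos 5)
Your proof is correct and follows essentially the same route as the paper's: both reduce to showing that from any $B$ with $\cutrk(B)\gs\cutrk(A)$ and $|B|>\cutrk(A)$ one can remove a vertex without dropping below $\cutrk(A)$, handled via \cref{lemma:localdec} when $B$ is not full cut-rank and via \cref{prop:cutrank_prop2}$(iii)$ when it is. Your phrasing as a single contradiction argument is just a repackaging of the paper's two-case split, so nothing substantive differs.
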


\begin{proof}
    Note this makes sense because of the {linear boundedness} of $\cutrk$, which ensures that $|A|-\cutrk(A) \gs 0$. It is enough to show that for any $B\se A$, if $\cutrk(B)\gs \cutrk(A)$ and $|B| >\cutrk(A)$ then  $\exists a\in B$ such that $\cutrk(B\sm \{a\})\gs \cutrk(A)$. There are two cases: $(i)$ if $B$ is full cut-rank then according to \cref{prop:cutrank_prop2} for any $a\in B$, $B\sm\{a\}$ is also full cut-rank so $\cutrk(B\sm\{a\}) = |B|-1\gs \cutrk(A)$; $(ii)$ if $B$ is not full cut-rank then according to \cref{lemma:localdec}, $\exists a\in B$ such that $\cutrk(B\sm \{a\}) \gs \cutrk(B)$, so $\cutrk(B\sm \{a\}) \gs \cutrk(A)$.
\end{proof}

An interesting consequence of \cref{lemma:exists_full} is that for any full-cut-rank set, there exists a disjoint full-cut-rank set of same cardinality.

\begin{corollary}
    \label{cor:full_avoid}Given a graph $G=(V,E)$, 
    for any full-cut-rank set $A \se V$, $\exists B \in {V \sm A \choose |A|}$ that is full cut-rank\footnote{Given a set $K$ and an integer $k$, ${K \choose k}$ refers to $\{B \se K ~|~|B|=k\}$.}.
\end{corollary}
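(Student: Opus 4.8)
The statement follows almost immediately from \cref{lemma:exists_full} once we use the symmetry of the cut-rank function to move from $A$ to its complement. The plan is: since $A$ is full cut-rank we have $\cutrk(A) = |A|$, and by symmetry $\cutrk(V\sm A) = \cutrk(A) = |A|$. Applying \cref{lemma:exists_full} to the set $V\sm A$ then directly produces a full-cut-rank subset of $V\sm A$ whose cardinality is exactly $\cutrk(V\sm A) = |A|$.

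In more detail, first I would recall from \cref{prop:cutrank_prop} that $\cutrk(V\sm A) = \cutrk(A)$, so $\cutrk(V\sm A) = |A|$ because $A$ is full cut-rank. Then I would invoke \cref{lemma:exists_full} with the set $V\sm A$ in place of $A$: it yields sets $B_0 \varsubsetneq \ldots \varsubsetneq B_{|V\sm A| - \cutrk(V\sm A)} = V\sm A$ with $|B_i| = \cutrk(V\sm A) + i$ and $\cutrk(B_i) \gs \cutrk(V\sm A)$ for every $i$. Taking $i = 0$, the set $B_0$ satisfies $B_0 \se V\sm A$, $|B_0| = \cutrk(V\sm A) = |A|$, and $\cutrk(B_0) \gs \cutrk(V\sm A) = |B_0|$, which combined with linear boundedness ($\cutrk(B_0) \ls |B_0|$) gives $\cutrk(B_0) = |B_0|$, i.e.\ $B_0$ is full cut-rank. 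Setting $B \defeq B_0$ gives the desired element of ${V\sm A \choose |A|}$.

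There is essentially no obstacle here: the real work is encapsulated in \cref{lemma:exists_full}, and the only additional ingredient is the symmetry of $\cutrk$. One might worry about the degenerate case $A = V$, but then $\cutrk(A) = \cutrk(V) = 0$ by \cref{prop:cutrank_prop2}, so $A$ being full cut-rank forces $|A| = 0$ and $B = \emptyset$ works trivially; likewise the case $A = \emptyset$ is immediate with $B = \emptyset$. Hence the statement holds in all cases.
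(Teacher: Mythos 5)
Your proof is correct and follows exactly the paper's argument: apply symmetry of $\cutrk$ to get $\cutrk(V\sm A) = |A|$, then invoke \cref{lemma:exists_full} on $V\sm A$ and take $B_0$. The extra remarks on the degenerate cases are harmless but not needed.
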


\begin{proof}
Let $A \subseteq V$ be a full-cut-rank set, i.e. $\cutrk(A) = |A|$. Using {symmetry}, $\cutrk(V \sm A) = |A|$. Then applying \cref{lemma:exists_full} on $V \sm A$ yields the result.
\end{proof}

We are now ready to conclude our reasoning and prove that any vertex $a$ of an arbitrary graph $G$ is contained in some minimal local set.
Let $r$ be the maximal cardinality of a full-cut-rank set in $V$, i.e. $r=\max\{|A|~|~\cutrk(A)=|A|\}$, and let $A\se V$ be a full-cut-rank set of size $r$. For any vertex $a\in V$:
\begin{itemize}
\item if $a\notin A$ then by maximality of $r$, $A\cup\{a\}$ is not full cut-rank, thus by \cref{lemma:characinMLS}, $a$ is contained in a minimal local set;
\item if $a\in A$, then according to \cref{cor:full_avoid}, there exists a full-cut-rank set $B\in {V \sm A \choose r}$. By maximality of $r$, $B\cup\{a\}$ is not full cut-rank, so, according to \cref{lemma:characinMLS}, $a$ is contained in a minimal local set. 
\end{itemize}

This concludes the proof of \cref{thm:MLS_cover}.

\subsection{An efficient algorithm to find an MLS cover} \label{sec:algo_mls}

In this section, we turn \cref{thm:MLS_cover} into a polynomial-time algorithm that generates an MLS cover for any given input graph.

\begin{theorem}\label{thm:mls_cover_algo}
    There exists an algorithm that finds a MLS cover in $O(n^4)$ evaluations of the cut-rank function, i.e. with runtime $O(n^{6.38})$, where $n$ is the order of the graph.
\end{theorem}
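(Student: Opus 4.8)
The plan is to turn the existence proof of \cref{thm:MLS_cover} into an algorithm: for every vertex $a$ we produce a minimal local set containing $a$, and the union of these $n$ sets is the desired MLS cover. The proof of \cref{thm:MLS_cover} rests on three constructive facts. First, by \cref{lemma:characinMLS}, $a$ lies in a minimal local set once we have a \emph{witness pair} $(A,a)$, i.e.\ a full-cut-rank set $A$ with $a\notin A$ such that $A\cup\{a\}$ is not full cut-rank. Second, from such a pair one obtains a minimal local set by passing to an inclusion-minimal $C\se A$ for which $C\cup\{a\}$ is still not full cut-rank: then every proper subset of $C\cup\{a\}$ is full cut-rank while $C\cup\{a\}$ is not, so $C\cup\{a\}$ is a minimal local set by the cut-rank characterisation \cref{prop:characMLS}. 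Third, a witness pair always exists, and it can be reached by repeatedly enlarging a full-cut-rank set that contains $a$ and, via \cref{cor:full_avoid} together with the constructive \cref{lemma:exists_full}, producing a disjoint full-cut-rank set of the same size.

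First I would fix $a$ and build a witness pair. Start from $\emptyset$ and greedily scan $V\sm\{a\}$, inserting a vertex $v$ into the current set $C$ whenever $\cutrk(C\cup\{v\})=|C|+1$; by heredity of full-cut-rank sets (\cref{prop:cutrank_prop2}) this yields a maximal full-cut-rank $A^{(0)}\se V\sm\{a\}$ after $O(n)$ cut-rank evaluations. If $A^{(0)}\cup\{a\}$ is not full cut-rank, $(A^{(0)},a)$ is a witness pair. Otherwise $A_1\defeq A^{(0)}\cup\{a\}$ is full cut-rank with $a\in A_1$; by symmetry $\cutrk(V\sm A_1)=|A_1|$, so applying the deletion procedure from the proof of \cref{lemma:exists_full} to $V\sm A_1$ --- repeatedly discard some vertex while keeping the cut-rank $\gs|A_1|$, trying all candidates at each step --- returns a full-cut-rank set $B^{(1)}\se V\sm A_1$ of size $|A_1|$ after $O(n^2)$ evaluations. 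If $B^{(1)}\cup\{a\}$ is not full cut-rank, $(B^{(1)},a)$ is a witness pair; otherwise put $A_2\defeq B^{(1)}\cup\{a\}$ and iterate. Each iteration that fails to produce a witness enlarges a full-cut-rank set containing $a$ by exactly one vertex, and full-cut-rank sets have size at most $\lfloor n/2\rfloor$ (\cref{prop:cutrank_prop2}), so the loop terminates with a witness pair $(A,a)$ after $O(n)$ iterations, hence after $O(n^3)$ cut-rank evaluations for this $a$.

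Given a witness pair $(A,a)$, I would shrink $A$ in one pass: process the vertices of $A$ in any order, deleting $v$ from the current set $C$ whenever $(C\sm\{v\})\cup\{a\}$ is still not full cut-rank (one evaluation each, $O(n)$ in total). Since the collection of subsets $X\se A$ with $X\cup\{a\}$ not full cut-rank is closed under supersets within $A$ (full-cut-rank-ness being hereditary), the surviving $C$ satisfies: $(C\sm\{v\})\cup\{a\}$ is full cut-rank for every $v\in C$, every subset of $C$ is full cut-rank, and $C\cup\{a\}$ is not full cut-rank; hence $C\cup\{a\}$ is a minimal local set containing $a$ by \cref{prop:characMLS}. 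Isolated vertices need no special treatment: the procedure returns $C\cup\{a\}=\{a\}$, which is a minimal local set. Over all $n$ vertices this uses $O(n^4)$ cut-rank evaluations; and since a single cut-rank evaluation is the $\mathbb{F}_2$-rank of an at most $n\times n$ matrix, computable in $O(n^{2.38})$ time by fast matrix multiplication, the overall running time is $O(n^4\cdot n^{2.38})=O(n^{6.38})$.

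The step needing the most care is the correctness and termination of the enlargement loop without ever computing a globally maximum full-cut-rank set: one must verify that each failed round really does yield a strictly larger full-cut-rank set still containing $a$ (so that the bound $\lfloor n/2\rfloor$ forces termination in a witness), and that at each round the required disjoint full-cut-rank set of prescribed cardinality can be exhibited algorithmically from \cref{lemma:exists_full}, rather than merely known to exist from \cref{cor:full_avoid}. The remaining work --- the greedy constructions, the one-pass shrink, and the tally of cut-rank calls --- is routine.
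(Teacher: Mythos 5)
Your proposal is correct and follows essentially the same route as the paper: for each vertex $a$ you build a full-cut-rank set $A$ with $A\cup\{a\}$ not full cut-rank by repeatedly invoking the constructive deletion procedure behind \cref{lemma:exists_full} on the complement (terminating via the $\lfloor n/2\rfloor$ bound of \cref{prop:cutrank_prop2}), then shrink to an inclusion-minimal such set and apply \cref{prop:characMLS}, with the same $O(n^3)+O(n)$ cut-rank-evaluation count per vertex and the same $O(n^{2.38})$ cost per evaluation. The only (immaterial) deviation is that you seed the enlargement loop with a greedily built maximal full-cut-rank subset of $V\sm\{a\}$ rather than starting from $\emptyset$.
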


\subparagraph{Description of the algorithm.}
%%%%%
The core of the algorithm consists, given a graph $G$ and a vertex $a$, in producing a minimal local set that contains $a$. One can then iterate on uncovered vertices to produce an MLS cover of the graph. 

Given a vertex $a$, a minimal local set that contains $a$ is produced as follows, using essentially two stages:

($i$) First, the algorithm produces a full-cut-rank set $A$ such that $A \cup\{a\}$ is not full cut-rank. The procedure consists in starting with an empty set $A$ -- which is full cut-rank -- and then increasing the size of $A$, in a full-cut-rank preserving manner, until $A\cup \{a\}$ is not full cut-rank. To increase the size of $A$, notice that  if $A\cup \{a\}$ is full cut-rank then,  
according to \cref{lemma:exists_full}, there exists a disjoint set $A'$ -- so in particular $a\notin A'$ -- such that $|A'|=|A|+1$ and $A'$ is full cut-rank. The proof of \cref{lemma:exists_full} is constructive: starting from $C= V\setminus (A\cup \{a\})$ some vertices are removed  one-by-one  from $C$ to get the set $A'$. To produce such a set $A'$ from $C$, there are $O(n)$ vertices to remove, at each step there are $O(n)$ candidates, and deciding whether a vertex can be removed  costs a constant number  of evaluations of the cut-rank function. 

($ii$) Given a full-cut-rank set $A$ such that $A\cup \{a\}$ is not full cut-rank, \cref{lemma:characinMLS} guarantees the existence of a minimal local set that contains $a$. Notice that the proof of \cref{lemma:characinMLS} consists in finding, among all  subsets $B$ of $A$ such that $B$ is full cut-rank but $B\cup \{a\}$ is not, one that is minimal by inclusion. To do so, one can start from $A$ and remove one-by-one vertices from $A$ until reaching a set that is minimal by inclusion. It takes $O(n)$ steps\footnote{We only need to consider each vertex at most once. Indeed, at each step, if $A \sm \{b\}$ is full cut-rank, for any $B \se A$, $B \sm \{b\}$ is also full cut-rank by \cref{prop:cutrank_prop2}.}, each step involving a constant number of evaluations of the cut-rank function. 

\subparagraph{Complexity.} Stage $(i)$ uses $O(n^3)$ evaluations of the cut-rank function, and stage $(ii)$ only $O(n)$, so the overall algorithm uses $O(n^4)$ evaluations of the cut-rank function. Using Gaussian elimination, the cut-rank can be computed in $O(n^\omega)$ field operations \cite{Bunch1974,Ibarra198245} where $\omega < 2.38$.

A pseudocode of the algorithm is provided in Figure \ref{fig:pseudo_code_mls}.

\begin{figure}[h!]
\begin{algorithm}[H]
    \caption{Construct an MLS cover}\label{alg:MLS_cover}
    \SetKwInOut{Input}{Input}
    \SetKwInOut{Output}{Output}
    \Input{A graph $G=(V,E)$.}
    \Output{An MLS Cover $\mathcal{M}$ for $G$.}
    $\mathcal{M} \gets \{\}$\;
    \While{some vertex $a \in V$ is not covered, i.e. not in a minimal local set in $\mathcal{M}$}{
        $A \gets \emptyset$\;    
        \While{$A \cup \{a\}$ is full cut-rank}{
            $B \gets V \sm (A\cup\{a\})$\;
            \While{$|B| > |A|+1$}{
                Find $b \in B$ such that $\cutrk(B \sm \{b\}) \gs |A|+1$\;
                $B \gets B \sm \{b\}$\;
            }
            $A \gets B$\;
        }
        \While{$\exists b \in A$ such that $(A \sm \{b\})\cup\{a\}$ is not full cut-rank}{
            $A \gets A \sm \{b\}$\;
        }
        $\mathcal{M} \gets \mathcal{M}\cup\{A \cup \{a\}\}$\;

    }
\end{algorithm}
\caption{Pseudocode of the efficient algorithm to construct an MLS cover.}
\label{fig:pseudo_code_mls}
\end{figure}

\section{Extensions}

It has to be noted that the existence of an MLS cover and the fact that one can be computed efficiently remains true for any notion equivalent to minimal local sets defined with a function that shares some properties of the cut-rank function. Precisely, given a set $V$, and a function $\mu:2^V\to \mathbb{N}$ that satisfies symmetry, linear boundedness, and submodularity, define a $\mu$-minimal local set $L\se V$ as a set that satisfies $\mu(L) \ls |L| - 1$ and $\forall B \varsubsetneq L$, $\mu(B) = |B|$. Then $V$ is covered by its $\mu$-minimal local sets, and a "$\mu$-MLS cover" can be computed efficiently (given that the values of $\mu$ can be computed efficiently). Indeed, the proof of \cref{thm:MLS_cover} (well as the proof of \cref{prop:cutrank_prop2}, which is used in the proof of the theorem) uses only the fact that $\cutrk:2^V\to \mathbb{N}$ satisfies symmetry, linear boundedness, and submodularity.

An example of a function satisfying such properties is the connectivity function of a matroid. Given a matroid $(E,r)$ where $E$ is the ground set and $r$ is the rank function, the connectivity function is the function $\lambda$ that maps any set $X \subseteq E$ to $\lambda(X) = r(X)+r(E\sm X)-r(E)$. Another example is the generalization of the cut-rank function to multigraphs, as we review below.

While (simple, undirected) graphs correspond to quantum qubit graphs states, their natural higher dimension extension are multigraphs, which correspond to quantum qudit graph states \cite{Beigi2006,Ketkar2006}. A quick introduction to multigraphs and qudit graph states can be found in \cite{marin2013}. 

\begin{definition}[$q$-multigraphs]
    Given a prime number $q$, a $q$-multigraph $G$ is a pair $(V, \Gamma)$ where $V$ is the set of vertices and $\Gamma: V \times V \longrightarrow \mathbb{F}_q$ is the adjacency matrix of $G$: for any $u,v \in G$, $\Gamma(u,v)$ is the multiplicity of the edge $(u,v)$ in $G$.
\end{definition}

Here we consider undirected simple $q$-multigraphs, i.e. $\forall u,v \in V$, $\Gamma(u,v) = \Gamma(v,u)$ and $\Gamma(u,u)=0$.
The cut-rank function on $q$-multigraphs is defined similar to graphs (which correspond to $2$-multigraphs): 

\begin{definition}[Cut-rank function on $q$-multigraphs]
    Let $G=(V, \Gamma)$ be a $q$-multigraph. For $A \se V$, let the cut-matrix $\Gamma_A = ((\Gamma_A)_{ab}: a \in A\text{, } b \in V\sm A)$ be the matrix with coefficients in $\mathbb{F}_q$ such that $\Gamma_{ab} = \Gamma(a,b)$. The cut-rank function of $G$ is defined as
    \begin{align*}
        \cutrk\colon 2^V & \longrightarrow \mathbb{N}\\
        A &\longmapsto \textbf{rank}_{\mathbb{F}_q}(\Gamma_A)
    \end{align*}
\end{definition}

It is not hard to see that the cut-rank function on $q$-multigraphs satisfies the properties of \cref{prop:cutrank_prop}, namely {symmetry}, {linear boundedness}, and {submodularity}. 
To extend minimal local sets to $q$-multigraphs, the easier is to use their characterization in terms of the cut-rank function (see \cref{prop:characMLS}).

\begin{definition}[Minimal local sets and MLS covers on $q$-multigraphs]
    Given $G=(V, \Gamma)$ a $q$-multigraph:
    \begin{itemize}
        \item    $A \se V$ is a \emph{minimal local} set if $A$ is a non-full-cut-rank set whose proper subsets are all full cut-rank, i.e. $\forall a \in A, \cutrk(A) \ls \cutrk(A \sm \{a\}) = |A|-1$;
        \item    $\mathcal M\se 2^V$ is an \emph{MLS cover} if $\forall L\in \mathcal M$, $L$ is a minimal local set of $G$, and $\forall u\in V$, $\exists L \in \mathcal M$ such that $u\in L$.
    \end{itemize}
\end{definition}

Local complementation naturally extends to $q$-multigraphs (see \cite{marin2013} for a definition), and local complementation on a $q$-multigraph leaves the cut-rank function invariant \cite{Kante2007}. This implies that again, in the case of $q$-multigraphs, minimal local sets are invariant under local complementation.
The existence of a MLS cover also extends to $q$-multigraphs.

\begin{theorem}
    Any q-multigraph has an MLS cover. 
\end{theorem}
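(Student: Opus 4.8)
The plan is to observe that the proof of \cref{thm:MLS_cover} never used anything about the cut-rank function of a graph beyond the three structural properties recorded in \cref{prop:cutrank_prop} — symmetry, linear boundedness, submodularity — together with the fact that $\cutrk$ takes values in $\mathbb{N}$. So I would first check that these three properties hold for the cut-rank function on $q$-multigraphs. This is exactly as in the graph case: symmetry holds because a matrix and its transpose have the same rank over $\mathbb{F}_q$, and $\Gamma_{V\sm A}$ is, up to reordering rows and columns, the transpose of $\Gamma_A$; linear boundedness holds because a matrix with $|A|$ rows has rank at most $|A|$; and submodularity follows from the standard rank inequality for block matrices, which holds over any field, in particular $\mathbb{F}_q$.

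Then I would replay the chain of lemmas, with $\cutrk$ now meaning the $q$-multigraph cut-rank. First, \cref{prop:cutrank_prop2} goes through verbatim, as its proof only invokes the three properties and $\mathbb{N}$-valuedness. The characterisation of \cref{prop:characMLS} is now the \emph{definition} of a minimal local set on a $q$-multigraph, so there is nothing to prove there; \cref{lemma:characinMLS} carries over since its proof uses only \cref{prop:cutrank_prop2} and that definition; \cref{lemma:localdec} and \cref{lemma:exists_full} likewise use only submodularity, linear boundedness and $\mathbb{N}$-valuedness; and \cref{cor:full_avoid} follows from \cref{lemma:exists_full} and symmetry.

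Finally I would run the concluding argument of \cref{sec:MLS_cover_proof}: let $r = \max\{|A| : \cutrk(A) = |A|\}$ and fix a full-cut-rank set $A \se V$ of size $r$. For $a \notin A$, the set $A\cup\{a\}$ is not full cut-rank by maximality of $r$, so $a$ lies in a minimal local set by \cref{lemma:characinMLS}; for $a\in A$, \cref{cor:full_avoid} gives a full-cut-rank $B\se V\sm A$ with $|B| = r$, and again by maximality $B\cup\{a\}$ is not full cut-rank, so $a$ lies in a minimal local set. Selecting one minimal local set containing each vertex yields an MLS cover. Since nothing genuinely new is required, there is no real obstacle; the only points worth spelling out rather than merely citing are the verification of submodularity over $\mathbb{F}_q$ (as opposed to $\mathbb{F}_2$) and the transpose argument for symmetry. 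Indeed, the cleanest writeup simply invokes the \emph{Extensions} remark: the $q$-multigraph $\cutrk$ is a function $2^V\to\mathbb{N}$ satisfying symmetry, linear boundedness and submodularity, and the proof of \cref{thm:MLS_cover} establishes that any such function yields a corresponding MLS cover.
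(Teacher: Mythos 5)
Your proposal is correct and follows exactly the paper's route: the paper proves the $q$-multigraph case by observing that the proof of \cref{thm:MLS_cover} (and of \cref{prop:cutrank_prop2}) only uses symmetry, linear boundedness, submodularity and $\mathbb{N}$-valuedness of $\cutrk$, and that the $\mathbb{F}_q$ cut-rank satisfies these. Your extra care in verifying the three properties over $\mathbb{F}_q$ (transpose for symmetry, row bound, block-rank inequality for submodularity) is exactly the part the paper leaves as "not hard to see," so nothing is missing.
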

\chapter{A graphical characterization of LU-equivalence}

\label{chap:glc}

In this chapter, we derive a generalization of local complementation, called $r$-local complementation. $r$-local complementation is parametrized by an integer $r>0$, called the level of the generalized local complementation. We show that $r$-local complementation exactly captures the LU-equivalence of graphs, the same way local complementation captures the LC-equivalence of graphs.  To prove it, we use minimal local sets to give types to each vertex of any graph, then use these types to define a standard form for graphs. The LU-equivalence of graphs in standard form is then easier to study. $r$-local complementation helps better understand the full picture of the gap between LC- and LU-equivalence: we introduce intermediate equivalences, called LC$_r$-equivalences, related to the levels of the generalized complementation, and show that the LC$_r$-equivalences form an infinite strict hierarchy. More precisely, for any $r \gs 2$, there exists a pair of graphs that are LC$_r$-equivalent but not LC$_{r-1}$-equivalent, generalizing the known counterexamples to the LU=LC conjecture.

\section{\texorpdfstring{LC$_r$}{LCr}-equivalence}

As an intermediate local equivalence between LC- and LU-equivalence, we define the LC$_r$-equivalence, where $r>0$ is an integer, as a generalization of LC-equivalence.

\begin{definition}
    Two quantum states $\ket{\psi_1}$ and $\ket{\psi_2}$ are LC$_r$-equivalent if there exist single-qubit unitaries gates $U_u$, that are a product of finitely many Hadamard gates $H$ and Z-rotations $Z(\pi/2^r)$, such that $\ket{\psi_2} = e^{i\phi}\bigotimes_{u\in V}U_u \ket{\psi_1}$. 
\end{definition}

For convenience, we lift the definition of LC$_r$-equivalence to graphs: two graphs are said LC$_r$-equivalent if the two corresponding graph states are LC$_r$-equivalent. We use the notation $G_1 =_{LC_r} G_2$.

Note that LC$_1$-equivalence is nothing but LC-equivalence. LC$_r$-equivalence is related to the levels of the so-called single-qubit Clifford hierarchy. The single-qubit Clifford hierarchy is defined inductively: the set $\mathcal C_1$ is the Pauli group $\{\pm 1, \pm i\} \times \{I,X,Y,Z\}$, and $\mathcal C_{k+1} = \{U ~|~ U \mathcal C_1 U^\dagger \se \mathcal C_k\}$. Single-qubit unitary gates in the level $r+1$ (where $r>0$)  of the Clifford hierarchy are actually exactly those that can be written as a product $e^{i\phi} C ~ Z\!\left(\frac{m\pi}{2^{r}}\right)  C'$, where $C, C'$ are Clifford gates and $m$ is an integer (see for example \cite{Zeng08,Cui2016}).

\begin{proposition}\label{prop:lcr_equals_clifford_hierarchy}
    $G_1 =_{LC_r} G_2$
    if and only if there exist single-qubit unitaries $U_u \in \mathcal C_{r+1}$ in the level $r+1$ of the Clifford hierarchy, such that $\ket{G_2} = e^{i\phi}\bigotimes_{u\in V}U_u \ket{G_1}$.
\end{proposition}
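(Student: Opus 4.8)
The plan is to bridge the two descriptions using the decomposition of \cref{prop:czc} together with the cited characterisation of level $r+1$ of the single-qubit Clifford hierarchy, namely $\mathcal C_{r+1}=\{e^{i\phi}\,C\,Z(m\pi/2^r)\,C'\mid C,C'\in\mathcal C,\ m\in\mathbb Z\}$. For the ``if'' direction I would start from $\ket{G_2}=e^{i\phi}\bigotimes_{u\in V}U_u\ket{G_1}$ with each $U_u\in\mathcal C_{r+1}$ and write $U_u=e^{i\phi_u}\,C_u\,Z(m_u\pi/2^r)\,C'_u$. Since $\mathcal C=\langle H,Z(\pi/2)\rangle$ and $Z(\pi/2)=Z(\pi/2^r)^{2^{r-1}}$, every single-qubit Clifford is a finite product of $H$ and $Z(\pi/2^r)$; and so is $Z(m_u\pi/2^r)=Z(\pi/2^r)^{m_u\bmod 2^{r+1}}$. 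Hence each $U_u$ equals such a product up to the phase $e^{i\phi_u}$, and absorbing all these phases together with $e^{i\phi}$ into one global phase yields $G_1=_{LC_r}G_2$.

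For the ``only if'' direction I would start from $\ket{G_2}=e^{i\phi}\bigotimes_u U_u\ket{G_1}$ with each $U_u$ a finite product of $H$ and $Z(\pi/2^r)$. Then $G_1=_{LU}G_2$, so after absorbing $e^{i\phi}$ into one factor \cref{prop:czc} applies and gives, for each $u$, Cliffords $C_u,C'_u$ and an angle $\theta_u$ with $U_u=e^{i\psi_u}\,C_u\,Z(\theta_u)\,C'_u$. By the cited characterisation it then suffices to prove $\theta_u\in\frac{\pi}{2^r}\mathbb Z$, for then $U_u\in\mathcal C_{r+1}$. (When $r=1$ there is nothing to do, as $\langle H,Z(\pi/2)\rangle=\mathcal C=\mathcal C_2$.)

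To prove $\theta_u\in\frac{\pi}{2^r}\mathbb Z$ for $r\ge 2$, I would use an arithmetic argument. Every entry of $U_u$, and hence of $U_uX'U_u^\dagger$ for any Pauli $X'$, lies in $\mathcal R\defeq\mathbb Z[e^{i\pi/2^r},1/\sqrt2]$, since this holds for $H$ and $Z(\pi/2^r)$ and is preserved by products and by conjugate transposition. Taking $X'\defeq {C'_u}^\dagger X\,C'_u$, one has $U_uX'U_u^\dagger=C_u\big(Z(\theta_u)XZ(-\theta_u)\big)C_u^\dagger=C_u\big(\cos\theta_u\,X+\sin\theta_u\,Y\big)C_u^\dagger$, whence $\cos\theta_u=\tfrac12\mathrm{tr}\!\big((U_uX'U_u^\dagger)(C_uXC_u^\dagger)\big)$ and $\sin\theta_u=\tfrac12\mathrm{tr}\!\big((U_uX'U_u^\dagger)(C_uYC_u^\dagger)\big)$, so both $\cos\theta_u$ and $\sin\theta_u$, and therefore $e^{i\theta_u}=\cos\theta_u+i\sin\theta_u$, lie in $\mathcal R$. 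Now $e^{i\pi/2^r}$ is a primitive $2^{r+1}$-th root of unity, so $\mathcal R$ is the localisation at $2$ of the cyclotomic ring $\mathbb Z[e^{i\pi/2^r}]$, whose fraction field is a CM field; as $e^{i\theta_u}\cdot\overline{e^{i\theta_u}}=1$, the element $e^{i\theta_u}$ is in fact a unit of $\mathbb Z[e^{i\pi/2^r}]$ of absolute value $1$ at every archimedean place, hence a root of unity, hence a $2^{r+1}$-th root of unity, i.e.\ $\theta_u\in\frac{\pi}{2^r}\mathbb Z$.

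The hard part will be exactly this last step, the passage from ``$U_u$ is a word in $H$ and $Z(\pi/2^r)$'' to $\theta_u\in\frac{\pi}{2^r}\mathbb Z$. One cannot argue naively, since the single-qubit Clifford hierarchy is not closed under products above level $2$ --- for instance $Z(\pi/4)$ and $X(\pi/4)=HZ(\pi/4)H$ both lie in $\mathcal C_3$ but their product does not --- so it is really the graph-state hypothesis, through \cref{prop:czc}, that forces each $U_u$ into the normal form $C\,Z(\theta)\,C'$; only then does the number-theoretic fact that the modulus-$1$ elements of $\mathbb Z[e^{i\pi/2^r},1/\sqrt2]$ are precisely the $2^{r+1}$-th roots of unity pin the angle down.
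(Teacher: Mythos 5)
Your proof is correct and follows the same skeleton as the paper's: both directions hinge on \cref{prop:czc} to put each $U_u$ in the normal form $e^{i\psi_u}C_uZ(\theta_u)C'_u$, on the generation of the Clifford group by $H$ and $Z(\pi/2)=Z(\pi/2^r)^{2^{r-1}}$, and on the quantization $\theta_u\in\frac{\pi}{2^r}\mathbb Z$. The one substantive difference is that the paper simply cites the exact-synthesis literature (\cite{amy2024exact}) for the statement that a word in $H$ and $Z(\pi/2^r)$ equal to $Z(\theta)$ up to Clifford and phase forces $\theta=0\bmod \pi/2^r$, whereas you inline a self-contained proof of it via ring membership in $\mathbb Z[e^{i\pi/2^r},1/\sqrt2]$. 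Your argument is sound, and your trace extraction of $\cos\theta_u$ and $\sin\theta_u$ is a clean way to reduce to a statement about a single element $e^{i\theta_u}$ of the ring; this is essentially the same mechanism the cited reference uses, so you buy self-containedness at the cost of some algebraic number theory. The only step you should spell out more carefully is the claim that $e^{i\theta_u}$ is a \emph{unit of} $\mathbb Z[e^{i\pi/2^r}]$ rather than merely a modulus-one element of the localization: this needs that $2$ is totally ramified in $\mathbb Q(\zeta_{2^{r+1}})$, so there is a unique prime $\mathfrak p$ above $2$, that $\mathfrak p$ is fixed by complex conjugation, and hence that $\alpha\bar\alpha=1$ forces $v_{\mathfrak p}(\alpha)=0$; only then do Kronecker's theorem and the CM property apply to conclude that $e^{i\theta_u}$ is a $2^{r+1}$-th root of unity. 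With that detail filled in, the proof is complete.
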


\begin{proof}
    Suppose $G_1 =_{LC_r} G_2$. 
    According to \cref{prop:czc}, $\ket{G_2} = e^{i\phi}\bigotimes_{u\in V}C_{u} Z(\theta_u) C'_{u} \ket{G_1}$, where $C_{u}, C'_{u}$ are single-qubit Clifford gates, and every $C_{u} Z(\theta_u) C'_{u}$ is a product of finitely many $H$ and $Z(\pi/2^r)$ up to global phase. $H$ and $Z(\pi/2)$ generate single-qubit Clifford gates hence,  
    $Z\left(\theta_u\right)$ itself is a product of finitely many $H$ and $Z(\pi/2^r)$ up to global phase.
    This occurs only when $\theta_u = 0 \mod \pi/2^{r}$ (see for example \cite{amy2024exact}). Conversely, it is clear that single-qubit unitary gates in the level $r+1$ of the Clifford hierarchy are a product of finitely many $H$ and $Z(\pi/2^r)$ up to global phase.    
\end{proof}

\section{Weighted hypergraph states}

Weighted hypergraph states are not studied directly in this thesis, however they provide a natural framework to understand the forthcoming definition of the generalized local complementation over graph states.

While graph states are in one-to-one correspondence with graphs, the so-called weighted hypergraph states are in one-to-one correspondence with weighted hypergraphs. Thus, we begin by giving the definition of weighted hypergraphs. In our setting, a weighted hypergraph is nothing but a hypergraph with weights in $[0, 2\pi)$ on the hyperedges. Formally, a hypergraph H is composed of a set $V$ of vertices, and of a function $E\colon 2^V  \longrightarrow [0, 2\pi)$ giving the weights of the hyperedges. By convention, we set $E(\emptyset)=0$. Under this formalism, a graph is associated with a hypergraph $(V,E)$ where $E(K)=0$ or $\pi$ when $|K|=2$ and $E(K)=0$ when $|K|\neq 2$. 
Weighted graph states were originally invented as the so-called LME (locally maximally entangleable) states.  
More precisely, every weighted hypergraph state is LME, and a state is LME if and only if it is LU-equivalent to a weighted hypergraph state.

The construction of a graph state starts with a tensor product of qubits in the state $\ket +$ and is followed by the application of some $CZ$ gates. Similarly, the construction of a weighted hypergraph state also starts with a tensor product of qubits in the state $\ket +$. It is followed by the applications of a generalization of the $CZ$ gates. While $CZ_{uv}= I_{uv} - 2\ket{11}\bra{11} $, we introduce the multi-controlled phase gate $CZ_K(\theta) = I_K - (1 - e^{i\theta})\ket{11...1}\bra{11...1}$ applied on some set of vertices $K \se V$. Like for $CZ$, the $CZ(\theta)$ gates are symmetric and commute. Notice that if $K=\{u\}$, then $CZ_K(\theta) = Z(\theta)_u$.

\begin{definition}
    Let $H = (V,E)$ be a weighted hypergraph of order $n$. The corresponding weighted hypergraph state $\ket H$ is the $n$-qubit state: $$\ket H = \left(\prod_{\emptyset \neq K \se V} CZ_{K}(E(K))\right) \ket{+}_V$$
\end{definition}

\begin{proposition} \label{prop:whs_angles}
    Weighted hypergraph states are exactly those quantum states that can be written as:
    $$\ket H = \frac 1{\sqrt {2^n}}\sum_{x\in \{0,1\}^n}e^{i\theta_x}\ket x$$ where $\theta_x \in [0, 2\pi)$ and $\theta_0 = 0$. The angles are given by the equation $\theta_x \! =\! \sum_{K \se \{u\in V \!~|~\! x_u = 1\}}\! E(K) \mod 2\pi$. Conversely, the weights can be recovered from the angles $\theta_x$.
\end{proposition}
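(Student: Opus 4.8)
The plan is to prove both directions by exploiting that every multi-controlled phase gate $CZ_K(\theta)$ is diagonal in the computational basis, so one can simply track the coefficient of each basis vector $\ket x$. For the forward direction, I would start from $\ket{+}_V = \frac1{\sqrt{2^n}}\sum_{x\in\{0,1\}^n}\ket x$ and observe that $CZ_K(\theta)\ket x = e^{i\theta}\ket x$ when $\supp(x) \supseteq K$ (writing $\supp(x) = \{u\in V \mid x_u = 1\}$) and $CZ_K(\theta)\ket x = \ket x$ otherwise. Since all these gates are diagonal they commute, so applying the whole product $\prod_{\emptyset\neq K\subseteq V}CZ_K(E(K))$ to $\ket x$ multiplies its coefficient by $\prod_{\emptyset\neq K\subseteq\supp(x)}e^{iE(K)} = \exp\big(i\sum_{\emptyset\neq K\subseteq\supp(x)}E(K)\big)$. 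Using the convention $E(\emptyset)=0$ to absorb the empty set, this equals $e^{i\theta_x}$ with $\theta_x = \sum_{K\subseteq\supp(x)}E(K)\bmod 2\pi$; in particular $\theta_0 = E(\emptyset) = 0$. This shows that $\ket H$ has the claimed form with the claimed angles.

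For the converse, I would show that the map sending a weight function $E\colon 2^V\to[0,2\pi)$ with $E(\emptyset)=0$ to the family of angles $(\theta_x)_x$ is a bijection onto the families of angles with $\theta_0 = 0$. Fixing, for each subset $S\subseteq V$, the basis index $x^S$ with $\supp(x^S)=S$, the relation $\theta_{x^S} = \sum_{K\subseteq S}E(K)$, read as an equality in the abelian group $\mathbb R/2\pi\mathbb Z$, is a triangular linear system with respect to the inclusion order on subsets. It is therefore solved by induction on $|S|$: $E(\emptyset)$ is forced to equal $\theta_0 = 0$, and $E(S) = \theta_{x^S} - \sum_{K\subsetneq S}E(K) \bmod 2\pi$ defines a unique representative in $[0,2\pi)$. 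Hence, given any admissible family of angles, there is a (unique) weighted hypergraph $H=(V,E)$ whose state, by the forward computation, is exactly $\frac1{\sqrt{2^n}}\sum_x e^{i\theta_x}\ket x$. The inversion also admits the closed form $E(S) = \sum_{K\subseteq S}(-1)^{|S|-|K|}\theta_{x^K}\bmod 2\pi$ by Möbius inversion on the Boolean lattice, which is the promised recovery of the weights from the angles.

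There is no serious obstacle here; the only points needing a little care are bookkeeping ones. One must keep the convention $E(\emptyset)=0$ consistent so that the product over $\emptyset\neq K$ in the definition of $\ket H$ matches the sum over all $K\subseteq\supp(x)$ in the formula for $\theta_x$. And one should note that the triangular solve, equivalently Möbius inversion, is carried out over the abelian group $\mathbb R/2\pi\mathbb Z$ rather than over a field, so that reductions modulo $2\pi$ never cause a difficulty.
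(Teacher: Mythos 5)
Your proposal is correct and follows essentially the same route as the paper: the forward direction tracks the phase accumulated on each basis vector by the commuting diagonal gates, and the converse recovers the weights by the same triangular recursion $E(S)=\theta_{x^S}-\sum_{K\varsubsetneq S}E(K)\bmod 2\pi$ that the paper uses. The Möbius-inversion closed form is a pleasant addition but not needed.
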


\begin{proof}
    As $\ket +_V = \frac 1{\sqrt {2^n}}\sum_{x \in {0,1}^n}\ket x$, and as the $CZ_K$ can only change the phases, it is obvious that a weighted hypergraph state can be written as $\frac 1{\sqrt {2^n}}\sum_{x\in \{0,1\}^n}e^{i\theta_x}\ket x$.

    Conversely, the weights of the multi-controlled phase gates can be computed from the angles $\theta_x$: for every $K \se V$, let $ \omega_K = \theta_{x^{K}} - \sum_{K' \varsubsetneq K} \omega_{K'} \mod 2\pi$
    where $x^{K}_u=1$ if and only if $u \in K$. Then,
    \begin{equation*}    
    \ket H = \left(\prod_{\emptyset \neq K \se V} CZ_{K}(\omega_K)\right) \ket{+}_V \qedhere
    \end{equation*}
\end{proof}

\cref{prop:whs_angles} implies in particular that weighted hypergraph states are uniquely defined by the weights of the multi-controlled phase gates.

\begin{corollary}
    Weighted hypergraph states are in one-to-one correspondence with weighted hypergraphs. In other words, $\ket{H_1} = \ket{H_2}$ if and only if $H_1 = H_2$.
\end{corollary}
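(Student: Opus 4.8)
The plan is to obtain this statement as an immediate consequence of \cref{prop:whs_angles}. The forward implication is trivial: if $H_1 = H_2$, then $\ket{H_1}$ and $\ket{H_2}$ are produced by applying the exact same sequence of multi-controlled phase gates $CZ_K(E(K))$ to $\ket{+}_V$, so they are equal.

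For the converse, I would argue as follows. Assume $\ket{H_1} = \ket{H_2}$ and write $H_i = (V, E_i)$. By \cref{prop:whs_angles}, $\ket{H_i} = \frac{1}{\sqrt{2^n}}\sum_{x\in\{0,1\}^n} e^{i\theta^{(i)}_x}\ket x$ with $\theta^{(i)}_x \in [0, 2\pi)$ and $\theta^{(i)}_0 = 0$. Two equal vectors have equal coordinates in the computational basis, hence $e^{i\theta^{(1)}_x} = e^{i\theta^{(2)}_x}$ for every $x \in \{0,1\}^n$; since both angles lie in $[0,2\pi)$, this forces $\theta^{(1)}_x = \theta^{(2)}_x$. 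Note that fixing $\theta_0 = 0$ removes any global-phase ambiguity, so no such caveat is needed here.

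It then remains to conclude $E_1 = E_2$, and this is exactly the ``conversely'' part of \cref{prop:whs_angles}: the weights are recovered from the angles by the Möbius-type recursion $\omega_K = \theta_{x^{K}} - \sum_{K' \varsubsetneq K}\omega_{K'} \bmod 2\pi$, which expresses each $E_i(K)$ as a fixed function of the family $(\theta^{(i)}_x)_x$ alone. Since $\theta^{(1)} = \theta^{(2)}$, we get $E_1(K) = E_2(K)$ for all $K \se V$, i.e. $H_1 = H_2$.

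There is no genuine obstacle in this argument; the only point requiring a moment's care is the uniqueness of the weight-recovery step, which is already handled inside \cref{prop:whs_angles} by an induction on $|K|$ over the subset lattice. If one wanted a fully self-contained proof one could simply restate that induction here, but invoking the proposition is cleaner.
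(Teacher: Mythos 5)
Your proposal is correct and matches the paper's approach: the paper derives this corollary as an immediate consequence of \cref{prop:whs_angles}, exactly as you do, by noting that equal states have equal amplitudes, hence equal angles $\theta_x$, from which the weights are recovered uniquely via the recursion in that proposition.
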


\section{The action of X-rotations over graph states}

In general, an X-rotation over a qubit does not map graph states to graph states. However, it always maps graph states to weighted hypergraph states.

\begin{proposition}[\cite{Tsimakuridze17}] \label{prop:action_X_rotation}
    If $G=(V,E)$ is a graph, for any vertex $u \in V$, $X(\theta)_u\ket G = \left(\prod_{K\subseteq N_G(u), K \neq \emptyset} CZ_{K}\left((-2)^{|K|-1}\theta\right)\right)\ket G = \ket H$, where $H$ is the weighted hypergraph obtained from $G$ by adding (modulo $2\pi$) a weight $(-2)^{k-1}\theta$ to each hyperedge containing exactly $k$ vertices adjacent to $u$. 
\end{proposition}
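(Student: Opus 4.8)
The plan is to compute directly the action of $X(\theta)_u$ on the expansion $\ket G = \frac{1}{\sqrt{2^n}}\sum_{x} (-1)^{|G[x]|}\ket x$ given by \cref{prop:boolean_definition}, splitting each basis vector $\ket x$ according to the value $x_u$, and then re-assemble the result into the multi-controlled-phase form. Concretely, write $x = (x_u, y)$ where $y$ ranges over assignments to $V\sm\{u\}$; then $(-1)^{|G[x]|} = (-1)^{|G[y]|}\cdot(-1)^{x_u\,|N_G(u)\cap y|}$, since the edges of $G[x]$ are those of $G[y]$ plus, when $x_u=1$, the edges from $u$ to the chosen neighbours. Applying $X(\theta)_u = \tfrac12\big((1+e^{i\theta})I + (1-e^{i\theta})X\big)$ on the $u$-th tensor factor mixes $\ket{0}_u$ and $\ket{1}_u$, so after simplification the amplitude of the basis state $(x_u,y)$ in $X(\theta)_u\ket G$ becomes $(-1)^{|G[y]|}$ times a factor depending only on $x_u$ and on the parity $p := |N_G(u)\cap y| \bmod 2$ (equivalently on $(-1)^p$). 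A short case analysis on $(x_u,p)\in\{0,1\}^2$ identifies this factor: it is $1$ when $x_u = p = 0$, and otherwise it is exactly the phase $e^{i\theta}$ picked up by one coordinate, matching what $\prod_{K\subseteq N_G(u),K\neq\emptyset} CZ_K\big((-2)^{|K|-1}\theta\big)$ produces.

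The key identity to verify is that, for a fixed $y$ with $m := |N_G(u)\cap y|$ ones among the neighbours of $u$, the product of phase contributions $\prod_{\emptyset\neq K\subseteq N_G(u)\cap y}\exp\!\big(i(-2)^{|K|-1}\theta\big)$ has total exponent $\theta\sum_{k=1}^{m}\binom{m}{k}(-2)^{k-1} = \tfrac{\theta}{2}\big(\sum_{k=0}^m\binom{m}{k}(-2)^k - 1\big) = \tfrac{\theta}{2}\big((1-2)^m - 1\big) = \tfrac{\theta}{2}\big((-1)^m - 1\big)$. Thus the accumulated phase is $0$ when $m$ is even and $-\theta$ when $m$ is odd — i.e. it depends only on the parity of $m$, exactly as required by the case analysis above (up to the global-phase bookkeeping, which one pins down by tracking the coefficient of $\ket{0\cdots0}$, where it is trivially $1$ on both sides). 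First I would establish this binomial identity, then match it term-by-term with the amplitudes coming from $X(\theta)_u$, and finally read off the description of $H$: to each hyperedge $K$ we add the weight $(-2)^{|K|-1}\theta$ precisely when all $|K|$ vertices of $K$ are adjacent to $u$, which is the statement "adding a weight $(-2)^{k-1}\theta$ to each hyperedge containing exactly $k$ vertices adjacent to $u$." The fact that the resulting object is a genuine weighted hypergraph state (and hence $\ket H$ is well-defined) follows from \cref{prop:whs_angles}.

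The main obstacle is purely bookkeeping rather than conceptual: keeping the modular arithmetic on angles consistent (weights live in $[0,2\pi)$, so $(-2)^{|K|-1}\theta$ is understood mod $2\pi$), and making sure the global phase is exactly $1$ and not some nontrivial $e^{i\phi}$. Both are handled by the $\ket{0\cdots0}$-coefficient trick and by noting that all manipulations only rearrange phases. One subtlety worth a sentence in the write-up: when $N_G(u)=\emptyset$ the product over $K$ is empty and $X(\theta)_u\ket G = Z(\theta)_u\ket G = \ket{G}$ up to the hyperedge $\{u\}$ receiving weight $\theta$ — which is the $|K|=1$ case of the formula and is consistent with the remark that $CZ_{\{u\}}(\theta)=Z(\theta)_u$.
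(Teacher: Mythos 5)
The paper itself offers no proof of this proposition — it is stated with a citation to \cite{Tsimakuridze17} — so there is nothing in-paper to compare against. Your strategy (expand $\ket G$ in the computational basis via \cref{prop:boolean_definition}, act with $X(\theta)_u$ on the $u$-th tensor factor, and match the resulting per-basis-state phase against the phase accumulated by the product of $CZ_K$ gates via a binomial identity) is the natural one and does yield a complete proof.

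Two concrete errors in your write-up must be fixed before it closes, though. First, the binomial identity has a sign error: since $(-2)^{k-1}=-\tfrac{1}{2}(-2)^{k}$, one gets $\sum_{k=1}^{m}\binom{m}{k}(-2)^{k-1}=-\tfrac{1}{2}\left((1-2)^{m}-1\right)=\tfrac{1-(-1)^{m}}{2}$, which is $0$ for $m$ even and $+1$ for $m$ odd; hence the accumulated phase is $+\theta$ for odd $m$, matching exactly the $e^{i\theta}$ produced on the left-hand side (for fixed $y$ the $u$-th factor of $\ket G$ is proportional to $\ket 0+(-1)^{m}\ket 1$, and $X(\theta)\ket{+}=\ket{+}$ while $X(\theta)\ket{-}=e^{i\theta}\ket{-}$). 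Your version gives $-\theta$, i.e.\ a discrepancy of $e^{2i\theta}$ on every odd-parity basis state; this is not a global phase and cannot be repaired by tracking the coefficient of $\ket{0\cdots 0}$ (which is already $1$ on both sides), so the appeal to ``global-phase bookkeeping'' does not work — with the correct sign there is simply nothing to repair. Second, your case analysis of the factor on $(x_u,p)$ is off: relative to the original amplitude $(-1)^{|G[x]|}$ the extra factor is $e^{i\theta p}$, depending only on $p$ and not on $x_u$; in particular it equals $1$ (not $e^{i\theta}$) when $x_u=1$ and $p=0$. A smaller point: the closing remark about $N_G(u)=\emptyset$ is wrong — for isolated $u$ one has $X(\theta)_u\ket G=\ket G$ exactly, no hyperedge $\{u\}$ ever appears (every $K$ in the product is a subset of $N_G(u)$, which never contains $u$), and $X(\theta)_u\ket G\neq Z(\theta)_u\ket G$ in general. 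None of this invalidates the approach, but as written the key computation does not go through.
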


\cref{prop:action_X_rotation} is illustrated in Figure \ref{fig:Xrotation}.

\begin{figure}[h!]
\centering
\vspace{-50pt}
\scalebox{1}{
\begin{tikzpicture}[yscale = 0.5, xscale = 0.5]
\begin{scope}[every node/.style=vertices]
    \node (U1) at (0,0) {1};
    \node (U2) at (6,3) {2};
    \node (U3) at (10,0) {3};
    \node (U4) at (6,-3) {4};    
\end{scope}
\begin{scope}[every edge/.style=edges]              
    \path [-] (U1) edge node {} (U2);
    \path [-] (U1) edge node {} (U3);
    \path [-] (U1) edge node {} (U4);
\end{scope}

\draw [-stealth, label=above:a, very thick](12.5,0) -- (15.5,0);

\draw (-0.5,1.3) node(){$X(\frac \pi 4)$};

\begin{scope}[shift={(18,0)}, every node/.style=vertices]
    \node (U1) at (0,0) {1};
    \node (U2) at (6,3) {2};
    \node (U3) at (10,0) {3};
    \node (U4) at (6,-3) {4};  
    
    \draw[colormls2, thick, loosely dashed] (6,3) circle (0.8);
    \draw[colormls2, thick, loosely dashed] (10,0) circle (0.8);
    \draw[colormls2, thick, loosely dashed] (6,-3) circle (0.8);

    \draw [rounded corners=30mm,black] (4,8)--(14,0)--(4,-8)--cycle;    
\end{scope}
\begin{scope}[shift={(18,0)}]
    \draw [black] (9.8,4) node(){$\pi$};

    \draw [colormls3] (5,1) node(){$-\frac \pi 2$};
    \draw [colormls3] (8.5,1.9) node(){$-\frac \pi 2$};
    \draw [colormls3] (8.5,-1.9) node(){$-\frac \pi 2$};

    \draw [colormls2] (5,4) node(){$\frac \pi 4$};
    \draw [colormls2] (5,-4) node(){$\frac \pi 4$};
    \draw [colormls2] (10.5,1.3) node(){$\frac \pi 4$};
\end{scope}
\begin{scope}[every edge/.style=edges]              
    \path [-] (U1) edge node {} (U2);
    \path [-] (U1) edge node {} (U3);
    \path [-] (U1) edge node {} (U4);    
\end{scope}
\begin{scope}[every edge/.style={draw=colormls3, ultra thick,  dashed}]              
    \path [-] (U2) edge node {} (U3);
    \path [-] (U2) edge node {} (U4);
    \path [-] (U3) edge node {} (U4);    
\end{scope}
\end{tikzpicture}
}
\vspace{-60pt}
\caption{Illustration of the action of an X-rotation over a vertex of a graph state. Applying an X-rotation of angle $\pi/4$ on vertex 1 maps the graph state to a weighted hypergraph state. More precisely, the X-rotations produces one 3-vertex edge $\{2,3,4\}$ of weight $\pi$, three 2-vertex edges $\{2,3\}$, $\{2,4\}$ and $\{3,4\}$ of weight $-\pi/2$, and three 1-vertex edges $\{2\}$, $\{3\}$ and $\{4\}$ of weight $\pi/4$.}
\label{fig:Xrotation}
\end{figure}
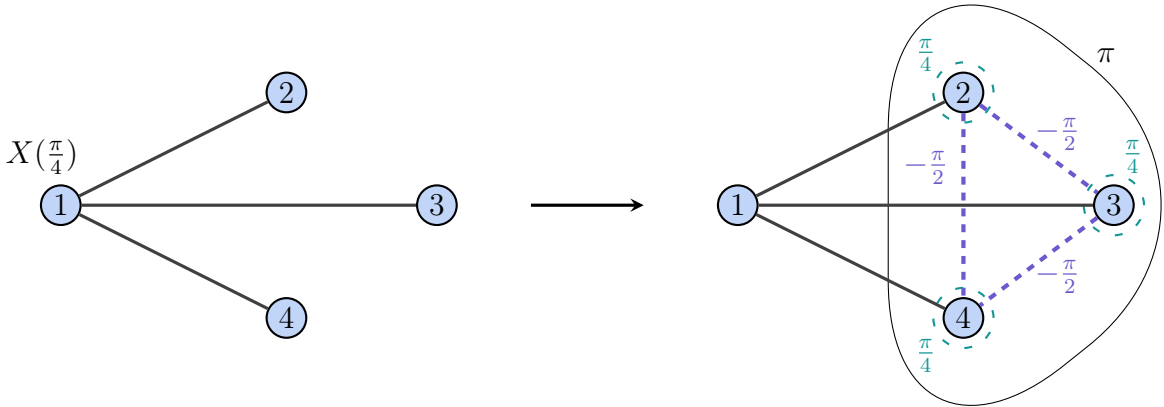

\cref{prop:action_X_rotation} can be generalized to independent sets of vertices. Recall that $\Lambda_G^K$ denotes the common neighborhood of $K$ in $G$.

\begin{corollary} \label{cor:action_X_rotation_indep_set}
    Let $A$ be an independent set of vertices in $G=(V,E)$. Then, 
    $$ \bigotimes_{u\in A}X(\theta_u)\ket G = \left(\prod_{K\subseteq V, K \neq \emptyset} CZ_{K}\left((-2)^{|K|-1}\sum_{u \in A \cap \Lambda_G^K} \theta_u\right)\right)\ket G$$ 
\end{corollary}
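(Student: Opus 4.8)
The plan is to prove the statement by induction on $|A|$, using \cref{prop:action_X_rotation} (the single‑vertex case) as the engine and exploiting the independence of $A$ at each step to control where the newly created hyperedges sit. The base case $|A|=0$ is immediate, since both sides collapse to $\ket G$. The case $|A|=1$, say $A=\{u\}$, is exactly \cref{prop:action_X_rotation}: since $u\in\Lambda_G^K$ if and only if $K\se N_G(u)$, the set $A\cap\Lambda_G^K$ equals $\{u\}$ when $K\se N_G(u)$ and $\emptyset$ otherwise, so the product over all $K$ collapses (the $CZ_K(0)$ factors being trivial) to the product over $K\se N_G(u)$ appearing in \cref{prop:action_X_rotation}.

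For the inductive step, write $A=A_0\sqcup\{v\}$ with $A_0$ independent. The induction hypothesis, together with the fact that multi-controlled phase gates commute and that $\ket G=\left(\prod_{(a,b)\in E,\,a<b}CZ_{ab}\right)\ket{+}_V$, lets us write $\bigotimes_{u\in A_0}X(\theta_u)\ket G=\left(\prod_{\emptyset\neq K}CZ_K(\nu_K)\right)\ket{+}_V$ with $\nu_K=E_G(K)+(-2)^{|K|-1}\sum_{u\in A_0\cap\Lambda_G^K}\theta_u$, where $E_G(K)=\pi$ if $K$ is an edge of $G$ and $0$ otherwise. The crucial observation --- and the first place independence of $A$ enters --- is that $v\notin K$ whenever $A_0\cap\Lambda_G^K\neq\emptyset$: indeed $v\in K\se N_G(u)$ for some $u\in A_0$ would give $v\sim_G u$, contradicting independence. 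Hence the only hyperedges of this weighted hypergraph state that contain $v$ are the graph edges $\{v,b\}$ with $b\in N_G(v)$, so the state equals $\left(\prod_{\emptyset\neq K,\,v\notin K}CZ_K(\nu_K)\right)\left(\prod_{b\in N_G(v)}CZ_{vb}\right)\ket{+}_V$.

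Now apply $X(\theta_v)_v$: it commutes with every factor $CZ_K$ with $v\notin K$, so it moves to the right, where it meets $\left(\prod_{b\in N_G(v)}CZ_{vb}\right)\ket{+}_V$, which is the graph state of the star centred at $v$ with leaves $N_G(v)$ --- an honest graph state, so \cref{prop:action_X_rotation} applies to it directly and produces the extra factor $\prod_{\emptyset\neq K\se N_G(v)}CZ_K\big((-2)^{|K|-1}\theta_v\big)$. Reassembling and using $CZ_K(\alpha)CZ_K(\beta)=CZ_K(\alpha+\beta)$, we get $\bigotimes_{u\in A}X(\theta_u)\ket G=\left(\prod_{\emptyset\neq K}CZ_K(\nu'_K)\right)\ket{+}_V$ with $\nu'_K=\nu_K+[\,K\se N_G(v)\,]\,(-2)^{|K|-1}\theta_v$. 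It remains to check $\nu'_K=E_G(K)+(-2)^{|K|-1}\sum_{u\in A\cap\Lambda_G^K}\theta_u$ in all cases: for $v\notin K$ this follows from $[\,K\se N_G(v)\,]\,\theta_v=\sum_{u\in\{v\}\cap\Lambda_G^K}\theta_u$ and the disjoint decomposition $A\cap\Lambda_G^K=(A_0\cap\Lambda_G^K)\sqcup(\{v\}\cap\Lambda_G^K)$; for $v\in K$, independence again forces $A\cap\Lambda_G^K=\emptyset$ (an $u$ in it would satisfy $v\sim_G u$), so both sides reduce to $E_G(K)$. Factoring the $\prod_{(a,b)\in E,\,a<b}CZ_{ab}$ contribution back out identifies the right-hand side with $\left(\prod_{\emptyset\neq K}CZ_K\big((-2)^{|K|-1}\sum_{u\in A\cap\Lambda_G^K}\theta_u\big)\right)\ket G$, closing the induction.

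The only genuine difficulty is the bookkeeping: one must invoke independence at precisely the two points flagged above, and split the hyperedges according to whether they contain $v$, are contained in $N_G(v)$, or neither, checking that the weights recombine into the claimed closed form. Everything else is routine manipulation of commuting phase gates, which is why reducing each new rotation to \cref{prop:action_X_rotation} applied to a star graph state is the key move.
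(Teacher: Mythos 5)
Your induction is correct and is exactly the argument the paper leaves implicit when it states this as a corollary of \cref{prop:action_X_rotation}: independence of $A$ guarantees that none of the hyperedges created by the rotations on $A_0$ contain $v$, so the single-vertex proposition can be applied once more and the weights on each hyperedge simply add. (A small shortcut you could take: since $X(\theta_v)_v$ commutes with every accumulated factor $CZ_K(\mu_K)$ — none of those $K$ contain $v$ — you may apply \cref{prop:action_X_rotation} to $\ket G$ directly and merge the two products, without unpacking $\ket G$ into $\ket{+}_V$ and re-isolating the star graph state.)
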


The condition of independence is necessary to obtain a weighted hypergraph state. Indeed, X-rotations on adjacent vertices may map graph states (or weighted hypergraph states) to LME states that are not weighted hypergraph states. For example, 
\begin{align*}
    \left(X(\pi/2) \otimes X(\pi/2) \right) \ket{K_2} & = \left(X(\pi/2) \otimes X(\pi/2) \right) \frac{1}{2} \left( \ket{00} + \ket{00} +\ket{00} - \ket{11}\right)\\
    & = e^{i\pi/4}\ket{00} + e^{-i\pi/4}\ket{11}
\end{align*}

which is not a weighted hypergraph state.

\section{Generalizing local complementation}

To ease the definition of the generalized local complementation, we begin by introducing 1-local complementation then 2-local complementation.

\subsection{1-local complementation}

As a first step towards a generalization of  local complementation, we introduce a natural shortcut $G  \star^1 S:=G\star a_1\ldots \star a_k$ to denote the graph obtained after a series of local complementations when $S=\{a_1, \ldots, a_k\}$ is an independent set in $G$. The notation $\star_1$ instead of $\star$ will make sense in the following sections when we generalize the notation to other integers. 
The independence condition  is important as local complementations do not commute in general when applied on adjacent vertices.

Notice that the action of a local complementation over $S$ can be described directly as toggling edges between vertices with an odd number of common neighbors in $S$:  
\begin{equation*}\label{eq:LCIS}u\sim_{G\star S} v ~\Leftrightarrow~\left(u\sim_{G} v ~~\oplus~~ |S\cap N_G(u)\cap N_G(v)| = 1\bmod 2\right)\end{equation*}

1-local complementation is illustrated in Figure \ref{fig:1lc}.

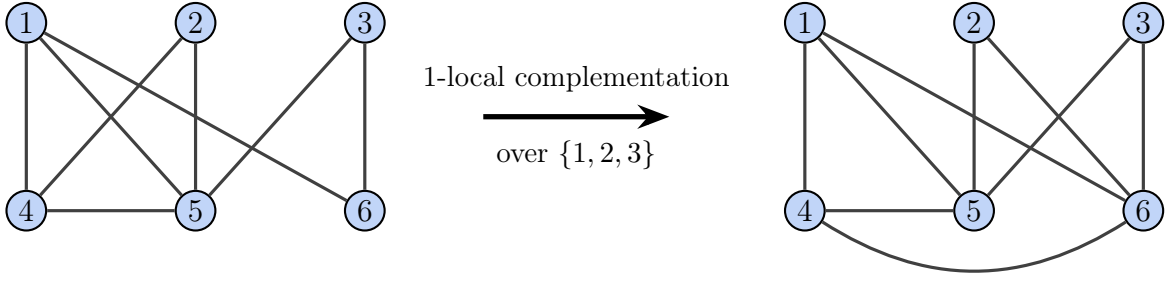
\begin{figure}[h!]
\centering
\begin{tikzpicture}[xscale = 0.45 ,yscale=0.5]

\begin{scope}[every node/.style=vertices]
    \node (U1) at (-5,5) {1};
    \node (U2) at (0,5) {2};
    \node (U3) at (5,5) {3};
    \node (U4) at (-5,0) {4};
    \node (U5) at (0,0) {5};
    \node (U6) at (5,0) {6};
\end{scope}
\begin{scope}[every node/.style={},
                every edge/.style=edges]                   
    \path [-] (U1) edge node {} (U4);
    \path [-] (U1) edge node {} (U5);
    \path [-] (U1) edge node {} (U6);
    \path [-] (U2) edge node {} (U4);
    \path [-] (U2) edge node {} (U5);
    \path [-] (U3) edge node {} (U5);
    \path [-] (U3) edge node {} (U6);

    \path [-] (U4) edge node {} (U5);
\end{scope}

\begin{scope}[shift={(23,0)},every node/.style=vertices]
    \node (U1) at (-5,5) {1};
    \node (U2) at (0,5) {2};
    \node (U3) at (5,5) {3};
    \node (U4) at (-5,0) {4};
    \node (U5) at (0,0) {5};
    \node (U6) at (5,0) {6};
\end{scope}
\begin{scope}[every edge/.style=edges]                   
    \path [-] (U1) edge node {} (U4);
    \path [-] (U1) edge node {} (U5);
    \path [-] (U1) edge node {} (U6);
    \path [-] (U2) edge node {} (U5);
    \path [-] (U2) edge node {} (U6);
    \path [-] (U3) edge node {} (U5);
    \path [-] (U3) edge node {} (U6);

    \path [-] (U4) edge[bend right] node {} (U6);
    \path [-] (U4) edge node {} (U5);

\end{scope}

\draw[-Stealth, line width=2pt] (8.5,2.5) -- (14,2.5);
\node (t) at (11.25,3.5) {\small 1-local complementation};
\node (t2) at (11.25,1.5) {\small  over $\{1,2,3\}$};

\end{tikzpicture}
\caption{Illustration on a 1-local complementation over the set $S = \{1,2,3\}$. The effect on the graph is the same as local complementations on 1, 2 and 3 in any order.}
\label{fig:1lc}    
\end{figure}

As a local complementation can be implemented with local Clifford operators (see \cref{prop:implementation_lc}), so can a 1-local complementation:

$$\ket{G\star^1 S} = \bigotimes_{u\in S}X\left(\frac {\pi}{2}\right)\bigotimes_{v\in V}Z\left(-\frac {\pi}{2} |S \cap N_G(v)| \right)\ket{G}$$

\subsection{2-local complementation}

\subsubsection{2-local complementation over an independent set}
We introduce 2-local complementation as a refinement of \emph{idempotent} 1-local complementation,~i.e.~when $G\star^1 S=G$. 
An idempotent local complementation occurs when each pair $(u,v)$ of (distinct) vertices has an even number of common neighbors in the independent set $S$, one can then consider a new graph transformation consisting in toggling an edge $(u,v)$ when its number of common neighbors in $S$ is equal to 2 modulo 4:

$$u\sim_{G\star^2 S} v ~\Leftrightarrow~\left(u\sim_{G} v \oplus |N_G(u)\cap N_G(v) \cap S|= 2\bmod 4\right)$$

However, such an action may not be implementable using local operations in the graph state formalism. To guarantee an implementation by means of local unitary transformations on the corresponding graph states, we add the condition, called $2$-incidence, that any set of 2 or 3 (distinct) vertices has an even number of common neighbors in $S$. For instance, in the following example, a 2-local complementation over $\{1,2\}$ performs the following transformation:

\[
\begin{tikzpicture}[scale = 0.3]

\begin{scope}[every node/.style=vertices]
    \node (a) at (-2.5,4) {1};
    \node (b) at (2.5,4) {2};
    \node (c) at (-5,0) {3};
    \node (d) at (0,0) {4};
    \node (e) at (5,0) {5};
\end{scope}
\begin{scope}[every edge/.style=edges]                   
    \path [-] (a) edge node {} (c);
    \path [-] (a) edge node {} (d);
    \path [-] (a) edge node {} (e);
    \path [-] (b) edge node {} (c);
    \path [-] (b) edge node {} (d);
    \path [-] (b) edge node {} (e);
    \path [-] (d) edge node {} (e);
\end{scope}

\begin{scope}[shift={(29,0)},every node/.style=vertices]
    \node (a) at (-2.5,4) {1};
    \node (b) at (2.5,4) {2};
    \node (c) at (-5,0) {3};
    \node (d) at (0,0) {4};
    \node (e) at (5,0) {5};
\end{scope}
\begin{scope}[every edge/.style=edges]                   
    \path [-] (a) edge node {} (c);
    \path [-] (a) edge node {} (d);
    \path [-] (a) edge node {} (e);
    \path [-] (b) edge node {} (c);
    \path [-] (b) edge node {} (d);
    \path [-] (b) edge node {} (e);
    \path [-] (d) edge node {} (c);
     \path [-] (c) edge [bend right]node {} (e);
\end{scope}

\draw[-Stealth, line width=2pt] (11.5,3) -- (17,3);
\node (t) at (14.25,4.5) {\small 2-local complementation};
\node (t2) at (14.25,1.5) {\small  over $\{1,2\}$};   

\end{tikzpicture}\]
In the left-hand side graph $G$, $S=\{1,2\}$ is an independent set. Moreover, a $1$-local complementation over $S$ is idempotent: $G\star^1 S = G$, as 1 and 2 are twins. $S$ is also $2$-incident as any pair and triplet of vertices has an even number of common neighbors in $S$. Thus, a $2$-local complementation over $S$ is valid and consists in toggling the edges $(3,4)$, $(3,5)$, and $(4,5)$ as each of them has a number of common neighbors in $S$ equal to $2 \bmod 4$. 

\subsubsection{2-local complementation over an independent multiset}

For generality, we consider that $S$ is actually a multiset, more precisely an independent multiset (i.e. no vertices counted once or more in $S$ are adjacent). When $S$ is a multiset, the number of common neighbors in $S$ should be counted with their multiplicity. With a slight abuse of notation we identify multisets of vertices with their multiplicity function $V\to \mathbb N$. \footnote{Hence, we also identify sets of vertices with their indicator functions $V\to \{0,1\}$.} We also introduce the shortcut $S\bullet \Lambda_G^K = \sum_{u \in \Lambda_G^K}S(u)$.\footnote{$.\bullet.$ is the scalar product: $A\bullet B = \sum_{u\in V}A(u).B(u)$ and $\Lambda_G^K$ is the common neighborhood of $K$: $\Lambda_G^K=\bigcap_{u\in K}N_G(u)=\{v\in V~|~\forall u \in K, v\in N_G(u)\}$.} For example, $S\bullet \Lambda_G^{\{a,b\}} = \sum_{u \in N_G(a)\cap N_G(b)}S(u)$ and $S\bullet \Lambda_G^{\{a,b,c\}} = \sum_{u \in N_G(a)\cap N_G(b) \cap N_G(c)}S(u)$. An independent multiset is thus 2-incident when both $S\bullet \Lambda_G^{\{a,b\}}$ and $S\bullet \Lambda_G^{\{a,b,c\}}$ are even for any (distinct) vertices\footnote{It suffices to check for vertices not counted in $S$, as $S$ is independent. Indeed, $S\bullet \Lambda_G^{K}=0$ if $K$ contains a vertex $a$ such that $S(a) \gs 1$.} $a$, $b$ and $c$. For independent 2-incident multisets, the definition of 2-local complementation becomes:
$$u\sim_{G\star^2 S} v ~\Leftrightarrow~\left(u\sim_{G} v \oplus S \bullet\Lambda_G^{u,v}= 2\bmod 4\right)$$

A detailed example of a $2$-local complementation over an independent multiset is given in Figure \ref{fig:generalized_lc}. 

\begin{figure}[h!]
\centering
\begin{tikzpicture}[xscale = 0.45 ,yscale=0.5]

\begin{scope}[every node/.style=vertices]
    \node (U1) at (-5,5) {1};
    \node (U2) at (0,5) {2};
    \node (U3) at (5,5) {3};
    \node (U4) at (-5,0) {4};
    \node (U5) at (0,0) {5};
    \node (U6) at (5,0) {6};
\end{scope}
\begin{scope}[every node/.style={},
                every edge/.style=edges]                   
    \path [-] (U1) edge node {} (U4);
    \path [-] (U1) edge node {} (U5);
    \path [-] (U1) edge node {} (U6);
    \path [-] (U2) edge node {} (U5);
    \path [-] (U2) edge node {} (U6);
    \path [-] (U3) edge node {} (U5);
    \path [-] (U3) edge node {} (U6);

    \path [-] (U4) edge node {} (U5);
    \path [-] (U5) edge node {} (U6);
\end{scope}

\begin{scope}[shift={(23,0)},every node/.style=vertices]
    \node (U1) at (-5,5) {1};
    \node (U2) at (0,5) {2};
    \node (U3) at (5,5) {3};
    \node (U4) at (-5,0) {4};
    \node (U5) at (0,0) {5};
    \node (U6) at (5,0) {6};
\end{scope}
\begin{scope}[every node/.style={},
                every edge/.style=edges]                   
    \path [-] (U1) edge node {} (U4);
    \path [-] (U1) edge node {} (U5);
    \path [-] (U1) edge node {} (U6);
    \path [-] (U2) edge node {} (U5);
    \path [-] (U2) edge node {} (U6);
    \path [-] (U3) edge node {} (U5);
    \path [-] (U3) edge node {} (U6);

    \path [-] (U4) edge[bend right] node {} (U6);
    \path [-] (U5) edge node {} (U6);

\end{scope}

\draw[-Stealth, line width=2pt] (8.5,2.5) -- (14,2.5);
\node (t) at (11.25,3.5) {\small 2-local complementation};
\node (t2) at (11.25,1.5) {\small  over $\{1,1,2,3\}$};

\end{tikzpicture}
\caption{Illustration of a 2-local complementation over the multiset $S = \{1,1,2,3\}$. $S$ is 2-incident: indeed $S\bullet \Lambda_G^{\{4,5,6\}} = 2$, which is a multiple of $2^{2-1-0} = 2$. Similarly, $S\bullet \Lambda_G^{\{4,5\}} = S\bullet \Lambda_G^{\{4,6\}} = 2$ and $S\bullet \Lambda_G^{\{5,6\}} = 4$. Edges $(4,5)$ and $(4,6)$ are toggled as $S\bullet \Lambda_G^{\{4,5\}} = S\bullet \Lambda_G^{\{4,6\}} = 2\bmod 4$, but not edge $(5,6)$ as $S\bullet \Lambda_G^{\{5,6\}} = 0\bmod 4$.}
\label{fig:generalized_lc}    
\end{figure}
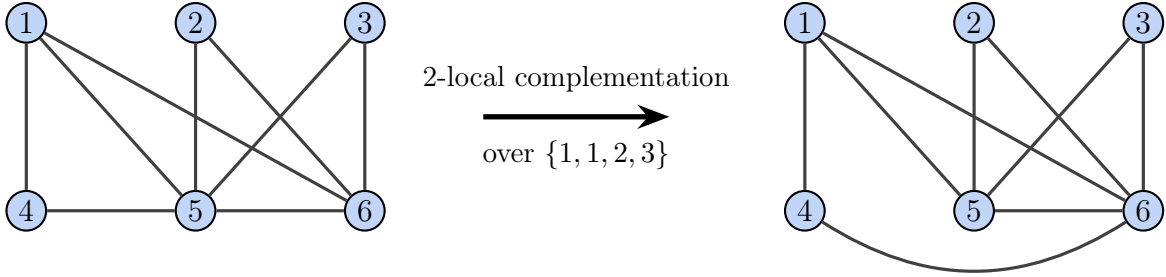

A 2-local complementation can be implemented on a graph state with local unitaries (a proof in the general case is provided in the next section):

$$\ket{G\star^2 S} = \bigotimes_{u\in S}X\left(\frac {S(u)\pi}{4}\right)\bigotimes_{v\in V}Z\left(-\frac {\pi}{4} \sum_{u \in N_G(v)}S(u)\right)\ket{G}$$

\subsubsection{Weighted hypergraph state formalism}

The weighted hypergraph state formalism is useful to understand that the condition of 2-incidence is actually a sufficient and necessary condition for this local unitary transformation to map a graph state to another graph state. According to \cref{prop:action_X_rotation}, an X-rotation of angle $\pi/4$ over a vertex of a graph state may create hyperedges, some of weight $\pi$ containing 3 vertices, some of weight $\pi/2$ (or $-\pi/2$) containing 2 vertices (and some containing one vertex, but they can be removed with Z-rotations). Applying an X-rotation of angle $\pi/4$ not only over one vertex, but over an independent (multi)set of vertices sometimes allow weights of hyperedges to cancel out. In particular, when the (multi)set is 2-incident, the weights of the hyperedges cancel out so that a 2-local complementation may only toggle edges, i.e. add weight $\pi$ to hyperedges containing 2 vertices (see Figure \ref{fig:generalized_lc}).

\subsubsection{27-vertex counterexample}

2-local complementation is useful to formally explain the 27-vertex counter-example to the LU=LC conjecture (see Figure \ref{fig:ce}). If $S$ is the set containing every upper vertex, one can check that $S$ is 2-incident: every pair of bottom vertices is adjacent to exactly $\binom{4}{3}+\binom{4}{2}=10$ vertices in $S$, and every triplet of bottom vertices is adjacent to exactly $\binom{3}{2}+\binom{3}{1}=6$ vertices in $S$. A 2-local complementation over $S$ is thus valid and maps the leftmost graph the rightmost graph, as $10 = 2 \bmod 4$.

\vspace{1em}

When a 2-local complementation is idempotent, one can similarly refine the 2-local complementation into a 3-local complementation, leading to the general definition of generalized local complementation provided in the next section.

\subsection{r-local complementation}

We introduce a generalization of local complementation, that we call $r$-local complementation, where $r$ is a (non-zero) positive integer. This generalized local complementation denoted $G\star^r S$ is parametrized by a (multi)set $S$, that has to be independent and also  $r$-\emph{incident}, which is the following counting condition on the number of common neighbors in $S$ of any set that does not intersect $\supp(S)$,  the support\footnote{The support $\supp(S)$ of $S$ is the set of vertices $u$ such that $S(u) \gs 1$.} of $S$: 

\begin{definition}[$r$-incidence]
Given a graph $G$, a multiset $S$ of vertices is  $r$-incident, if for any $k\in [0,r)$, and any $K\subseteq V\setminus \supp(S)$ of size $k+2$, $S\bullet \Lambda_G^K$ is a multiple of $2^{r-k-\delta(k)}$,
where $\delta$ is the Kronecker delta\footnote{$\delta(x)\in \{0,1\}$ and $\delta(x)=1 \Leftrightarrow x=0$.
} and $S\bullet \Lambda_G^K$ is the number of vertices of $S$, counted with their multiplicity, that are adjacent to all vertices of $K$. 
\end{definition}

\begin{remark}
    If $S$ is independent, the condition $K\subseteq V\setminus \supp(S)$ can be lifted from the definition. Indeed, if $K$ contains a vertex in $\supp(S)$, then $S\bullet \Lambda_G^K = 0$.
\end{remark}

\begin{definition}[$r$-local complementation]
Given a graph $G$ and an $r$-incident independent multiset $S$, let $G\star^rS$ be the graph defined as
\[u\sim_{G\star^r S} v ~\Leftrightarrow~\left(u\sim_{G} v ~~\oplus~~ S \bullet\Lambda_G^{u,v} = 2^{r-1}\bmod 2^{r}\right)\]
\end{definition}

$r$ is said to be the \emph{level} of the $r$-local complementation. We say that $G\star^r S$ is valid when $S$ is an $r$-incident independent multiset in $G$. The following proposition explains how an $r$-local complementation is implemented on a graph states with X- and Z-rotations, and motivates the $r$-incidence condition. As expected, this is a generalization of the implementation of the usual local complementation (see \cref{prop:implementation_lc}).

\begin{proposition}\label{prop:implementation_rlc}
Given a graph $G=(V,E)$ and an $r$-incident independent multiset $S$ of vertices, \begin{equation*}\label{eq:LCr}\ket{G\star^r S} = \bigotimes_{u\in V}X\left(\frac {S(u)\pi}{2^r}\right)\bigotimes_{v\in V}Z\left(-\frac {\pi}{2^r}\sum_{u \in N_G(v)}S(u)\right)\ket{G}\end{equation*}
In general, given an independent multiset $S$, $\bigotimes_{u\in V}X\!\left(\frac {S(u)\pi}{2^r}\right)\!\bigotimes_{v\in V}Z\!\left(\!-\frac {\pi}{2^r}\sum_{u \in N_G(v)}S(u)\!\right)\!\ket{G}$ is a graph state if and only if $S$ is $r$-incident.
\end{proposition}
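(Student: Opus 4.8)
The plan is to compute directly the right-hand side $\bigotimes_{u\in V}X\!\left(\frac {S(u)\pi}{2^r}\right)\bigotimes_{v\in V}Z\!\left(-\frac {\pi}{2^r}\sum_{u \in N_G(v)}S(u)\right)\ket{G}$ using \cref{cor:action_X_rotation_indep_set}, and to read off both the identity of the first claim and the if-and-only-if of the second. First I would note that since $S$ is independent, the collection of vertices $\{u : S(u)\ge 1\}$ is an independent set, so \cref{cor:action_X_rotation_indep_set} applies with angles $\theta_u = \frac{S(u)\pi}{2^r}$; it gives
\[
\bigotimes_{u\in V}X\!\left(\tfrac {S(u)\pi}{2^r}\right)\ket G = \left(\prod_{K\subseteq V, K \neq \emptyset} CZ_{K}\!\left((-2)^{|K|-1}\tfrac{\pi}{2^r}\,(S\bullet \Lambda_G^K)\right)\right)\ket G.
\]
Splitting the product according to $|K|$: the $|K|=1$ terms are single-qubit $Z$-rotations which are exactly cancelled (modulo $2\pi$) by the trailing $\bigotimes_{v}Z\!\left(-\frac{\pi}{2^r}\sum_{u\in N_G(v)}S(u)\right)$ block, since $S\bullet\Lambda_G^{\{v\}} = \sum_{u\in N_G(v)}S(u)$; the $|K|=2$ terms contribute weight $(-2)\frac{\pi}{2^r}(S\bullet\Lambda_G^{\{u,v\}}) = -\frac{\pi}{2^{r-1}}(S\bullet\Lambda_G^{\{u,v\}})$ to the edge $(u,v)$; and the $|K|\ge 3$ terms contribute weights $(-2)^{|K|-1}\frac{\pi}{2^r}(S\bullet\Lambda_G^K)$ to the genuine hyperedges on $K$.

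Next I would show that $r$-incidence is exactly the condition making all these weights reduce to $0$ or $\pi$ on $2$-element sets and to $0$ on sets of size $\ge 3$, modulo $2\pi$. For $|K| = k+2 \ge 2$ with $K\cap\supp(S)=\emptyset$, $r$-incidence says $S\bullet\Lambda_G^K$ is a multiple of $2^{r-k-\delta(k)}$. For $k=0$ (so $|K|=2$): $S\bullet\Lambda_G^{\{u,v\}}$ is a multiple of $2^{r-1}$, so the edge weight $-\frac{\pi}{2^{r-1}}(S\bullet\Lambda_G^{\{u,v\}})$ is a multiple of $\pi$, hence $0$ or $\pi$ mod $2\pi$ — this is precisely the graph-state condition on edges, and the resulting edge is toggled iff $S\bullet\Lambda_G^{\{u,v\}}$ is an odd multiple of $2^{r-1}$, i.e.\ iff $S\bullet\Lambda_G^{\{u,v\}} = 2^{r-1}\bmod 2^r$, matching the definition of $G\star^r S$. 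For $k\ge 1$ (so $|K| = k+2 \ge 3$): $S\bullet\Lambda_G^K$ is a multiple of $2^{r-k}$, so the hyperedge weight $(-2)^{k+1}\frac{\pi}{2^r}(S\bullet\Lambda_G^K)$ is a multiple of $2^{k+1}\cdot 2^{-r}\cdot 2^{r-k}\pi = 2\pi$, hence vanishes. Thus under $r$-incidence the only surviving weights are $\pi$ on some edges, and the state is $\ket{G\star^r S}$ with adjacency as defined; since $S$ is independent, vertices of $\supp(S)$ never enter a common neighbourhood $\Lambda_G^K$ for $K\ni$ such a vertex, and one checks that edges internal to $\supp(S)$ or incident to it behave correctly (they are not created, consistent with the definition using $\Lambda_G^{u,v}$). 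This proves \cref{eq:LCr}.

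For the converse direction of the second claim, suppose $S$ is an independent multiset and the output state is a graph state. By the one-to-one correspondence between weighted hypergraph states and weighted hypergraphs (\cref{cor:action_X_rotation_indep_set} together with the uniqueness of hyperedge weights), the weight assigned to each hyperedge is uniquely determined. A graph state has no hyperedges of size $\ge 3$ and only weights $0$ or $\pi$ on $2$-element edges. Running the computation above backwards: the weight on a set $K = k+2$ ($K\cap\supp(S)=\emptyset$) is $(-2)^{k+1}\frac{\pi}{2^r}(S\bullet\Lambda_G^K)$ (the $|K|\ge 3$ parts are unaffected by the trailing $Z$-block, and for $|K|=2$ the $Z$-block only shifts single-qubit phases so edge weights are as computed). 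For this to be $\equiv 0 \pmod{2\pi}$ when $k\ge 1$, we need $\frac{2^{k+1}}{2^r}(S\bullet\Lambda_G^K) \in 2\mathbb Z$, i.e.\ $S\bullet\Lambda_G^K$ a multiple of $2^{r-k}$; and for $k=0$, to be $\equiv 0 \pmod{\pi}$ we need $\frac{2}{2^r}(S\bullet\Lambda_G^{\{u,v\}})\in\mathbb Z$, i.e.\ $S\bullet\Lambda_G^{\{u,v\}}$ a multiple of $2^{r-1}$. These are exactly the $r$-incidence conditions (note $\delta(0)=1$ gives exponent $r-1$ for $k=0$ and $\delta(k)=0$ gives $r-k$ for $k\ge 1$). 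The main subtlety — the step I expect to require the most care — is the bookkeeping of the $|K|=1$ (single-vertex hyperedge) terms and the interplay with the trailing $Z$-rotation layer, together with the verification that hyperedges meeting $\supp(S)$ genuinely vanish because $S$ is independent; the arithmetic mod $2\pi$ matching the Kronecker-$\delta$ exponent is routine but must be tracked precisely in the boundary case $k=0$.
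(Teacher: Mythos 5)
Your proposal is correct and follows essentially the same route as the paper: apply \cref{cor:action_X_rotation_indep_set} with $\theta_u = S(u)\pi/2^r$, cancel the singleton hyperedges against the trailing $Z$-rotation layer, and observe that the divisibility conditions making the remaining hyperedge weights equal to $0$ or $\pi$ (for $|K|=2$) and $0$ (for $|K|\ge 3$) modulo $2\pi$ are exactly $r$-incidence. Your extra bookkeeping — checking that the toggled edges are precisely those with $S\bullet\Lambda_G^{u,v}=2^{r-1}\bmod 2^r$, and invoking uniqueness of hyperedge weights for the converse — is a slightly more explicit version of what the paper leaves implicit, not a different argument.
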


\begin{proof}
    According to \cref{cor:action_X_rotation_indep_set},
    \begin{align*}
        &\bigotimes_{u\in V}X\left(\frac {S(u)\pi}{2^r}\right)\bigotimes_{v\in V}Z\left(-\frac {\pi}{2^r}\sum_{u \in N_G(v)}S(u)\right)\ket{G}\\
        &= \left(\prod_{K\subseteq V, K \neq \emptyset} CZ_{K}\left((-2)^{|K|-r-1}\pi ~ S\bullet \Lambda_G^K\right)\right) \bigotimes_{v\in V}Z\left(-\frac {\pi}{2^r}\sum_{u \in N_G(v)}S(u)\right)\ket{G}\\
        &= \left(\prod_{K\subseteq V, |K| \gs 2} CZ_{K}\left((-2)^{|K|-r-1}\pi ~S\bullet \Lambda_G^K\right)\right) \bigotimes_{v\in V}\ket{G}
    \end{align*}
    $S$ is $r$-incident if and only if for any $K\subseteq V$ such that $|K|\gs 2$, $S\bullet \Lambda_G^K$ is a multiple of $2^{r-|K|+2-\delta(|K|-2)}$, i.e. $(-2)^{|K|-r-1} ~S\bullet \Lambda_G^K$ is a multiple of $2^{r-|K|+2-\delta(|K|-2) + |K|-r-1} = 2^{1-\delta(|K|-2)}$. Thus, if $S$ is $r$-incident, only $CZ$ gates are applied on $\ket G$, and thus it remains a graph state. Else, $\ket G$ is mapped to a weighted hypergraph state that is not a graph state. 
\end{proof}

Below we show a few basic properties of generalized local complementations. First, it is easy to double-check that they are self inverse: $(G\star^r S)\star^r S =G$. Moreover, $r$-local complementations can be related to $(r+1)$ and $(r-1)$-local complementations: 

\begin{proposition}
\label{prop:monotonicity}
If $G\star^r S$ is valid then:
\begin{itemize}
\item  $G\star^{r+1} (2S)$ is valid and induces the same transformation: 
$G\star^{r+1} (2S) = G\star^rS$, where $2S$ is the multiset obtained from $S$ by doubling the multiplicity of each vertex;
\item $G\star^{r-1} S$ is valid (when $r>1$) and  $G\star^{r-1} S = G$. 
\end{itemize}
\end{proposition}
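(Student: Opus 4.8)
The plan is to verify the two claimed identities by unwinding the definitions of $r$-incidence and $r$-local complementation, and then — crucially — appealing to \cref{prop:implementation_rlc} to transport the statement from the graph level to the level of rotations on the corresponding graph states, where the arithmetic of halving/doubling angles is transparent. First I would handle the monotonicity upward. Given that $G\star^r S$ is valid, $S$ is $r$-incident, so for every $k\in[0,r)$ and every $K\subseteq V\setminus\supp(S)$ of size $k+2$, $S\bullet\Lambda_G^K$ is a multiple of $2^{r-k-\delta(k)}$. Since $(2S)\bullet\Lambda_G^K = 2\,(S\bullet\Lambda_G^K)$ and $\supp(2S)=\supp(S)$, the quantity $(2S)\bullet\Lambda_G^K$ is a multiple of $2^{(r+1)-k-\delta(k)}$ for $k\in[0,r)$; it remains to check the one extra value $k=r$ required by $(r+1)$-incidence, i.e.\ for $|K|=r+2$ one needs $(2S)\bullet\Lambda_G^K$ even, which holds automatically since it is twice an integer and $\delta(r)=0$ as $r>0$. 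Hence $2S$ is $(r+1)$-incident and $G\star^{r+1}(2S)$ is valid. For the edge rule, $(2S)\bullet\Lambda_G^{u,v} = 2\,(S\bullet\Lambda_G^{u,v})$, and $2x \equiv 2^{r}\bmod 2^{r+1}$ iff $x\equiv 2^{r-1}\bmod 2^{r}$, so the edge $(u,v)$ is toggled in $G\star^{r+1}(2S)$ exactly when it is toggled in $G\star^r S$; thus $G\star^{r+1}(2S)=G\star^r S$. (Alternatively, \cref{prop:implementation_rlc} gives the same conclusion immediately: the rotation angles $\frac{(2S)(u)\pi}{2^{r+1}} = \frac{S(u)\pi}{2^r}$ coincide, so the two graph states are equal, hence so are the graphs.)

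Next I would treat the downward monotonicity. Assume $r>1$ and $G\star^r S$ is valid, so $S$ is $r$-incident. To see $S$ is $(r-1)$-incident, note that for $k\in[0,r-1)$ and $|K|=k+2$ we need $S\bullet\Lambda_G^K$ to be a multiple of $2^{(r-1)-k-\delta(k)}$, which is weaker than the $2^{r-k-\delta(k)}$ divisibility already guaranteed by $r$-incidence (for $k\ge 1$ this is an extra factor of $2$; for $k=0$ it is $2^{r-1}\mid S\bullet\Lambda_G^K$ versus the required $2^{r-2}\mid S\bullet\Lambda_G^K$, still weaker). Hence $G\star^{r-1}S$ is valid. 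For the edge rule, $(r-1)$-incidence with $k=0$ tells us $S\bullet\Lambda_G^{u,v}$ is divisible by $2^{r-1}$ (using here the $r$-incidence bound $2^{r-1}\mid S\bullet\Lambda_G^{u,v}$), so $S\bullet\Lambda_G^{u,v}\bmod 2^{r-1}=0$, never $2^{r-2}$; therefore no edge is toggled and $G\star^{r-1}S=G$. This matches the state-level picture from \cref{prop:implementation_rlc}: the hypothesis that $G\star^r S$ is a graph state forces the $CZ_K$ weights $(-2)^{|K|-r-1}\pi\,(S\bullet\Lambda_G^K)$ to be multiples of $\pi$ for $|K|=2$, i.e.\ $2^{r-1}\mid S\bullet\Lambda_G^{u,v}$, which is exactly the input needed.

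I expect the bookkeeping around the Kronecker-delta exponent $\delta(k)$ in the definition of $r$-incidence to be the only genuinely fiddly point — one must be careful that the ``extra'' divisibility built into the $k=0$ case ($2^{r-1}$ rather than $2^{r-2}$) is precisely what drives both the edge-toggling conclusions, and that when passing from level $r$ to $r+1$ the newly required index $k=r$ is harmless because the factor $2$ from doubling $S$ supplies the one missing power of $2$. There is no deep obstacle here; the proposition is a consistency check on the definitions, and the real content — that these transformations are implementable by $X$- and $Z$-rotations — has already been established in \cref{prop:implementation_rlc}, which I would cite to give a short, conceptual second proof of each identity.
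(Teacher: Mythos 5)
Your proof is correct and takes essentially the same route as the paper's: a direct verification of the incidence divisibility conditions together with the edge-toggling congruences ($2x\equiv 2^r\bmod 2^{r+1}$ iff $x\equiv 2^{r-1}\bmod 2^r$ for the first item, and $2^{r-1}\mid S\bullet\Lambda_G^{u,v}$ forcing no toggles at level $r-1$ for the second). Your explicit check of the new index $k=r$ in the $(r+1)$-incidence condition is a boundary case the paper treats only implicitly, and the state-level shortcut via \cref{prop:implementation_rlc} for the first item is a valid optional addition.
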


\begin{proof}
$2S$ is $(r+1)$-incident as for any $k\in [0,r+1)$ and any $K\subseteq V\setminus \supp(S)$ of size $k+2$, $2S\bullet \Lambda_G^K$ is a multiple of $2\times 2^{r-k-\delta(k)} = 2^{r+1-k-\delta(k)}$. Moreover, $S \bullet\Lambda_G^{u,v} = 2^{r-1}\bmod 2^{r}$ if and only if $2S \bullet\Lambda_G^{u,v} = 2^{r}\bmod 2^{r+1}$. $S$ is $(r-1)$-incident as for any $k\in [0,r-1)$ and any $K\subseteq V\setminus \supp(S)$ of size $k+2$, $S\bullet \Lambda_G^K$ is a multiple of $2^{r-k-\delta(k)}$, so is a multiple of $2^{r-1-k-\delta(k)}$. Moreover, for any vertices $u,v$, 
$S \bullet\Lambda_G^{u,v} = 0\bmod 2^{r-1}$    
\end{proof}

It implies in particular that two graphs related by $r$-local complementations are related by $(r+1)$-local complementations. An $r$-local complementation over $S$ preserves the neighborhood of the vertices in $\supp(S)$:

\begin{proposition}
    \label{loccompneighS}
If $G\star^r S$ is valid, then for any $u\in \supp(S)$, $N_{G\star^r S}(u)=N_G(u)$. 
\end{proposition}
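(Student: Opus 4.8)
The statement to prove is: if $G\star^r S$ is valid, then for any $u\in\supp(S)$, $N_{G\star^r S}(u)=N_G(u)$. The plan is to work directly from the definition of $r$-local complementation, which says that for any two vertices $x,y$, we have $x\sim_{G\star^r S} y \iff \left(x\sim_G y ~\oplus~ S\bullet\Lambda_G^{\{x,y\}} = 2^{r-1}\bmod 2^r\right)$. So it suffices to show that for a fixed $u\in\supp(S)$ and an arbitrary vertex $v\ne u$, the ``toggle bit'' $\bigl[S\bullet\Lambda_G^{\{u,v\}} = 2^{r-1}\bmod 2^r\bigr]$ is false, i.e.\ that $S\bullet\Lambda_G^{\{u,v\}}$ is \emph{not} congruent to $2^{r-1}$ modulo $2^r$. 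If that holds, no edge incident to $u$ is toggled, and hence $N_{G\star^r S}(u)=N_G(u)$.

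First I would recall that $\Lambda_G^{\{u,v\}} = N_G(u)\cap N_G(v)$ is the common neighborhood, and $S\bullet\Lambda_G^{\{u,v\}} = \sum_{w\in N_G(u)\cap N_G(v)} S(w)$ counts, with multiplicity, the vertices of $S$ adjacent to both $u$ and $v$. The key observation is that $S$ is an independent multiset, so $u\in\supp(S)$ means $u$ is not adjacent to any vertex of $\supp(S)$; in particular no $w\in\supp(S)$ can lie in $N_G(u)$. Therefore $N_G(u)\cap N_G(v)$ contains no vertex of $\supp(S)$, which forces $S\bullet\Lambda_G^{\{u,v\}} = 0$. Since $0 \ne 2^{r-1}\bmod 2^r$ (as $r\ge 1$ implies $2^{r-1}\ge 1$, and $2^{r-1} < 2^r$, so $2^{r-1}$ is a nonzero residue), the toggle bit is false. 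This is exactly the reasoning already used informally in the Remark following the definition of $r$-incidence, namely that ``$S\bullet\Lambda_G^K = 0$ if $K$ contains a vertex $a$ such that $S(a)\ge 1$,'' applied here with $K=\{u,v\}$ and $a=u$ — except that here we need the stronger statement that $u\in\supp(S)$ implies $N_G(u)\cap\supp(S)=\emptyset$, which comes from independence of $S$.

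I do not expect any genuine obstacle here: the proof is short and essentially a direct unwinding of the definitions, with the one substantive input being the independence of $S$. The only point that needs a moment's care is making sure the multiset/support conventions are applied correctly — that ``$S$ independent'' is understood to mean no two vertices of $\supp(S)$ are adjacent (as stated in the text, ``no vertices counted once or more in $S$ are adjacent''), so that $u\in\supp(S)$ and $w\in\supp(S)$ cannot both lie in $N_G(u)\cap N_G(v)$ for any $v$; indeed $w\notin N_G(u)$ already. One could alternatively phrase it via the Remark's observation that $S\bullet\Lambda_G^K=0$ whenever $K$ meets $\supp(S)$, which is the same fact. I would write the argument in two or three sentences and conclude that since no edge at $u$ is toggled, $N_{G\star^r S}(u)=N_G(u)$, and note in passing that this generalizes \cref{prop:lc_neighborhood} (the $r=1$ case, where local complementation on $u$ preserves $N_G(u)$).
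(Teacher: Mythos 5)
Your argument is correct and is essentially identical to the paper's proof: independence of $S$ gives $N_G(u)\cap\supp(S)=\emptyset$, hence $\Lambda_G^{\{u,v\}}\cap\supp(S)=\emptyset$ and $S\bullet\Lambda_G^{\{u,v\}}=0$, which is never $2^{r-1}\bmod 2^r$, so no edge at $u$ is toggled.
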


\begin{proof}
        If $u \in \supp(S)$, for any vertex $v$, $S \bullet\Lambda_G^{u,v} = 0$. Indeed, $N_G(u) \cap \supp(S) = \emptyset$ thus $\Lambda_G^{u,v} \cap \supp(S) = \emptyset$. 
\end{proof}

Notice that $r$-local complementations can be composed:

\begin{proposition}
If $G\star^r S_1$ and $G\star^r S_2$ are valid and the sum\footnote{For any vertex $u$, $(S_1 + S_2)(u) = S_1(u)+S_2(u)$.}  $S_1 + S_2$ is independent in $G$, then $G\star^r (S_1 + S_2) = (G\star^r S_1)\star^r S_2$. 
\end{proposition}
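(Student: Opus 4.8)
The plan is to establish the identity edge by edge, working directly from the definition of $r$-local complementation rather than routing through the graph-state implementation. Write $G' = G\star^r S_1$ and $S = S_1+S_2$, and note that $\supp(S) = \supp(S_1)\cup\supp(S_2)$, which is independent in $G$ by hypothesis. The heart of the argument is the observation that $\star^r S_1$ does not change the neighborhood of any vertex of $\supp(S)$: for $w\in\supp(S)$ one has $N_G(w)\cap\supp(S_1)=\emptyset$ — by independence of $S_1$ if $w\in\supp(S_1)$, and by independence of $S$ if $w\in\supp(S_2)$ — so every common neighborhood $\Lambda_G^{w,x}$ is disjoint from $\supp(S_1)$, forcing $S_1\bullet\Lambda_G^{w,x}=0\neq 2^{r-1}\pmod{2^r}$ and hence preventing any edge at $w$ from being toggled. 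This is precisely where the \emph{global} independence of $S_1+S_2$ is used, as opposed to the mere independence of each $S_i$ separately, and it is the step I expect to be the crux. From it I would deduce the clean consequence that $S_2\bullet\Lambda_{G'}^K = S_2\bullet\Lambda_G^K$ for \emph{every} vertex set $K$, since whether a vertex of $\supp(S_2)$ lies in $\Lambda^K$ is unchanged.

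Next I would dispatch the validity of both sides, which is routine given the above. That $S = S_1+S_2$ is $r$-incident in $G$ follows from the additivity of $\bullet$ in its first argument: for $K$ avoiding $\supp(S)$, $S\bullet\Lambda_G^K = S_1\bullet\Lambda_G^K + S_2\bullet\Lambda_G^K$, and each term is a multiple of $2^{r-k-\delta(k)}$ by $r$-incidence of $S_1$, resp. $S_2$, so $G\star^r S$ is valid. That $S_2$ is independent in $G'$ is immediate from the neighborhood-preservation fact, and that $S_2$ is $r$-incident in $G'$ follows from $S_2\bullet\Lambda_{G'}^K = S_2\bullet\Lambda_G^K$ together with $r$-incidence of $S_2$ in $G$ — the defining condition quantifies over the same sets $K\subseteq V\setminus\supp(S_2)$ in both graphs — so $G'\star^r S_2$ is valid as well.

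Finally I would prove the edge identity. Fix distinct $u,v\in V$. Applying $r$-incidence of $S_1$ in $G$ in the case $k=0$ when $\{u,v\}\cap\supp(S_1)=\emptyset$, and using $S_1\bullet\Lambda_G^{u,v}=0$ when $\{u,v\}$ meets $\supp(S_1)$ (same reasoning as above), one gets $S_1\bullet\Lambda_G^{u,v}\equiv 0$ or $2^{r-1}\pmod{2^r}$, and likewise for $S_2$. Setting $a=S_1\bullet\Lambda_G^{u,v}$ and $b=S_2\bullet\Lambda_G^{u,v}$, a check of the four cases where each of $a,b$ is $0$ or $2^{r-1}$ modulo $2^r$ gives $[a+b\equiv 2^{r-1}\pmod{2^r}] = [a\equiv 2^{r-1}\pmod{2^r}]\oplus[b\equiv 2^{r-1}\pmod{2^r}]$, the only nonobvious point being that $2^{r-1}+2^{r-1}\equiv 0\pmod{2^r}$. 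Unfolding the definition of $\star^r$ twice on the right-hand side (using $S_2\bullet\Lambda_{G'}^{u,v}=b$ from Step 1) and comparing with $u\sim_{G\star^r S}v \Leftrightarrow u\sim_G v\oplus[a+b\equiv 2^{r-1}\pmod{2^r}]$ then closes the argument. An alternative route, which I would mention, goes through \cref{prop:implementation_rlc}: neighborhood preservation makes the implementing local unitary of $\star^r S_2$ over $G'$ equal to that over $G$; the $X$- and $Z$-rotations attached to $S_1$ and $S_2$ act respectively on the disjoint vertex sets $\supp(S_1+S_2)$ and $N_G(\supp(S_1+S_2))$, so they commute and compose into the implementing unitary of $S_1+S_2$; and one concludes by uniqueness of the graph underlying a graph state.
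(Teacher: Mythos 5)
Your proof is correct and follows essentially the same route as the paper's: validity of both sides via additivity of $\bullet$ and the invariance $S_2\bullet\Lambda_{G\star^r S_1}^K = S_2\bullet\Lambda_G^K$, then an edge-by-edge comparison resting on the XOR identity for quantities that are $0$ or $2^{r-1}$ modulo $2^r$. You in fact supply two justifications the paper leaves implicit — why neighborhoods of $\supp(S_1+S_2)$ (not just $\supp(S_1)$) are preserved, which is exactly where global independence enters, and why each $S_i\bullet\Lambda_G^{u,v}$ is $\equiv 0$ or $2^{r-1}\pmod{2^r}$ so the XOR step is legitimate.
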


\begin{proof}
$S_1 + S_2$ is $r$-incident as for any $k\in [0,r)$ and any $K\subseteq V\setminus \supp(S_1+S_2)$ of size $k+2$, $(S_1 + S_2)\bullet \Lambda_G^K = S_1\bullet \Lambda_G^K + S_2\bullet \Lambda_G^K$ is a multiple of $2^{r-k-\delta(k)}$. $S_2$ is $r$-incident in $G\star^r S_1$ as for any set $K\subseteq V\setminus \supp(S_2)$ of size $k+2$, $S_2\bullet \Lambda_{G\star^r S_1}^K = S_2\bullet \Lambda_{G}^K$. Besides, for any vertices $u,v$, $(S_1 + S_2)\bullet\Lambda_G^{u,v} = S_1\bullet \Lambda_G^K + S_2\bullet \Lambda_G^K$ and 
\begin{align*}
    u\sim_{(G\star^r S_1)\star^r S_2} v~ & \Leftrightarrow~\left(u\sim_{G\star^r S_1} v ~~\oplus~~ S_2\bullet\Lambda_{G\star^r S_1}^{u,v} = 2^{r-1}\bmod 2^{r}\right) ~\\
    & \Leftrightarrow~\left(u\sim_{G} v ~~\oplus~~ S_1\bullet\Lambda_{G}^{u,v} = 2^{r-1}\bmod 2^{r} ~~\oplus~~ S_2\bullet\Lambda_{G}^{u,v} = 2^{r-1}\bmod 2^{r}\right) \\
    & \Leftrightarrow~\left(u\sim_{G} v ~~\oplus~~ S_1\bullet\Lambda_{G}^{u,v} + S_2\bullet\Lambda_{G}^{u,v} = 2^{r-1}\bmod 2^{r}\right) \qedhere     
\end{align*}
\end{proof}

Finally, it is easy to see that the multiplicity in $S$ can be upperbounded by $2^r$: if $G\star^r S$ is valid, then $G\star^r S= G\star^r S'$, where, for any vertex $u$, $S'(u)=S(u)\bmod 2^r$.

\section{Graphical tools}

\subsection{Types with respect to an MLS cover}

\begin{definition} \label{def:type}Given a graph $G$, a vertex $u$ is of type P $\in$ \{X, Y, Z, $\bot$\} with respect to a MLS cover $\mathcal M$, where P is
    \begin{itemize}
        \item  X if for any generator $D$ of a minimal local set of $\mathcal M$ containing $u$, $u\in D \sm Odd(D)$;
        \item  Y if for any generator $D$ of a minimal local set of $\mathcal M$ containing $u$, $u\in D \cap Odd(D)$;
        \item Z if for any generator $D$ of a minimal local of $\mathcal M$ set containing $u$, $u\in Odd(D) \sm D$;
        \item $\bot$ otherwise. 
    \end{itemize}
\end{definition}

For notational simplicity, "with respect to the MLS cover $\mathcal M$" may be left out when the MLS cover is clear from the context. We then simply say that a vertex is e.g. of type X. The names X, Y and Z are chosen to match the corresponding Pauli operator in the stabilizers of the graph state. Indeed, a minimal local set $L$ is the support of a stabilizer $(-1)^{|G[D]|}X_D Z_{Odd(D)}$ if it is of dimension 1, and the support of 3 stabilizers $(-1)^{|G[D]|}X_D Z_{Odd(D)}$, $(-1)^{|G[D']|}X_{D'} Z_{Odd(D')}$ and $(-1)^{|G[D \Delta D']|}X_{D \Delta D'} Z_{Odd(D \Delta D')}$ if it is of dimension 2 (see \cref{prop:mls_stabilizers} and \cref{prop:MLS2cases}). It is direct to see that a vertex $u$ is attached to a Pauli X (resp. Y, Z) if $u\in D \sm Odd(D)$ (resp. $D \cap Odd(D)$, $Odd(D) \sm D$). In the case where $L$ is of dimension 2, any vertex $u$ is attached to the 3 Paulis X, Y and Z (this is proved in \cite{VandenNest05}, and is also direct from the proof of \cref{prop:MLS2cases}).

\begin{proposition}
    Given a graph $G$ and a MLS cover, 
    a vertex $u$ is of type 
    \begin{itemize}
        \item X if it is attached only to Paulis X in the corresponding stabilizers;
        \item Y if it is attached only to Paulis Y in the corresponding stabilizers;
        \item Z if it is attached only to Paulis Z in the corresponding stabilizers;
        \item $\bot$ otherwise.
    \end{itemize}
\end{proposition}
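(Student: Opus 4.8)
The plan is to read \cref{def:type} through the dictionary between minimal local sets and minimal-support stabilizers of $\ket G$ supplied by \cref{prop:mls_stabilizers}, and to pin down, for a fixed vertex $u$ and a fixed generator $D$, exactly which Pauli sits at position $u$ in the associated stabilizer. Concretely, if $L$ is a minimal local set of $\mathcal M$ with generator $D$ (so $D\cup Odd(D)=L$), then by \cref{prop:mls_stabilizers} the operator $(-1)^{|G[D]|}X_D Z_{Odd(D)}$ stabilizes $\ket G$ and has support exactly $L$; hence for $u\in L$ the factor at position $u$ is $X_u$ when $u\in D\sm Odd(D)$, it is $X_uZ_u=-iY_u$ when $u\in D\cap Odd(D)$, and it is $Z_u$ when $u\in Odd(D)\sm D$. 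Since $u\in L=D\cup Odd(D)$, exactly one of these three alternatives holds, so each generator containing $u$ attaches a single well-defined Pauli to $u$, namely $X$ (resp.\ $Y$, $Z$) precisely when $u$ falls in the first (resp.\ second, third) alternative.

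With this in hand the argument is essentially a rewriting. By \cref{prop:mls_stabilizers} the generators of minimal local sets of $\mathcal M$ containing $u$ are in bijection with the stabilizers of $\ket G$ whose support is a minimal local set of $\mathcal M$ containing $u$, and under this bijection the position of $u$ relative to the generator matches the Pauli attached to $u$ as above. Therefore ``$u\in D\sm Odd(D)$ for every such generator $D$'' — which is the definition of type $X$ — is the same as ``the Pauli attached to $u$ is $X$ in every corresponding stabilizer'', i.e.\ $u$ is attached only to Paulis $X$; and symmetrically for types $Y$ and $Z$. Finally $u$ has type $\bot$ exactly when it has none of the types $X$, $Y$, $Z$, which by the above happens exactly when two corresponding stabilizers attach two distinct Paulis to $u$, i.e.\ when $u$ is not attached only to one of $X$, $Y$, $Z$ — the ``otherwise'' clause of the statement.

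The only delicate bookkeeping, and the step I expect to need the most care, concerns minimal local sets of dimension $2$: such a set $L$ of $\mathcal M$ with $u\in L$ has three generators $D_0$, $D_1$, $D_0\Delta D_1$ by \cref{prop:MLS2cases}, and one must be sure these do not all place $u$ in the same position. I would settle this by invoking the fact recalled just before the statement — see \cite{VandenNest05} and the computation inside the proof of \cref{prop:MLS2cases} — that $u$ lies in the $X$-position for one of these generators, in the $Y$-position for another, and in the $Z$-position for the third, so that a single dimension-$2$ set already attaches all three Paulis to $u$. This makes both characterizations agree on such a vertex ($u$ is of type $\bot$ and is attached to all three Paulis), whereas for a vertex lying only in dimension-$1$ sets each such set contributes one generator and one Pauli, so the equivalences of the previous paragraph apply directly; combining the two situations yields the statement.
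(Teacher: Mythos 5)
Your proposal is correct and follows essentially the same route the paper takes: the paper does not give a formal proof but justifies the statement in the paragraph preceding it, exactly as you do, by reading \cref{def:type} through \cref{prop:mls_stabilizers} (generator position $\leftrightarrow$ Pauli at position $u$, with $X_uZ_u\propto Y_u$) and disposing of the dimension-$2$ case via \cite{VandenNest05} and the computation in the proof of \cref{prop:MLS2cases}. No gaps; the bookkeeping you flag as delicate is handled the same way in the paper.
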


More precisely, a vertex $u$ is of type $\bot$ if it appears in a minimal local set of dimension 2 in $\mathcal M$, or appears in at least two minimal local sets of dimension 1, but attached to different Paulis. 

The set of vertices of type $\bot$ is invariant under LU-equivalence. To show that, we first prove a technical lemma.

\begin{lemma}\label{prop:comU}
    Given two graphs $G_1$, $G_2$ and a local unitary $U$ such that $\ket {G_2} = U\ket {G_1}$, if $L$ is a minimal local set of dimension 1 with generators respectively $D_1$ in $G_1$ and $D_2$ in $G_2$, then 
    $$X_{D_2} Z_{Odd_{G_2}(D_2)} = (-1)^{b} U X_{D_1} Z_{Odd_{G_1}(D_1)} U^\dagger$$
    with some $b \in \{0,1\}$.
\end{lemma}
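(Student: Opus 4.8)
The key fact to exploit is that the minimal local set $L$, being LU-invariant (\cref{cor:localset_invariant}), has a single generator $D_1$ in $G_1$ and a single generator $D_2$ in $G_2$. By \cref{prop:mls_stabilizers}, the Pauli operator $P_1 \defeq (-1)^{|G_1[D_1]|}X_{D_1}Z_{Odd_{G_1}(D_1)}$ stabilizes $\ket{G_1}$, and likewise $P_2 \defeq (-1)^{|G_2[D_2]|}X_{D_2}Z_{Odd_{G_2}(D_2)}$ stabilizes $\ket{G_2}$; both have support exactly $L$. The plan is to show that $U P_1 U^\dagger$ and $P_2$ are the same operator up to a sign, and then strip off the scalar prefactors $(-1)^{|G_i[D_i]|}$ to land on the displayed identity with a (possibly different) sign $(-1)^b$.

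First I would argue that $U P_1 U^\dagger$ stabilizes $\ket{G_2}$: indeed $U P_1 U^\dagger \ket{G_2} = U P_1 U^\dagger U \ket{G_1} = U P_1 \ket{G_1} = U \ket{G_1} = \ket{G_2}$. Next, since $U = \bigotimes_{u\in V} U_u$ is local, conjugation by $U$ does not change the support of a Pauli-like operator outside considerations of whether a local factor becomes identity; more carefully, $U P_1 U^\dagger = \bigotimes_{u\in V}\big(U_u (P_1)_u U_u^\dagger\big)$, which is supported within $L = \supp(P_1)$ (factors outside $L$ are $U_u I U_u^\dagger = I$). However, $U P_1 U^\dagger$ need not a priori be a Pauli operator — this is the main obstacle. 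To handle it I would use \cref{prop:czc}: up to global phase each $U_u$ is of the form $C_u Z(\theta_u) C'_u$, but more usefully I would invoke the stabilizer structure directly. Since $\ket{G_2}$ is a stabilizer state and $U P_1 U^\dagger$ is a unitary operator supported on $L$ that fixes $\ket{G_2}$, and since $P_2$ is, up to sign, the unique element of the Pauli stabilizer group with support exactly $L$ (this uniqueness is precisely the dimension-$1$ condition, via \cref{prop:mls_stabilizers} and \cref{prop:MLS2cases}), I want to conclude $U P_1 U^\dagger = \pm P_2$.

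The cleanest route to that conclusion: restrict attention to the Pauli group on the qubits of $L$. On one hand, $P_1' \defeq X_{D_1}Z_{Odd_{G_1}(D_1)}$ restricted to $L$ is a Pauli operator whose conjugate $U|_L\, P_1'\, U|_L^\dagger$ fixes the reduced state of $\ket{G_2}$ on $L$; but we need that this conjugate is itself (proportional to) a Pauli. Rather than prove that abstractly, I would use \cref{prop:czc} to write each $U_u$ as $C_u Z(\theta_u)C'_u$ and observe that $C'_u$ conjugates $X_u$ or $Z_u$ or $Y_u$ (the local factor of $P_1$ at $u$, determined by whether $u$ is in $D_1\sm Odd(D_1)$, $Odd(D_1)\sm D_1$, or $D_1\cap Odd(D_1)$) into some Pauli; then $Z(\theta_u)$ conjugates that Pauli — and $Z(\theta_u)$ maps $X$-type Paulis to a combination $\cos\theta_u X + \sin\theta_u Y$, which is non-Pauli unless $\theta_u$ is a multiple of $\pi/2$. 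This is the genuine technical heart: one must rule out, using the stabilizer-state structure of both $\ket{G_1}$ and $\ket{G_2}$ and the LU-invariance of the whole local-set/generator data (\cref{prop:corres_mls_cutrankalt}, \cref{cor:localset_invariant}), that any such non-Pauli term survives. Concretely, if $UP_1U^\dagger = \sum_j c_j Q_j$ is its Pauli expansion (finitely many terms, all supported in $L$), then each $Q_j$ must also fix $\ket{G_2}$ — wait, that's not immediate; instead one uses that $UP_1U^\dagger$ is unitary and Hermitian-or-anti and squares to $I$, together with $U P_1 U^\dagger \ket{G_2} = \ket{G_2}$, to pin down the expansion. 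I would push this by noting $(UP_1U^\dagger)^2 = I$ forces the squared sum to collapse, and the eigenvalue-$1$ condition on $\ket{G_2}$ together with uniqueness of the minimal-support stabilizer forces a single term $Q = P_2'$ with $|c| = 1$, hence $UP_1U^\dagger = e^{i\alpha}P_2'$; Hermiticity then gives $e^{i\alpha} = \pm 1$.

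Finally, folding back the scalar prefactors: $U P_1 U^\dagger = (-1)^{|G_1[D_1]|} U X_{D_1}Z_{Odd_{G_1}(D_1)}U^\dagger = \pm (-1)^{|G_2[D_2]|}X_{D_2}Z_{Odd_{G_2}(D_2)}$, so dividing through, $U X_{D_1}Z_{Odd_{G_1}(D_1)}U^\dagger = (-1)^{b}X_{D_2}Z_{Odd_{G_2}(D_2)}$ with $b = |G_1[D_1]| + |G_2[D_2]| + [\text{sign}] \bmod 2$, which is exactly the claimed identity. I expect the bookkeeping of signs to be routine; the real obstacle, as noted, is establishing that conjugation of the stabilizer $P_1$ by the generic local unitary $U$ lands back on a single Pauli — i.e. that the $Z(\theta_u)$ angles cannot introduce genuine superpositions of Paulis supported on $L$ — and I would lean on \cref{prop:czc} plus the rigidity of the stabilizer formalism (uniqueness of minimal-support stabilizer for a dimension-$1$ minimal local set) to close that gap.
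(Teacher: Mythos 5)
Your overall strategy (conjugate the stabilizer $P_1$ by $U$, observe it fixes $\ket{G_2}$, and invoke uniqueness of the minimal-support stabilizer) is workable, but the step you yourself flag as "the genuine technical heart" — showing that $UP_1U^\dagger$ is a single Pauli rather than a superposition — is not actually closed, and the two devices you offer for closing it do not work as stated. First, $(UP_1U^\dagger)^2=I$ does not force the Pauli expansion to collapse to one term: $\frac{1}{\sqrt2}(X+Z)$ is unitary, Hermitian, and squares to $I$. Second, the per-qubit analysis via \cref{prop:czc} is circular: the angles $\theta_u$ are not a priori multiples of $\pi/2$ (in the known LU$\neq$LC examples they are $\pi/4$), so you cannot rule out non-Pauli factors qubit-by-qubit — that each $U_u(P_1)_uU_u^\dagger$ is a Pauli is essentially a *consequence* of this lemma (it is how \cref{lemma:LUconstraints} is later derived), not an input to it. A correct way to finish along your lines is global: write $UP_1U^\dagger=\sum_j c_jQ_j$ over Hermitian Paulis supported in $L$ with $\sum_j c_j^2=1$ (unitarity), note $c_I=0$ since $Tr(UP_1U^\dagger)=Tr(P_1)=0$, note that $\bra{G_2}Q_j\ket{G_2}=0$ for every Pauli that is not $\pm$ a stabilizer of $\ket{G_2}$, and that the only nonidentity stabilizer supported in $L$ is $\pm P_2$ (dimension $1$ plus minimality); then $1=\bra{G_2}UP_1U^\dagger\ket{G_2}=\pm c_{P_2}$ together with the normalization forces all other coefficients to vanish.

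You should also know that the paper's proof is a two-line computation that sidesteps the "is it a Pauli?" question entirely: by \cref{prop:marginal} and dimension-$1$ minimality, $2^{|L|}\,{Tr}_{V\sm L}(\ket{G_2}\bra{G_2}) = I+(-1)^{|G_2[D_2]|}X_{D_2}Z_{Odd_{G_2}(D_2)}$, while $2^{|L|}\,{Tr}_{V\sm L}(U\ket{G_1}\bra{G_1}U^\dagger) = U\bigl(I+(-1)^{|G_1[D_1]|}X_{D_1}Z_{Odd_{G_1}(D_1)}\bigr)U^\dagger$ because $U$ is local; equating the two reduced density matrices and subtracting $I$ gives the claimed identity with $b=|G_1[D_1]|+|G_2[D_2]|\bmod 2$. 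The partial-trace route delivers the Pauli-ness of the conjugate for free, which is exactly the point your write-up leaves open.
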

    
\begin{proof}
    According to \cref{prop:marginal}, 
    ${Tr}_{V \sm L} \left(\ket {G_2} \bra {G_2}\right) = \frac{1}{|L|}\left(I+(-1)^{|G_2[D_2]|}X_{D_2} Z_{Odd_{G_2}(D_2)}\right)$ and ${Tr}_{V \sm L} \left(U \ket {G_1} \bra {G_1} U^\dagger\right) = \frac{1}{|L|}~ U \left(I+(-1)^{|G_1[D_1]|}X_{D_1} Z_{Odd_{G_1}(D_1)}\right)U^\dagger$.
\end{proof}

\begin{lemma}
    \label{lemma:LUbottom}
    Two LU-equivalent graphs have the same vertices of type $\bot$ (with respect to any MLS cover).
\end{lemma}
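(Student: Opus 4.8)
The statement to prove is \cref{lemma:LUbottom}: two LU-equivalent graphs have the same vertices of type $\bot$ with respect to any MLS cover.

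The plan is to argue by contrapositive-ish contradiction: suppose $\ket{G_2} = U\ket{G_1}$ with $U = \bigotimes_{u\in V} U_u$ a local unitary, and suppose a vertex $a$ is \emph{not} of type $\bot$ in $G_1$ (say it is of type X, by symmetry — the Y and Z cases are analogous after conjugating by local Cliffords, which preserve graph-state-ness and MLS covers). I want to show $a$ is not of type $\bot$ in $G_2$ either. Since minimal local sets are LU-invariant with the same number of generators (\cref{cor:localset_invariant}), $\mathcal M$ is simultaneously an MLS cover of $G_1$ and $G_2$, and each $L\in\mathcal M$ has the same dimension in both graphs. Because $a$ has type X in $G_1$, every minimal local set $L\in\mathcal M$ containing $a$ has dimension 1 (dimension-2 sets force type $\bot$), with a unique generator $D^{(L)}_1$ in $G_1$ satisfying $a\in D^{(L)}_1\setminus Odd_{G_1}(D^{(L)}_1)$; let $D^{(L)}_2$ be the corresponding unique generator in $G_2$.

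The key step is to pin down the Pauli attached to $a$ in $G_2$ using \cref{prop:comU}: for each such $L$, $X_{D^{(L)}_2} Z_{Odd_{G_2}(D^{(L)}_2)} = (-1)^{b_L}\, U\, X_{D^{(L)}_1} Z_{Odd_{G_1}(D^{(L)}_1)}\, U^\dagger$. Since $U = \bigotimes_u U_u$, the local factor at $a$ of the right-hand side is $U_a X_a U_a^\dagger$ (up to sign), which is \emph{the same single-qubit operator $P_a := U_a X U_a^\dagger$ for every $L$}, independent of the choice of minimal local set. On the left-hand side the local factor at $a$ is the Pauli appearing at position $a$ in the stabilizer $X_{D^{(L)}_2} Z_{Odd_{G_2}(D^{(L)}_2)}$, i.e.\ it is $X$, $Y$, or $Z$ (never $I$, since $a\in L$ so $a$ is in the support). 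The equality of the two local factors at $a$ forces all the left-hand-side local factors to coincide: if for two minimal local sets $L, L'\in\mathcal M$ containing $a$ the attached Paulis at $a$ were different (say $X$ at $L$ and $Z$ at $L'$), then $P_a$ would equal both $\pm X$ and $\pm Z$ as operators, which is impossible. Hence $a$ is attached to the same Pauli in all minimal local sets of $\mathcal M$ containing it when viewed in $G_2$, so $a$ is of type X, Y, or Z in $G_2$ — in particular not of type $\bot$. Running the same argument with the roles of $G_1$ and $G_2$ swapped (using $U^\dagger$) gives the converse, so the set of type-$\bot$ vertices is exactly preserved.

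The main subtlety — not really an obstacle but the place requiring care — is the bookkeeping that "the local factor at $a$ of $U X_{D_1} Z_{Odd(D_1)} U^\dagger$ is $U_a X U_a^\dagger$": one needs that $a\in D^{(L)}_1\setminus Odd_{G_1}(D^{(L)}_1)$ exactly, so the operator $X_{D^{(L)}_1}Z_{Odd_{G_1}(D^{(L)}_1)}$ really carries an $X$ at position $a$ (not $Y$ or $Z$), which is the definition of $a$ having type X; and that conjugation by a tensor product acts factorwise, so only $U_a$ touches position $a$. One also needs the elementary fact that two single-qubit Pauli operators $P, P'\in\{X,Y,Z\}$ with $P = \pm P'$ must be equal — immediate from the explicit matrices. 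Finally, to reduce the Y and Z cases to the X case, I would note that the type of $a$ and membership in $\mathcal M$ transform compatibly under a global local-Clifford change of basis at $a$ (e.g.\ conjugating $\ket{G_1}, \ket{G_2}$ by $H_a$ swaps types X and Z and sends graph states to graph states with the same MLS cover via \cref{prop:implementation_lc}-type reasoning), or simply repeat the three-line argument verbatim with $X$ replaced by $Y$ or $Z$ throughout since nothing special about $X$ was used beyond its being a fixed Pauli.
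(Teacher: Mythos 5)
Your proof is correct and follows essentially the same route as the paper's: both rest on \cref{prop:comU} (conjugation by $U_u$ transports the Pauli attached to a vertex in one graph to the Pauli attached to it in the other) together with the LU-invariance of the dimension of minimal local sets. You argue the contrapositive (all attached Paulis at $a$ coincide in $G_1$ $\Rightarrow$ they coincide in $G_2$) while the paper argues the forward direction (distinct Paulis stay distinct under conjugation), but these are the two faces of the same observation.
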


\begin{proof}

Let $U$ be a local unitary such that $\ket {G_2} = U\ket{G_1}$, and let $u\in V$ a vertex of type $\bot$ in $G_1$. If $u$ is in a minimal local set of dimension 2 in $G_1$, so is in $G_2$ as the dimension of minimal local sets is invariant under LU-equivalence. Indeed, two LU-equivalent graphs have the same cut-rank function (see \cref{prop:cutrank_lu}) and the dimension of a minimal local set depends solely on the cut-rank function (see \cref{prop:MLS2cases}). 

Otherwise, in $G_1$, $u$ appears in two distinct minimal local sets $L$, $L'$ of dimension 1, 
such that $u$ is attached to different Paulis $P$ and $P'$ respectively in $L$ and $L'$. 
According to \cref{prop:comU}, in $G_2$, $v$ is attached to Paulis $U_u P U_u^\dagger$ and $U_u P' U_u^\dagger$ in respectively $L$ and $L'$. Moreover, two Paulis are different when they anticommute, so $U_u P U_u^\dagger$ and $U_u P' U_u^\dagger$ are different Paulis if and only if $P$ and $P'$ are different Paulis.
\end{proof}

In \cite{VandenNest05} it is shown that for graphs with only vertices of type $\bot$, LU-equivalence implies LC-equivalence, i.e. LU=LC (see \cref{subsec:LULC_msc}). Types X, Y and Z are an attempt of generalizing the techniques used in \cite{VandenNest05} to all graphs.

\subsection{Constraints given by the types} \label{subsec:constraints_type}

We first show that local unitaries mapping one graph state to another are Clifford for vertices contained in a minimal local set of dimension 2. Recall that according to \cref{prop:MLS2cases}, a minimal local set of dimension 2 contains an even number of vertices. 

\begin{proposition}\label{prop:mls_clifford}
    Given two graphs $G_1$, $G_2$ and a local unitary $U$ such that $\ket {G_2} = U\ket {G_1}$, if $L$ is a minimal local set of dimension 2 containing at least 4 vertices, then for any vertex $u \in L$, $U_u$ is a Clifford operator.
\end{proposition}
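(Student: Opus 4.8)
The plan is to compute the reduced density operator of both graph states on $L$, and to exploit the rigid structure of a dimension-$2$ minimal local set. Since $L$ is a minimal local set of dimension $2$, we have $|L|-\cutrk(L)=2$, so exactly four sets $D\se V$ satisfy $D\cup Odd_{G_1}(D)\se L$ (the kernel of $\lambda_L$, of size $2^2$): the empty set and the three generators $D_0,D_1,D_0\Delta D_1$ of $L$ (\cref{prop:MLS2cases}). Hence by \cref{prop:marginal},
\[
Tr_{V\sm L}\bigl(\ket{G_1}\bra{G_1}\bigr)=\frac{1}{2^{|L|}}\bigl(I+\sigma_0+\sigma_1+\sigma_2\bigr),
\]
where $\sigma_0,\sigma_1,\sigma_2$ are the three stabilizers of $\ket{G_1}$ whose support is exactly $L$; together with $I$ they form an order-$4$ abelian group, so $\tfrac14(I+\sigma_0+\sigma_1+\sigma_2)$ is the projector onto a $2^{|L|-2}$-dimensional stabilizer code. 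Because local sets and their dimensions are LU-invariant (\cref{cor:localset_invariant}, \cref{prop:cutrank_lu}, \cref{prop:MLS2cases}), $L$ is also a dimension-$2$ minimal local set of $G_2$, and the same identity holds for $\ket{G_2}$ with stabilizers $\tau_0,\tau_1,\tau_2$ in place of the $\sigma_i$.

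Next I would relate the two marginals. Write $U=\bigotimes_{v\in V}U_v$ and $U_L=\bigotimes_{v\in L}U_v$. Since the factors of $U$ acting outside $L$ disappear under the partial trace over $V\sm L$, we get $Tr_{V\sm L}(\ket{G_2}\bra{G_2})=U_L\,Tr_{V\sm L}(\ket{G_1}\bra{G_1})\,U_L^\dagger$. Matching the two expressions for this operator and cancelling the identity parts yields the key identity
\[
U_L(\sigma_0+\sigma_1+\sigma_2)U_L^\dagger=\tau_0+\tau_1+\tau_2 .
\]
Set $B:=L\sm\{u\}$, so $|B|\gs 3$. The conclusion then follows from the claim that conjugation by $U_L$ sends $\{\sigma_0,\sigma_1,\sigma_2\}$ onto $\{\tau_0,\tau_1,\tau_2\}$ up to signs, i.e.\ each $U_L\sigma_iU_L^\dagger$ is again a (signed) Pauli. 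Indeed, using $U_L=U_u\otimes U_B$ and the fact that a vertex in a dimension-$2$ minimal local set is attached to all three Paulis, both $\{(\sigma_0)_u,(\sigma_1)_u,(\sigma_2)_u\}$ and $\{(\tau_0)_u,(\tau_1)_u,(\tau_2)_u\}$ equal $\{X,Y,Z\}$ up to phase; the claim then forces conjugation by $U_u$ to permute $\{X,Y,Z\}$ up to sign, which by the classification of single-qubit Clifford gates (\cref{fig:clifford}) is exactly the statement that $U_u$ is Clifford.

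The main obstacle is establishing this Pauli-to-Pauli claim: a priori $U_L\sigma_iU_L^\dagger$ is only a Hermitian, unitary, traceless operator supported on $L$, not a Pauli (contrast \cref{prop:comU}, where in the dimension-$1$ case the trace identity $I+\tau = I+U_L\sigma U_L^\dagger$ forces $U_L\sigma U_L^\dagger$ to be Pauli immediately; here we only control a sum of three such operators). To close it one combines several constraints: that $\{I,U_L\sigma_0U_L^\dagger,U_L\sigma_1U_L^\dagger,U_L\sigma_2U_L^\dagger\}$ is again an abelian group with $(U_L\sigma_iU_L^\dagger)^2=I$; that each $U_L\sigma_iU_L^\dagger$ factors as a tensor product across $\{u\}$ and $B$; that $(\sigma_0)_B,(\sigma_1)_B$ anticommute (since $\sigma_0,\sigma_1$ commute while their restrictions to $u$ anticommute), forcing the $U_B$-conjugates of $(\sigma_0)_B,(\sigma_1)_B,(\sigma_2)_B$ into a $3$-dimensional span of pairwise-anticommuting Paulis; and crucially $|L|\gs 4$, which provides the room needed — for $|L|=2$ the claim, and the proposition itself, genuinely fail (e.g.\ $\ket{K_2}$ admits a non-Clifford diagonal local correction to another graph state with the same cut-rank function). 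Concretely I would expand the key identity in the Pauli bases of the qubit $u$ and of $B$, and show that the squaring conditions together with the group relation $\sigma_0\sigma_1=\sigma_2$ leave only the solutions in which each coefficient vector is a standard basis vector, i.e.\ each $U_L\sigma_iU_L^\dagger$ is a Pauli. A convenient preliminary reduction (not strictly needed) is that, after a local complementation on $G_1$ — which replaces $G_1$ and $U$ but not whether $U_u$ is Clifford — one may assume $L=\{u\}\cup N_{G_1}(u)$, so that $X_uZ_B$ is a common stabilizer of $\ket{G_1}$ and $\ket{G_2}$.
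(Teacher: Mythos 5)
Your setup coincides with the paper's: both compute the reduced state on $L$, identify it (via \cref{prop:marginal} and \cref{prop:MLS2cases}) as $\frac{1}{2^{|L|}}(I+\sigma_0+\sigma_1+\sigma_2)$ with the three weight-$|L|$ stabilizers, use LU-invariance of minimal local sets to get the analogous expression for $G_2$, and reduce everything to the single identity $U_L(\sigma_0+\sigma_1+\sigma_2)U_L^\dagger=\tau_0+\tau_1+\tau_2$. You also correctly locate the crux — showing that this sum identity forces each $U_L\sigma_i U_L^\dagger$ to be a signed Pauli — and correctly observe that $|L|\gs 4$ must enter, since the statement genuinely fails for $|L|=2$.

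However, the step you leave as a plan is exactly the step that carries all the difficulty, and the mechanism you propose for it does not close the gap. The relations you list — that the conjugates form an abelian group of Hermitian involutions with $U_L\sigma_0U_L^\dagger\cdot U_L\sigma_1U_L^\dagger=\pm\, U_L\sigma_2U_L^\dagger$, that each conjugate factors as a tensor product over the qubits of $L$, and that the $B$-restrictions anticommute — are preserved by \emph{every} local unitary, Clifford or not, so they impose no constraint beyond what one starts with; in particular "squaring conditions together with the group relation" cannot by themselves single out the solutions where each coefficient vector is a standard basis vector. The input that actually does the work in the paper is quantitative: after normalizing both marginals to $I_L+X_L+(-1)^{|L|/2}Y_L+Z_L$ by local Cliffords (\cref{lemma:marginal_mls_dim_2}), one expands the conjugated sum in the Pauli basis with single-qubit coefficients satisfying $\sum_{P'}|c^{(v)}_{PP'}|^2=1$ (\cref{lemma:sum_conjugation}), lower-bounds $\sum_{P,P'}\prod_{v\in L}|c^{(v)}_{PP'}|$ by $3$ via the triangle inequality applied to the three surviving Pauli coefficients, and upper-bounds the same quantity by $3$ via Hölder's inequality with exponent $2m=|L|$; the equality case of Hölder, which requires $m\gs 2$, is precisely what forces every $|c^{(v)}_{PP'}|$ into $\{0,1\}$ and hence each $U_v$ to be Clifford. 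Some analytic (or otherwise genuinely restrictive) argument of this kind is unavoidable at this point, and your proposal does not supply one.
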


This was first proven in \cite{Rains97} (see also \cite{zeng2011transversality}) but the original proof involves the qutrit formalism. An alternative proof, that does not use the qutrit formalism, is given in \cite{burchardt2024algorithm}. In the following, we prove \cref{prop:mls_clifford} similarly than in \cite{burchardt2024algorithm}. First we prove two technical lemmata.

\begin{lemma} \label{lemma:marginal_mls_dim_2}
    If $L$ is a minimal local set of dimension 2 of a graph $G=(V,E)$, then there exists a local Clifford operator $C$ such that $2^{|L|} {Tr}_{V \sm L} \left(\ket G \bra G\right) = C \left( I_L + X_L +(-1)^{|L|/2}Y_L +Z_L \right) C^\dagger$.
\end{lemma}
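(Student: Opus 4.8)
The plan is to compute the partial trace of a dimension-2 minimal local set explicitly via \cref{prop:marginal}, using the characterization of the three generators from \cref{prop:MLS2cases}. First I would recall that by \cref{prop:MLS2cases}, a minimal local set $L$ of dimension 2 has exactly three generators $D_0$, $D_1$, $D_0 \Delta D_1$, all contained in $L$, and the associated stabilizers $X_{D_0}Z_{Odd(D_0)}$, $X_{D_1}Z_{Odd(D_1)}$, $X_{D_0\Delta D_1}Z_{Odd(D_0\Delta D_1)}$ (up to signs $(-1)^{|G[D_i]|}$) all have support exactly $L$. Applying \cref{prop:marginal} with $A = L$ gives
\[
2^{|L|}\,{Tr}_{V\sm L}(\ket G\bra G) = I_L + \sigma_0 + \sigma_1 + \sigma_2,
\]
where $\sigma_0,\sigma_1,\sigma_2$ are these three signed stabilizers (the sum ranges over $D\se V$ with $D\cup Odd(D)\se L$, and since $L$ is a minimal local set of dimension 2 the only such $D$ are $\emptyset,D_0,D_1,D_0\Delta D_1$).

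Next I would show that, after a suitable local Clifford change of basis, the three Pauli operators $\sigma_0,\sigma_1,\sigma_2$ become $X_L$, $(-1)^{|L|/2}Y_L$, $Z_L$ in some order. The key structural fact, drawn from the proof of \cref{prop:MLS2cases} (and from \cite{VandenNest05}), is that every vertex $u\in L$ is attached to all three Paulis $X$, $Y$, $Z$ across the three stabilizers: concretely, if $u\in D_0\sm D_1$ then $u$ carries $X$ in $\sigma_0$, and since $u\in Odd(D_1)$ (because $u\notin D_1$ but $u\in L$), it carries $Z$ in $\sigma_1$, hence $Y$ in $\sigma_2 = \sigma_0\sigma_1$ up to phase; and symmetrically for the other two regions $D_1\sm D_0$ and $D_0\cap D_1$. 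Thus on each qubit $u\in L$ the triple of Paulis appearing is a permutation of $(X,Y,Z)$, so there is a single-qubit Clifford $C_u$ mapping that triple to $(X,Y,Z)$ in the canonical order; set $C = \bigotimes_{u\in L}C_u$. Then $C^\dagger \sigma_i C$ are, up to overall signs, the operators $X_L$, $Y_L$, $Z_L$.

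It remains to pin down the overall signs, i.e. to check that $C^\dagger(I_L+\sigma_0+\sigma_1+\sigma_2)C = I_L + X_L + (-1)^{|L|/2}Y_L + Z_L$. Two observations handle this. First, since the $\sigma_i$ are genuine stabilizers of a state they pairwise commute, and since $X_L$, $Y_L$, $Z_L$ also pairwise commute only when $|L|$ is even (which it is, by \cref{prop:MLS2cases}), the product relation $\sigma_0\sigma_1\sigma_2 = \pm I$ forces the product of the three canonical operators $X_L\cdot((-1)^{a}Y_L)\cdot Z_L = (-1)^a (XYZ)^{\otimes|L|} = (-1)^a i^{|L|} I_L$ to equal $+I_L$ (the partial trace is a positive operator, so each stabilizer enters with coefficient making $I+\sigma_0+\sigma_1+\sigma_2\succeq 0$ — in fact the stabilizers are uniquely determined and carry coefficient $+1$ in the expansion of \cref{prop:marginal}). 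This gives $(-1)^a i^{|L|} = 1$, and writing $|L| = 2m$ we get $(-1)^a(-1)^m = 1$, i.e. the sign on $Y_L$ is $(-1)^{|L|/2}$. Second, the signs on $X_L$ and $Z_L$ can be absorbed into $C$ by composing with Pauli corrections on one qubit of $L$, without disturbing the relative sign on $Y_L$ — or more cleanly, one fixes the signs by evaluating both sides against the computational basis state $\ket{0}_L$ and using \cref{prop:stabilizer_definition}, as in \cite{burchardt2024algorithm}. The main obstacle I anticipate is precisely this bookkeeping of signs: making sure the $(-1)^{|G[D_i]|}$ factors from \cref{prop:marginal}, the phases picked up when rewriting $X_{D_i}Z_{Odd(D_i)}$ as a tensor of single-qubit Paulis, and the Clifford conjugation all combine to give exactly $(-1)^{|L|/2}$ on $Y_L$ and $+1$ on $X_L$, $Z_L$ — the parity argument above is the cleanest route around it, since it sidesteps tracking each phase individually and instead uses the global constraint $\sigma_0\sigma_1\sigma_2 = +I_L$ together with the oddness/evenness of $|L|$.
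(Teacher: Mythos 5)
Your proposal is correct and takes essentially the same route as the paper's proof: you invoke \cref{prop:marginal} to write $2^{|L|}\,{Tr}_{V\sm L}(\ket G\bra G)$ as $I_L$ plus the three signed stabilizers supported on $L$, use the fact that every vertex of $L$ is attached to all three Paulis to build a local Clifford $C$ permuting them to $X_L$, $Y_L$, $Z_L$, and fix the sign on $Y_L$ from the group relation among the three stabilizers together with $X_LZ_L=(-1)^{|L|/2}Y_L$ (equivalently $(XYZ)^{\otimes|L|}=(-1)^{|L|/2}I$), exactly as in the paper. The positivity remark is unnecessary but harmless.
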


\begin{proof}
    According to \cref{prop:marginal} (see also \cref{prop:mls_stabilizers} and \cref{prop:MLS2cases}), there exist $D , D' \!\se L$ such that $2^{|L|} {Tr}_{V \sm L} \!\left(\ket G \!\bra G\right) \allowbreak \!=\! (-1)^{|G[D]|}X_D Z_{Odd(D)} \allowbreak +(-1)^{|G[D \Delta D']|}X_{D \Delta D'} Z_{Odd(D \Delta D')} \allowbreak +(-1)^{|G[D']|}X_{D'} Z_{Odd(D')}$. The conjugation of Pauli gates by single-qubit Clifford operators acts as a permutation of the Pauli gates (see Figure \ref{fig:clifford}), up to a sign change. Also, every vertex of $L$ is attached to the 3 Paulis X, Y, Z. Thus, there exists a local Clifford operator $C$ such that $2^{|L|} {Tr}_{V \sm L} \left(\ket G \bra G\right) = C \left( I_L + X_L +(-1)^{b}Y_L +Z_L \right) C^\dagger$ where $b\in\{0,1\}$. Finally, $(-1)^{|L|/2} Y_L \allowbreak =  X_L Z_L \allowbreak = C^\dagger (-1)^{|G[D]|}X_D Z_{Odd(D)} (-1)^{|G[D']|}X_{D'} Z_{Odd(D')} C \allowbreak = C^\dagger (-1)^{|G[D \Delta D']|}X_{D \Delta D'} Z_{Odd(D \Delta D')} C \allowbreak = (-1)^b Y_L$. Thus, $(-1)^b = (-1)^{|L|/2}$.    
\end{proof}

\begin{lemma} \label{lemma:sum_conjugation}
    If $U$ is a single-qubit unitary and $P \in \{X,Y,Z\}$, then $U P U^\dagger = c_{PX} X + c_{PY} Y + c_{PZ} Z$, where the $c_{PP'}$ are complex numbers satisfying $|c_{PX}|^2 + |c_{PY}|^2 + |c_{PZ}|^2 = 1$. Additionally, $U$ is Clifford if and only if, for any $P \in \{X,Y,Z\}$, there is $P' \in \{X,Y,Z\}$ such that $|c_{PP'}| = 1$ (implying $c_{PP''} = 0$ for $P'' \in \{X,Y,Z\} \sm \{P'\}$).
\end{lemma}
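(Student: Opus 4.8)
The plan is to treat the whole statement as a short computation in the Pauli basis of $2\times2$ matrices, using only the elementary facts about Pauli gates recalled earlier (they form a basis, they square to the identity, and they pairwise anticommute) together with the normaliser description of the Clifford group from \cref{subsec:clifford}.

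\textbf{The decomposition and the normalisation identity.} First I would observe that $M \defeq UPU^\dagger$ is Hermitian (since $P^\dagger = P$) and traceless (since $\mathrm{Tr}(UPU^\dagger) = \mathrm{Tr}(P) = 0$). By the proposition stating that $\{I,X,Y,Z\}$ is a basis of the $2\times 2$ complex matrices, there are unique coefficients with $M = aI + c_{PX}X + c_{PY}Y + c_{PZ}Z$, and tracelessness forces $a = 0$; Hermiticity in fact forces the $c_{PP'}$ to be real, though we only need them complex. For the normalisation I would compute $\mathrm{Tr}(M M^\dagger)$ in two ways. On one hand $M M^\dagger = U P U^\dagger U P^\dagger U^\dagger = U P^2 U^\dagger = I$, so the trace is $2$. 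On the other hand, expanding $(c_{PX}X + c_{PY}Y + c_{PZ}Z)(\overline{c_{PX}}X + \overline{c_{PY}}Y + \overline{c_{PZ}}Z)$ and using $X^2 = Y^2 = Z^2 = I$ together with $XY = -YX$, $XZ = -ZX$, $YZ = -ZY$, the diagonal terms contribute $(|c_{PX}|^2 + |c_{PY}|^2 + |c_{PZ}|^2)\,I$ while the cross terms are traceless multiples of single Pauli gates; taking traces gives $|c_{PX}|^2 + |c_{PY}|^2 + |c_{PZ}|^2 = 1$. This already yields the parenthetical claim: if some $|c_{PP'}| = 1$, the other two coefficients must vanish.

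\textbf{The equivalence.} For the forward direction I would use the definition of $\mathcal C$ as the normaliser of $\mathcal P$ (the two definitions in \cref{subsec:clifford} agree up to a global phase, which is irrelevant under conjugation): if $U \in \mathcal C$ then $M \in \mathcal P$, i.e.\ $M = e^{i\theta}P'$ with $P' \in \{I,X,Y,Z\}$; tracelessness gives $P' \neq I$, and Hermiticity of $M$ forces $e^{i\theta} = \pm 1$, so $M = \pm P'$ and $|c_{PP'}| = 1$ (one could also just read this off the look-up table in Figure~\ref{fig:clifford}). For the converse, suppose that for each $P \in \{X,Y,Z\}$ there is $P'$ with $|c_{PP'}| = 1$. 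By the normalisation identity the remaining two coefficients are $0$, so $M = c_{PP'}P'$; Hermiticity of $M$ and $P'$ makes $c_{PP'}$ real, and it has modulus $1$, hence $c_{PP'} = \pm 1$ and $M = \pm P' \in \mathcal P$. Since every element of $\mathcal P$ is a product of a scalar in $\{\pm1,\pm i\}$ — which $U$ fixes under conjugation — and one of $I,X,Y,Z$, it follows that $U\mathcal P U^\dagger \se \mathcal P$, i.e.\ $U$ is Clifford.

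\textbf{Main obstacle.} There is essentially no obstacle: the argument is routine linear algebra. The only points requiring a little care are the bookkeeping around the two (globally-phase-inequivalent) definitions of the single-qubit Clifford group, which is harmless because conjugation ignores global phase, and noting that the ``implying $c_{PP''} = 0$'' clause is a free consequence of the normalisation identity rather than something to prove separately.
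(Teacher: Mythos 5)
Your proof is correct and, for the decomposition and the normalisation identity, follows exactly the paper's argument (expand in the Pauli basis, kill the identity coefficient by tracelessness, and compute $Tr(MM^\dagger)$ two ways). You additionally supply an explicit argument for the Clifford characterisation via the normaliser definition of $\mathcal C$, which the paper's proof states but does not spell out; that addition is sound, including the observation that the global-phase discrepancy between the two definitions of the Clifford group is irrelevant under conjugation.
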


\begin{proof}
    $\{I,X,Y,Z\}$ form a basis of the $2 \times 2$ matrices, thus $U P U^\dagger = c_{PI} I + c_{PX} X + c_{PY} Y + c_{PZ} Z$. As the trace is linear and $Tr(U P U^\dagger) = Tr(X) = Tr(Y) = Tr(Z) = 0$, then $c_{PI} = 0$.  Also, $Tr((U P U^\dagger)(U P U^\dagger)^\dagger) = Tr(P^2) = Tr(I) = 2$ and $Tr((c_{PX} X + c_{PY} Y + c_{PZ} Z)(c_{PX} X + c_{PY} Y + c_{PZ} Z)^\dagger) = |c_{PX}|^2 Tr(I) + |c_{PY}|^2 Tr(I) + |c_{PZ}|^2 Tr(I) = 2(|c_{PX}|^2 + |c_{PY}|^2 + |c_{PZ}|^2)$.
\end{proof}

\begin{proof}[Proof of \cref{prop:mls_clifford}]
    According to \cref{prop:MLS2cases}, $L$ contains an even number of vertices. Thus, we write $L = 2m$ (where $m \gs 2$). Thus, ${Tr}_{V \sm L} \left(U \ket {G_1} \bra {G_1} U^\dagger\right) = {Tr}_{V \sm L} \left(\ket {G_2} \bra {G_2}\right)$. According to \cref{lemma:marginal_mls_dim_2}, \begin{align*}
        U C_1 \left( I_L + X_L +(-1)^{m}Y_L +Z_L \right) C_1^\dagger U^\dagger & = C_2 \left( I_L + X_L +(-1)^{m}Y_L +Z_L \right) C_2^\dagger\\
        U' \left( I_L + X_L +(-1)^{m}Y_L +Z_L \right) U'^\dagger & = I_L + X_L +(-1)^{m}Y_L +Z_L\\
        U' \left(X_L +(-1)^{m}Y_L +Z_L \right) U'^\dagger & = X_L +(-1)^{m}Y_L +Z_L
    \end{align*}
    where $C_1, C_2$ are local Clifford operators and $U' = C_2^\dagger U C_1$. Note that $U'$ is Clifford if and only if $U$ is Clifford. With the notation $U'=e^{i\phi}\bigotimes_{u\in V} U'_u$, we can write, according to \cref{lemma:sum_conjugation}, for $P,P' \in \{X,Y,Z\}$, $U'_u P  {U'}_u^\dagger = c^{(u)}_{PX} X + c^{(u)}_{PY} Y + c^{(u)}_{PZ} Z$ where $|c^{(u)}_{PX}|^2 + |c^{(u)}_{PY}|^2 + |c^{(u)}_{PZ}|^2 = 1$.
    \begin{align*}
        & U' \left(X_L +(-1)^{m}Y_L +Z_L \right) U'^\dagger\\
        & = U' X_L U'^\dagger +(-1)^{m} U' Y_L U'^\dagger + U' Z_L U'^\dagger \\
        & = \sum_{{\textbf P}\in \{X,Y,Z\}^{L}} \left(\prod_{u\in L}c^{(u)}_{X {\textbf P_u}} + (-1)^{m}\prod_{u\in L}c^{(u)}_{Y {\textbf P_u}} + \prod_{u\in L}c^{(u)}_{Z {\textbf P_u}}\right){\textbf P}\\
        &= X_L +(-1)^{m}Y_L +Z_L
    \end{align*}
    As $\{I,X,Y,Z\}$ form a basis of the $2 \times 2$ matrices, 
    $$ \sum_{P'\in \{X,Y,Z\}} \left|\prod_{u\in L}c^{(u)}_{X P'} + (-1)^{m}\prod_{u\in L}c^{(u)}_{Y P'} + \prod_{u\in L}c^{(u)}_{Z P'}\right| = 3$$
    The triangle inequality implies
    $$ \sum_{P, P' \in \{X,Y,Z\}} \prod_{u\in L} |c^{(u)}_{PP'}| \gs 3 $$  
    Conversely, the Hölder's inequality (see for example \cite{hardy1952inequalities}) implies
    \begin{align*}
        \sum_{P, P' \in \{X,Y,Z\}} \prod_{u\in L} |c^{(u)}_{PP'}| &= \sum_{P \in \{X,Y,Z\}} \left( \prod_{u\in L} |c^{(u)}_{PX}|+\prod_{u\in L} |c^{(u)}_{PY}|+\prod_{u\in L} |c^{(u)}_{PZ}| \right)\\
        &\ls \sum_{P \in \{X,Y,Z\}} \left(\prod_{u\in L} \left(|c^{(u)}_{PX}|^{2m}+|c^{(u)}_{PY}|^{2m}+|c^{(u)}_{PZ}|^{2m}\right)\right)^{\frac{1}{2m}}\\
        &\ls \sum_{P \in \{X,Y,Z\}} \left(\prod_{u\in L} \left(|c^{(u)}_{PX}|^2+|c^{(u)}_{PY}|^2+|c^{(u)}_{PZ}|^2\right)\right)^{\frac{1}{2m}} = 3
    \end{align*}
    As $m\gs 2$, this is an equality if and only $|c^{(u,v)}_{PP'}|=$ 0 or 1 for every $u,v,P,P'$.  Thus, according to \cref{lemma:sum_conjugation}, for any vertex $u \in L$, $U_u$ is a Clifford operator.
\end{proof}

Both minimal local sets of dimension 1 and 2 give constraints on local unitaries mapping one graph state to the other.

\begin{proposition}
    \label{lemma:LUconstraints}
    If $G_1 =_{LU} G_2$, then there exists $U=e^{i\phi}\bigotimes_{u\in V} U_u$ such that $\ket{G_2} = U \ket{G_1}$ and:
    \begin{itemize}
        \item $U_u$ is a Clifford operator if  $u$ is of type $\bot$ in both $G_1$ and $G_2$;
        \item $U_u = X(\theta_u) Z^{b_u}$ if $u$ is of type X in both $G_1$ and $G_2$;
        \item  $U_u = Z(\theta_u) X^{b_u}$ if $u$ is of type Z in both $G_1$ and $G_2$.
    \end{itemize}
    with $b_u \in \{0,1\}$. Additionally, if 
    $G_1 =_{LC_r} G_2$, $U$ can be chosen such that every angle satisfies $\theta_u = 0 \bmod \pi/2^r$.
\end{proposition}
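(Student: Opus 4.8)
The plan is to combine the three structural inputs that have already been established: the fact that a vertex of type $\bot$ lying in a dimension-2 minimal local set forces a Clifford unitary (\cref{prop:mls_clifford}), the fact that LU-equivalent graphs share their $\bot$-vertices (\cref{lemma:LUbottom}), and the constraint coming from a single dimension-1 minimal local set (\cref{prop:comU}). First I would fix, by \cref{prop:czc}, a decomposition $\ket{G_2}=e^{i\phi}\bigotimes_{u\in V}U_u\ket{G_1}$ with each $U_u=C_u Z(\theta_u)C'_u$ up to global phase. For a vertex $u$ of type $\bot$ in both graphs, either $u$ lies in a dimension-2 minimal local set of the MLS cover, in which case \cref{prop:mls_clifford} applies directly and gives $U_u$ Clifford (note the cover's minimal local sets of dimension 2 necessarily have at least 4 vertices, since a dimension-2 minimal local set has even size $\geq 2$ and size $2$ would force the two generators to be singletons, which is impossible as their odd neighborhoods would then be nonempty); or $u$ lies in two dimension-1 minimal local sets of the cover attached to anticommuting Paulis $P\neq P'$, and then \cref{prop:comU} forces $U_u P U_u^\dagger$ and $U_u P' U_u^\dagger$ to be (up to sign) Paulis, so conjugation by $U_u$ sends two anticommuting Paulis to Paulis, which by \cref{lemma:sum_conjugation} forces $U_u$ Clifford.

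Next, for a vertex $u$ of type X in both $G_1$ and $G_2$: pick a minimal local set $L\in\mathcal M$ containing $u$. Since $u$ is not $\bot$, $L$ has dimension 1 and $u\in D\sm Odd(D)$ for its generator $D$ (in both graphs, by \cref{cor:localset_invariant}); thus the associated stabilizer of $\ket{G_i}$ has Pauli $X$ at $u$. By \cref{prop:comU}, $U_u X U_u^\dagger = \pm X$, so $U_u$ commutes with $X$ up to sign. Writing $U_u$ in Euler form (\cref{prop:euler}) and imposing $U_u X U_u^\dagger=\pm X$ forces, up to global phase, $U_u = X(\theta_u)$ when the sign is $+$ and $U_u = X(\theta_u)Z$ when the sign is $-$; either way $U_u = X(\theta_u)Z^{b_u}$. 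The type-Z case is identical with the roles of $X$ and $Z$ swapped (equivalently, conjugate everything by $H$). One then replaces the original $U$ by the new one with these normalized single-qubit factors — this is legitimate because at each vertex the two descriptions differ only by a stabilizer of the graph state, and multiplying $\ket{G_1}$ by a stabilizer of $\ket{G_1}$ (resp. adjusting the output side by a stabilizer of $\ket{G_2}$) does not change the state.

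Finally, for the $\mathrm{LC}_r$ refinement: if $G_1=_{\mathrm{LC}_r}G_2$, then by \cref{prop:lcr_equals_clifford_hierarchy} and \cref{prop:czc} we may already take $U_u = C_u Z(\theta_u)C'_u$ with $\theta_u\equiv 0 \bmod \pi/2^r$; the normalization performed above at type-X and type-Z vertices only multiplies $U_u$ by Clifford/Pauli factors, hence preserves $Z(\theta_u)$ up to Clifford and keeps the angle in $\frac{\pi}{2^r}\mathbb Z$ — concretely one checks $X(\theta)Z = Z X(-\theta)$ up to phase, and that $X(\theta_u)$ with $\theta_u\in\frac\pi{2^r}\mathbb Z$ already lies in the relevant class. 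I expect the main obstacle to be the bookkeeping in the normalization step: showing that one can simultaneously, across \emph{all} vertices, replace the given local unitary by one in the stated form. The point is that at each vertex the ambiguity is exactly by a Pauli (on the $G_1$ side) or a stabilizer Pauli of $\ket{G_1}$, and by \cref{prop:stabilizer_definition} a tensor product of Paulis is a stabilizer iff it is of the graph-state form, so the per-vertex Pauli corrections can be absorbed globally — but making this precise, and verifying it does not clash with the Clifford conclusion already obtained at $\bot$-vertices, is the delicate part; everything else is a short computation with Euler decompositions and \cref{prop:comU}.
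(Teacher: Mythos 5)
Your overall route is the same as the paper's: use \cref{prop:comU} at a dimension-1 minimal local set to force $U_u P U_u^\dagger=\pm P$ and hence $U_u=X(\theta_u)Z^{b_u}$ (resp.\ $Z(\theta_u)X^{b_u}$) for type X (resp.\ Z), and for type $\bot$ either invoke \cref{prop:mls_clifford} or the two-anticommuting-Paulis argument. One concrete step is wrong, however: your claim that a dimension-2 minimal local set in the cover ``necessarily has at least 4 vertices'' because a size-2 set would force singleton generators ``which is impossible as their odd neighborhoods would then be nonempty.'' A generator $D$ only needs to satisfy $D\cup Odd(D)=L$; it need not have empty odd neighborhood. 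Indeed $L=\{u,v\}$ with generators $\{u\},\{v\},\{u,v\}$ does occur: it forces $N(u)=\{v\}$ and $N(v)=\{u\}$, i.e.\ a $K_2$ connected component. In that case \cref{prop:mls_clifford} does not apply (its hypothesis is $|L|\geq 4$, and its proof genuinely uses $m\geq 2$ in the H\"older step), so the case must be handled separately — the paper does so by observing that the only graph LU-equivalent to $K_2$ is $K_2$ itself, so one may take $U_u=I$ there. Your conclusion survives, but the justification for skipping the case is false and the case needs its own one-line argument.

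On the ``delicate bookkeeping'' you flag at the end: there is no real replacement step to worry about. The constraints from \cref{prop:comU} pin down each $U_u$ up to a scalar phase directly, and those per-vertex phases are exactly what the global $e^{i\phi}$ in the statement absorbs; no stabilizer insertions are needed, and the Clifford conclusion at $\bot$-vertices is a property of $U_u$ up to phase, so nothing clashes. Your treatment of the $\mathrm{LC}_r$ refinement via \cref{prop:czc} and \cref{prop:lcr_equals_clifford_hierarchy} is fine and is in fact slightly more explicit than the paper's.
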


\begin{proof}
    If $u$ is of type X, let $L$ be a minimal local set such that $u \in L$. According to \cref{prop:comU}, $U_u X U_u^{\dagger} = e^{i \phi} X$. $U_u \ket +$ and $U_u \ket -$ are eigenvectors of $U_u X U_u^{\dagger}$ with eigenvalues respectively 1 and -1, implying $U_u X U_u^{\dagger} = \pm X$. If $U_u X U_u^{\dagger} = X$, then $U_u = e^{i \phi} X(\theta)$. If $U_u X U_u^{\dagger} = -X$, define ${U'} \defeq U_u Z$. ${U'} X {U'}^{\dagger} = X$, so ${U'} = e^{i \phi} X(\theta)$, thus $U_u = e^{i \phi} X(\theta) Z$. The proof is similar when $u$ is of type Z.

    If $u$ is of type $\bot$, $u$ appears in a minimal local set of dimension 2, or it appears in two minimal local sets of dimension 1, but with different Paulis. If $u$ appears in a minimal local set of dimension 2 containing at least 4 vertices, according to \cref{prop:mls_clifford}, $U_u$ is Clifford. If $u$ appears in a minimal local set of dimension 2 containing exactly 2 vertices $u$ and $v$, then $u$ is part of a connected component containing exactly 2 adjacent vertices. Indeed, the only possible generators are $\{u\}$, $\{v\}$, and $\{u,v\}$, implying that $u$ is the only neighbor of $v$, and vice versa. As the only graph LU-equivalent to $K_2$ (the complete graph with two vertices) is $K_2$ itself, $U_u$ (and $U_v$) can be chosen to be the identity without loss of generality.

    Suppose now that $u$ appears in two minimal local sets $L$ and $L'$ of dimension 1 with different Paulis. Then ${U} P_1 {U}^{\dagger} = \pm P_2$ and ${U} P'_1 {U}^{\dagger} = \pm P'_2$ with Paulis $P_1, P'_1, P_2, P'_2$ such that $P_1 \neq P'_1$ and thus $P_2 \neq P'_2$. Also, $U P_1 P'_1 U^\dagger = \pm P_2 P'_2$. This means that $U_u$ is Clifford.
\end{proof}

\subsection{Standard form}

When a local complementation is applied on a vertex $u$, its type remains unchanged if it is X or $\bot$, while types Y and Z are swapped. For the neighbors of $u$, types Z and  $\bot$ remain unchanged, whereas  X and  Y are exchanged. This leads to a notion of standard form:

\begin{definition}
    A graph $G$ is in standard form with respect to a MLS cover $\mathcal M$ if:
    \begin{itemize}
        \item There are no vertices of type Y with respect to $\mathcal M$;
        \item For every vertex $u$ of type X with respect to $\mathcal M$, any neighbor $v$ of $u$ is of type Z with respect to $\mathcal M$ and satisfies $u < v$, in particular the vertices of type X with respect to $\mathcal M$ form an independent set;
        \item For every vertex $u$ of type X with respect to $\mathcal M$, $\{u\} \cup N_G(u) \in \mathcal M$.
    \end{itemize}
\end{definition}

\noindent
An illustration of the standard form is given in Figure \ref{fig:standard_form}.

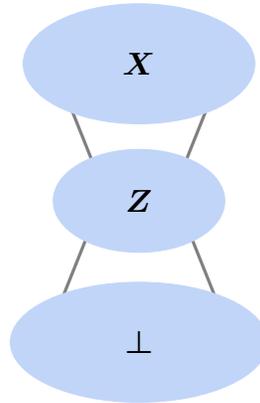
\begin{figure}[H]
    \centering
    
    \scalebox{1}{    
    \begin{tikzpicture}[scale = 0.57]

        %%%%%%%%%% X(alpha), Z(beta), I %%%%%%%%%%

        \begin{scope}[shift={(18,13)}]

        \draw[gray, very thick] (-2,0) -- (-1,-2.5);
        \draw[gray, very thick] (2,0) -- (1,-2.5);

        \begin{scope}[shift={(0,-3.5)}]
        \draw[gray, very thick] (-1,0) -- (-2,-2.5);
        \draw[gray, very thick] (1,0) -- (2,-2.5);
        \end{scope}

        \draw[cornflowerblue!40,fill=cornflowerblue!40] (0,0) ellipse (2.7cm and 1.4cm);
        \draw[cornflowerblue!40,fill=cornflowerblue!40] (0,-3.2) ellipse (2cm and 1.2cm);
        \draw[cornflowerblue!40,fill=cornflowerblue!40] (0,-6.5) ellipse (3cm and 1.4cm);

        \draw (0,0) node[text=black](){$\boldsymbol{X}$};
        \draw (0,-3.2) node[text=black](){$\boldsymbol{Z}$};
        \draw (0,-6.5) node[text=black](){$\boldsymbol{\bot}$};

        \end{scope}

    \end{tikzpicture}    
    }
    
    \caption{Illustration of a graph in standard form. Vertices of type X form an independent set, and are only adjacent to vertices of type Z. There are no vertices of type Y.}
    \label{fig:standard_form}
\end{figure}

\begin{remark}This notion of standard form is a generalization of the one introduced in \cite{claudet2024local}, where the MLS cover considered consists of every minimal local set of the graph: $\mathcal M_\text{max} \defeq \{L \se V ~|~ L \text{~is a minimal local set}\}$. Since there can be exponentially many minimal local sets, using $\mathcal M_\text{max}$ does not lead to an efficient procedure, for instance when computing the type of each vertex.
\end{remark}

Given a pair of LU-equivalent graphs, one can efficiently compute a (common) MLS cover and put both graphs in standard form with respect to this MLS cover, by means of local complementations:

\begin{lemma} \label{lemma:standardform}
    There exists an efficient algorithm that takes as inputs two graphs $G_1$ and $G_2$% of order $n$
    , and either claim that they are not LU-equivalent, or compute an MLS cover $\mathcal M$ and two graphs $G'_1 =_{LC} G_1$ and $G'_2 =_{LC} G_2$, such that $G'_1$ and $G'_2$ are both in standard form with respect to $\mathcal M$, in runtime $O(n^{6.38})$.
\end{lemma}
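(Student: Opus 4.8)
The plan is to build the algorithm in three phases, each of which is already shown to be efficient by earlier results. \textbf{Phase 1: compute a common MLS cover.} Since $G_1 =_{LU} G_2$ implies $G_1$ and $G_2$ have the same cut-rank function (\cref{prop:cutrank_lu}), and minimal local sets depend only on the cut-rank function (\cref{prop:characMLS}), any MLS cover computed for $G_1$ is also an MLS cover for $G_2$. So I would run the algorithm of \cref{thm:mls_cover_algo} on $G_1$ to obtain an MLS cover $\mathcal M$ in $O(n^{6.38})$ time. \textbf{Phase 2: compute the type of each vertex with respect to $\mathcal M$.} For each $L \in \mathcal M$, I would determine its dimension (1 or 2) via the cut-rank function (\cref{prop:MLS2cases}: $\dim = |L| - \cutrk(L)$), and for dimension-1 sets recover the unique generator $D$ by solving the linear system defining the kernel of $\lambda_L$; then each vertex $u \in L$ is assigned a Pauli label X, Y, or Z according to whether $u \in D\sm Odd(D)$, $u \in D \cap Odd(D)$, or $u \in Odd(D)\sm D$. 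Aggregating over all $L \in \mathcal M$ containing $u$, $u$ gets type X/Y/Z if all labels agree, and $\bot$ otherwise (also $\bot$ if $u$ lies in any dimension-2 set). Since $|\mathcal M| \le n$ (one set per uncovered vertex in the cover algorithm) and each generator computation costs a bounded number of cut-rank / linear-algebra operations, this phase is $O(\mathrm{poly}(n))$, dominated by Phase 1.

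\textbf{Phase 3: put both graphs in standard form by local complementations.} Here I would use the fact (stated in the text preceding the definition of standard form) that local complementation acts predictably on types: complementing at $u$ fixes type X and $\bot$ at $u$ and swaps Y$\leftrightarrow$Z at $u$, while at a neighbor $v$ it fixes Z and $\bot$ and swaps X$\leftrightarrow$Y. First, to kill type-Y vertices: iterate through vertices of type Y; complementing at a type-Y vertex $u$ turns it into type Z. One must check this process terminates — complementing at $u$ may turn some X-neighbors into Y. The standard argument (essentially the one behind the classical "star-form"/"canonical graph" reductions, e.g. \cite{VandenNest04}) is to process vertices in a fixed order and observe that once a vertex is type X or type Z with all its X-neighbors before it, it is never touched again; alternatively one argues directly that the set of type-X vertices can be made independent and each placed before its Z-neighbors, using that a minimal local set of dimension 1 through a type-X vertex $u$ equals $\{u\}\cup N(u)$ up to local complementation (\cref{prop:dloc_localset}-style reasoning, \cite{Perdrix06}). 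Since each minimal local set is LU-invariant (\cref{cor:localset_invariant}) and local complementations are LC-equivalences, the resulting $G'_1$ and $G'_2$ have the same MLS cover $\mathcal M$ with the same types, so applying the same reduction to both yields graphs both in standard form with respect to $\mathcal M$. The number of local complementations is $O(n)$ and each costs $O(n^2)$, so Phase 3 is $O(n^3)$, again dominated by Phase 1, giving the claimed $O(n^{6.38})$ bound. The "claim not LU-equivalent" escape hatch triggers if the cut-rank functions of $G_1$ and $G_2$ differ, which can be checked in $O(n^{6.38})$ by comparing $\cutrk$ on a spanning family of subsets, or more cheaply via the generator-matrix formalism.

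The main obstacle I anticipate is Phase 3: proving that the type-normalization procedure actually terminates and produces a genuine standard form, i.e. that one can simultaneously (a) eliminate all type-Y vertices, (b) make type-X vertices an independent set each preceding its (necessarily type-Z) neighbors in the order $<$, and (c) arrange $\{u\}\cup N_{G'}(u) \in \mathcal M$ for every type-X vertex $u$. Conditions (b) and (c) interact: complementations done to satisfy (b) may disturb the neighborhood structure needed for (c), and conversely. The clean way around this is to process type-X vertices in increasing order, for each such $u$ first apply (if needed) a local complementation to realize $\{u\}\cup N_{G'}(u)$ as the dimension-1 minimal local set in $\mathcal M$ through $u$ — which exists and can be reached by \cite{Perdrix06}-type arguments — and then complement at any neighbor $v$ of $u$ with $v < u$ to push $v$ out of $N(u)$ or fix its order, checking via the type-update rules that already-processed vertices keep their type and structure. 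The bookkeeping that no earlier-processed vertex is ever disturbed is the delicate combinatorial core; everything else reduces to invoking \cref{thm:mls_cover_algo}, \cref{prop:MLS2cases}, \cref{prop:cutrank_lu}, and \cref{cor:localset_invariant}, together with the $O(n^\omega)$ cost of cut-rank evaluations with $\omega < 2.38$.
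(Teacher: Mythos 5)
Your Phases 1 and 2 match the paper's proof: compute an MLS cover of $G_1$ via \cref{thm:mls_cover_algo}, verify it is also an MLS cover of $G_2$ (this is the "not LU-equivalent" escape hatch — note that your alternative of "comparing $\cutrk$ on a spanning family of subsets" is not well-defined, since the cut-rank function is not determined by its values on any small distinguished family; the correct check is simply that each $L \in \mathcal M$ is a minimal local set of the same dimension in $G_2$), and compute types by solving for generators with Gaussian elimination. The cost analysis is also right.

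The gap is in Phase 3, and it is exactly the part you yourself flag as "the delicate combinatorial core" without resolving it. Two things are missing. First, the termination of the Y-elimination: a local complementation at a type-Y vertex turns its type-X neighbors into type Y, so naive iteration can cycle. The paper's fix is an explicit ordering of moves — first remove all XX-edges by pivoting, then XY-edges by complementing at the X endpoint, then isolated Y vertices — under which the potential $2|V_Y| + |V_X|$ (summed over both graphs) strictly decreases; your sketch never exhibits such a decreasing quantity, and the "process in a fixed order" argument you borrow from the canonical-form literature does not obviously apply because types, unlike adjacency, are global invariants of the MLS cover. Second, and more importantly, your plan for condition (c) — "apply a local complementation to realize $\{u\}\cup N_{G'}(u)$ as the dimension-1 minimal local set in $\mathcal M$ through $u$" — is not what is needed and is not shown to be compatible with conditions (a) and (b). The paper's mechanism is different: if $\{u\}\cup N(u)$ is not already a dimension-1 minimal local set, one finds a strictly smaller minimal local set inside it, \emph{adds it to $\mathcal M$} (which can only demote vertices to type $\bot$, so the count $|V_\bot|$ strictly increases, giving termination of the outer loop), and restarts the edge-cleaning steps; at the very end, the closed neighborhoods $\{u\}\cup N(u)$ of the surviving type-X vertices are themselves appended to $\mathcal M$ so that the third standard-form condition holds by fiat. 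In other words, the cover $\mathcal M$ is an output that gets augmented during the algorithm, not a fixed object that the graphs must be contorted to fit — and this is what dissolves the (b)/(c) interaction you correctly identify as problematic but leave unresolved.
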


\begin{proof}
    To prove the proposition, we introduce an algorithm that transforms the input graphs $G_1, G_2$ into graphs in standard form with respect to the same MLS cover by means of local complementations. The notation $\wedge $ refers to the pivoting operation: $G\wedge uv \defeq G \star u \star v = G \star v \star u$.  
    The action of the local complementation and the pivoting on the type of the vertices is given in the following table (the types of the unwritten vertices remain unchanged). 
    \begin{center}
    \begin{tabular}{|c|c|}
    \hline
    \multicolumn{2}{|c|} {Type of $u$ in}\\
    $~~G~~$& $G\star u$\\
    \hline
    X&X\\
    \hline
    Y&Z\\
    \hline
    Z&Y\\
    \hline
    $\bot$&$\bot$\\
    \hline
    \end{tabular}\qquad\begin{tabular}{|c|c|}
    \hline
    \multicolumn{2}{|c|}{Type of $v {\in}N_G(u)$ in}\\
    $~~~G~~~$& $G\star u$\\
    \hline
    X&Y\\
    \hline
    Y&X\\
    \hline
    Z&Z\\
    \hline
    $\bot$&$\bot$\\
    \hline
    \end{tabular}
    \qquad \begin{tabular}{|c|c|}
    \hline
    \multicolumn{2}{|c|}{Type of $u$ (or $v$) in}\\
    $~~G~~$&$G\wedge uv$\\
    \hline
    X&Z\\
    \hline
    Y&Y\\
    \hline
    Z&X\\
    \hline
    $\bot$&$\bot$\\
    \hline
    \end{tabular}
    \end{center}

    The algorithm reads as follows:

    \begin{enumerate}
        \item Compute an MLS cover $\mathcal M$ of $G_1$ (see \cref{thm:mls_cover_algo}). If $\mathcal M$ is not an MLS cover of $G_2$, then $G_1$ and $G_2$ are not LU-equivalent.
        \item If there is an XX-edge (i.e.~an edge $uv$ such that both $u$ and $v$ are of type X with respect to $\mathcal M$) in $G_1$ or $G_2$: apply a pivoting on it.\\
        Repeat until there is no XX-edge left.
        \item If there is an XY-edge in $G_1$ or $G_2$: apply a local complementation on the vertex of type X, then go to step 2.
        \item If there is a vertex of type Y in $G_1$ or $G_2$: apply a local complementation on it, then go to step 2.
        \item If there is an X$\bot$-edge in $G_1$ or $G_2$: apply a pivoting on it.\\
        Repeat until there is no X$\bot$-edge left.

        \item If there is an XZ-edge $uv$ in $G_1$ or $G_2$ such that $v < u$: apply a pivoting on $u v$.\\
        Repeat until for every XZ-edge $uv$ in $G_1$ or $G_2$, $u < v$.
 
        \item If there is a vertex $u$ of type X in $G_1$ (resp. $G_2$) such that $\{u\} \cup N_{G_1}(u)$ (resp.  $\{u\} \cup  N_{G_2}(u)$) is not a minimal local set of dimension 1: find a minimal local set $M$ contained in $\{u\} \cup N_{G_1}(u)$ (resp.  $\{u\} \cup  N_{G_2}(u)$) and check that $M$ is a minimal local set of same dimension in both graphs (if not, they are not LU-equivalent). If this is the case, add $M$ to $\mathcal M$ then go to step 5.

        \item For every vertex $u$ of type X in $G_1$ (resp. $G_2$), add $\{u\} \cup N_{G_1}(u)$ (resp.  $\{u\} \cup N_{G_2}(u)$) to $\mathcal M$.
    \end{enumerate}

    \subparagraph{Correctness.} When step 2 is completed, there is no XX-edge. Step 3 transforms the neighbors of type Y into vertices of type Z. No vertex of type Y is created as there is no XX-edge before the local complementation. When step 3 is completed, there is no XX-edge nor any XY-edge. Step 4 transforms the vertex of type Y into a vertex of type Z. No vertex of type Y is created as there is no XY-edge before the local complementation. When step 4 is completed, there is no vertex of type Y nor any XX-edge. In step 5, applying a pivoting on an X$\bot$-edge transforms the vertex of type X into a vertex of type Z. No XX-edge is created, as the vertex of type X has no neighbor of type X before the pivoting. When step 5 is completed, there is no vertex of type Y and each neighbor of a vertex of type X is of type Z. In step 6, applying a pivoting on an XZ-edge permutes the type of the two vertices, and preserves the fact that each neighbor of a vertex of type X is of type Z. In step 7, adding a minimal local set to  $\mathcal M$ may only change the type of some vertices to $\bot$. When step 7 is completed, for every vertex $u$ of type X in $G_1$ (resp. $G_2$), $\{u\} \cup N_{G_1}(u)$ (resp.  $\{u\} \cup N_{G_2}(u)$) is a minimal local set. Thus, in step 8, adding those minimal local sets leave the types invariant. When step 8 is completed, $G_1$ and $G_2$ are in standard form with respect to $\mathcal M$.
    
    \subparagraph{Termination.} The quantity $2|V^{G_1}_Y|+|V^{G_1}_X| + 2|V^{G_2}_Y|+|V^{G_2}_X|$, where $|V^{G_i}_X|$ (resp. $|V^{G_i}_Y|$) denotes the number of vertices of type X (resp. Y) with respect to $\mathcal M$ in $G_i$, strictly decreases at steps 2 to 5, which guarantees to reach step 6. At step 6, $V^{G_i}_X$ is updated as follows: exactly one vertex $u$ is removed from the set and is replaced by a vertex $v$ such that $v < u$, which guarantees the termination of step 6. Each time a minimal local set is added to $\mathcal M$ in step 7, at least one vertex not of type $\bot$ becomes of type $\bot$. Indeed, without loss of generality, let $M$ be a minimal local set in $\{u\} \cup N_{G_i}(u)$, assuming $\{u\} \cup N_{G_i}(u)$ is not a minimal local set of dimension 1 generated by $\{u\}$. If $M$ is of dimension 2, every vertex of $M$ becomes of type $\bot$ when adding $M$ to $\mathcal M$. Else, $M$ is generated by a set containing at least one vertex of type Z, which becomes of type $\bot$ when adding $M$ to $\mathcal M$.

    \subparagraph{Complexity.} The time-complexity of the algorithm is given by the time-complexity of step 1, as it is asymptotically the most computationally expensive step. The MLS cover can be computed in runtime $O(n^{6.38})$, where $n$ is the order of the graph. It should be noted that giving the type of a vertex with respect to some minimal local set can be done in time-complexity $O(n^3)$. Indeed, in the case of a minimal local set $L$ of dimension 1, finding $D$ such that $L = D \cup Odd_G(D)$ reduces to finding the kernel of some matrix with coefficients in $\mathbb F_2$ (see details in \cite{claudet2024covering}), which can be done using Gaussian elimination.
\end{proof}

Standard form with respect to a common MLS cover implies some strong similarities in the structure of graphs:

\begin{lemma} \label{lemma:same_types}
    If two graphs $G_1 = _{LU} G_2$ are in standard form (with respect to the same MLS cover $\mathcal M$), then every vertex has the same type in $G_1$ and $G_2$, and  every vertex $u$ of type X satisfies $N_{G_1}(u) = N_{G_2}(u)$. 
\end{lemma}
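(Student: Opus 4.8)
The plan is to exploit the invariants already established: that the set of type-$\bot$ vertices is LU-invariant (\cref{lemma:LUbottom}), and that the constraints of \cref{lemma:LUconstraints} pin down the shape of the local unitaries on type-X and type-Z vertices. Fix a local unitary $U = e^{i\phi}\bigotimes_{u\in V} U_u$ with $\ket{G_2} = U\ket{G_1}$, chosen (by \cref{lemma:LUconstraints}) so that $U_u$ is Clifford when $u$ is of type $\bot$, $U_u = X(\theta_u)Z^{b_u}$ when $u$ is of type X, and $U_u = Z(\theta_u)X^{b_u}$ when $u$ is of type Z; this choice is legitimate since both graphs are in standard form with respect to $\mathcal M$ and there are no type-Y vertices. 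The first step is to record that the type of every vertex is already the same in $G_1$ and $G_2$: for type $\bot$ this is \cref{lemma:LUbottom}; for the remaining vertices it follows because being in standard form forbids type Y, and because $G_1,G_2$ share the same cut-rank function (\cref{prop:cutrank_lu}), hence the same MLS structure, and a vertex's membership in minimal local sets of a fixed $\mathcal M$ together with the attached Paulis is not independently free — but more simply, by hypothesis both are in standard form with respect to the \emph{same} $\mathcal M$, so "type of $u$" is computed from $\mathcal M$ in an identical way for both graphs; I would just observe that the type is a function of $\mathcal M$ and the set of generators, and the set of generators of each $L\in\mathcal M$ is LU-invariant by \cref{cor:localset_invariant}. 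This settles the first claim cleanly.

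For the second claim — that $N_{G_1}(u) = N_{G_2}(u)$ for every type-X vertex $u$ — the plan is to use \cref{prop:comU} applied to the minimal local set $L_u \defeq \{u\}\cup N_{G_i}(u)$, which by the third bullet of the standard-form definition lies in $\mathcal M$ and is of dimension 1 with generator $\{u\}$ in \emph{both} graphs. Indeed, in $G_1$ the generator of $L_u$ containing $u$ with $u$ of type X is $D_1 = \{u\}$ (so $L_u = \{u\}\cup N_{G_1}(u) = \{u\}\cup Odd_{G_1}(\{u\})$), and similarly in $G_2$. But wait — a priori the local set $\{u\}\cup N_{G_1}(u)$ in $G_1$ need not equal $\{u\}\cup N_{G_2}(u)$ in $G_2$; the point is that $L_u$, as a member of $\mathcal M$, is one fixed subset of $V$, and since $u$ is of type X, any generator of $L_u$ containing $u$ must satisfy $u\in D\sm Odd(D)$, and by minimality of dimension 1 the unique generator is forced. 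One has to argue $\{u\}$ is that generator in each graph: since $\{u\}\cup N_{G_i}(u)$ is a local set (generated by $\{u\}$) contained in $L_u$ and $L_u$ is minimal, $\{u\}\cup N_{G_i}(u) = L_u$, so $N_{G_1}(u) = L_u\sm\{u\} = N_{G_2}(u)$ directly. So actually the second claim follows immediately from the third standard-form bullet plus minimality, and I do not even need \cref{prop:comU} for it — the bullet says $\{u\}\cup N_{G_i}(u)\in\mathcal M$ for $i=1,2$, and since $\mathcal M$ is a common MLS cover, these are the \emph{same} set $L_u$, whence the neighborhoods coincide.

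So the proof structure is: (1) invoke \cref{cor:localset_invariant} to get that $\mathcal M$ consists of the same subsets with the same generators in both graphs, hence every vertex has the same type; (2) for a type-X vertex $u$, use the standard-form condition $\{u\}\cup N_{G_i}(u)\in\mathcal M$ for both $i$, and conclude $N_{G_1}(u)=N_{G_2}(u)$ because $\mathcal M$ is a single fixed family. The main subtlety — and the step I would write most carefully — is making precise why "same type" is forced: one must check that the generators of a minimal local set $L\in\mathcal M$ are genuinely the same \emph{subsets} of $V$ in $G_1$ and $G_2$, not merely the same in number. This follows because $Odd_{G_1}$ and $Odd_{G_2}$ agree on the kernel computation: a set $D\se L$ generates $L$ in $G_i$ iff $D\cup Odd_{G_i}(D) = L$, and while $Odd_{G_1}\neq Odd_{G_2}$ globally, the condition "$D$ is a generator of $L$" translates, via \cref{prop:corres_mls_cutrankalt} and the cut-rank equality, into a statement about the kernel of $\lambda_L$ which is the \emph{same} linear map data encoded by the cut-rank — here I would lean on the fact that the generators of a local set, as a coset structure, are determined by $\cutrk$ together with the set $L$ itself. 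Actually the cleanest route avoids this entirely: type X/Z/$\bot$ membership for $u\in L$ is equivalent (by \cref{prop:mls_stabilizers} and the discussion after \cref{def:type}) to which Pauli is attached to $u$ in the stabilizer supported on $L$, and \cref{prop:comU} shows $U_u$ conjugates the $G_1$-stabilizer on $L$ to the $G_2$-stabilizer on $L$; since the constraint of \cref{lemma:LUconstraints} forces $U_u$ to preserve X (resp.\ Z) when $u$ is of type X (resp.\ Z), the attached Pauli is unchanged, giving the same type. I would present this last argument as the rigorous core and keep step (2) as the short concluding observation.
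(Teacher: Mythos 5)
Your argument for the first claim (``every vertex has the same type'') rests on a false invariance. \cref{cor:localset_invariant} guarantees that LU-equivalent graphs have the same local sets with the same \emph{number} of generators; it does not say that the generators are the same \emph{subsets}, and in general they are not --- this is precisely why types are not LU-invariant for arbitrary graphs and why the lemma needs the standard-form hypothesis at all. Concretely, in the path $1{-}2{-}3$ the set $\{1,2\}$ is a minimal local set with unique generator $\{1\}$ (vertex $1$ of type X, vertex $2$ of type Z), while in the LC-equivalent triangle the same set $\{1,2\}$ has unique generator $\{1,2\}$ (both vertices of type Y). The cut-rank function only determines the dimension of the kernel of $\lambda_L$, not the kernel itself, so the appeal to \cref{prop:corres_mls_cutrankalt} does not rescue the claim. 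Your alternative ``cleanest route'' via \cref{lemma:LUconstraints} is circular: that lemma's conclusions are stated for a vertex of type X (resp.\ Z) \emph{in both} $G_1$ and $G_2$, which is exactly what is to be proved. What actually forces the types to agree --- and what your proposal never uses --- is the \emph{ordering} condition in the definition of standard form. If $u$ were of type X in $G_1$ but of type Z in $G_2$ (type Y is excluded by standard form, type $\bot$ by \cref{lemma:LUbottom}), then $u$ is the minimum of $L\defeq\{u\}\cup N_{G_1}(u)\in\mathcal M$; in $G_2$ this same $L$ has a generator $D$ with $u\in Odd_{G_2}(D)\sm D$, so $u$ is adjacent in $G_2$ to some $v\in D\se N_{G_1}(u)$; such a $v$ cannot be of type Z (it lies in a generator), nor Y, nor $\bot$, so it is of type X in $G_2$, and the standard form of $G_2$ forces $v<u$, contradicting $u=\min L$.

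The second claim also has a gap as written: ``$\{u\}\cup N_{G_1}(u)$ and $\{u\}\cup N_{G_2}(u)$ both lie in $\mathcal M$, hence they are the same set'' is a non sequitur, since $\mathcal M$ may contain several minimal local sets through $u$; and the containment $\{u\}\cup N_{G_2}(u)\se L_u$ that you invoke in the intermediate step is unjustified. The correct argument fixes the single set $L=\{u\}\cup N_{G_1}(u)\in\mathcal M$, which is a minimal local set of $G_2$ with some generator $D\se L$; since every vertex of $L\sm\{u\}=N_{G_1}(u)$ is of type Z (by standard form of $G_1$ and the first part), and a type-Z vertex lies outside every generator of every member of $\mathcal M$ containing it, the non-empty set $D$ must equal $\{u\}$, whence $L=\{u\}\cup Odd_{G_2}(\{u\})=\{u\}\cup N_{G_2}(u)$ and the neighborhoods coincide. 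So the conclusion is reachable, but only after the type statement is properly established and with the generator analysis you skipped.
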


\begin{proof}

    First, by \cref{lemma:LUbottom}, $G_1$ and $G_2$ have the same vertices of type $\bot$. Let $u$ be a vertex of type X in $G_1$ and suppose by contradiction that $u$ is of type Z in $G_2$. By definition of the standard form, $L \defeq \{u\} \cup N_{G_1}(u)$ is a minimal local set in $\mathcal M$ and $u$ is the smallest element in $L$ (in both $G_1$ and $G_2$). Moreover, $L = D \cup Odd_{G_2}(D)$ and $u \in Odd_{G_2}(D) \sm D$ by definition of type Z. Thus, $u$ is adjacent to at least one vertex $v \in D$ in $G_2$. $v$ is of type X in $G_2$ as, by definition of the standard form, there is no vertex of type Y, and $N_{G_1}(u)$ contains no vertex of type $\bot$. This is a contradiction with the definition of the standard form, as $u < v$. By symmetry, any vertex of type X in $G_2$ is also of type X in $G_1$. Overall, each vertex has the same type in $G_1$ and $G_2$. By definition of the standard form, for any vertex $u$ of type X, $L \defeq \{u\} \cup N_{G_1}(u)$ is a minimal local set in $\mathcal M$, thus there exists $D\subseteq L$ such that $L = D\cup Odd_{G_2}(D)$. As $D$ is non-empty and does not contain any vertex of type Z, $D=\{u\}$, implying that $N_{G_1}(u) = N_{G_2}(u)$.
\end{proof}

After performing the algorithm described in \cref{lemma:standardform}, one can check in quadratic time\footnote{In the order of the graphs, assuming the information of the types of the vertices with respect to the MLS cover is conserved.} whether each vertex has the same type  in $G_1$ and $G_2$, and whether every vertex of type X has the same neighborhood in both graphs. If either condition is not met, the graphs are not LU-equivalent.

\section{Capturing LU-equivalence}

Following \cref{lemma:same_types}, as two LU-equivalent graphs in standard form (with respect to the same MLS cover) have the same types, we use the notation $V_X$, $V_Z$ and $V_\bot$ for the set of vertices of type X, Z and $\bot$ respectively. 

Thanks to the standard form, one can accommodate the types of two LU-equivalent graphs, to simplify the local unitaries mapping one  to the other:

\begin{lemma}\label{lemma:standardform_implies_rotations}
    If $G_1 =_{LU} G_2$ are both in standard form (with respect to the same MLS cover),  
    there exists $G'_1 =_{LC} G_1$ in standard form such that $\ket {G_2} = \bigotimes_{u\in V_X}X(\alpha_u)\bigotimes_{v\in V_Z} Z(\beta_v) \ket{G'_1}$.
\end{lemma}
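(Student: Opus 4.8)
\textit{Proof plan.} The plan is to peel off, from the local unitary supplied by \cref{lemma:LUconstraints}, everything that is not an $X$- or $Z$-rotation, and to absorb the Clifford remainder into the choice of LC-representative $G_1'$.

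First I would invoke \cref{lemma:same_types} (so that $V_X$, $V_Z$ and $V_\bot$ are unambiguous and common to $G_1$ and $G_2$) together with \cref{lemma:LUconstraints}, obtaining $U=e^{i\phi}\bigotimes_{u\in V}U_u$ with $\ket{G_2}=U\ket{G_1}$, where $U_u=X(\theta_u)Z^{b_u}$ on $V_X$, $U_u=Z(\theta_u)X^{b_u}$ on $V_Z$, and $U_u$ a single-qubit Clifford on $V_\bot$. Using $X(\theta)Z=e^{i\theta}ZX(-\theta)$ and $Z(\theta)X=e^{i\theta}XZ(-\theta)$, I would rewrite $U$, up to a global phase, as $U=R\,C$, where $R:=\bigotimes_{u\in V_X}X(\theta_u)\bigotimes_{v\in V_Z}Z(\theta_v)$ acts as the identity on $V_\bot$, and $C:=\bigotimes_{u\in V_X}Z^{b_u}\bigotimes_{v\in V_Z}X^{b_v}\bigotimes_{u\in V_\bot}U_u$ is a local Clifford operator. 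Then $\ket{G_2}=R\,(C\ket{G_1})$, so it suffices to prove that $C\ket{G_1}$ is the state of a graph $G_1'$ that is LC-equivalent to $G_1$ and in standard form with respect to $\mathcal M$: since $R$ is already of the desired shape, $\ket{G_2}=R\ket{G_1'}$ then gives the claim with $\alpha_u=\theta_u$ and $\beta_v=\theta_v$ (up to the adjustments below).

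The core step, and the one I expect to be the main obstacle, is showing that this local Clifford $C$ can be genuinely absorbed into the graph while preserving standard form. Two observations get things started: (i) $C\ket{G_1}$ is a stabilizer state, hence LC-equivalent to $G_1$; and (ii) $C\ket{G_1}=R^{-1}\ket{G_2}$, and since $V_X$ is an independent set in the standard form, \cref{cor:action_X_rotation_indep_set} shows $\bigotimes_{u\in V_X}X(-\theta_u)\ket{G_2}$ is a weighted hypergraph state, a property preserved by the subsequent $Z$-rotations $\bigotimes_{v\in V_Z}Z(-\theta_v)$; so $C\ket{G_1}$ is a weighted hypergraph state, in particular of full support in the computational basis. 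A stabilizer state of full support has full-rank $X$-part in its stabilizer tableau, hence equals $\bigotimes_{u\in V}D_u\ket{H}$ for some graph $H=_{LC}G_1$ and diagonal single-qubit Cliffords $D_u$. The residuals $D_v$ on $V_Z$ are harmless (they merge into the rotation $Z(\theta_v)$ of $R$, changing $\beta_v$), but the residuals on $V_X$ and on $V_\bot$ must be eliminated. Here I would bring $H$ into standard form with respect to $\mathcal M$ by local complementations (\cref{lemma:standardform}) and argue, using the structural features of the standard form --- no vertex of type $\bot$ is adjacent to a vertex of type X, $V_X$ is independent, every $u\in V_X$ has the prescribed closed neighbourhood $\{u\}\cup N(u)\in\mathcal M$ (\cref{lemma:same_types}), and the type-transition tables in the proof of \cref{lemma:standardform} --- that the local complementations and pivotings needed to absorb these residuals can be confined to $V_Z\cup V_\bot$, hence keep the graph in standard form, and that the $\bot$-part is rigid enough (the rigidity underlying the all-$\bot$ case $\mathrm{LU}=\mathrm{LC}$ of \cite{VandenNest05}, cf. \cref{lemma:LUbottom}) that the residual $D_u$ on $V_\bot$, and correspondingly on $V_X$ via the stabilizer $X_uZ_{N(u)}$ of the $X$-vertices, is forced to be trivial once $H$ is in standard form. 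The resulting graph is the desired $G_1'$, and $\ket{G_2}=R\ket{G_1'}$ has the claimed form.
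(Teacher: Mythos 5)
Your opening is the same as the paper's: both proofs start from \cref{lemma:same_types} and \cref{lemma:LUconstraints} and must then dispose of the non-rotation part of $U$ (the Paulis $Z^{b_u}$ on $V_X$, $X^{b_v}$ on $V_Z$, and the Cliffords on $V_\bot$). The $X^{b_v}$ on $V_Z$ are indeed harmless (the paper absorbs them with the stabilizers $X_vZ_{N_{G_1}(v)}$), and your reduction of $C\ket{G_1}$ to a full-support stabilizer state, hence to $\bigotimes_u D_u\ket{H}$ with $D_u$ diagonal and $H=_{LC}G_1$, is a legitimate (if unproved-in-the-paper) standard fact. But the proof then stalls exactly where the real work is, and the "rigidity" appeal does not close it.

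Concretely, two things are asserted rather than proved. First, you never show that the Pauli $Z^{b_u}$ on a type-X vertex $u$ is trivial. If some $b_u=1$, the operator acting on that qubit is $X(\alpha_u)Z_u$, which is not an $X$-rotation and cannot be absorbed: $Z_u$ is not itself a stabilizer of $\ket{H}$, and multiplying by the stabilizer $X_uZ_{N(u)}$ trades $Z_u$ for a $Y$-type operator on $u$ plus Paulis on $N(u)\subseteq V_Z$ --- it does not make the obstruction disappear. The paper kills this case with a concrete device your proposal lacks: applying $\bra{+}_{V_X}\bra{0}_{V_Z}$ to both sides of $\ket{G_2}=e^{i\phi}\bigotimes_{u\in V_X}X(\alpha_u)Z^{b_u}\bigotimes_{v\in V_Z}Z(\beta_v)\bigotimes_{w\in\bot}C_w\ket{G_1}$, which annihilates the right-hand side unless every $b_u=0$ (since $\bra{+}Z=\bra{-}$) and simultaneously reduces the remaining identity to $\ket{G_2[\bot]}=e^{i\phi}\bigotimes_{w\in\bot}C_w\ket{G_1[\bot]}$. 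Second, your treatment of $V_\bot$ is circular in flavour: you propose to bring $H$ into standard form by local complementations "confined to $V_Z\cup V_\bot$", but by the type-transition table a local complementation on a type-Z vertex turns it into type Y and so breaks standard form; and you give no argument forcing the residual $D_w$ on $V_\bot$ to vanish. The paper instead applies \cref{prop:lclc} to the projected equation on $G_1[\bot]$ and $G_2[\bot]$, identifies $\bigotimes_{w\in\bot}C_w$ as implementing (up to Paulis) a sequence of local complementations on $\bot$-vertices only, and lifts these to the full graph at the cost of shifting the $\beta_v$ by multiples of $\pi/2$ --- standard form is preserved precisely because, in standard form, $\bot$-vertices have no neighbours of type X or Y. You would need to supply both of these mechanisms (or equivalents) for your route to go through.
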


\begin{remark}
    If $G_1 =_{LC_r} G_2$, we can choose the angles such that $\alpha_u, \beta_v = 0 \bmod \pi/2^r$. Additionally, $G_1$ and $G'_1$ are related by local complementations only on vertices of type $\bot$.
\end{remark}

\begin{proof} 
    By \cref{lemma:LUconstraints} there exists $U$ such that $\ket {G_2} = U \ket {G_1}$, with $U=e^{i\phi}\bigotimes_{u\in V} U_u$ where:
    \begin{itemize}
        \item $C_u \defeq U_u$ is a Clifford operator if  $u\in V_\bot$;
        \item $U_u = X(\theta_u) Z^{b_u}$ if $u \in V_X$;
        \item $U_u = Z(\theta_u) X^{b_u}$ if $u \in V_Z$.
    \end{itemize}
    with $b_u \in \{0,1\}$. Additionally, 
    if $\ket {G_2} = U \ket {G_1}$, we choose $U$ so that the angles satisfy $\theta_u = 0\bmod \pi/2^r$. 

    $$ \ket {G_2} = e^{i \phi} \bigotimes_{u\in V_X}X(\alpha_u) Z^{b_u} \bigotimes_{u\in V_Z} Z(\beta_u) X^{b_u} \bigotimes_{u\in \bot}C_u \ket{G_1} $$
    As for any vertex $u \in V$, $X_u Z_{N_{G_1}(u)} \ket{G_1} = \ket{G_1}$, then for possibly different angles $\beta_u$ (albeit equal modulo $\pi/2$), values $b_u$ and Clifford operators $C_u$, $$ \ket {G_2} = e^{i \phi} \bigotimes_{u\in V_X}X(\alpha_u) Z^{b_u} \bigotimes_{u\in V_Z} Z(\beta_u) \bigotimes_{u\in \bot}C_u \ket{G_1} $$

    By applying $\bra{+}_{V_X}\bra{0}_{V_Z}$ on both sides of the previous equation we get, on the left-hand side, $\bra{+}_{ V_X}\bra{0}_{V_Z} \ket {G_2} = \frac1{\sqrt{2^{|V_Z|}}} \bra{+}_{V_X} \ket{+}_{V_X} \ket{G_2[\bot]} = \frac1{\sqrt{2^{|V_Z|}}} \ket{G_2[\bot]}$ as $V_X$ is an independent set in $G_1$; and, on the right-hand side, $\begin{cases}\frac{e^{i\phi}}{\sqrt{2^{|V_Z|}}}\otimes_{u\in\bot} C_u \ket{G_1[\bot]} & \text{if $\forall u\in V_X, b_u=0$}\\0&\text{otherwise}%\tag{A}
    \end{cases}$, since $\bra 0 Z(\theta)=\bra 0$, $\bra +X(\theta) = \bra +$ and $\bra +Z = \bra -$. Hence, every $b_u = 0$, and $\ket {G_2[\bot]} = e^{i \phi} \bigotimes_{u\in \bot}C_u \ket{G_1[\bot]}$ (if $\bot = \emptyset$ then $e^{i\phi} = 1$).

    Using \cref{prop:lclc}, there exist a sequence of (possibly repeating) vertices $a_1, \cdots, a_m \in \bot$ and a set $D \se \bot$ such that $G_2[\bot] = G_1[\bot] \star a_1 \star a_2 \star \cdots \star a_m$ and $e^{i \phi}\bigotimes_{u\in \bot}C_u = L_{a_m} \cdots L_{a_2} L_{a_1}$, 
    where $L_{a_i} \defeq X\left(\frac{\pi}{2}\right)_{a_i} Z\left(-\frac{\pi}{2}\right)_{N_{{G_1}[\bot] \star a_1 \star \cdots \star a_{i-1}}(a_i)}$.

    These operators do not implement a local complementation on the full graph. For any $i \in [1, m]$, let us introduce $$L'_{a_i} = L_{a_i} Z\left(- \frac{\pi}{2}\right)_{N_{G_1 \star a_1 \star \cdots \star a_{i-1}}(a_i)\sm \bot} = X\left(\frac{\pi}{2}\right)_{a_i} Z\left(-\frac{\pi}{2}\right)_{N_{G_1 \star a_1 \star \cdots \star a_{i-1}}(a_i)}$$

    Then, for possibly different angles $\beta_u$ (albeit equal modulo $\pi/2$),
    \begin{equation*}
        \begin{split}
            \ket {G_2} &= \bigotimes_{u\in V_X}X(\alpha_u)  \bigotimes_{u\in V_Z} Z(\beta_u) L'_{a_m} \cdots L'_{a_2} L'_{a_1} \ket{G_1}\\
            & = \bigotimes_{u\in V_X}X(\alpha_u)  \bigotimes_{u\in V_Z} Z(\beta_u) \ket{G_1 \star a_1 \star \cdots \star a_m } \qedhere
        \end{split}
    \end{equation*}
\end{proof}

They are strong constraints relating the angles of the X- and Z-rotations acting on different qubits:

\begin{lemma} \label{lemma:cond_angles}
Given $G_1$, $G_2$ in standard form (with respect to the same MLS cover), 
    if $\ket {G_2} = \bigotimes_{u\in  V_X}X(\alpha_u)\bigotimes_{v\in V_Z} Z(\beta_v) \ket{G_1}$:
    \begin{itemize}
        \item $\forall v \in V_Z,~ \beta_v  = - \sum_{u \in N_{G_1}(v)\cap V_X}\alpha_u \bmod 2\pi$;
        \item $\forall k\in \mathbb N,\forall K \se V_Z$ of size $k+2$, $\sum_{u \in \Lambda_{G_1}^K \cap V_X}\alpha_u  = 0\bmod \dfrac{\pi}{2^{k+\delta(k)}}$.
    \end{itemize}
\end{lemma}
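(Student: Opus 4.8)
The statement to prove is \cref{lemma:cond_angles}: given $G_1, G_2$ in standard form with respect to a common MLS cover, with $\ket{G_2} = \bigotimes_{u\in V_X} X(\alpha_u) \bigotimes_{v\in V_Z} Z(\beta_v) \ket{G_1}$, we need the two constraints on the angles. The key tool is already on the table: \cref{cor:action_X_rotation_indep_set}, which says that for an independent set $A$ of vertices (here $A = V_X$, which is independent by the standard form), applying $\bigotimes_{u\in V_X} X(\alpha_u)$ to $\ket{G_1}$ produces the weighted hypergraph state
\[
\left(\prod_{K\subseteq V, K\neq\emptyset} CZ_K\left((-2)^{|K|-1}\sum_{u\in V_X\cap \Lambda_{G_1}^K}\alpha_u\right)\right)\ket{G_1}.
\]
The plan is: (1) compute this weighted hypergraph state explicitly, separating out the $|K|=1$ contributions (which are $Z$-rotations) from the $|K|\geq 2$ contributions (which are genuine hyperedges); (2) observe that the $Z(\beta_v)$ factors on the left-hand side can absorb all $|K|=1$ contributions on vertices $v\in V_Z$, and also that for $v\in V_X$ there is no $|K|=1$ contribution with $K=\{v\}$ because $\Lambda_{G_1}^{\{v\}} = N_{G_1}(v)$ contains no vertex of $V_X$ (as $V_X$ is independent), so $V_X\cap\Lambda_{G_1}^{\{v\}}=\emptyset$; (3) use \cref{cor:graph_state_whs} — rather, the one-to-one correspondence between weighted hypergraph states and weighted hypergraphs (the corollary after \cref{prop:whs_angles}) — to conclude that $\ket{G_2}$ being a graph state forces all $|K|\geq 2$ hyperedge weights to be $0$ or $\pi$, and the $|K|=1$ weights on $V_Z$ must exactly match $-\beta_v$.

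\textbf{Carrying it out.} First I would write $\bigotimes_{u\in V_X}X(\alpha_u)\ket{G_1}$ via \cref{cor:action_X_rotation_indep_set} and split the product over $K$ into $|K|=1$, $|K|=2$, and $|K|\geq 3$. For $|K|=1$, $K=\{v\}$: the weight is $\alpha\text{-sum over }V_X\cap N_{G_1}(v)$, contributing $Z(\sum_{u\in N_{G_1}(v)\cap V_X}\alpha_u)_v$. Since $V_X$ is independent, this is nonzero only for $v\notin V_X$, i.e.\ $v\in V_Z\cup V_\bot$. But $V_\bot$ vertices are only adjacent to... actually I should be careful — in standard form, type X vertices are adjacent only to type Z vertices, so $N_{G_1}(v)\cap V_X$ is nonempty only when $v\in V_Z$. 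Hence the $|K|=1$ part is exactly $\bigotimes_{v\in V_Z} Z\left(\sum_{u\in N_{G_1}(v)\cap V_X}\alpha_u\right)_v$. Now the left side has $\bigotimes_{v\in V_Z} Z(\beta_v)$, and $\ket{G_2}$, $\ket{G_1}$ are both graph states with $V_X\cup V_Z\cup V_\bot = V$; combining, $\ket{G_2} = \bigotimes_{v\in V_Z} Z\left(\beta_v + \sum_{u\in N_{G_1}(v)\cap V_X}\alpha_u\right)_v \cdot H$ where $H$ is the weighted hypergraph state carrying only the $|K|\geq 2$ terms (whose weight on $K$ is $(-2)^{|K|-1}\sum_{u\in V_X\cap\Lambda_{G_1}^K}\alpha_u$ added to the $|G_1[K]|\cdot\pi$ term already in $\ket{G_1}$ for $|K|=2$, and just $(-2)^{|K|-1}\sum\alpha_u$ for $|K|\geq 3$). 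Since $\ket{G_2}$ is a graph state and, after stripping the $Z$-rotations (which preserve being-a-graph-state only if they are multiples of $\pi$... hmm, actually $Z(\theta)$ on one qubit of a graph state gives a graph state iff $\theta\equiv 0\bmod\pi$), I get: each $\beta_v + \sum_{u\in N_{G_1}(v)\cap V_X}\alpha_u \equiv 0\bmod\pi$. To upgrade $\bmod\,\pi$ to $\bmod\,2\pi$ (the claimed first bullet says $\bmod\,2\pi$), I would track the sign/phase by pairing with a basis state or a stabilizer, or more simply invoke \cref{prop:whs_angles}'s uniqueness of the angle decomposition together with the fact that $\theta_0=0$ is already fixed — the precise bookkeeping here I'd reduce to comparing $\bra{x}\ket{G_2}$ with $\bra{x}$ applied to the right side for a well-chosen $x$.

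\textbf{The second bullet.} For the $|K|\geq 2$ part, $\ket{G_2}$ being a graph state means, by the one-to-one correspondence, that for $|K|=3$ the hyperedge weight $(-2)^{2}\sum_{u\in V_X\cap\Lambda_{G_1}^K}\alpha_u = 4\sum\alpha_u$ must be $0\bmod 2\pi$, i.e.\ $\sum_{u\in\Lambda_{G_1}^K\cap V_X}\alpha_u\equiv 0\bmod\pi/2$; for $|K|\geq 4$, weight $(-2)^{|K|-1}\sum\alpha_u\equiv 0\bmod 2\pi$ gives $\sum\alpha_u\equiv 0\bmod\pi/2^{|K|-2}$; and for $|K|=2$ the weight must be $0\bmod\pi$ (not $2\pi$, since an ordinary edge is weight $\pi$), i.e.\ $-2\sum\alpha_u\equiv 0\bmod\pi$, giving $\sum\alpha_u\equiv 0\bmod\pi/2$. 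Writing $|K|=k+2$: for $k=0$ we need $\bmod\,\pi/2 = \bmod\,\pi/2^{0+\delta(0)} = \bmod\,\pi/2^1$, for $k=1$ we need $\bmod\,\pi/2 = \bmod\,\pi/2^{1+\delta(1)} = \bmod\,\pi/2^{1}$, and for $k\geq 2$ we need $\bmod\,\pi/2^{k} = \bmod\,\pi/2^{k+\delta(k)}$ since $\delta(k)=0$ — matching the claimed formula $\sum_{u\in\Lambda_{G_1}^K\cap V_X}\alpha_u\equiv 0\bmod\pi/2^{k+\delta(k)}$ exactly. The only subtle point — the main obstacle — is the careful extraction of hyperedge weights: one has to argue that the weights coming from $\ket{G_1}$ (edges of $G_1$, all weight $0$ or $\pi$ on pairs) and the new weights from the X-rotations combine \emph{additively mod $2\pi$} on each $K$ and that there is no cross-cancellation between different $K$'s — this is precisely guaranteed by the uniqueness statement in \cref{prop:whs_angles} / the corollary that weighted hypergraph states are in bijection with weighted hypergraphs, so a graph state has all $|K|\geq 3$ weights zero and all $|K|=2$ weights in $\{0,\pi\}$, independently for each $K$. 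That bijection does all the heavy lifting.
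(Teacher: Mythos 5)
Your proposal is correct and follows essentially the same route as the paper: apply \cref{cor:action_X_rotation_indep_set}, absorb the singleton-hyperedge terms into the $Z$-rotations on $V_Z$, and invoke the bijection between weighted hypergraph states and weighted hypergraphs to force the $|K|\geq 2$ weights into $\{0,\pi\}$ (for $|K|=2$) or $0$ (for $|K|\geq 3$) and the singleton weights to vanish. Your momentary hesitation about $\bmod\,\pi$ versus $\bmod\,2\pi$ for the first bullet is resolved exactly as you suggest — a graph state has singleton hyperedge weight $0$ modulo $2\pi$, so uniqueness of the weighted-hypergraph decomposition gives the $\bmod\,2\pi$ statement directly — and the rest of your bookkeeping for $|K|=k+2$ matches the $\delta(k)$ formula.
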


\begin{proof}
    According to \cref{cor:action_X_rotation_indep_set}:
    \begin{align*}
        & \bigotimes_{u\in  V_X}X(\alpha_u)\bigotimes_{v\in V_Z} Z(\beta_v) \ket{G_1}\\
        &= \left(\prod_{K\subseteq V, K \neq \emptyset} CZ_{K}\left((-2)^{|K|-1}\sum_{u \in \Lambda_{G_1}^K \cap V_X} \alpha_u\right)\right) \bigotimes_{v\in V_Z} Z(\beta_v) \ket{G_1}\\
        &= \left(\prod_{K\subseteq V, ~|K|\gs 2} CZ_{K}\left((-2)^{|K|-1}\sum_{u \in \Lambda_{G_1}^K \cap V_X} \alpha_u\right)\right) \bigotimes_{v\in V_Z} Z(\beta_v + \sum_{u \in N_{G_1}(v)\cap V_X}\alpha_u) \ket{G_1}\\
    \end{align*}
    $\bigotimes_{u\in  V_X}X(\alpha_u)\bigotimes_{v\in V_Z} Z(\beta_v) \ket{G_1}$ is a weighted hypergraph state. It is in particular a graph state if and only if the constraints described in \cref{lemma:cond_angles} are met.
\end{proof}

The constraints described in \cref{lemma:cond_angles} coincide with $r$-incidence, so, when angles are multiples of $\pi/2^r$, this unitary transformation implements an $r$-local complementation.

\begin{lemma} \label{lemma:implements_lc}
    Given $G_1$, $G_2$ in standard form (with respect to the same MLS cover), \\if $\ket {G_2} = \bigotimes_{u\in  V_X}X(\alpha_u)\bigotimes_{v\in V_Z} Z(\beta_v) \ket{G_1}$ and $\forall u \in V_X$, $\alpha_u= 0\bmod \pi/2^r$, then \\ $\bigotimes_{u\in V_X}X(\alpha_u)\bigotimes_{v\in V_Z} Z(\beta_v)$ implements an $r$-local complementation over the vertices of $V_X$.
\end{lemma}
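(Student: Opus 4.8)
The plan is to exhibit explicitly an $r$-incident independent multiset $S$ with $\supp(S)\se V_X$ such that $\bigotimes_{u\in V_X}X(\alpha_u)\bigotimes_{v\in V_Z}Z(\beta_v)\ket{G_1}=\ket{G_1\star^r S}$, and then simply invoke \cref{prop:implementation_rlc}. Since every $\alpha_u$ with $u\in V_X$ is a multiple of $\pi/2^r$, I would define $S$ on $V$ by letting $S(u)$ be the nonnegative integer with $\alpha_u=S(u)\pi/2^r$ for $u\in V_X$ (picking the representative $\alpha_u\in[0,2\pi)$, which we may by $2\pi$-periodicity of $X$), and $S(u)=0$ otherwise; so $\supp(S)\se V_X$.

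First I would check that $S$ is independent and $r$-incident in $G_1$. Independence is immediate: $\supp(S)\se V_X$ and, by definition of the standard form, $V_X$ is an independent set. For $r$-incidence, fix $k\in[0,r)$ and $K\se V\sm\supp(S)$ of size $k+2$. Because $S$ vanishes outside $V_X$, $S\bullet\Lambda_{G_1}^K=\sum_{u\in\Lambda_{G_1}^K\cap V_X}S(u)$. If $K$ is not contained in $V_Z$, pick $w\in K$ with $w\notin V_Z$; in the standard form every neighbor of a type-X vertex has type Z, so no vertex of $V_X$ is adjacent to $w$, hence $\Lambda_{G_1}^K\cap V_X\se N_{G_1}(w)\cap V_X=\emptyset$ and $S\bullet\Lambda_{G_1}^K=0$, a multiple of $2^{r-k-\delta(k)}$. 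If instead $K\se V_Z$, the second item of \cref{lemma:cond_angles} gives $\sum_{u\in\Lambda_{G_1}^K\cap V_X}\alpha_u=0\bmod \pi/2^{k+\delta(k)}$; multiplying by $2^r/\pi$ shows $S\bullet\Lambda_{G_1}^K$ is a multiple of $2^{r-k-\delta(k)}$ (note $r-k-\delta(k)\gs 0$ since $k<r$). Hence $S$ is $r$-incident and $G_1\star^r S$ is valid.

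Next I would show that the operator $\bigotimes_{u\in V_X}X(\alpha_u)\bigotimes_{v\in V_Z}Z(\beta_v)$ is precisely the operator appearing in \cref{prop:implementation_rlc} for this $S$, namely $\bigotimes_{u\in V}X(S(u)\pi/2^r)\bigotimes_{v\in V}Z\!\big(-\tfrac{\pi}{2^r}\sum_{u\in N_{G_1}(v)}S(u)\big)$. The X-parts agree term by term since $X(S(u)\pi/2^r)=X(\alpha_u)$ for $u\in V_X$ and $=I$ elsewhere. For the Z-parts, whenever $v\notin V_Z$ the standard form forces $N_{G_1}(v)\cap V_X=\emptyset$ (type-X vertices have only type-Z neighbors, and $V_X$ is independent), so $\sum_{u\in N_{G_1}(v)}S(u)=0$ and that rotation is trivial; and for $v\in V_Z$, the first item of \cref{lemma:cond_angles} gives $-\tfrac{\pi}{2^r}\sum_{u\in N_{G_1}(v)}S(u)=-\sum_{u\in N_{G_1}(v)\cap V_X}\alpha_u\equiv\beta_v\pmod{2\pi}$, so by $2\pi$-periodicity of $Z$ it equals $Z(\beta_v)$. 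Since no qubit carries both a nontrivial X- and a nontrivial Z-rotation, the two operators coincide, and \cref{prop:implementation_rlc} yields $\bigotimes_{u\in V_X}X(\alpha_u)\bigotimes_{v\in V_Z}Z(\beta_v)\ket{G_1}=\ket{G_1\star^r S}$ with $\supp(S)\se V_X$, which is exactly the claim.

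The only mildly delicate point is the $r$-incidence check when $K\not\se V_Z$; this rests entirely on the standard-form fact that a type-X vertex is adjacent only to type-Z vertices, which collapses $S\bullet\Lambda_{G_1}^K$ to $0$, while the genuine arithmetic content is supplied verbatim by \cref{lemma:cond_angles}. I do not expect any other obstacle.
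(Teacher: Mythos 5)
Your proposal is correct and follows essentially the same route as the paper's proof: define the multiset $S$ on $V_X$ by $\alpha_u = S(u)\pi/2^r$, read off $r$-incidence from the second item of \cref{lemma:cond_angles} and the matching of the Z-angles from its first item, then invoke \cref{prop:implementation_rlc}. Your extra care with the cases $K\not\se V_Z$ and $v\notin V_Z$ (both of which collapse to zero by the standard-form adjacency constraints) only makes explicit details the paper leaves implicit.
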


\begin{proof}
    Let us construct an $r$-incident independent multiset S such that $G_2 =  G_1 \star^r S$. We define $S$ on the vertices of $V_X$ such that $\forall u \in V_X$, $\alpha_u = \frac{S(u)\pi}{2^r}\bmod 2\pi$ with $S(u) \in [1,2^r)$. Note that $\sum_{u \in \Lambda_{G_1}^K \cap V_X}\alpha_u = \frac{\pi}{2^r}S\bullet \Lambda_{G_1}^K$. Hence, by \cref{lemma:cond_angles}, For any $k\in [0,r)$, and any $K\subseteq V\setminus S$ of size $k+2$, $S\bullet \Lambda_{G_1}^K$ is a multiple of $2^{r-k-\delta(k)}$ meaning that S is $r$-incident.    
    Also, by \cref{lemma:cond_angles}, $\beta_v = -\frac{\pi}{2^r} \sum_{u \in N_{G_1}(v)}S(u)\bmod 2\pi$. Thus, $\bigotimes_{u\in V_X}X(\alpha_u)\bigotimes_{v\in V_Z} Z(\beta_v)$ implements an $r$-local complementation over $S$, according to \cref{prop:implementation_rlc}.
\end{proof}

We can now easily relate LC$_r$-equivalence to $r$-local complementations for graphs in standard form:

\begin{lemma}
    \label{lemma:standardform_LCr_lc}
    If $G_1$ and $G_2$ are both in standard form (with respect to the same MLS cover) and $G_1 =_{LC_{r}} G_2$, then $G_1$ and $G_2$ are related by a sequence of local complementations on the vertices of type $\bot$ along with a single $r$-local complementation over the vertices of type X.
\end{lemma}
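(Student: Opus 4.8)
The plan is to combine the two previous lemmas about the standard form into the statement. First I would invoke \cref{lemma:standardform_implies_rotations}: since $G_1 =_{LC_r} G_2$ (which in particular means $G_1 =_{LU} G_2$) and both graphs are already in standard form with respect to the common MLS cover, there exists $G'_1 =_{LC} G_1$, still in standard form, such that
\[
\ket{G_2} = \bigotimes_{u \in V_X} X(\alpha_u) \bigotimes_{v \in V_Z} Z(\beta_v) \ket{G'_1},
\]
and moreover, by the remark following that lemma, the angles can be chosen to satisfy $\alpha_u, \beta_v = 0 \bmod \pi/2^r$, and $G_1$ and $G'_1$ are related by local complementations performed \emph{only on vertices of type $\bot$}. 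This already accounts for the "sequence of local complementations on the vertices of type $\bot$" half of the statement, producing $G'_1$ from $G_1$.

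Next I would apply \cref{lemma:implements_lc} to the pair $G'_1, G_2$: these are both in standard form with respect to the same MLS cover (the type of every vertex is preserved since $G'_1 =_{LC} G_1$ and both are in standard form, by \cref{lemma:same_types}), and the rotations linking them have angles that are multiples of $\pi/2^r$. Hence $\bigotimes_{u \in V_X} X(\alpha_u) \bigotimes_{v \in V_Z} Z(\beta_v)$ implements an $r$-local complementation over the vertices of $V_X$, i.e.\ there is an $r$-incident independent multiset $S$ supported on $V_X$ with $G_2 = G'_1 \star^r S$. Stringing the two pieces together: $G_1$ reaches $G'_1$ by local complementations on type-$\bot$ vertices, and $G'_1$ reaches $G_2$ by a single $r$-local complementation over the type-X vertices, which is exactly the claim.

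One small point that needs care is consistency of the MLS cover and the vertex types along the way. Since $G'_1 =_{LC} G_1$, local complementation preserves the cut-rank function and hence the minimal local sets (\cref{cor:localset_invariant}), so $\mathcal M$ remains an MLS cover of $G'_1$; and because the local complementations used are only on vertices of type $\bot$ (whose type is preserved under local complementation on themselves or their neighbors), $G'_1$ is still in standard form and every vertex retains its type. Thus $V_X$, $V_Z$, $V_\bot$ are unambiguous throughout, and in particular the multiset $S$ from \cref{lemma:implements_lc} is indeed supported on the common set $V_X$.

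The main obstacle, such as it is, is not a deep one: it is simply making sure that the "$=_{LC_r}$" hypothesis, rather than merely "$=_{LU}$", is correctly propagated so that the angles in \cref{lemma:standardform_implies_rotations} are multiples of $\pi/2^r$ — this is exactly what licenses the use of \cref{lemma:implements_lc} at level $r$ rather than at some uncontrolled level. All the real work (the existence of the standard form, the structure of the connecting unitary, the translation of the angle conditions into $r$-incidence) has already been done in the preceding lemmas, so this proof is essentially a two-line composition once the bookkeeping on types and the MLS cover is in place.
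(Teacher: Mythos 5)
Your proposal is correct and follows essentially the same route as the paper's proof: apply \cref{lemma:standardform_implies_rotations} (with its remark giving angles that are multiples of $\pi/2^r$ and local complementations restricted to type-$\bot$ vertices) to obtain $G'_1$, then apply \cref{lemma:implements_lc} to realize the remaining rotations as a single $r$-local complementation over $V_X$. The extra bookkeeping you include on preservation of the MLS cover and of vertex types is sound and consistent with what the paper leaves implicit.
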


\begin{remark}
    The sequence of local complementations commutes with the $r$-local complementation, as vertices of type X and vertices of type $\bot$ do not share edges by definition of the standard form.
\end{remark}

\begin{proof}
    By \cref{lemma:standardform_implies_rotations}, there exists $G'_1 =_{LC} G_1$ in standard form such that $\ket {G_2} \\ = \bigotimes_{u\in V_X}X(\alpha_u)\bigotimes_{v\in V_Z} Z(\beta_v) \ket{G'_1}$ where 
    $\forall u \in V$ of type X or Z, $\alpha_u, \beta_u = 0\bmod \pi/2^r$. By \cref{lemma:implements_lc}, $\bigotimes_{u\in V_X}X(\alpha_u)\bigotimes_{v\in V_Z} Z(\beta_v)$ implements an $r$-local complementation over the vertices of type X.
\end{proof}

Notice in particular that when there is no vertex of type $\bot$, a single $r$-local complementation is required.

\begin{corollary}\label{cor:constraint_r_lc}
    If  two $\bot$-free LC$_r$-equivalent graphs $G_1$ and $G_2$ are both in standard form (with respect to the same MLS cover), they are related by a single $r$-local complementation over the vertices of type X.
\end{corollary}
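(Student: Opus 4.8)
The plan is to invoke Lemma~\ref{lemma:standardform_LCr_lc} directly and specialize it to the $\bot$-free case. Since $G_1$ and $G_2$ are both in standard form with respect to the same MLS cover and are LC$_r$-equivalent, that lemma guarantees that they are related by a sequence of local complementations on the vertices of type $\bot$, followed by a single $r$-local complementation over the set $V_X$ of vertices of type X. By hypothesis the graphs are $\bot$-free, i.e.\ $V_\bot = \emptyset$, so the sequence of local complementations on vertices of type $\bot$ is empty, and only the single $r$-local complementation over $V_X$ remains. This is exactly the claimed statement.

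The substance of the argument really lies in the chain of lemmas leading to Lemma~\ref{lemma:standardform_LCr_lc}, rather than in the corollary itself. Concretely, Lemma~\ref{lemma:standardform_implies_rotations} reduces the local unitary relating $\ket{G_1}$ and $\ket{G_2}$ to a tensor product of X-rotations on $V_X$ and Z-rotations on $V_Z$ after possibly modifying $G_1$ by local complementations on $V_\bot$ only; since the angles can be taken to be multiples of $\pi/2^r$ when $G_1 =_{LC_r} G_2$ (see the remark following Lemma~\ref{lemma:standardform_implies_rotations}, together with Proposition~\ref{prop:lcr_equals_clifford_hierarchy}), Lemma~\ref{lemma:implements_lc} then identifies this tensor product of rotations as an $r$-local complementation over $V_X$. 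In the $\bot$-free case the modification of $G_1$ by local complementations on $V_\bot$ involves the empty vertex set, so $G'_1 = G_1$ and no auxiliary local complementations are needed.

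Thus the only thing genuinely to verify for the corollary is that the emptiness of $V_\bot$ makes the "local complementations on vertices of type $\bot$" part of Lemma~\ref{lemma:standardform_LCr_lc} vacuous, which is immediate from the definitions; there is no real obstacle beyond the work already done in the preceding lemmas. One should just be mildly careful that the standard-form hypothesis is used in the very statement of Lemma~\ref{lemma:standardform_LCr_lc} (so that $V_X$, $V_Z$, $V_\bot$ are well defined and coincide for $G_1$ and $G_2$, by Lemma~\ref{lemma:same_types}), and that $\bot$-freeness is preserved under the reduction — which it is, since the reduction of $G_1$ to $G'_1$ uses only local complementations on $V_\bot = \emptyset$.
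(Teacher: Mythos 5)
Your proposal is correct and matches the paper exactly: the corollary is stated as an immediate specialization of Lemma~\ref{lemma:standardform_LCr_lc}, where $\bot$-freeness makes the sequence of local complementations on vertices of type $\bot$ empty, leaving only the single $r$-local complementation over $V_X$. The additional commentary on the supporting lemmas is accurate but not needed for the corollary itself.
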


We are ready to prove that $r$-local complementation captures the LC$_r$-equivalence of graph states.

\begin{theorem}\label{thm:LCr_lc}
    Given two graphs $G_1, G_2$, the following statements are equivalent:
    \begin{enumerate}
        \item $\ket{G_1}$ and $\ket{G_2}$ are LC$_r$-equivalent.
        \item $G_1$ and $G_2$ are related by $r$-local complementations.
        \item $G_1$ and $G_2$ are related by a sequence of local complementations along with a single $r$-local complementation.
    \end{enumerate}
\end{theorem}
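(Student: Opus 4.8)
The plan is to prove the cycle of implications $(2) \Rightarrow (1) \Rightarrow (3) \Rightarrow (2)$, which is the most economical route given the machinery already in place. The implication $(3) \Rightarrow (2)$ is immediate, since a sequence of local complementations is itself a sequence of $r$-local complementations (a $1$-local complementation on a single vertex is a local complementation, and by \cref{prop:monotonicity} a $1$-local complementation lifts to an $r$-local complementation on the doubled multiset). So the real content is in the other two arrows.

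For $(2) \Rightarrow (1)$, I would argue that each single $r$-local complementation preserves LC$_r$-equivalence, hence a sequence of them does too. Indeed, if $G_2 = G_1 \star^r S$ with $S$ an $r$-incident independent multiset, then \cref{prop:implementation_rlc} gives
\[
\ket{G_1\star^r S} = \bigotimes_{u\in V}X\!\left(\tfrac{S(u)\pi}{2^r}\right)\bigotimes_{v\in V}Z\!\left(-\tfrac{\pi}{2^r}\sum_{u\in N_G(v)}S(u)\right)\ket{G_1},
\]
and every $X$-rotation by a multiple of $\pi/2^r$ equals $H\, Z(\text{mult. of }\pi/2^r)\, H$ up to global phase, while every $Z$-rotation by a multiple of $\pi/2^r$ is a product of finitely many copies of $Z(\pi/2^r)$ up to global phase; so the implementing unitary is of the form required in the definition of LC$_r$-equivalence. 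Transitivity of LC$_r$-equivalence closes the argument.

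The main work is in $(1) \Rightarrow (3)$. Here I would invoke \cref{lemma:standardform}: given $G_1 =_{LC_r} G_2$ (in particular LU-equivalent), compute a common MLS cover $\mathcal M$ and graphs $G'_1 =_{LC} G_1$, $G'_2 =_{LC} G_2$ both in standard form with respect to $\mathcal M$. Since LC-equivalence is a special case of LC$_r$-equivalence, we still have $G'_1 =_{LC_r} G'_2$. Now \cref{lemma:standardform_LCr_lc} applies directly: $G'_1$ and $G'_2$ are related by a sequence of local complementations on the vertices of type $\bot$ together with a single $r$-local complementation over the vertices of type X. Composing with the local-complementation sequences witnessing $G'_1 =_{LC} G_1$ and $G'_2 =_{LC} G_2$, we conclude that $G_1$ and $G_2$ are related by a sequence of local complementations along with a single $r$-local complementation, which is exactly statement $(3)$.

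The hard part is entirely upstream: it is already absorbed into \cref{lemma:standardform} and especially \cref{lemma:standardform_LCr_lc}, whose proof rests on \cref{lemma:standardform_implies_rotations}, \cref{lemma:implements_lc}, and the constraint analysis of \cref{lemma:cond_angles}. The remaining obstacle to watch for in writing this proof is a subtle bookkeeping point: \cref{lemma:standardform_LCr_lc} yields \emph{a single} $r$-local complementation for graphs in standard form, but after re-attaching the LC-equivalences $G_i \leftrightarrow G'_i$ we must make sure we only claim "a sequence of local complementations along with \emph{a single} $r$-local complementation" — which holds because the extra operations introduced are all ordinary local complementations, and these can be freely prepended or appended to the sequence without affecting the lone $r$-local complementation. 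I would state this merging step explicitly rather than leaving it to the reader.
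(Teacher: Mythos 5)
Your proposal is correct and follows essentially the same route as the paper's proof: the identical cycle $(2)\Rightarrow(1)$ via \cref{prop:implementation_rlc}, $(1)\Rightarrow(3)$ via \cref{lemma:standardform} and \cref{lemma:standardform_LCr_lc}, and $(3)\Rightarrow(2)$ from the fact that a local complementation is a special case of an $r$-local complementation. The extra details you supply (the $H Z(\cdot) H$ decomposition of the $X$-rotations, and the explicit merging of the LC-equivalences $G_i =_{LC} G'_i$ into the final sequence) are sound and merely make explicit what the paper leaves to the reader.
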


\begin{proof}
    We proceed by cyclic proof.
    ($2\Rightarrow1$):~ Follows from \cref{prop:implementation_rlc}. 
    ($3  \Rightarrow  2$):~ A local complementation is, in particular, an $r$-local complementation. 
    ($1  \Rightarrow  3$):~ Follows from \cref{lemma:standardform} along with \cref{lemma:standardform_LCr_lc}.  
\end{proof}

From now, we call two graphs LC$_r$-equivalent when the corresponding graph states are LC$_r$-equivalent or, equivalently, when the graphs are related by a sequence of $r$-local complementations. More than just LC$_r$-equivalence, $r$-local complementations can actually characterize the LU-equivalence of graph states.

\begin{theorem} \label{thm:LU_imply_LCr}
    If two graph states are LU-equivalent, they are LC$_r$-equivalent for some $r \ls n$, i.e. the corresponding graphs are related by $r$-local complementations.
\end{theorem}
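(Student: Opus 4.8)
The plan is to reduce the statement to the case of graphs in standard form and then carefully track the angles that appear in the local unitaries. First I would invoke \cref{thm:LCr_lc}: it suffices to show that two LU-equivalent graphs are LC$_r$-equivalent for some $r \ls n$, since LC$_r$-equivalence is exactly equivalence by $r$-local complementations. By \cref{lemma:standardform}, given two LU-equivalent graphs $G_1 =_{LU} G_2$ I can compute (via local complementations, which preserve LU-equivalence and do not change the relevant bound) a common MLS cover $\mathcal M$ and graphs $G'_1 =_{LC} G_1$, $G'_2 =_{LC} G_2$ both in standard form with respect to $\mathcal M$. Since LC-equivalence is LC$_1$-equivalence, hence LC$_r$-equivalence for every $r$, and the LC$_r$-equivalences are transitive, it is enough to prove $G'_1 =_{LC_r} G'_2$ for some $r \ls n$.

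Next I would apply \cref{lemma:standardform_implies_rotations}: there is $G''_1 =_{LC} G'_1$ in standard form with $\ket{G'_2} = \bigotimes_{u\in V_X}X(\alpha_u)\bigotimes_{v\in V_Z} Z(\beta_v)\,\ket{G''_1}$ for real angles $\alpha_u,\beta_v$. The key point is that, because $\ket{G'_2}$ is a graph state, the angles are heavily constrained by \cref{lemma:cond_angles}: each $\beta_v$ is determined mod $2\pi$ by the $\alpha_u$, and for every $K \se V_Z$ of size $k+2$ the sum $\sum_{u \in \Lambda_{G''_1}^K \cap V_X}\alpha_u$ is a multiple of $\pi/2^{k+\delta(k)}$. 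I now want to argue that all the $\alpha_u$ can be taken to be rational multiples of $\pi$ with a uniformly bounded power-of-two denominator. The mechanism is the same one used in Bouchet's algorithm and in the cut-rank formalism: consider the closed neighborhoods $\{u\}\cup N_{G''_1}(u)$, which are minimal local sets of dimension $1$ by the definition of the standard form; more generally, use the common-neighborhood structure of $V_X$ inside $V_Z$. Since $V_X$ has at most $n$ vertices and each $\Lambda^K$-constraint forces divisibility by some $\pi/2^{k}$ with $k < |V_X| \ls n$, a triangular (Gaussian-elimination) argument over the vectors $(\mathbf 1_{\Lambda^K})_{K}$ shows that the solution space for the $\alpha_u$ modulo $2\pi$ is, up to Clifford freedom, spanned by vectors whose entries are multiples of $\pi/2^r$ for some $r$ bounded by the number of vertices of type X, hence $r \ls n$. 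Concretely, one picks a maximal independent family of constraint sets $K$; the deepest such constraint has size at most $n$, forcing $\alpha_u \in (\pi/2^{n})\mathbb Z$ for all $u \in V_X$, whence also $\beta_v \in (\pi/2^{n})\mathbb Z$ by the first item of \cref{lemma:cond_angles}.

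With all angles congruent to $0$ mod $\pi/2^r$ for this $r \ls n$, \cref{lemma:implements_lc} tells us that $\bigotimes_{u\in V_X}X(\alpha_u)\bigotimes_{v\in V_Z} Z(\beta_v)$ implements an $r$-local complementation over $V_X$, so $G''_1 \star^r S = G'_2$ for a suitable $r$-incident independent multiset $S$; in particular $G''_1 =_{LC_r} G'_2$, and chaining with the local complementations relating $G_1,G'_1,G''_1$ and $G_2,G'_2$ (all of which are LC$_r$-equivalences) yields $G_1 =_{LC_r} G_2$ with $r \ls n$. Finally \cref{thm:LCr_lc} converts this into the desired statement that $G_1$ and $G_2$ are related by $r$-local complementations. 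The main obstacle is the quantitative step: showing that the constraint system of \cref{lemma:cond_angles} always admits a solution with power-of-two denominator bounded by $n$ (rather than merely \emph{some} finite denominator). This requires an honest linear-algebra argument over $\mathbb Z[1/2]/2\pi\mathbb Z$ — organizing the $\Lambda^K$-divisibility conditions so that each new constraint raises the required $2$-adic precision by at most one while consuming at least one vertex of $V_X$ — and care that the Clifford ambiguity in \cref{lemma:standardform_implies_rotations} does not spoil the bound. Everything else is assembly of already-established lemmas.
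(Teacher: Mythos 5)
Your skeleton matches the paper's proof exactly: reduce to standard form via \cref{lemma:standardform} and \cref{lemma:standardform_implies_rotations}, use the divisibility constraints of \cref{lemma:cond_angles} to force the angles to be dyadic multiples of $\pi$ with bounded denominator, then invoke \cref{lemma:implements_lc} and \cref{thm:LCr_lc}. However, the step you yourself flag as ``the main obstacle'' is a genuine gap, and it cannot be closed by the linear-algebra argument you sketch. The constraint system of \cref{lemma:cond_angles} only constrains \emph{sums} $\sum_{u\in\Lambda^K\cap V_X}\alpha_u$, and the matrix whose rows are the indicators of the sets $\Lambda^K\cap V_X$ is in general very far from having full column rank over $\mathbb Q$: two twins $u,v\in V_X$ (same neighborhood in $V_Z$) appear in exactly the same constraints, so only $\alpha_u+\alpha_v$ is determined, and a leaf $u\in V_X$ (degree $1$) appears in \emph{no} constraint at all, since $u\in\Lambda^K$ requires $K\subseteq N(u)$ and every constraint has $|K|\gs 2$. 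Thus no triangularization of the constraint sets can force such an $\alpha_u$ individually to lie in $(\pi/2^n)\mathbb Z$ — it is simply unconstrained by the system.

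The paper closes this gap not with linear algebra but with operational identities on graph states, applied \emph{before} the induction: for twins, $X(\theta_1)_uX(\theta_2)_v\ket G=X(\theta_1+\theta_2)_u\ket G$, so one may assume at most one vertex per neighborhood class carries a non-zero angle; for a leaf $u$ with neighbor $v$, $X(\theta)_uZ(-\theta)_v\ket G=\ket G$, so one may set $\alpha_u=0$; isolated vertices likewise. After these reductions the common-neighborhood lattice does become effectively unitriangular: the induction over $K\subseteq V_Z$ isolates, at each layer $\Lambda^{V_Z\setminus K}$, exactly one new unknown angle, which the corresponding constraint then pins down modulo $\pi/2^{|V_Z|-2+\delta(|V_Z|-2)}$, giving $r\ls|V_Z|-1\ls n$. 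So your plan is salvageable, but you must add the twin/leaf/isolated-vertex normalizations as a preliminary step; without them the ``Gaussian elimination over $\mathbb Z[1/2]/2\pi\mathbb Z$'' you defer to does not exist.
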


\begin{proof} 
    Suppose 
    $G_1 =_{LU} G_2$. By \cref{lemma:standardform} along with \cref{lemma:standardform_implies_rotations}, there exist $G'_1 =_{LC} G_1$ and $G'_2 =_{LC} G_2$ both in standard form such that $\ket {G'_2} = \bigotimes_{u\in V_X}X(\alpha_u)\bigotimes_{v\in V_Z} Z(\beta_v) \ket{G'_1}$. 
    If two vertices $u$ and $v$ are twins, i.e.~$N_G(u)=N_G(v)$, then $X(\theta_1)_u X(\theta_2)_v\ket G = X(\theta_1 + \theta_2)_u \ket G$. Indeed, $X(\theta)_u\ket G \!=\! e^{i\frac\theta 2}(\cos(\frac \theta 2)\ket G + i\sin(\frac \theta 2)X_u\ket G) \!=\! e^{i\frac\theta 2}(\cos(\frac \theta 2)\ket G + i\sin(\frac \theta 2)Z_{N_G(u)}\ket G) \allowbreak = X(\theta)_v\ket G $. Hence, without loss of generality, we suppose that if $u,v \in V_X$ are twins, then at most one of the angles $\alpha_u, \alpha_v$ is non-zero.

    Moreover, if a vertex $u$ is a leaf, i.e.~there exists a unique $v \in V$ such that $u \sim_{G} v$, then $X(\theta)_u Z(-\theta)_v \ket G = \ket G$. Indeed, $X(\theta)_u \ket G = e^{i\frac\theta 2}(\cos(\frac \theta 2)\ket G + i\sin(\frac \theta 2)Z_{v}\ket G) = Z(\theta)_v \ket G$. Hence, without loss of generality, we suppose that if $u \in V_X$ is a leaf, then $\theta = 0$. 
    
    If a vertex $u$ is an isolated vertex, $X(\theta)_u \ket G = \ket G$. Hence, without loss of generality, we suppose that if a vertex $u \in V_X$ is an isolated vertex, then $\alpha_u = 0$.

    Let us show that for every $u \in V_X$, $\alpha_u=0\bmod \pi/2^{|V_Z|-2+\delta(|V_Z|-2)}$. 
    If $|V_Z|\ls 1$, the proof is trivial as every vertex in $V_X$ is a leaf or an isolated vertex. Else, we prove by induction over the size of a set $K \se V_Z$ that for every $u\in \Lambda_{G'_1}^{V_Z\setminus K} \cap V_X$, $\alpha_u = 0\bmod \pi/2^{|V_Z|-2+\delta(|V_Z|-2)}$.

    \begin{itemize}
        \item If $K = \emptyset$: there is at most one vertex $u$ in $\Lambda_{G'_1}^{V_Z\setminus K} \cap V_X = \Lambda_{G'_1}^{V_Z} \cap V_X$ such that $\alpha_u$ is non-zero. By \cref{lemma:cond_angles}, $\alpha_u = 0\bmod \pi/2^{|V_Z|-2+\delta(|V_Z|-2)}$.
        \item Else, there is at most one vertex $u$ in $\left(\Lambda_{G'_1}^{V_Z\setminus K} \sm \bigcup_{K' \varsubsetneq K} \Lambda_{G'_1}^{V_Z\setminus K'} \right) \cap V_X$ such that $\alpha_u$ is non-zero. For any vertex $v \in \Lambda_{G'_1}^{V_Z \sm K'} \cap V_X$ such that $K' \varsubsetneq K$, $\alpha_u = 0\bmod \pi/2^{|V_Z|-2+\delta(|V_Z|-2)}$ by hypothesis of induction. If $|V_Z|-|K|=1$, then $u$ is a leaf and thus $\alpha_u = 0$. Else, by \cref{lemma:cond_angles}, $\sum_{w \in \Lambda_{G'_1}^{V_Z \sm K} \cap V_X}\alpha_w  = 0\bmod \dfrac{\pi}{2^{|V_Z|-2+\delta(|V_Z|-2)}}$, implying that $\alpha_u = 0\bmod \dfrac{\pi}{2^{|V_Z|-2+\delta(|V_Z|-2)}}$.
    \end{itemize}

    A vertex $u \in V_X$ such that for every $K \se V_Z$, $u \notin \Lambda_{G'_1}^{V_Z\setminus K}$, is an isolated vertex and thus $\alpha_u = 0$. Hence, for every $u \in V_X$, $\alpha_u=0\bmod \dfrac{\pi}{2^{|V_Z|-2+\delta(|V_Z|-2)}}$. By \cref{lemma:cond_angles}, for every vertex $v \in V_Z$, $\beta_v  = - \sum_{u \in N_{G'_1}(v)\cap V_X}\alpha_u = 0\bmod \dfrac{\pi}{2^{|V_Z|-2+\delta(|V_Z|-2)}}$. 
    
    By \cref{lemma:implements_lc}, $\bigotimes_{u\in V_X}X(\alpha_u)\bigotimes_{v\in V_Z} Z(\beta_v)$ implements an $r$-local complementation where $r \ls |V_Z|-1$.
\end{proof}

Remarkably, \cref{thm:LU_imply_LCr} (along with \cref{prop:lcr_equals_clifford_hierarchy}) implies that for graph states, LU-equivalence reduces to equivalence up to local unitaries in the Clifford hierarchy. This has been hinted previously in \cite{gross2007lu}, and similar restrictions occur in the symmetries of graph states \cite{Englbrecht2020}.

In \cref{chap:algo} we strengthen \cref{thm:LU_imply_LCr} by proving that $r$ can be actually be bounded by $log_2(n) + O(1)$. This bound reveals very useful for algorithmic purposes. 

\section{An infinite strict hierarchy of local equivalences}

Two graphs related by $r$-local complementations are also related by $(r+1)$-local complementations. This can be seen graphically as a direct consequence of \cref{prop:monotonicity}, or using graph states through \cref{thm:LCr_lc}, as two LC$_{r}$-equivalent states are obviously LC$_{r+1}$-equivalent. The known counter examples to the LU-LC conjecture introduced in \cite{Ji07,Tsimakuridze17} are actually LC$_2$-equivalent graphs that are not LC-equivalent. There was however no known examples of graphs that are LC$_3$-equivalent but not LC$_2$-equivalent, and more generally no known examples of graphs that are LC$_{r+1}$-equivalent but not LC$_{r}$-equivalent for $r\gs 2$. We introduce such examples in this section, by showing that for any $r$ there exist pairs of graphs that are LC$_{r+1}$-equivalent but not LC$_{r}$-equivalent, leading to an infinite  hierarchy of generalized local equivalences.  Our proof is constructive.

We introduce a family of bipartite graphs $C_{t,k}$ parametrized by two integers $t$ and $k$. This family is a variant of the (bipartite) Kneser graphs and uniform subset graphs. The graph $C_{t,k}$ is bipartite: the first independent set of vertices corresponds to the integers from $1$ to $t$, and the second independent set is composed of all the subsets of $[1,t]$ of size $k$ (we note $\binom{[1,t]}{k}$). There is an edge between an integer $u\in [1,t]$ and a subset $A\subseteq [1,t]$ of size $k$ if and only if $u\in A$:

\begin{definition} For any $k\gs 1$ and $t\gs k$ two integers, let $C_{t,k} = (V,E)$ be a bipartite graph defined as: $V = [1,t] \cup \binom{[1,t]}{k}$~~and~~ $E = \{(u,A) \in [1,t]\times \binom{[1,t]}{k} ~|~ u\in A\}$. 
\end{definition}

\noindent

We introduce a second family of graphs $C'_{t,k}$, defined similarly to $C_{t,k}$, the independent set $[1,t]$ being replaced by a clique:

\begin{definition} For any $k\gs 1$ and $t\gs k$ two integers, let $C'_{t,k}=(V,E)$ be a graph defined as:  $V = [1,t] \cup \binom{[1,t]}{k}$ and $E = \{(u,A) \in [1,t]\times \binom{[1,t]}{k} ~|~ u\in A\} \cup \{(u,v)\in [1,t]\times [1,t]~|~u\neq v\}$.
\end{definition}

Examples of such graphs with parameters $t=4$ and $k=2$ are given in Figure \ref{fig:bikneser}.

\begin{figure}[h]
\centering

\scalebox{1}{
\begin{tikzpicture}[xscale = 0.65,yscale=0.6]    
    \begin{scope}[every node/.style=vertices]
        \node (U1) at (-3,0) {$1$};
        \node (U2) at (-1,0) {$2$};
        \node (U3) at (1,0) {$3$};
        \node (U4) at (3,0) {$4$};
        \node (U12) at (-5,8) {\footnotesize $\{1,2\}$};
        \node (U13) at (-3,8) {\footnotesize$\{1,3\}$};
        \node (U14) at (-1,8) {\footnotesize$\{1,4\}$};
        \node (U23) at (1,8) {\footnotesize$\{2,3\}$};
        \node (U24) at (3,8) {\footnotesize$\{2,4\}$};
        \node (U34) at (5,8) {\footnotesize$\{3,4\}$};
    \end{scope}
    \begin{scope}[every edge/.style=edges]              
        \path [-] (U1) edge node {} (U12);
        \path [-] (U1) edge node {} (U13);
        \path [-] (U1) edge node {} (U14);
        \path [-] (U2) edge node {} (U12);
        \path [-] (U2) edge node {} (U23);
        \path [-] (U2) edge node {} (U24);
        \path [-] (U3) edge node {} (U13);
        \path [-] (U3) edge node {} (U23);
        \path [-] (U3) edge node {} (U34);
        \path [-] (U4) edge node {} (U14);
        \path [-] (U4) edge node {} (U24);
        \path [-] (U4) edge node {} (U34);
    \end{scope}
\end{tikzpicture}\qquad\raisebox{0cm}{
    \raisebox{-1.31cm}{
    \begin{tikzpicture}[xscale = 0.65,yscale=0.6]        
        \begin{scope}[every node/.style=vertices]
        \node (U1) at (-3,0) {$1$};
        \node (U2) at (-1,0) {$2$};
        \node (U3) at (1,0) {$3$};
        \node (U4) at (3,0) {$4$};
        \node (U12) at (-5,8) {\footnotesize$\{1,2\}$};
        \node (U13) at (-3,8) {\footnotesize$\{1,3\}$};
        \node (U14) at (-1,8) {\footnotesize$\{1,4\}$};
        \node (U23) at (1,8) {\footnotesize$\{2,3\}$};
        \node (U24) at (3,8) {\footnotesize$\{2,4\}$};
        \node (U34) at (5,8) {\footnotesize$\{3,4\}$};
        \end{scope}
        \begin{scope}[every edge/.style=edges]              
            \path [-] (U1) edge node {} (U12);
            \path [-] (U1) edge node {} (U13);
            \path [-] (U1) edge node {} (U14);
            \path [-] (U2) edge node {} (U12);
            \path [-] (U2) edge node {} (U23);
            \path [-] (U2) edge node {} (U24);
            \path [-] (U3) edge node {} (U13);
            \path [-] (U3) edge node {} (U23);
            \path [-] (U3) edge node {} (U34);
            \path [-] (U4) edge node {} (U14);
            \path [-] (U4) edge node {} (U24);
            \path [-] (U4) edge node {} (U34);

            \path [-] (U1) edge node {} (U2);
            \path [-] (U2) edge node {} (U3);
            \path [-] (U3) edge node {} (U4);
            \path [-] (U4) edge[bend left=75] node {} (U1);
            \path [-] (U1) edge[bend right=68] node {} (U3);
            \path [-] (U2) edge[bend right=68] node {} (U4);
        \end{scope}
\end{tikzpicture}}}
}

\caption{(Left) The graph $C_{4,2}$. (Right) The graph $C'_{4,2}$.}
\label{fig:bikneser}
\end{figure}
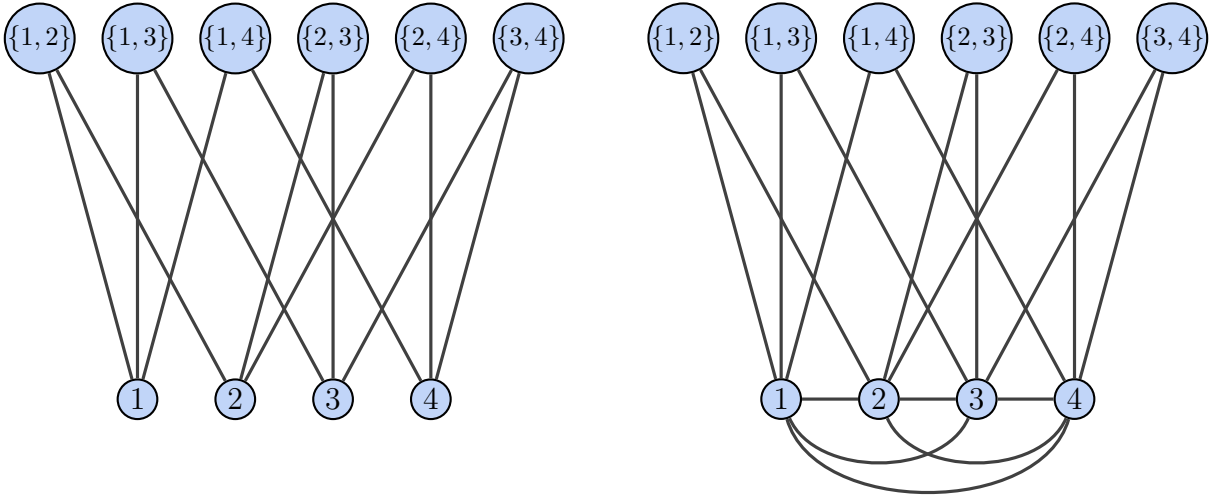

For some parameters $t$ and $k$, $C_{t,k}$ and $C'_{t,k}$ are in standard form with respect to the maximal MLS cover $\mathcal M_{max} = \{L \se V ~|~ L \text{ is a minimal local set}\}$.

\begin{proposition}
    For any odd $k \gs 3$, $t\gs k+2$, $C_{t,k}$ and $C'_{t,k}$ are in standard form, with respect to $\mathcal M_{max}$, for any ordering such that, for any $u \in \binom{[1,t]}{k}$ and for any $v \in [1,t]$, $u < v$. More precisely the vertices in $\binom{[1,t]}{k}$ are of type X and the vertices in $[1,t]$ are of type Z.
\end{proposition}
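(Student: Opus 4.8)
The plan is to verify the three defining conditions of the standard form (no vertices of type Y, type-X vertices form an independent set adjacent only to type-Z vertices with the right ordering, and closed neighborhoods of type-X vertices are minimal local sets) by a direct combinatorial analysis of the minimal local sets of $C_{t,k}$ and $C'_{t,k}$. Since the MLS cover used is $\mathcal M_{\max}$, the type of a vertex is determined by \emph{all} minimal local sets containing it, so the key task is to understand which subsets of vertices are minimal local sets in these graphs, using the cut-rank characterization of \cref{prop:characMLS} and \cref{prop:MLS2cases}. I would first dispatch the cases $C_{t,k}$ and $C'_{t,k}$ simultaneously where possible: note that $C'_{t,k}$ is obtained from $C_{t,k}$ by local complementation on a vertex of $[1,t]$ followed by cleanup (or more precisely the clique on $[1,t]$ is an artifact of LC-type operations), and since minimal local sets and their dimensions are LC-invariant (\cref{cor:localset_invariant}, \cref{prop:MLS2cases}), the two graphs have the same local-set structure; this reduces the bulk of the work to $C_{t,k}$.

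Next I would identify the minimal local sets explicitly. For a vertex $A \in \binom{[1,t]}{k}$, consider the generator $D = \{A\}$: then $Odd(D) = N(A) = A$ (viewing $A$ as a $k$-subset of $[1,t]$), so $\{A\} \cup A$ is a local set of size $k+1$. The claim that this is \emph{minimal} requires showing no proper subset is a local set, which I would argue via the cut-rank function: one checks that $\{A\} \cup A$ is not full cut-rank (since $D=\{A\}$ is in the kernel of $\lambda_{\{A\}\cup A}$, as $Odd(\{A\}) = A \subseteq \{A\}\cup A$) while every proper subset is full cut-rank. The latter uses the bipartite structure and linear independence of neighborhoods: a subset $S$ of $[1,t]$ of size $\le k$ (possibly together with some $k$-subsets) has odd neighborhood that "escapes" $S$ unless forced, because the $k$-subsets of $[1,t]$ have distinct, in fact linearly independent over $\mathbb F_2$ up to the relevant size, indicator vectors — this is where the hypotheses $k$ odd and $t \ge k+2$ enter (oddness of $k$ ensures that the "all of $[1,t]$" generator behaves correctly and that $Odd$ of a single $k$-subset contains the subset itself, placing the escape phenomenon in the right regime; $t \ge k+2$ gives enough room that small sets cannot accidentally be closed). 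Conversely, for a vertex $v \in [1,t]$, I would show any minimal local set containing $v$ must have $v \in Odd(D)\setminus D$ for its generator(s): a generator containing $v$ as an element of $[1,t]$ would pull in all $k$-subsets containing $v$, of which there are $\binom{t-1}{k-1}$, far too many to stay inside a minimal local set, so $v$ can only appear on the "$Odd$ side," giving $v$ type Z; and symmetrically $A \in \binom{[1,t]}{k}$ can only appear in $D\setminus Odd(D)$, giving type X.

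From the type computation the three conditions follow almost immediately: there are no type-Y vertices since every vertex is type X or type Z; the type-X vertices $\binom{[1,t]}{k}$ are pairwise non-adjacent (they all lie in one side of the bipartition — or in $C'_{t,k}$ still form an independent set since only $[1,t]$ was made into a clique), and every neighbor of a type-X vertex lies in $[1,t]$, hence is type Z; the ordering hypothesis "$u < v$ whenever $u \in \binom{[1,t]}{k}$, $v\in[1,t]$" is exactly what makes every type-X vertex smaller than its type-Z neighbors; and $\{A\}\cup N(A) = \{A\}\cup A$ is a minimal local set as established above, so it lies in $\mathcal M_{\max}$. The main obstacle I anticipate is the careful verification that $\{A\}\cup A$ and the analogous sets have \emph{no} smaller local subset — i.e., a clean proof that all proper subsets are full cut-rank. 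This amounts to a rank computation for submatrices of the incidence matrix between $[1,t]$ and $\binom{[1,t]}{k}$ over $\mathbb F_2$, and getting the parity/independence bookkeeping exactly right (including the edge case where a candidate generator involves several $k$-subsets whose symmetric difference of neighborhoods collapses) is where the hypotheses on $k$ and $t$ must be used with precision; I would organize this as a lemma about the $\mathbb F_2$-rank of these incidence submatrices before assembling the final argument.
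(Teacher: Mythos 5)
Your overall strategy (show that every minimal local set is generated by a subset of $\binom{[1,t]}{k}$, deduce that upper vertices are type X and lower vertices type Z, then read off the three standard-form conditions) is the right one and matches the paper's. But two steps as written do not hold up. First, the reduction of $C'_{t,k}$ to $C_{t,k}$ via LC-invariance is false: $C'_{t,k}$ is \emph{not} obtained from $C_{t,k}$ by local complementations. A local complementation on $v\in[1,t]$ complements the subgraph induced by $N(v)=\{A : v\in A\}$, which lies entirely in the upper part, and a local complementation on an upper vertex only creates a clique on $k$ of the $t$ lower vertices; indeed the whole point of this pair of graphs (\cref{thm:existence_Ctk}) is that for suitable parameters they are LU- but \emph{not} LC-equivalent. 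The correct (and much weaker) observation is that the two graphs differ only by edges inside $[1,t]$, so $Odd(D)\cap\binom{[1,t]}{k}$ is the same in both graphs for every $D$, which lets the whole argument for $C_{t,k}$ be repeated verbatim for $C'_{t,k}$.

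Second, your argument that a lower vertex cannot lie in a generator of a minimal local set --- ``it would pull in all $\binom{t-1}{k-1}$ $k$-subsets containing $v$, far too many to stay inside a minimal local set'' --- is a size argument that fails. It only describes the case $|D\cap[1,t]|=1$: when $|D\cap[1,t]|\geq 2$ the upper vertices pulled in are $\{A : |A\cap D\cap[1,t]|\text{ odd}\}$, not all $A\ni v$. And even in the single-vertex case, $\binom{t-1}{k-1}=\frac{k}{t}\binom{t}{k}$ drops to at most half of $n=t+\binom{t}{k}$ as soon as $t\geq 2k$, so the bound $|L|\leq\lfloor n/2\rfloor+1$ on minimal local sets yields no contradiction for larger $t$. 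What is actually needed, and what the paper does, is to exhibit inside any local set generated by a $D$ with $D\cap[1,t]\neq\emptyset$ a \emph{strictly smaller} local set generated by upper vertices only: e.g.\ for $D\cap[1,t]=\{1\}$, the $k$ upper vertices $\{1\}\cup x$ with $x\in\binom{[2,k+1]}{k-1}$ have odd neighborhood exactly $\{1\}$, because $1$ lies in all $k$ of them ($k$ odd) while each $v\in[2,k+1]$ lies in $k-1$ of them ($k-1$ even); a separate two-generator witness handles $|D\cap[1,t]|\geq 2$. The hypotheses ``$k$ odd'' and ``$t\geq k+2$'' enter precisely there (and in guaranteeing the containment is strict), not in the way you describe --- $Odd(\{A\})=A$ holds for any parity of $k$. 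Your sketch of the minimality of $\{A\}\cup A$ is salvageable along the lines you indicate, but the two points above are genuine gaps.
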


\begin{proof}
    Fix two integers $k$ and $t$ such that $k \gs 3$ is odd and $t\gs k+2$. 
    First, let us prove by contradiction that a minimal local set cannot be generated by a set containing a vertex in $[1,t]$. Suppose that a set $D \se V$ such that $D \cap [1,t] \neq \emptyset$, generates a minimal local set $L$ in $C_{t,k}$. By symmetry, we suppose without loss of generality that $D \cap [1,t] = [1,m]$ for some $m\in[1,t]$. Let us exhibit a local set strictly contained in $L$, implying a contradiction with the fact that $L$ is a minimal local set:
    \begin{itemize}
        \item If $m=1$ i.e. $D \cap [1,t] = \{1\}$: $L$ contains every vertex of the form $\{\{1\}\cup x\}$ for some $x\in \binom{[2,t]}{k-1}$. Let $D' = \{\{1\}\cup x ~|~ x \in \binom{[2,k+1]}{k-1}\}$. $Odd_{C_{t,k}}(D') = \{1\}$, so $D' \cup Odd_{C_{t,k}}(D')$ is a local set contained strictly in $L$ ($L$ contains other vertices, e.g. the vertex $\{1,4,5,\cdots,k+2\}$).
        \item If $m\gs 2$: let $m_\text{odd}$ be the minimum between $k$ and the largest odd integer strictly less than $m$. Let $D' = \{\{1,\cdots,m_\text{odd},m+1,\cdots,m+k-m_\text{odd}\},\{2,\cdots,m_\text{odd}+1,m+1,\cdots,m+k-m_\text{odd}\}\}$. $Odd_{C_{t,k}}(D') = \{1,m_\text{odd}+1\}$, so $D' \cup Odd_{C_{t,k}}(D')$ is a local set contained strictly in $L$ ($L$ contains other vertices, e.g. the vertex $\{1,3,\cdots,m_\text{odd}+1,m+1,\cdots,m+k-m_\text{odd}\}$).
    \end{itemize} 
    Thus, every minimal local set is generated by a set in $\binom{[1,t]}{k}$. As every vertex is covered by a minimal local set by \cref{thm:MLS_cover}, it follows that each vertex in $\binom{[1,t]}{k}$ is of type X and each vertex in $[1,t]$ is of type Z. The proof is the same for $C'_{t,k}$ as $C_{t,k}$ and $C'_{t,k}$ differ only by edges between vertices of $[1,t]$ thus for any set $D \se V$, $(D \cup Odd_{C_{t,k}}(D)) \cap \binom{[1,t]}{k} = (D \cup Odd_{C'_{t,k}}(D)) \cap \binom{[1,t]}{k}$.
\end{proof}

The following proposition gives a sufficient condition on $t$ and $k$ for the LC$_{r}$-equivalence of $C_{t,k}$ and $C'_{t,k}$.

\begin{proposition}\label{lemma:cond_lcr}
    For any $t\gs k\gs r+2$, $C_{t,k}$ and $C'_{t,k}$ are LC$_{r}$-equivalent if
    \begin{equation*}
            \binom{t - 2}{k - 2}  = 2^{r-1}\bmod 2^r
          \text{~~~and~~~}  \forall i \in [ 1, r-1 ]\text{, ~~}\binom{t - i -2}{k - i -2}  = 0\bmod 2^{r-i}
    \end{equation*}
\end{proposition}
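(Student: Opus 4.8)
The plan is to exhibit a single $r$-local complementation that turns $C_{t,k}$ into $C'_{t,k}$; the claimed LC$_r$-equivalence then follows immediately from \cref{thm:LCr_lc}. The natural candidate is $G\star^r S$ where $S$ is the multiset containing every vertex of $\binom{[1,t]}{k}$ with multiplicity $1$. Since $\binom{[1,t]}{k}$ is an independent set in the bipartite graph $C_{t,k}$, $S$ is an independent multiset, and $\supp(S)=\binom{[1,t]}{k}$, so $V\sm\supp(S)=[1,t]$. The two families $C_{t,k}$ and $C'_{t,k}$ differ only in the edges among $[1,t]$ (empty versus complete), so morally the $r$-local complementation only has to ``fill in'' the clique on $[1,t]$, leaving everything incident to a $k$-subset untouched.

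First I would check that $S$ is $r$-incident. For $K\se[1,t]$ with $|K|=k'+2$, the common neighborhood $\Lambda_{C_{t,k}}^K$ is exactly the family of $k$-subsets of $[1,t]$ that contain $K$, so $S\bullet\Lambda_{C_{t,k}}^K=\binom{t-(k'+2)}{k-(k'+2)}$ (which is $0$ when $k'+2>k$). The $r$-incidence condition requires this to be a multiple of $2^{r-k'-\delta(k')}$: for $k'=0$ this says $\binom{t-2}{k-2}$ is a multiple of $2^{r-1}$, which is implied by the hypothesis $\binom{t-2}{k-2}=2^{r-1}\bmod 2^r$; for $k'=i\in[1,r-1]$ it is exactly the hypothesis $\binom{t-i-2}{k-i-2}=0\bmod 2^{r-i}$ (here $\delta(i)=0$). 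Hence $C_{t,k}\star^r S$ is valid.

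Next I would compute $C_{t,k}\star^r S$ on each pair of vertices. If $u\in\supp(S)$ (a $k$-subset), then by \cref{loccompneighS} we have $N_{C_{t,k}\star^r S}(u)=N_{C_{t,k}}(u)$, so the $X$--$Z$ edges are preserved and no edges among $k$-subsets are created — matching $C'_{t,k}$. (Equivalently, for $u,v$ both $k$-subsets, $\Lambda_{C_{t,k}}^{u,v}\se[1,t]$ is disjoint from $\supp(S)$, so $S\bullet\Lambda^{u,v}=0$.) If $u,v\in[1,t]$, then $\Lambda_{C_{t,k}}^{u,v}$ is the set of $k$-subsets containing both $u$ and $v$, of size $\binom{t-2}{k-2}$, all of multiplicity $1$ in $S$, so $S\bullet\Lambda_{C_{t,k}}^{u,v}=\binom{t-2}{k-2}=2^{r-1}\bmod 2^r$; thus the edge $(u,v)$ is toggled. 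Since $[1,t]$ is independent in $C_{t,k}$, it becomes a clique, and altogether $C_{t,k}\star^r S=C'_{t,k}$. By \cref{thm:LCr_lc} (and \cref{prop:implementation_rlc} for the implementation), $C_{t,k}$ and $C'_{t,k}$ are LC$_r$-equivalent.

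The computation is essentially routine; the only point requiring care is the bookkeeping between the exponent $2^{r-k'-\delta(k')}$ in the definition of $r$-incidence and the two displayed hypotheses. In particular the Kronecker-delta correction at $k'=0$ is what forces the first hypothesis to be the sharp congruence $\binom{t-2}{k-2}=2^{r-1}\bmod 2^r$: the ``divisible by $2^{r-1}$'' half ensures $r$-incidence, while the ``not divisible by $2^r$'' half is precisely what is needed to actually toggle every edge within $[1,t]$ and build the clique.
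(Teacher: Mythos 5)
Your proposal is correct and follows essentially the same route as the paper: take $S$ to be the set $\binom{[1,t]}{k}$ with multiplicity $1$, check $r$-incidence by identifying $S\bullet\Lambda_{C_{t,k}}^K=\binom{t-|K|}{k-|K|}$ and matching the exponents $2^{r-k'-\delta(k')}$ against the two hypotheses, and observe that the single $r$-local complementation toggles exactly the edges within $[1,t]$, yielding $C_{t,k}\star^r S=C'_{t,k}$. Your additional remarks on why the congruence for $\binom{t-2}{k-2}$ must be sharp are accurate and consistent with the paper's argument.
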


\begin{proof}
    $C_{t,k}$ and $C'_{t,k}$ are related by an $r$-local complementation over the set $\binom{[1,t]}{k}$ (which can be seen as the multiset where each vertex of $\binom{[1,t]}{k}$ appears once). Let $K \se [1,t]$ of size $k'+2$. $\binom{[1,t]}{k}\bullet \Lambda_{C_{t,k}}^K = \left|\left\{x \in \binom{[1,t]}{k}~|~K\se x\right\}\right| = \binom{t-k'-2}{k-k'-2}$ is a multiple of $2^{r-k'-\delta(k')}$ by hypothesis. Thus, $\binom{[1,t]}{k}$ is $r$-incident. Besides, for any $u,v \in [1,t]$, $\binom{[1,t]}{k} \bullet \Lambda_{C_{t,k}}^{u,v} = \binom{t-2}{k-2} = 2^{r-1}\bmod 2^{r}$ by hypothesis. Thus, $C_{t,k}\star^r\binom{[1,t]}{k} = C'_{t,k}$.
\end{proof}

The following proposition provides a sufficient condition on $t$ and $k$ for the non LC$_{r}$-equivalence of $C_{t,k}$ and $C'_{t,k}$.

\begin{proposition}
    \label{lemma:cond_not_lcr}
    For any odd $k \gs 3$, $t\gs k+2$, $C_{t,k}$ and $C'_{t,k}$ are not LC$_{r}$-equivalent if $\binom{t}{2}$ is odd and $\binom{k}{2} = 0\bmod 2^{r}$.
\end{proposition}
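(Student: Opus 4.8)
The plan is to combine the two graphs $C_{t,k}$ and $C'_{t,k}$ being in standard form (with respect to $\mathcal M_{max}$) with the rigidity results of the previous section—specifically \cref{lemma:standardform_LCr_lc} and \cref{cor:constraint_r_lc}—to show that if they were LC$_r$-equivalent, then they would have to be related by a single $r$-local complementation over the vertices of type X, i.e.\ over (a multiset supported on) $\binom{[1,t]}{k}$, plus local complementations on vertices of type $\bot$. But there are \emph{no} vertices of type $\bot$ here: every vertex of $\binom{[1,t]}{k}$ is of type X and every vertex of $[1,t]$ is of type Z. So by \cref{cor:constraint_r_lc}, $C_{t,k}$ and $C'_{t,k}$ would be related by a single $r$-local complementation over an $r$-incident independent multiset $S$ supported on $\binom{[1,t]}{k}$. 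The goal is then to derive a contradiction with the divisibility hypotheses $\binom{t}{2}$ odd and $\binom{k}{2}=0 \bmod 2^r$.

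First I would record the combinatorial identities that drive the argument: for a vertex $A\in\binom{[1,t]}{k}$ and vertices $u,v\in[1,t]$, the edge $(u,v)$ is present in $C'_{t,k}$ but not in $C_{t,k}$, so for the $r$-local complementation to produce exactly these edges we need $S\bullet\Lambda_{C_{t,k}}^{\{u,v\}} = 2^{r-1}\bmod 2^r$ for \emph{every} pair $u,v\in[1,t]$, and $S\bullet\Lambda_{C_{t,k}}^{\{u_1,\dots\}}$ must not toggle any edge incident to a vertex of type X (automatic since $S$ is independent, by \cref{loccompneighS}), nor create/destroy edges among vertices of type $\binom{[1,t]}{k}$ or between $\binom{[1,t]}{k}$ and $[1,t]$. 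The key quantity is $w:=\sum_{A\in\binom{[1,t]}{k}}S(A)$, the total multiplicity, and more refined sums $S\bullet\Lambda_{C_{t,k}}^{K}=\sum_{A\supseteq K}S(A)$ for $K\subseteq[1,t]$. Summing the pairwise conditions $S\bullet\Lambda_{C_{t,k}}^{\{u,v\}}=2^{r-1}\bmod 2^r$ over all $\binom{t}{2}$ pairs and using $\sum_{\{u,v\}}\sum_{A\supseteq\{u,v\}}S(A)=\sum_A\binom{k}{2}S(A)=\binom{k}{2}w$, one gets $\binom{k}{2}w \equiv \binom{t}{2}2^{r-1}\bmod 2^r$. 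Since $\binom{k}{2}=0\bmod 2^r$ by hypothesis, the left-hand side is $0\bmod 2^r$; since $\binom{t}{2}$ is odd, the right-hand side is $2^{r-1}\bmod 2^r\neq 0$. This is the contradiction, and it shows no such $S$ exists, hence $C_{t,k}$ and $C'_{t,k}$ are not LC$_r$-equivalent.

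The step I expect to require the most care is justifying cleanly that LC$_r$-equivalence of these two specific graphs forces a relation by a \emph{single} $r$-local complementation supported on the type-X vertices, with \emph{no accompanying} local complementations modifying the relevant structure. This needs: (i) that both graphs are genuinely in standard form with respect to a \emph{common} MLS cover—here $\mathcal M_{max}$, which is the same for both since they share all minimal local sets (as noted, they differ only in edges among $[1,t]$, which does not affect which subsets of $\binom{[1,t]}{k}$ are swept by generators); (ii) invoking \cref{lemma:standardform_LCr_lc} to get a decomposition into local complementations on $V_\bot$ together with one $r$-local complementation on $V_X$, and then observing $V_\bot=\emptyset$ so we land exactly in \cref{cor:constraint_r_lc}. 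A subtlety is that \cref{lemma:standardform_implies_rotations} (used inside \cref{lemma:standardform_LCr_lc}) may first replace $G_1$ by some $G'_1=_{LC}G_1$ in standard form, related to $G_1$ by local complementations \emph{on $V_\bot$}; since $V_\bot=\emptyset$, this forces $G'_1=G_1$, so indeed $C_{t,k}$ and $C'_{t,k}$ themselves are directly related by one $r$-local complementation over a multiset $S$ supported on $\binom{[1,t]}{k}$.

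Finally I would tie up the bookkeeping: check that $r$-incidence of $S$ plus the pairwise-toggle conditions are exactly the constraints imposed by \cref{prop:implementation_rlc}/\cref{lemma:cond_angles}, so that the only freedom is the choice of multiplicities $S(A)\in[1,2^r)$; verify that the counting identity $\sum_{\{u,v\}\subseteq[1,t]}|\{A : \{u,v\}\subseteq A\}| = \binom{k}{2}$ holds per vertex $A$ (each $k$-set contains $\binom{k}{2}$ pairs), giving the global identity $\sum_{\{u,v\}}S\bullet\Lambda^{\{u,v\}} = \binom{k}{2}\,w$; and then run the mod-$2^r$ contradiction above. Since the hypotheses $t\ge k+2$, $k\ge 3$ odd guarantee (via the preceding proposition) that the standard-form structure is as claimed, the argument is complete. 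The only genuinely delicate point, as noted, is the appeal to the rigidity lemmas; the arithmetic at the end is short and routine.
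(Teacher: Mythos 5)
Your proposal is correct and follows essentially the same route as the paper: reduce to a single $r$-local complementation over a multiset $S$ supported on $\binom{[1,t]}{k}$ via the standard form and \cref{cor:constraint_r_lc}, then sum the pairwise conditions $S\bullet\Lambda^{\{u,v\}}=2^{r-1}\bmod 2^r$ over all $\binom{t}{2}$ pairs and use $\sum_{\{u,v\}}S\bullet\Lambda^{\{u,v\}}=\binom{k}{2}\sum_A S(A)$ to reach the contradiction $0\equiv 2^{r-1}\bmod 2^r$. The extra care you take in justifying that $V_\bot=\emptyset$ forces the clean single-$r$-local-complementation form is exactly the content the paper delegates to \cref{cor:constraint_r_lc}.
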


\begin{proof}       
    Let us suppose by contradiction that $C_{t,k}$ and $C'_{t,k}$ are LC$_{r}$-equivalent. 
    By \cref{cor:constraint_r_lc}, they are related by a single $r$-local complementation over vertices in $\binom{[1,t]}{k}$. Let $S$ be the multiset in $\binom{[1,t]}{k}$ such that $C_{t,k}\star^rS = C'_{t,k}$. For each $u,v \in [1,t]$, $S \bullet\Lambda_{C_{t,k}}^{u,v} = \sum_{x \in \Lambda_{C_{t,k}}^{u,v}}S(x) = 2^{r-1}\bmod 2^{r}$. Summing over all pairs $u,v \in [1,t]$, and, as by hypothesis $\binom{t}{2}$ is odd, $\sum_{u,v\in [1,t]} S \bullet\Lambda_{C_{t,k}}^{u,v} = \binom{k}{2}\sum_{x \in \binom{[1,t]}{k}}S(x)  = \binom{t}{2}2^{r-1}\bmod 2^r= 2^{r-1} \bmod 2^{r}$. The first part of the equation is true because $\sum_{u,v\in [1,t]} S \bullet\Lambda_{C_{t,k}}^{u,v} = \sum_{u,v\in [1,t]} \sum_{x \in \Lambda_{C_{t,k}}^{u,v}}S(x) = \sum_{x \in \binom{[1,t]}{k}} \sum_{u,v\in x} S(x) $. This leads to a contradiction as $\binom{k}{2} = 0 \bmod 2^{r}$ by hypothesis. 
\end{proof}

It remains to find, for any $r$, parameters $t$ and $k$ such that the corresponding graphs are LC$_{r}$-equivalent but not LC$_{r-1}$-equivalent. Such parameters exist.

\begin{theorem}
    For any $r \gs 2$, $\ket{C_{t,k}}$ and $\ket{C'_{t,k}}$ are LC$_{r}$-equivalent but not LC$_{r-1}$-equivalent when $k = 2^r +1$ and $t = 2^r + 2^{\lceil\log_2(r+2) \rceil} -1$. 
\label{thm:existence_Ctk}
\end{theorem}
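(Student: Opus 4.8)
The plan is to verify that the chosen parameters $k = 2^r + 1$ and $t = 2^r + 2^{\lfloor \log_2(r)\rfloor + 1} - 1$ simultaneously satisfy the sufficient condition for LC$_r$-equivalence from \cref{lemma:cond_lcr} and the sufficient condition for non-LC$_{r-1}$-equivalence from \cref{lemma:cond_not_lcr}. Both conditions are purely arithmetic statements about $2$-adic valuations of binomial coefficients, so the proof reduces to a careful application of Kummer's theorem (the $2$-adic valuation of $\binom{a}{b}$ equals the number of carries when adding $b$ and $a-b$ in base $2$), together with Lucas' theorem for the parity statements.

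First I would record the relevant valuations. Write $t - 2 = 2^r + 2^{\lfloor \log_2 r\rfloor + 1} - 3$ and $k - 2 = 2^r - 1$. For the condition $\binom{t-2}{k-2} = 2^{r-1} \bmod 2^r$ I need $v_2\!\left(\binom{t-2}{k-2}\right) = r-1$ together with the correct odd part; I would compute $(t-2) - (k-2) = 2^{\lfloor \log_2 r\rfloor + 1} - 2$ and count base-$2$ carries in the sum $(k-2) + ((t-2)-(k-2))$, using that $k-2 = 2^r - 1 = \underbrace{1\cdots1}_{r}$ in binary and $2^{\lfloor \log_2 r\rfloor + 1} - 2 = \underbrace{1\cdots1}_{\lfloor\log_2 r\rfloor}0$ in binary, so that exactly $\lfloor \log_2 r \rfloor + 1$ low-order carries propagate — this is where $t$ was engineered. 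I would similarly check each of the $r-1$ conditions $\binom{t-i-2}{k-i-2} = 0 \bmod 2^{r-i}$ for $i \in [1, r-1]$ by the same carry count: here $(t-i-2) - (k-i-2) = 2^{\lfloor\log_2 r\rfloor+1} - 2$ is unchanged while $k - i - 2 = 2^r - 1 - i$, and I would argue the number of carries is at least $r - i$ because the bottom $r$ bits of $k-i-2$ still contain enough ones. For the non-equivalence at level $r-1$ via \cref{lemma:cond_not_lcr}, I need $\binom{t}{2}$ odd and $\binom{k}{2} \equiv 0 \bmod 2^{r-1}$; since $k = 2^r + 1$ we get $\binom{k}{2} = \frac{k(k-1)}{2} = \frac{(2^r+1)2^r}{2} = 2^{r-1}(2^r+1)$, which is divisible by $2^{r-1}$ (indeed exactly), and $\binom{t}{2} = \frac{t(t-1)}{2}$ is odd precisely when $t \equiv 2$ or $3 \bmod 4$, which I would check holds for $t = 2^r + 2^{\lfloor\log_2 r\rfloor+1} - 1$ (noting $r \ge 2$ forces $\lfloor \log_2 r\rfloor + 1 \ge 2$, so $t \equiv -1 \equiv 3 \bmod 4$). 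Also $k = 2^r + 1 \ge 3$ is odd and $t \ge k + 2$ is immediate for $r \ge 2$, so the hypotheses of both propositions are met. Finally, LC$_r$-equivalence together with non-LC$_{r-1}$-equivalence gives the second sentence of \cref{thm:existence_Ctk}'s motivation: applying this for all $r$ yields the infinite strict hierarchy, and any two LC$_r$-equivalent graphs are LU-equivalent by \cref{thm:LCr_lc}, so in particular there are LU-equivalent graphs that are not LC$_{r-1}$-equivalent.

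The main obstacle I anticipate is the bookkeeping for the family of conditions $\binom{t-i-2}{k-i-2} \equiv 0 \bmod 2^{r-i}$: unlike the single conditions, these require a uniform lower bound on carry counts as $i$ ranges over $[1, r-1]$, and the binary expansion of $k - i - 2 = 2^r - 1 - i$ changes with $i$ (it is $2^r - 1$ with a decreasing quantity subtracted, so its low bits flip). I would handle this by writing $2^r - 1 - i$ explicitly: its top bit is $2^{r-1}$ (for $i < 2^{r-1}$, which holds since $i \le r - 1 < 2^{r-1}$ for $r \ge 2$), and more importantly its bits in positions $\lceil \log_2(i+1)\rceil$ through $r-1$ are all $1$; adding $2^{\lfloor\log_2 r\rfloor+1} - 2$, whose ones occupy positions $1$ through $\lfloor \log_2 r\rfloor$, triggers a carry chain, and since $i \le r-1$ one has $\lceil \log_2(i+1)\rceil \le \lceil \log_2 r\rceil \le \lfloor \log_2 r\rfloor + 1$, so the chain reaches high enough to guarantee at least $r - i$ carries. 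I expect this argument to go through but to need the most care; everything else is a direct computation with Kummer's and Lucas' theorems.
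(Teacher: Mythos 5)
Your route is the same as the paper's: reduce the statement to the arithmetic sufficient conditions of \cref{lemma:cond_lcr} (for LC$_r$-equivalence) and \cref{lemma:cond_not_lcr} (for non-LC$_{r-1}$-equivalence), then verify them for the given $t,k$ by computing $2$-adic valuations of binomial coefficients. Your Kummer carry count is literally the paper's \cref{lemma:paritybinom}, since the number of carries in $s+(m-s)$ equals $w(s)+w(m-s)-w(m)$. The easy conditions check out as you say: $\binom{t}{2}$ is odd since $t\equiv 3\bmod 4$, $\binom{k}{2}=2^{r-1}(2^r+1)$, and no control on the odd part of $\binom{t-2}{k-2}$ is needed once $v_2=r-1$ (any odd multiple of $2^{r-1}$ is $2^{r-1}\bmod 2^r$). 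One small correction there: the carry count for $\binom{t-2}{k-2}$ must come out to $r-1$, not ``$\lfloor\log_2 r\rfloor+1$'' --- the carries born at positions $1,\dots,\lfloor\log_2 r\rfloor$ propagate through every position up to $r-1$ because $k-2=2^r-1$ is all ones.

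The genuine gap is exactly in the step you flagged as delicate, and it is not just bookkeeping: the claim ``at least $r-i$ carries'' is false for some admissible $(r,i)$. Your inequality $\lceil\log_2(i+1)\rceil\ls\lfloor\log_2 r\rfloor+1$ permits the guaranteed block of ones in $k-i-2=2^r-1-i$ to begin one position \emph{above} the top one of $t-k=2^{\lfloor\log_2 r\rfloor+1}-2$, in which case the two binary supports are disjoint and no carry is generated at all. Concretely, for $r=3$ (so $k=9$, $t=11$) and $i=r-1=2$ one has $k-i-2=5=101_2$ and $t-k=2=010_2$: zero carries, $\binom{7}{5}=21$ is odd, and the hypothesis $\binom{t-i-2}{k-i-2}=0\bmod 2^{r-i}$ of \cref{lemma:cond_lcr} fails (here $2^{r-i}=2$). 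The same failure occurs at $i=r-1$ for every $r=2^m-1$ with $m\gs 2$ (e.g.\ $\binom{127}{121}$ is odd for $r=7$). So for these values of $r$ the stated parameters do not satisfy the sufficient condition, and neither your argument nor, as far as I can tell, the paper's own verification closes this case --- the paper applies the identity $w(2^z-1-j)=z-w(j)$ with $z=\lfloor\log_2 r\rfloor+1$ and $j=i+2>2^z-1$, outside its range of validity. A complete proof must exclude or repair the case $r=2^m-1$, for instance by adjusting $t$ or exhibiting a different $r$-incident multiset; for all other $r$ your carry argument can be made rigorous.
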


We begin by proving the following technical lemma.

\begin{lemma}
    Given any integers $s \ls m$, $\binom{m}{s} = 2^{w(s)+w(m-s)-w(m)}q$ where $q$ is an odd integer and $w(i)$ denotes the Hamming weight of the integer $i$.
\label{lemma:paritybinom}
\end{lemma}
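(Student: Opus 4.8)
The plan is to prove \cref{lemma:paritybinom} via Kummer's theorem and then derive \cref{thm:existence_Ctk} by carefully computing Hamming weights of the relevant binomial coefficients and feeding them into \cref{lemma:cond_lcr} and \cref{lemma:cond_not_lcr}.

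\textbf{Proof of \cref{lemma:paritybinom}.} First I would recall Kummer's theorem: the exponent of $2$ in $\binom{m}{s}$ equals the number of carries when adding $s$ and $m-s$ in base $2$. The plan is to relate the number of carries to Hamming weights. Writing $s + (m-s) = m$ in binary, a standard counting argument (sum the digit-wise identity $a_i + b_i + c_{i-1} = d_i + 2c_i$ over all positions $i$, where $c_i$ is the carry out of position $i$) gives $w(s) + w(m-s) + (\text{number of carries}) = w(m) + 2(\text{number of carries})$, hence the number of carries is exactly $w(s) + w(m-s) - w(m)$. By Kummer's theorem this is the $2$-adic valuation of $\binom{m}{s}$, so $\binom{m}{s} = 2^{w(s)+w(m-s)-w(m)} q$ with $q$ odd. (One could alternatively cite Legendre's formula $v_2(m!) = m - w(m)$ and compute $v_2\!\left(\binom{m}{s}\right) = v_2(m!) - v_2(s!) - v_2((m-s)!) = w(s) + w(m-s) - w(m)$ directly.)

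\textbf{Proof of \cref{thm:existence_Ctk}.} With $k = 2^r + 1$ and $t = 2^r + 2^{\lfloor \log_2 r\rfloor + 1} - 1$, I would set $p \defeq \lfloor \log_2 r \rfloor + 1$, so that $2^{p-1} \leq r < 2^p$ and $t = 2^r + 2^p - 1 = k + 2^p - 2$. The strategy is to verify the hypotheses of \cref{lemma:cond_lcr} for level $r$, and the hypotheses of \cref{lemma:cond_not_lcr} for level $r-1$; by \cref{prop:monotonicity} (or \cref{thm:LCr_lc}) LC$_r$-equivalence and not-LC$_{r-1}$-equivalence are exactly what is needed, since not-LC$_{r-1}$ implies not-LC$_{r'}$ for all $r' < r$. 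For the LC$_r$ part, I must check $\binom{t-2}{k-2} = 2^{r-1} \bmod 2^r$ and $\binom{t-i-2}{k-i-2} = 0 \bmod 2^{r-i}$ for $i \in [1, r-1]$. Using $t - i - 2 = (k - i - 2) + (2^p - 2)$ and $k - i - 2 = 2^r - i - 1$, I would apply \cref{lemma:paritybinom}: the valuation of $\binom{t-i-2}{k-i-2}$ is $w(2^r - i - 1) + w(2^p - 2) - w(2^r + 2^p - i - 3)$. Here $w(2^p - 2) = p - 1$. For $i = 0$: $w(2^r - 1) = r$, and $2^r + 2^p - 3$ has binary expansion consisting of bit $r$ set, bits $0$ through $p-1$ set except bit $1$ (since $2^p - 3 = 2^p - 2 - 1$ flips appropriately), giving weight $r$; one computes the valuation is exactly $r - 1$, so $2^{r-1} \| \binom{t-2}{k-2}$, i.e. $\binom{t-2}{k-2} = 2^{r-1} \bmod 2^r$. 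For $i \in [1, r-1]$, a similar but more delicate digit analysis of $2^r - i - 1$ versus $2^r + 2^p - i - 3$ shows the valuation is at least $r - i$. For the not-LC$_{r-1}$ part, I must check $\binom{t}{2}$ is odd and $\binom{k}{2} = 0 \bmod 2^{r-1}$. Now $\binom{t}{2} = \binom{t-1}{1}\cdot t / $ \ldots more simply, $\binom{t}{2}$ is odd iff $v_2\binom{t}{2} = 0$ iff $w(2) + w(t-2) = w(t)$, i.e. $1 + w(t-2) = w(t)$; since $t = 2^r + 2^p - 1$ is odd and $t - 2 = 2^r + 2^p - 3$, a digit check gives this. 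And $\binom{k}{2} = \binom{2^r+1}{2} = \frac{(2^r+1)2^r}{2} = 2^{r-1}(2^r+1)$, which is $0 \bmod 2^{r-1}$ (indeed exactly divisible by $2^{r-1}$), as required.

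The main obstacle I expect is the careful base-$2$ digit bookkeeping for the intermediate cases $i \in [1, r-1]$ in the LC$_r$ verification: one must track how borrowing in $2^r - i - 1$ interacts with the low block $2^p - 2$ and confirm the carry count when adding $(2^r - i - 1) + (2^p - 2)$ is at least $r - i$, uniformly in $i$. This requires splitting into cases according to how $i+1$ compares to $2^p$ and examining the bit patterns, but it is a finite, mechanical (if tedious) verification once \cref{lemma:paritybinom} is in hand. I would organize it by writing $i + 1$ in binary and counting carries directly, rather than computing three Hamming weights separately.
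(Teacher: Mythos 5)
Your proof of \cref{lemma:paritybinom} is correct. Your primary route — Kummer's theorem plus the carry-counting identity $w(s)+w(m-s)-w(m)=(\text{number of carries when adding } s \text{ and } m-s \text{ in base }2)$ — differs in packaging from the paper's, which instead proves Legendre's formula $v_2(n!)=n-w(n)$ from scratch by induction on $n$ and then divides the three factorials; but the parenthetical alternative you offer at the end of that paragraph is exactly the paper's argument. The trade-off is minor: the paper's version is fully self-contained (the induction on $n!$ is three lines), while yours leans on Kummer's theorem as a cited black box but avoids factorials entirely and makes the "number of carries" interpretation explicit, which is arguably more illuminating for the later digit-bookkeeping in \cref{thm:existence_Ctk}. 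Either way the 2-adic valuation comes out as $w(s)+w(m-s)-w(m)$ and the lemma follows. (Note that only the lemma is the statement under review here; your sketch of \cref{thm:existence_Ctk} is a separate matter, and its case analysis for $i\in[1,r-1]$ is indeed where the real work lies, as you anticipate.)
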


\begin{proof}
    
    We prove by induction that for any $n \gs 0$, $n! = 2^{n-\ham n}q$ where $q$ is an odd integer. The property is trivially true for $n=0$. Given $n>0$, by induction hypothesis, there exists $q=1\bmod 2$ s.t. $n! = n (n-1)! = n2^{n-1-w(n-1)}q=2^kq'2^{n-1-w(n-1)}q= 2^{n-1+k-w(n-1)}qq'$, where $n=2^kq'$ and $q'=1\bmod 2$. Notice that $w(n)=w(q')$, as the binary notation of $n$ is nothing but the binary notation of $q'$ concatenated with $k$ zeros. Moreover, $w(n-1)=w(q')-1+k = w(n)-1+k$ as $n-1= 2^k(q'-1)+ \sum_{j=0}^{k-1}2^j$. Therefore, $n!=2^{n-w(n)}qq'$. 
   
    As a consequence there exist three odd integers $q_1$, $q_2$ and $q_3$ such that \begin{equation*}
    \binom{m}{s} = \dfrac{m!}{s! (m-s)!}=\dfrac{2^{m-w(m)}q_1}{2^{s-w(s)}q_2 2^{m-s-w(m-s)}q_3} = 2^{w(s)+w(m-s)-w(m)}\dfrac{q_1}{q_2 q_3} \qedhere
    \end{equation*}
\end{proof}

\begin{proof}[Proof of \cref{thm:existence_Ctk}]
    \cref{lemma:cond_lcr} along with \cref{lemma:cond_not_lcr} imply that for any odd $k \gs 3$ and $t\gs k+2$, $C_{t,k}$ and $C'_{t,k}$ are LC$_{r}$-equivalent but not LC$_{r-1}$-equivalent if
    \begin{equation*}
        \begin{split}
            \binom{t}{2} & = 1\bmod 2\\  
            \binom{k}{2} & = 0\bmod2^{r-1}\\ 
            \binom{t - 2}{k - 2} & = 2^{r-1}\bmod 2^r\\   
            \forall i \in [ 1, r-1 ]\text{, ~~}\binom{t - i -2}{k - i -2} & = 0\bmod 2^{r-i}                     
        \end{split}
    \end{equation*}  

    By \cref{lemma:paritybinom}, it is equivalent to prove the following equations:

    \begin{equation*}
        \begin{split}
            & \ham{t} - \ham{t-2}  = 1\\
            & \ham{k-2} - \ham{k}  \gs r-2 \\
            & \ham{k-2}+\ham{t-k}-\ham{t-2} =r -1\\   
            \forall i \in [ 1, r-1 ]\text{, ~~}& \ham{k-2 -i}+\ham{t-k} -\ham{t-2-i} \gs r -i            
        \end{split}
    \end{equation*}

    Let us prove that these equations are satisfied for $k = 2^r +1$ and $t = 2^r + 2^{\lceil \log_2(r+2) \rceil} -1$. We will make use of the following property of the Hamming weight: if $x\ls 2^z-1$, then $\ham{2^z-1-x}=z-\ham{x}$. In particular, for any $x \ls r+1 \ls 2^{\lceil \log_2(r+2) \rceil}-1$, $\ham{t-x}=\lceil \log_2(r+2) \rceil +1-\ham{x}$.
    \begin{itemize}
        \item $\ham{t}=\lceil \log_2(r+2) \rceil +1$, $\ham{t-2}=\lceil \log_2(r+2) \rceil$, thus $\ham{t} - \ham{t-2}  = 1$.
        
        \item $\ham{k}=2$ and $\ham{k-2}=\ham{2^r-1}=r$, thus $\ham{k-2} - \ham{k} = r-2$.
        
        \item $\ham{t-k} = \ham{2^{\lceil \log_2(r+2) \rceil}-2} = \lceil \log_2(r+2) \rceil -1$. Thus, $\ham{k-2}+\ham{t-k}-\ham{t-2}= r + \lceil \log_2(r+2) \rceil -1 - \lceil \log_2(r+2) \rceil = r -1$.
        
        \item Let $i \in [ 1, r-1 ]$. $\ham{k-2-i} = \ham{2^r-1-i}=r-\ham{i}$ and $\ham{t-2-i} = \ham{t -(i+2)} =  \lceil \log_2(r+2) \rceil +1 - \ham{i+2}$. Thus, $\ham{k-2 -i}+\ham{t-k} -\ham{t-2-i} = r-\ham{i} + \lceil \log_2(r+2) \rceil -1 - \left(\lceil \log_2(r+2) \rceil +1 - \ham{i+2}\right) = r - \left(\ham{i} - \ham{i+2} + 2\right)$. It is easy to check that $\ham{i} - \ham{i+2} + 2 \ls i$. Thus, $\ham{k-2 -i}+\ham{t-k} -\ham{t-2-i} \gs r -i$.
    \end{itemize}

Hence, $C_{t,k}$ and $C'_{t,k}$ are LC$_{r}$-equivalent but not LC$_{r-1}$-equivalent. By \cref{thm:LCr_lc}, $\ket{C_{t,k}}$ and $\ket{C'_{t,k}}$ are LC$_{r}$-equivalent but not LC$_{r-1}$-equivalent.    
\end{proof}

\begin{remark}
    For the case $r=2$, we obtain the pair $(C_{7,5},C'_{7,5})$, which is the 28-vertex counter-example to the LU-LC conjecture from \cite{Tsimakuridze17}. This is an example of graphs that are LC$_2$-equivalent but not LC-equivalent.
\end{remark}

While Ji et al. proved that there exist pairs of graphs that are LU-equivalent but not LC-equivalent \cite{Ji07}, we showed a finer result -- the existence of an infinite strict hierarchy of graph states equivalence between LC- and LU-equivalence. There exist LC$_2$-equivalent graphs that are not LC-equivalent, LC$_3$-equivalent graphs that are not LC$_2$-equivalent, and so on. On the other end, for any integer $r$, there exist LU-equivalent graphs that are not LC$_r$-equivalent.
\chapter{A quasi-polynomial algorithm for LU-equivalence}

\label{chap:algo}

In this chapter, we address the problem of deciding whether two given graphs are LU-equivalent. Additionally, we  consider a variant of this problem consisting in deciding whether two graphs are LC$_r$-equivalent for a fixed $r$. Since LU-equivalent graphs are necessarily LC$_r$-equivalent for some $r$ (see \cref{thm:LU_imply_LCr}), the difference lies in whether the level $r$ is fixed or not. 

Thanks to \cref{thm:LCr_lc}, if $G_1$ is LC$_r$-equivalent to $G_2$, there exists a single $r$-local complementation, together with usual local complementations, that transforms $G_1$ into $G_2$. We introduce an algorithm that builds such a sequence of generalized local complementations, in essentially four stages: 
\begin{itemize}
\item[$(i)$] Both $G_1$ and $G_2$ are turned in standard forms $G_1'$ and $G'_2$ by means of (usual) local complementations.  
\item[$(ii)$] We then focus on the single $r$-local complementation: all the possible actions of a single $r$-local complementation over $G'_1$ are described as a vector space, for which we compute a basis $\mathcal B$.  
\item[$(iii)$] It remains to find, if it exists, the $r$-local complementation to apply on $G'_1$ that leads to $G'_2$ up to some additional usual local complementations. With an appropriate construction depending on $G_1'$, $G'_2$ and $\mathcal B$, we reduce this problem to deciding whether two graphs are LC-equivalent under some additional requirements  on the sequence of local complementations to apply. These requirements can be expressed as linear constraints.
\item [$(iv)$] Finally, to find such a sequence of local complementations, we apply a variant of Bouchet's algorithm, generalized to accommodate the additional linear constraints. 
\end{itemize}

Stages $(i)$, $(iii)$ and $(iv)$ can be performed in polynomial time in the order $n$ of the graphs. Stage $(ii)$ has essentially an $O(n^r)$ time complexity, thus deciding LC$_r$-equivalence for a fixed $r$ can be done in polynomial time. Regarding LU-equivalence, \cref{thm:LU_imply_LCr} implies $r\ls n$. We improve this upperbound and show that $r$ is at most logarithmic in $n$, leading to a quasi-polynomial time algorithm for LU-equivalence.

This chapter is dedicated to the description of the algorithm, its correctness and complexity, beginning with the generalization of Bouchet's algorithm to decide, in polynomial time, LC-equivalence with additional constraints.

\section{Extending Bouchet's algorithm for LC-equivalence}

In this section, we consider an extension of Bouchet's algorithm (see \cref{subsec:bouchet}) where an additional set of linear constraints on $A,B,C$ and $D$ is added as input of the problem. Such additional linear equations can reflect constraints on the applied  local complementations, e.g.~deciding whether two graphs are LC-equivalent under the additional constraint that all local complementations are applied on a fixed set $V_0$ of vertices (see \cref{ex:V0}).

Recall that Bouchet's algorithm for deciding LC-equivalence of $G$ and $G'$ consists in solving for sets $A,B,C,D\subseteq V$ such that
\begin{itemize}
\item[$(i)$] $\forall u,v \in V$,\\
$|B\cap N_G(u)\cap N_{G'}(v)| + |A\cap N_G(u)\cap \{v\}| +  |D\cap \{u\}\cap N_{G'}(v)| + |C\cap \{u\}\cap \{v\}| = 0\bmod 2$
\item[$(ii)$]
$(A\cap D)~\Delta~ (B\cap C) = V$ 
\end{itemize}

While solving linear equations is easy, equation $(ii)$ is not linear. 
Bouchet showed that a set of solutions $\mathcal C\subseteq \mathcal S$ that satisfies both $(i)$ and $(ii)$,  is either small or it contains an affine subspace of $\mathcal S$ of small co-dimension. In  the latter case, the entire set $\mathcal S$ is  actually an affine space except for  some particular cases that can be avoided by assuming that the graphs contain vertices of even degree. We extend this result as follows:

\begin{lemma}\label{lemma:codim2}
Given $G$, $G'$ two connected graphs with an even-degree vertex, and a set $L$  of linear constraints on $V^4$, then either the set $\mathcal S_L$ of solutions to both $ L$ and $(i)$ is of dimension at most $4$, or the set  $\mathcal C_L\subseteq \mathcal S_L$ that additionally satisfies  $(ii)$ is either empty or an affine subspace of $\mathcal S_L$ of codimension at most $2$.   
\end{lemma}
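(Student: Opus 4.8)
The plan is to follow Bouchet's original argument closely, tracking where the extra linear constraints $L$ enter. Recall that $\mathcal S_L = \mathcal S \cap L$, where $\mathcal S$ is the $\mathbb F_2$-vector space of solutions to $(i)$ and $L$ is a linear subspace of $(2^V)^4$. Equation $(ii)$ is the affine-in-disguise condition $AD+BC = I$. The key observation, already present in the excerpt, is that the set of solutions to $(ii)$ inside $\mathcal S$ is closed under affine combinations: if $S_1 = (A_1,B_1,C_1,D_1)$ and $S_2 = (A_2,B_2,C_2,D_2)$ both satisfy $(i)$ and $(ii)$, then so does $S_1 + S_2 + S_3$ for any third solution $S_3$ — this is the fundamental "composition" property of LC-equivalence solutions (it corresponds to composing the associated $2n\times 2n$ matrices $Q$). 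Since intersecting with the \emph{linear} subspace $L$ preserves affineness, $\mathcal C_L$ is an affine subspace of $\mathcal S_L$ whenever it is nonempty; the only content of the lemma is the bound on its codimension when $\dim \mathcal S_L > 4$.

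First I would recall the structural description of $\mathcal S$ (equivalently, of the $2n$-dimensional solution space $\mathcal S$ viewed via the encoding $\Gamma B \Gamma' + \Gamma A + D\Gamma' + C = 0$, with $A,B,C,D$ diagonal). Bouchet's Lemma (stated just before \cref{cor:Bouchet} in the excerpt) says that either $\dim\mathcal S \le 4$, or the solutions of $(ii)$ inside $\mathcal S$ form an affine subspace of codimension at most $2$ in $\mathcal S$ — and crucially, in the "large" case, the codimension-$\le 2$ affine subspace is cut out inside $\mathcal S$ by \emph{at most two affine equations}, coming from the "diagonal" identities forced by $(ii)$ together with the even-degree/connectedness hypotheses (these hypotheses are exactly what rule out the degenerate cases where $\mathcal S$ itself fails to behave like an affine space for $(ii)$). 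I would lift this to $\mathcal S_L$ as follows: the same at-most-two affine equations $\ell_1, \ell_2$ that cut out $\mathcal C$ inside $\mathcal S$ also cut out $\mathcal C_L = \mathcal C \cap L$ inside $\mathcal S_L = \mathcal S \cap L$, simply by restriction. Hence $\mathcal C_L$ is the solution set inside $\mathcal S_L$ of at most two affine equations, so if nonempty it is an affine subspace of $\mathcal S_L$ of codimension at most $2$. The dichotomy "$\dim \mathcal S_L \le 4$ or codimension $\le 2$" then falls out: if $\dim\mathcal S_L > 4$ we are automatically in the regime where Bouchet's structural analysis applies (note $\dim \mathcal S_L \le \dim \mathcal S$, so $\dim\mathcal S_L > 4$ forces $\dim\mathcal S > 4$, putting us in the large case of Bouchet's Lemma).

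The main obstacle I anticipate is making precise the claim that "the same two affine equations work after intersecting with $L$." Bouchet's codimension-$2$ bound is proved by an argument that produces, from the solution space, specific linear functionals whose vanishing (up to an affine shift) characterizes the solutions of $(ii)$; I need to check that this construction only uses properties of $\mathcal S$ that are inherited by $\mathcal S_L$, or, more robustly, that the functionals are globally defined on $(2^V)^4$ (not just on $\mathcal S$) so that restricting the ambient space to $L$ is harmless. A clean way to handle this: argue at the level of the $2n\times 2n$ matrix encoding. A solution corresponds to a symplectic-type matrix $Q$ with $AD+BC=I$; the set of all $Q$ (diagonal blocks, satisfying $(i)$) with $AD+BC = I$ is known to be, inside the linear solution space of $(i)$, the preimage of $I$ under a quadratic map whose "defect" is controlled by two scalar (i.e.\ $\mathbb F_2$-valued) invariants when the even-degree and connectedness hypotheses hold — and these invariants are pullbacks of functions on $(2^V)^4$. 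Intersecting with the linear subspace $L$ just restricts these same functions. I would therefore structure the proof as: (1) recall/cite Bouchet's Lemma and the role of the even-degree vertex and connectedness in eliminating degenerate cases; (2) observe the two cutting functionals are defined on the whole ambient space; (3) restrict to $L$ and conclude affineness plus the codimension bound; (4) dispatch the small case $\dim\mathcal S_L \le 4$ trivially. The even-degree hypothesis should be invoked exactly once, in step (2), to guarantee the number of cutting functionals is at most two rather than larger.
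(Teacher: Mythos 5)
Your proposal is correct and follows essentially the same route as the paper: reduce to the case $\dim(\mathcal S_L)>4$, invoke Bouchet's result (valid under the connectedness and even-degree hypotheses, which rule out the ``Class $\alpha$'' degeneracies) that $\mathcal C$ is an affine subspace of $\mathcal S$ of codimension at most $2$, and observe that intersecting with $L$ costs nothing more. The ``main obstacle'' you anticipate is actually vacuous: since $\mathcal S_L\subseteq\mathcal S$, no global extension of the cutting functionals is needed --- the paper just writes $\mathcal C=a+\mathcal N$ for some $a\in\mathcal C_L$ and computes $\dim(\mathcal N\cap\mathcal S_L)\gs \dim(\mathcal N)+\dim(\mathcal S_L)-\dim(\mathcal S)\gs\dim(\mathcal S_L)-2$.
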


\begin{proof} It is enough to consider the case $\dim(\mathcal S_L)>4$ and $\mathcal C_L\neq \emptyset$. The set  $\mathcal S$ of solutions to $(i)$ contains $\mathcal S_L$, so $\dim(\mathcal S)>4$. According to \cite{Bouchet1991}, $\mathcal C$, the solutions to $(i)$ and $(ii)$, is an affine subspace of $\mathcal S$ of codimension at most $2$. For any  $a\in \mathcal C_L$, $\mathcal C = a+ \mathcal N$ where $ \mathcal N$ is a subvector space of $\mathcal S$. Notice that $\mathcal C_L = \mathcal C\cap \mathcal S_L = a+ \mathcal N\cap \mathcal S_L $. We have $\dim(\mathcal C_L)=\dim(\mathcal N\cap \mathcal S_L ) = \dim(\mathcal N)+\dim(\mathcal S_L )-\dim(\mathcal N + \mathcal S_L )\gs \dim(\mathcal N)+\dim(\mathcal S_L )-\dim(\mathcal S)\gs \dim(\mathcal S_L)-2$ as $\mathcal N$ is of codimension at most $2$ in $\mathcal S$. 
\end{proof}

\begin{remark} \cref{lemma:codim2} holds actually for any graph that is not in the so-called "Class $\alpha$" of graphs with only odd-degree vertices together with a few additional properties\footnote{(a) any pair of unadjacent vertices should have an even number of common neighbors; (b) for any cycle $C$, the number of triangles having an edge in $C$ is equal to the size of $C$ modulo $2$.}.  When the graphs are in  "Class $\alpha$", and in the absence of additional constraints, Bouchet proved that there is at most $2$ solutions in $\mathcal C$ that do not belong to the affine subspace of small codimension, and these two solutions can be easily computed (see \cite{Bouchet1991}, section 7).  We leave as an open question the description of the set of solutions for graphs in "Class $\alpha$", in particular when the two particular solutions pointed out by Bouchet do not satisfy $L$, the set of additional constraints. 
\end{remark}
From an algorithmic point of view, \cref{lemma:codim2} leads to a straightforward generalization of Bouchet's algorithm to  efficiently decide LC-equivalence of graphs, under a set of additional linear constraints: 

\begin{proposition}\label{prop:extendedBouchet}
Given $G$, $G'$ two connected graphs of order $n$ with an even-degree vertex, and a set $L$ of $\ell$ linear constraints on $V^4$, one can compute a solution to both $(i)$, $(ii)$ and $L$ when it exists, or claim there is no solution, in runtime $O((n^2+\ell)n^2)$.
\end{proposition}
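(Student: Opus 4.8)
The plan is to turn Bouchet's structural dichotomy, as extended in \cref{lemma:codim2}, into an algorithm by a simple case analysis on $\dim(\mathcal S_L)$. First I would set up the linear algebra: encode each of $A,B,C,D$ as an $\mathbb F_2$-vector of length $n$, so that a candidate solution is a vector in $\mathbb F_2^{4n}$. Equation $(i)$, as noted in the excerpt (in the matrix form $\Gamma B\Gamma' + \Gamma A + D\Gamma' + C = 0$), is linear, and the additional constraints $L$ are linear by hypothesis; together they cut out the subspace $\mathcal S_L \subseteq \mathbb F_2^{4n}$. Building the coefficient matrix of $(i)$ costs $O(n^2)$ rows each of length $O(n)$, i.e. an $O(n^2)\times O(n)$ system, and appending the $\ell$ constraints of $L$ gives an $(O(n^2)+\ell)\times O(n)$ system; a basis $\mathcal B$ of $\mathcal S_L$ is obtained by Gaussian elimination in $O((n^2+\ell)\cdot n^2)$ field operations (the number of rows times the cost $O(n^2)$ of eliminating along the $O(n)$ columns). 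This is the dominant cost, matching the claimed $O((n^2+\ell)n^2)$ bound.

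Next I would branch on $\dim(\mathcal S_L)$. If $\dim(\mathcal S_L)\le 4$, enumerate all at most $2^4=16$ elements of $\mathcal S_L$ and test each against the quadratic condition $(ii)$, namely $(A\cap D)\Delta(B\cap C)=V$; checking $(ii)$ for a single candidate costs $O(n)$, so this case is negligible. If $\dim(\mathcal S_L) > 4$, invoke \cref{lemma:codim2}: the solution set $\mathcal C_L$ (satisfying $(i)$, $(ii)$ and $L$) is either empty or an affine subspace of $\mathcal S_L$ of codimension at most $2$. The key combinatorial observation — the exact analogue of \cref{cor:Bouchet} in the constrained setting — is that a codimension-$\le 2$ affine subspace of $\mathcal S_L$, if nonempty, must contain some vector of the form $b+b'$ with $b,b'\in\mathcal B$: indeed if $\mathcal C_L = a+\mathcal N$ with $\mathcal N$ of codimension $\le 2$ in $\mathcal S_L$, then among any three basis vectors at least one lies in $a+\mathcal N$ is not quite the right phrasing, so instead I would argue that $\mathcal N$ together with the coset structure forces a pairwise sum of basis vectors into $\mathcal C_L$; concretely, since $\mathcal N$ has codimension $\le 2$, the quotient $\mathcal S_L/\mathcal N$ has size $\le 4$, so by pigeonhole among the $|\mathcal B|\ge 3$ basis vectors two, say $b,b'$, are congruent mod $\mathcal N$, whence $b+b'\in\mathcal N\subseteq \mathcal C_L - a$; combining with the fact (checked from \cref{prop:clifford_ABCD}/the affine structure) that $\mathcal C_L$ itself, being the set satisfying a quadratic form compatible with the affine structure, contains $b+b'$ directly — this last point I would verify exactly as in Bouchet's original codimension argument. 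Thus one enumerates the $O(|\mathcal B|^2)=O(n^2)$ vectors $\{b+b' : b,b'\in\mathcal B\}$, tests $(ii)$ on each in $O(n)$ time, for a total of $O(n^3)$, again dominated by the Gaussian elimination.

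For correctness I would also record the fallback when no vector $b+b'$ satisfies $(ii)$: by the dichotomy this certifies $\mathcal C_L=\emptyset$, so the algorithm can safely report that no solution exists. Finally, once a solution $(A,B,C,D)$ to $(i)$, $(ii)$ and $L$ is found, I would note (if needed downstream) that the explicit sequence of local complementations realizing it can be extracted in polynomial time via the constructive algorithm in the proof of \cref{prop:Bouchet}, which always outputs at most $\lfloor 3n/2\rfloor$ local complementations. The main obstacle, and the reason the hypothesis on even-degree vertices is present, is exactly the "Class $\alpha$" exceptional case flagged in the remark after \cref{lemma:codim2}: without the even-degree assumption the solution set $\mathcal C$ may contain two stray solutions outside the affine subspace, which need not be of the form $b+b'$ and may or may not satisfy $L$; by assuming an even-degree vertex (and connectedness) we exclude Class $\alpha$ and the clean dichotomy of \cref{lemma:codim2} applies, so the enumeration of $\{b+b'\}$ is exhaustive. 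Everything else is routine linear algebra and bookkeeping of the $O((n^2+\ell)n^2)$ cost.
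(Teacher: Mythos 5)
Your overall strategy is the same as the paper's: compute a basis $\mathcal B$ of $\mathcal S_L$ by Gaussian elimination in $O((n^2+\ell)n^2)$, enumerate the at most $16$ elements when $\dim(\mathcal S_L)\le 4$, and otherwise test the $O(n^2)$ pairwise sums of basis vectors, with correctness resting on \cref{lemma:codim2} and on Bouchet's Lemma 4.4 (and with the even-degree/connectedness hypothesis serving exactly to exclude the Class $\alpha$ exception, as you say).

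One step of your write-up is actually wrong as stated and should be deleted rather than left as a heuristic. The pigeonhole argument (``$\mathcal S_L/\mathcal N$ has size $\le 4$, so two basis vectors $b,b'$ are congruent mod $\mathcal N$, whence $b+b'\in\mathcal N$'') places $b+b'$ in the direction space $\mathcal N$, not in the coset $\mathcal C_L=a+\mathcal N$. Since the zero vector never satisfies $(ii)$ (it gives $(A\cap D)\,\Delta\,(B\cap C)=\emptyset\ne V$), we have $a\notin\mathcal N$, so $\mathcal N$ and $\mathcal C_L$ are \emph{disjoint}: your pigeonhole exhibits pairs whose sum is guaranteed \emph{not} to be a solution. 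Moreover, the fact you actually need --- that some $S_i+S_j$ with $S_i,S_j\in\mathcal B$ lies in the coset $\mathcal C_L$ itself --- is not a generic property of nonempty affine subspaces of codimension $\le 2$ (one can choose a coset of a codimension-$2$ subspace avoided by all pairwise sums of a given basis); it is a specific consequence of the structure of the quadratic condition $(ii)$, and that is precisely the content of Bouchet's Lemma 4.4, which the paper simply cites. Since you do fall back on that citation, the argument goes through, but the pigeonhole detour contributes nothing and in fact points the wrong way. The remaining ingredients (the cost accounting for building and eliminating the $(O(n^2)+\ell)\times O(n)$ system, the low-dimension case, reporting emptiness when no candidate passes $(ii)$) match the paper's proof.
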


\begin{proof} 
A Gaussian elimination can be used to compute a basis $\mathcal B = \{S_0, \ldots , S_k\}$ of $\mathcal S_L$, with $k<n$, in $O((n^2+\ell)n^2)$ operations as there are $n^2$ equations in $(ii)$. If the dimension of $\mathcal S_L$ is at most $4$ (so $|\mathcal S_L|\ls 16$), we check in $O(n)$ operations, for each element of $\mathcal S_L$ whether equation $(ii)$ is satisfied. Otherwise, when $\dim(\mathcal S_L)>4$, if $\mathcal C_L$ is non-empty, at least one element of $\mathcal C_L$ is the sum $S_i +  S_j$ of two  elements of $\mathcal B$ (see Lemma 4.4 in \cite{Bouchet1991}). For each of the $O(n^2)$ candidates we check  whether condition $(ii)$ is satisfied.  If no solution is found, it implies that $\mathcal C_L=\emptyset$. 
\end{proof}

As the algorithm described in the proof of \cref{prop:Bouchet} translates a solution to $(i)$ and $(ii)$, into a sequence of local complementations relating two graphs, some constraints on the sequence of local complementations may be encoded as additional linear constraints. We give a fairly simple example below. A more intricate example is presented in \cref{lemma:LC_new_graphs} (in \cref{sec:algo_lcr}).

\begin{example} \label{ex:V0}
    Let $G$, $G'$ be two connected graphs of order $n$ with an even-degree vertex, and $V_0$ a set of vertices. One can decide in runtime $O(n^4)$ whether there exists a sequence of (possibly repeating) vertices $a_1, \cdots, a_m\in V_0 $ such that $G' = G \star a_1 \star \cdots \star a_m$. Roughly speaking, the idea is to consider the linear constraint $b_u=0$ (i.e.~$u\in   \overline B$) for any $u\notin V_0$, to reflect the constraints that local complementations should not be applied outside $V_0$.
\end{example}

From a graph state point of view, \cref{prop:extendedBouchet} provides an efficient algorithm to decide whether two graph states are LC-equivalent under some constraints on the Clifford operators. Notice that such constraints are expressible as a linear equation through the correspondence given in \cref{prop:clifford_ABCD}. We give below a non-exhaustive family of constraints expressible as linear equations (in the following, $k$ denotes an integer).

\begin{itemize}
    \item $u \notin B$:~ $U_u$ is $Z(k\pi/2)$ up to Pauli;
    \item $u \notin C$:~ $U_u$ is $X(k\pi/2)$ up to Pauli;
    \item $u \notin A$:~ $U_u$ is $Z(k\pi/2) H$ up to Pauli;
    \item $u \notin D$:~ $U_u$ is $X(k\pi/2) H$ up to Pauli;
    \item $u \notin \overline{B}\cap \overline{C}$:~ $U_u$ is a Pauli;
    \item $u \notin \overline{A}\cap \overline{D}$:~ $U_u$ is $H$ up to a Pauli;
    \item $u \in A$ iff $u \in D$:~ $U_u$ is I, $X(\pi/2)$, $Z(\pi/2)$ or $H$ up to Pauli, i.e.~$u^2$ is a Pauli;
    \item $u \in A$ iff $u \in B$:~ $U_u$ is $X(\pi/2)$ or $X(k\pi/2) H$ up to Pauli;
    \item $u \in A$ iff $v \in A$, $u \in B$ iff $v \in B$, $u \in C$ iff $v \in C$, $u \in D$ iff $v \in D$:~ $U_u=U_v$ up to Pauli.
\end{itemize}

\section{An algorithm to recognize \texorpdfstring{LC$_r$}{LCr}-equivalent graph states} \label{sec:algo_lcr}

We are now ready to describe the algorithm that recognizes two LC$_r$-equivalent graphs. We consider a level $r \gs 1$, and two graphs $G_1$ and $G_2$ of order $n$, defined on the same vertex set $V$. Following \cref{lemma:standardform}, assume, without loss of generality, that $G_1$ and $G_2$ are both in standard form with respect to the same MLS cover. Then, it is valid (see \cref{lemma:same_types}) to define $V_X, V_Z \se V$, the sets of vertices respectively of type X and Z with respect to the MLS cover. Also, $V_X$ is independent and each vertex in $V_X$ has the same neighborhood in both $G_1$ and $G_2$. According to \cref{lemma:standardform_LCr_lc}, if $G_1$ and $G_2$ are LC$_r$-equivalent then there is a single $r$-local complementation over vertices of $V_X$ together with a series of local complementations on vertices of type $\bot$ that transform $G_1$ into $G_2$. We first focus on the single $r$-local complementation (which commutes with the local complementations on vertices of type $\bot$, as vertices of type X and vertices of type $\bot$ share no edge) and thus consider all the possible graphs that can be reached from $G_1$ by mean of a single $r$-local complementation over vertices of $V_X$. 
Notice that such an $r$-local complementation only toggles edges for which both endpoints are in $V_Z$. Given a multiset $S$, the edges toggled in $G_1\star^r S$ can be represented by a vector $\omega^{(S)}\in {{\mathbb F}_2}^{\{u,v \in V_Z ~|~ u \neq v\}}$ such that for any $u,v \in V_Z$, $\omega^{(S)}_{u,v} = u\sim_{G_1} v ~\oplus~ u\sim_{G_1\star^r S} v$. The actions of all the possible $r$-local complementations on $G_1$ can thus be described as the set $\Omega =\{\omega^{(S)}  |  \text{$S$ is an $r$-incident multiset of vertices of type X}\}$.

\begin{lemma} \label{lemma:omega}
    $\Omega$ is a vector space and a basis $\mathcal B$ of $\Omega$ can be computed in runtime $O(r n^{r+2.38})$.
\end{lemma}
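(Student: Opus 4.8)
I would first recast the definition of $\Omega$ as a linear-algebra problem over the ring $\Zp{2^r}$. Because $G_1$ is in standard form, $V_X$ is an independent set, so every multiset $S$ of vertices of type X is automatically independent; and since the multiplicities in $S$ only matter modulo $2^r$ (as already noted), I may view $S$ as an element of the free module $(\Zp{2^r})^{V_X}$. For independent $S$, the $r$-incidence condition need only be tested on subsets $K$ of $V_Z$: if $K$ meets $\supp(S)$ then $S\bullet\Lambda_{G_1}^K = 0$, since the common neighbourhood of $K$ avoids $\supp(S)$ by independence, and if $K$ contains a vertex not of type $Z$ then $\Lambda_{G_1}^K\cap V_X = \emptyset$, since in standard form every neighbour of a type-X vertex is of type $Z$. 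Hence $S$ is $r$-incident if and only if $\sum_{w\in\Lambda_{G_1}^K\cap V_X}S(w)\equiv 0\pmod{2^{\,r-k-\delta(k)}}$ for all $k\in[0,r)$ and all $K\se V_Z$ of size $k+2$. Each such congruence in $\Zp{2^r}$ rewrites as the homogeneous linear equation $2^{k+\delta(k)}\sum_{w\in\Lambda_{G_1}^K\cap V_X}S(w) = 0$ in $\Zp{2^r}$, so the set $\mathcal T$ of $r$-incident multisets over $V_X$ is the kernel of a $\Zp{2^r}$-module homomorphism; in particular $\mathcal T$ is a submodule of $(\Zp{2^r})^{V_X}$, hence closed under sums.

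Next I would establish that $\Omega$ is a vector space by showing $S\mapsto\omega^{(S)}$ is additive on $\mathcal T$. By definition of the generalized local complementation, $\omega^{(S)}_{u,v} = [\,S\bullet\Lambda_{G_1}^{u,v} = 2^{r-1}\bmod 2^r\,]$, and for $S\in\mathcal T$ the case $|K| = 2$ of $r$-incidence forces $S\bullet\Lambda_{G_1}^{u,v}\in\{0,2^{r-1}\}$ inside $\Zp{2^r}$, so $\omega^{(S)}_{u,v} = (S\bullet\Lambda_{G_1}^{u,v})/2^{r-1}$. Since $2^{r-1}+2^{r-1} = 0$ in $\Zp{2^r}$, for $S_1,S_2\in\mathcal T$ the sum $S_1\bullet\Lambda_{G_1}^{u,v}+S_2\bullet\Lambda_{G_1}^{u,v}$ again lies in $\{0,2^{r-1}\}$, and one reads off $\omega^{(S_1+S_2)} = \omega^{(S_1)}+\omega^{(S_2)}$ in $\mathbb F_2^{\{u,v\in V_Z~|~u\neq v\}}$. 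Thus $\omega\colon\mathcal T\to\mathbb F_2^{\{u,v\in V_Z~|~u\neq v\}}$ is additive, so $\Omega = \omega(\mathcal T)$ is a subgroup of $\mathbb F_2^{\{u,v\in V_Z~|~u\neq v\}}$, hence an $\mathbb F_2$-subspace. Moreover, if $S_1,\dots,S_\ell$ is a generating set of the module $\mathcal T$ (with $\ell\ls|V_X|\ls n$, as any submodule of a rank-$m$ free module over $\Zp{2^r}$ has $\ls m$ generators), additivity gives $\Omega = \mathrm{span}_{\mathbb F_2}\{\omega^{(S_1)},\dots,\omega^{(S_\ell)}\}$.

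This yields the algorithm and its runtime. I would (a) form the coefficient matrix $M$ over $\Zp{2^r}$ of the linear system of the first paragraph: its rows are indexed by the pairs $(k,K)$ with $k\in[0,r)$ and $K\se V_Z$, $|K| = k+2$, of which there are $\sum_{j=2}^{r+1}\binom{|V_Z|}{j} = O(n^{r+1})$, and it has $|V_X|\ls n$ columns; each row costs $O(rn)$ to build (one needs $\Lambda_{G_1}^K\cap V_X$ for $|K|\ls r+1$), so this costs $O(rn^{r+2})$. Then (b) compute a module generating set $S_1,\dots,S_\ell$ of $\ker M$ by block Gaussian elimination over $\Zp{2^r}$, which on an $O(n^{r+1})\times n$ matrix costs $O(r\,n^{r+1}n^{\omega-1}) = O(rn^{r+2.38})$ with $\omega<2.38$, counting $O(r)$ per arithmetic operation in $\Zp{2^r}$. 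Then (c) compute each $\omega^{(S_i)}$ (after precomputing the sets $\Lambda_{G_1}^{u,v}\cap V_X$ for all pairs in $O(n^3)$), in $O(rn^4)$ overall; and (d) extract a basis $\mathcal B$ of $\Omega$ by Gaussian elimination over $\mathbb F_2$ on these $\ell\ls n$ vectors of $\mathbb F_2^{O(n^2)}$, in $O(n^4)$. The bottleneck is step (b), giving the claimed $O(rn^{r+2.38})$.

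The main obstacle here is bookkeeping rather than depth: reconciling the $r$-incidence constraints, whose moduli $2^{\,r-k-\delta(k)}$ depend on $|K|$, into a single homogeneous linear system over $\Zp{2^r}$, and making precise that the $|K| = 2$ case of $r$-incidence is exactly what forces $S\bullet\Lambda_{G_1}^{u,v}\in\{0,2^{r-1}\}$ and hence makes $S\mapsto\omega^{(S)}$ additive. Once these points are settled, the vector-space claim is immediate, and the complexity bound is a routine count of constraints together with a standard kernel computation over $\Zp{2^r}$.
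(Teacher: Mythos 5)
Your proposal is correct and follows essentially the same route as the paper: encode $r$-incident multisets as the kernel of a linear system over $\Zp{2^r}$ (rescaling each constraint by $2^{k+\delta(k)}$), compute a generating set of that kernel by elimination over $\Zp{2^r}$ (the paper cites the Howell transform for exactly this step and cost), push it forward through the additive map $S\mapsto\omega^{(S)}$, and extract an $\mathbb F_2$-basis by Gaussian elimination. The only cosmetic difference is that you derive additivity of $S\mapsto\omega^{(S)}$ directly from the $|K|=2$ incidence congruence forcing $S\bullet\Lambda_{G_1}^{u,v}\in\{0,2^{r-1}\}$ modulo $2^r$, whereas the paper deduces it from the composition rule $G\star^r(S_1+S_2)=(G\star^rS_1)\star^rS_2$; both are valid.
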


\begin{proof}
    A multiset $S$ of vertices of type X can be represented as a vector in $({\Zp{2^r}})^{V_X}$ which entries are $S(u) \bmod 2^{r}$, as the multiplicity can be considered modulo $2^r$ in the context of an $r$-local complementation. Let $\Sigma$ be the subset of $({\Zp{2^r}})^{V_X}$ that corresponds to all $r$-incident independent multisets of vertices in $V_X$. $\Sigma$ is a vector space since the property of $r$-incidence is preserved under the addition of two vectors, as $V_X$ is independent. $\Sigma$ is actually the space of solutions to a set of $O(n^{r+1})$ equations\footnote{There are precisely $\binom{|V_X|}{r+1}+\binom{|V_X|}{r}+\cdots+\binom{|V_X|}{3}+\binom{|V_X|}{2}$ equations.} given by the conditions of $r$-incidence. With some multiplications by powers of 2, all these equations are expressible as equations modulo $2^{r}$. For every set  $K\subseteq V\setminus \supp(S)$ of size between $2$ and $r+1$, the corresponding equation is $$\sum_{u \in \Lambda_{G_1}^K}2^{|K|-2+\delta(|K|-2)}S(u) = 0\bmod 2^{r}$$
    Notice that if $r=1$, the space of solutions is the entire space $({\Zp{2^r}})^{V_X}$, as there are no incidence constraints on usual local complementations. Summing up, to compute a basis of $\Sigma$, we solve a system of $O(n^{r+1})$ equations modulo $2^{r}$ with $O(n)$ variables. One can obtain a generating set $\{S_1,S_2,...,S_t\}$ of $\Sigma$ of size $t \ls n$ in $O(r n^{r+2.38})$ basic operations using an algorithm based on the Howell transform \cite{storjohann2000algorithms}. 

    Let $f$ be the function that associates with each element $S \in \Sigma$, its action $f(S) \in \Omega$ on the edges: for any $u,v \in V_Z$, $f(S)_{u,v} = u\sim_{G_1} v ~\oplus~ u\sim_{G_1\star^r S} v$. Notice that $f$ is linear, as for any $S, S' \in \Sigma$:
    \begin{align*}
        f(S+S') &= u\sim_{G_1} v ~\oplus~ u\sim_{G_1\star^r (S+S')} v
        = u\sim_{G_1} v ~\oplus~ u\sim_{(G_1\star^r S) \star^r S'} v\\
        &= u\sim_{G_1} v ~~\oplus~~ u\sim_{G_1\star^r S} v ~~\oplus~~ \left(S' \bullet\Lambda_{G_1\star^r S}^{u,v} = 2^{r-1}\bmod 2^{r}\right)\\
        &= u\sim_{G_1} v ~~\oplus~~ u\sim_{G_1\star^r S} v ~~\oplus~~ \left(S' \bullet\Lambda_{G_1}^{u,v} = 2^{r-1}\bmod 2^{r}\right)\\
        &= u\sim_{G_1 \star^r S} v ~~\oplus~~ u\sim_{G_1\star^r S'} v\\
        &= u\sim_{G_1} v ~\oplus~ u\sim_{G_1 \star^r S} v ~~\oplus~~ u\sim_{G_1} v ~\oplus~ u\sim_{G_1\star^r S'} v = f(S)+f(S')
    \end{align*} 

    This directly implies that $\Omega$ is a vector space: if $\omega, \omega' \in \Omega$, by definition there exist $S, S' \in \Sigma$ such that $f(S) = \omega$ and $f(S') = \omega$, moreover $\omega + \omega' = f(S) + f(S') = f(S+S') \in \Omega$. Also, let us prove that $\Omega$ is generated by $\{f(S_1),f(S_2),...,f(S_t)\}$. Take a vector $\tilde \omega \in \Omega$, by definition there exists $\tilde S \in \Sigma$ such that $\tilde \omega = f(\tilde S)$. $\tilde S$ can be expressed as a linear combination of vectors from the generating set, i.e.~$\tilde S = \sum_{i \in [1,t]}a_i S_i$ where $a_i \in \Zp{2^r}$. Then, for any $u,v \in V_Z$, $ (\tilde \omega)_{u,v} = f\left(\sum_{i \in [1,t]}a_i S_i\right)_{u,v} = \sum_{i \in [1,t]}a'_i f(S_i)_{u,v}$ by linearity of $f$, where $a'_i \in \mathbb F_2$ such that $a'_i = a_i \bmod 2$ when $a_i$ and $a'_i$ are viewed as integers. In other words, $\tilde w = \sum_{i \in [1,t]}a'_i f(S_i)$, implying that $\{f(S_1),f(S_2),...,f(S_t)\}$ is a generating set of $\Omega$. Using Gaussian elimination, one can easily obtain a basis $\mathcal B$ of $\Omega$ from $\{f(S_1),f(S_2),...,f(S_{t})\}$.
\end{proof}

Thanks to the exhaustive description of all possible $r$-local complementations on $G_1$, we are now ready to reduce LC$_r$-equivalence to LC-equivalence with some additional constraint. We denote by $G_1^{\#}$ (resp. $G_2^{\#}$) the graph obtained from $G_1$ (resp. $G_2$) by the following procedure. First, remove the vertices of $V_X$. Then, for each vector $\omega \in \mathcal B$, for each $u,v \in V_Z$ such that $\omega_{u,v} = 1$, add a vertex adjacent only to $u$ and $v$ and call it $p_{u,v}^{\omega}$, and let $\mathcal P^\omega= \{p_{u,v}^{\omega} ~|~ \omega_{u,v}=1\}$. In the following, we refer to the vertices added by this procedure as "new vertices".

\begin{lemma} \label{lemma:new_graphs}
    $G_1$ and $G_2$ are LC$_r$-equivalent if and only if there exists a sequence of (possibly repeating) vertices $a_1, \cdots, a_m$ such that $G^{\#}_2 = G^{\#}_1 \star a_1 \star \cdots \star a_m$ satisfying the following additional constraints:
    \begin{itemize}
        \item the sequence contains no vertex of $V_Z$;
        \item for each $\omega \in \mathcal B$, either the sequence contains every vertex of $\mathcal P^{\omega}$ exactly once, or it contains none. 
    \end{itemize}
\end{lemma}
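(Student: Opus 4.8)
The plan is to prove \cref{lemma:new_graphs} by showing that the two stated conditions capture, via the auxiliary graphs $G_1^\#$ and $G_2^\#$, exactly the combination of ``a single $r$-local complementation over vertices of $V_X$'' together with ``a sequence of usual local complementations on vertices of type $\bot$'' that, by \cref{lemma:standardform_LCr_lc}, characterises LC$_r$-equivalence of two graphs in standard form.

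First I would recall the setup: by \cref{lemma:standardform_LCr_lc}, $G_1 =_{LC_r} G_2$ iff $G_1$ and $G_2$ are related by a series of local complementations on vertices of type $\bot$ together with a single $r$-local complementation over a multiset $S$ of vertices of type X. Since vertices of type X and type $\bot$ share no edge (standard form), these commute, so I may reorder the sequence to first perform the $r$-local complementation over $S$, obtaining $G_1 \star^r S$, and then apply usual local complementations $a_1,\dots,a_m$ on vertices in $V_\bot$. The effect of the $r$-local complementation is, by construction, to toggle exactly the edge set recorded by the vector $\omega^{(S)} \in \Omega$; and since \cref{lemma:omega} gives a basis $\mathcal B$ of $\Omega$, every $\omega^{(S)}$ is a sum of a subset of $\mathcal B$. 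The key bookkeeping observation is that each ``new vertex'' $p_{u,v}^{\omega}$ is a leaf attached to $u$ and $v$, so pivoting/local-complementing through it realises toggling the single edge $(u,v)$ — more precisely, applying local complementations to \emph{all} vertices of $\mathcal P^\omega$ (say via the standard trick of performing $\star u \star v \star u$-type sequences, or more simply using that for a set of parallel leaves attached to the same pair, applying local complementation on each toggles $(u,v)$ once per vertex) toggles all edges $(u,v)$ with $\omega_{u,v}=1$ and leaves everything else among $V_Z$ unchanged, while applying it to \emph{none} of them does nothing. The parity is what forces the all-or-nothing constraint: toggling a subset $\mathcal P' \subsetneq \mathcal P^\omega$ would alter the leaf structure and, crucially, change the adjacency pattern between new vertices and $V_Z$ in a way that cannot be matched in $G_2^\#$ (where each $\mathcal P^\omega$ is present in full). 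I would make this precise by analysing how a local complementation on a new vertex $p_{u,v}^\omega$ affects the graph: it toggles the edge $(u,v)$ and creates/removes edges among the common neighbours of $p_{u,v}^\omega$, which are exactly $\{u,v\}$, so the only lasting effect on $V_Z$ after an even number of visits is the net toggle of $(u,v)$.

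The forward direction ($\Rightarrow$): given $G_1 =_{LC_r} G_2$, write $G_2 = (G_1 \star^r S) \star a_1 \cdots \star a_m$ with $a_i \in V_\bot$ and $\omega^{(S)} = \sum_{\omega \in \mathcal T} \omega$ for some $\mathcal T \subseteq \mathcal B$. In $G_1^\#$, apply local complementations on all vertices of $\mathcal P^\omega$ for each $\omega \in \mathcal T$ (this is a valid sequence avoiding $V_Z$ and satisfying the all-or-nothing condition), which toggles exactly the edges recorded by $\omega^{(S)}$ among $V_Z$ and restores the new vertices to leaves; then apply $a_1,\dots,a_m$ on $V_\bot$ as before (these do not touch new vertices or $V_X$, which have been deleted). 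One checks this produces $G_2^\#$: the induced subgraph on $V_Z \cup V_\bot$ equals that of $G_2$, and the new vertices of $G_2^\#$ are precisely the leaves $p_{u,v}^\omega$ for $\omega \in \mathcal B$, attached to the correct pairs, which is exactly what the sequence yields. The backward direction ($\Leftarrow$): given such a constrained sequence, I would argue that because $V_Z$ is never used and each $\mathcal P^\omega$ is visited all-or-nothing, the net effect on $V_Z$ is to toggle $\bigoplus_{\omega \in \mathcal T} \omega$ for the set $\mathcal T$ of fully-visited $\mathcal P^\omega$'s — which lies in $\Omega$, hence equals $\omega^{(S)}$ for some $r$-incident independent multiset $S$ of vertices of type X — and the effect on $V_\bot$ is realised by a sequence of local complementations on $V_\bot$; reinserting the deleted vertices of $V_X$ (with their unchanged neighbourhoods, which are subsets of $V_Z$) then shows $G_1 \star^r S$ followed by those $V_\bot$-local-complementations gives $G_2$, whence $G_1 =_{LC_r} G_2$ by \cref{thm:LCr_lc}.

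The main obstacle I anticipate is the careful analysis of the ``all-or-nothing'' condition and, relatedly, of what a local complementation on a new vertex $p_{u,v}^\omega$ actually does to the \emph{other} new vertices and to the bipartite-like structure between new vertices and $V_Z$: one must verify that visiting a proper nonempty subset of some $\mathcal P^\omega$ provably cannot be repaired by further moves to reach $G_2^\#$ (using, e.g., that $G_2^\#$ has a fixed number of leaves attached to each pair, or a parity/degree invariant of the new-vertex block), and that visiting all of them an even number of times is equivalent to visiting them once for the purpose of the net toggle. I would handle this by introducing a clean invariant — for instance, tracking for each $\omega \in \mathcal B$ the parity of the total multiplicity with which the vertices of $\mathcal P^\omega$ appear in the sequence, and showing that a local complementation on $p_{u,v}^\omega$ changes the $V_Z$-adjacency only through the single edge $(u,v)$ plus possibly edges among $\{u,v\}$, so after the full sequence the net change on $G_1^\#[V_Z]$ is the $\mathbb F_2$-sum $\sum_\omega (\text{parity}_\omega)\,\omega$ — and then using the leaf structure in $G_2^\#$ to force each $\text{parity}_\omega$ to be consistent, i.e. the subset of ``odd'' $\omega$'s determines $S$. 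The rest is routine verification that the deleted/reinserted $V_X$ vertices and the $V_\bot$ local complementations are handled correctly by the correspondence already established in \cref{prop:lclc} and \cref{lemma:standardform_LCr_lc}.
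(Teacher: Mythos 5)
Your proposal is correct and follows essentially the same route as the paper: invoke \cref{lemma:standardform_LCr_lc}, commute the single $r$-local complementation past the $\bot$-local complementations, decompose its edge-toggle vector $\omega^{(S)}$ over the basis $\mathcal B$, and realise each basis element by local complementations on the degree-$2$ new vertices $p_{u,v}^{\omega}$ (whose neighbourhood $\{u,v\}$ makes each such local complementation toggle exactly the edge $(u,v)$), with the converse obtained by reading the fully-visited $\mathcal P^{\omega}$'s back as an element of $\Omega$ and hence as some $G_1\star^r S$. The only superfluous part is your worry that a partial visit of some $\mathcal P^{\omega}$ must be shown ``unrepairable'': the lemma is an existence statement in both directions, so the constraints are simply imposed as hypotheses and no such impossibility argument is needed.
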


\begin{proof}
    Suppose $G_1=_{LC_r} G_2$. According to \cref{lemma:standardform_LCr_lc}, $G_1$ and $G_2$ are related by a single $r$-local complementation over a multiset $S$ whose support lies in $V_X$, along with a sequence of local complementations on the vertices of $V \sm (V_X \cup V_Z)$. We note $G_2 = G_1 \star^r S \star u_1 \star \cdots u_k$. The edges toggled by an $r$-local complementation over $S$ are described by an element $\omega \in \Omega$. Let us decompose $\omega$ as a linear combination of basis vectors of $\Omega$: $\omega = \omega_1 + \cdots + \omega_t$, where each $\omega_i \in \mathcal B$. In $G^{\#}_1$ and $G^{\#}_2$, let $V_\omega = \bigcup_{i \in [1,t]} \mathcal P^{\omega_i}$.
    Then, $G^{\#}_2 = G^{\#}_1  \star^1 V_\omega \star u_1 \star \cdots u_k$. Note that the 1-local complementation over the set $V_\omega$ corresponds to the composition of local complementations on each element of $V_\omega$. Thus, $G_1$ is mapped to $G_2$ by a sequence of local complementations that satisfy the additional constraints.

    Conversely, suppose there exists a sequence of (possibly repeating) vertices $a_1, \cdots, a_m$ such that $G^{\#}_2 = G^{\#}_1 \star a_1 \star \cdots \star a_m$ satisfying the additional constraints. As local complementations on new vertices commute with each other and with local complementations on vertices of $V \sm (V_X \cup V_Z)$, one can take apart the vertices of the sequence among the new vertices, so there exist a set $V_0$ of new vertices and vertices $u_i$ in $V \sm (V_X \cup V_Z)$ such that $G^{\#}_2 = G^{\#}_1  \star^1 V_0 \star u_1 \star \cdots u_k$. The additional constraints imply that $V_0$ is a union of sets of vertices corresponding respectively to some elements $\omega_i \in \mathcal B$. Let $\omega \in \Omega$ be the sum of these elements. By construction, there exists a multiset $S$ in the original graphs whose action is described by $\omega$, implying $G_2 = G_1 \star^r S \star u_1 \star \cdots u_k$. Thus, $G_1 =_{LC_r} G_2$.
\end{proof}

There exists an efficient algorithm that decides whether two graphs are LC-equivalent with such additional constraints using our generalization of Bouchet's algorithm.

\begin{lemma} \label{lemma:LC_new_graphs}
    Deciding whether there exists a sequence of (possibly repeating) vertices $a_1, \!\cdots\!,\! a_m$ such that $G^{\#}_2 = G^{\#}_1 \star a_1 \star \cdots \star a_m$, satisfying the additional constraints described in \cref{lemma:new_graphs}, can be done in runtime $O(n^4)$.
\end{lemma}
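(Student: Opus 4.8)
The plan is to combine the extended Bouchet algorithm of \cref{prop:extendedBouchet} with \cref{ex:V0}, after checking that the hypotheses of \cref{prop:extendedBouchet} can be met (or the degenerate cases handled directly). First I would reduce to the case where $G^{\#}_1$ and $G^{\#}_2$ are connected with an even-degree vertex: if $G^{\#}_1$ or $G^{\#}_2$ has several connected components, we treat each pair of corresponding components separately, noticing that a "new vertex" $p_{u,v}^{\omega}$ lies in the same component as $u$ and $v$, and that local complementations never merge or split components; the constraint that each $\mathcal P^{\omega}$ is used entirely or not at all couples only components that already share one of the vertices $u,v$. For a component with no even-degree vertex (the "Class $\alpha$" situation of the remark after \cref{lemma:codim2}), note that every new vertex $p_{u,v}^{\omega}$ has degree $2$, hence is even-degree, so such a component must contain no new vertices at all; then the problem restricts to deciding LC-equivalence of the $V_Z\cup(V\sm(V_X\cup V_Z))$ part with the constraint "no local complementation on $V_Z$", which can either be solved by the two extra Class-$\alpha$ solutions pointed out by Bouchet (checking whether they satisfy the linear constraint) or, more simply, handled as a small special case since such a component has a very rigid structure.

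Next, for each relevant connected pair, I would encode the two additional constraints of \cref{lemma:new_graphs} as linear constraints on $(A,B,C,D)\in(2^V)^4$ via the correspondence of \cref{prop:clifford_ABCD} and \cref{prop:Bouchet}. The constraint "no vertex of $V_Z$ appears in the sequence" is exactly the constraint that the sequence of local complementations touches only vertices outside $V_Z$; following \cref{ex:V0}, this is expressed by the linear equations $b_u = 0$ (equivalently $u\in\overline B$) for every $u\in V_Z$, since a vertex on which a local complementation is applied must be in $B$ at the moment it is applied, and one can arrange — as in the proof of \cref{prop:Bouchet} — that such a vertex stays out of $B$ afterwards. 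The constraint "for each $\omega\in\mathcal B$, either every vertex of $\mathcal P^{\omega}$ appears exactly once or none does" is a parity/synchronization constraint: a local complementation applied once to a degree-$2$ vertex $p_{u,v}^{\omega}$ is a pivot-like operation, and the net effect of applying it to all of $\mathcal P^{\omega}$ is the $1$-local complementation over $\mathcal P^{\omega}$; since applying a local complementation twice is the identity, whether $p_{u,v}^{\omega}$ is "used" is recorded by whether $p_{u,v}^{\omega}\in A\cap D \,\Delta\, B\cap C$ flips in the expected way, and the requirement that all $p_{u,v}^{\omega}$ for a fixed $\omega$ have the same status is the family of linear equations "$p_{u,v}^{\omega}\in B \iff p_{u',v'}^{\omega}\in B$" together with "$\cdots\in A$", "$\cdots\in C$", "$\cdots\in D$" for all $p_{u,v}^{\omega},p_{u',v'}^{\omega}\in\mathcal P^{\omega}$, which is the last item of the list of linear constraints given just before \cref{sec:algo_lcr} ($U_u=U_v$ up to Pauli). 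This yields a set $L$ of $O(n^2)$ linear constraints (the number of new vertices is $O(n^2|\mathcal B|) = O(n^3)$ in the worst case; I would argue the count is in fact $O(n^2)$ by observing $|\mathcal B|\le\binom{|V_Z|}{2}$ is absorbed, or otherwise bound $\ell=O(n^3)$, which still gives the claimed runtime — see below).

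Then I would invoke \cref{prop:extendedBouchet} on $G^{\#}_1, G^{\#}_2$ with the linear system $L$: it produces, when one exists, a solution $(A,B,C,D)$ to $(i)$, $(ii)$ and $L$, or certifies none exists, in time $O((n^2+\ell)n^2)$. By \cref{prop:Bouchet} and its constructive proof, a solution to $(i)$ and $(ii)$ translates into an explicit sequence of at most $\lfloor 3n/2\rfloor$ local complementations relating $G^{\#}_1$ and $G^{\#}_2$, and the linear constraints $L$ guarantee this sequence respects the two additional requirements (here I would double-check that the bookkeeping sets $A_0,B_0,C_0,D_0$ in that proof only ever move a vertex of $V_Z$ into $B$ when forced, so the constraint $b_u=0$ for $u\in V_Z$ really does forbid complementing $V_Z$; and that the "used entirely or not at all" translation is faithful). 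Finally, I need $n^{\#}$, the order of $G^{\#}_i$, to be polynomial in $n$ — it is, since $n^{\#}\le n + (\text{number of new vertices})$ and the number of new vertices is at most $|\mathcal B|\cdot\binom{|V_Z|}{2} = O(n^2)$ because $|\mathcal B|\le\dim\Omega\le\binom{|V_Z|}{2}=O(n^2)$ is a crude bound but one can do better by noting $\dim\Omega\le|V_X|=O(n)$ since $\Omega$ is a quotient of $\Sigma\subseteq({\Zp{2^r}})^{V_X}$ — wait, more carefully $\dim_{\mathbb F_2}\Omega$ may exceed $|V_X|$, so I would just take the safe bound giving $n^\# = O(n^2)$, whence $\ell=O(n^4)$ and the Bouchet step runs in $O((n^\#)^4)=O(n^8)$; to get the stated $O(n^4)$ one must argue $n^\#=O(n)$, i.e.\ that $|\mathcal B|=O(1)$ or that only $O(n)$ new vertices are actually needed — this is the point I would need to nail down, perhaps by keeping only a spanning set of the $\omega$'s of size $O(1)$ after observing $\dim\Omega$ is bounded independently of $n$, or by a smarter encoding that attaches one gadget path per basis vector rather than one vertex per toggled edge. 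The main obstacle, therefore, is not the reduction itself but controlling the size of the auxiliary graph $G^{\#}$ so that the final complexity matches the claimed $O(n^4)$, and verifying that the Class-$\alpha$ exceptional components cause no trouble.
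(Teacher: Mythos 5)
Your overall strategy---encode the two requirements of \cref{lemma:new_graphs} as linear constraints on $(A,B,C,D)$ and invoke the extended Bouchet algorithm of \cref{prop:extendedBouchet}---is exactly the paper's. Two substantive divergences are worth flagging. First, the connectivity/Class-$\alpha$ discussion is far simpler than you make it: the paper splits on whether $\Omega$ is the zero space. If $\dim\Omega=0$ there are no new vertices but also no additional constraints, so plain Bouchet applies; if $\dim\Omega>0$ then $G^{\#}_1$ and $G^{\#}_2$ contain at least one new vertex, which has degree $2$, so the even-degree hypothesis of \cref{prop:extendedBouchet} is satisfied outright and no component-by-component analysis is needed. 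Second, your encoding of the synchronization constraint is weaker than the paper's. The paper imposes, for every new vertex $v$, both $v\in\overline{C}$ (forcing $U_v$ to be $X(k\pi/2)$ up to Pauli, i.e.\ excluding the Hadamard-type cases $4$--$6$ of the proof of \cref{prop:Bouchet}) and $v\in B\iff v'\in B$ for $v,v'$ in the same $\mathcal P^{\omega}$. Your constraint ``$U_v=U_{v'}$ up to Pauli'' synchronizes the new vertices but does not by itself exclude solutions in which all of $\mathcal P^{\omega}$ sits in a Hadamard case; you would then have to argue separately (via the case tables, using that the neighbours of a new vertex lie in $V_Z\subseteq\overline{B}$) that such solutions cannot occur or are harmless. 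The paper's choice of constraints sidesteps this entirely, so you should adopt the $\overline{C}$ condition rather than the full ``equal up to Pauli'' condition.

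On the complexity, your accounting is muddled but your underlying worry is fair. Note first that $|\mathcal B|\le n$, not $O(n^2)$: the generating set of $\Sigma$ produced in \cref{lemma:omega} has at most $|V_X|\le n$ elements and $\Omega$ is spanned over $\mathbb F_2$ by their images, so your second-guessing of $\dim\Omega\le|V_X|$ is unnecessary. Still, each $\omega\in\mathcal B$ may have support of size $\Theta(n^2)$, so the crude bound on the order of $G^{\#}$ is $O(n^3)$, not the $O(n^2)$ or $O(n)$ you oscillate between. The paper's proof does not introduce any further idea here: it simply applies \cref{prop:extendedBouchet} to $G^{\#}_1,G^{\#}_2$ and reads off $O(n^4)$, with the $n$ implicitly that of the graphs handed to Bouchet. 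So the size of $G^{\#}$ is not ``the missing step of the intended proof'' that you should keep searching for; it is a point the paper leaves implicit, and your reduction is otherwise complete once the constraint encoding is fixed as above.
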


\begin{proof}
    If the vector space $\Omega$ is of dimension zero (i.e.~$\Omega$ only contains the null vector), then there is no additional constraint, thus one can apply the usual Bouchet algorithm that decides LC-equivalence of graphs.

    If $\Omega$ is not of dimension zero, then $G^{\#}_1$ and $G^{\#}_2$ both have at least one even-degree vertex (since every "new vertex" is of degree 2). Using the notations of \cref{subsec:bouchet}, let us define the following linear constraints on $V^4$:  $\forall u \in V_Z$, $\forall \omega \in \mathcal B$, $\forall v,v'\in \mathcal P^{\omega}$, 
    \begin{itemize}
        \item $ u \in \overline B$;
        \item $v \in \overline C$;
        \item $v \in B$ if and only if $v' \in B$.
    \end{itemize} 
    According to \cref{prop:extendedBouchet}, a solution to the system of equations composed of $(i)$, $(ii)$ (see \cref{prop:Bouchet}) and the additional linear constraints can be computed in runtime $O(n^4)$ when it exists. Following the algorithm in the proof of \cref{prop:Bouchet}, such a solution yields a sequence of local complementations satisfying the additional constraints described in \cref{lemma:new_graphs}. Conversely, such a sequence of local complementations can be converted into a valid solution to the system of equations.
\end{proof}

Summing up, we have an algorithm that decides, for a fixed level $r$,  the LC$_r$-equivalence of graphs in polynomial runtime.

\begin{theorem} \label{thm:algolcr}
    There exists an algorithm that decides if two graph states are LC$_r$-equivalent with runtime $O(r n^{r+2.38} + n^{6.38})$, where $n$ is the number of qubits.
\end{theorem}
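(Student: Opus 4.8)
The plan is to assemble the four‑stage procedure sketched at the start of the chapter, since every ingredient has already been proved. Fix the level $r\gs 1$ and two graphs $G_1,G_2$ of order $n$ on the same vertex set $V$. First I would run the algorithm of \cref{lemma:standardform} on $(G_1,G_2)$: it either certifies that the two graphs are not LU‑equivalent — in which case they are a fortiori not LC$_r$‑equivalent and the algorithm answers ``no'' — or it returns a common MLS cover $\mathcal M$ and graphs $G'_1 =_{LC} G_1$, $G'_2 =_{LC} G_2$ in standard form with respect to $\mathcal M$, in runtime $O(n^{6.38})$. Next I would check in quadratic time that every vertex has the same type in $G'_1$ and $G'_2$ and that every type‑X vertex has the same neighborhood in both; by \cref{lemma:LUbottom} and \cref{lemma:same_types}, if either test fails the graphs are not LU‑equivalent and we answer ``no''. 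Passing these tests, we may speak of the common sets $V_X,V_Z,V_\bot$ of vertices of type X, Z, $\bot$.

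The second stage computes, via \cref{lemma:omega}, a basis $\mathcal B$ of the vector space $\Omega$ describing all edge‑sets that a single $r$‑local complementation over an $r$‑incident independent multiset supported on $V_X$ can toggle; this costs $O(r n^{r+2.38})$, and is the dominant term. The third stage builds the auxiliary graphs $G^{\#}_1,G^{\#}_2$ exactly as in the paragraph preceding \cref{lemma:new_graphs} (delete $V_X$; for each $\omega\in\mathcal B$ and each toggled pair $u,v\in V_Z$, attach a degree‑two ``new'' vertex $p^{\omega}_{u,v}$, forming $\mathcal P^\omega$). The fourth stage invokes \cref{lemma:LC_new_graphs}: decide whether some sequence of local complementations takes $G^{\#}_1$ to $G^{\#}_2$ while avoiding $V_Z$ and, for each $\omega$, either using all of $\mathcal P^{\omega}$ once or using none of it. This is done by the extended Bouchet algorithm of \cref{prop:extendedBouchet} with the linear constraints $u\in\overline B$ for $u\in V_Z$, $v\in\overline C$ for $v\in\mathcal P^\omega$, and $v\in B \iff v'\in B$ for $v,v'\in\mathcal P^\omega$ (or plain Bouchet when $\dim\Omega=0$), in $O(n^4)$. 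The algorithm answers ``yes'' iff such a sequence exists.

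Correctness is then a chaining of results: by \cref{lemma:standardform_LCr_lc}, $G_1=_{LC_r}G_2$ iff $G'_1$ and $G'_2$ are related by local complementations on $V_\bot$ together with a single $r$‑local complementation over $V_X$; \cref{lemma:new_graphs} converts the existence of such a combined transformation into the constrained LC‑equivalence of $G^{\#}_1$ and $G^{\#}_2$; and \cref{lemma:LC_new_graphs} decides that. The ``no'' branches are precisely the configurations already incompatible with LU‑equivalence. For the complexity, the three nontrivial stages cost $O(n^{6.38})$, $O(r n^{r+2.38})$, and $O(n^4)$ respectively; the last is dominated by $O(n^{r+2.38})$ for $r\gs 2$ and by $O(n^{6.38})$ for $r=1$, so the total is $O(r n^{r+2.38}+n^{6.38})$.

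I do not expect a genuine new difficulty here; the main thing to get right is that the reductions compose cleanly. Concretely: the single $r$‑local complementation and the $\bot$‑local complementations commute (type‑X and type‑$\bot$ vertices share no edge in standard form, cf.\ the remark following \cref{lemma:standardform_LCr_lc}), so in the fourth stage the ``new'' vertices and the $V_\bot$ vertices may be handled independently; the technical hypotheses of \cref{prop:extendedBouchet} (presence of an even‑degree vertex, avoidance of the ``Class $\alpha$'' pathology) hold as soon as $\dim\Omega>0$ because every ``new'' vertex has degree $2$, and are sidestepped by the $\dim\Omega=0$ fallback; and disconnected inputs are handled by running the constrained Bouchet step componentwise. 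Everything beyond these checks is bookkeeping already carried out in the lemmas.
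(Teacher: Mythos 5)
Your proposal is correct and follows essentially the same route as the paper: standard form via \cref{lemma:standardform}, the type/neighborhood checks justified by \cref{lemma:LUbottom} and \cref{lemma:same_types}, the basis of $\Omega$ from \cref{lemma:omega}, the auxiliary graphs $G^{\#}_1,G^{\#}_2$, and the constrained Bouchet step of \cref{lemma:LC_new_graphs}, with correctness chained through \cref{lemma:standardform_LCr_lc} and \cref{lemma:new_graphs} and the stated cost accounting. The extra remarks on commutation, the even-degree hypothesis, and the $\dim\Omega=0$ fallback match how the paper's lemmas already handle these points.
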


The algorithm reads as follows:
\begin{enumerate}
    \item Put $G_1$ and $G_2$ in standard form with respect to the same MLS cover if possible, otherwise output NO.
    \item Check whether each vertex has the same type in $G_1$ and $G_2$, and whether every vertex of type X has the same neighborhood in both graphs, otherwise output NO.
    \item Compute a basis of the vector space $\Omega$.
    \item Compute the graphs ${G}^{\#}_1$ and ${G}^{\#}_2$.
    \item Decide whether ${G}^{\#}_1$ and ${G}^{\#}_2$ are LC-equivalent with the additional constraints described in \cref{lemma:new_graphs}. Output YES if this is the case,  NO otherwise.
\end{enumerate}

Notice that the algorithm is exponential in $r$, in particular it does not provide an efficient algorithm to decide LU-equivalence of graph states. To address this issue, we provide in the next section some upper bounds on the level of a generalized local complementation.

\section{Bounds for generalized local complementation}

In this section, we prove an upper bound on the level of a valid generalized local complementation: roughly speaking we show that if $G\star^r S$ is valid then $r$ is at most logarithmic in the order $n$ of the graph $G$. This bound is however not true in general as 
whenever $G\star^r S$ is valid, we have $G\star^r S = G\star^{r+1} (2S)$, according to \cref{prop:monotonicity}. 
To avoid these pathological cases, we thus focus on genuine $r$-incident independent multisets:

\begin{definition}
Given a graph $G$, an $r$-incident independent multiset $S$ is \emph{genuine} if there exists a set $K \se V \sm \supp(S)$ such that $|K|>1$ and $\sum_{N_{G}(u)=K}S(u)$ is odd\footnote{With a slight abuse of notation, $\sum_{N_{G}(u)=K}S(u)$ is the sum over all $u\in V$ s.t. $N_{G}(u)=K$.}. 
\end{definition}

\begin{proposition} \label{prop:nontrivial}
If $G\star^r S$ is valid and there is no $(r-1)$-incident independent multiset $S'$ such that $G\star^{r-1} S' = G \star^r S$, then $S$ is a genuine $r$-incident independent multiset.    
\end{proposition}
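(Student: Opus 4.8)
The statement is contrapositive in flavour: I want to show that if $S$ is \emph{not} genuine, then $G\star^r S$ can be reproduced by some $(r-1)$-incident independent multiset $S'$. So the plan is to assume $S$ is $r$-incident, independent, and not genuine, and to produce $S'$ with $G\star^{r-1}S' = G\star^r S$. Not genuine means: for every $K\se V\sm\supp(S)$ with $|K|>1$, the quantity $\sum_{N_G(u)=K}S(u)$ is even. I expect this evenness on the "neighbourhood classes" of size $>1$ to force that $S$ is, up to an irrelevant part, twice another multiset — which is exactly the situation of the first bullet of \cref{prop:monotonicity}, read backwards.

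First I would analyse the structure of $S$ using the twin relation. Partition $\supp(S)$ into twin classes: vertices $u,v$ with $N_G(u)=N_G(v)$. Since $S$ is independent, each twin class consists of mutually non-adjacent vertices, and all vertices of a class share a common neighbourhood $K=N_G(u)$. Recall from the proof of \cref{thm:LU_imply_LCr} that for twins $u,v$ one has $X(\theta_1)_uX(\theta_2)_v\ket G = X(\theta_1+\theta_2)_u\ket G$; translated to the multiset language via \cref{prop:implementation_rlc}, this means that within a twin class only the total multiplicity $\sum_{N_G(u)=K}S(u)$ matters for the action $G\star^r S$. So without loss of generality I may assume $S$ is supported on one representative per twin class, with multiplicity $m_K := \sum_{N_G(u)=K}S(u) \bmod 2^r$ on the representative of class $K$. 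Leaves and isolated vertices (classes with $|K|\le 1$) can likewise be normalised away, exactly as in the proof of \cref{thm:LU_imply_LCr}: an $r$-local complementation cannot toggle an edge both of whose endpoints are not in $\Lambda_G^{u,v}$-reachable position, and more to the point $S\bullet\Lambda_G^{u,v}$ and the $r$-incidence conditions only ever see $\Lambda_G^K$ for $|K|\ge 2$, so the multiplicities on classes with $|K|\le 1$ are pure gauge — set them to $0$.

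After this normalisation, not genuine says exactly that every remaining multiplicity $m_K$ (for $|K|\ge 2$) is even. Write $m_K = 2 m'_K$ and let $S'$ be the multiset putting $m'_K$ on the representative of class $K$. I claim $S'$ is $(r-1)$-incident: for any $k\in[0,r-1)$ and any $K$ of size $k+2$ disjoint from $\supp(S')=\supp(S)$, we have $S'\bullet\Lambda_G^K = \tfrac12\,S\bullet\Lambda_G^K$, which is a multiple of $2^{r-k-\delta(k)}/2 = 2^{(r-1)-k-\delta(k)}$ by $r$-incidence of $S$; and $S'$ is independent because $\supp(S')\se\supp(S)$. Finally $G\star^{r-1}S' = G\star^r(2S') = G\star^r S$: the first equality is the first bullet of \cref{prop:monotonicity}, and the second holds because $2S'$ and $S$ agree (modulo $2^r$, and after the twin/leaf/isolated normalisation) on every twin class, hence induce the same edge toggles — $2S'\bullet\Lambda_G^{u,v} = S\bullet\Lambda_G^{u,v} \bmod 2^r$ for all $u,v$ — and the same $Z$-corrections. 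This contradicts the hypothesis that no such $(r-1)$-incident $S'$ exists, so $S$ must be genuine.

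\textbf{Main obstacle.} The delicate point is making the "twin/leaf/isolated normalisation" rigorous: I must check that replacing $S$ by its twin-class-totalled, leaf-and-isolated-trimmed version $\tilde S$ genuinely preserves the graph $G\star^r S$ (not merely the graph state up to local unitaries), and that $\tilde S$ remains $r$-incident. Both follow from the fact that the defining data of $\star^r$ — the values $S\bullet\Lambda_G^K$ for $|K|\ge 2$ and $\sum_{u\in N_G(v)}S(u)$ for the $Z$-rotations — are insensitive to redistributing multiplicity within a twin class and to multiplicity on classes of size $\le 1$; but one should verify that $\sum_{u\in N_G(v)}S(u)$ is also unaffected, which is true since a leaf/isolated vertex $u$ with $|N_G(u)|\le 1$ contributes to at most one $v$, and that contribution can be absorbed into the $Z$-correction bookkeeping exactly as in the leaf argument of \cref{thm:LU_imply_LCr} (an $X(\theta)$ on a leaf is a $Z(\theta)$ on its unique neighbour, and on an isolated vertex it is trivial). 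I would write this normalisation step out carefully and then the rest is the short computation above.
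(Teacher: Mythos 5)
Your proof is correct and follows essentially the same route as the paper's: consolidate the multiplicity of $S$ onto one representative per neighbourhood (twin) class, observe that non-genuineness makes these totals even, halve to get an $(r-1)$-incident multiset, and invoke \cref{prop:monotonicity}. Your explicit treatment of the classes with $|N_G(u)|\le 1$ (whose consolidated multiplicity need not be even, but which contribute neither to any $\Lambda_G^K$ with $|K|\ge 2$ nor to the edge-toggling rule, so can be zeroed out) is a welcome extra precision that the paper's terser argument leaves implicit.
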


\begin{proof} By contradiction, assume $S$ is an $r$-incident independent multiset that is not genuine. Let $S'$ be the multiset obtained from $S$ by choosing, for every set $K \se V \sm \supp(S)$ s.t. $\{u \in \supp(S)~|~N_{G}(u)=K\}$ is not empty, a single vertex $u \in \supp(S)$ s.t. $N_{G}(u)=K$, and setting $S'(u)=\sum_{N_{G}(u)=K}S(u)$ and for any other vertex $v \in \supp(S)$ s.t. $N_{G}(v)=K$, $S'(v)=0$. It is direct to show that $S'$ is $r$-incident and that $G\star^r S = G\star^r S'$. Then, let $S'/2$ be the multiset obtained from $S'$ by dividing by 2 the multiplicity of each vertex in $\supp(S')$. It is direct to show that $S'$ is $(r-1)$-incident and that $G\star^r S= G\star^{r-1} S'/2$.   
\end{proof}

Genuine $r$-incidence can only occur for multisets whose support is of size at least exponential in $r$.

\begin{lemma} \label{lemma:exp_support}
    Given an integer $r>1$, if $S$ is a genuine $r$-incident independent multiset of a graph $G$, then $|\supp(S)| \gs 2^{r+2}-r-3$.
\end{lemma}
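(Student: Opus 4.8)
The plan is to translate the statement into a purely combinatorial fact about $\mathbb{F}_2$-valued functions on the subset lattice $2^V$, and prove that fact by a double induction on $r$.

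First I would set up the reduction. For $K\se V$ let $f(K)=\sum_{N_G(u)=K}S(u)$ be the integer obtained by summing $S(u)$ over all vertices $u$ whose neighbourhood is exactly $K$. Its support is contained in $\{N_G(u)\mid u\in\supp(S)\}$, so $|\supp(f)|\ls|\supp(S)|$ and it suffices to bound the mod-$2$ support of $f$. Since $S\bullet\Lambda_G^K=\sum_{u\in\Lambda_G^K}S(u)=\sum_{K'\supseteq K}f(K')$, the $r$-incidence of $S$ (together with $r>1$) forces $S\bullet\Lambda_G^K$ to be \emph{even} for every $K$ with $2\ls|K|\ls r+1$ and $K\se V\sm\supp(S)$; and if $K$ meets $\supp(S)$ then $\Lambda_G^K$ is disjoint from $\supp(S)$ because $S$ is independent, so $S\bullet\Lambda_G^K=0$ there as well. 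Writing $\bar f$ for $f$ reduced mod $2$ and $\bar g(K):=\sum_{K'\supseteq K}\bar f(K')$, we conclude that $\bar g$ vanishes on all $K$ with $2\ls|K|\ls r+1$, while genuineness supplies some $K_0$ with $|K_0|\gs2$ and $\bar f(K_0)=1$.

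Next I would use that the up-zeta transform $\zeta\colon\phi\mapsto\bigl(K\mapsto\sum_{K'\supseteq K}\phi(K')\bigr)$ is an involution over $\mathbb{F}_2$ (the kernel factor $\sum_{K\se K'\se K''}1=2^{|K''|-|K|}$ is even unless $K''=K$). Thus $\bar f=\zeta\bar g$; letting $h$ be the restriction of $\bar g$ to sets of size $\gs r+2$, the functions $\bar f$ and $\zeta h$ agree on all sets of size $\gs2$, and $h\neq0$ (otherwise $\bar f(K_0)=\sum_{K'\supseteq K_0}\bar g(K')=0$). Put $\mathcal T:=\{K\mid(\zeta h)(K)=1\}$, a nonempty finite set family. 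Applying $\zeta$ once more, $h(K)=\sum_{K'\supseteq K}(\zeta h)(K')=\bigl|\{K'\in\mathcal T\mid K'\supseteq K\}\bigr|\bmod2$, which is $0$ whenever $|K|\ls r+1$ since $h$ is supported on sets of size $\gs r+2$. So $\mathcal T$ is a nonempty family in which \emph{every set of size at most $r+1$ is contained in an even number of members}. Because $|\supp(S)|\gs|\supp(f)|\gs|\supp(\bar f)|\gs|\{T\in\mathcal T\mid|T|\gs2\}|$, the lemma reduces to the claim that any such $\mathcal T$ has at least $2^{r+2}-r-3$ members of size at least $2$.

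I would prove this by induction on $r$, simultaneously with the companion bound that such a $\mathcal T$ has at least $2^{r+2}$ members in total; the extremal example in both cases is the family of all subsets of a fixed $(r+2)$-element set. The base case $r=0$ is a short check. For the step, note $|\mathcal T|$ is even (take $K=\emptyset$), so $\mathcal T$ has two distinct members and hence some vertex $v$ lies in some but not all members; split $\mathcal T=\mathcal T^v\sqcup\mathcal T^{\bar v}$ according to whether $v$ belongs to the member, and set $\mathcal T^v-v:=\{T\sm\{v\}\mid T\in\mathcal T^v\}$ on the ground set $V\sm\{v\}$. Both $\mathcal T^{\bar v}$ and $\mathcal T^v-v$ are nonempty and have the property that every set of size at most $r$ is in an even number of members — for $\mathcal T^v-v$ because its members containing $K$ biject with members of $\mathcal T$ containing $K\cup\{v\}$ (size $\ls r+1$), and for $\mathcal T^{\bar v}$ by subtracting this even count from the even count of members of $\mathcal T$ containing $K$. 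By induction $\mathcal T^v-v$ has $\gs2^{r+1}$ members, so $\mathcal T^v$ has $\gs2^{r+1}-1$ members of size $\gs2$ (the only possible size-$1$ member is $\{v\}$), and $\mathcal T^{\bar v}$ has $\gs2^{r+1}-r-2$ members of size $\gs2$; summing gives $\gs2^{r+2}-r-3$, and $|\mathcal T|=|\mathcal T^v-v|+|\mathcal T^{\bar v}|\gs2^{r+2}$ follows the same way.

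The main obstacle, and the reason this is not a one-line Möbius computation, is twofold. First, $r$-incidence says nothing about $S\bullet\Lambda_G^K$ for $|K|\ls1$, so the raw family $\supp(\bar f)$ need not satisfy any evenness condition at small levels; passing to $h$ (the high-degree part of $\bar g$) is what restores the clean statement about $\mathcal T$. Second, an induction tracking only members of size $\gs2$ loses a factor of two, so one must carry the auxiliary estimate $|\mathcal T|\gs2^{r+2}$ through the recursion in order to control the contribution of the large members of $\mathcal T^v$. With both points handled the bound telescopes to exactly $2^{r+2}-r-3$, and the all-subsets-of-an-$(r+2)$-set example shows it is sharp.
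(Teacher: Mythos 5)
Your proof is correct and reaches the paper's bound by a genuinely different route. The paper's argument is local: it picks a single minimal set $K_0\se V\sm\supp(S)$ with $S\bullet\Lambda_G^{K_0}$ odd (so $|K_0|\gs r+2$ by $r$-incidence), restricts to the induced subgraph $G[\supp(S)\cup K_0]$, and shows by a downward induction on subsets $K\se K_0$ that $\sum_{N_{G'}(u)=K}S(u)$ is odd for every $K$ with $|K|>1$; this exhibits $2^{|K_0|}-|K_0|-1\gs 2^{r+2}-r-3$ distinct neighbourhoods, hence that many vertices of $\supp(S)$. You instead work globally: you note that $K\mapsto S\bullet\Lambda_G^K$ is the up-zeta transform of $K\mapsto\sum_{N_G(u)=K}S(u)$, use the involutivity of $\zeta$ over $\mathbb F_2$ to replace $\bar f$ on sets of size $\gs 2$ by $\zeta h$ with $h$ supported in degrees $\gs r+2$, and thereby reduce the lemma to a self-contained extremal statement about families $\mathcal T$ in which every set of size $\ls r+1$ lies in an even number of members. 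I checked the key steps: the independence of $S$ does give $S\bullet\Lambda_G^K=0$ when $K$ meets $\supp(S)$, the bijection between members of $\mathcal T^v-v$ containing $K$ and members of $\mathcal T$ containing $K\cup\{v\}$ is valid, the accounting $(2^{r+1}-1)+(2^{r+1}-r-2)=2^{r+2}-r-3$ is right, the companion estimate $|\mathcal T|\gs 2^{r+2}$ is genuinely needed and correctly propagated, and the base case $r=0$ holds. Both proofs ultimately rest on the same mod-$2$ M\"obius inversion, but the paper inverts only on the interval below a single odd set $K_0$ (shorter and more concrete), whereas yours isolates a reusable combinatorial lemma about ``evenly covering'' families, together with the all-subsets-of-an-$(r+2)$-set example witnessing its tightness, at the cost of a somewhat longer setup.
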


\begin{proof}
    Let $m>1$ be the smallest integer such that there exists a set $K_0 \se V \sm \supp(S)$ of size $m$ such that $S \bullet\Lambda_G^{K_0}$ is odd.   
    Note that by hypothesis there exists such an integer. Indeed, let $K_\text{max}$ be the biggest subset (by inclusion) of $V \sm \supp(S)$ such that $\sum_{N_{G}(u)=K_\text{max}}S(u)$ is odd: then $S \bullet\Lambda_G^{K_\text{max}}$ is odd. Thus, by definition of the $r$-incidence, $m\gs r+2$.

    Let $G' = G[\supp(S) \cup K_0]$ be the graph obtained from $G$ by removing the vertices that are neither in the support of $S$, nor in $K_0$. By definition, $S$ is also $r$-incident in $G'$. Also, $S \bullet\Lambda_{G'}^{K_0}$ is odd, and for every set $K \varsubsetneq K_0$ s.t. $|K|>1$, $S \bullet\Lambda_{G'}^{K}$ is even.

    Let us prove that for any $K \se K_0$ s.t. $|K|>1$, $\sum_{N_{G'}(u)=K}S(u)$ is odd, by induction over the size of $K$. First notice that $\sum_{N_{G'}(u)=K_0}S(u) = S \bullet\Lambda_{G'}^{K_0}$ is odd. Then, let $K_1 \varsubsetneq K_0$ s.t. $|K_1|>1$.
    \begin{align*}
        &S \bullet\Lambda_{G'}^{K_1} = \sum_{K_1\se K\se K_0}\sum_{N_{G'}(u)=K}S(u)
        = \sum_{N_{G'}(u)=K_1}S(u) + \sum_{K_1\varsubsetneq K\se K_0}\sum_{N_{G'}(u)=K}S(u)\\
        &= \sum_{N_{G'}(u)=K_1}S(u) + |\{K\se K_0~|~K_1\varsubsetneq K\}| \bmod 2 \text{~~by hypothesis of induction}\\
        &= \sum_{N_{G'}(u)=K_1}S(u) + 1 \bmod 2 
    \end{align*}
Thus, $\sum_{N_{G'}(u)=K_1}S(u)$ is odd. As a consequence, for any $K \se K_0$ s.t. $|K|>1$, there exists at least one vertex $u\in \supp(S)$ s.t. $N_{G'}(u)=K$. Then, $|\supp(S)| \gs |\{K\se K_0~|~|K|>1\}| = 2^m - m - 1 \gs 2^{r+2} - (r+2) -1 = 2^{r+2}-r-3$.
\end{proof}

Likewise, $r$-local complementations that cannot be implemented by $(r-1)$-local complementations can only occur over multisets with sufficiently many vertices outside their support.

\begin{lemma}\label{lemma:sizeZ}
    If $G\star^r S$ is valid and there is no $(r-1)$-incident independent multiset $S'$ such that $G\star^{r-1} S' = G \star^r S$, then $|V\sm\supp(S)| \gs r+3$.
\end{lemma}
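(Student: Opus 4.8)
The plan is to leverage \cref{lemma:exp_support} together with the structural consequences of genuine $r$-incidence. Given $G\star^r S$ valid with no $(r-1)$-incident independent multiset $S'$ such that $G\star^{r-1}S' = G\star^r S$, \cref{prop:nontrivial} tells us that $S$ is a genuine $r$-incident independent multiset. First I would handle the trivial case $r=1$: then $r+3 = 4$, and a genuine $1$-incident multiset requires (by definition) a set $K$ of size at least $2$ disjoint from $\supp(S)$ with $\sum_{N_G(u)=K}S(u)$ odd, so already $|V\sm\supp(S)|\gs 2$; in fact for $r=1$ the statement must be re-examined since the lemma as stated quantifies over $r$ including $r=1$ — but rereading, \cref{lemma:exp_support} is stated for $r>1$, so the intended regime here is presumably $r>1$ as well (or the $r=1$ case follows from a direct argument that $|V\sm\supp(S)|\gs m\gs r+2$, giving $r+2$ not $r+3$, so I would check whether the statement is meant for $r>1$). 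I will assume $r>1$.

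The core argument: from the proof of \cref{lemma:exp_support}, there is a smallest integer $m>1$ and a set $K_0\se V\sm\supp(S)$ with $|K_0|=m$ such that $S\bullet\Lambda_G^{K_0}$ is odd, and $m\gs r+2$ by $r$-incidence. Since $K_0\se V\sm\supp(S)$, this immediately gives $|V\sm\supp(S)| \gs m \gs r+2$. To upgrade $r+2$ to $r+3$, I would argue that $V\sm\supp(S)$ cannot equal $K_0$ exactly, i.e. it must contain at least one vertex outside $K_0$. The point is that every vertex $u\in\supp(S)$ has $N_G(u)\se V\sm\supp(S)$ (because $S$ is independent, so no vertex of $\supp(S)$ is adjacent to another vertex of $\supp(S)$). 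If $V\sm\supp(S) = K_0$, then every $u\in\supp(S)$ satisfies $N_G(u)\se K_0$. But the proof of \cref{lemma:exp_support} shows that for \emph{every} $K\se K_0$ with $|K|>1$ there is a vertex $u\in\supp(S)$ with $N_{G'}(u) = K$ where $G' = G[\supp(S)\cup K_0]$; when $V\sm\supp(S)=K_0$ we have $G'=G$, so $G$ itself would have, for every $K\se K_0$ with $|K|>1$, a vertex of $\supp(S)$ with exactly neighborhood $K$. Now consider $K=K_0$ and also any $K$ obtained by removing a single element: one can derive a parity contradiction, or alternatively observe that a singleton $K=\{x\}\se K_0$ never appears as a neighborhood among support vertices under the constraints, yet the degree-sum / incidence bookkeeping forces an extra vertex. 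The cleanest route is: pick any single vertex $x\in K_0$; since $|K_0|=m\gs r+2\gs 4$, the set $K_0\sm\{x\}$ has size $m-1\gs 3$, and by the argument in \cref{lemma:exp_support} there is $u\in\supp(S)$ with $N_G(u)=K_0\sm\{x\}$; but then $u$ is not adjacent to $x$, which is consistent — so instead I would count: $|\supp(S)|\gs 2^m-m-1$ and then bound $n = |V| = |\supp(S)| + |V\sm\supp(S)|$, and separately use that $r$-incidence over $K_0$ itself being violated unless... Actually the sharper observation needed is that $K_0\subsetneq V\sm\supp(S)$: if $K_0$ were all of $V\sm\supp(S)$, then since $S$ is $r$-incident and $|K_0|=m$, the condition at $k=m-2 < r$ (valid because $m\gs r+2$ means $m-2\gs r$, so actually $m-2\notin[0,r)$)... hmm, this needs care.

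Let me reststructure. The hard part — and the main obstacle — is precisely pinning down why $m\gs r+2$ can be improved to $|V\sm\supp(S)|\gs r+3$ rather than just $\gs r+2$. I expect the resolution to come from examining the \emph{minimality} of $m$ more carefully: $m$ was chosen as the smallest integer $>1$ such that \emph{some} $K_0$ of that size has $S\bullet\Lambda_G^{K_0}$ odd. By $r$-incidence, for every $K$ of size $k+2$ with $k\in[0,r)$, i.e. every $K$ of size in $[2,r+1]$, disjoint from $\supp(S)$, $S\bullet\Lambda_G^K$ is a multiple of $2^{r-k-\delta(k)}\gs 2$, hence even. So $m\gs r+2$. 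Now take $K_0$ with $|K_0|=m\gs r+2$ and $S\bullet\Lambda_G^{K_0}$ odd. I claim $V\sm\supp(S)$ strictly contains $K_0$. Suppose not: $V\sm\supp(S)=K_0$. Then as shown, for every $K\se K_0$ with $|K|\gs 2$ there is a support vertex with that exact neighborhood. Take $K$ of size $2$, say $K=\{x,y\}\se K_0$: there is $u\in\supp(S)$ with $N_G(u)=\{x,y\}$. Then consider the triple nature — actually take $K=K_0$: there is $u_0\in\supp(S)$ with $N_G(u_0)=K_0$, meaning $u_0$ is adjacent to \emph{all} of $K_0$. Fine, no contradiction yet. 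The genuine contradiction: apply $r$-incidence to $K_0$ minus appropriately many elements. Since $m\gs r+2$, the set $K_0$ has subsets of size exactly $r+1$; for such a subset $K'$, $S\bullet\Lambda_G^{K'}$ must be even by $r$-incidence (as $|K'|=r+1=k+2$ with $k=r-1\in[0,r)$). But by the inductive parity computation in the proof of \cref{lemma:exp_support}, $\sum_{N_{G'}(u)=K'}S(u)$ has a fixed parity determined by $S\bullet\Lambda_{G'}^{K_0}$ being odd together with $|\{K : K'\subsetneq K\se K_0\}| = 2^{m-(r+1)}-1$; working modulo $2$ this forces a specific parity on $S\bullet\Lambda_{G'}^{K'}$, and comparing with the $r$-incidence requirement that it be even should either be automatically consistent (no contradiction, meaning we genuinely only get $r+2$) or contradictory. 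I would carefully redo this parity bookkeeping; if it is consistent, then the correct improvement comes instead from noting $|\supp(S)\cup K_0| \le n$ combined with $|\supp(S)|\gs 2^m - m - 1$ and choosing $m$ as small as possible, which after the arithmetic yields at least one spare vertex, i.e. $|V\sm\supp(S)| = |K_0| + (\text{extra}) \gs (r+2)+1$. So the plan concretely is: (1) reduce to $r>1$ and invoke genuineness via \cref{prop:nontrivial}; (2) extract $K_0\se V\sm\supp(S)$ with $|K_0|=m\gs r+2$; (3) show by a parity/counting argument (reusing the machinery inside the proof of \cref{lemma:exp_support}) that $V\sm\supp(S)\ne K_0$, hence $|V\sm\supp(S)|\gs m+1\gs r+3$. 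Step (3) is the main obstacle, and I expect it to hinge on applying $r$-incidence at the boundary size $r+1$ inside the restricted graph $G'$ and deriving a contradiction with the forced parity of $S\bullet\Lambda_{G'}^{K'}$ for $K'\subsetneq K_0$ of size $r+1$.
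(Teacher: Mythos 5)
There is a genuine gap, and it sits exactly where you flagged it: step (3). Your first half is fine and matches the paper — genuineness via \cref{prop:nontrivial} gives a set $K\se V\sm\supp(S)$ with $S\bullet\Lambda_G^{K}$ odd, and $r$-incidence forces $|K|\gs r+2$, hence $|V\sm\supp(S)|\gs r+2$. But neither of your two candidate routes for upgrading $r+2$ to $r+3$ works. The parity route fails: if $|V\sm\supp(S)|=r+2$, the inductive bookkeeping (as in \cref{lemma:exp_support}) shows that $\sum_{N_G(u)=K}S(u)$ is odd for \emph{every} $K\se V\sm\supp(S)$ with $|K|>1$, and then for a subset $K'$ of size $r+1$ one gets $S\bullet\Lambda_G^{K'}=\sum_{K'\se K}\sum_{N_G(u)=K}S(u)$, a sum of exactly $2^{(r+2)-(r+1)}=2$ odd terms — which is even, perfectly consistent with $r$-incidence. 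No contradiction materializes. The counting fallback is a non sequitur: a lower bound on $|\supp(S)|$ puts no constraint whatsoever on $|V\sm\supp(S)|$, so it cannot produce a "spare vertex" outside the support.

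The missing idea is that the lemma's hypothesis must be used a \emph{second} time, not just once to get genuineness. The paper's argument is: assume $|V\sm\supp(S)|=r+2$; then, as above, every $K\se V\sm\supp(S)$ with $|K|>1$ is the exact neighborhood of at least one support vertex. Build $S'$ from $S$ by picking, for each such $K$, one vertex $u$ with $N_G(u)=K$ and setting $S'(u)=S(u)-1$. A short computation shows $S'$ is still $r$-incident (for a set $K_1$ of size $k+2$ one has $S'\bullet\Lambda_G^{K_1}=S\bullet\Lambda_G^{K_1}-2^{r-k'}$ for the appropriate $k'$) and that $G\star^r S'=G\star^r S$, while $S'$ is by construction \emph{not} genuine. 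Then \cref{prop:nontrivial} yields an $(r-1)$-incident $S''$ with $G\star^{r-1}S''=G\star^r S$, contradicting the hypothesis of \cref{lemma:sizeZ}. So the case $|V\sm\supp(S)|=r+2$ is not excluded by any intrinsic combinatorial impossibility — it is excluded only because in that case the $r$-local complementation is implementable at level $r-1$. Your plan never engages with this mechanism, so the proof is incomplete. (Your side remark about $r=1$ is reasonable; the lemma is implicitly used for $r\gs 2$, where level $r-1$ is defined.)
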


\begin{proof}
    According to \cref{prop:nontrivial}, $S$ is genuine. Let $K_\text{max}$ be one of the biggest subset (by inclusion) of $V \sm \supp(S)$ such that $\sum_{N_{G}(u)=K_\text{max}}S(u)$ is odd: then $S \bullet\Lambda_G^{K_\text{max}}$ is odd. If $|V \sm \supp(S)| \ls r+1$, then  $|K_{max}|\ls r+1$, contradicting the $r$-incidence of $S$. Thus, $|V\sm\supp(S)| \gs |K_{max}| \gs r+2$. 

    Now, suppose $|V\sm\supp(S)| = r+2$, i.e.~$K_{max}=V\sm\supp(S)$. Let us prove that for any $K \se V \sm \supp(S)$ s.t. $|K|>1$, $\sum_{N_{G}(u)=K}S(u)$ is odd, by induction over the size of $K$. First, notice that $\sum_{N_{G}(u)=V \sm \supp(S)}S(u)$ is odd.
    Then, let $K_0 \varsubsetneq V \sm \supp(S)$ s.t. $|K_0|>1$.
    \begin{align*}
        &S \bullet\Lambda_G^{K_0} =\!\!\!\! \sum_{K_0\se K\se V \sm \supp(S)}\sum_{N_{G}(u)=K}S(u)
        = \sum_{N_{G}(u)=K_0}S(u) + \sum_{K_0\varsubsetneq K\se V \sm \supp(S)}\sum_{N_{G}(u)=K}S(u)\\
        &= \sum_{N_{G}(u)=K_0}S(u) + |\{K\se V \sm \supp(S)~|~K_0 \varsubsetneq K\}| \bmod 2 \text{~~by hypothesis of induction}\\
        &= \sum_{N_{G}(u)=K_0}S(u) + 1 \bmod 2 
    \end{align*}
    Thus, $\sum_{N_{G}(u)=K_0}S(u)$ is odd, as $S \bullet\Lambda_G^{K_0}$ is even by $r$-incidence of $S$. As a consequence, for any $K \se V \sm \supp(S)$ s.t. $|K|>1$, there exists at least one vertex $u\in \supp(S)$ s.t. $N_{G}(u)=K$. 

    Let $S'$ be the multiset obtained from $S$ by choosing, for every $K \se V \sm \supp(S)$ s.t. $|K|>1$, a single vertex $u \in \supp(S)$ s.t. $N_{G}(u)=K$, and setting $S'(u)=S(u)-1$ (the multiplicity of other vertices remain unchanged). Let us prove that $S'$ is $r$-incident and $G \star^r S = G\star^{r} S'$. First, $S'$ is $r$-incident. Indeed, let an integer $k\in [0,r)$, let $K_1\subseteq V\setminus \supp(S')$ be a set of size $k+2$, and let $k'=k-|K_1\cap \supp(S)|$. $S\bullet \Lambda_G^{K_1}$ is a multiple of $2^{r-k'-\delta(k')}$ by $r$-incidence of $S$, so is $S'\bullet \Lambda_G^{K_1}$, as $S'\bullet \Lambda_G^{K_1} = S\bullet \Lambda_G^{K_1} -|\{K \se V \sm \supp(S)~|~K_1\sm \supp(S)\se K\}| = S\bullet \Lambda_G^{K_1} -2^{r-k'}$. Then, if $u$ or $v$ is in $\supp(S)$, $u\sim_{G\star^r S} v ~\Leftrightarrow~ u\sim_{G\star^{r} S'} v ~\Leftrightarrow~ u\sim_{G} v$. If $u,v \in V \sm \supp(S)$:
    \begin{align*}
        u\sim_{G\star^r S} v &~\Leftrightarrow~\left(u\sim_{G} v ~~\oplus~~ S \bullet\Lambda_G^{u,v} = 2^{r-1}\bmod 2^{r}\right)\\
        &~\Leftrightarrow~\left(u\sim_{G} v ~~\oplus~~ S'\bullet \Lambda_G^{u,v} +2^{r} = 2^{r-1}\bmod 2^{r}\right)\\
        &~\Leftrightarrow~\left(u\sim_{G} v ~~\oplus~~ S'\bullet \Lambda_G^{u,v} = 2^{r-1}\bmod 2^{r}\right)
        ~\Leftrightarrow~ u\sim_{G\star^{r}S'} v
    \end{align*}
    Thus, $G\star^r S = G\star^r S'$. Notice also that $S'$ is not genuine. Thus, by \cref{prop:nontrivial} there exists an $S''$ such that $G\star^{r-1} S'' = G \star^r S' = G \star^r S$.
\end{proof}

\cref{lemma:exp_support,lemma:sizeZ} together give a simple bound involving only the order of the graph.

\begin{proposition} \label{prop:boundlevel}
    If $G\star^r S$ is valid and there is no $(r-1)$-incident independent multiset $S'$ such that $G\star^{r-1} S' = G \star^r S$, then $n \gs 2^{r+2}$, where $n$ is the order of $G$.
\end{proposition}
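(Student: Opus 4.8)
The plan is to combine Lemmas~\ref{lemma:exp_support} and~\ref{lemma:sizeZ} directly, since together they control both $|\supp(S)|$ and $|V\sm\supp(S)|$, and the order $n$ is the sum of these two quantities. First I would note that the hypothesis — $G\star^r S$ valid and no $(r-1)$-incident independent multiset $S'$ with $G\star^{r-1}S' = G\star^r S$ — is exactly the hypothesis of both lemmas, so both conclusions apply simultaneously. By \cref{prop:nontrivial}, $S$ is moreover a genuine $r$-incident independent multiset, which is what \cref{lemma:exp_support} needs (for $r>1$; the case $r=1$ gives $2^{r+2}=8$ and should be checked separately or dismissed since $r>1$ is forced by the non-triviality hypothesis when $r=1$, as every multiset is trivially reducible — actually for $r=1$ there is no $r-1$ level, so one should state the bound for $r\gs 2$, or observe the statement is vacuous/trivial for $r=1$).

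The core computation is then: $n = |\supp(S)| + |V\sm\supp(S)| \gs (2^{r+2}-r-3) + (r+3) = 2^{r+2}$, using \cref{lemma:exp_support} for the first summand and \cref{lemma:sizeZ} for the second. The cancellation of the $-r-3$ against the $+r+3$ is the reason the two lemmas were stated with these precise constants, so the arithmetic is immediate once both bounds are in hand.

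The only subtlety — and the one place I would be careful — is the edge case $r=1$: \cref{lemma:exp_support} is stated for $r>1$, and the notion of "$(r-1)$-incident" only makes sense for $r\gs 2$. For $r=1$ the hypothesis "there is no $0$-incident multiset..." is not meaningful in the paper's framework, so the proposition should be read as implicitly restricted to $r\gs 2$ (for which $2^{r+2}\gs 16$), or one simply observes that for $r=1$ one always has the trivial bound since a $1$-local complementation imposes no incidence constraint. I expect this boundary bookkeeping to be the main obstacle to writing the proof cleanly; the rest is a one-line addition. Here is the proof.

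\begin{proof}
    We may assume $r\gs 2$, since for $r=1$ there is no notion of $(r-1)$-incidence. By \cref{prop:nontrivial}, $S$ is a genuine $r$-incident independent multiset. Applying \cref{lemma:exp_support}, $|\supp(S)| \gs 2^{r+2}-r-3$. Applying \cref{lemma:sizeZ}, $|V\sm\supp(S)| \gs r+3$. Summing these two bounds,
    \[
        n = |\supp(S)| + |V\sm\supp(S)| \gs (2^{r+2}-r-3) + (r+3) = 2^{r+2}. \qedhere
    \]
\end{proof}
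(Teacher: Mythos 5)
Your proof is correct and is exactly the argument the paper intends: it states immediately after \cref{lemma:exp_support} and \cref{lemma:sizeZ} that the two "together give a simple bound involving only the order of the graph," i.e.\ the same summation $n = |\supp(S)| + |V\sm\supp(S)| \gs (2^{r+2}-r-3)+(r+3)$, with \cref{prop:nontrivial} supplying genuineness. Your remark about the $r=1$ edge case is careful bookkeeping the paper elides, and does not change the argument.
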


Put differently, any $r$-local complementation over a graph of order at most $2^{r+2}-1$ can be implemented by $(r-1)$-local complementations:

\begin{corollary} \label{cor:LCr_LCr-1}
    If two graphs of order at most $2^{r+2}-1$ are LC$_r$-equivalent, then they are LC$_{r-1}$-equivalent.
\end{corollary}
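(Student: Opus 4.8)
The plan is to derive \cref{cor:LCr_LCr-1} almost immediately from \cref{prop:boundlevel} and the graphical characterisation of LC$_r$-equivalence (\cref{thm:LCr_lc}); throughout I assume $r\gs 2$, so that $r-1$ is a legitimate level. First I would apply \cref{thm:LCr_lc}: if $G_1$ and $G_2$ are LC$_r$-equivalent graphs of order $n$, then they are related by a sequence of $r$-local complementations, say $G_1=H_0, H_1,\ldots, H_k=G_2$ with $H_{i+1}=H_i\star^r S_i$ for each $i$, where every $S_i$ is an $r$-incident independent multiset in $H_i$. The key observation is that every $H_i$ shares the vertex set $V$ of $G_1$ and $G_2$, hence has order exactly $n$.

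Next I would downgrade each step of the sequence individually. Fix $i$. Since $H_i\star^r S_i$ is valid and, by hypothesis, $n\ls 2^{r+2}-1<2^{r+2}$, \cref{prop:boundlevel} read contrapositively (i.e.\ any $r$-local complementation over a graph of order at most $2^{r+2}-1$ is realisable by $(r-1)$-local complementations) provides an $(r-1)$-incident independent multiset $S'_i$ in $H_i$ such that $H_i\star^{r-1}S'_i=H_i\star^r S_i=H_{i+1}$. Replacing each $S_i$ by $S'_i$ exhibits $G_1=H_0, H_1,\ldots, H_k=G_2$ as a sequence of $(r-1)$-local complementations, so $G_1$ and $G_2$ are LC$_{r-1}$-equivalent — equivalently, by \cref{thm:LCr_lc} applied at level $r-1$, the corresponding graph states are LC$_{r-1}$-equivalent.

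There is essentially no obstacle here: all the substantive work is packaged into \cref{prop:boundlevel}, which itself is obtained by combining the two support-size estimates \cref{lemma:exp_support} and \cref{lemma:sizeZ}. The only point requiring a moment of care is precisely the observation highlighted above, namely that the order of the graph is invariant along the whole sequence of generalised local complementations, so that the bound $n<2^{r+2}$ can be applied uniformly to every intermediate graph $H_i$ rather than only to $G_1$ and $G_2$.
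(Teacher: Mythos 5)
Your proof is correct and follows exactly the route the paper intends: the corollary is stated as an immediate rephrasing of \cref{prop:boundlevel}, read contrapositively and applied to each $r$-local complementation in the sequence furnished by \cref{thm:LCr_lc}. Your explicit remark that the vertex set (hence the order) is preserved along the whole sequence, so the bound $n<2^{r+2}$ applies to every intermediate graph, is the only point of care and you handle it correctly.
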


In other words, two LC$_r$-equivalent but not LC$_{r-1}$-equivalent graphs are of order at least $2^{r+2}$. This implies the following strengthening of \cref{thm:LU_imply_LCr}.

\begin{corollary} \label{cor:LU_LCr}
    If two graphs of order at most $2^{r+3}-1$ are LU-equivalent, they are LC$_r$-equivalent. 
\end{corollary}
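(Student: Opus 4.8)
The plan is to obtain \cref{cor:LU_LCr} as a direct consequence of \cref{thm:LU_imply_LCr} and \cref{cor:LCr_LCr-1}, via a short descent on the level of the generalized local complementation. There should be no genuinely hard step here; the argument is essentially exponent bookkeeping, with all the substantive work already packaged into the two results being combined.

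First I would invoke \cref{thm:LU_imply_LCr}: since $G_1 =_{LU} G_2$, there is an integer $s \ge 1$ with $G_1 =_{LC_s} G_2$. If $s \le r$ we are immediately done, because any sequence of $s$-local complementations can be promoted to a sequence of $r$-local complementations by repeated use of the identity $G \star^r S = G \star^{r+1}(2S)$ from \cref{prop:monotonicity} (an $s$-local complementation over $S$ equals an $r$-local complementation over $2^{r-s}S$); hence LC$_s$-equivalence implies LC$_r$-equivalence whenever $s \le r$.

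The remaining case is $s > r$, and here I would argue by descending one level at a time. The key observation is that for any integer $j$ with $r+1 \le j \le s$ one has $2^{j+2}-1 \ge 2^{r+3}-1 \ge n$, so $G_1$ and $G_2$ have order at most $2^{j+2}-1$; hence, if $G_1 =_{LC_j} G_2$, then \cref{cor:LCr_LCr-1} applied at level $j$ gives $G_1 =_{LC_{j-1}} G_2$. Iterating this for $j = s, s-1, \dots, r+1$ yields the chain
\[ \text{LC}_s\text{-equiv.} \;\Rightarrow\; \text{LC}_{s-1}\text{-equiv.} \;\Rightarrow\; \cdots \;\Rightarrow\; \text{LC}_{r}\text{-equiv.}, \]
so that $G_1 =_{LC_r} G_2$, as desired. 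Note that the smallest level at which \cref{cor:LCr_LCr-1} is invoked is $j = r+1 \ge 2$, so the corollary is indeed applicable at every step.

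The only point that requires a little care — and the closest thing to an obstacle — is checking that the hypothesis $n \le 2^{r+3}-1$ is precisely strong enough to keep \cref{cor:LCr_LCr-1} applicable all the way down to level $r+1$: at level $j$ one needs $n \le 2^{j+2}-1$, and this holds for every $j \ge r+1$ exactly because of the extra factor of two built into the bound $2^{r+3}-1$ (a bound of only $2^{r+2}-1$ would already fail the first descent step when $s$ is large). Everything else is routine, the real content sitting in \cref{thm:LU_imply_LCr} and in the chain \cref{lemma:exp_support}, \cref{lemma:sizeZ}, \cref{prop:boundlevel} behind \cref{cor:LCr_LCr-1}.
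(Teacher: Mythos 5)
Your proof is correct and follows essentially the same route as the paper: invoke \cref{thm:LU_imply_LCr} to get LC$_s$-equivalence for some $s \le n$, then descend level by level via \cref{cor:LCr_LCr-1}, with the binding constraint being the final step at $j = r+1$. One parenthetical aside is off, though it does not affect the argument: a bound of $2^{r+2}-1$ would not "fail the first descent step when $s$ is large" (the step at the largest $j$ only needs $n \le 2^{s+2}-1$); it is the last step, at $j = r+1$, that forces the hypothesis $n \le 2^{r+3}-1$, exactly as your own inequality chain shows.
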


\begin{proof}
    Suppose that $G_1$ and $G_2$ of order $n \ls 2^{r+3}-1$ are LU-equivalent. According to \cref{thm:LU_imply_LCr}, $G_1$ and $G_2$ are LC$_n$-equivalent. If $n \ls r$ then $G_1$ and $G_2$ are trivially LC$_r$-equivalent. Otherwise, according to \cref{cor:LCr_LCr-1}, $G_1$ and $G_2$ are LC$_r$-equivalent by direct induction. 
\end{proof}

\cref{cor:LU_LCr} provides a logarithmic bound on the level of generalized local complementations to consider for LU-equivalence, hence the following strengthening of \cref{thm:LU_imply_LCr}.

\begin{theorem} \label{thm:LU_imply_LCr_log}
    If two graph states on $n>7$ qubits are LU-equivalent, then they are LC$_r$-equivalent for some $r \ls \lceil \log_2(\frac{n+1}8)\rceil$. (Two LU-equivalent graph states on $n\ls 7$ qubits are LC-equivalent.)
\end{theorem}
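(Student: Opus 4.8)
The plan is to obtain the statement from \cref{cor:LU_LCr} simply by choosing the optimal level. Recall that \cref{cor:LU_LCr} asserts that any two LU-equivalent graph states on at most $2^{r+3}-1$ qubits are LC$_r$-equivalent. So, given two LU-equivalent graph states on $n$ qubits, I would take the smallest positive integer $r$ with $n\ls 2^{r+3}-1$ and invoke \cref{cor:LU_LCr}. The inequality $n\ls 2^{r+3}-1$ rewrites as $n+1\ls 2^{r+3}$, i.e.~$r\gs \log_2\!\big(\tfrac{n+1}{8}\big)$; hence, when $n>7$ (so that $\tfrac{n+1}{8}>1$ and this bound is a genuine positive integer), the optimal choice is $r=\big\lceil \log_2\!\big(\tfrac{n+1}{8}\big)\big\rceil$, which is the announced bound. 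When $n\ls 7$ we have $n\ls 15=2^{1+3}-1$, so \cref{cor:LU_LCr} applied with $r=1$ shows the two graph states are LC$_1$-equivalent, i.e.~LC-equivalent, which is the parenthetical claim.

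For completeness one unfolds \cref{cor:LU_LCr} itself: by \cref{thm:LU_imply_LCr} the two graphs are LC$_n$-equivalent, and one then iterates \cref{cor:LCr_LCr-1}, which lowers the level from $k$ to $k-1$ as long as $n\ls 2^{k+2}-1$. Running this reduction down to level $r$, the final step (from LC$_{r+1}$ to LC$_r$) uses exactly the hypothesis $n\ls 2^{r+3}-1$, while every earlier step, having the larger threshold $2^{k+2}-1>2^{r+3}-1\gs n$ for $k>r+1$, is valid a fortiori; and in the degenerate case $n\ls r$ one uses instead that LC$_n$-equivalence trivially implies LC$_r$-equivalence. (In our situation this last case never occurs, since $r=\lceil \log_2(n+1)\rceil-3\ls n-3<n$.)

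The step I expect to be delicate is not present in this statement at all: the theorem is pure bookkeeping. All the substance lies upstream — in \cref{thm:LU_imply_LCr}, and above all in \cref{prop:boundlevel}, which (via \cref{lemma:exp_support} and \cref{lemma:sizeZ}) forces any graph carrying a genuine $r$-incident independent multiset to have order at least $2^{r+2}$. The present theorem is merely the translation of that exponential-in-$r$ growth into an explicit logarithmic bound on the level, packaged together with the small-$n$ case; the only thing to be careful about is the arithmetic ($\lceil\log_2((n+1)/8)\rceil=\lceil\log_2(n+1)\rceil-3$ since $8=2^3$) and the threshold $n>7$ below which the formula would return a non-positive level.
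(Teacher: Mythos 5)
Your proof is correct and matches the paper's intended argument: the paper gives no explicit proof of this theorem beyond the remark that it follows from \cref{cor:LU_LCr}, and your derivation (choosing the least $r$ with $n\ls 2^{r+3}-1$, checking the arithmetic $r=\lceil\log_2((n+1)/8)\rceil$ for $n>7$, and handling $n\ls 7$ via $r=1$) is exactly the bookkeeping the paper leaves implicit. Your unfolding of \cref{cor:LU_LCr} via \cref{thm:LU_imply_LCr} and iterated application of \cref{cor:LCr_LCr-1} also agrees with the paper's proof of that corollary.
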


This bound leads to a quasi-polynomial time algorithm for LU-equivalence, as described in the next section. In \cref{subsec:LULC_19}, we elaborate on the consequences of \cref{thm:LU_imply_LCr_log} (or equivalently, \cref{cor:LU_LCr}) on the minimal order of graphs that are LU- but not LC-equivalent.

\section{An algorithm to recognize LU-equivalent graph states}

According to \cref{thm:algolcr}, we have an algorithm that recognizes two LC$_r$-equivalent graphs of order $n$ in runtime $O(r n^{r+2.38} + n^{6.38})$. According to \cref{thm:LU_imply_LCr_log}, $G_1$ and $G_2$ are LU-equivalent if and only if they are LC$_r$-equivalent, where $r \ls\log_2(n+1)-3$. Thus, our algorithm that decides LC$_r$-equivalence translates directly to an algorithm that decides LU-equivalence.

\begin{theorem}
    There exists an algorithm that decides if two graph states are LU-equivalent with runtime $O(n^{\log_2(n)})$, where $n$ is number of qubits.
\end{theorem}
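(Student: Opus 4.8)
The plan is to assemble the quasi-polynomial algorithm by directly combining the algorithm of \cref{thm:algolcr} with the logarithmic bound of \cref{thm:LU_imply_LCr_log}. The only content to prove here is that the composition is correct and that the runtime is $n^{\log_2(n)+O(1)}$; both are essentially bookkeeping once the two cited theorems are in hand.

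\begin{proof}
    Let $r\defeq \lceil \log_2(\tfrac{n+1}{8})\rceil$ if $n>7$, and $r\defeq 1$ if $n\ls 7$. Run the algorithm of \cref{thm:algolcr} on $G_1$, $G_2$ with this value of $r$, and output its answer.

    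\emph{Correctness.} If $G_1$ and $G_2$ are LC$_r$-equivalent then they are LC$_{r'}$-equivalent for every $r'\gs r$, and in particular LU-equivalent (take the finite products of $H$ and $Z(\pi/2^r)$ as local unitaries, cf.\ \cref{thm:LCr_lc}, item $1$). Conversely, if $\ket{G_1}$ and $\ket{G_2}$ are LU-equivalent, then by \cref{thm:LU_imply_LCr_log} they are LC$_{r^\ast}$-equivalent for some $r^\ast\ls r$ (for $n\ls 7$ this is LC-equivalence, i.e.\ $r^\ast=1=r$; for $n>7$, $r^\ast\ls\lceil\log_2(\tfrac{n+1}8)\rceil=r$), and since LC$_{r^\ast}$-equivalence implies LC$_r$-equivalence, they are LC$_r$-equivalent. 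Hence the algorithm of \cref{thm:algolcr} accepts if and only if $\ket{G_1}$ and $\ket{G_2}$ are LU-equivalent.

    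\emph{Complexity.} By \cref{thm:algolcr}, deciding LC$_r$-equivalence runs in time $O\!\left(r\, n^{r+2.38} + n^{6.38}\right)$. With $r=\Theta(\log n)$ we have $r+2.38 = \log_2(n)+O(1)$, so $n^{r+2.38} = n^{\log_2(n)+O(1)}$; the factor $r=O(\log n)$ is absorbed into the $n^{O(1)}$ slack, and the additive $n^{6.38}$ term is dominated for $n$ large (and, for small $n$, is itself $n^{O(1)}$, hence $n^{\log_2 n + O(1)}$). Therefore the overall runtime is $n^{\log_2(n)+O(1)}$, as claimed.
\end{proof}

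The main (and only real) obstacle is purely expository: making sure the two invocations line up, i.e.\ that the $r$ handed to the LC$_r$-algorithm is exactly the one for which \cref{thm:LU_imply_LCr_log} guarantees equivalence, and that the small-$n$ case ($n\ls 7$, where the bound degenerates to plain LC-equivalence) is not left dangling. There is no new mathematical difficulty beyond what is already established in \cref{thm:algolcr} and \cref{thm:LU_imply_LCr_log}; all the hard work — the graphical characterisation of LU-equivalence via $r$-local complementation, the reduction to constrained LC-equivalence solved by the extended Bouchet algorithm, and the $n\gs 2^{r+2}$ support bound — has been done in the preceding sections.
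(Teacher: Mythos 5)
Your proposal is correct and follows exactly the same route as the paper: run the LC$_r$-equivalence algorithm of \cref{thm:algolcr} with the logarithmic level $r$ guaranteed by \cref{thm:LU_imply_LCr_log}, using monotonicity of LC$_r$-equivalence in $r$ to close the correctness argument. Your write-up is in fact slightly more careful than the paper's, which compresses the same reasoning into two sentences and leaves the $n\ls 7$ case and the choice of $r$ implicit.
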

\chapter{Sufficient conditions for LU=LC}

\label{chap:conditionsLULC}

We say that LU=LC holds for a graph $G$ if $G =_{LU} G' \Leftrightarrow G =_{LC} G'$ for any graph $G'$. LU=LC implies that:
\begin{itemize}
    \item \!the LU-orbit of a graph state can be explored with local complementations;
    \item \!LU-equivalence can be decided efficiently \!(using Bouchet's algorithm for LC-equivalence).
\end{itemize}
The 27-vertex graphs that are LU-equivalent but not LC-equivalent are an example of graphs for which LU=LC does not hold. However, many useful classes of graph states satisfy LU=LC. We review in \cref{subsec:LULC_somegraphs} and \cref{subsec:LULC_msc} known classes of graphs for which LU=LC holds. In \cref{subsec:LULC_glc} we use techniques adapted from \cref{chap:glc} to show that LU=LC holds for yet another useful class of graph states, some instances of repeater graph states, which was conjectured in \cite{tzitrin2018local}. Additionally, in \cref{subsec:LULC_19}, using results from \cref{chap:algo}, we prove that LU=LC holds for graph states containing up to 19 qubits.

\section{Simple graph families that satisfy LU=LC} \label{subsec:LULC_somegraphs}

Below we list some simple classes of graphs that satisfy LU=LC. LU=LC holds for:
\begin{itemize}
    \item complete graphs \cite{VandenNest05};
    \item bipartite complete graphs \cite{tzitrin2018local};
    \item large enough rectangular grids \cite{Sarvepalli_2010} (follows from \cref{prop:msc} in next section);
    \item graphs with no cycle of length 3 or 4 \cite{Zeng07}.
\end{itemize}

A more detailed review of classes of graphs satisfying LU=LC, containing additional, more intricate conditions, can be found in \cite{tzitrin2018local}.

\section{A condition based on minimal local sets} \label{subsec:LULC_msc}

In \cite{VandenNest05}, it is shown that LU=LC holds for graphs satisfying the so-called minimal support condition, which we formulate here in terms of types given by a MLS cover.

\begin{proposition}[Minimal Support Condition \cite{VandenNest05}] \label{prop:msc}
    LU=LC holds for $G$ if every vertex is of type $\bot$ with respect to some MLS cover.
\end{proposition}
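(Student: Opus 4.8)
The statement is the Minimal Support Condition of Van den Nest, Dehaene, and De Moor, here recast in the language of minimal local sets and the types induced by an MLS cover. The plan is to show that if $G$ is LU-equivalent to some graph $G'$ and every vertex of $G$ is of type $\bot$ with respect to some MLS cover $\mathcal M$, then $G$ and $G'$ are already LC-equivalent. The natural route is to invoke \cref{lemma:LUconstraints}: since $G =_{LU} G'$, there is a local unitary $U = e^{i\phi}\bigotimes_{u\in V}U_u$ with $\ket{G'} = U\ket G$, and for each vertex $u$ that is of type $\bot$ in \emph{both} $G$ and $G'$, the operator $U_u$ can be chosen to be Clifford. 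So the heart of the argument is to arrange that every vertex is of type $\bot$ in $G'$ as well, not just in $G$.

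First I would recall that the set of type-$\bot$ vertices is LU-invariant (\cref{lemma:LUbottom}): two LU-equivalent graphs have the same vertices of type $\bot$ with respect to any fixed MLS cover. The subtlety is that an MLS cover of $G$ need not be an MLS cover of $G'$; however, minimal local sets and their dimensions depend only on the cut-rank function (\cref{prop:characMLS}, \cref{prop:MLS2cases}), and LU-equivalent graphs share the same cut-rank function (\cref{prop:cutrank_lu}, \cref{prop:corres_mls_cutrankalt}). Hence $\mathcal M$ is automatically an MLS cover of $G'$ too, and by \cref{lemma:LUbottom} the type-$\bot$ vertices of $G'$ with respect to $\mathcal M$ coincide with those of $G$ — namely, all of $V$. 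Therefore every vertex $u$ is of type $\bot$ in both $G$ and $G'$, and \cref{lemma:LUconstraints} gives a local unitary $U$ mapping $\ket G$ to $\ket{G'}$ that is a tensor product of single-qubit Clifford operators (up to global phase). By definition this means $G =_{LC} G'$. The converse, that LC-equivalence implies LU-equivalence, is immediate. This proves LU=LC for $G$.

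\textbf{Main obstacle.} The one genuinely delicate point is verifying that "type $\bot$ with respect to $\mathcal M$" transfers from $G$ to $G'$ when $\mathcal M$ is fixed but the graph changes — i.e. that $\mathcal M$ remains a \emph{valid} MLS cover of $G'$ (each $L\in\mathcal M$ is still a minimal local set of $G'$, and still covers $V$) and that the type computation yields the same answer. This is exactly what is packaged in \cref{cor:localset_invariant} and \cref{lemma:LUbottom}: local sets together with their number of generators (equivalently, their dimension) are LU-invariant, so a minimal local set of $G$ is a minimal local set of $G'$ of the same dimension, and the Pauli attached to a vertex in a dimension-1 minimal local set is permuted by $U_u$ but stays a single Pauli iff it was a single Pauli before — so being of "mixed" type (type $\bot$) is preserved. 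With these invariance results already in hand, the remaining argument is a short deduction and no new calculation is required.
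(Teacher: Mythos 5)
Your proof is correct. The paper itself states this proposition without proof (citing Van den Nest et al.), but your derivation is exactly the deduction its machinery is built for: \cref{cor:localset_invariant} makes $\mathcal M$ an MLS cover of $G'$, \cref{lemma:LUbottom} transfers type $\bot$ to $G'$, and the type-$\bot$ clause of \cref{lemma:LUconstraints} then yields a local Clifford relating the two graph states.
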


The original formulation of the minimal support condition is the following: LU=LC holds for a stabilizer state $\ket \psi$ if the 3 Paulis X, Y and Z occur on every qubit in the group generated by the stabilizers of $\ket \psi$ with minimal support.

\section{A condition based on generalized local complementation} \label{subsec:LULC_glc}

%Our results
The results of \cref{chap:glc} imply a new criterion for LU=LC based on the standard form. 
This criterion is actually a sufficient and necessary condition, meaning it can be used to prove both that LU=LC holds for some graph, or the converse. This criterion makes use of the maximal MLS cover $\mathcal M_{max}$ that contains every minimal local set of the graph. Formally, given a graph $G$, $\mathcal M_{max} = \{L \se V ~|~ L \text{ is a minimal local set of $G$}\}$. 
We begin by proving the criterion for graphs in standard form.

\begin{lemma} \label{lemma:lu-lc}
    LU=LC holds for graphs in standard form (with respect to the maximal MLS cover $\mathcal M_{max}$) if and only if any $r$-local complementation over the vertices of type X can be implemented by local complementations.
\end{lemma}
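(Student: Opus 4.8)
The statement is an equivalence, so I would prove the two directions separately; both are fairly direct once we unpack the definitions, and the real content is packaging the machinery of \cref{chap:glc}.

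For the ``only if'' direction: suppose LU=LC holds for every graph in standard form with respect to $\mathcal M_{max}$, and let $G$ be such a graph and $S$ an $r$-incident independent multiset supported on vertices of type X. By \cref{prop:implementation_rlc}, $\ket{G\star^r S} = \bigotimes_{u\in V}X(S(u)\pi/2^r)\bigotimes_{v\in V}Z(-\frac{\pi}{2^r}\sum_{u\in N_G(v)}S(u))\ket G$, so $G\star^r S =_{LU} G$. By hypothesis $G\star^r S =_{LC} G$, i.e.\ the $r$-local complementation is implemented by a sequence of local complementations. (Here one needs to observe that $G\star^r S$ is again a graph, which it is by construction, and that $\mathcal M_{max}$ is the same object for $G$ and $G\star^r S$ since minimal local sets are LC-invariant by \cref{cor:localset_invariant}; this is compatible with ``standard form'' being defined relative to $\mathcal M_{max}$.)

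For the ``if'' direction: suppose that for every graph in standard form (with respect to $\mathcal M_{max}$), every $r$-local complementation over type-X vertices is implementable by local complementations. Let $G$ be in standard form and $G =_{LU} G'$ for some graph $G'$; we want $G =_{LC} G'$. By \cref{thm:LU_imply_LCr}, $G$ and $G'$ are LC$_r$-equivalent for some $r$. Now apply \cref{lemma:standardform}: running its algorithm on the pair $(G,G')$ produces an MLS cover and graphs $G_1' =_{LC} G$ and $G_2' =_{LC} G'$ both in standard form with respect to that cover. Since $G$ is already in standard form with respect to $\mathcal M_{max}$, and the algorithm of \cref{lemma:standardform} (together with the remark following \cref{def:type}'s standard-form definition, and the \cref{lemma:same_types} structure) preserves the relevant standard-form data, I would argue we may take the common MLS cover to be $\mathcal M_{max}$ and $G_1' = G$ up to relabelling; the key point is just that $G_1', G_2'$ are in standard form w.r.t.\ $\mathcal M_{max}$ and remain LC$_r$-equivalent. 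By \cref{lemma:standardform_LCr_lc}, $G_1'$ and $G_2'$ are related by a sequence of local complementations on vertices of type $\bot$ together with a single $r$-local complementation over the type-X vertices. By the hypothesis, that single $r$-local complementation is itself implementable by local complementations, so $G_1' =_{LC} G_2'$, whence $G =_{LC} G_1' =_{LC} G_2' =_{LC} G'$.

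\textbf{Main obstacle.} The delicate bookkeeping is the compatibility of MLS covers: \cref{lemma:standardform} produces \emph{some} common MLS cover, not necessarily $\mathcal M_{max}$, whereas the statement of \cref{lemma:lu-lc} is phrased specifically in terms of $\mathcal M_{max}$. I expect the resolution to be that, because $\mathcal M_{max}$ contains every minimal local set, a graph in standard form with respect to $\mathcal M_{max}$ is in standard form with respect to $\mathcal M_{max}$ for $G'$ too (minimal local sets being LC-invariant, so $\mathcal M_{max}$ is literally the same set for all graphs in the LC-orbit), and one can rerun the analysis of \cref{lemma:standardform}/\cref{lemma:standardform_LCr_lc} directly with $\mathcal M = \mathcal M_{max}$ since those lemmas only used that $\mathcal M$ is \emph{an} MLS cover. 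So the fix is to note that every step of \cref{lemma:standardform_LCr_lc} goes through verbatim with $\mathcal M_{max}$ in place of the generic cover. Once that is granted, the proof is a short chain of already-established implications.
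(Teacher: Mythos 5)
Your proposal is correct and follows essentially the same route as the paper: the ``only if'' direction is the one-line observation that $G\star^r S =_{LU} G$ via \cref{prop:implementation_rlc}, and the ``if'' direction chains \cref{thm:LU_imply_LCr}, \cref{lemma:standardform}, and \cref{lemma:standardform_LCr_lc} exactly as you describe, with the paper likewise relying on the LC-invariance of $\mathcal M_{max}$ (via \cref{lemma:same_types}) to keep everything phrased relative to the maximal cover. Your extra care about which MLS cover is in play is a legitimate point that the paper treats only implicitly, and your resolution — that the standard-form lemmas hold for any MLS cover, in particular $\mathcal M_{max}$ — is the intended one.
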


\begin{proof}
    Let $G_1$ be a graph in standard form and $G_2$ be an arbitrary graph such that $G_1 =_{LU} G_2$. By \cref{thm:LU_imply_LCr}, $G_1 =_{LC_r} G_2$ for some integer $r$. By \cref{lemma:standardform}, there exists $G'_2 =_{LC} G_2$ in standard form. By \cref{lemma:standardform_LCr_lc} there exists $G''_2 =_{LC} G'_2$ in standard form\footnote{$G''_2$ is in standard form because it is obtained from $G'_2$ by local complementations on vertices of type $\bot$.} such that $G_1$ and $G''_2$ are related by a single $r$-local complementation over the vertices of type X (the vertices of type X with respect to $\mathcal M_{max}$ are the same in $G_1$ and $G''_2$ thanks to \cref{lemma:same_types}). If the $r$-local complementation can be implemented using local complementations, then $G_1$ and $G''_2$ are LC-equivalent, so are $G_1$ and $G_2$.

    On the contrary, let $G_2$ be the result of an $r$-local complementation over a graph $G_1$. In particular, $G_1=_{LU} G_2$. Hence, if LU=LC holds for $G_1$, then $G_1$ and $G_2$ are LC-equivalent, i.e. the $r$-local complementation can be implemented by local complementations. 
\end{proof}

Below we generalize the criterion to arbitrary graphs (not necessarily in standard form).

\begin{proposition} \label{prop:lu-lc}
    Given a graph $G$, the following are equivalent: \begin{itemize}
        \item LU=LC holds for $G$.
        \item For \textbf{some} graph LC-equivalent to $G$ that is in standard form (with respect to the maximal MLS cover $\mathcal M_{max}$), any $r$-local complementation over the vertices of type X can be implemented by local complementations.
        \item For \textbf{any} graph LC-equivalent to $G$ that is in standard form (with respect to the maximal MLS cover $\mathcal M_{max}$), any $r$-local complementation over the vertices of type X can be implemented by local complementations;
    \end{itemize}
\end{proposition}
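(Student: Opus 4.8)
The plan is to bootstrap from \cref{lemma:lu-lc}, which already establishes the equivalence of statement (1) with the standard-form versions of (2) and (3), by transporting the predicate ``LU=LC holds'' along LC-equivalence. First I would record two preliminary observations. The first is that ``LU=LC holds for $G$'' is an invariant of the LC-orbit of $G$: if $G =_{LC} G'$ then, since LU- and LC-equivalence are both equivalence relations and $G =_{LC} G'$ implies $G =_{LU} G'$, we get for every graph $H$ that $G =_{LU} H \iff G' =_{LU} H$ and $G =_{LC} H \iff G' =_{LC} H$; hence LU=LC holds for $G$ if and only if it holds for $G'$. The second observation is that a graph LC-equivalent to $G$ and in standard form with respect to $\mathcal M_{max}$ always exists, by running the algorithm of \cref{lemma:standardform} on the input pair $(G,G)$; moreover it is meaningful to speak of ``standard form with respect to $\mathcal M_{max}$'' uniformly across the LC-orbit, because minimal local sets are LC-invariant (\cref{cor:localset_invariant}), so $\mathcal M_{max}$, viewed as a family of subsets of $V$, is the same for $G$ and for every graph LC-equivalent to $G$; likewise, by \cref{lemma:same_types}, the set of vertices of type X is shared by all standard-form representatives.

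With these in hand the three implications are short. For $(3) \Rightarrow (2)$: by the second observation the class of standard-form graphs LC-equivalent to $G$ is non-empty, so a statement holding for \emph{any} such graph holds for \emph{some} such graph. For $(2) \Rightarrow (1)$: pick a standard-form graph $G'$ LC-equivalent to $G$ for which every $r$-local complementation over the type-X vertices is implementable by local complementations; \cref{lemma:lu-lc} then gives LU=LC for $G'$, and the first observation transfers this to $G$. For $(1) \Rightarrow (3)$: assume LU=LC holds for $G$ and let $G'$ be an arbitrary standard-form graph LC-equivalent to $G$; by the first observation LU=LC holds for $G'$, so by \cref{lemma:lu-lc} every $r$-local complementation over the type-X vertices of $G'$ is implementable by local complementations.

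The only delicate point — the ``main obstacle'', such as it is — is to verify that ``standard form with respect to $\mathcal M_{max}$'' and the induced partition $V = V_X \cup V_Z \cup V_\bot$ are genuinely orbit-level notions, so that \cref{lemma:lu-lc} can be invoked for any standard-form representative without ambiguity about which MLS cover or which type-X set is intended. This is exactly what the LC-invariance of minimal local sets (\cref{cor:localset_invariant}) together with \cref{lemma:same_types} supply. Everything else is a formal consequence of \cref{lemma:lu-lc}, \cref{lemma:standardform}, and the fact that LC- and LU-equivalence are equivalence relations; no new computation is required.
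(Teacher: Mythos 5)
Your proposal is correct and follows essentially the same route as the paper: the paper's proof likewise combines \cref{lemma:standardform} (existence of a standard-form representative), \cref{lemma:lu-lc}, and the LC-invariance of the LU=LC property. Your additional remark that $\mathcal M_{max}$ and the type partition are orbit-level notions (via \cref{cor:localset_invariant} and \cref{lemma:same_types}) is a worthwhile explicit check that the paper leaves implicit, but it does not change the structure of the argument.
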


\begin{proof}
    Any graph can be put in standard form by means of local complementations by \cref{lemma:standardform}. \cref{lemma:lu-lc} along with the fact that the LU=LC property is invariant by LC-equivalence, proves the proposition.
\end{proof}

Checking that $r$-local complementations can be implemented by local complementations is not easy in general, nonetheless it is convenient in many cases, for example when there are few vertices of type X or that they have low degree. Namely, this criterion is stronger than the minimal support condition, as graphs with only vertices of $\bot$ have no vertex of type X. Furthermore, it has been left as an open question in \cite{tzitrin2018local} whether LU=LC holds for some instances of repeater graph states. We use our new criterion to easily prove that this is the case. For this purpose, we prove that LU=LC holds for a broader class of graph.

\begin{proposition} \label{prop:lu-lc-degree1}
        LU=LC holds for graphs where each vertex is either a leaf i.e.~a vertex of degree 1, or is adjacent to a leaf.        
\end{proposition}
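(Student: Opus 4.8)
The plan is to use the criterion of \cref{prop:lu-lc}: put $G$ in standard form $G'$ with respect to the maximal MLS cover $\mathcal M_{max}$, and show that any valid $r$-local complementation over the set $V_X$ of vertices of type X can be implemented by usual local complementations. The key structural observation is the following. Consider an arbitrary graph $H$ in which every vertex is either a leaf or is adjacent to a leaf. Putting such a graph into standard form only uses local complementations, and local complementation preserves this property \emph{locally} in the following weakened sense that is enough for us: after reaching standard form, I claim that every vertex $u$ of type X is a leaf. Indeed, $\{u\}\cup N_{G'}(u)$ must be a minimal local set of dimension 1 generated by $\{u\}$ (by definition of standard form), and, as I will argue, the hypothesis forces this closed neighborhood to have size $2$. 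The cleanest route is: in a graph where every vertex is a leaf or adjacent to a leaf, a minimal local set generated by a single vertex $u$ is $\{u\}\cup N(u)$, and for this to be minimal one needs no proper nonempty subset to generate a local set inside it; but if $\delta(u)\gs 2$, then $u$ has a neighbor $v$ which itself has a leaf neighbor $w$ (or $v$ is a leaf, i.e.\ $w = u$), and one builds a strictly smaller local set, contradicting minimality. So $u$ is a leaf.

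\textbf{Carrying it out.} First I would recall (\cref{lemma:standardform}) that $G$ can be put in standard form $G'$ with respect to $\mathcal M_{max}$ by local complementations, and that LU=LC is invariant under LC-equivalence, so by \cref{prop:lu-lc} it suffices to check the $r$-local complementation criterion on $G'$. Second, I would verify that the hypothesis ``every vertex is a leaf or adjacent to a leaf'' is preserved by local complementation. This needs a short direct check: a local complementation on a vertex $a$ only changes edges inside $N(a)$; a leaf $w$ with unique neighbor $v\ne a$ stays a leaf; the only delicate case is $a$ itself and its neighbors. One checks that if $a$ had a leaf neighbor then after $\star a$ every neighbor of $a$ is adjacent to $a$ and the leaf neighbor becomes\ldots --- in fact it is cleaner to avoid this and instead argue the claim directly about \emph{any} graph $G'$ in standard form that is LC-equivalent to $G$, using that the \emph{cut-rank function} is LC-invariant: the size of the smallest minimal local set is $\dloc(G)+1$ (\cref{prop:dloc_localset}), and in a graph where every vertex is a leaf or adjacent to a leaf, $\dloc = 1$, hence every minimal local set of dimension $1$ generated by a single vertex $u$ forces $\delta_{G'}(u)=1$ when $u$ is of type X (here I use that type-X vertices satisfy $\{u\}\cup N_{G'}(u)\in\mathcal M_{max}$ with generator $\{u\}$, and minimality combined with $\dloc=1$ pins the size to $2$). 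Third, once every type-X vertex is a leaf, I would apply \cref{prop:implementation_rlc} / the monotonicity in the proof of \cref{thm:LU_imply_LCr}: an $X$-rotation on a leaf $u$ with unique neighbor $v$ satisfies $X(\theta)_u Z(-\theta)_v\ket{H}=\ket{H}$ (this identity is spelled out inside the proof of \cref{thm:LU_imply_LCr}), so the rotation angle $\alpha_u$ can be taken $0$ for every leaf, i.e.\ the only valid $r$-local complementation over $V_X$ is the trivial one. Hence $G'$ is related to any LU-equivalent graph in standard form by local complementations only, giving LU=LC.

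\textbf{Main obstacle.} The delicate point is establishing rigorously that in standard form every vertex of type X is a leaf; the clean way is via $\dloc$ and cut-rank invariance rather than tracking how local complementation acts on the ``leaf or adjacent-to-a-leaf'' property, which has annoying corner cases (e.g.\ pivoting on an edge between two non-leaves, isolated vertices, or a component that is a single edge $K_2$). I would handle $K_2$-components and isolated vertices separately at the start (they trivially satisfy LU=LC), then work with $\dloc(G')=\dloc(G)=1$ on the remaining part. A secondary, minor subtlety is that a type-X vertex could a priori lie in a minimal local set of dimension $1$ of size $2$ whose generator is not $\{u\}$ but the other vertex --- but the standard-form condition explicitly requires $\{u\}\cup N_{G'}(u)\in\mathcal M_{max}$ generated by $\{u\}$, so this does not arise. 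With these caveats dispatched, the argument is short.

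\begin{remark}
    In particular, LU=LC holds for repeater graph states, as each vertex of a repeater graph state is a leaf or adjacent to a leaf, answering an open question of \cite{tzitrin2018local}.
\end{remark}
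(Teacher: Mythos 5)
Your overall strategy is the same as the paper's: invoke the criterion of \cref{prop:lu-lc}, identify the type-X vertices as leaves, and observe that an $r$-local complementation over leaves toggles no edge (equivalently, $X(\theta)$ on a leaf acts as $Z(\theta)$ on its unique neighbour). The final step is fine. The genuine gap is in the middle step, where you claim that in any standard-form graph $G'$ LC-equivalent to $G$, every type-X vertex is a leaf, and you justify this via $\dloc$. The inference does not go through: $\dloc(G')=\dloc(G)=1$ only says that the \emph{smallest} minimal local set has size $2$ (\cref{prop:dloc_localset}); it says nothing about the size of an arbitrary minimal local set generated by a single vertex. For instance $P_7$ has $\dloc=1$ yet contains the minimal local set $\{2,3,4\}$ generated by the single vertex $3$. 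So "minimality combined with $\dloc=1$ pins the size to $2$" is false as a general implication, and you would need to use the hypothesis on $G$ in an essential way to rule out large type-X neighbourhoods in $G'$.

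The fix is exactly what the paper does, and it makes the detour through an abstract standard-form graph $G'$ unnecessary: in $G$ itself, no minimal local set can be generated by a set containing a parent $p$ (such a local set would strictly contain the minimal local set $\{w,p\}$ for a leaf $w$ of $p$), so every minimal local set is generated by a set of leaves; consequently leaves are of type X, parents are of type Z, and $G$ is \emph{already} in standard form with respect to $\mathcal M_{max}$ for a suitable ordering — no transformation and no LC-invariance argument about $G'$ is needed. (If you insist on your route, the correct invariant to transport is not $\dloc$ but the statement "every minimal local set of $G$ has size $2$", which does hold here — if a generator $D$ of leaves contains a leaf $u$ whose parent $p$ lies in the local set, minimality forces the set to be $\{u,p\}$, and if $p$ does not lie in it, $D$ contains a sibling $u'$ of $u$ and minimality forces the set to be $\{u,u'\}$ — and which is LC-invariant by \cref{cor:localset_invariant}; but this must be proved, and $\dloc=1$ does not prove it.) You should also dispatch the connected components of order $\le 2$ separately, as you note and as the paper does.
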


\begin{proof}

    Let $G$ be a graph where each vertex is either a leaf or adjacent to a leaf. We suppose without loss of generality that $G$ is connected, as LU=LC holds for a graph if and only if LU=LC holds for its connected components. Moreover, we suppose without loss of generality that the graph $G$ is of order at least 3, else the result is trivial.

    Let us partition the vertex set into $L$ "the leafs" and $P$ "the parents": $|L| \gs |P|$, any vertex in $L$ is adjacent to exactly one vertex in $P$, and any vertex in $P$ is adjacent to at least one vertex in $L$. Let us prove that the vertices in $L$ are of type X while the vertices in $P$ are of type Z. For any vertex $u \in L$, $\{u\} \cap N_G(u)$ is a minimal local set of dimension 1 generated by $\{u\}$ (as the unique neighbor of $u$ is not of degree 1). Moreover, a minimal local set cannot be generated by a set containing a vertex in $P$. Indeed, if a local set is generated by a set containing a vertex $v \in P$ related to a vertex $u\in L$, the local set is not minimal as it strictly contains the minimal local set $\{u,v\}$. Thus, every minimal local set is generated by a set in $L$. 

    Fix any ordering of the vertices where for any leaf $u$ related to a vertex $v$, $u < v$. Then, $G$ is in standard form (with respect to the maximal MLS cover $\mathcal M_{max}$). The vertices of type X are exactly the leafs. Thus, any $r$-local complementation over the vertices of type X has no effect on $G$. According to \cref{prop:lu-lc}, this implies that LU=LC holds for $G$.
\end{proof}

Such graphs include some instances of \emph{repeater graph states} \cite{azuma2023quantum}. A \emph{complete-graph-based repeater graph state} is a graph state whose corresponding graph of order $2n$ is composed of a complete graph of order $n$, along with $n$ leafs appended to each vertex.  A \emph{biclique-graph-based repeater graph state} is a graph state whose corresponding graph of order $4n$ is composed of a symmetric biclique (i.e. a symmetric bipartite complete graph) graph of order $2n$, along with $2n$ leafs appended to each vertex. Complete-graph-based repeater graph states and biclique-graph-based repeater graph states are illustrated in Figure \ref{fig:repeater}. %\Ncom{TODO}

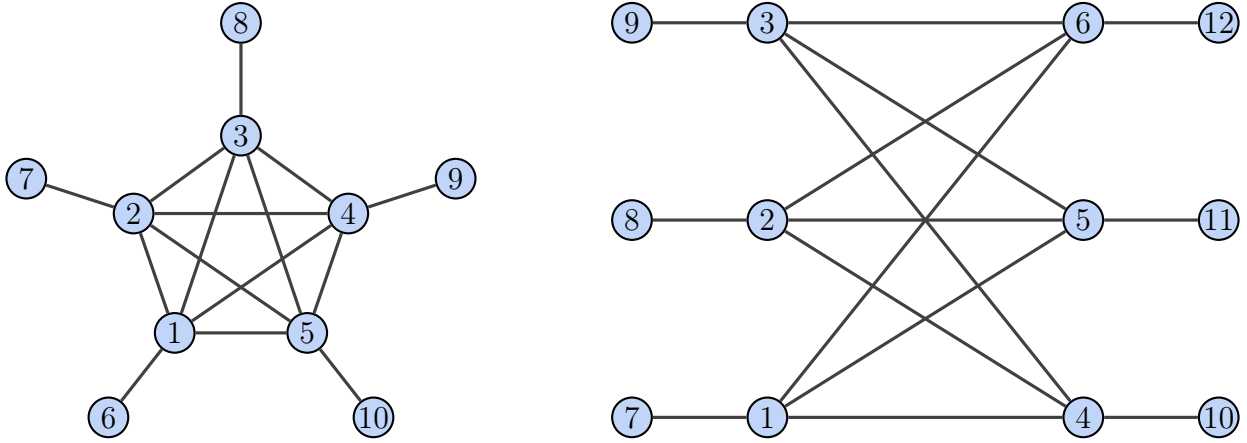
\begin{figure}[H]
    \centering
    
    \scalebox{1}{
    \begin{tikzpicture}[scale = 1]    
        \begin{scope}[every node/.style=vertices]
            \node (U1) at (-0.881,-1.124) {1};
            \node (U2) at (-1.427,0.464) {2};
            \node (U3) at (0,1.5) {3};
            \node (U4) at (1.427,0.464) {4};
            \node (U5) at (0.881,-1.124) {5}; 
            \node (U6) at (-0.881*2,-1.124*2) {6};           
            \node (U7) at (-1.427*2,0.464*2) {7}; 
            \node (U8) at (0,1.5*2) {8};  
            \node (U9) at (1.427*2,0.464*2) {9};
            \node (U10) at (0.881*2,-1.124*2) {10};            
        \end{scope}
        \begin{scope}[every node/.style={},
                        every edge/.style=edges]              
            \path [-] (U1) edge node {} (U2);
            \path [-] (U1) edge node {} (U3);
            \path [-] (U1) edge node {} (U4);
            \path [-] (U1) edge node {} (U5);
            \path [-] (U2) edge node {} (U3);
            \path [-] (U2) edge node {} (U4);
            \path [-] (U2) edge node {} (U5);
            \path [-] (U3) edge node {} (U4);
            \path [-] (U3) edge node {} (U5);
            \path [-] (U4) edge node {} (U5);

            \path [-] (U1) edge node {} (U6);
            \path [-] (U2) edge node {} (U7);
            \path [-] (U3) edge node {} (U8);
            \path [-] (U4) edge node {} (U9);
            \path [-] (U5) edge node {} (U10);

        \end{scope}
    \end{tikzpicture}\qquad\qquad\raisebox{0cm}{
        \begin{tikzpicture}[xscale = 0.6, yscale = 0.75]
        
        \begin{scope}[every node/.style=vertices]
            \node (U1) at (0,-3) {1};
            \node (U2) at (0,0.5) {2};
            \node (U3) at (0,4) {3};
            \node (U4) at (7,-3) {4}; 
            \node (U5) at (7,0.5) {5}; 
            \node (U6) at (7,4) {6}; 

            \node (U7) at (-3,-3) {7};
            \node (U8) at (-3,0.5) {8};
            \node (U9) at (-3,4) {9};
            \node (U10) at (10,-3) {10}; 
            \node (U11) at (10,0.5) {11}; 
            \node (U12) at (10,4) {12}; 
        \end{scope}
        
        \begin{scope}[every edge/.style=edges]              
            \path [-] (U1) edge node {} (U4);
            \path [-] (U1) edge node {} (U5);
            \path [-] (U1) edge node {} (U6);
            \path [-] (U2) edge node {} (U4);
            \path [-] (U2) edge node {} (U5);
            \path [-] (U2) edge node {} (U6);
            \path [-] (U3) edge node {} (U4);
            \path [-] (U3) edge node {} (U5);
            \path [-] (U3) edge node {} (U6);

            \path [-] (U1) edge node {} (U7);
            \path [-] (U2) edge node {} (U8);
            \path [-] (U3) edge node {} (U9);
            \path [-] (U4) edge node {} (U10);
            \path [-] (U5) edge node {} (U11);
            \path [-] (U6) edge node {} (U12);
        \end{scope}
    \end{tikzpicture}}
    }
    
    \caption{(Left) The complete-graph-based repeater graph state of order 10. (Right) The biclique-graph-based repeater graph state of order 12.}
    \label{fig:repeater}
\end{figure}

Complete-graph-based repeater graph states are the all-photonic repeaters introduced in \cite{azuma2015all}. Biclique-graph-based repeater graph states are a variant introduced in \cite{russo2018photonic} that is more efficient in terms of number of edges.
It was left as an open question in \cite{tzitrin2018local} whether LU=LC holds for complete-graph-based or biclique-graph-based repeater graph states (although it was proved for some variants). These graph states satisfy the condition of \cref{prop:lu-lc-degree1}, hence we can answer this question by the positive:

\begin{corollary}
    LU=LC holds for complete-graph-based repeater graph states and biclique-graph-based repeater graph states.
\end{corollary}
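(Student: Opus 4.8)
The statement to prove is the \textbf{Corollary} that LU=LC holds for complete-graph-based repeater graph states and biclique-graph-based repeater graph states. The plan is straightforward: this should follow almost immediately from \cref{prop:lu-lc-degree1}, which establishes that LU=LC holds for any graph in which every vertex is either a leaf (degree 1) or adjacent to a leaf. So the entire task reduces to verifying that the two families of repeater graph states satisfy this structural condition.

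First I would recall the definitions given just above the corollary: a complete-graph-based repeater graph state of order $2n$ consists of a complete graph $K_n$ together with one leaf appended to each of its $n$ vertices; a biclique-graph-based repeater graph state of order $4n$ consists of a symmetric bipartite complete graph on $2n$ vertices together with one leaf appended to each of its $2n$ vertices. In both cases, I would then check the hypothesis of \cref{prop:lu-lc-degree1} by a two-line case analysis on the vertices. For the complete-graph-based case: the $n$ appended vertices each have degree exactly 1, so they are leaves; and each of the $n$ vertices of the underlying $K_n$ is adjacent to its own appended leaf, so it is adjacent to a leaf. Hence every vertex is either a leaf or adjacent to a leaf. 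For the biclique-graph-based case: the $2n$ appended vertices are leaves, and each vertex of the underlying symmetric biclique is adjacent to its appended leaf, so again every vertex satisfies the condition. (One minor caveat to mention: if $n=1$ in the complete-graph case, the underlying graph $K_1$ has a single vertex with its leaf, giving $K_2$, for which LU=LC trivially holds; similarly small degenerate instances of the biclique case are covered either trivially or by the same argument.)

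Having verified the hypothesis, I would simply invoke \cref{prop:lu-lc-degree1} to conclude that LU=LC holds for both families, and remark that this answers in the positive the open question of \cite{tzitrin2018local} concerning complete-graph-based (all-photonic, \cite{azuma2015all}) and biclique-graph-based (\cite{russo2018photonic}) repeater graph states.

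There is essentially no obstacle here: the corollary is a direct specialization of \cref{prop:lu-lc-degree1}, whose proof (via the standard-form criterion of \cref{prop:lu-lc}, noting that all vertices of type X are leaves so any $r$-local complementation over them acts trivially) has already been carried out in the excerpt. The only thing requiring care is making the structural verification explicit and handling the trivially small cases, but that is routine. Concretely, the proof I would write is:

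\begin{proof}
    In a complete-graph-based repeater graph state, each of the $n$ appended vertices has degree $1$, hence is a leaf, and each of the $n$ vertices of the underlying complete graph is adjacent to its appended leaf. Thus every vertex is either a leaf or adjacent to a leaf. Likewise, in a biclique-graph-based repeater graph state, each of the $2n$ appended vertices is a leaf, and each vertex of the underlying symmetric biclique is adjacent to its appended leaf. In both cases, the hypothesis of \cref{prop:lu-lc-degree1} is satisfied, so LU=LC holds.
\end{proof}
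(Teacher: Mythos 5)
Your proposal is correct and matches the paper's own argument: the corollary is obtained by observing that in both families every appended vertex is a leaf and every vertex of the underlying complete graph or biclique is adjacent to its appended leaf, so \cref{prop:lu-lc-degree1} applies directly. Your explicit verification and handling of degenerate small cases is slightly more detailed than the paper's one-line justification, but the route is the same.
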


\section{LU=LC for graph states on up to 19 qubits} \label{subsec:LULC_19}

It is known that there exists a pair of 27-vertex graphs that are not LC-equivalent, but  LU-equivalent, more precisely they are LC$_2$-equivalent \cite{Ji07,Tsimakuridze17}. To this day it is still  an open question whether this is a minimal example (in number of vertices). In other words, does a pair of graphs that are LU-equivalent but not LC-equivalent on 26 vertices or fewer exist? In theory, one could check every pair of graphs of order up to 26, but the rapid combinatorial explosion in the number of graphs as the number of vertices increases, makes it unfeasible in practice.

The best bound known so far\footnote{In \cite{burchardt2024algorithm} it is proved that the number of LC- and LU-orbits of \textbf{unlabeled} graphs of order up to 11 is the same.} is LU=LC holds for graphs of order up to 8 \cite{CABELLO20092219}. \cref{cor:LU_LCr} already implies a substantial improvement over this bound: LU=LC holds for graphs of order up to 15. Furthermore, for graphs of order up to 31, LU=LC$_2$, i.e.~if two graphs of order up to 31 are LU-equivalent, they are LC$_2$-equivalent. Thus, asking whether LU=LC holds for graphs of order up to 26 is equivalent to asking whether LC$_2$=LC holds for graphs of order up to 26. One direction is to study when a 2-local complementation over a multiset $S$ can be implemented using only usual local complementations over vertices in the support of $S$. If this were to be the case for every graph of order up to 26, it would show that the 27-vertex counterexample is minimal in number of vertices. In the following we study the structure of 2-local complementation to prove that LU=LC holds for graphs of order up to 19.

\begin{proposition} \label{prop:LULC19}
    LU=LC holds for graph states on up to 19 qubits.
\end{proposition}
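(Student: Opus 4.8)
The plan is to leverage \cref{cor:LU_LCr} (equivalently \cref{thm:LU_imply_LCr_log}), which already tells us that LU=LC holds for graphs on at most $15$ qubits, and to push the bound from $15$ to $19$ by a closer analysis of $2$-local complementation. Recall from \cref{cor:LU_LCr} that any pair of LU-equivalent graphs on at most $2^{r+3}-1 = 31$ qubits are LC$_2$-equivalent. So for $n \le 31$, and in particular for $n \le 19$, showing LU=LC reduces to showing LC$_2$=LC, i.e. that whenever two graphs $G_1, G_2$ on at most $19$ vertices are related by a sequence of $2$-local complementations, they are in fact related by a sequence of (usual) local complementations.

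First I would reduce, using \cref{lemma:standardform} and \cref{lemma:standardform_LCr_lc}, to the case where $G_1$ and $G_2$ are both in standard form with respect to a common MLS cover $\mathcal M$ and are related by a \emph{single} $2$-local complementation over a multiset $S$ supported on $V_X$, together with usual local complementations on vertices of type $\bot$ (which we may ignore, since they are already local complementations). So the core claim is: \emph{if $G \star^2 S$ is valid with $\supp(S) \subseteq V_X$ and the order $n$ of $G$ is at most $19$, then $G \star^2 S$ can be obtained from $G$ by usual local complementations}. By \cref{cor:LCr_LCr-1}, if $n \le 2^{2+2}-1 = 15$ this is automatic; the content is entirely in the range $16 \le n \le 19$. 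By \cref{prop:boundlevel} and its proof (via \cref{lemma:exp_support} and \cref{lemma:sizeZ}), a genuine $2$-incident independent multiset $S$ forces $|\supp(S)| \ge 2^{2+2}-2-3 = 11$ and $|V \sm \supp(S)| \ge 2+3 = 5$, so $n = |\supp(S)| + |V\sm\supp(S)| \ge 16$ — consistent with, but not excluding, the range we must handle. The key step is therefore a refined structural analysis: when $n \le 19$, we have $|\supp(S)| + |V \sm \supp(S)| \le 19$ with both summands bounded below, so $11 \le |\supp(S)| \le 14$ and $5 \le |V \sm \supp(S)| \le 8$. I would then examine, following the proof of \cref{lemma:exp_support}, the set $K_0 \subseteq V \sm \supp(S)$ of minimal size $m \ge r+2 = 4$ with $S \bullet \Lambda_G^{K_0}$ odd: the proof shows $|\supp(S)| \ge 2^m - m - 1$, so $m = 4$ forces $|\supp(S)| \ge 11$ while $m = 5$ would force $|\supp(S)| \ge 26 > 14$. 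Hence $m = 4$ exactly, and $|\supp(S)| \in \{11,12,13,14\}$, $|V \sm \supp(S)| \in \{5,6,7,8\}$. The strategy is to show that in each of these tightly constrained cases, the $2$-incident multiset $S$ can be decomposed so that $G \star^2 S$ is reachable by usual local complementations — essentially by arguing, as in \cref{lemma:sizeZ}, that when $|V \sm \supp(S)|$ is close to $m = 4$ the hyperedge-weight cancellations force $S$ (up to usual local complementations, i.e. up to twin- and leaf-reductions as in the proof of \cref{thm:LU_imply_LCr}) to have trivial action, and otherwise the extra slack in $V \sm \supp(S)$ beyond $K_0$ cannot accommodate a genuinely level-$2$ action on only $\le 19$ vertices.

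The main obstacle I expect is the borderline case $n = 19$ with, say, $|\supp(S)| = 14$ and $|V\sm\supp(S)| = 5$: here both bounds from \cref{lemma:exp_support} and \cref{lemma:sizeZ} are nearly saturated, so the crude counting arguments of \cref{chap:algo} do not immediately kill it, and one genuinely has to analyze the structure of the graph induced on $\supp(S) \cup K_0$ — in the spirit of the proof of \cref{lemma:sizeZ}, where saturation of $|V\sm\supp(S)| = r+2$ is shown to imply $S$ is reducible. I would adapt that argument: show that with $|V \sm \supp(S)|$ only slightly larger than $m = 4$ (namely $5 \le |V\sm\supp(S)| \le 8$) and $|\supp(S)| \le 14$, the parity constraints defining $2$-incidence are so rigid that every hyperedge that an X-rotation of angle $\pi/4$ on $S$ could create is forced to cancel at the level of usual local complementations, meaning $G\star^2 S = G \star^1 S'$ for some (usual) independent multiset $S'$, or $G\star^2 S = G$ after twin/leaf normalization. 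Combined with the reductions already established, this yields LC$_2$=LC for $n \le 19$, hence LU=LC for $n \le 19$, completing the proof. Finally I would remark that the genuine bottleneck preventing a push all the way to $26$ is precisely the looser regime $20 \le n \le 26$, where $|V\sm\supp(S)|$ can be as large as $15$ and these rigidity arguments break down, which is why the statement is phrased for $19$ rather than $26$.
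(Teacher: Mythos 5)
Your reduction is sound and matches the paper's setup: via \cref{cor:LU_LCr} the problem becomes LC$_2$=LC for $n\le 19$, via \cref{lemma:standardform} and \cref{lemma:standardform_LCr_lc} it becomes a single $2$-local complementation over a multiset $S$ supported on $V_X$, and the bounds $11\le|\supp(S)|\le 14$, $5\le|V\sm\supp(S)|\le 8$ from \cref{lemma:exp_support} and \cref{lemma:sizeZ} are correctly derived. But the heart of the proof --- showing that in this residual parameter range the $2$-local complementation is implementable by usual local complementations --- is asserted rather than proved. Your claim that ``the parity constraints defining $2$-incidence are so rigid that every hyperedge \ldots is forced to cancel'' cannot be established by a generic rigidity argument, because the statement is quantitatively sharp: with $|V\sm\supp(S)|=6$ and $|\supp(S)|=21$ one obtains exactly the $27$-vertex counterexample, so any argument must distinguish $|\supp(S)|\le 20$ from $|\supp(S)|=21$ at $|V\sm\supp(S)|=6$. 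Nothing in your outline does this.

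The paper fills this gap with two genuinely nontrivial ingredients that your proposal does not supply. First, a normalization step (\cref{lemma:lifting}) reduces to bipartite graphs where $S$ has no twins and no vertices of degree $\le 1$, after which the case $|V\sm\supp(S)|\le 6$ is settled by a \emph{computer-assisted exhaustive search} over all $2^{\binom{6}{4}+\binom{6}{5}+\binom{6}{6}}=2^{22}$ admissible configurations (\cref{lemma:2lc6}); this is precisely where the $|\supp(S)|\le 20$ threshold is discovered, and it is not recoverable by hand-counting. Second, the complementary case $|\supp(S)|\le 12$ requires \cref{lemma:lessthan12}, an induction on the size of the strict neighborhood $\xi_G(S)$ with a delicate twin-counting argument showing that a nontrivial action would force $|S|\ge 1+5+\binom{5}{2}=16$ vertices, together with the decomposition of multisets into sets (\cref{prop:decomposition}). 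Your case split ($|\supp(S)|\le 14$ versus $|V\sm\supp(S)|\le 8$) also does not align with what can actually be proved: the paper covers $n\le 19$ because either $|\supp(S)|\le 12$ or $|V\sm\supp(S)|\le 6$ must hold, and each branch is handled by one of the two lemmata above. Without supplying arguments of comparable strength (or the computation), the proof is incomplete.
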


In this section we prove \cref{prop:LULC19}. The idea of the proof is to show that on any graph of order at most 19, any 2-local complementation can be implemented by usual local complementations. Some parts of the proof require exhaustive search and are thus assisted by computer. The code is available at \citepub{codelulc19}.

First, we show that 2-local complementation has a simple structure compared to the general $r$-local complementation. Namely, it is sufficient to consider $2$-local complementations over sets, rather than multisets. 

\begin{proposition} \label{prop:decomposition}
Any 2-local complementation can be decomposed into 1- and 2-local complementations over sets. 
\end{proposition}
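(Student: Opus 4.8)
The plan is to reduce a $2$-local complementation over an arbitrary multiset $S$ to a sequence of $2$-local complementations over sets, by successively peeling off the ``repeated'' part of $S$ using the monotonicity relations already established. Recall from \cref{prop:monotonicity} that whenever $G \star^2 S$ is valid, $G \star^1 S$ is also valid and $G \star^2 (2S) = G \star^1 S$ (applied at level $1$, i.e. $G\star^1 (2T) = G\star^0 T = G$ is vacuous, so the useful instance is that doubling a multiset raises the level). The first step is to write $S = S_{\mathrm{odd}} + 2S'$, where $S_{\mathrm{odd}}$ is the set $\{u \in V : S(u) \text{ is odd}\}$ and $S'$ is the multiset with $S'(u) = \lfloor S(u)/2\rfloor$. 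Since we are working modulo $2^2 = 4$ (the multiplicities of a $2$-local complementation only matter mod $4$, by the last sentence of the section on $r$-local complementation), we may further assume $S(u) \in \{0,1,2,3\}$, so $S'(u) \in \{0,1\}$, i.e. $S'$ is itself a \emph{set}.

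Next I would verify that each summand gives a valid local complementation. The multiset $2S'$ satisfies $G \star^2 (2S') = G \star^1 S'$ by \cref{prop:monotonicity} (read backwards: $S'$ is a $1$-incident independent set because $G\star^2(2S')$ being valid forces $2S'$ to be $2$-incident, hence $S'$ is $1$-incident; more directly one checks $2$-incidence of $2S'$ from that of $S$). For $S_{\mathrm{odd}}$, I would use the composition property (the proposition stating $G\star^r(S_1+S_2) = (G\star^r S_1)\star^r S_2$ when $S_1+S_2$ is independent): since $S = S_{\mathrm{odd}} + 2S'$ with $S$ independent, both $S_{\mathrm{odd}}$ and $2S'$ are independent, and $2$-incidence of $S$ together with $2$-incidence of $2S'$ yields $2$-incidence of $S_{\mathrm{odd}}$ in $G$ (by additivity of $S\bullet\Lambda_G^K$ over the decomposition $S = S_{\mathrm{odd}} + 2S'$). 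Then $G \star^2 S = (G \star^2 (2S')) \star^2 S_{\mathrm{odd}} = (G \star^1 S') \star^2 S_{\mathrm{odd}}$. Here $S'$ is a set and $S_{\mathrm{odd}}$ is a set; the $1$-local complementation over $S'$ is itself a composition of ordinary local complementations, and the $2$-local complementation over the set $S_{\mathrm{odd}}$ is of the desired form. Thus the whole operation is decomposed into $1$- and $2$-local complementations over sets.

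The main subtlety — and the step I expect to require the most care — is making sure the $2$-incidence conditions actually transfer correctly between $G$ and $G \star^1 S'$ for the trailing $2$-local complementation over $S_{\mathrm{odd}}$; one must check $S_{\mathrm{odd}}$ remains $2$-incident and independent \emph{in the new graph} $G\star^1 S'$, not merely in $G$. Since $S'$ and $S_{\mathrm{odd}}$ need not be disjoint as vertex sets this is not immediate: a $1$-local complementation over $S'$ may alter edges among neighbours of $S_{\mathrm{odd}}$, changing the common-neighbourhood counts $S_{\mathrm{odd}}\bullet\Lambda^K$. The way around this is to invoke the composition proposition directly in $G$: it guarantees $G\star^2(2S' + S_{\mathrm{odd}})$ equals $(G\star^2 2S')\star^2 S_{\mathrm{odd}}$ precisely because $2S'+S_{\mathrm{odd}} = S$ is independent and $2$-incident, so the validity in the intermediate graph is a \emph{consequence} of the validity in $G$ rather than something to be checked afresh. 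I would also remark that if $S'$ and $S_{\mathrm{odd}}$ overlap on a vertex $u$ (so $S(u) = 3$), one can alternatively peel it as $S(u) = 3 \equiv -1 \pmod 4$, but staying with the $\{0,1,2,3\}$ representative and the additive decomposition is cleanest. With these pieces in place the proof is short: decompose $S$, apply \cref{prop:monotonicity} to the doubled part and the composition proposition to glue the two halves together.
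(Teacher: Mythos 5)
Your proof is correct and is essentially the paper's own argument: your $S'$ coincides exactly with the paper's set $S_1$ of vertices of multiplicity $2$ or $3$ modulo $4$, and your $S_{\mathrm{odd}}$ is exactly $\supp(S+S_1+S_1)$, so both proofs arrive at the same decomposition into the same two sets. The only cosmetic difference is the bookkeeping — you glue the pieces via the composition proposition together with \cref{prop:monotonicity}, while the paper inserts the identity $\star^1 S_1\star^1 S_1$ and absorbs one copy into the level-$2$ complementation — and your closing observation that validity of $\star^2 S_{\mathrm{odd}}$ in the intermediate graph follows from the composition proposition rather than needing a fresh check is exactly the right resolution of that subtlety.
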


\begin{example}
    In Figure \ref{fig:generalized_lc}, the 2-local complementation over $S = \{1,1,2,3\}$ can be decomposed into a 2-local complementation over the set $\{2,3\}$ and a 1-local complementation over the set $\{1\}$.
\end{example}

\begin{proof}
    Given a 2-local complementation over a multiset $S$, we assume without loss of generality that multiplicities are defined modulo $2^2 = 4$. Let  $S_1$ be the set of vertices that have multiplicity $2$ or $3$ in $S$. Notice that $G\star^2 S = G\star^2 S\star^1 S_1\star^1 S_1$ as $S_1$ is an independent set, $S_1$ is 1-incident and generalized local complementations are self inverse. So $G\star^2 S = G\star^2 S\star^2 (S_1 + S_1)\star^1 S_1 = G\star^2(S+ S_1 + S_1)\star^1 S_1$, where the multiplicity in  
    $S+ S_1 + S_1$ is either $0$ or $1$ modulo $4$. Thus, the $2$-local complementation over the multiset $S$ can be decomposed into 2- and 1-local complementations over sets $\supp(S+ S_1+S_1)$ and $S_1$ respectively. 
\end{proof}

As our proof contains computer-assisted exhaustive search, and generating every graph with a fixed number of vertices is very expensive, we first need to drastically decrease the size of the space to explore when studying 2-local complementation.

\begin{lemma} \label{lemma:lifting}
    Let $S$ be a 2-incident independent multiset of a graph $G=(V,E)$ and suppose that there exists no set $A \se \supp(S)$ such that $G \star^2 S = G \star^1 A$. Then there exists a graph $G'=(V',E')$ bipartite with respect to a bipartition $S', V' \sm S'$ of the vertices such that:
    \begin{itemize}
        \item $S'$ is 2-incident;
        \item $S'$ contains no twins;
        \item $S'$ contains no vertex of degree 0 or 1;
        \item $|S'| \ls |\supp(S)|$;
        \item $|V' \sm S'|\ls |V \sm \supp(S)|$;
        \item there exists no set $A \se S'$ such that $G' \star^2 S' = G' \star^1 A$.
    \end{itemize}
\end{lemma}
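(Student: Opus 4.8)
The plan is to simplify an arbitrary witness (a graph $G$, a 2-incident independent multiset $S$ such that $G\star^2 S$ is \emph{not} realizable by a 1-local complementation over $\supp(S)$) down to a canonical witness $G'$ that is bipartite, has no small-degree or twin vertices in $S'$, and is no larger than the original. This is exactly the kind of reduction needed to make the computer search in the proof of \cref{prop:LULC19} feasible. The key observation driving everything is \cref{loccompneighS}: an $r$-local complementation over $S$ preserves the neighborhood of every vertex in $\supp(S)$, and only toggles edges inside $V\sm\supp(S)$; so the only structure of $G$ that matters is the bipartite ``incidence'' between $\supp(S)$ and $V\sm\supp(S)$, together with the edges inside $V\sm\supp(S)$.

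First I would delete all edges that are irrelevant. By \cref{loccompneighS} the edges inside $\supp(S)$ are untouched by $\star^2 S$, and since $S$ is independent there are no such edges anyway; more importantly, restricting to the induced bipartite subgraph between $\supp(S)$ and the common neighborhood structure does not change whether $S$ is 2-incident (the $r$-incidence condition only refers to $\Lambda_G^K$ for $K\subseteq V\sm\supp(S)$, i.e.\ to common neighbors lying in $\supp(S)$) nor whether the transformation is 1-realizable. So I can pass to the bipartite graph $G_1$ on $S\cup(V\sm\supp(S))$ keeping only edges between the two sides. Next, handle twins and low-degree vertices in $S$: if $u,v\in S$ are twins then (as used in the proof of \cref{thm:LU_imply_LCr}) $X(\theta_1)_uX(\theta_2)_v\ket{G}=X(\theta_1+\theta_2)_u\ket{G}$, so their multiplicities can be merged into one vertex, strictly decreasing $|S|$ while preserving 2-incidence and non-realizability; iterate. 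If $u\in S$ is a leaf (degree 1) then $X(\theta)_u Z(-\theta)_v\ket G=\ket G$, so $u$ contributes nothing and can be deleted along with absorbing its correction; if $u\in S$ has degree 0 it likewise contributes nothing. After removing twins, leaves and isolated vertices one still has a 2-incident multiset, and one still has non-realizability — the latter because any witnessing 1-realization of the reduced instance would lift back to a 1-realization of the original by re-inserting the deleted/merged vertices with the corresponding multiplicities (this is the ``lifting'' direction that names the lemma). Finally, \cref{prop:decomposition} lets me assume $S'$ is a set rather than a multiset: write $S=(S+S_1+S_1)\,\dot\cup\,S_1$ with $\supp(S+S_1+S_1)$ a set; if the 2-local part were 1-realizable over its support then composing with the explicit 1-local complementation over $S_1$ would 1-realize $G\star^2 S$ over $\supp(S)$, contradiction — so the set-version instance is still a genuine witness.

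The size bounds $|S'|\le|\supp(S)|$ and $|V'\sm S'|\le|V\sm\supp(S)|$ are immediate from the fact that every step only deletes or merges vertices, never creates them, and the bipartiteness is built in by construction. So the only real content is checking that each reduction step preserves both 2-incidence and non-1-realizability. The hard part will be the non-realizability preservation under the twin/leaf deletions: I need a clean ``lifting'' argument showing that a set $A\subseteq S'$ with $G'\star^2 S'=G'\star^1 A$ can be pulled back to a set $A$ (or multiset, then reduced by \cref{prop:decomposition}) over $\supp(S)$ with $G\star^2 S=G\star^1 A$. Here I would use the implementation formula \cref{prop:implementation_rlc}: $G\star^2 S$ and $G\star^1 A$ correspond to specific products of $X(\pi/4)$, $X(\pi/2)$ and $Z$-rotations on the qubits, and the twin-merging / leaf-removal identities above are precisely the relations that let me rewrite one product as the other after re-inserting the removed qubits, using also the graph-state stabilizers $X_uZ_{N_G(u)}\ket G=\ket G$ to move $Z$-corrections around. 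I would organize this as: (1) pass to the bipartite restriction; (2) iteratively merge twins in $S$; (3) delete leaves and isolated vertices of $S$; (4) invoke \cref{prop:decomposition} to get a set; and at each stage record that 2-incidence, non-realizability, and the size inequalities survive, with the lifting identities from \cref{prop:implementation_rlc} and \cref{loccompneighS} doing the work in step (2)–(3). The routine but slightly fiddly bookkeeping is tracking the $Z$-rotation corrections through the merges, which I would keep implicit by phrasing everything at the level of ``there exists a set $A$ such that $G\star^2 S=G\star^1 A$'' rather than carrying the unitaries explicitly.
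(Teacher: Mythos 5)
Your proposal is correct and follows essentially the same route as the paper's proof: apply \cref{prop:decomposition} to reduce the multiset to a 2-incident set, delete the edges inside $V\sm\supp(S)$, and strip degree-$\ls 1$ vertices and twins from the support, checking at each step that 2-incidence and non-realizability by a 1-local complementation survive via the same lifting argument (a set $A$ with $G\star^2 S_2=G\star^1 A$ would give $G\star^2 S = G\star^1(A\,\Delta\, S_1)$). The only differences are cosmetic: you clean the multiset before invoking \cref{prop:decomposition} rather than after, you merge twins' multiplicities instead of deleting the pair, and you route some justifications through unitary identities where a purely combinatorial argument on $S\bullet\Lambda_G^K$ (as in the paper) suffices.
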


\begin{proof}
    The proof is constructive, in the sense that we construct $G'$ and $S'$ from $G$ and $S$. According to \cref{prop:decomposition}, there exist set $S_1, S_2 \se \supp(S)$ such that $S_2$ is 2-incident and $G \star^2 S = G \star^2 S_2 \star^1 S_1$. Also, the existence of a set $A \se S_2$ such that $G \star^2 S_2 = G \star^1 A$ implies $G \star^2 S = G \star^1 (A \Delta S_1)$. Let $G' = G$ and $S' = S_2$, then apply the following operations:
    \begin{enumerate}
        \item Remove the edges between vertices of $V' \sm S'$;
        \item Remove each vertex of $S'$ of degree 0 or 1;
        \item If there exists a pair of twins $u,v$ in $S'$, remove $u$ and $v$. Repeat until $S'$ contains no twins.
    \end{enumerate}
    Notice that each operation preserves the 2-incidence of $S'$ and that  $G' \star^2 S' = G' \star^1 A$ for no set $A \se S'$.
\end{proof}

According to \cref{lemma:sizeZ}, if there are at most 4 vertices outside the support of some 2-incident independent multiset $S$, then a 2-local complementation over $S$ can be implemented by usual local complementations. We use computer-assisted generation \citepub{codelulc19} to extend the result.

\begin{lemma} \label{lemma:2lc6}
    Let $S$ be a $2$-incident independent multiset of a graph $G=(V,E)$. If $|V\sm\supp(S)|\ls 5$, or if $|V \sm \supp(S)|= 6$ and $|\supp(S)|\ls 20$, then a 2-local complementation over $S$ can be implemented by local complementations over a subset of $\supp(S)$. 
\end{lemma}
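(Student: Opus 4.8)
The plan is to argue by contradiction: assume that the $2$-local complementation over $S$ on $G$ cannot be realised by local complementations over any subset of $\supp(S)$, i.e.\ that there is no set $A\se\supp(S)$ with $G\star^2 S = G\star^1 A$, and derive a contradiction. When $|V\sm\supp(S)|\ls 4$ this is already provided by \cref{lemma:sizeZ} (via \cref{prop:nontrivial} and \cref{prop:decomposition}), which hands us such an $A$. So the real work is for $|V\sm\supp(S)|\in\{5,6\}$, and there I would invoke \cref{lemma:lifting}: under the contradiction hypothesis it produces a graph $G'=(V',E')$, bipartite with respect to a partition $(S',Z')$ where $Z'=V'\sm S'$, such that $S'$ is $2$-incident, twin-free, has no vertex of degree $0$ or $1$, $|Z'|\ls|V\sm\supp(S)|\ls 6$, $|S'|\ls|\supp(S)|$, and no $A\se S'$ satisfies $G'\star^2 S'=G'\star^1 A$. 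It then suffices to show that for \emph{every} such reduced configuration such an $A$ does exist, contradicting \cref{lemma:lifting} and hence the initial assumption.

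The next step is to recast the reduced problem in purely linear-algebraic terms. Because $S'$ is twin-free, bipartite to $Z'$, and of minimum degree at least $2$, it is determined by the family $\mathcal F=\{N_{G'}(u)\mid u\in S'\}$ of distinct subsets of $Z'$, each of size between $2$ and $|Z'|$, and conversely every such family arises this way. The $2$-incidence of $S'$ becomes a homogeneous $\mathbb F_2$-linear system on the indicator of $\mathcal F$, one equation per pair and per triple of $Z'$ stating that an even number of members of $\mathcal F$ contains it, so the admissible $\mathcal F$ form a linear subspace of $\mathbb F_2^{\binom{Z'}{\ge 2}}$. For $A\se S'$, identified with a subfamily $\mathcal A\se\mathcal F$, a $1$-local complementation over $A$ toggles a pair $\{a,b\}\se Z'$ exactly when $|\{K\in\mathcal A\mid\{a,b\}\se K\}|$ is odd; meanwhile $G'\star^2 S'$ toggles $\{a,b\}$ exactly when $S'\bullet\Lambda_{G'}^{a,b}=|\{K\in\mathcal F\mid\{a,b\}\se K\}|\equiv 2\bmod 4$, which by $2$-incidence (the count is even) is the same as $\tfrac12|\{K\in\mathcal F\mid\{a,b\}\se K\}|$ being odd; and since $S'$ is independent, neither operation affects any edge other than pairs inside $Z'$. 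Hence ``$\exists\,A\se S'$ with $G'\star^2 S'=G'\star^1 A$'' is equivalent to the solvability over $\mathbb F_2$ of an affine system in the indicator of $\mathcal A$ whose matrix is the pair-inclusion matrix of $\mathcal F$ and whose right-hand side is read directly off $\mathcal F$.

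Finally, the claim becomes a finite verification. For $|Z'|\ls 5$ the family $\mathcal F$ is a subset of the at most $2^5-5-1=26$ admissible subsets of $Z'$, so there are finitely many $2$-incident $\mathcal F$ (a subspace, enumerable from a basis); for $|Z'|=6$ one additionally has $|\mathcal F|=|S'|\ls|\supp(S)|\ls 20$ by hypothesis together with \cref{lemma:lifting}, so again only finitely many cases remain. I would then check by computer, across all these $\mathcal F$, that the affine system for $\mathcal A$ always has a solution; this contradicts the conclusion of \cref{lemma:lifting}, so its hypothesis fails and the desired $A\se\supp(S)$ exists. The code for this search is provided in \citepub{codelulc19}. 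The step I expect to be the main obstacle is making this search both correct and tractable: one must be meticulous about the mod-$2$ versus mod-$4$ bookkeeping in the translation of toggling conditions into the affine system, and for $|Z'|=6$ the space of $2$-incident families is only cut down to a manageable size by the extra bound $|\supp(S)|\ls 20$ — a bound that is essentially sharp, since the known $27$-vertex counterexample to LU=LC (see \cref{fig:ce}) is precisely a $2$-local complementation with $|V\sm\supp(S)|=6$ and $|\supp(S)|=21$.
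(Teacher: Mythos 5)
Your proposal is correct and follows essentially the same route as the paper: reduce via \cref{lemma:lifting} to twin-free bipartite configurations with at most six vertices outside the support, translate both the $2$-local complementation and the candidate $1$-local complementations into $\mathbb F_2$-linear data indexed by pairs, and finish with an exhaustive computer check over the (finite) space of $2$-incident families, the small cases $|V\sm\supp(S)|\ls 4$ being already covered by \cref{lemma:sizeZ}. One small correction: the tractability of the $|Z'|=6$ case comes from the fact that a $2$-incident family is determined by its members of size at least $4$ (so the space to enumerate has dimension $\binom{6}{4}+\binom{6}{5}+\binom{6}{6}=22$, about $4\times 10^6$ cases), not from the bound $|\supp(S)|\ls 20$ — that bound is what makes the statement \emph{true} rather than what makes the search feasible, as your own closing remark about the $27$-vertex counterexample with $|\supp(S)|=21$ already indicates.
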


\begin{proof}
Following \cref{lemma:lifting}, given an integer $k$, let $\mathcal G_k$ be the class of graphs that are bipartite with respect to a bipartition $S, V \sm S$ of the vertices such that:
\begin{itemize}
    \item $S$ is 2-incident;
    \item $S$ contains no twins;
    \item $S$ contains no vertex of degree 0 or 1;
    \item $|V \sm S| = k$.
\end{itemize}
It is easy to generate each graph of $\mathcal G_k$, although the number of elements in $\mathcal G_k$ grows double exponentially fast with $k$. $S$ can be defined as a list of words in $\{0,1\}^k$ of weight at least 2. More precisely each vertex of $S$ is uniquely associated with a set of $V \sm S$ of size at least 2, its neighborhood. Furthermore, the 2-incidence of $S$ implies that $S$ is uniquely determined by the set of its vertices of degree at least 4. Indeed, starting from a set containing only vertices of degree at least 4, the conditions of the form "$S\bullet \Lambda_G^K = 0 \bmod 2^{r-k-\delta(k)}$" translate into a procedure to find which vertices of degree 3 then 2 need to be added to the set so that $S$ is 2-incident. This proves that there is a bijection between $\mathcal G_k$ and lists of words in $\{0,1\}^k$ of weight at least 4. Thus, the size of $\mathcal G_k$ is exactly given by the formula $$ |\mathcal G_k| = 2^{\binom{k}{4} + \binom{k}{5} + \cdots + \binom{k}{k}}$$
For $k=1,2,3,4,5$ and $6$, the size of $\mathcal G_k$ is respectively $1,1,1,2,2^6 = 64$, and $2^{22} \sim 4 \times 10^6$ which is suitable for computation. But, even for $k$ as low as seven, the size of $\mathcal G_7$ is $2^{64} \sim 2 \times 10^{19}$.

For every $k$ from 1 to 6, we generate each graph $G$ of $\mathcal G_k$, along with the set $S$ defined above. Notice that a local complementation on a vertex $u$ of $S$ toggles the connectivity of some pairs of vertices of $V \sm S$, here the pairs where each end is a neighbor of $u$. In other words, to each vertex $u$ of $S$ we associate a vector in $\mathbb F_2^{\binom{k}{2}}$ corresponding to the action of the local complementation of $u$ on the graph. The set of the vectors corresponding to each vertex of $S$ spans an $\mathbb F_2$-vector space $\mathcal L$ describing the action of local complementation on vertices of $S$ on the graph. Using Gaussian elimination, we are able to compute a basis of $\mathcal L$. Furthermore, we compute the vector $x$ in $\mathbb F_2^{\binom{k}{2}}$ corresponding to the action of a 2-local complementation over $S$ on the graph. Checking if the action of a 2-local complementation over $S$ can be implemented by local complementations on vertices of $S$ amounts to checking if $x$ belongs to the vector space $\mathcal L$, which can be done efficiently using Gaussian elimination.

For $k \in [1,3]$, the set $S$ corresponding to the only graph $G \in \mathcal G_k$ is empty, hence a 2-local complementation over $S$ leaves $G$ invariant, i.e.~$G \star^2 S = G$. For $k = 4$, $S$ is either empty or contains 11 vertices; in both case it is easy to check that $G \star^2 S = G$. For $k = 5$, the computation shows that for each graph $G \in \mathcal G_5$, $G \star^2 S = G$. Now, fix $k=6$. For each graph $G \in \mathcal G_6$ such that the corresponding set $S$ contains at most 16 vertices, $G \star^2 S = G$. For each graph $G \in \mathcal G_6$ such that the corresponding set $S$ contains at most 20 vertices, a 2-local complementation over the corresponding set $S$ can be implemented by local complementations over vertices of $S$, i.e.~there exists a set $A \se S$ such that $G \star^2 S = G \star^1 A$. The property does not hold if $S$ is of size 21, for instance we recover the well-known 27-vertex counterexample to the LU=LC conjecture as described in \cite{Tsimakuridze17}.

According to \cref{lemma:lifting}, this is enough to prove the lemma.
\end{proof}

According to \cref{lemma:exp_support} and \cref{prop:nontrivial}, if the support of some 2-incident independent multiset $S$ is of size at most 10, then a 2-local complementation over $S$ can be implemented by usual local complementations. To extend this result to 2-incident independent multisets whose supports are of size at most 12, we first study the case of twin-less sets. 

\begin{lemma} \label{lemma:lessthan12}
    Let $S$ be a 2-incident independent set of a graph $G=(V,E)$ such that $S$ does not contain any twins and $|S| \ls 12$. Then, $G \star^2 S = G$.
\end{lemma}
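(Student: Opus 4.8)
The goal is to show that for a 2-incident independent set $S$ (not a multiset) with no twins and $|S|\ls 12$, a 2-local complementation over $S$ is trivial, i.e.\ $G\star^2 S=G$. My strategy is to reduce to a bipartite situation as in \cref{lemma:lifting}, and then exploit the combinatorial rigidity forced by the no-twins hypothesis together with 2-incidence to bound the sizes involved so that \cref{lemma:exp_support}, \cref{prop:nontrivial}, \cref{lemma:sizeZ} and \cref{lemma:2lc6} finish the job; any leftover small cases are handled by computer search. The key point is that, since $S$ is a \emph{set} with no twins, the vertices of $S$ inject into the distinct subsets of $V\sm\supp(S)$ given by their neighborhoods, so $|S|\ls 12$ constrains how much structure can live outside $S$.

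\textbf{Step 1 (set-up and bipartite reduction).} First I would recall that, by \cref{loccompneighS}, $\star^2 S$ only toggles edges with both endpoints outside $\supp(S)=S$, so it suffices to study the bipartite graph $G' = G[S\cup (V\sm S)]$ with all edges inside $V\sm S$ deleted; 2-incidence of $S$ and the value of $S\bullet\Lambda_G^{u,v}$ for $u,v\notin S$ are unchanged by this deletion. So without loss of generality $G$ is bipartite with parts $S$ and $W:=V\sm S$, and I want to show $G\star^2 S=G$, equivalently that $S\bullet\Lambda_G^{u,v}=0\bmod 4$ for every pair $u,v\in W$. Suppose for contradiction this fails; then $S$ is a \emph{genuine} 2-incident independent set in the sense defined before \cref{prop:nontrivial}, because by the argument in the proof of \cref{lemma:sizeZ} (taking the inclusion-maximal $K_{\max}\se W$ with $\sum_{N_G(u)=K_{\max}}S(u)$ odd) genuineness is equivalent to the 2-local complementation being non-implementable by a $1$-local complementation, which it must be if $G\star^2 S\neq G$ (since the only candidate is $A=\emptyset$ once we are on a graph where $\star^2 S$ is nontrivial and $S$ is a set).

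\textbf{Step 2 (size bounds from no-twins and 2-incidence).} Now I would run the counting argument of \cref{lemma:exp_support}: genuineness gives a smallest $m\gs 2$ and a set $K_0\se W$ of size $m$ with $S\bullet\Lambda_{G'}^{K_0}$ odd, and by 2-incidence $m\gs 4$; the induction in that lemma then shows that for every $K\se K_0$ with $|K|\gs 2$ there is at least one vertex of $S$ with neighborhood (within $G'$) exactly $K$, whence $|S|\gs 2^m-m-1\gs 2^4-4-1=11$. Combined with $|S|\ls 12$ this forces $m=4$ and $|S|\in\{11,12\}$, and moreover $K_0$ has size exactly $4$ while every $2$- and $3$-subset of $K_0$ is ``used'' as a neighborhood by a vertex of $S$. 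This already pins down the structure very tightly: there is a $4$-element ``core'' $K_0\se W$, and $S$ contains exactly the $\binom42+\binom43+\binom44=11$ vertices whose neighborhoods are the $\gs 2$-subsets of $K_0$ (the $11$-vertex case), possibly plus one extra vertex of $S$ (the $12$-vertex case). In the $12$-vertex case, the no-twins hypothesis means the extra vertex has a neighborhood distinct from all the others; I would then show either it makes $S$ fail 2-incidence, or it introduces a fifth relevant vertex of $W$ outside $K_0$, which by re-running the $\mathrm{wt}\gs 2$ counting with the enlarged core forces $|S|\gs 2^5-5-1=26>12$, a contradiction — unless the extra vertex attaches only to vertices of $W$ already in $K_0$ or to at most one new vertex in a way that $2$-incidence kills. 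This is where the argument needs the most care, so I expect \textbf{Step 2's $12$-vertex bookkeeping to be the main obstacle}: one must argue that adding a single twin-less vertex to the canonical $11$-vertex configuration cannot yield a genuine 2-incident independent set on $\ls 12$ vertices.

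\textbf{Step 3 (disposing of the surviving cases).} Once Steps 1--2 restrict us to $|W|=|V\sm S|$ small — by the above, the relevant $W$ is (essentially) $K_0$ together with at most one extra vertex, so $|V\sm S|\ls 5$, or at worst $|V\sm S|=6$ with $|S|\ls 12\ls 20$ — I invoke \cref{lemma:2lc6} directly: it states that under exactly these size conditions a 2-local complementation over $S$ is implementable by local complementations on a subset of $\supp(S)=S$, hence (since $S$ is a twin-less set on which, by Step 2's rigidity, the only possible such implementation is the empty one — any nonempty local complementation on a vertex of $S$ would have to be undone to return to a bipartite graph with the same $S$) we get $G\star^2 S=G$. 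The residual handful of configurations not already covered by the clean bounds — I expect only the explicit $11$- and $12$-vertex bipartite graphs produced in Step 2 — are checked by the same computer-assisted exhaustive search used for \cref{lemma:2lc6}, with code available at \citepub{codelulc19}. This contradicts the assumption in Step 1 that $G\star^2 S\neq G$, completing the proof.
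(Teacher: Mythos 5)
Your Step 1 is essentially fine (for a twin-less \emph{set}, $G\star^2 S\neq G$ does force $S$ to be genuine, and \cref{lemma:exp_support} with $r=2$ then gives $|S|\gs 11$ and the realization of every $\gs 2$-subset of $K_0$ as a neighborhood trace). The proof breaks in Step 2. \cref{lemma:exp_support} only constrains the neighborhoods of vertices of $S$ \emph{inside the induced subgraph} $G[\supp(S)\cup K_0]$, i.e.\ their traces on $K_0$; it says nothing about $|V\sm S|$ nor about how the vertices of $S$ attach to $W\sm K_0$. The toggled pair $u,v$ need not lie in $K_0$, $W\sm K_0$ can be arbitrarily large while $S$ remains $2$-incident (the only constraints are parity conditions on pairs and triples of $W$, which do not bound $|W|$), and the no-twins hypothesis refers to full neighborhoods in $G$ — so many vertices of $S$ may share the same trace on $K_0$ and be distinguished only by neighbors outside $K_0$, which destroys the ``rigidity'' you rely on. Your attempted repair (``re-running the counting with the enlarged core forces $|S|\gs 2^5-5-1$'') is also not valid: the induction in \cref{lemma:exp_support} requires $K_0$ to be an inclusion-minimal set with $S\bullet\Lambda_G^{K_0}$ odd, and an arbitrary enlargement of $K_0$ has no such property. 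Finally, in Step 3, passing to an induced subgraph on $S\cup K_0\cup\{u,v\}$ can create twins and low-degree vertices in $S$, so the brute-force data behind \cref{lemma:2lc6} (which is stated for twin-less configurations, and whose conclusion is ``implementable by local complementations'', not ``trivial'') does not transfer back to $G$; your parenthetical claim that ``the only possible such implementation is the empty one'' is unsupported.

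The paper's proof takes a genuinely different route precisely to deal with large $V\sm S$: it inducts on the size of the strict neighborhood $\xi_G(S)$, with base case $|\xi_G(S)|\ls 6$ done by the exhaustive search of \cref{lemma:2lc6}. In the inductive step it assumes an edge $uv$ is toggled and shows that for \emph{every} $w\in\xi_G(S)\sm\{u,v\}$ the deletion of $w$ must create twins in $S$ (otherwise the induction hypothesis is contradicted, since the toggle at $uv$ survives the deletion); this ``every deletion creates twins'' property forces a vertex $a\in S$ adjacent to all of $\xi_G(S)\sm\{u,v\}$ and then a cascade of vertices $a_w$ and $a_{w,w'}$, giving $|S|\gs 1+5+\binom52=16>12$ once $|\xi_G(S)|\gs 7$. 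You would need to supply an argument of this kind (or some other mechanism controlling $W\sm K_0$) for your proof to go through.
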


\begin{proof}
    Given a set $S$, we define the strict neighborhood of $S$ in the graph $G$ as the set of vertices not in $S$ that are adjacent to at least one vertex in $S$: $\xi_G(S) = \{u \in V \sm S ~|~ N_G(u) \cap S \neq \emptyset\}$. Let us prove, by induction on $|\xi_G(S)|$, that for any 2-incident independent set $S$ such that $S$ does not contain any twins and $|S| \ls 12$, $G \star^2 S = G$.

    First, by brute force, we check that the property is true whenever $|\xi(S)| \ls 6$ (see details in the proof of \cref{lemma:2lc6}). 

    Now, suppose the property true for an integer $t \gs 6$. Let $S$ be a 2-incident independent set such that $S$ does not contain any twins, $|S| \ls 12$, and $|\xi_G(S)| = t+1$. Let us prove that $G \star^{2} S = G$. Suppose by contradiction that $G \star^{2} S \neq G$. Then, there exist $u,v \in \xi(S)$ such that the edge between $u$ and $v$ is toggled by the 2-local complementation over $S$. 
    
    Let $w \in \xi(S) \sm \{u,v\}$, and let $G' = G \sm w$, i.e.~$G'$ is the graph obtained from $G$ by removing the vertex $w$. $S$ is still a 2-incident independent set in $G'$ (but now it may contain twins). Notice that $\xi_{G'}(S) = \xi_{G}(S) -1$. The case where $S$ contains no twins contradicts the induction hypothesis. Suppose that $S$ does contain two twins, say $a$ and $b$. Then, their neighborhood in the original graph $G$ differs only by the vertex $w$ (thus there can only be pairs of twins, e.g. no triplets). Furthermore, $S \sm \{a,b\}$ is 2-incident and $G' \star^2 S = (G' \star^2 S) \star^1 \{a\}$. Let $S'$ be the set obtained from $S$ by removing every pair of twins and every vertex of degree 1. $S'$ is 2-incident and $|S'|\ls 10$. A non-empty 2-incident independent set without twins or vertex of degree 1 is genuine, and thus by \cref{lemma:exp_support} has at least 11 vertices. Thus, $S' = \emptyset$.

    $w$ was chosen arbitrarily in $w \in \xi(S) \sm \{u,v\}$. Thus, for any $w\in \xi(S) \sm \{u,v\}$, the set obtained from $S$ by removing every pair of twins and every vertex of degree 1 in the graph $G \sm w$, is empty. 

    Let us prove that $S$ contains a vertex $a$ such that $\xi(S) \sm \{u,v\} \se N_G(a)$. Suppose by contradiction that this is not the case and consider a vertex $b \in S$ such that $|N_G(b) \cap (\xi(S) \sm \{u,v\})|$ is maximum. Note that $|N_G(b) \cap (\xi(S) \sm \{u,v\})| \gs 2$, else  $G \star^2 S = G$, as $S$ is 2-incident. By hypothesis there is a vertex $w \in \xi(S) \sm \{u,v\}$ such that $w \notin N_G(b)$. In $G \sm w$, $b$ is not of degree 1 so it has a twin $z$, implying that $N_G(z) = N_G(b) \cup \{w\}$. This is a contradiction with the fact that $|N_G(b) \cap (\xi(S) \sm \{u,v\})|$ is maximum.

    As $|\xi_G(S)| \gs 7$, $|N_G(a) \cap (\xi(S) \sm \{u,v\})| \gs 5$. For any $w\in \xi(S) \sm \{u,v\}$, in $G \sm w$ each vertex of degree at least 2 has a twin. Thus, in $G$, for any $w\in \xi(S) \sm \{u,v\}$, there is a vertex $a_w \in S$ such that $N_G(a_w) = N_G(a) \sm \{w\}$. Additionally, for any such vertex $a_w$, for any $w'\in \xi(S) \sm \{u,v,w\}$, there is a vertex $a_{w,w'} \in S$ such that $N_G(a_{w,w'}) = N_G(a_w) \sm \{w\}$. Thus, $S$ contains at least $1+5+\binom{5}{2} = 16$ vertices, contradicting the hypothesis that $|S|\ls 12$.
\end{proof}

According to \cref{prop:decomposition}, any 2-local complementation can be decomposed into 1- and 2-local complementations over sets. Furthermore, one can check that if a 2-incident independent set $S$ contains two twins $u$ and $v$, then a 2-local complementation over $S$ has the same effect as a 2-local complementation over $S\sm\{u,v\}$ followed by a local complementation on $u$. Thus, the action of a 2-local complementation can be described by a 2-local complementation over a twin-less set followed by usual local complementations. Then, \cref{lemma:lessthan12} can be applied on the twin-less set to yield the following result:

\begin{lemma} \label{lemma:2lc12}
    Let $S$ be a 2-incident independent multiset of a graph $G$ such that $|\supp(S)|\ls 12$. Then, a 2-local complementation over $S$ can be implemented by local complementations over a subset of $\supp(S)$.
\end{lemma}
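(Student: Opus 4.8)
The plan is to follow the reduction already sketched just before the statement: turn the 2-local complementation over an arbitrary multiset into a 2-local complementation over a twin-less \emph{set}, to which Lemma~\ref{lemma:lessthan12} applies, modulo ordinary local complementations on vertices of $\supp(S)$.

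First I would apply \cref{prop:decomposition} to write $G\star^2 S = G\star^2 T\star^1 S_1$, where $S_1$ is the set of vertices of multiplicity $2$ or $3$ in $S$ and $T$ is the set underlying the multiset $S + S_1 + S_1$ (whose multiplicities are all $0$ or $1$); both $T$ and $S_1$ are subsets of $\supp(S)$, so $|T|\ls 12$, and $T$ is $2$-incident since the decomposition is valid. Because a $1$-local complementation over $S_1$ is by definition a product of ordinary local complementations on the vertices of $S_1\se\supp(S)$, it suffices to show that $G\star^2 T$ is reachable from $G$ by local complementations on vertices of $T$.

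Second, I would establish the twin-elimination identity: if $u,v\in T$ are twins, with $N_G(u)=N_G(v)=K_0$, then $G\star^2 T = (G\star^2 (T\sm\{u,v\}))\star u$, all counts being taken in $G$. The verification rests on three facts. (a) $T\sm\{u,v\}$ is still $2$-incident: for any pair or triple $K$ disjoint from the support, removing the two twins decreases $S\bullet\Lambda_G^K$ by exactly $2$ when $K\se K_0$ and leaves it unchanged otherwise, so all the required divisibility-by-$2$ conditions survive (and the new conditions involving $K$ meeting $\{u,v\}$ hold trivially, since $K_0\cap\supp(T)=\emptyset$ by independence). (b) Since $T$ is independent and $u\in T$, we have $N_G(u)\cap T=\emptyset$, hence $\Lambda_G^{u,w}\cap(T\sm\{u,v\})=\emptyset$ for every $w$, so the $2$-local complementation over $T\sm\{u,v\}$ toggles no edge incident to $u$ and $N_{G\star^2 (T\sm\{u,v\})}(u)=K_0$. (c) For a pair $w,w'$ with $\{w,w'\}\se K_0$, both $u$ and $v$ lie in $\Lambda_G^{w,w'}$, so removing them flips $S\bullet\Lambda_G^{w,w'}$ between $2\bmod 4$ and $0\bmod 4$; this is exactly the set of edges toggled by the local complementation $\star u$ on its neighbourhood $K_0$, whereas for pairs not contained in $K_0$ neither the count nor $\star u$ has any effect. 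Matching these edge by edge gives the identity.

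Third, I would iterate twin-elimination: each step strictly drops the size of the set by $2$ while appending one local complementation on a vertex of $T$, so after finitely many steps we reach a twin-less $2$-incident independent set $T''\se T$ with $|T''|\ls 12$. Lemma~\ref{lemma:lessthan12} then gives $G\star^2 T''=G$, and unwinding the chain of identities expresses $G\star^2 T$, and hence $G\star^2 S$, as $G$ acted on by local complementations on a subset of $\supp(S)$, as claimed. The main obstacle is precisely the modular bookkeeping in the second step, ensuring simultaneously that $T\sm\{u,v\}$ stays $2$-incident and that the toggled edges coincide with those of a single local complementation on $K_0$; the rest is a terminating induction together with the black-box use of Lemma~\ref{lemma:lessthan12}.
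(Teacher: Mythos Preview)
Your proposal is correct and follows essentially the same approach as the paper: decompose via \cref{prop:decomposition} into a $2$-local complementation over a set plus a $1$-local complementation, then iteratively strip twin pairs from the set using the identity $G\star^2 T = (G\star^2(T\sm\{u,v\}))\star u$, and finally invoke \cref{lemma:lessthan12} on the resulting twin-free set. You actually supply more detail than the paper, which simply asserts that ``one can check'' the twin-elimination identity; your verification of parts (a)--(c) fills exactly that gap.
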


According to \cref{lemma:2lc6} and \cref{lemma:2lc12}, if a 2-incident independent multiset $S$ satisfies $|\supp(S)|\ls 12$ or $|V\sm\supp(S)|\ls 5$, or alternatively if $|\supp(S)|\ls 20$ and $|V\sm\supp(S)|= 6$, then a 2-local complementation over $S$ can be implemented by usual local complementations. Thus, for graphs of order up to 19, any 2-local complementation can be implemented by usual local complementations, implying LU=LC (as LU=LC$_2$ for graphs of order up to 31). In other words, a 2-local complementation that cannot be implemented by usual local complementation is possible only on a graph of order at least 20. This proves \cref{prop:LULC19}: LU=LC holds for graph states on up to 19 qubits.
\chapter{Vertex-minor universal graphs}

\label{chap:vmu}

In this chapter we study graphs satisfying the property of vertex-minor universality. A graph is said $k$-vertex-minor universal if any smaller graph, defined on any $k$ vertices, can be induced by local complementation and vertex deletions. For graph states whose underlying graphs are $k$-vertex minor universal, any $k$-qubit graph state can be induced over any subset of $k$ qubits using only local Clifford operations, local Pauli measurements, and classical communication. We give bounds on vertex-minor universality, and show the existence of $\Theta(\sqrt n)$-vertex-minor universal graphs, which is asymptotically optimal in terms of number of vertices. The construction is probabilistic and involves random bipartite graphs.

\section{Vertex-minors and vertex-minor universality}

\begin{definition}
    Let $G=(V,E)$ and $F=(V',E')$ where $V' \se V$. $F$ is a vertex-minor of $G$ if it can be reached from $G$ by some sequence of local complementations and vertex-deletions. Equivalently, $F$ is a vertex-minor of $G$ if there exists $G' =_{LC} G$ such that $G'[V'] = F$ (in other words, the vertex-deletions can be performed after the local complementations).
\end{definition}

The equivalence between these two definitions is discussed e.g. in \cite{DHW:howtotransform}. A complete and up-to-date survey on
vertex-minors can be found in \cite{OumSurvey}.

\begin{definition}
    A graph $G$ is $k$-vertex-minor universal if any graph on any $k$ vertices is a vertex-minor of $G$.
\end{definition}

Examples are provided in \cref{sec:vmu_smallgraphs}.

Vertex-minor universality implies strong quantum properties for the corresponding graph state. If $F$ is a vertex-minor of $G$ then the graph state $\ket{F}$ can be obtained from $\ket G$ using only local Clifford operations, local Pauli measurements and classical communications. The converse is true when $F$ has no isolated vertices~\cite{DWH:transfo}, but is false in general. For instance, $K_2$ (the complete graph with two vertices) is not $2$-vertex-minor universal since no local complementation can turn it into an empty graph. However, with \emph{e.g.} an X-measurement on each qubit (and possible Z-corrections), one can map the corresponding graph state $\ket{K_2}$ to the disconnected graph state $\ket + \ket +$. To be able to state a characterization, a solution is to introduce \emph{destructive} measurements (i.e. the measured qubit is removed from the system and can no longer be used).

\begin{proposition}\label{prop:vm}
   Given two graphs $G= (V,E)$ and $F= (V',E')$ such that $V' \se V$, $F$ is a vertex-minor of $G$ if and only if $\ket{F}$ can be obtained from $\ket G$ (on the qubits corresponding to $V'$) using only local Clifford operations, local destructive Pauli measurements, and classical communications.
\end{proposition}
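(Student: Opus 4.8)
The statement is an ``if and only if'', and the non-trivial content is really the characterization of what local Clifford operations together with \emph{destructive} Pauli measurements can do to a graph state, matched against the combinatorial notion of vertex-minor. The plan is to reduce everything to two ingredients already available: (a) \cref{prop:meas} and the subsequent remark, which describe a standard-basis destructive measurement as a vertex deletion up to local Pauli corrections; and (b) the fact that local complementation is exactly local Clifford equivalence (\cref{prop:lclc}, \cref{prop:implementation_lc}), so that any $G' =_{LC} G$ is reachable by local Clifford operators, and conversely local Clifford operators only move us within an LC-orbit.

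\textbf{The ``only if'' direction.} Suppose $F$ is a vertex-minor of $G$. Using the ``equivalent'' form of the definition (discussed after \cref{def:...}; cited to \cite{DHW:howtotransform}), there is $G' =_{LC} G$ with $G'[V'] = F$. First apply the local Clifford operators that realise $G \mapsto G'$ (these exist by \cref{prop:lclc}, or directly by composing the single-vertex implementations of \cref{prop:implementation_lc}). Then, for each vertex $u \in V \setminus V'$, perform a destructive $Z$-measurement on the corresponding qubit. By \cref{prop:meas} together with the remark following it, a (non-destructive) $Z$-measurement with outcome $0$ maps $\ket{G'}$ to $\frac{1}{\sqrt 2}\ket{G' \setminus u}$, and outcome $1$ additionally applies $Z_{N_{G'}(u)}$, which is a local Pauli correction that can be undone by classical feedforward. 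Making the measurement destructive simply means the measured qubit is discarded; since $G'[V'] = F$ does not depend on deleted vertices, iterating over all $u \in V \setminus V'$ (and propagating the Pauli corrections, which commute past later $Z$-measurements up to sign, again by \cref{prop:stabilizer_definition}) yields exactly $\ket{F}$ on the qubits of $V'$. One should be a little careful that the $Z$-corrections generated by an early measurement do not spoil later measurement statistics — but a Pauli $Z$ on a qubit measured in the $Z$ basis only flips the outcome label, which is absorbed into the classical post-processing, so this is routine.

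\textbf{The ``if'' direction.} Conversely, suppose $\ket F$ (on the qubits of $V'$) is obtained from $\ket G$ by local Clifford operations, local destructive Pauli measurements, and classical communication. The key reduction is that any single-qubit destructive Pauli measurement can be conjugated into a destructive $Z$-measurement: since single-qubit Cliffords permute $\{X,Y,Z\}$ up to sign (Figure \ref{fig:clifford}), measuring $X$ or $Y$ destructively is the same as applying a fixed local Clifford, then measuring $Z$ destructively. So without loss of generality the protocol is: a sequence of local Cliffords interleaved with destructive $Z$-measurements (with classical control). Using \cref{prop:implementation_lc} (and its remark) one can push all the local Cliffords applied \emph{before} a given measurement into a combination of local complementations plus residual Paulis; the Paulis commute with $Z$-measurements up to relabelling outcomes, and by \cref{prop:meas}+remark a destructive $Z$-measurement then deletes that vertex (up to a further $Z$-correction on its current neighborhood). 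Induction on the number of measured qubits shows that the final state is $\ket{C \cdot (G' [V'])}$ for some $G' =_{LC} G$ and some residual local Clifford $C$ acting on $V'$; but a graph state equals a local-Clifford image of a graph state only when the two graphs are LC-equivalent, so $\ket F$ being a graph state forces $C$ to map $\ket{G'[V']}$ to another graph state, i.e. $F =_{LC} G'[V']$, hence $F$ is itself a vertex-minor of $G$ (compose the local complementations witnessing $G \to G'$, the deletions to $V'$, and the local complementations witnessing $G'[V'] \to F$).

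\textbf{Main obstacle.} The delicate point is the bookkeeping of Pauli byproducts and classical feedforward in the ``if'' direction: a $Z$-measurement leaves a $Z_{N(u)}$ correction, and when the protocol adaptively chooses later Cliffords based on earlier outcomes, one must argue that the \emph{graph} reached is independent of the outcomes (only the Pauli frame and the classical record differ), so that the deterministic vertex-minor relation can be extracted. I expect this to be handled exactly as in \cite{DWH:transfo}/\cite{DWH:howtotransform}: track the evolving state as ``graph state $\times$ local Pauli'' and show the graph part evolves by local complementation and deletion only, with all outcome-dependence confined to the Pauli part. The isolated-vertex subtlety mentioned in the text (e.g. $K_2$ versus $\ket+\ket+$) is precisely what the \emph{destructive} measurement is designed to circumvent, and it is worth a sentence noting that destructiveness is what makes the equivalence exact rather than one-directional.
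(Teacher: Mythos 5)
Your ``only if'' direction is fine and is essentially the paper's: implement the local complementations via \cref{prop:implementation_lc} and the vertex deletions via destructive $Z$-measurements using \cref{prop:meas} and its remark, absorbing the outcome-dependent $Z_{N(u)}$ byproducts into the classical feedforward.

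The ``if'' direction has a genuine gap. The step ``push all the local Cliffords applied before a given measurement into a combination of local complementations plus residual Paulis'' is not available: \cref{prop:lclc} decomposes a local Clifford into local complementations only when that Clifford maps the graph state to another \emph{graph} state, whereas the Cliffords appearing in an arbitrary protocol need not do so (e.g.\ $Z(\pi/2)_u\ket{G}$ or $H_u\ket{G}$ is generally not a graph state). So the induction step, as written, cannot be carried out. The correct move --- and what the paper does --- is not to decompose the Clifford but to conjugate the measurement through it: since the first measured qubit $u$ has seen only a single-qubit Clifford $C_u$ before its measurement of Pauli $P$, this is the same as measuring $P_0 \propto C_u^\dagger P C_u \in \{X,Y,Z\}$ directly on the graph state, deferring the Cliffords on the other qubits. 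One then needs the case analysis of Pauli measurements on graph states, which is the actual content of the argument and which you never carry out: $Z$ deletes the vertex (\cref{prop:meas}); $Y$ on $u$ is a $Z$-measurement on $\ket{G\star u}$ up to Cliffords on the unmeasured qubits; $X$ on $u$ with $N_G(u)\neq\emptyset$ is a $Y$-measurement on $\ket{G\star v}$ for a neighbor $v$; and $X$ on an isolated $u$ must be treated separately, since the reduction via a neighbor is unavailable and instead the state factorizes as $\ket{G\setminus u}\otimes\ket{+}_u$. That last case is a genuine branch of the induction, not merely the motivational remark about why destructive measurements are needed. Your final deduction ($F =_{LC} G'[V']$, hence $F$ is a vertex-minor of $G$) is fine once the induction delivers ``local Clifford times graph state of a vertex-minor of $G$'', but the mechanism you propose for maintaining that invariant would fail.
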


\begin{proof} Notice that a similar statement -- involving non-destructive measurements and only valid when $F$ does not contain isolated vertices -- has been proved in \cite{DWH:transfo} (Theorem 2.2). We provide here a direct proof of \cref{prop:vm} that is actually slightly simpler thanks to the use of destructive measurements. In the following proof all measurements are destructive.\\
$(\Rightarrow)$ Local complementations can be implemented by means of local Clifford unitaries (see \cref{prop:implementation_lc}), and vertex deletions by means of Z-measurements together with classical communications and Pauli corrections (see \cite{DWH:transfo} or \cref{prop:meas}).\\
$(\Leftarrow)$ We prove the property by induction on the number of measurements. If there are no measurement the property is true (see \cref{prop:lclc}). Otherwise, let $u$ be the first qubit to be measured. Assume $u$ is measured according to $P$ and $C_u$ is the Clifford operator applied on $u$ before the measurement. $C_u^\dagger P C_u$ is proportional to some Pauli operator $P_0\in \{X,Y,Z\}$: 
\begin{itemize}
    \item[$(i)$] If $P_0=Z$, then the measurement of $u$ can be interpreted as a vertex deletion and leads to $\ket{G\setminus u}$ up to Pauli corrections. By the induction hypothesis, $F$ is a vertex minor of $G\setminus u$, thus of $G$.
    \item[$(ii)$] If $P_0=Y$, then the measurement of $u$ can be interpreted as a Z-measurement on $\ket{G\star u}$ (up to local Clifford operations on some other qubits), thus according to $(i)$, $F$ is a vertex minor of $G\star u$, so is of $G$. 
    \item[$(iii)$] If $P_0=X$ and $N_G(u)\neq \emptyset$, then  the measurement $u$ can be interpreted as a Y-measurement on $\ket{G\star v}$ with $v\in N_G(u)$ (up to local Clifford operations on qubits different from $u$), thus according to $(ii)$ $F$ is a vertex minor of $G\star v$, so is of $G$.
    \item[$(iv)$] If $P_0=X$ and $N_G(u)=\emptyset$, then $\ket G = \ket {G\setminus u}\otimes \ket +_u$ so after the measurement of $u$ the state is $\ket {G\setminus u}$, thus, by the induction hypothesis, $F$ is a vertex minor of $G\setminus u$, so is of $G$. \qedhere
\end{itemize}
\end{proof}

\begin{corollary}
    The two following statements are equivalent:
    \begin{itemize}
        \item The graph $G$ is $k$-vertex-minor universal;
        \item Over any $k$ qubits of the graph state $\ket G$, any graph state can be induced using only local Clifford operations, local destructive Pauli measurements, and classical communications.
    \end{itemize}
\end{corollary}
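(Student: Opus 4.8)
The plan is simply to peel off the quantifiers in the definition of $k$-vertex-minor universality and invoke \cref{prop:vm} pointwise. By definition, $G=(V,E)$ is $k$-vertex-minor universal exactly when, for every subset $V' \se V$ with $|V'| = k$ and every graph $F = (V', E')$ on that vertex set, $F$ is a vertex-minor of $G$. First I would recall the one-to-one correspondence between graphs and graph states: once $k$ qubits of $\ket G$ are fixed (i.e.\ $V'$ is fixed), ranging over all graph states on those qubits is the same as ranging over all graphs $F$ on $V'$. Hence the second statement of the corollary unfolds to: for every $V' \se V$ with $|V'|=k$ and every graph $F$ on $V'$, the state $\ket F$ can be obtained from $\ket G$ on the qubits corresponding to $V'$ using only local Clifford operations, local destructive Pauli measurements, and classical communications.

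The two unfolded statements now differ only in the inner clause attached to each pair $(V', F)$: ``$F$ is a vertex-minor of $G$'' versus ``$\ket F$ can be obtained from $\ket G$ on $V'$ by local Cliffords, local destructive Pauli measurements, and classical communication''. But this is precisely the content of \cref{prop:vm}, which asserts that these two clauses are equivalent for any fixed $G$ and $F$ with $V' \se V$. Quantifying this equivalence uniformly over all admissible pairs $(V', F)$ with $|V'| = k$ yields the equivalence of the two statements of the corollary, so the proof consists of writing ``by \cref{prop:vm}'' after unwinding the definitions.

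There is essentially no obstacle here, since the substantive work has already been carried out in \cref{prop:vm}; what remains is only bookkeeping. The one point deserving care is to read ``any graph state on any $k$ qubits'' in the second bullet with the qubits identified with a vertex subset $V' \se V$, so that $\ket F$ and $\ket G$ share a common labelling on $V'$; and to note that the delicate case of isolated vertices (where non-destructive measurements do not suffice to characterize vertex-minors) is already absorbed by the use of \emph{destructive} measurements in \cref{prop:vm}. With that reading fixed, the corollary is immediate.
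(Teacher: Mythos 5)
Your proof is correct and matches the paper's (implicit) argument: the corollary is stated there without a separate proof precisely because it is the pointwise application of \cref{prop:vm} quantified over all $k$-subsets $V'$ and all graphs $F$ on $V'$, which is exactly what you write out.
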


As any stabilizer state is LC-equivalent to a graph state (see \cref{prop:stabilizer_LC}), this is actually equivalent to being able to induce any stabilizer state over $k$ qubits.

\begin{remark}
    Vertex-minor universality was introduced as a generalization of the notion of pairability \cite{bravyi2024generating}. A quantum state $\ket \psi$ is said $k$-pairable if for every $k$ disjoint pairs of qubits  $\{a_1, b_1\},\ldots,\{a_k, b_k\}$, there exists a LOCC (local operations and classical communication) protocol that starts with $\ket \psi$ and ends up with a state where each of those $k$ pairs of qubits shares an EPR-pair. When restricting the allowed operations to local Clifford operations, local  Pauli measurements, and classical communication, a graph state $\ket G$ is $k$-pairable if any graph corresponding to a perfect matching on any $2k$ vertices, is a vertex-minor of $G$. Pairability is actually a weaker notion than vertex-minor universality, as if $G$ is $2k$-vertex-minor universal then $\ket G$ is $k$-pairable.
\end{remark}

\section{Bounds on vertex-minor universality}

\subsection{Bound on the minimal degree up to local complementation}\label{subsec:bound_dloc}

The vertex-minor universality of a graph is bounded by its minimum degree up to local complementation $\dloc$ (see \cref{def:dloc}).

\begin{proposition}
    \label{prop:deltaloc_vmu}
    If a graph $G$ is $k$-vertex-minor universal then $k<\dloc(G)+2$.
\end{proposition}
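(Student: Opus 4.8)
The plan is to prove the contrapositive: assuming $k \gs \dloc(G) + 2$, I would construct a graph on $k$ vertices of $G$ that is not a vertex-minor of $G$, so that $G$ fails to be $k$-vertex-minor universal. Write $d := \dloc(G)$. By definition of $\dloc$ the minimum is attained, so there exist a graph $G' =_{LC} G$ and a vertex $u$ with $\delta_{G'}(u) = d$. Put $L := \{u\} \cup N_{G'}(u)$, so $|L| = d+1 \ls k-1$. Since $L = \{u\} \cup Odd_{G'}(\{u\})$, the set $L$ is a local set of $G'$, with generator $\{u\}$; as local sets are invariant under local complementation (\cref{cor:localset_invariant}), $L$ is also a local set of $G$, and indeed of every $G'' =_{LC} G$.

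Next I would record the following observation, which is the crux of the argument: if $F$ is a vertex-minor of $G$ whose vertex set $W$ contains $L$, then $L$ is a local set of $F$. Indeed, by the equivalent description of vertex-minors recalled above, $F = G''[W]$ for some $G'' =_{LC} G$; and $L$ is a local set of $G''$, say $L = D \cup Odd_{G''}(D)$ with $\emptyset \ne D \se L$. Since $Odd_{G''}(D) \se L \se W$, restricting to $W$ removes no vertex of $Odd_{G''}(D)$, so $Odd_{G''[W]}(D) = Odd_{G''}(D) \cap W = Odd_{G''}(D)$, whence $D \cup Odd_{F}(D) = D \cup Odd_{G''}(D) = L$.

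It then remains to exhibit $W$ and $F$ with $L$ \emph{not} a local set of $F$. I would choose a set $W$ of $k$ vertices of $G$ with $L \se W$ (possible since $|L| \ls k-1$ and $G$, being $k$-vertex-minor universal, has at least $k$ vertices), pick $w \in W \sm L$, and let $F$ be the graph on $W$ whose unique edge is $\{u,w\}$. Suppose, for a contradiction, that $L = D \cup Odd_F(D)$ for some $\emptyset \ne D \se L$. Since $w \notin L$ we would need $w \notin Odd_F(D)$, i.e.\ $|N_F(w)\cap D| = |\{u\}\cap D|$ is even, so $u \notin D$; hence $D \se L \sm \{u\} = N_{G'}(u)$, whose vertices are all isolated in $F$, so $Odd_F(D) = \emptyset$ and $D \cup Odd_F(D) = D$ does not contain $u \in L$ — a contradiction. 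Thus $L$ is not a local set of $F$, so by the observation above $F$ is not a vertex-minor of $G$; as $F$ is a graph on $k$ vertices of $G$, this shows $G$ is not $k$-vertex-minor universal, completing the contrapositive. The only points needing a little care are the stability of the local set $L$ under taking the induced subgraph $G''[W]$ and the short verification that the single-edge graph $F$ destroys $L$ as a local set; I do not anticipate any serious obstacle.
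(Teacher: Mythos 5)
Your proof is correct and follows essentially the same route as the paper's: both take a local set $L$ of size $\dloc(G)+1$, target a graph whose unique edge joins a vertex of $L$ to a vertex outside $L$, and derive a contradiction from the LC-invariance of local sets together with their stability under taking the induced subgraph on a superset of $L$. The only cosmetic differences are that you handle arbitrary $k \gs \dloc(G)+2$ directly by padding $W$ to size $k$ (the paper treats only $k=\dloc(G)+2$, leaving the reduction from larger $k$ implicit) and that you isolate the restriction step as an explicit observation.
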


\begin{proof}
By contradiction, assume there exists a graph $G$ that is $(\dloc(G)+2)$-vertex-minor universal. By \cref{prop:dloc_localset}, $G$ contains a local set $L=D\cup Odd_G(D)$ of size $\dloc(G)+1$. By hypothesis, there exists a sequence of local complementations that maps $G$ to a graph $G'$ such that $G'[L \cup u]=F$, where $F$ is the graph defined on $L \cup u$ ($u$ being an arbitrary vertex of $V \sm L$), such that its only edge is between $u$ and an arbitrary vertex $v \in L$.
By \cref{cor:localset_invariant}, a local set is invariant by local complementation. Thus, $L$ is still a local set in $G'$, i.e. there exists $D'$ s.t.~$L=D'\cup Odd_{G'}(D')$. If $v \in D'$ then $u \in Odd_{G'}(D')$. Also, $v \notin Odd_{G'}(D')$ because v is not adjacent to any other vertex of $L$. This contradicts that $L$ is still a local set.
\end{proof}

\subsection{Bound on the number of qubits}

Given any $k$, a $k$-vertex-minor universal graph has at least a quadratic order in $k$:

\begin{proposition}\label{prop:bound_vmu}
    If a graph $G$ of order $n$ is $k$-vertex-minor universal then $$k< \sqrt{2n\log_2(3)}+2$$
\end{proposition}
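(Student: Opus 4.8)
The plan is to count, via vertex-minors, how many distinct graphs on a fixed $k$-subset of vertices a $k$-vertex-minor universal graph must be able to realize, and compare this with an upper bound on the number of graphs obtainable as vertex-minors of a single fixed graph of order $n$. Since $G$ is $k$-vertex-minor universal, for any choice of a $k$-element subset $W$ of the vertices, every one of the $2^{\binom{k}{2}}$ labelled graphs on $W$ must arise; in particular fixing one such $W$ already forces at least $2^{\binom{k}{2}}$ distinct graphs to be vertex-minors of $G$ (all supported on the same vertex set $W$, hence genuinely distinct as labelled graphs).

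Next I would bound from above the number of labelled graphs on a fixed vertex subset that can be vertex-minors of $G$. By the second characterization of vertex-minor in the excerpt, $F$ with $V'\se V$ is a vertex-minor of $G$ iff there is $G'=_{LC}G$ with $G'[V']=F$. So every vertex-minor on a fixed set $W$ is the restriction to $W$ of some graph in the LC-orbit of $G$. The key input is a bound on the size of an LC-orbit: by \cref{prop:lclc} (or \cref{prop:clifford_ABCD}), any graph LC-equivalent to $G$ is obtained by a local Clifford operator, which up to Pauli and global phase is one of at most $6$ single-qubit operations per vertex (cases $1$--$6$ in the correspondence with $A,B,C,D$), so the LC-orbit of $G$ has size at most $6^n = 2^{n\log_2 6}$. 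Hmm — but that gives $\binom{k}{2}\le n\log_2 6$, i.e. $k \lesssim \sqrt{2n\log_2 6}$, which is weaker than the claimed $\sqrt{2n\log_2 3}$. To get the sharper constant I would instead fix the subset $W$ and count only the distinct \emph{restrictions} $G'[W]$: the restriction to $W$ depends only on the local Clifford operators applied on vertices of $W$ up to Pauli (Paulis commute with $CZ$ up to $Z$-corrections which die under the measurement/restriction), and crucially also the action on $V\setminus W$ only matters up to the induced graph on $W$. More precisely, I expect that the restriction $G'[W]$ is determined by the $6$-valued "type" of each of the $k$ vertices of $W$ together with the adjacency data, but the types of vertices outside $W$ contribute only through their effect on edges inside $W$; reorganizing, the count of achievable $G'[W]$ is at most $6^{|W|} \cdot (\text{something})$, and a more careful accounting — using that $H$-type and $I$-type operations differ only by whether one pivots, and that the number of genuinely distinct restrictions is bounded by $3^k$ times a polynomial, or directly $2^{n\log_2 3 + O(k)}$ — should yield $\binom{k}{2} < n\log_2 3 + O(k)$. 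Taking square roots and absorbing lower-order terms gives $k < \sqrt{2n\log_2 3} + 2$ for the stated range.

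Concretely, the cleanest route I would actually write: fix $W$ with $|W|=k$; by universality all $2^{\binom k2}$ graphs on $W$ are of the form $G'[W]$ for $G'=_{LC}G$. Now I claim the number of distinct graphs $G'[W]$ as $G'$ ranges over the LC-orbit of $G$ is at most $3^n$. The point is that, up to relabelling by local complementations confined to $W$ and pivotings, and using the classification of single-qubit Cliffords modulo Pauli into the $6$ classes grouped into $3$ "columns" ($\{I,X(\pi/2)\}$, $\{Z(\pi/2), H\}$-type pairs, etc.), the induced graph on $W$ is fixed once we know, for each vertex, which of $3$ equivalence classes (rather than $6$) its operation falls in — the remaining binary choice being absorbable into a local complementation. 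This gives $2^{\binom k2} \le 3^n = 2^{n\log_2 3}$, hence $\binom k2 \le n\log_2 3$, and since $\binom k2 = \frac{k(k-1)}{2} > \frac{(k-2)^2}{2}$, we get $(k-2)^2 < 2n\log_2 3$, i.e. $k < \sqrt{2n\log_2 3} + 2$.

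The main obstacle is proving the clean $3^n$ (rather than $6^n$) bound on the number of distinct induced subgraphs $G'[W]$ over the LC-orbit — i.e., justifying that the $6$ single-qubit Clifford classes collapse to $3$ once we allow adjusting by a local complementation on $W$. I would handle this by a careful case analysis using the type-update tables in the proofs of \cref{prop:Bouchet} and \cref{lemma:standardform}: applying a local complementation on a vertex $u\in W$ toggles certain types (e.g. swaps cases within a pair), so the orbit of restrictions is invariant under these toggles, and a transversal of the toggle-action on $\{1,\dots,6\}^W$ has size $3^{|W|}$ per vertex for vertices in $W$, while vertices outside $W$ — whose local operation influences $G'[W]$ only through edge toggles already captured by choices inside the orbit — contribute a factor that must be shown to also be bounded by $3$ per vertex (or folded in). If the cleaner collapse to exactly $3^n$ proves delicate, the fallback is to prove $2^{\binom k2} \le 3^n\cdot 2^{O(k)}$, which still yields the stated inequality for all sufficiently large $n$ and, with the $+2$ slack, plausibly for all $n$ in the intended range; I would verify the small cases by the crude $6^n$ bound combined with the additive slack.
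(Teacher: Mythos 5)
Your overall strategy --- pitting the $2^{\binom k2}$ graphs that universality forces on a fixed $k$-set $W$ against an upper bound on how many graphs on $W$ are reachable from $G$ --- is the same counting strategy as the paper's, but the quantitative heart of your argument has a gap. The paper gets its factor of $3$ per vertex only from the $n-k$ \emph{deleted} vertices: by \cref{prop:vm}, a vertex-minor on $W$ is produced by local Cliffords plus destructive Pauli measurements, there are $3$ Pauli bases per measured qubit (opposite outcomes differ by Pauli corrections, which do not change the resulting graph), so at most $3^{n-k}$ LC-classes of graphs on $W$ are reachable; it then invokes the known lower bound $2^{(k^2-5k)/2-1}$ on the number of LC-equivalence classes of labelled graphs on $k$ vertices and concludes from $3^{n-k}\geq 2^{(k^2-5k)/2-1}$.

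Your route instead requires that the number of distinct induced subgraphs $G'[W]$ over the LC-orbit of $G$ be at most $3^n$, and the justification you sketch --- that the six single-qubit Clifford classes mod Pauli collapse to three because ``the remaining binary choice is absorbable into a local complementation'' --- does not work as stated: a local complementation on a vertex of $W$, or on a vertex outside $W$ with at least two neighbours in $W$, changes $G'[W]$, so nothing is being absorbed; at best you learn that the reachable set is a union of orbits, which does not reduce its cardinality. Your fallback is, however, provable and sufficient: $3^{n-k}$ choices of Pauli basis on $V\setminus W$, times at most $6^k$ graphs in a single LC-class on $k$ labelled vertices (each is $C\ket{G_0}$ for one of $6^k$ local Cliffords mod Pauli and phase), gives $2^{\binom k2}\leq 3^{n-k}6^k = 3^n 2^k$, hence $k^2-3k\leq 2n\log_2 3$; since $(k-2)^2<k^2-3k$ for $k\geq 5$ (and the cases $k\leq 4$ are checked directly), this yields $k<\sqrt{2n\log_2 3}+2$. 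So the proposal is salvageable along the lines you indicate as a fallback, but as written the key step is asserted rather than proved, and the proposed proof of it is incorrect.
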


\begin{proof}
If $F$ of order $k$ is a vertex-minor of $G$ of order $n$, then $\ket F$ can be obtained from $\ket G$ by means of local Pauli measurements on $n-k$ qubits and  local Clifford unitaries on $k$ qubits. There are 3 possible Pauli measurements per qubit, so $3^{n-k}$ in total. Notice that for a fixed choice of Pauli measurements, different local Clifford transformations on the remaining $k$ qubits can only generate graph states corresponding to graphs that are equivalent up to local complementation. Moreover, it is known that there are at least $2^{\frac{k^2-5k}2-1}$ different graphs on $k$ vertices up to local complementation \cite{bahramgiri2007enumerating}. As a consequence, if $G$ is $k$-vertex-minor universal, we must have $3^{n-k}\gs 2^{\frac{k^2-5k}2-1}$. Using numerical analysis, this implies $k< \sqrt{2n\log_2(3)}+2$.
\end{proof}

This bound was pointed out by Sergey Bravyi in a personal communication.

\section{Vertex-minor universality of some small graphs} \label{sec:vmu_smallgraphs}

To illustrate the concept of vertex-minor universality, we study the vertex-minor universality of some small graphs. The results of this section were observed by numerical analysis. The code consists in an exploration of the orbit of a given graph by local complementation (either by using essentially a breadth-first search, or by applying local complementations on random vertices, which yields results faster), to find the ${n \choose k}2^{k \choose 2}$ possible induced subgraphs of order $k$. First, we provide minimal examples of $k$-vertex-minor universal graphs for $k$ up to $4$, illustrated in Figure \ref{fig:examples_vmu}.

\begin{figure}[H]

\centering
\scalebox{1}{

%Graph 1
\begin{tikzpicture}[scale = 0.5]
\begin{scope}[shift={(0,0)},scale=1.4,every node/.style=vertices]
    \node (U1) at (0,0) {1};
    \node (U2) at (3,5) {2};
    \node (U3) at (6,0) {3};
\end{scope}
\begin{scope}[every edge/.style=edges] 
    \path [-] (U1) edge node {} (U2);
    \path [-] (U1) edge node {} (U3);
    \path [-] (U2) edge node {} (U3);
\end{scope}

%Graph 2
\begin{scope}[shift={(13,0)},scale=1,every node/.style=vertices]
    \node (U4) at (0,0) {1};
    \node (U3) at (4,0) {6};
    \node (U5) at (-2,3.464) {2};
    \node (U2) at (6,3.464) {5};
    \node (U0) at (0,6.928) {3}; 
    \node (U1) at (4,6.928) {4};  
\end{scope}
\begin{scope}[every edge/.style=edges]     
    
    \path [-] (U2) edge node {} (U3);
    \path [-] (U3) edge node {} (U4);
    \path [-] (U4) edge node {} (U5);
    \path [-] (U5) edge node {} (U0);
    \path [-] (U0) edge node {} (U1);
    \path [-] (U1) edge node {} (U2);
\end{scope}

%Graph 3
\begin{scope}[shift={(26,2.853*1.3)},scale=1.3,every node/.style=vertices]
    \node (U0) at (-0.927,2.853) {5};
    \node (U1) at (0.927,2.853) {6};
    \node (U2) at (2.427,1.763) {7};
    \node (U3) at (3,0) {8};
    \node (U4) at (2.427,-1.763) {9};
    \node (U5) at (0.927,-2.853) {10}; 
    \node (U6) at (-0.927,-2.853) {1};  
    \node (U7) at (-2.427,-1.763) {2};
    \node (U8) at (-3,0) {3};
    \node (U9) at (-2.427,1.763) {4};  
\end{scope}
\begin{scope}[every edge/.style=edges]              
    \path [-] (U0) edge node {} (U1);
    \path [-] (U1) edge node {} (U2);
    \path [-] (U2) edge node {} (U3);
    \path [-] (U3) edge node {} (U4);
    \path [-] (U4) edge node {} (U5);
    \path [-] (U5) edge node {} (U6);
    \path [-] (U6) edge node {} (U7);
    \path [-] (U7) edge node {} (U8);
    \path [-] (U8) edge node {} (U9);
    \path [-] (U9) edge node {} (U0);
    \path [-] (U0) edge   node {} (U5);
    \path [-] (U1) edge   node {} (U6);
    \path [-] (U2) edge   node {} (U7);
    \path [-] (U3) edge  node {} (U8);
    \path [-] (U4) edge   node {} (U9);
\end{scope}
\end{tikzpicture}
}   
\caption{(Left) The graph $K_3$ is 2-vertex-minor universal. (Middle) The graph $C_6$ is 3-vertex-minor universal. (Right) The "wheel" graph of order 10 is 4-vertex-minor universal.}
\label{fig:examples_vmu}
\end{figure}
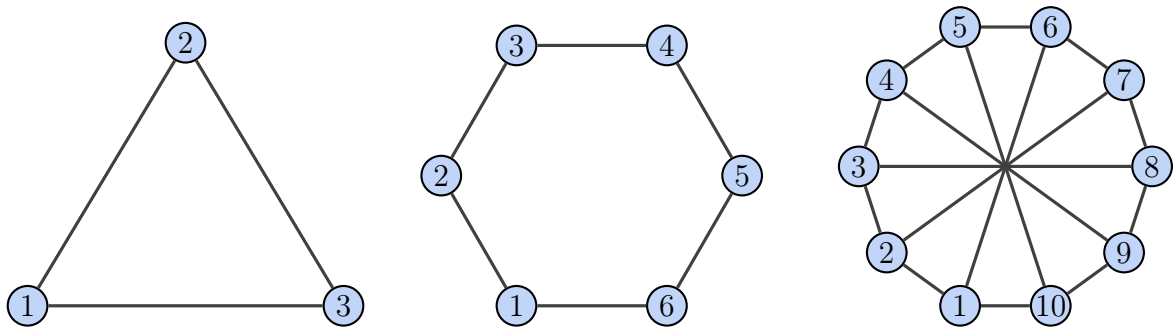

\subparagraph{2-vertex-minor universality.}
As we saw above, $K_2$ is not 2-vertex-minor universal: $K_3$ (see Figure \ref{fig:examples_vmu}, left) is actually the smallest 2-vertex-minor universal graph, because every edge can be toggled by a local complementation on the opposite vertex.
\subparagraph{3-vertex-minor universality.}
We observed that $C_6$, the cycle of order 6 (see Figure \ref{fig:examples_vmu}, middle), is a 3-vertex-minor universal graph, and there exists no 3-vertex-minor universal graph of order smaller or equal to 5. Indeed, using the database from \cite{Adcock20} along with \cref{prop:deltaloc_vmu}, it appears that the only graph with order 5 or less having a minimum degree up to local complementation larger or equal to 2 (thus being a candidate for 3-vertex-minor universality) is $C_5$. We have observed however that $C_5$ is not 3-vertex-minor universal by exploring its orbit by local complementation composed of 132 graphs, and checking that none of them contains an independent set of size 3.
\subparagraph{4-vertex-minor universality.}
We observed that the 10-vertex "wheel" graph from \cite{bravyi2024generating} (see Figure \ref{fig:examples_vmu}, right) is 4-vertex-minor universal. Conversely, no graph of order $9$ or less is 4-vertex-minor universal \cite{bravyi2024generating}. We also observed that the Petersen graph (see Figure \ref{fig:petersen}), of order 10, is also 4-vertex-minor universal.

\vspace{1em}

\cref{prop:deltaloc_vmu} motivates looking at graphs with large minimum degree up to local complementation to find graphs with large vertex-minor universality. Paley graphs\footnote{Given a prime $q = 1 \bmod 4$, the Paley graph of order $q$ is a graph which vertices are elements of the finite field $\mathbb{F}_q$, such that two vertices share an edge if their difference is a square in $\mathbb{F}_q$.} are good candidates in this regard, as their minimum degree up to local complementation scales as the square root of their order \cite{Javelle12}. Numerical analysis indicates that the Paley graph of order 13 is 4-vertex-minor universal (but not 5-vertex-minor universal), the Paley graph of order 17 is 5-vertex-minor universal (but not 6-vertex-minor universal), and the Paley graph of order 29 is 5-vertex-minor universal. Although the computation becomes too long to check, the Paley graph of order 29 may be 6-vertex-minor universal, as any perfect matching on any 6 vertices is a vertex-minor.

\section{An optimal probabilistic construction}

In this section we prove that the bound given in \cref{prop:bound_vmu} is tight asymptotically, i.e. there exist $k$-vertex-minor universal graphs whose order grows quadratically with $k$. 

\begin{theorem}
    For any constant $\alpha>2$, there exists $k_0 \in \mathbb N$ such that for any $k>k_0$, there exists a $k$-vertex-minor universal graph $G$ of order at most $\alpha k^2$.
    \label{thm:existence_vmu}
\end{theorem}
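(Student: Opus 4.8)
\textbf{Proof plan for \cref{thm:existence_vmu}.}

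The plan is to use the probabilistic method on random bipartite graphs, exploiting the structure of vertex-minors of bipartite graphs. Let $G$ be a random bipartite graph with parts $L$ and $R$, where $|L| = |R| = m$ and each potential edge between $L$ and $R$ is present independently with probability $1/2$; then $n = 2m$. The goal is to show that for $m$ slightly larger than $k^2/4$, with positive probability $G$ is $k$-vertex-minor universal, which gives a graph of order $\alpha k^2$ for any $\alpha > 2$ once $k$ is large enough. The key reduction is that it suffices to show: for every choice of $k$ vertices $W \se L$ (we will argue one may assume, after a first round of local complementations and vertex deletions, that the target sits on one side) and every target graph $F$ on $W$, $F$ is a vertex-minor of $G$. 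Concretely, the plan is to use a fixed disjoint ``gadget'' set of vertices on the $R$-side, one gadget vertex $r_{uv}$ for each pair $\{u,v\} \subseteq W$ (so $\binom{k}{2}$ gadget vertices), together with extra $R$-vertices to clean up. Pivoting or local complementation on a suitably chosen gadget vertex $r_{uv}$ toggles exactly the edge $uv$ among $W$ (up to controllable side effects on $R$, which are then deleted), so if the required gadget vertices exist with the right neighborhoods we can realize any $F$.

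The heart of the argument is therefore a counting/union-bound estimate: I would show that with probability tending to $1$, for every $k$-subset $W$ of $L$ and every pair $\{u,v\} \subseteq W$ there is an $R$-vertex $r$ whose neighborhood inside $W$ is exactly $\{u,v\}$ and whose behavior outside $W$ is irrelevant (it will be deleted). The probability that a fixed $R$-vertex $r$ has $N_G(r) \cap W = \{u,v\}$ is $2^{-k}$; the probability that no $R$-vertex works for a fixed $(W,\{u,v\})$ is $(1 - 2^{-k})^{m}$; there are at most $\binom{m}{k}\binom{k}{2} \le m^k \cdot k^2$ choices of $(W,\{u,v\})$. So the failure probability is at most $m^{k} k^2 (1 - 2^{-k})^{m} \le m^k k^2 \exp(-m 2^{-k})$. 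Choosing $m = c\, k^2 2^{k}$ for a suitable constant $c$ makes $\exp(-m2^{-k}) = \exp(-ck^2)$ dominate $m^k = \exp(k \ln m) = \exp(\Theta(k^2 + k\ln k))$ — but wait, this forces $m$ exponential in $k$, which is far worse than $\alpha k^2$. The fix, and the genuinely delicate point, is that one must \emph{not} insist on vertices with neighborhood exactly $\{u,v\}$ in $W$; instead one uses that the $\binom{k}{2}$ ``parity patterns'' needed are only required to be \emph{jointly realizable by linear combinations} (over $\mathbb{F}_2$, via sequences of pivots, cf. \cref{prop:Bouchet} and the linear-algebraic viewpoint of \cref{chap:glc}) of the $\approx m$ neighborhood vectors $\{N_G(r) \cap W : r \in R\} \subseteq \mathbb{F}_2^{W}$. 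Since $\dim \mathbb{F}_2^W = k$, one only needs the neighborhood vectors restricted to $W$ to \emph{span} $\mathbb{F}_2^{W}$ (dimension $k$), plus enough auxiliary vertices to absorb side effects; a random set of $m \gg k$ vectors in $\mathbb{F}_2^k$ spans with probability $1 - O(2^{k-m})$, and a union bound over $\binom{m}{k} \le m^k$ choices of $W$ needs only $m 2^{-m} \cdot m^k \to 0$, i.e. $m = (1+o(1))\tfrac{k^2}{4} \cdot \frac{1}{(\ln 2 /2)}$ — more carefully, $m \ge (2+\varepsilon)\,k^2 / 2$ type bound, which after tracking constants yields $n = 2m \le \alpha k^2$ for any $\alpha > 2$.

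So the real plan is: (1) set up the random bipartite model and reduce $k$-vertex-minor universality to a statement about being able to realize, via pivots along $R$ and deletions, an arbitrary adjacency pattern on an arbitrary $W \se L$ of size $k$; (2) translate ``realizable by pivots'' into the linear-algebra condition that the $W$-restricted neighborhood vectors of $R$-vertices span $\mathbb{F}_2^{\binom{W}{2}}$ (or the appropriate space of edge-toggle patterns) — here I would lean on the isotropic-systems / $\mathbb{F}_2$-matrix description of local complementation recalled around \cref{prop:Bouchet} and \cref{prop:clifford_ABCD}, and on the fact that for bipartite graphs pivoting has the clean description in \cref{prop:pivoting_preserves_bipartite}; (3) run the union bound, showing the probability that some $W$ fails is $o(1)$ provided $m \ge (1+o(1)) \cdot \tfrac{\alpha}{2} k^2$; (4) conclude that for $k$ large a $k$-vertex-minor universal graph on $n \le \alpha k^2$ vertices exists. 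The main obstacle is step (2): making precise exactly which space of edge-toggle patterns on $W$ must be spanned, and verifying that spanning that space really does let one induce \emph{every} graph $F$ on $W$ (not merely every $F$ reachable by pivots on gadget vertices with disjoint supports) — one must handle the interaction/ordering of the pivots and the cleanup of the $R$-side carefully, and check that the dimension count ($\Theta(k^2)$ constraints vs. $\Theta(m)$ random vectors) is the one that actually governs the size, so that the final constant in front of $k^2$ can be pushed down to any $\alpha > 2$.
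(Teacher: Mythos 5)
Your overall strategy is the same as the paper's: a random bipartite graph, a first round of pivots/deletions to move the $k$ target vertices onto one side (this matches the paper's reduction), a linear-algebraic description of which edge-toggle patterns on $W$ are reachable by local complementations on the other side, and a union bound over the $\binom{n}{k}$ choices of $W$. However, there is a genuine gap at the heart of step (2), and it is exactly the point your own last paragraph flags as "the main obstacle" without resolving. The space that must be spanned is \emph{not} $\mathbb{F}_2^{W}$ of dimension $k$: a local complementation on a right-vertex $r$ toggles every pair of $W$ contained in $N(r)$, so its contribution is the vector $\bigl(\mathbb{1}[u\in N(r)]\cdot\mathbb{1}[v\in N(r)]\bigr)_{\{u,v\}}\in\mathbb{F}_2^{\binom{W}{2}}$, a \emph{quadratic} image of the neighborhood vector. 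To realize every graph on $W$ you need these images to span $\mathbb{F}_2^{\binom{W}{2}}$, which has dimension $\binom{k}{2}=\Theta(k^2)$. Your union-bound computation ("$m$ random vectors in $\mathbb{F}_2^k$ span with probability $1-O(2^{k-m})$, hence $m2^{-m}\cdot m^k\to 0$") is carried out in the wrong space: spanning $\mathbb{F}_2^W$ does not imply spanning $\mathbb{F}_2^{\binom{W}{2}}$ (the quadratic map does not respect linear spans), and the failure probability of spanning the $\binom{k}{2}$-dimensional space cannot be read off from uniform-random-vector estimates because the image vectors are not uniform in $\mathbb{F}_2^{\binom{W}{2}}$. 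The missing ingredient is the paper's key lemma: if the current edge-toggle matrix has rank $j<\binom{k}{2}$, take a nonzero vector $F$ (a graph on $W$) orthogonal to its rows; a fresh random right-vertex increases the rank iff the number of edges of $F$ with both endpoints in its neighborhood is odd, and this happens with probability at least $1/4$. Combined with a Chernoff bound this shows $|R|\gs 4\binom{k}{2}+O(1)$ right-vertices suffice with exponentially small failure probability, which is what makes the union bound over $\binom{n}{k}$ sets close. Without some such lemma your argument does not establish that $|R|=O(k^2)$ is enough.

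A secondary issue is the choice $|L|=|R|=m$, $n=2m$. Since the rank argument consumes roughly $4\binom{k}{2}\approx 2k^2$ right-vertices, a symmetric split forces $n\gtrsim 4k^2$, i.e.\ only $\alpha>4$. The paper takes an asymmetric split, $|L|=\Theta(k\log k)$ and $|R|=ck^2$ with $c>2$ (the left side only needs to be large enough that moving $K\cap R$ to the left succeeds and the union bound over $W$ closes), and this asymmetry is what pushes the constant down to any $\alpha>2$ as claimed in the statement.
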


The remaining of this section is a proof of \cref{thm:existence_vmu}. First we bound the probability that some graph of order $k$ is not a vertex-minor of a random bipartite graph $G$, in \cref{lemma:proba_full_rank}. Then we bound the probability that such a random bipartite graph is $k$-vertex-minor universal, in \cref{lemma:proba_vmu}, by defining some algorithm that tries to generate any graph as a vertex-minor of $G$. Finally, we prove that there exists a $k$-vertex-minor universal bipartite graph of quadratic order in $k$.
More precisely, the probability of a random bipartite graph of quadratic order being $k$-vertex-minor universal goes to $1$ exponentially fast in $k$. In this section, we note $L(G),R(G)$ the bipartition of a bipartite graph $G = (V,E)$ (implying $V=L(G)\cup R(G)$ and $L(G) \cap R(G) = \emptyset$).

\begin{proposition}
    Fix constants $\epsilon > 0$, $c>2$, and $c'>  \frac{1+\epsilon}{\ln(2)}$. There exists $k_0 \in \mathbb N$ such that for any $k>k_0$, the random bipartite graph $G$ (the probability of an edge existing between two vertices, one in $L(G)$ and one in $R(G)$, is $1/2$, independently of the other edges) with $|L(G)|= \lfloor c'k\ln(k) \rfloor$ and $|R(G)| = \lfloor c k^2 \rfloor$, is $k$-vertex-minor universal with probability at least $1 - e^{-\epsilon k \ln(k)}$. 
    \label{prop:proba_exp_vmu}
\end{proposition}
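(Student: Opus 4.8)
\textbf{Proof plan for \cref{prop:proba_exp_vmu}.}

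The plan is to analyze a natural "peeling" algorithm that tries to realize an arbitrary target graph $F$ of order $k$ as a vertex-minor of the random bipartite graph $G$. Fix $F$ on a set of $k$ vertices, and think of $G$ as having $R(G)$ the big side (of size $\sim ck^2$) and $L(G)$ the small side (of size $\sim c'k\ln k$), with each edge present independently with probability $1/2$. The idea is to embed the $k$ target vertices among $R(G)$ and to use the vertices of $L(G)$ one at a time as "control" vertices on which we apply a local complementation (followed eventually by a $Z$-measurement to delete it). A local complementation on a vertex $\ell \in L(G)$ toggles all edges inside $N_G(\ell) \cap R(G)$, which, conditioned on the embedding, is a uniformly random subset of the chosen $k$ target vertices. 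So the question becomes: using $m = |L(G)|$ i.i.d. uniformly random vectors in $\mathbb F_2^{\binom{k}{2}}$ (each $\ell$ contributing the indicator vector of the pairs it connects among the $k$ targets), together with the initial random configuration of edges among the $k$ targets, can we reach the prescribed configuration of $F$? This is a linear-algebra reachability question over $\mathbb F_2$: we need the span of those $m$ random vectors to be all of $\mathbb F_2^{\binom k2}$, equivalently the $m \times \binom k2$ random $\mathbb F_2$-matrix must have full column rank $\binom k2$. Here is where \cref{lemma:proba_full_rank} and \cref{lemma:proba_vmu} (the two lemmas the excerpt announces will precede this proposition) come in: they bound, respectively, the probability that a single target graph fails, and the failure probability of the full algorithm across all target graphs.

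The key steps, in order, are as follows. First I would fix an arbitrary labelled graph $F$ on $k$ vertices and an injection of its vertex set into $R(G)$ (there are enough vertices in $R(G)$ since $|R(G)| \ge ck^2 \gg k$); conditioning on this injection leaves the edge-structure among the chosen $k$ vertices and the edges from each $\ell\in L(G)$ to them still uniformly random and independent. Second, I would show that whenever the $m \times \binom k2$ random matrix $M$ whose rows are the restrictions-to-the-$k$-targets of the neighborhoods of the $\ell$'s has full column rank, one can find a subset of the $\ell$'s whose local complementations (each $\ell$ used, then deleted by a $Z$-measurement, which only adds $Z$-corrections and does not affect the remaining structure) toggle exactly the set of pairs on which the current configuration differs from $F$; this uses \cref{prop:vm} to translate "vertex-minor" into "local Clifford + destructive Pauli measurements". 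Third, I would bound $\Pr[\operatorname{rank}_{\mathbb F_2}(M) < \binom k2]$: a standard union bound over the $2^{\binom k2}$ nonzero linear functionals gives $\Pr[\text{not full rank}] \le 2^{\binom k2} \cdot 2^{-m}$, since for a fixed nonzero functional $\phi$ the probability that $\phi$ annihilates all $m$ i.i.d. uniform rows is $2^{-m}$. With $m = \lfloor c'k\ln k\rfloor$ and $c' > (1+\epsilon)/\ln 2$, this is at most $2^{\binom k2 - m} \le e^{-(1+\epsilon)k\ln k/2 \cdot (1+o(1))}$; being a bit careful with constants, one gets a bound of the form $e^{-(1+\epsilon')k\ln k}$ for the per-graph failure probability for some $\epsilon' > 0$ and $k$ large. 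Fourth, I would union-bound over all target graphs $F$ on $k$ labelled vertices drawn from $R(G)$: there are at most $\binom{|R(G)|}{k} 2^{\binom k2} \le (ck^2)^k 2^{\binom k2} = e^{O(k\ln k)}$ of them (the $(ck^2)^k$ factor costs $O(k\ln k)$, the $2^{\binom k2}$ factor costs $O(k^2)$ — wait, that second term is too big, so one must instead fold the edge-configuration count into the rank argument rather than union-bounding over it separately). The correct accounting is: it suffices that $M$ has full column rank, a single event not depending on $F$'s edges, so we only union-bound over the $\le (ck^2)^k = e^{O(k\ln k)}$ choices of embedding, and $e^{O(k\ln k)} \cdot e^{-(1+\epsilon')k\ln k} \le e^{-\epsilon k\ln k}$ for $k$ large. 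Choosing $k_0$ to absorb the $o(1)$'s and the hidden constants finishes the proof.

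The main obstacle I anticipate is the bookkeeping in the union bound: the number of target configurations on the $\binom k2$ pairs is $2^{\binom k2}$, which dwarfs the $e^{-\Theta(k\ln k)}$ savings from having $m = \Theta(k\ln k)$ control vertices, so a naive "union bound over all $F$" is doomed. The resolution — and the conceptual heart of the argument — is that reachability of \emph{every} target configuration from \emph{any} starting configuration follows from the \emph{single} linear-algebraic event $\{\operatorname{rank}_{\mathbb F_2}(M) = \binom k2\}$, which is what makes $m = \Theta(k\ln k)$ (rather than $\Theta(k^2)$) control vertices suffice while keeping $|L(G)| \ll |R(G)| = \Theta(k^2)$, so that the total order $n = |L(G)| + |R(G)|$ is still $\sim ck^2$. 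One still must union-bound over the choice of which $k$ vertices of $R(G)$ play the role of $F$'s vertices, but that costs only $e^{O(k\ln k)}$, comfortably beaten by the $e^{-(1+\epsilon)k\ln k}$ rank-failure bound when $c' > (1+\epsilon)/\ln 2$. A secondary technical point to handle with care is that after applying a local complementation on $\ell \in L(G)$ and then deleting $\ell$, subsequent $\ell'$'s still connect to the $k$ targets with their original (independent) random edges — this is fine because deleting $\ell$ does not alter edges among $R(G)$ and because the $\ell$'s form an independent set, so their local complementations commute and act additively on the pair-configuration of the targets; I would state this commutativity/additivity explicitly, citing \cref{prop:lc_neighborhood} and the remark that $1$-local complementation over an independent set is well-defined.
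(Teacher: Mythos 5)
There is a fatal dimension error at the heart of your plan. You place the $k$ target vertices in $R(G)$ (the side of size $\sim ck^2$) and use the vertices of $L(G)$ (the side of size $m=\lfloor c'k\ln k\rfloor$) as controls, and you then require the $m\times\binom{k}{2}$ matrix of toggle vectors to have full \emph{column} rank $\binom{k}{2}$. Since $m=\Theta(k\ln k)\ll\binom{k}{2}=\Theta(k^2)$, the rank is at most $m$ and the full-rank event is impossible: $\Theta(k\ln k)$ local complementations can never span the $\binom{k}{2}$-dimensional space of edge configurations on $k$ vertices. The roles of the two sides must be the other way around, which is exactly what the paper does: the target set ends up inside $L(G)$ (small side) and the $\Theta(k^2)$ vertices of $R(G)$ serve as controls, so that $|R(G)|>4\binom{k}{2}$ leaves room for $\binom{k}{2}$ independent toggle vectors. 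This forces an extra step you do not have: since universality quantifies over \emph{all} $k$-subsets of $V$, including those meeting $R(G)$, one must first move the target vertices from $R(G)$ into $L(G)$ by pivotings while preserving the i.i.d.\ Bernoulli$(1/2)$ edge structure (this is the content of \cref{lemma:proba_vmu}, and it is precisely where the size $|L(G)|=\Theta(k\ln k)$ is used, via the failure probability $k/2^{|L(G)|-k+1}$ of finding a pivot partner, and where the $c'>(1+\epsilon)/\ln 2$ condition enters to beat the $\binom{n}{k}$ union bound).

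A second, independent error is your claim that the toggle vectors are "i.i.d.\ uniformly random vectors in $\mathbb F_2^{\binom k2}$" so that a fixed nonzero functional annihilates a row with probability $1/2$. The toggle vector of a control vertex $u$ is the indicator of the pairs lying inside the random set $N(u)\cap K$; it is a quadratic function of $k$ independent bits, supported on at most $2^k$ of the $2^{\binom k2}$ possible vectors, and is very far from uniform. The paper's \cref{lemma:proba_full_rank} has to work around this: for a nonzero functional $W$ (viewed as a graph on $K$) it proves only that $\Pr[U\cdot W=1]\ge 1/4$, via the argument conditioning on a vertex of nonzero degree in $W$, and then accumulates rank greedily over the $|R(G)|$ candidates with a Chernoff bound. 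Your "single full-rank event covers all target edge-configurations" observation is correct and is indeed the conceptual point shared with the paper, but both the side on which you run the linear algebra and the distributional assumption feeding the probability bound are wrong, so the proof as proposed does not go through.
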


\cref{prop:proba_exp_vmu} will be proved alongside \cref{thm:existence_vmu} in this section. Notation-wise, given a set $A$ and an integer $k$, ${A \choose k}$ refers to $\{B \se A ~|~|B|=k\}$.

\begin{lemma}
    \label{lemma:proba_full_rank}
    Consider a random bipartite graph $G$ with $|L(G)|\gs k$, $|R(G)| \gs 4{k \choose 2}+5$: the probability of an edge existing between two vertices (one in $L(G)$ and one in $R(G)$) is $1/2$, independently of the other edges. Take $k \in \mathbb{N}$ and consider a set of vertices $K \in {L(G) \choose k}$. The probability that there exists a graph defined on $K$ that is not a vertex-minor of $G$ is upper bounded by $e^{-\frac{\left(\frac{|R(G)|}{4}-{k \choose 2}+1\right)^2}{\left(\frac{7|R(G)|}{4}-{k \choose 2}+1\right)}}$.
\end{lemma}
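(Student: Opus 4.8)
\textbf{Proof plan for \cref{lemma:proba_full_rank}.}

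The plan is to reduce the statement ``some graph on $K$ is not a vertex-minor of $G$'' to a rank deficiency event for a random $\mathbb{F}_2$-matrix, and then apply a concentration bound. First I would recall, using \cref{prop:pivoting_preserves_bipartite}, that pivoting on an edge of a bipartite graph $G$ with bipartition $L(G), R(G)$ produces another bipartite graph, swapping the two pivoted vertices between sides and toggling edges between their neighbourhoods; more generally, any induced subgraph on $K \se L(G)$ that we can reach must still be bipartite with one side $K$ and the other side empty, so direct pivoting alone produces only the empty graph on $K$. The right move is therefore to ``import'' vertices from $R(G)$: the strategy is to pick, for the target graph $F$ on $K$ with edge set $E(F)$, a collection of vertices in $R(G)$ whose neighbourhoods (restricted to $K$) realise, via symmetric differences, every edge we want to toggle, then pivot them one at a time onto $K$ to place each such vertex ``inside'' $K$ acting as a local-complementation gadget. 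Concretely, a vertex $r \in R(G)$ with $N_G(r) \cap K = T$, once moved appropriately, lets us toggle exactly the edges of the complete graph $K_T$; since an arbitrary graph on $K$ has an edge set that is a sum (over $\mathbb{F}_2$) of at most $\binom{k}{2}$ such ``clique vectors'' of pairs, it suffices that the $\binom{k}{2}$-dimensional $\mathbb{F}_2$-space spanned by $\{\,\mathbf{1}_{\binom{T}{2}} : T \se K\,\}$, as $r$ ranges over available vertices of $R(G)$, be all of $\mathbb{F}_2^{\binom{K}{2}}$.

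Second, I would make this precise as a linear-algebra event and bound its probability. For each $r \in R(G)$, the set $N_G(r)\cap K$ is a uniformly random subset of $K$, and the associated vector $v_r \in \mathbb{F}_2^{\binom{K}{2}}$ (with a $1$ in coordinate $\{a,b\}$ iff $a,b \in N_G(r)$) is determined by it; these $v_r$ are i.i.d.\ across $r$. We want: with high probability the $v_r$ span $\mathbb{F}_2^{\binom{K}{2}}$. The cleanest route is to note that a uniformly random subset $T\se K$ has $|T|=2$ with constant probability $\binom{k}{2}/2^k$, which is tiny, so instead I would argue via the standard fact that a uniform random subset $T$ gives the vector $\mathbf{1}_{\binom{T}{2}}$, and the distribution of $\binom{T}{2}$ as a random element of $\mathbb{F}_2^{\binom{K}{2}}$ has support generating the whole space and ``enough spread'': one shows that for any fixed nonzero functional $\phi$, $\Pr[\phi(v_r)=1]$ is bounded away from $0$, uniformly; then the probability that all $|R(G)|$ of the $v_r$ lie in a fixed hyperplane is at most $(1-p_0)^{|R(G)|}$ for some constant $p_0$, and a union bound over the $2^{\binom{k}{2}}-1$ hyperplanes controls the failure probability. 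To get the sharper exponent claimed in the lemma --- $\exp\!\left(-\big(\tfrac{|R(G)|}{4}-\binom{k}{2}+1\big)^2 \big/ \big(\tfrac{7|R(G)|}{4}-\binom{k}{2}+1\big)\right)$ --- I would not use the crude hyperplane union bound but instead the greedy/sequential argument: process the $v_r$ one at a time, maintaining the span; a fresh $v_r$ fails to increase an at-most-$(\binom{k}{2}-1)$-dimensional span with probability at most $\tfrac12$... but that again is too weak, so the actual bound must come from: among the $|R(G)|$ vertices, at least $|R(G)|/4$ of them have $|N_G(r)\cap K|$ of a favourable size (a Chernoff estimate on the sizes), and among those, the probability of not completing a basis is a tail of a sum of $\binom{k}{2}$ geometric-type waiting times, which one bounds by a Bernstein/Chernoff inequality with variance proxy $\tfrac{7|R(G)|}{4}-\binom{k}{2}+1$. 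I expect the denominator $\tfrac{7|R(G)|}{4}-\binom{k}{2}+1$ and numerator $\tfrac{|R(G)|}{4}-\binom{k}{2}+1$ to come out of exactly such a Bernstein inequality applied to the number of ``useful'' imported vertices needed versus available.

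The main obstacle, and the step I would spend the most care on, is the combinatorial lemma underlying the reduction: verifying that moving a vertex $r\in R(G)$ onto $K$ by a sequence of pivots (and intermediate local complementations) genuinely implements toggling exactly $K_{T}$ on $K$ with $T = N_G(r)\cap K$, while controlling the side effects on the not-yet-processed auxiliary vertices and on edges within $K$ created by earlier steps. This requires a careful bookkeeping of how pivoting rewires neighbourhoods (via \cref{prop:pivoting_preserves_bipartite} and \cref{prop:lc_neighborhood}) and an induction showing the auxiliary vertices can be consumed one by one without destroying the gadget structure --- essentially re-deriving the fact that a vertex with a prescribed neighbourhood in $K$ acts as a ``local complementation on that neighbourhood'' after being pivoted in. Once that structural claim is in hand, the probabilistic part is a routine Chernoff/Bernstein computation, and the condition $|R(G)| \gs 4\binom{k}{2}+5$ is exactly what makes the numerator $\tfrac{|R(G)|}{4}-\binom{k}{2}+1$ positive so the bound is non-trivial; I would finish by plugging in and simplifying.
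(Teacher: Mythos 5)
Your overall skeleton coincides with the paper's proof: encode each vertex $r\in R(G)$ by the vector $v_r\in\mathbb F_2^{\binom{K}{2}}$ supported on the pairs contained in $N_G(r)\cap K$, reduce "every graph on $K$ is a vertex-minor" to these vectors spanning $\mathbb F_2^{\binom{K}{2}}$, build a spanning family greedily, and control the failure probability by a Chernoff bound. Two points, however, need repair.

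First, the reduction needs no pivoting, and the step you single out as the main obstacle is in fact immediate. Since $K\se L(G)$ and $R(G)$ is an independent set, a local complementation on $r\in R(G)$ toggles exactly the clique on $N_G(r)$, so its effect on $K$ is to toggle $K_{N_G(r)\cap K}$; local complementations on distinct vertices of $R(G)$ commute and do not alter one another's neighbourhoods, so applying them on a subset $A\se R(G)$ and then deleting $V\sm K$ produces the graph on $K$ whose edge vector is $\sum_{r\in A}v_r$ (the initial $G[K]$ being empty). No vertex has to be "moved onto" $K$; pivoting $r$ to the left side would rewire the very neighbourhoods you rely on, and that machinery is only needed in the companion lemma where $K$ may intersect $R(G)$.

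Second — and this is the genuine gap — the probabilistic core is the anti-concentration estimate that you name but never establish: for any \emph{fixed} nonzero $W\in\mathbb F_2^{\binom{K}{2}}$, $\Pr[v_r\cdot W=1]\gs 1/4$. The vector $v_r$ is not uniform on $\mathbb F_2^{\binom{K}{2}}$, so your claimed per-step failure probability of $1/2$ is unjustified, and your fallback (filtering the vertices by the size of $N_G(r)\cap K$ and summing geometric waiting times) is not the mechanism that produces the stated exponent. The correct estimate goes as follows: view $W$ as a graph $F$ on $K$ with at least one edge and pick $z$ with $\deg_F(z)\gs 1$; with probability $1/2$ the random set $T=N_G(r)\cap K$ meets $N_F(z)$ in an odd number of vertices, and conditionally on that event the membership of $z$ in $T$ flips the parity of $|E(F[T])|=v_r\cdot W$, so the parity equals $1$ with probability at least $1/4$. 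Applying this to a nonzero $W$ orthogonal to the current row span shows that each fresh vertex of $R(G)$ increases the rank with probability at least $1/4$; the number of non-increasing steps is then dominated by $B(|R(G)|,3/4)$, and the Chernoff bound $\Pr(T\gs(1+\delta)\mu)\ls e^{-\delta^2\mu/(2+\delta)}$ with $\mu=3|R(G)|/4$ and $(1+\delta)\mu=|R(G)|-\binom{k}{2}+1$ yields exactly the claimed bound, the hypothesis $|R(G)|\gs 4\binom{k}{2}+5$ guaranteeing $\delta>0$.
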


\begin{proof}
    For some $j \in \mathbb N\setminus\{0\}$ and $A \in {R(G) \choose j}$, consider the incidence matrix $M_A$ of size $j \times {k \choose 2}$, whose column  $i$ represents the pairs of vertices of $K$ that are in the neighborhood of the $i^{th}$ vertex of $A$, in the sense that its entries are 1 if the pair of vertices $u$,$v$ is in its neighborhood, 0 else. Note that if there exists some $A \in {R(G) \choose {k \choose 2}}$ whose incidence matrix $M_A$ is of full column-rank, then any $2^{{k \choose 2}}$ graph defined on $K$ is a vertex-minor of $G$. Indeed, column number $i$ represents the edges (resp. non-edges) of $K$ to be toggled by a local complementation on the $i^{th}$ vertex of $A$. So now we will bound the probability of such a set $A$ existing within $R(G)$.

    For this purpose we will greedily try to construct the set $A \in {R(G) \choose {k \choose 2}}$, one vertex after the other, by considering each vertex in $R(G)$ one by one, and we will lower bound the probability of the event "there exists some $A \in {R(G) \choose {k \choose 2}}$ whose incidence matrix $M_A$ is of full column-rank" by the probability of success of the algorithm. The algorithm works as follows. Arbitrarily order the vertices of $R(G)$. At each step (say that we have $j$ vertices in $A$ at some step), suppose the corresponding matrix of incidence (of size $j \times {k \choose 2}$) full column-rank. We consider the next vertex $u \in R(G)$ in the list: if adding its corresponding vector to $M_A$ increases its column-rank, then we add $u$ to $A$, else we remove $u$ from the vertices to consider. The algorithm stops (and succeeds) if $M_A$ has ${k \choose 2}$ columns and is full column-rank. Let us show that the probability of a vertex $u$ increasing the column-rank of $M_A$ (if $j < {k \choose 2}$) is lower-bounded by $1/4$. 

    If $M_A$ is of rank $j < {k \choose 2}$, there exists a non-zero vector $W$ (i.e. a set of pairs of vertices of $K$) that is orthogonal to all $j$ first vectors. $W$ can be seen as the characteristic function of the edges of some graph $F$ on the vertices of $L(G)$. Adding a vertex $u$ to $A$ increases the rank of $M_A$ if the vector $U$ of incidence of $u$ in $K$ is such that $U \cdot W = 1 \mod 2$ (because then $U$ is not in the span of $M_A$). Note that, if $F$ has exactly one edge, then there is exactly probability $\frac{1}{4}$ that $U \cdot W = 1 \mod 2$ (in this case the two ends of the unique edge of $F$ are adjacent to $u$, which happens with probability $\frac{1}{2} \times \frac{1}{2}$). As $F$ has at least one edge, it has at least one vertex of non-zero degree $z$. Let us draw randomly the neighborhood of $u$: first we draw among the vertices of $F\sm\{z\}$, then we add $z$ with probability $\frac{1}{2}$. The probability that an odd number of neighbors of $z$ are neighbors of $u$ is $1/2$, so drawing $z$ changes the parity of the number of edges in $F$ whose ends are both neighbors of $u$, with probability $1/2$. At the end of the day there is a probability of at least $\frac{1}{4}$ that $U \cdot W = 1 \mod 2$, so that $u$ increases the column-rank of $M_A$.
    
    Finally, the algorithm fails if we encounter more than $|R(G)|-{k \choose 2}+1$ vertices that did not increase the column-rank of $M_A$. Let us introduce a random variable $T$ that follows the distribution $B(|R(G)|,3/4)$. The probability that the algorithm fails is upper bounded by $\Pr(T \gs |R(G)|-{k \choose 2}+1)$. We will use the Chernoff bound: With $\mu = \mathbb{E}[T] = \frac{3|R(G)|}{4}$, for any $\delta > 0$, $\Pr(T \gs (1+\delta)\mu) \ls e^{-\frac{\delta^2}{2+\delta}\mu}$. As we need $(1+\delta)\mu = |R(G)|-{k \choose 2}+1$, we take $\delta = \frac{|R(G)|-{k \choose 2}+1-\mu}{\mu}$. From $|R(G)| \gs 4{k \choose 2}+5$ it follows that $\delta > 0$.
    The Chernoff bound then gives $$ \Pr\left(T \gs |R(G)|\!-\!{k \choose 2}\!+\!1\right) \ls e^{\!-\frac{\left(\frac{|R(G)|-{k \choose 2}+1-\mu}{\mu}\right)^2}{\left(\frac{|R(G)|-{k \choose 2}+1+\mu}{\mu}\right)}\mu} \!\!\!\!\!= e^{-\frac{\left(|R(G)|-{k \choose 2}+1-\mu\right)^2}{\left(|R(G)|-{k \choose 2}+1+\mu\right)}} = e^{-\frac{\left(\frac{|R(G)|}{4}-{k \choose 2}+1\right)^2}{\left(\frac{7|R(G)|}{4}-{k \choose 2}+1\right)}} $$

    So the probability of the existence of $A \se {R(G) \choose k}$ whose incidence matrix $M_A$ if of full column-rank is lower bounded by $ 1 -e^{-\frac{\left(\frac{|R(G)|}{4}-{k \choose 2}+1\right)^2}{\left(\frac{7|R(G)|}{4}-{k \choose 2}+1\right)}} $.
\end{proof}

\begin{lemma}
    \label{lemma:proba_vmu}
    Consider a random bipartite graph $G$ with $|L(G)|\gs k$, $|R(G)|\gs4{k \choose 2}+5$: the probability of an edge existing between two vertices (one in $L(G)$ and one in $R(G)$) is $1/2$, independently of the other edges. The probability that $G$ is $k$-vertex-minor universal is lower bounded by $$  1 - \left( \frac{k}{2^{|L(G)|-k+1}} + e^{-\frac{\left(\frac{|R(G)|}{4}-{k \choose 2}+1\right)^2}{\left(\frac{7(|R(G)|-k)}{4}-{k \choose 2}+1\right)}}\right)\times {n \choose k} $$
\end{lemma}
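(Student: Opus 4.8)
The plan is to bound the failure probability by a union bound over all $k$-subsets $K \in \binom{V}{k}$ of vertices of $G$, using \cref{lemma:proba_full_rank} together with a separate estimate handling the ``left part'' of the bipartite graph. For a fixed target set $K$ of size $k$, we want to show that every graph on $K$ is a vertex-minor of $G$. Following the strategy of \cref{lemma:proba_full_rank}, the idea is to first move the vertices of $K$ ``into'' $L(G)$ (so that the vertices of $R(G)$ can be used as local-complementation resources whose incidence vectors are i.i.d.\ uniform), and then apply \cref{lemma:proba_full_rank} to the resulting configuration.

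More concretely, I would split $K$ as $K = K_L \cup K_R$ where $K_L = K \cap L(G)$ and $K_R = K \cap R(G)$. The first step is to argue that, with high probability, each vertex of $K_R$ can be ``relocated'' so that the target set sits inside $L(G)$: for a vertex $v \in K_R$, if $v$ has a neighbour $u \in L(G) \setminus K$, then a local complementation on a suitable vertex (or a pivot on the edge $uv$) exchanges the roles of $u$ and $v$, effectively replacing $v$ by a fresh vertex on the left side. The probability that a fixed vertex $v \in R(G)$ has \emph{no} neighbour in $L(G)\setminus K$ is $2^{-(|L(G)|-k)}$ (each of the $\ge |L(G)|-k$ potential edges is present independently with probability $1/2$), and we need this for the $\le k$ vertices of $K_R$ simultaneously; a union bound over these at most $k$ vertices gives the term $k / 2^{|L(G)|-k+1}$ appearing in the statement (the extra factor of $2$ can be absorbed, or one notes $|L(G)|-k+1$ rather than $|L(G)|-k$ to be safe). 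After relocating, we have a set $K'$ of size $k$ with $K' \subseteq L(G)$, and crucially the $|R(G)| - k$ untouched vertices of $R(G)$ still have i.i.d.\ uniform neighbourhoods on $K'$ (we used at most $k$ vertices of $R(G)$ in the relocation, and the relocation only exposed edges incident to those). Then \cref{lemma:proba_full_rank} applied with $|R(G)|$ replaced by $|R(G)|-k$ yields that, conditioned on successful relocation, the probability that some graph on $K'$ is not a vertex-minor of $G$ is at most $e^{-(\frac{|R(G)|-k}{4} - \binom{k}{2}+1)^2 / (\frac{7(|R(G)|-k)}{4} - \binom{k}{2}+1)}$, which matches the exponent in the statement after rewriting $\frac{|R(G)|-k}{4}$ and $\frac{7(|R(G)|-k)}{4}$ as $\frac{|R(G)|}{4} - \frac{k}{4}$ etc.; I would just need to check the bookkeeping of the $-k$ term lands where the statement says (inside the $7(\cdot)/4$ factor but not the $(\cdot)/4$ factor — this suggests the relocation only costs resources in the denominator's count, which I would reconcile by being careful about which vertices are consumed).

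Combining the two failure modes for a fixed $K$: the probability that not every graph on $K$ is a vertex-minor of $G$ is at most $k/2^{|L(G)|-k+1} + e^{-(\frac{|R(G)|}{4}-\binom{k}{2}+1)^2/(\frac{7(|R(G)|-k)}{4}-\binom{k}{2}+1)}$. Finally, a union bound over all $\binom{n}{k}$ choices of $K$ gives exactly the claimed lower bound $1 - \left(\frac{k}{2^{|L(G)|-k+1}} + e^{-(\frac{|R(G)|}{4}-\binom{k}{2}+1)^2/(\frac{7(|R(G)|-k)}{4}-\binom{k}{2}+1)}\right)\binom{n}{k}$ on the probability that $G$ is $k$-vertex-minor universal.

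The main obstacle I anticipate is making the relocation argument fully rigorous while preserving the independence needed to invoke \cref{lemma:proba_full_rank}: one must verify that after relocating the $\le k$ vertices of $K_R$ (via local complementations/pivots that may alter edges within $L(G)$ and between $L(G)$ and the consumed right-vertices), the incidence vectors of the \emph{remaining} $|R(G)|-k$ right-vertices on the new target set $K'$ are still independent and uniform — this hinges on the fact that an edge between a remaining right-vertex $w$ and a left-vertex was never inspected during relocation, and that pivoting on $uv$ with $u,v \notin N(w)$-relevant does not flip edges incident to $w$. A secondary, purely arithmetic point is checking that the $-k$ corrections appear in precisely the places stated (numerator unchanged at $\frac{|R(G)|}{4}$, denominator reduced to $\frac{7(|R(G)|-k)}{4}$), which I would resolve by tracking that the $k$ consumed vertices reduce the \emph{pool} of resource vertices but the required rank $\binom{k}{2}$ and the Chernoff mean are computed against the reduced pool size $|R(G)|-k$ — though a cleaner route may be to simply bound everything by the weaker expression with $|R(G)|-k$ throughout and observe the stated bound is implied, or conversely that the stated asymmetric form is what the careful count gives.
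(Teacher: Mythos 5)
Your proposal follows essentially the same route as the paper's proof: a union bound over the $\binom{n}{k}$ target sets $K$, relocation of $K\cap R(G)$ to the left side via pivots on edges into $L(G)\setminus K$ (failing with probability at most $k/2^{|L(G)|-k+1}$), and then an application of \cref{lemma:proba_full_rank} to the residual random bipartite graph with $|R(G)|-|R_K|\gs |R(G)|-k$ right vertices, yielding exactly the two error terms in the statement. The one imprecision is your stated reason for independence preservation: a pivot on $ab$ \emph{does} toggle edges incident to the remaining right vertices $w\in N_G(b)$, but the residual graph is nevertheless uniformly random because each new edge $(w,v)$ is the XOR of the never-inspected original edge $(w,v)$ with an event depending only on other edges — which is precisely the computation the paper's proof carries out.
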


\begin{proof}
    Given a set $K \in {V \choose k}$, we consider the bad event $A_K$: "there exists a graph defined on $K$ that is not a vertex-minor of $G$". The probability that $G$ is $k$-vertex-minor universal is, by definition, $\Pr\left( \bigcap_{K \in {V \choose k}} \overline{A_K}\right)$. Suppose that each probability $\Pr(A_K)$ is upper bounded by some $p$. Then, using the union bound, $$\Pr\left( \bigcap_{K \in {V \choose k}} \overline{A_K}\right) = 1 - \Pr\left(\bigcup_{K \in {V \choose k}} A_K \right) \gs 1 - \sum_{K \in {V \choose k}} \Pr(A_K) \gs 1 - p\times {|L(G)|+|R(G)| \choose k}$$

    Now, fix some $K \in {V \choose k}$. Let us upper bound $\Pr(A_K)$. For this purpose, let us show in the following how one can induce with high probability any graph on $K$ as a vertex-minor, by first turning the graph $G$ into a graph $G'$ defined on a subset of $V$, that is bipartite (as $G$), and such that the vertices of $K$ are all "on the left". Let $L_{\overline K}=L(G)\sm K$, $R_K=R(G)\cap K$. The algorithm works as follows. Roughly speaking, we use pivotings to move the vertices of $R_K$ from the right side to the left side. Given $a\in R(G)$ and $b\in L(G)$, pivoting an edge $ab$ in a bipartite graph $G$ produces the bipartite graph $G\wedge ab=G*a*b*a$ where the edges between $N_G(a)\setminus \{b\}$ and $N_G(b)\setminus \{a\}$ are toggled and vertices $a$ and $b$ are exchanged (in other words $N_{G\wedge ab}(b)=N_G(a)\Delta \{a,b\}$, $N_{G\wedge ab}(a)=N_G(b)\Delta \{a,b\}$, so that the graph is bipartite according to the partition $R':=R(G)\Delta\{a,b\}$, $L':=L(G)\Delta\{a,b\}$). Once all vertices of $R_K$ are moved to the left by means of pivotings, we then consider the induced subgraph obtained by removing the vertices that have been moved from the left to the right side. We obtain a bipartite graph such that all vertices of $K$ are on the left side, the idea is then to apply \cref{lemma:proba_full_rank} to show that with high probability one can induce any graph on $K$ as a vertex-minor (using only local complementation on vertices on the right side). So we have to prove that the constructed graph behaves as a random bipartite graph: each edge exists independently with probability $1/2$.

    To this end we provide a little more details on the algorithm. Given the initial random bipartite graph $G$, we proceed as follows: given a vertex $a\in R_K$, if there is no edge between $a$ and $L_{\overline K}$, the algorithm fails. Otherwise, we consider an arbitrary vertex $b\in N_G(a)\cap L_{\overline K}$ and perform a pivoting on $ab$, then remove vertex $b$, leading to a graph $G\wedge ab\setminus b$ that is bipartite according to $R':=R\setminus \{a\}$, $L':=L(G)\Delta\{a,b\}$. We show in the following that this bipartite graph is random, i.e. each edge exists independently with probability $1/2$.    
    \begin{itemize}
        \item For any $u\in R'$ we have $a\sim_{G\wedge ab} u$ if and only if $b\sim_{G} u$, so $\Pr(a\sim_{G\wedge ab} u)=\frac12$.
        \item For any $u\in R', v\in L'\setminus \{a\}$ we have $u\sim_{G\wedge ab} v$  if and only if  ($u\sim_{G} v ~~\oplus~~ (u\in N_G(b)\wedge v\in N_G(a))$). As the event $u\in N_G(b)\wedge v\in N_G(a)$ is independent of the existence of an edge between $u$ and $v$ in $G$, we have $\Pr(u\sim_{G\wedge ab} v)=\frac12$.
    \end{itemize}
    Regarding independence, notice that the existence of an edge $(u,v)$ in $G\wedge ab\sm b$ is independent of the existence of all edges but $(u,v)$ in $G$. The independence of the existence of the edges in $G$ guarantees the independence in $G\wedge ab \sm b$.

    To sum up, starting from a random bipartite graph and a vertex $a\in R_K$, if there is an edge between $a$ and some vertex $b\in L_{\overline K}$, we move $a$ to the left side (by means of a pivoting) and remove $b$: the remaining graph is a random bipartite graph. The algorithm consists in repeating this process until $R_K$ is empty, leading to a random bipartite graph. If the procedure succeeds, we end up with a bipartite graph $G'$ with $|R(G')|=|R(G)|-|R_K|$, $|L(G')|=|L(G)|$ such that $R(G') \se R(G)$ and $L(G') \se L(G) \cup R_K$. Recall that we obtained $G'$ from $G$ using only local complementations and vertex-deletions, so any vertex-minor of $G'$ is a vertex-minor of $G$. At the end of the day, $\Pr(A_K)$ is upper bounded by the probability that $(i)$ the algorithm fails or $(ii)$ there exists a graph defined on $K$ that is not a vertex-minor of $G'$.
    \begin{itemize}
        \item[$(i)$] Let us upper bound the probability that the algorithm fails. The algorithm is composed of $|R_K|$ steps. At each step, the algorithm fails if there is no edge between $a$ and $L_{\overline K}$. The bipartite graph, at this point, is random: each edge between $a$ and the vertices of $L_{\overline K}$ exists independently with probability 1/2. The probability that there is no edge between $a$ and $L_{\overline K}$ is $\frac{1}{2^{|L_{\overline K}|}}$.  In general, $|L_{\overline K}|$ is lower bounded by $|L(G)|-k+1$, and $|R_K|$ is upper bounded by $k$. At the end of the day, using the union bound, the algorithm fails with probability at most $\frac{k}{2^{|L(G)|-k+1}}$. 
        \item[$(ii)$] Suppose the algorithm succeeds. We end up with a random bipartite graph $G'$. Using \cref{lemma:proba_full_rank}, the probability that there exists a graph defined on $K$ that is not a vertex-minor of $G'$ is upper bounded by $$e^{-\frac{\left(\frac{|R(G)|-|R_K|}{4}-{k \choose 2}+1\right)^2}{\left(\frac{7(|R(G)|-|R_K|)}{4}-{k \choose 2}+1\right)}} \ls e^{-\frac{\left(\frac{|R(G)|}{4}-{k \choose 2}+1\right)^2}{\left(\frac{7(|R(G)|-k)}{4}-{k \choose 2}+1\right)}}$$
    \end{itemize}

    So, at the end of the day, $$ \Pr(A_K) \ls \frac{k}{2^{|L(G)|-k+1}} + e^{-\frac{\left(\frac{|R(G)|}{4}-{k \choose 2}+1\right)^2}{\left(\frac{7(|R(G)|-k)}{4}-{k \choose 2}+1\right)}}$$

    And then,
    \begin{equation*}\Pr\left( \bigcap_{K \in {V \choose k}} \overline{A_K}\right)  \ls 1 - \left( \frac{k}{2^{|L(G)|-k+1}} + e^{-\frac{\left(\frac{|R(G)|}{4}-{k \choose 2}+1\right)^2}{\left(\frac{7(|R(G)|-k)}{4}-{k \choose 2}+1\right)}}\right)\times {n \choose k} \qedhere
    \end{equation*}

\end{proof}

\begin{remark}
    \cref{lemma:proba_vmu} has concrete applications on its own right: one can infer a lower bound on the probability of generating a $k$-vertex-minor universal graph, for any choice of $k$ and $n$. Figure \ref{fig:99} is a table containing orders for which a randomly generated bipartite graph is $k$-vertex-minor universal with at least 99\% probability, for particular values of $k$ ranging from 3 to 100.
\end{remark}

\begin{figure}[h!]
    \centering
    \begin{tabular}{|c|c|c|c|c|c|c|c|c|c|c|c|c|c|}
        \hline
        k               & 3  & 4  & 5  & 6  & 7  & 8  & 9  & 10  & 11  & 12  & 13  & 14  & 15  \\
        \hline
        n  & 47  & 68  & 93  & 123  & 156  & 194  & 235  & 281  & 331  & 385  & 443  & 505  & 571 \\
        \hline
        $|L(G)|$  & 25  & 32  & 39  & 47  & 55  & 63  & 71  & 79  & 88  & 96  & 105  & 113  & 122 \\
        \hline
        $|R(G)|$ & 22  & 36  & 54  & 76  & 101  & 131  & 164  & 202  & 243  & 289  & 338  & 392  & 449 \\
        \hline
    \end{tabular}

    \bigskip

    \begin{tabular}{|c|c|c|c|c|c|c|c|c|c|c|c|c|c|}
        \hline
        k    & 20  & 25  & 30  & 35  & 40  & 50  & 60  & 70  & 80  & 90  & 100 \\
        \hline
        n  & 962  & 1456  & 2049  & 2743  & 3539  & 5431  & 7726  & 10422  & 13519  & 17019  & 20920 \\
        \hline
        $|L(G)|$  & 167  & 215  & 263  & 313  & 364  & 468  & 575  & 684  & 795  & 908  & 1023 \\
        \hline
        $|R(G)|$  & 795  & 1241  & 1786  & 2430  & 3175  & 4963  & 7151  & 9738  & 12724  & 16111  & 19897 \\
        \hline
    \end{tabular}
    \caption{Parameters for which a randomly generated bipartite graph is $k$-vertex-minor universal with at least 99\% probability. }
    \label{fig:99}
\end{figure}

Now we are ready to conclude. Fix some constants $c>2$ and $c'>\frac{1}{\ln(2)}$. Let $G$ be a random bipartite graph $G$ with $|L(G)| = \lfloor c'k\ln(k) \rfloor$ and $|R(G)| = \lfloor ck^2 \rfloor$: the probability of an edge existing between two vertices (one in $L(G)$ and one in $R(G)$) is $1/2$, independently of the other edges.\\

Recall $n = |V| = |L(G)| + |R(G)| = \lfloor c'k\ln(k) \rfloor + \lfloor ck^2 \rfloor$. Using \cref{lemma:proba_vmu}, the probability that $G$ is $k$-vertex-minor universal is lower bounded by $$  1 - \left( \frac{k}{2^{|L(G)|-k+1}} + e^{-\frac{\left(\frac{|R(G)|}{4}-{k \choose 2}+1\right)^2}{\left(\frac{7(|R(G)|-k)}{4}-{k \choose 2}+1\right)}}\right)\times {n \choose k} $$

Let us prove that this probability is positive with our choice of parameters, for some big enough $k$. It is sufficient to have: $$(i)~~  \frac{k}{2^{|L(G)|-k+1}}{n \choose k}< \dfrac{1}{2} \text{~~and~~} (ii)~~ e^{-\frac{\left(\frac{|R(G)|}{4}-{k \choose 2}+1\right)^2}{\left(\frac{7(|R(G)|-k)}{4}-{k \choose 2}+1\right)}}{n \choose k}< \dfrac{1}{2} $$

Let us show that these equations are satisfied for any large enough $k$. Recall that ${n \choose k} \ls 2^{nH_2(k/n)}$ where $H_2(x) = -x\log_2(x)-(1-x)\log_2(1-x)$ is the binary entropy.

$(i)$: It is sufficient that $\log_2(k)+nH_2(k/n)-|L(G)|+k-1 < -1$. 

$\log_2(k)+nH_2(k/n)-|L(G)|+k-1  \sim_{k\to \infty} n \frac{k}{n}\log_2(\frac{k}{n}) - c'k\ln(k) = k (\log_2(k) - \log_2(n)) - c'k\ln(k) \sim_{k\to \infty} \frac{1}{\ln(2)}k\ln(k) - c'k\ln(k)$. The choice of $c'$ guarantees that for any large enough $k$, $(i)$ is satisfied.

$(ii)$: It is sufficient that $nH_2(k/n)\ln(2)-\frac{\left(\frac{|R(G)|}{4}-{k \choose 2}+1\right)^2}{\left(\frac{7(|R(G)|-k)}{4}-{k \choose 2}+1\right)} < -\ln(2)$. $\frac{\left(\frac{|R(G)|}{4}-{k \choose 2}+1\right)^2}{\left(\frac{7(|R(G)|-k)}{4}-{k \choose 2}+1\right)}$  $\sim_{k\to \infty} \frac{\left(\frac{ck^2}{4}-\frac{k^2}{2}\right)^2}{\left(\frac{7ck^2}{4}-\frac{k^2}{2}\right)} = k^2 \dfrac{(c-2)^2}{4(7c-2)}$. We saw above that $nH_2(k/n)\ln(2) \sim_{k\to \infty} k\ln(k)$. The choice of $c$ guarantees that for any large enough $k$, $(i)$ is satisfied.

This proves that, for any large enough $k$, $G$ of order $\lfloor c'k\ln(k) \rfloor + \lfloor ck^2 \rfloor$, is $k$-vertex-minor universal with non-zero probability. Taking $\alpha > c$, for any large enough $k$, $\lfloor c'k\ln(k) \rfloor + \lfloor ck^2 \rfloor \ls \alpha k^2$, proving \cref{thm:existence_vmu}.

Furthermore, we just saw that side $(i)$ of the equation dominates $(ii)$ asymptotically. Thus, the probability of $G$ being $k$-vertex-minor universal is roughly lower bounded by $1 - 2^{\frac{1}{\ln(2)}k\ln(k) - c'k\ln(k)} = 1 - e^{-(\ln(2)c'-1)k\ln(k)}$ as $k$ grows. Then, for any $\epsilon > 0$ such that $\epsilon < ln(2)c'-1$, for any large enough $k$, $G$ of order $\lfloor c'k\ln(k) \rfloor + \lfloor ck^2 \rfloor$, is $k$-vertex-minor universal with probability at least $1 - e^{-\epsilon k \ln(k)}$, proving \cref{prop:proba_exp_vmu}.

\chapter{Conclusion}

I believe the main contribution of this thesis is the introduction of the generalized local complementation, or $r$-local complementation, as an exact characterization of LU-equivalence between graph states. The most obvious application was the design of a quasi-polynomial algorithm for deciding LU-equivalence. Also, LU=LC holds for graph states on up to 19 qubits is a nice result, as 8 qubits was the state of the art for many years. 
Below are possible research directions for local equivalences between graph states.

\begin{itemize}
    \item Does there exist a polynomial time algorithm to decide LU-equivalence of graph states? Unfortunately, it does not seem that $r$-local complementation can help that much (besides providing a quasi-polynomial time algorithm), as even just verifying that an $r$-local complementation is valid
    relies on the verification of a non-polynomial number of equations. 
    Consequently, whether deciding LU-equivalence is in NP is still open.
    \item Is the 27-vertex counterexample to the LU=LC conjecture minimal? Trying to prove LU=LC holds for graphs with up to 26 vertices takes too much computational power with only the results from \cref{chap:conditionsLULC}. Thus, new mathematical results that reduce the search space, for example by making use of symmetries, are necessary.
    \item Local Clifford operations (including Pauli measurements) on graph states are harnessed by the very powerful formalism of vertex-minors. Is a similar treatment possible for arbitrary local operations? One can imagine for example a notion of vertex-minor where the local complementation is replaced by the $r$-local complementation. 
    \item It is common to study the LC-orbit of graph states, i.e. the orbit under local complementation. Studying the LU-orbit, i.e. the orbit under $r$-local complementation, is arguably even more natural, as the orbit represents the set of graph states that have the same entanglement (in the sense of LU-equivalence). Maybe some results about local complementation orbits, for example regarding their size \cite{bahramgiri2007enumerating, dahlberg2020counting}, can be lifted to $r$-local complementation orbits. Another recent problem of interest is finding a graph with a minimal number of edges in the LC-orbit \cite{kumabe2025complexitygraphstatepreparationclifford, davies2025preparinggraphstatesforbidding, sharma2025minimizingnumberedgeslcequivalent}. It is not clear whether the number of edges can be further lowered when considering the LU-orbit instead of the LC-orbit.
    \item Generalized local complementation, when looked at under the scope of weighted hypergraph states, is nothing but simultaneous X-rotations inducing weighted hyperedges that happen to cancel out. Similar techniques than those in this thesis may be used to study the LC-equivalence and LU-equivalence of (weighted) hypergraph states, for example by designing (hyper)graphical operations that capture the action of local operations.
\end{itemize}

This has been a real pleasure working on graph states during these last three years. They are overall beautiful combinatorial objects. Besides, I believe them to be a nice bridge between physicists and graph theorists \cite{chitambar2025quantumgraphstatesbridging}, linking physical concepts such as local operations over quantum states (unitary operations, Clifford operations or single-qubit measurements), to graphical concepts such as local complementation, vertex-minors, and now generalized local complementation.

%%% Appendicies of thesis  %%%%%%%%%%%%%%%%%%%%%%%%%%%%%%%%%%%%%%%%%%%%%%%%%%%%%%%%%%%%%%%%%%%%%%%%

\appendix
%\include{appendixa}
%\include{appendixb}

%%% Bibliography  %%%%%%%%%%%%%%%%%%%%%%%%%%%%%%%%%%%%%%%%%%%%%%%%%%%%%%%%%%%%%%%%%%%%%%

%List of publications
%\apptocmd{\bibliography}{\addcontentsline{toc}{chapter}{\protect\textbf{Publications and scripts}}}{}{}

%\apptocmd{\bibliographypub}{\addcontentsline{toc}{chapter}{\protect\textbf{Bibliography}}}{}{}

%\addcontentsline{toc}{chapter}{\protect\textbf{Bibliography}}

%\bibliographystylepub{unsrturl}
%\bibliographypub{mypublications.bib}

%%% add bibliography to table of contents
%\apptocmd{\bibliography}{\addcontentsline{toc}{chapter}{\protect\textbf{\bibname}}}{}{}

%%% Change name of section (before: Bibliography)
\renewcommand\bibname{References}
%\vspace{-50pt}

%%% To adjust space between bibliography items 
%\setlength\bibsep{4pt plus 1pt minus 1pt}
%   change 4pt to something else; don't drop last two lengths (they are stretchable "glue")

\bibliographystyle{unsrturl}
\bibliography{ref.bib}

\end{document}